\newif\iffull \fulltrue
\newif\ifshownotes \shownotesfalse
\newif\ifluxuryspace \luxuryspacefalse
\title{Space-Efficient Gradual Typing in Coercion-Passing Style}
\author{Yuya Tsuda}{Graduate School of Informatics, Kyoto University, Japan}{tsuda@fos.kuis.kyoto-u.ac.jp}{https://orcid.org/0000-0002-7420-2575}{}
\author{Atsushi Igarashi}{Graduate School of Informatics, Kyoto University, Japan}{igarashi@kuis.kyoto-u.ac.jp}{https://orcid.org/0000-0002-5143-9764}{}
\author{Tomoya Tabuchi}{Graduate School of Informatics, Kyoto University, Japan}{tabuchi@fos.kuis.kyoto-u.ac.jp}{}{}
\authorrunning{Y. Tsuda, A. Igarashi, and T. Tabuchi}
\keywords{Gradual typing, coercion calculus, coercion-passing style, dynamic type checking, tail-call optimization} 
\newcommand\citep\cite
\newcommand{\ottnt}[1]{\mathit{#1}}
\newcommand{\ottmv}[1]{\mathit{#1}}
\newcommand{\ottkw}[1]{\mathbf{#1}}
\newcommand{\ottsym}[1]{#1}
\renewcommand{\ottkw}[1]{\mathsf{#1} }
\newcommand\metafun[1]{\mathit{#1} }
\newcommand\reduces{\longrightarrow}
\newcommand\evalto{\longmapsto}
  \numberwithin{theorem}{section}
  \numberwithin{lemma}{section}
  \numberwithin{proposition}{section}
  \numberwithin{corollary}{section}
  \numberwithin{definition}{section}
  \numberwithin{example}{section}
\lstdefinelanguage{ocaml}{      
  language=[Objective]{Caml},
  basicstyle=\ttfamily,         
  keywordstyle=\color{blue},
  extendedchars=true,
  breaklines=true,
  deletekeywords={true,false},  
  tabsize=8,
  escapeinside={<@}{@>},
  mathescape=true,              
  columns=fullflexible,         
  xleftmargin=2em,
}
\lstdefinelanguage{Grift}{
  language=ocaml,  
  morekeywords={letrec,lambda,if},
  basicstyle=\ttfamily,
}
\begin{document}

\maketitle

\begin{abstract}
  Herman et al.\ pointed out that the insertion of
  run-time checks into a gradually typed program could hamper
  tail-call optimization and, as a result, worsen the space complexity
  of the program.  To address the problem, they proposed a
  space-efficient coercion calculus, which was subsequently improved
  by Siek et al.  The semantics of
  these calculi involves eager composition of run-time checks expressed
  by coercions to prevent the size of a term from growing.  However,
  it relies also on a nonstandard reduction rule, which does not seem
  easy to implement.  In fact, no compiler implementation of gradually
  typed languages fully supports the space-efficient semantics
  faithfully.

  In this paper, we study \emph{coercion-passing style}, which Herman
  et al.\ have already mentioned, as a technique for straightforward
  space-efficient implementation of gradually typed languages.  A
  program in coercion-passing style passes ``the rest of the run-time
  checks'' around---just like continuation-passing style (CPS), in
  which ``the rest of the computation'' is passed around---and
  (unlike CPS) composes coercions eagerly.  We give a formal
  coercion-passing translation from \lamS by Siek et al.\ to \lamSx,
  which is a new calculus of \emph{first-class coercions} tailored for
  coercion-passing style, and prove correctness of the translation.
  We also implement our coercion-passing style transformation for the
  Grift compiler developed by Kuhlenschmidt et al.  An experimental
  result shows stack overflow can be prevented properly at the cost of
  up to 3 times slower execution for most partially typed practical programs.
\end{abstract}

\section{Introduction}
\label{sec:introduction}

\subsection{Space-Efficiency Problem in Gradual Typing}

Gradual
typing~\citep{conf/scheme/SiekT06,DBLP:conf/oopsla/Tobin-HochstadtF06}
is one of the linguistic approaches to integrating static and dynamic
typing.  Allowing programmers to mix statically typed
and dynamically typed fragments in a single program, it
advocates the ``script to program'' evolution~\cite{DBLP:conf/oopsla/Tobin-HochstadtF06}.
Namely, software development
starts with simple, often dynamically typed scripts, which evolve to
more robust, fully statically typed programs through intermediate
stages of partially typed programs.  To make this evolution work in
practice, it is important that the performance of partially typed
programs at intermediate stages is comparable to that of (the slower of) the two ends,
that is, dynamically typed scripts and statically typed programs.

However, it has been pointed out that gradual typing suffers from
serious efficiency problems from both theoretical and practical viewpoints~\citep{DBLP:conf/sfp/HermanTF07, DBLP:journals/lisp/HermanTF10,
  DBLP:conf/popl/TakikawaFGNVF16}.  In particular,
Takikawa et al.~\cite{DBLP:conf/popl/TakikawaFGNVF16} showed that even a
state-of-the-art gradual typing implementation could show catastrophic
slowdown for partially typed programs due to run-time checking to
ensure safety.  Worse, such slowdown is not easy to predict because it
depends on implicit run-time checks inserted by the language
implementation and it requires fairly deep knowledge about the
underlying gradual type system to understand when and where run-time checks are
inserted and how they behave.  Since then, several pieces of work have
investigated the performance
issues~\cite{DBLP:journals/pacmpl/BaumanBST17,
  DBLP:journals/pacmpl/MuehlboeckT17,
  DBLP:journals/pacmpl/RichardsAT17, DBLP:conf/popl/RastogiSFBV15,
  DBLP:conf/pldi/KuhlenschmidtAS19, DBLP:journals/pacmpl/FelteyGSFS18}.

Earlier work by Herman et al.~\cite{DBLP:conf/sfp/HermanTF07,
  DBLP:journals/lisp/HermanTF10} pointed out a related problem.  They
showed that, when values are passed between a statically typed part
and a dynamically typed part many times, delayed run-time checks may
accumulate and make space complexity of a program worse than an unchecked
semantics.

To make the discussion more concrete, consider the following mutually
recursive functions (written in ML-like syntax):
\begin{lstlisting}
let rec even (x : int) : $ \mathord{\star} $ =
  if x = 0 then true$ \graytext{ \langle   \ottkw{bool} \texttt{!}   \rangle } $ else (odd (x - 1))$ \graytext{ \langle   \ottkw{bool} \texttt{!}   \rangle } $
and odd (x : int) : bool =
  if x = 0 then false else (even (x - 1))$ \graytext{ \langle   \ottkw{bool} \texttt{?}^{ \ottnt{p} }   \rangle } $
\end{lstlisting}
Ignoring the \graytext{gray} part (in angle brackets), which will be explained shortly, this is a
tail-recursive definition of functions to decide whether a given integer is
even or odd, except that the return type of one of the functions is
written $ \mathord{\star} $, which is the dynamic type, which can be any tagged
value.  This definition expresses a situation where a statically typed
and a dynamically typed function call each
other.\footnote{In this sense, the argument of \texttt{even} should
  have been $ \mathord{\star} $, too, but it would clutter the code after
  inserting run-time checks.}  The \graytext{gray} part represents inserted
run-time checks, written using Henglein's coercion syntax~\citep{DBLP:journals/scp/Henglein94}:
$ \ottkw{bool} \texttt{!} $ is a coercion from $\ottkw{bool}$ to $ \mathord{\star} $ and
$\texttt{true}\langle   \ottkw{bool} \texttt{!}   \rangle$ means that (untagged) Boolean value
$\ottkw{true}$ will be tagged with $\ottkw{bool}$ to make a value of the
dynamic type;
$ \ottkw{bool} \texttt{?}^{ \ottnt{p} } $ is a coercion from $ \mathord{\star} $ to $\ottkw{bool}$ and
$\texttt{(even (x - 1))}\langle   \ottkw{bool} \texttt{?}^{ \ottnt{p} }   \rangle$ means that
the value returned from recursive call \texttt{even (x - 1)} will be
tested whether it is tagged with $\ottkw{bool}$---if so, the
run-time check removes the tag and returns the untagged Boolean value,
and, otherwise, it results in \emph{blame}, which is an uncatchable
exception (with label $\ottnt{p}$ to indicate where the check has failed).

The crux of this example is that the insertion of run-time checks has
broken tail recursion: due to \ifluxuryspace the presence of \fi $\langle   \ottkw{bool} \texttt{!}   \rangle$ and
$\langle   \ottkw{bool} \texttt{?}^{ \ottnt{p} }   \rangle$, the recursive calls are not in tail positions any
longer.  So, according to the original semantics of
coercions~\citep{DBLP:journals/scp/Henglein94}, evaluation of
$\ottkw{odd} \, \ottsym{4}$ is as follows:
\begin{align*}
  \ottkw{odd} \, \ottsym{4}
  & \mathbin{ \evalto ^*}  \ottsym{(}  \ottkw{even} \, \ottsym{3}  \ottsym{)}  \langle   \ottkw{bool} \texttt{?}^{ \ottnt{p} }   \rangle  \mathbin{ \evalto ^*}  \ottsym{(}  \ottkw{odd} \, \ottsym{2}  \ottsym{)}  \langle   \ottkw{bool} \texttt{!}   \rangle  \langle   \ottkw{bool} \texttt{?}^{ \ottnt{p} }   \rangle \\
  & \mathbin{ \evalto ^*}  \ottsym{(}  \ottkw{even} \, \ottsym{1}  \ottsym{)}  \langle   \ottkw{bool} \texttt{?}^{ \ottnt{p} }   \rangle  \langle   \ottkw{bool} \texttt{!}   \rangle  \langle   \ottkw{bool} \texttt{?}^{ \ottnt{p} }   \rangle 
    \mathbin{ \evalto ^*}  \ottsym{(}  \ottkw{odd} \, \ottsym{0}  \ottsym{)}  \langle   \ottkw{bool} \texttt{!}   \rangle  \langle   \ottkw{bool} \texttt{?}^{ \ottnt{p} }   \rangle  \langle   \ottkw{bool} \texttt{!}   \rangle  \langle   \ottkw{bool} \texttt{?}^{ \ottnt{p} }   \rangle \\
  & \mathbin{ \evalto ^*}  \ottkw{false}  \langle   \ottkw{bool} \texttt{!}   \rangle  \langle   \ottkw{bool} \texttt{?}^{ \ottnt{p} }   \rangle  \langle   \ottkw{bool} \texttt{!}   \rangle  \langle   \ottkw{bool} \texttt{?}^{ \ottnt{p} }   \rangle  \mathbin{ \evalto ^*}  \ottkw{false}
\end{align*}
Thus, the size of a term being evaluated is proportional to the
argument $n$ at its longest, whereas unchecked semantics
(without coercions) allows for tail-call optimization and constant-space
execution.  This is the space-efficiency problem of gradual typing.

\subsection{Space-Efficient Gradual Typing}
Herman et al.~\cite{DBLP:conf/sfp/HermanTF07,DBLP:journals/lisp/HermanTF10} also
presented a solution to this problem.  In the evaluation sequence of
$\ottkw{odd} \, n$ above, we could immediately ``compress'' nested
coercion applications $\ottnt{M}  \langle   \ottkw{bool} \texttt{!}   \rangle  \langle   \ottkw{bool} \texttt{?}^{ \ottnt{p} }   \rangle$ before computation of the
target term $\ottnt{M}$ ends, because $\langle   \ottkw{bool} \texttt{!}   \rangle\langle   \ottkw{bool} \texttt{?}^{ \ottnt{p} }   \rangle$---tagging
immediately followed by untagging---is equivalent to the identity function.  By doing
so, we can maintain that the order of the size of a term in the middle
of evaluation is constant.  This idea is formalized in terms of a
``space-efficient'' extension of the coercion
calculus~\citep{DBLP:journals/scp/Henglein94}.  Since then, a few
space-efficient coercion/cast calculi have been
proposed~\cite{DBLP:conf/pldi/SiekTW15, DBLP:conf/popl/SiekW10,
  DBLP:conf/esop/SiekGT09}.

Among them,
Siek et al.~\cite{DBLP:conf/pldi/SiekTW15} have proposed a space-efficient
coercion calculus \lamS.  \lamS is equipped with a composition
function that compresses consecutive coercions in certain canonical forms.
The coercion composition is achieved as a simple recursive function thanks to
the canonical forms.  We show evaluation of
$\ottkw{odd} \, \ottsym{4}$ according to the \lamS semantics in the left of Figure~\ref{fig:odd}.\footnote{%
  Strictly speaking, $ \ottkw{bool} \texttt{!} $ and $ \ottkw{bool} \texttt{?}^{ \ottnt{p} } $ are abbreviations of
  $ \ottkw{id} _{ \ottkw{bool} }   \ottsym{;}   \ottkw{bool} \texttt{!} $ and $ \ottkw{bool} \texttt{?}^{ \ottnt{p} }   \ottsym{;}   \ottkw{id} _{ \ottkw{bool} } $, respectively, in \lamS.
}
Here, $\ottnt{s}  \fatsemi  \ottnt{t}$ is a meta-level operation that composes
two coercions $\ottnt{s},\ottnt{t}$ (in canonical forms) and yields another canonical coercion
that semantically corresponds to their sequential composition.
This composition function enables us to prevent the size of a term from growing.

\begin{figure}
\[
\begin{array}{lll}
  \multicolumn{3}{l}{\ottkw{odd} \, \ottsym{4}} \\
  & \mathbin{ \evalto ^*} & \ottsym{(}  \ottkw{even} \, \ottsym{3}  \ottsym{)}  \langle   \ottkw{bool} \texttt{?}^{ \ottnt{p} }   \rangle \\ 
  & \evalto & \ottsym{(}  \ottkw{odd} \, \ottsym{(}  \ottsym{3}  \ottsym{-}  \ottsym{1}  \ottsym{)}  \ottsym{)}  \langle   \ottkw{bool} \texttt{!}   \rangle  \langle   \ottkw{bool} \texttt{?}^{ \ottnt{p} }   \rangle \\ 
  & \evalto & \ottsym{(}  \ottkw{odd} \, \ottsym{(}  \ottsym{3}  \ottsym{-}  \ottsym{1}  \ottsym{)}  \ottsym{)}  \langle   \ottkw{bool} \texttt{!}   \fatsemi   \ottkw{bool} \texttt{?}^{ \ottnt{p} }   \rangle \\ 
  &=& \ottsym{(}  \ottkw{odd} \, \ottsym{(}  \ottsym{3}  \ottsym{-}  \ottsym{1}  \ottsym{)}  \ottsym{)}  \langle   \ottkw{id} _{ \ottkw{bool} }   \rangle \\
  & \evalto & \ottsym{(}  \ottkw{odd} \, \ottsym{2}  \ottsym{)}  \langle   \ottkw{id} _{ \ottkw{bool} }   \rangle \\ 
  & \evalto & \ottsym{(}  \ottkw{even} \, \ottsym{(}  \ottsym{2}  \ottsym{-}  \ottsym{1}  \ottsym{)}  \ottsym{)}  \langle   \ottkw{bool} \texttt{?}^{ \ottnt{p} }   \rangle  \langle   \ottkw{id} _{ \ottkw{bool} }   \rangle \\ 
  & \evalto & \ottsym{(}  \ottkw{even} \, \ottsym{(}  \ottsym{2}  \ottsym{-}  \ottsym{1}  \ottsym{)}  \ottsym{)}  \langle   \ottkw{bool} \texttt{?}^{ \ottnt{p} }   \fatsemi   \ottkw{id} _{ \ottkw{bool} }   \rangle \\ 
  &=& \ottsym{(}  \ottkw{even} \, \ottsym{(}  \ottsym{2}  \ottsym{-}  \ottsym{1}  \ottsym{)}  \ottsym{)}  \langle   \ottkw{bool} \texttt{?}^{ \ottnt{p} }   \rangle \\
  & \evalto & \ottsym{(}  \ottkw{even} \, \ottsym{1}  \ottsym{)}  \langle   \ottkw{bool} \texttt{?}^{ \ottnt{p} }   \rangle \\ 
  & \evalto &  \dots
\end{array}\quad
\begin{array}{lll}
\multicolumn{3}{l}{ \ottkw{oddk} \, ( \ottsym{4} ,  \ottkw{id} _{ \ottkw{bool} }  ) }\\
  & \evalto &  \ottkw{evenk} \, ( \ottsym{4}  \ottsym{-}  \ottsym{1} ,  \ottkw{bool} \texttt{?}^{ \ottnt{p} }   \mathbin{;\!;}   \ottkw{id} _{ \ottkw{bool} }  )  \\ 
  & \evalto &  \ottkw{evenk} \, ( \ottsym{4}  \ottsym{-}  \ottsym{1} ,  \ottkw{bool} \texttt{?}^{ \ottnt{p} }  )  \\ 
  & \evalto &  \ottkw{evenk} \, ( \ottsym{3} ,  \ottkw{bool} \texttt{?}^{ \ottnt{p} }  )  \\ 
  & \evalto &  \ottkw{oddk} \, ( \ottsym{3}  \ottsym{-}  \ottsym{1} ,  \ottkw{bool} \texttt{!}   \mathbin{;\!;}   \ottkw{bool} \texttt{?}^{ \ottnt{p} }  )  \\ 
  & \evalto &  \ottkw{oddk} \, ( \ottsym{3}  \ottsym{-}  \ottsym{1} ,  \ottkw{id} _{ \ottkw{bool} }  )  \\ 
  & \evalto &  \ottkw{oddk} \, ( \ottsym{2} ,  \ottkw{id} _{ \ottkw{bool} }  )  \\ 
  & \evalto &  \ottkw{evenk} \, ( \ottsym{2}  \ottsym{-}  \ottsym{1} ,  \ottkw{bool} \texttt{?}^{ \ottnt{p} }   \mathbin{;\!;}   \ottkw{id} _{ \ottkw{bool} }  )  \\ 
  & \evalto &  \ottkw{evenk} \, ( \ottsym{2}  \ottsym{-}  \ottsym{1} ,  \ottkw{bool} \texttt{?}^{ \ottnt{p} }  )  \\ 
  & \evalto &  \ottkw{evenk} \, ( \ottsym{1} ,  \ottkw{bool} \texttt{?}^{ \ottnt{p} }  )  \\
  & \evalto & \ldots
\end{array}
\]
    \caption{Reduction from $\ottkw{odd} \, \ottsym{4}$ in \lamS (left) and reduction from $ \ottkw{odd} \, ( \ottsym{4} ,  \ottkw{id} _{ \ottkw{bool} }  ) $ in \lamSx (right).}
    \label{fig:odd}
  \end{figure}

However, in order to ensure that nested coercion applications are always merged,
the operational semantics of \lamS relies on a nonstandard reduction rule
and nonstandard evaluation contexts.
Although it does not cause any theoretical problems,
it does not seem easy to implement---in particular, its compilation method seems nontrivial.
In fact, none of the existing compiler implementations that address the space-efficiency
problem~\cite{DBLP:conf/pldi/KuhlenschmidtAS19,DBLP:journals/pacmpl/FelteyGSFS18}
solves the problem of growing coercions at tail positions (an exception
is recent work by Castagna et al.~\cite{conf/ifl/CastagnaDLS19}---See
Section~\ref{sec:related} for more comparison).

\subsection{Our Work: Coercion-Passing Style}

In this paper, we study coercion-passing style for space-efficient
gradual typing.  Just as con\-tin\-u\-a\-tion-passing style, in which ``the
rest of the computation'' is passed around as first-class functions and
every function call is at a tail position, a program in
coercion-passing style passes ``the rest of the run-time checks'' around.
Actually, the idea of
coercion-passing style has already been listed as one of the
possible implementation techniques by
Herman et al.~\cite{DBLP:conf/sfp/HermanTF07,DBLP:journals/lisp/HermanTF10} but it
has been neither well studied nor formalized.

We use the even/odd example above to describe our approach to the problem.
Here are the even/odd functions in coercion-passing style.  (We omit
type declarations for simplicity.)
\begin{lstlisting}
let rec evenk (x, $\kappa$)  =
  if x = 0 then true$\langle   \ottkw{bool} \texttt{!}   \mathbin{;\!;}  \kappa  \rangle$ else oddk (x - 1, $ \ottkw{bool} \texttt{!}   \mathbin{;\!;}  \kappa$)
and oddk (x, $\kappa$) =
  if x = 0 then false$\langle  \kappa  \rangle$ else evenk (x - 1, $ \ottkw{bool} \texttt{?}^{ \ottnt{p} }   \mathbin{;\!;}  \kappa$)
\end{lstlisting}
Additional parameters named $\kappa$ are for \emph{first-class
  coercions}, which are supposed to be applied---as in
\texttt{false}$\langle  \kappa  \rangle$---to values that are returned in the original function definition.
We often call these coercions
\emph{continuation coercions}.  Coercion applications such as
\texttt{true}$\langle   \ottkw{bool} \texttt{!}   \rangle$ and \texttt{(oddk (x - 1))$\langle   \ottkw{bool} \texttt{!}   \rangle$} at tail positions in the original program are
translated to coercion compositions such as
\texttt{true$\langle   \ottkw{bool} \texttt{!}   \mathbin{;\!;}  \kappa  \rangle$} and \texttt{oddk (x - 1,
  $ \ottkw{bool} \texttt{!}   \mathbin{;\!;}  \kappa$)}, respectively.  When $\kappa$ is bound to a
concrete coercion, it will be composed with $ \ottkw{bool} \texttt{!} $ \emph{before
  it is applied}.  Similarly to programs in CPS, function calls pass
(composed) coercions.

With these functions in coercion-passing style, the evaluation of
$ \ottkw{oddk} \, ( \ottsym{4} ,  \ottkw{id} _{ \ottkw{bool} }  ) $ (where $ \ottkw{id} _{ \ottkw{bool} } $ is an identity coercion,
which does nothing) proceeds as in the right of Figure~\ref{fig:odd}.
Since tagging followed by untagging (with the same tag) actually does nothing,
$ \ottkw{bool} \texttt{!}   \mathbin{;\!;}   \ottkw{bool} \texttt{?}^{ \ottnt{p} } $ composes to $ \ottkw{id} _{ \ottkw{bool} } $
by the (meta-level) coercion composition $ \ottkw{bool} \texttt{!}   \fatsemi   \ottkw{bool} \texttt{?}^{ \ottnt{p} } $.

Similarly to the \lamS semantics described above, coercion composition
in the argument takes place before a recursive call, thus the size of
coercions stays bounded by the constant order, overcoming the space efficiency
problem.  A nice property of our solution is that the evaluation
is standard call-by-value.

One can view the extra parameter $\kappa$ as an accumulating
parameter and continuation coercions as (delimited) continuations in
defunctionalized forms~\cite{Reynolds98HOSC}.  Unlike simple
defunctionalization, however, special composition of two defunctionalized
coercions is provided, preventing the sizes of composed coercions from
growing.

\paragraph*{Contributions}
Since the operational semantics of \lamS seems nontrivial to implement
due to a nonstandard reduction rule, we investigate implementation of
the space-efficient semantics via a translation into coercion-passing style.
Our contributions in this paper are summarized as follows:
\begin{itemize}
\item In the context of the space-efficiency problem of gradual typing,
  we develop a new calculus \lamSx of space-efficient first-class coercions.
\item We formalize a coercion-passing style translation from (a slight variant
  of) space-efficient coercion calculus \lamS~\cite{DBLP:conf/pldi/SiekTW15}
  to the new calculus \lamSx.
\item We prove correctness of the coercion-passing style translation
  via a simulation property.
\item We implement the coercion-passing style translation on top of the Grift
  compiler~\cite{DBLP:conf/pldi/KuhlenschmidtAS19}, and conduct some experiments to show that
  stack overflow is indeed avoided.
\end{itemize}

\paragraph*{Outline}
The rest of this paper is organized as follows.
We review the space-efficient coercion calculus \lamS~\cite{DBLP:conf/pldi/SiekTW15} in Section~\ref{sec:source}.
We introduce a new space-efficient coercion calculus with first-class coercions \lamSx in Section~\ref{sec:target}, 
formalize a translation into coercion-passing style as a translation from \lamS to \lamSx,
and prove correctness of the translation in Section~\ref{sec:translation}.
We discuss our implementation of coercion-passing translation on top of the Grift compiler~\cite{DBLP:conf/pldi/KuhlenschmidtAS19} and show an experimental result in Section~\ref{sec:implementation}.
Finally, we discuss related work in Section~\ref{sec:related} and conclude in Section~\ref{sec:conclusion}.
\iffull
Proofs of the stated properties can be found in Appendix~\ref{sec:appendix}.
\else
Proofs of the stated properties can be found in the full version.
\fi

\section{Space-Efficient Coercion Calculus}
\label{sec:source}

In this section, we review the space-efficient coercion calculus
\lamS~\citep{DBLP:conf/pldi/SiekTW15}, which is the source calculus of
our translation.  Our definition differs from the original in a few
respects, as we will explain later.  For simplicity, we do not include
(mutually) recursive functions and conditional expressions in the
formalization but it is straightforward to add them; in fact, our
implementation includes them.

Main novelties of \lamS over the original coercion calculus
\lamC~\citep{DBLP:journals/scp/Henglein94} are (1) space-efficient
coercions, which are canonical forms of coercions, whose composition
can be defined by a straightforward recursive function, and (2)
operational semantics in which a sequence of coercion applications is
collapsed eagerly---even before they are applied to a
value~\cite{DBLP:conf/sfp/HermanTF07, DBLP:journals/lisp/HermanTF10,
  DBLP:conf/esop/SiekGT09}.

Basic forms of coercions are inherited from
\lamC~\citep{DBLP:journals/scp/Henglein94}, which provides (1)
identity coercions $ \ottkw{id} _{ \ottnt{A} } $ (where $\ottnt{A}$ is a type), which do nothing; (2) injections
$ \ottnt{G} \texttt{!} $, which add a type tag $\ottnt{G}$ to a value to make a value of
the dynamic type; (3) projections $ \ottnt{G} \texttt{?}^{ \ottnt{p} } $, which test whether a
value of the dynamic type is tagged with $\ottnt{G}$, remove the tag if the test succeeds,
or raise blame labeled $\ottnt{p}$ if it fails; (4) function
coercions $\ottnt{c_{{\mathrm{1}}}}  \rightarrow  \ottnt{c_{{\mathrm{2}}}}$, which, when they are applied to a function,
coerce an argument to the function by $\ottnt{c_{{\mathrm{1}}}}$ and a value returned
from the function by $\ottnt{c_{{\mathrm{2}}}}$; and (5) sequential compositions
$\ottnt{c_{{\mathrm{1}}}}  \ottsym{;}  \ottnt{c_{{\mathrm{2}}}}$, which apply $\ottnt{c_{{\mathrm{1}}}}$ and $\ottnt{c_{{\mathrm{2}}}}$ in this order.
Space-efficient coercions restrict the way basic coercions are
combined by sequential composition; they can be roughly expressed by the
following regular expression:
\[
( \ottnt{G} \texttt{?}^{ \ottnt{p} } ;)^? (  \ottkw{id} _{ \iota }  + (\ottnt{s_{{\mathrm{1}}}}  \rightarrow  \ottnt{s_{{\mathrm{2}}}})) (; \ottnt{G'} \texttt{!} )^?
\]
(where $\iota$ is a base type, $\ottnt{s_{{\mathrm{1}}}}$ and $\ottnt{s_{{\mathrm{2}}}}$ stand for
space efficient coercions, $( \cdots )^?$ stands for an optional
element, and $+$ for alternatives).  As already mentioned, an
advantage of this form is that (meta-level) sequential composition
(denoted by $\ottnt{s_{{\mathrm{1}}}}  \fatsemi  \ottnt{s_{{\mathrm{2}}}}$) of two space-efficient coercions results in
another space-efficient coercion (if the composition is well typed), in other words,
space-efficient coercions are closed under $\ottnt{s_{{\mathrm{1}}}}  \fatsemi  \ottnt{s_{{\mathrm{2}}}}$.
For example, the composition
\(
(( \ottnt{G_{{\mathrm{1}}}} \texttt{?}^{ \ottnt{p} } ;)^? (  \ottkw{id} _{ \iota }  + (\ottnt{s_{{\mathrm{1}}}}  \rightarrow  \ottnt{s_{{\mathrm{2}}}})) ; \ottnt{G_{{\mathrm{2}}}} \texttt{!} )  \fatsemi 
( \ottnt{G_{{\mathrm{3}}}} \texttt{?}^{ \ottnt{p'} } ; (  \ottkw{id} _{ \iota }  + (\ottnt{s_{{\mathrm{3}}}}  \rightarrow  \ottnt{s_{{\mathrm{4}}}})) (; \ottnt{G_{{\mathrm{4}}}} \texttt{!} )^?)
\)
will be
\(
(( \ottnt{G_{{\mathrm{1}}}} \texttt{?}^{ \ottnt{p} } ;)^? (  \ottkw{id} _{ \iota }  + (\ottsym{(}  \ottnt{s_{{\mathrm{3}}}}  \fatsemi  \ottnt{s_{{\mathrm{1}}}}  \ottsym{)}  \rightarrow  \ottsym{(}  \ottnt{s_{{\mathrm{2}}}}  \fatsemi  \ottnt{s_{{\mathrm{4}}}}  \ottsym{)})) (; \ottnt{G_{{\mathrm{4}}}} \texttt{!} )^?)
\)
if $\ottnt{G_{{\mathrm{2}}}} = \ottnt{G_{{\mathrm{3}}}}$---that is, tagging with $\ottnt{G_{{\mathrm{2}}}}$ is immediately
followed by inspection whether $\ottnt{G_{{\mathrm{2}}}}$ is present.\footnote{%
  Here, we exclude ill-typed coercion compositions
  such as $\ottsym{(}  \ottnt{s_{{\mathrm{1}}}}  \rightarrow  \ottnt{s_{{\mathrm{2}}}}  \ottsym{)}  \fatsemi   \ottkw{id} _{ \iota } $.}
Notice that the
resulting coercion conforms to the regular expression again.  (The
other case where $\ottnt{G_{{\mathrm{2}}}} \neq \ottnt{G_{{\mathrm{3}}}}$ means that the projection $ \ottnt{G_{{\mathrm{3}}}} \texttt{?}^{ \ottnt{p'} } $
will fail; we will explain such failures later.)

The operational semantics includes the reduction rule
\(
 \mathcal{F}  [  \ottnt{M}  \langle  \ottnt{s}  \rangle  \langle  \ottnt{t}  \rangle  ]   \reduces   \mathcal{F}  [  \ottnt{M}  \langle  \ottnt{s}  \fatsemi  \ottnt{t}  \rangle  ] 
\)
where $\mathcal{F}$ is an evaluation context that does not include nested
coercion applications and whose innermost frame is not a coercion
application.  This rule intuitively means that two consecutive
coercions at the outermost position will be composed \emph{even before
  $\ottnt{M}$ is evaluated to a value.}  This eager composition avoids a
long chain of coercion applications in an evaluation context.

\subsection{Syntax}

\begin{figure}[tb]\small
  \begin{align*}
    \omit\rlap{\(
      \text{Variables} \hgap \ottmv{x}, \ottmv{y} \hgap
      \text{Constants} \hgap \ottnt{a},\ottnt{b} \hgap
      \text{Operators} \hgap \ottnt{op} \hgap
      \text{Blame labels} \hgap \ottnt{p}
    \)} \\
    \text{Base types} &&
    \iota &\grmeq
    \graytext{
      \ottkw{int} \grmor
      \ottkw{bool} \grmor
      \dots
    }
    \\
    \text{Types} &&
    \ottnt{A},\ottnt{B},\ottnt{C} &\grmeq
     \mathord{\star}  \grmor
    \graytext{
      \iota \grmor
      \ottnt{A}  \rightarrow  \ottnt{B}
    }
    \\
    \text{Ground types} &&
    \ottnt{G},\ottnt{H} &\grmeq
    \iota \grmor
    \mathord{\star}  \rightarrow  \mathord{\star}
    \\
    \text{Space-efficient coercions} &&
    \ottnt{s}, \ottnt{t} &\grmeq
     \ottkw{id} _{ \mathord{\star} }  \grmor
     \ottnt{G} \texttt{?}^{ \ottnt{p} }   \ottsym{;}  \ottnt{i} \grmor
    \ottnt{i}
    \\
    \text{Intermediate coercions} &&
    \ottnt{i} &\grmeq
    \ottnt{g}  \ottsym{;}   \ottnt{G} \texttt{!}  \grmor
    \ottnt{g} \grmor
     \bot^{ \ottnt{G}   \ottnt{p}   \ottnt{H} } 
    \\
    \text{Ground coercions} &&
    \ottnt{g}, \ottnt{h} &\grmeq
     \ottkw{id} _{ \ottnt{A} }  \graytext{\text{ (if $\ottnt{A}  \ne  \mathord{\star}$)}} \grmor
    \ottnt{s}  \rightarrow  \ottnt{t} \graytext{\text{ (if $\ottnt{s}  \ne  \ottkw{id}$ or $\ottnt{t}  \ne  \ottkw{id}$)}}
    \\
    \text{Delayed coercions} &&
    \ottnt{d} &\grmeq
    \ottnt{g}  \ottsym{;}   \ottnt{G} \texttt{!}  \grmor
    \ottnt{s}  \rightarrow  \ottnt{t} \graytext{\text{ (if $\ottnt{s}  \ne  \ottkw{id}$ or $\ottnt{t}  \ne  \ottkw{id}$)}}
    \\
    \text{Terms}&&
    \ottnt{L},\ottnt{M},\ottnt{N} &\grmeq
    \graytext{
      \ottnt{V} \grmor
      \ottnt{op}  \ottsym{(}  \ottnt{M}  \ottsym{,}  \ottnt{N}  \ottsym{)} \grmor
      \ottnt{M} \, \ottnt{N}
    } \grmor
    \ottnt{M}  \langle  \ottnt{s}  \rangle \grmor
    \ottkw{blame} \, \ottnt{p}
    \\
    \text{Values}&&
    \ottnt{V}, \ottnt{W} &\grmeq
    \graytext{
      \ottmv{x} \grmor
      \ottnt{U}
    } \grmor
    \ottnt{U}  \langle\!\langle  \ottnt{d}  \rangle\!\rangle
    \\
    \text{Uncoerced values} &&
    \ottnt{U} &\grmeq
    \graytext{
      \ottnt{a} \grmor
       \lambda   \ottmv{x} .\,  \ottnt{M} 
    }
    \\
    \text{Type environments} &&
    \Gamma &\grmeq
    \graytext{
       \emptyset  \grmor
      \Gamma  \ottsym{,}  \ottmv{x}  \ottsym{:}  \ottnt{A}
    }
  \end{align*}
  \caption{Syntax of \lamS.}
  \label{fig:syntaxS}
\end{figure}

We show the syntax of \lamS in Figure~\ref{fig:syntaxS}.
The syntax of \lamS extends that of the simply typed lambda calculus
(written in \graytext{gray}) with the dynamic type and (space-efficient) coercions.

\emph{Types}, ranged over by $\ottnt{A},\ottnt{B},\ottnt{C}$, include the dynamic type $ \mathord{\star} $,
base types $\iota$, and function types $\ottnt{A}  \rightarrow  \ottnt{B}$.
Base types $\iota$ include $\ottkw{int}$ (integer type)
and $\ottkw{bool}$ (Boolean type) and so on.
\emph{Ground types}, ranged over by $\ottnt{G},\ottnt{H}$, include base types $\iota$
and the function type $\mathord{\star}  \rightarrow  \mathord{\star}$. They are used for type tags put on values
of the dynamic type~\citep{DBLP:conf/esop/WadlerF09}.
Here, the ground type for functions is always $\mathord{\star}  \rightarrow  \mathord{\star}$, reflecting the fact
that many dynamically typed languages do not include information on
the argument and return types of the function in its type tag.

As we have already discussed, \lamS restricts coercions to only
canonical ones, namely space-efficient coercions $\ottnt{s}$, whose
grammar is defined via ground coercions \(\ottnt{g}\) and intermediate
coercions \(\ottnt{i}\).  Ground coercions correspond to the middle part
of space-efficient coercions; unlike the original \lamS, ground
coercions include identity coercions for any function types---such as
$ \ottkw{id} _{ \iota  \rightarrow  \iota } $---and exclude ``virtually identity'' coercions
such as \( \ottkw{id} _{ \iota }   \rightarrow   \ottkw{id} _{ \iota } \).  Although these two coercions are
extensionally the same, they reduce in slightly different ways:
applying $ \ottkw{id} _{ \iota  \rightarrow  \iota } $ to a function immediately returns the
function, whereas applying $ \ottkw{id} _{ \iota }   \rightarrow   \ottkw{id} _{ \iota } $ results in a
wrapped function whose argument and return values are monitored by
$ \ottkw{id} _{ \iota } $, which does nothing.  Adopting $ \ottkw{id} _{ \ottnt{A} } $ for any $\ottnt{A}$
simplifies our proof that the coercion-passing translation preserves the
semantics.  An intermediate coercion adds an optional injection to
a ground coercion.  Coercions of the form $ \bot^{ \ottnt{G}   \ottnt{p}   \ottnt{H} } $
trigger blame (labeled $\ottnt{p}$) if applied to a value.  They emerge
from coercion composition
\[
(( \ottnt{G_{{\mathrm{1}}}} \texttt{?}^{ \ottnt{p} } ;)^? (  \ottkw{id} _{ \ottnt{A} }  + (\ottnt{s_{{\mathrm{1}}}}  \rightarrow  \ottnt{s_{{\mathrm{2}}}})) ; \ottnt{G_{{\mathrm{2}}}} \texttt{!} )  \fatsemi 
( \ottnt{G_{{\mathrm{3}}}} \texttt{?}^{ \ottnt{p'} } ; (  \ottkw{id} _{ \ottnt{A} }  + (\ottnt{s_{{\mathrm{3}}}}  \rightarrow  \ottnt{s_{{\mathrm{4}}}})) (; \ottnt{G_{{\mathrm{4}}}} \texttt{!} )^?)
\]
where $\ottnt{A}  \ne  \mathord{\star}$ and $\ottnt{G_{{\mathrm{2}}}} \neq \ottnt{G_{{\mathrm{3}}}}$, which means that the projection $ \ottnt{G_{{\mathrm{3}}}} \texttt{?}^{ \ottnt{p'} } $ is
bound to fail.  The composition results in
$( \ottnt{G_{{\mathrm{1}}}} \texttt{?}^{ \ottnt{p} } ;)^?  \bot^{ \ottnt{G_{{\mathrm{1}}}}   \ottnt{p'}   \ottnt{G_{{\mathrm{3}}}} } $, which means that, unless the
optional projection fails---blaming $\ottnt{p}$---it fails with $\ottnt{p'}$.
Finally, space-efficient coercions are obtained by adding optional
projection to intermediate coercions.  $ \ottkw{id} _{ \mathord{\star} } $ is a special
coercion that does not conform to the regular expression above.
Strictly speaking, an injection, say $ \ottkw{int} \texttt{!} $, has to be written
$ \ottkw{id} _{ \ottkw{int} }   \ottsym{;}   \ottkw{int} \texttt{!} $ and a projection, say $ \ottkw{int} \texttt{?}^{ \ottnt{p} } $, has to be
written $ \ottkw{int} \texttt{?}^{ \ottnt{p} }   \ottsym{;}   \ottkw{id} _{ \ottkw{int} } $.  We often omit these
identity coercions in examples.

\emph{Terms}, ranged over by $\ottnt{L},\ottnt{M},\ottnt{N}$, include values $\ottnt{V}$,
primitive binary operations $\ottnt{op}  \ottsym{(}  \ottnt{M}  \ottsym{,}  \ottnt{N}  \ottsym{)}$, function applications $\ottnt{M} \, \ottnt{N}$,
coercion applications $\ottnt{M}  \langle  \ottnt{s}  \rangle$, and coercion failure $\ottkw{blame} \, \ottnt{p}$.
The term $\ottnt{M}  \langle  \ottnt{s}  \rangle$ coerces the value of $\ottnt{M}$ with coercion $\ottnt{s}$ at run time.
The term $\ottkw{blame} \, \ottnt{p}$ denotes a run-time type error caused
by the failure of a coercion (projection) with blame label $\ottnt{p}$.

\emph{Values}, ranged over by $\ottnt{V},\ottnt{W}$, include variables
$\ottmv{x}$, uncoerced values $\ottnt{U}$, and coerced values $\ottnt{U}  \langle\!\langle  \ottnt{d}  \rangle\!\rangle$.
Uncoerced values, ranged over by $\ottnt{U}$, include constants $\ottnt{a}$ of base types
and lambda abstractions $ \lambda   \ottmv{x} .\,  \ottnt{M} $.  Unlike \lamC, where values can
involve nested coercion applications, there is at most one coercion in
a value---nested coercions will be composed.
Coerced values $\ottnt{U}  \langle\!\langle  \ottnt{d}  \rangle\!\rangle$ have two forms: injected values
$\ottnt{U}  \langle\!\langle  \ottnt{g}  \ottsym{;}   \ottnt{G} \texttt{!}   \rangle\!\rangle$ and wrapped functions $\ottnt{U}  \langle\!\langle  \ottnt{s}  \rightarrow  \ottnt{t}  \rangle\!\rangle$.  The check of
function coercion is delayed until wrapped functions are applied to a
value~\citep{DBLP:journals/scp/Henglein94, DBLP:conf/icfp/FindlerF02,
  conf/scheme/SiekT06}.
We include variables as values for technical convenience in defining translations;
for operational semantics, though, it is not necessary to do so because we consider
evaluation of closed terms.

Unlike many other studies on coercion and blame calculi, we
syntactically distinguish coerced values $\ottnt{U}  \langle\!\langle  \ottnt{d}  \rangle\!\rangle$ from $\ottnt{U}  \langle  \ottnt{d}  \rangle$
(similarly to Wadler and Findler~\cite{DBLP:conf/esop/WadlerF09}).
This distinction plays an important role in our correctness proof;
roughly speaking, without the distinction, $\ottnt{U}  \langle  \ottnt{d}  \rangle  \langle  \ottnt{t}  \rangle$ would allow
two different interpretations: an application of $\ottnt{t}$ to a value $\ottnt{U}  \langle  \ottnt{d}  \rangle$
or two applications of $\ottnt{d}$ and $\ottnt{t}$ to a value $\ottnt{U}$,
which would result in different translation results.
We also note that variables $\ottmv{x}$ are considered values, rather than
uncoerced values, since they can
be bound to coerced values at function calls.  In other words,
we ensure that values are closed under value substitution.

As usual, applications are left-associative and \(\lambda\) extends
as far to the right as possible.  We do not commit to
a particular choice of precedence between function applications and
coercion applications; we will always use parentheses to
disambiguate terms like $\ottnt{M} \, \ottnt{N}  \langle  \ottnt{t}  \rangle$.
The term $ \lambda   \ottmv{x} .\,  \ottnt{M} $ binds $\ottmv{x}$ in $\ottnt{M}$ as usual.
The definitions of free variables and $\alpha$-equivalence of terms
are standard, and thus we omit them.
We identify $\alpha$-equivalent terms.

The metavariable $\Gamma$ ranges over \emph{type environments}.
A type environment is a sequence of pairs of a variable and its type.

\subsection{Type System}

\begin{figure}[tb]\small
  \textbf{Well-formed coercions} \hfill\fbox{$\ottnt{c}  \ottsym{:}  \ottnt{A}  \rightsquigarrow  \ottnt{B}$}
  \begin{center}
    \infrule[CT-Inj]{}{
       \ottnt{G} \texttt{!}   \ottsym{:}  \ottnt{G}  \rightsquigarrow  \mathord{\star}
    } \hfill
    \infrule[CT-Proj]{}{
       \ottnt{G} \texttt{?}^{ \ottnt{p} }   \ottsym{:}  \mathord{\star}  \rightsquigarrow  \ottnt{G}
    } \hfill
    \infrule[CT-Fun]{
      \ottnt{c_{{\mathrm{1}}}}  \ottsym{:}  \ottnt{A'}  \rightsquigarrow  \ottnt{A} \andalso
      \ottnt{c_{{\mathrm{2}}}}  \ottsym{:}  \ottnt{B}  \rightsquigarrow  \ottnt{B'}
    }{
      \ottnt{c_{{\mathrm{1}}}}  \rightarrow  \ottnt{c_{{\mathrm{2}}}}  \ottsym{:}  \ottnt{A}  \rightarrow  \ottnt{B}  \rightsquigarrow  \ottnt{A'}  \rightarrow  \ottnt{B'}
    } \\\vgap
    \infrule[CT-Id]{}{
       \ottkw{id} _{ \ottnt{A} }   \ottsym{:}  \ottnt{A}  \rightsquigarrow  \ottnt{A}
    } \hfill
    \infrule[CT-Seq]{
      \ottnt{c_{{\mathrm{1}}}}  \ottsym{:}  \ottnt{A}  \rightsquigarrow  \ottnt{B} \andalso
      \ottnt{c_{{\mathrm{2}}}}  \ottsym{:}  \ottnt{B}  \rightsquigarrow  \ottnt{C}
    }{
      \ottsym{(}  \ottnt{c_{{\mathrm{1}}}}  \ottsym{;}  \ottnt{c_{{\mathrm{2}}}}  \ottsym{)}  \ottsym{:}  \ottnt{A}  \rightsquigarrow  \ottnt{C}
    } \hfill
    \infrule[CT-Fail]{
      \ottnt{A}  \ne  \mathord{\star} \andalso
      \ottnt{A}  \sim  \ottnt{G} \andalso
      \ottnt{G}  \ne  \ottnt{H}
    }{
       \bot^{ \ottnt{G}   \ottnt{p}   \ottnt{H} }   \ottsym{:}  \ottnt{A}  \rightsquigarrow  \ottnt{B}
    } 
  \end{center}
  \textbf{Term typing} \hfill\fbox{$ \Gamma    \vdash_{\mathsf{S} }    \ottnt{M}  :  \ottnt{A} $}
  \begin{center}
    \infrule[T-Const]{}{
       \Gamma   \vdash   \ottnt{a}  :   \metafun{ty} ( \ottnt{a} )  
    } \hgap
    \infrule[T-Op]{
       \metafun{ty} ( \ottnt{op} )   \ottsym{=}  \iota_{{\mathrm{1}}}  \rightarrow  \iota_{{\mathrm{2}}}  \rightarrow  \iota \andalso
       \Gamma   \vdash   \ottnt{M}  :  \iota_{{\mathrm{1}}}  \andalso
       \Gamma   \vdash   \ottnt{N}  :  \iota_{{\mathrm{2}}} 
    }{
       \Gamma   \vdash   \ottnt{op}  \ottsym{(}  \ottnt{M}  \ottsym{,}  \ottnt{N}  \ottsym{)}  :  \iota 
    } \\\vgap
    \infrule[T-Var]{
       ( \ottmv{x}  :  \ottnt{A} ) \in  \Gamma 
    }{
       \Gamma   \vdash   \ottmv{x}  :  \ottnt{A} 
    } \hfill
    \infrule[T-Abs]{
       \Gamma  \ottsym{,}  \ottmv{x}  \ottsym{:}  \ottnt{A}   \vdash   \ottnt{M}  :  \ottnt{B} 
    }{
       \Gamma   \vdash    \lambda   \ottmv{x} .\,  \ottnt{M}   :  \ottnt{A}  \rightarrow  \ottnt{B} 
    } \hfill
    \infrule[T-App]{
       \Gamma   \vdash   \ottnt{M}  :  \ottnt{A}  \rightarrow  \ottnt{B}  \andalso
       \Gamma   \vdash   \ottnt{N}  :  \ottnt{A} 
    }{
       \Gamma   \vdash   \ottnt{M} \, \ottnt{N}  :  \ottnt{B} 
    } \\\vgap
    \infrule[T-Crc]{
       \Gamma   \vdash   \ottnt{M}  :  \ottnt{A}  \andalso
      \ottnt{s}  \ottsym{:}  \ottnt{A}  \rightsquigarrow  \ottnt{B}
    }{
       \Gamma   \vdash   \ottnt{M}  \langle  \ottnt{s}  \rangle  :  \ottnt{B} 
    } \hfill
    \infrule[T-CrcV]{
        \emptyset    \vdash   \ottnt{U}  :  \ottnt{A}  \andalso
      \ottnt{d}  \ottsym{:}  \ottnt{A}  \rightsquigarrow  \ottnt{B}
    }{
        \emptyset    \vdash   \ottnt{U}  \langle\!\langle  \ottnt{d}  \rangle\!\rangle  :  \ottnt{B} 
    } \hfill
    \infrule[T-Blame]{}{
        \emptyset    \vdash   \ottkw{blame} \, \ottnt{p}  :  \ottnt{A} 
    }
  \end{center}
  \caption{Typing rules of \lamS.}
  \label{fig:typingS}
\end{figure}

We give the type system of \lamS, which consists of three judgments
for \emph{type consistency} $\ottnt{A}  \sim  \ottnt{B}$, \emph{well-formed coercions}
$\ottnt{c}  \ottsym{:}  \ottnt{A}  \rightsquigarrow  \ottnt{B}$, and \emph{typing} $ \Gamma    \vdash_{\mathsf{S} }    \ottnt{M}  :  \ottnt{A} $.
We use $c$ to denote any kind of coercions.  The inference
rules (except for $\ottnt{A}  \sim  \ottnt{B}$) are shown in Figure~\ref{fig:typingS}.
(We omit the subscript $\mathsf{S}$ on $\vdash$ in rules,
as some of them are reused for \lamSx.)

The type consistency relation $\ottnt{A}  \sim  \ottnt{B}$ is the least reflexive and
symmetric and compatible relation that contains $\ottnt{A}  \sim  \mathord{\star}$.  As this
is standard~\citep{conf/scheme/SiekT06}, we omit inference rules here.
\iffull
(We put them in Appendix~\ref{sec:appendix}.)
\else
(We have them in the full version.)
\fi

The relation $\ottnt{c}  \ottsym{:}  \ottnt{A}  \rightsquigarrow  \ottnt{B}$ means that coercion $\ottnt{c}$, which ranges
over all kinds of coercions, converts a value from type $\ottnt{A}$ to
type $\ottnt{B}$.  We often call $\ottnt{A}$ and $\ottnt{B}$ the source and target
types of $\ottnt{c}$, respectively.  The rule \rnp{CT-Id} is for identity
coercion $ \ottkw{id} _{ \ottnt{A} } $.  The rule \rnp{CT-Inj} is for injection $ \ottnt{G} \texttt{!} $,
which converts type $\ottnt{G}$ to type $ \mathord{\star} $.  The rule \rnp{CT-Proj}
is for projection $ \ottnt{G} \texttt{?}^{ \ottnt{p} } $, which converts type $ \mathord{\star} $ to type
$\ottnt{G}$.  The rule \rnp{CT-Fun} is for function coercion $\ottnt{c_{{\mathrm{1}}}}  \rightarrow  \ottnt{c_{{\mathrm{2}}}}$.
If its argument coercion $\ottnt{c_{{\mathrm{1}}}}$ converts type $\ottnt{A'}$ to type
$\ottnt{A}$ and its return-value coercion $\ottnt{c_{{\mathrm{2}}}}$ converts type $\ottnt{B}$
to type $\ottnt{B'}$, then function coercion $\ottnt{c_{{\mathrm{1}}}}  \rightarrow  \ottnt{c_{{\mathrm{2}}}}$ converts type
$\ottnt{A}  \rightarrow  \ottnt{B}$ to type $\ottnt{A'}  \rightarrow  \ottnt{B'}$.  In other words, function coercions
are contravariant in their argument coercions and covariant in
return-value coercions.  The rule \rnp{CT-Fail} is for failure
coercion $ \bot^{ \ottnt{G}   \ottnt{p}   \ottnt{H} } $.  Here, the source type is not necessarily
$\ottnt{G}$ but can be any nondynamic type $\ottnt{A}$ consistent with $\ottnt{G}$
because the source type of a failure coercion may change during
coercion composition.  For example, the following judgments are derivable:
\[
\begin{array}{cll}
  \ottsym{(}   \ottkw{id} _{ \ottkw{int} }   \ottsym{;}   \ottkw{int} \texttt{!}   \ottsym{)}  \rightarrow  \ottsym{(}   \ottkw{int} \texttt{?}^{ \ottnt{p} }   \ottsym{;}   \ottkw{id} _{ \ottkw{int} }   \ottsym{)} &: \mathord{\star}  \rightarrow  \mathord{\star} & \rightsquigarrow  \ottkw{int}  \rightarrow  \ottkw{int} \\
   \bot^{ \mathord{\star}  \rightarrow  \mathord{\star}   \ottnt{p}   \ottkw{int} }  &: \ottkw{int}  \rightarrow  \ottkw{bool} & \rightsquigarrow  \ottkw{int}
\end{array}
\]

Proposition~\ref{prop:src-tgt} below, which is about the source and
target types of intermediate coercions and ground coercions, is useful
to understand the syntactic structure of space-efficient coercions.
In particular, it states that neither the source nor target type of
ground coercions $\ottnt{g}$ is the type $ \mathord{\star} $.

\begin{proposition}[name=Source and Target Types,restate=propSrcTgt] \label{prop:src-tgt}\leavevmode
  \begin{enumerate}
  \item If $\ottnt{i}  \ottsym{:}  \ottnt{A}  \rightsquigarrow  \ottnt{B}$ then $\ottnt{A}  \ne  \mathord{\star}$.
  \item If $\ottnt{g}  \ottsym{:}  \ottnt{A}  \rightsquigarrow  \ottnt{B}$, then $\ottnt{A}  \ne  \mathord{\star}$ and $\ottnt{B}  \ne  \mathord{\star}$
    \ifluxuryspace
    and there exists a unique $\ottnt{G}$ such that $\ottnt{A}  \sim  \ottnt{G}$ and $\ottnt{G}  \sim  \ottnt{B}$.
    \else
    and $\ottnt{A}  \sim  \ottnt{G}$ and $\ottnt{G}  \sim  \ottnt{B}$ for some unique $\ottnt{G}$.
    \fi
  \end{enumerate}
\end{proposition}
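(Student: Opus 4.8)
The plan is to establish the two parts of the proposition in order, deriving part~1 from part~2; remarkably, no induction is needed, because every syntactic form of coercion is the conclusion of exactly one rule of the well-formed-coercion judgement $\ottnt{c}  \ottsym{:}  \ottnt{A}  \rightsquigarrow  \ottnt{B}$, so each appeal to ``inversion'' below is merely the selection of that unique rule. Before the case analyses I would isolate a small fact about type consistency: for every type $\ottnt{A}$ with $\ottnt{A}  \ne  \mathord{\star}$ there is a unique ground type $\ottnt{G}$ with $\ottnt{A}  \sim  \ottnt{G}$ --- namely $\iota$ when $\ottnt{A}  \ottsym{=}  \iota$ and $\mathord{\star}  \rightarrow  \mathord{\star}$ when $\ottnt{A}$ is a function type --- and this $\ottnt{G}$ is determined solely by the outermost constructor of $\ottnt{A}$. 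Both existence and uniqueness follow immediately from the inductive definition of $\sim$ as the least reflexive, symmetric, and compatible relation containing $\ottnt{A}  \sim  \mathord{\star}$: consistency with a non-dynamic type never alters the outermost constructor, and in particular a base type is never consistent with a function type. (If a statement of this form is already available, I would simply cite it.)

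For part~2 I would case split on the grammar of $\ottnt{g}$. If $\ottnt{g}  \ottsym{=}   \ottkw{id} _{ \ottnt{C} } $ for some type $\ottnt{C}$, the side condition in the grammar of ground coercions already gives $\ottnt{C}  \ne  \mathord{\star}$, and inverting \rnp{CT-Id} forces $\ottnt{A}  \ottsym{=}  \ottnt{B}  \ottsym{=}  \ottnt{C}$; the required $\ottnt{G}$ is then the unique ground type consistent with $\ottnt{C}$ supplied by the auxiliary fact. If $\ottnt{g}  \ottsym{=}  \ottnt{s}  \rightarrow  \ottnt{t}$, the only rule that can derive $\ottnt{g}  \ottsym{:}  \ottnt{A}  \rightsquigarrow  \ottnt{B}$ is \rnp{CT-Fun}, whose conclusion forces both $\ottnt{A}$ and $\ottnt{B}$ to be function types, hence distinct from $\mathord{\star}$; then $\ottnt{G}  \ottsym{=}  \mathord{\star}  \rightarrow  \mathord{\star}$ is consistent with both, and uniqueness is again from the auxiliary fact. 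Note that no induction hypothesis about $\ottnt{s}$ or $\ottnt{t}$ is used here --- only the shape of the conclusion of \rnp{CT-Fun}.

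For part~1 I would case split on the grammar of $\ottnt{i}$: if $\ottnt{i}  \ottsym{=}  \ottnt{g}$, then $\ottnt{A}  \ne  \mathord{\star}$ directly by part~2; if $\ottnt{i}  \ottsym{=}  \ottnt{g}  \ottsym{;}   \ottnt{G} \texttt{!} $, inverting \rnp{CT-Seq} yields $\ottnt{g}  \ottsym{:}  \ottnt{A}  \rightsquigarrow  \ottnt{C}$ for some type $\ottnt{C}$, whence $\ottnt{A}  \ne  \mathord{\star}$ by part~2 applied to $\ottnt{g}$; and if $\ottnt{i}  \ottsym{=}   \bot^{ \ottnt{G}   \ottnt{p}   \ottnt{H} } $, then $\ottnt{A}  \ne  \mathord{\star}$ is literally a premise of \rnp{CT-Fail}, so inversion gives it immediately. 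Everything here is routine inversion; the only step needing a moment of care is the auxiliary fact about the unique ground type consistent with a non-dynamic type, together with keeping parts~1 and~2 in the right dependency order so the argument stays acyclic (part~2 never invokes part~1). I do not anticipate any genuine obstacle.
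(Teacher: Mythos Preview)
Your proposal is correct and follows the same approach as the paper: case analysis on $\ottnt{g}$ for part~2, then case analysis on $\ottnt{i}$ using part~2 for part~1. The paper's own proof is terse (``(1) By case analysis on $\ottnt{i}$ with (2). (2) By case analysis on $\ottnt{g}$.''), but your expanded version---including the auxiliary observation about the unique ground type consistent with a non-dynamic type---fills in exactly the details one would expect.
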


The judgment $ \Gamma    \vdash_{\mathsf{S} }    \ottnt{M}  :  \ottnt{A} $ means that
the \lamS-term $\ottnt{M}$ is given type $\ottnt{A}$ under type environment $\Gamma$.
When clear from the context, we sometimes write $ \vdash $ for $ \vdash_{\mathsf{S} } $
with the subscript $\mathsf{S}$ omitted.
We adopt similar conventions for other relations (such as $ \evalto_{\mathsf{S} } $)
introduced later.

The rules \rnp{T-Const}, \rnp{T-Op}, \rnp{T-Var}, \rnp{T-Abs}, and \rnp{T-App} are standard.
Here, $ \metafun{ty} ( \ottnt{a} ) $ maps constant $\ottnt{a}$ to a base type $\iota$, and
$ \metafun{ty} ( \ottnt{op} ) $ maps binary operator $\ottnt{op}$ to a (first-order) function type
$\iota_{{\mathrm{1}}}  \rightarrow  \iota_{{\mathrm{2}}}  \rightarrow  \iota$.
The rule \rnp{T-Crc} states that if $\ottnt{M}$ is given type $\ottnt{A}$ and
space-efficient coercion $\ottnt{s}$ converts type $\ottnt{A}$ to $\ottnt{B}$, then
coercion application $\ottnt{M}  \langle  \ottnt{s}  \rangle$ is given type $\ottnt{B}$.
The rule \rnp{T-CrcV} is similar to \rnp{T-Crc}, but for coerced values $\ottnt{U}  \langle\!\langle  \ottnt{d}  \rangle\!\rangle$.
The rule \rnp{T-Blame} allows $\ottkw{blame} \, \ottnt{p}$ to have an arbitrary type $\ottnt{A}$.
Here, type environments are always empty $ \emptyset $ in \rnp{T-CrcV}
and \rnp{T-Blame}.  It is valid because the terms $\ottnt{U}  \langle\!\langle  \ottnt{d}  \rangle\!\rangle$ and
$\ottkw{blame} \, \ottnt{p}$ arise only during evaluation, which runs a closed term.
In other words, these terms are not written by programmers in the
surface language, and also they do not appear as the result of
coercion insertion.

\subsection{Operational Semantics}

\begin{figure}[tb]\small
  \textbf{Coercion composition} \hfill\fbox{$\ottnt{s}  \fatsemi  \ottnt{t}  \ottsym{=}  \ottnt{s'}$}
  \begin{align*}
     \ottkw{id} _{ \mathord{\star} }   \fatsemi  \ottnt{t} &= \ottnt{t} & \quad \rn{CC-IdDynL} &&
    \ottsym{(}   \ottnt{G} \texttt{?}^{ \ottnt{p} }   \ottsym{;}  \ottnt{i}  \ottsym{)}  \fatsemi  \ottnt{t} &=  \ottnt{G} \texttt{?}^{ \ottnt{p} }   \ottsym{;}  \ottsym{(}  \ottnt{i}  \fatsemi  \ottnt{t}  \ottsym{)} &\rn{CC-ProjL}\\[0.5em]
    \ottsym{(}  \ottnt{g}  \ottsym{;}   \ottnt{G} \texttt{!}   \ottsym{)}  \fatsemi   \ottkw{id} _{ \mathord{\star} }  &= \ottnt{g}  \ottsym{;}   \ottnt{G} \texttt{!}  &\rn{CC-InjId} &&
    \ottsym{(}  \ottnt{g}  \ottsym{;}   \ottnt{G} \texttt{!}   \ottsym{)}  \fatsemi  \ottsym{(}   \ottnt{G} \texttt{?}^{ \ottnt{p} }   \ottsym{;}  \ottnt{i}  \ottsym{)} &= \ottnt{g}  \fatsemi  \ottnt{i} &\rn{CC-Collapse}\\
     \bot^{ \ottnt{G}   \ottnt{p}   \ottnt{H} }   \fatsemi  \ottnt{s} &=  \bot^{ \ottnt{G}   \ottnt{p}   \ottnt{H} }  &\rn{CC-FailL}&&
    \ottsym{(}  \ottnt{g}  \ottsym{;}   \ottnt{G} \texttt{!}   \ottsym{)}  \fatsemi  \ottsym{(}   \ottnt{H} \texttt{?}^{ \ottnt{p} }   \ottsym{;}  \ottnt{i}  \ottsym{)} &=  \bot^{ \ottnt{G}   \ottnt{p}   \ottnt{H} }  &\rn{CC-Conflict} \\
    &&&&&(\text{if $\ottnt{G}  \ne  \ottnt{H}$})\\[0.5em]
    \ottnt{g}  \fatsemi   \bot^{ \ottnt{G}   \ottnt{p}   \ottnt{H} }  &=  \bot^{ \ottnt{G}   \ottnt{p}   \ottnt{H} }  &\rn{CC-FailR}&&
    \ottnt{g}  \fatsemi  \ottsym{(}  \ottnt{h}  \ottsym{;}   \ottnt{H} \texttt{!}   \ottsym{)} &= \ottsym{(}  \ottnt{g}  \fatsemi  \ottnt{h}  \ottsym{)}  \ottsym{;}   \ottnt{H} \texttt{!}  &\rn{CC-InjR}\\[0.5em]
     \ottkw{id} _{ \ottnt{A} }   \fatsemi  \ottnt{g} & = \mathrlap{\ottnt{g} \quad(\text{if $\ottnt{A}  \ne  \mathord{\star}$})} & \rn{CC-IdL}&&
    \ottnt{g}  \fatsemi   \ottkw{id} _{ \ottnt{A} }  = \ottnt{g}& \mathrlap{\quad (\text{if $\ottnt{A}  \ne  \mathord{\star}$, $\ottnt{g}  \ne   \ottkw{id} _{ \ottnt{A} } $})}& \rn{CC-IdR}\\
    \ottsym{(}  \ottnt{s}  \rightarrow  \ottnt{t}  \ottsym{)}  \fatsemi  \ottsym{(}  \ottnt{s'}  \rightarrow  \ottnt{t'}  \ottsym{)} & \mathrlap{= \begin{cases}
       \ottkw{id} _{ \ottnt{A}  \rightarrow  \ottnt{B} }  &\text{if $\ottnt{s'}  \fatsemi  \ottnt{s}  \ottsym{=}   \ottkw{id} _{ \ottnt{A} } $ and $\ottnt{t}  \fatsemi  \ottnt{t'}  \ottsym{=}   \ottkw{id} _{ \ottnt{B} } $} \\
      \ottsym{(}  \ottnt{s'}  \fatsemi  \ottnt{s}  \ottsym{)}  \rightarrow  \ottsym{(}  \ottnt{t}  \fatsemi  \ottnt{t'}  \ottsym{)} &\text{otherwise}
    \end{cases}} &&&&& \rn{CC-Fun}
  \end{align*}
  \caption{Coercion composition rules of \lamS.}
  \label{fig:semS-cmp}
\end{figure}

\begin{figure}[tb]\small
  \textbf{Evaluation contexts}
  \begin{align*}
    \mathcal{E} &\grmeq
    \mathcal{F} \grmor
    \mathcal{F}  [  \square \, \langle  \ottnt{s}  \rangle  ]
    &
    \mathcal{F} &\grmeq
    \graytext{
       \square  \grmor
      \mathcal{E}  [  \ottnt{op}  \ottsym{(}  \square  \ottsym{,}  \ottnt{M}  \ottsym{)}  ] \grmor
      \mathcal{E}  [  \ottnt{op}  \ottsym{(}  \ottnt{V}  \ottsym{,} \, \square \, \ottsym{)}  ] \grmor
      \mathcal{E}  [  \square \, \ottnt{M}  ] \grmor
      \mathcal{E}  [  \ottnt{V} \, \square  ]
    }
  \end{align*}
  \textbf{Reduction} \hfill\fbox{$ \ottnt{M}    \mathbin{\accentset{\mathsf{e} }{\reduces}_{\mathsf{S} } }    \ottnt{N} $}\quad\fbox{$ \ottnt{M}    \mathbin{\accentset{\mathsf{c} }{\reduces}_{\mathsf{S} } }    \ottnt{N} $}
  \begin{align*}
    \ottnt{op}  \ottsym{(}  \ottnt{a}  \ottsym{,}  \ottnt{b}  \ottsym{)} & \mathbin{\accentset{\mathsf{e} }{\reduces} }  \delta \, \ottsym{(}  \ottnt{op}  \ottsym{,}  \ottnt{a}  \ottsym{,}  \ottnt{b}  \ottsym{)} &\rn{R-Op} &&
    \ottnt{U}  \langle   \ottkw{id} _{ \ottnt{A} }   \rangle & \mathbin{\accentset{\mathsf{c} }{\reduces} }  \ottnt{U} &\rn{R-Id} \\
    \ottsym{(}   \lambda   \ottmv{x} .\,  \ottnt{M}   \ottsym{)} \, \ottnt{V} & \mathbin{\accentset{\mathsf{e} }{\reduces} }  \ottnt{M}  [  \ottmv{x}  \ottsym{:=}  \ottnt{V}  ] &\rn{R-Beta}&&
    \ottnt{U}  \langle   \bot^{ \ottnt{G}   \ottnt{p}   \ottnt{H} }   \rangle & \mathbin{\accentset{\mathsf{c} }{\reduces} }  \ottkw{blame} \, \ottnt{p} &\rn{R-Fail}\\
    \ottsym{(}  \ottnt{U}  \langle\!\langle  \ottnt{s}  \rightarrow  \ottnt{t}  \rangle\!\rangle  \ottsym{)} \, \ottnt{V} & \mathbin{\accentset{\mathsf{e} }{\reduces} }  \ottsym{(}  \ottnt{U} \, \ottsym{(}  \ottnt{V}  \langle  \ottnt{s}  \rangle  \ottsym{)}  \ottsym{)}  \langle  \ottnt{t}  \rangle &\rn{R-Wrap} &&
    \ottnt{U}  \langle  \ottnt{d}  \rangle & \mathbin{\accentset{\mathsf{c} }{\reduces} }  \ottnt{U}  \langle\!\langle  \ottnt{d}  \rangle\!\rangle &\rn{R-Crc} \\ &&&&
    \ottnt{M}  \langle  \ottnt{s}  \rangle  \langle  \ottnt{t}  \rangle & \mathbin{\accentset{\mathsf{c} }{\reduces} }  \ottnt{M}  \langle  \ottnt{s}  \fatsemi  \ottnt{t}  \rangle &\rn{R-MergeC}\\ &&&&
    \ottnt{U}  \langle\!\langle  \ottnt{d}  \rangle\!\rangle  \langle  \ottnt{t}  \rangle & \mathbin{\accentset{\mathsf{c} }{\reduces} }  \ottnt{U}  \langle  \ottnt{d}  \fatsemi  \ottnt{t}  \rangle &\rn{R-MergeV}
  \end{align*}
  \textbf{Evaluation} \hfill\fbox{$ \ottnt{M}    \mathbin{  \accentset{\mathsf{e} }{\evalto}_{\mathsf{S_1} }  }    \ottnt{N} $}\quad\fbox{$ \ottnt{M}    \mathbin{  \accentset{\mathsf{c} }{\evalto}_{\mathsf{S_1} }  }    \ottnt{N} $}
  \begin{center}
    \infrule[E-CtxE]{
       \ottnt{M}    \mathbin{\accentset{\mathsf{e} }{\reduces} }    \ottnt{N} 
    }{
       \mathcal{E}  [  \ottnt{M}  ]    \mathbin{  \accentset{\mathsf{e} }{\evalto}  }    \mathcal{E}  [  \ottnt{N}  ] 
    } \hgap
    \infrule[E-CtxC]{
       \ottnt{M}    \mathbin{\accentset{\mathsf{c} }{\reduces} }    \ottnt{N} 
    }{
       \mathcal{F}  [  \ottnt{M}  ]    \mathbin{  \accentset{\mathsf{c} }{\evalto}  }    \mathcal{F}  [  \ottnt{N}  ] 
    } \hgap
    \infrule[E-Abort]{
      \mathcal{E}  \ne  \square
    }{
       \mathcal{E}  [  \ottkw{blame} \, \ottnt{p}  ]    \mathbin{  \accentset{\mathsf{e} }{\evalto}  }    \ottkw{blame} \, \ottnt{p} 
    }
  \end{center}
  \caption{Reduction/evaluation rules of \lamS.}
  \label{fig:semS}
\end{figure}

\subsubsection{Coercion Composition}

The coercion composition $\ottnt{s}  \fatsemi  \ottnt{t}$ is a recursive function that takes
two space-efficient coercions and computes another space-efficient
coercion corresponding to their sequential composition.  We show the
coercion composition rules in Figure~\ref{fig:semS-cmp}.  The function is
defined in such a way that the form of the first coercion determines
which rule to apply.

The rules \rnp{CC-IdDynL} and \rnp{CC-ProjL} are applied if the
first coercion is not an intermediate coercion.
The rules \rnp{CC-InjId}, \rnp{CC-Collapse}, \rnp{CC-Conflict}, and \rnp{CC-FailL}
are applied if the first one is a (nonground) intermediate coercion,
in which case another intermediate coercion is yielded.
The rules \rnp{CC-Collapse} and \rnp{CC-Conflict} deal with
cases where an injection and a projection meet and perform tag checks.
If type tags do not match, a failure coercion arises.

Failure coercions are necessary for eager coercion composition
to preserve the behavior of \lamC.
The term $\ottnt{M}  \langle   \ottnt{G} \texttt{!}   \rangle  \langle   \ottnt{H} \texttt{?}^{ \ottnt{p} }   \rangle$ (if $\ottnt{G}  \ne  \ottnt{H}$) in \lamC
evaluates to $\ottkw{blame} \, \ottnt{p}$---\emph{only after $\ottnt{M}$ evaluates to a value.}
By contrast,
the two coercions $ \ottnt{G} \texttt{!} $ and $ \ottnt{H} \texttt{?}^{ \ottnt{p} } $ in the term $\ottnt{M}  \langle   \ottkw{id} _{ \ottnt{G} }   \ottsym{;}   \ottnt{G} \texttt{!}   \rangle  \langle   \ottnt{H} \texttt{?}^{ \ottnt{p} }   \ottsym{;}   \ottkw{id} _{ \ottnt{H} }   \rangle$
are eagerly composed in \lamS.
Raising $\ottkw{blame} \, \ottnt{p}$ immediately would not match the semantics of \lamC
because $\ottnt{M}$ may evaluate to another blame or even diverge, in which case $\ottnt{p}$
is not blamed.
Thus, $ \bot^{ \ottnt{G}   \ottnt{p}   \ottnt{H} } $ must raise $\ottkw{blame} \, \ottnt{p}$
only after $\ottnt{M}$ evaluates to a value.

The rules \rnp{CC-FailR} and \rnp{CC-InjR} are applied if a ground
coercion and an intermediate coercion are composed to another
intermediate coercion.  
The rules \rnp{CC-FailL} and \rnp{CC-FailR} represent the propagation
of a failure to the context, somewhat similarly to exceptions.  The
rule \rnp{CC-InjR} represents associativity of sequential compositions
but $ \fatsemi $ is propagated to the inside.

The rules \rnp{CC-IdL}, \rnp{CC-IdR}, and \rnp{CC-Fun} are applied
if two ground coercions are composed to another ground coercion.
They are straightforward except that $ \ottkw{id} _{ \ottnt{A} }   \rightarrow   \ottkw{id} _{ \ottnt{B} } $ has to be
normalized to $ \ottkw{id} _{ \ottnt{A}  \rightarrow  \ottnt{B} } $ (\rn{CC-Fun}).

We present a few examples of coercion composition below:
\begin{align*}
  \ottsym{(}   \ottkw{id} _{ \ottkw{bool} }   \ottsym{;}   \ottkw{bool} \texttt{!}   \ottsym{)}  \fatsemi  \ottsym{(}   \ottkw{bool} \texttt{?}^{ \ottnt{p} }   \ottsym{;}   \ottkw{id} _{ \ottkw{bool} }   \ottsym{)}
  &=  \ottkw{id} _{ \ottkw{bool} }   \fatsemi   \ottkw{id} _{ \ottkw{bool} }  =  \ottkw{id} _{ \ottkw{bool} }  \\
  \ottsym{(}   \ottkw{id} _{ \mathord{\star}  \rightarrow  \mathord{\star} }   \ottsym{;}   \ottsym{(}  \mathord{\star}  \rightarrow  \mathord{\star}  \ottsym{)} \texttt{!}   \ottsym{)}  \fatsemi  \ottsym{(}   \ottkw{int} \texttt{?}^{ \ottnt{p} }   \ottsym{;}   \ottkw{id} _{ \ottkw{int} }   \ottsym{)}
  &=  \bot^{ \mathord{\star}  \rightarrow  \mathord{\star}   \ottnt{p}   \ottkw{int} }  \\
  \ottsym{(}  \ottsym{(}   \iota \texttt{?}^{ \ottnt{p} }   \ottsym{;}   \ottkw{id} _{ \iota }   \ottsym{)}  \rightarrow  \ottsym{(}   \ottkw{id} _{ \iota' }   \ottsym{;}   \iota' \texttt{!}   \ottsym{)}  \ottsym{)}  \fatsemi  \ottsym{(}  \ottsym{(}   \ottkw{id} _{ \iota }   \ottsym{;}   \iota \texttt{!}   \ottsym{)}  \rightarrow   \ottkw{id} _{ \mathord{\star} }   \ottsym{)}
  &= \ottsym{(}  \ottsym{(}   \ottkw{id} _{ \iota }   \ottsym{;}   \iota \texttt{!}   \ottsym{)}  \fatsemi  \ottsym{(}   \iota \texttt{?}^{ \ottnt{p} }   \ottsym{;}   \ottkw{id} _{ \iota }   \ottsym{)}  \ottsym{)}  \rightarrow  \ottsym{(}  \ottsym{(}   \ottkw{id} _{ \iota' }   \ottsym{;}   \iota' \texttt{!}   \ottsym{)}  \fatsemi   \ottkw{id} _{ \mathord{\star} }   \ottsym{)} \\
  &=  \ottkw{id} _{ \iota }   \rightarrow  \ottsym{(}   \ottkw{id} _{ \iota' }   \ottsym{;}   \iota' \texttt{!}   \ottsym{)}
\end{align*}
These examples involve situations where an injection meets a projection
by \rnp{CC-Collapse} or \rnp{CC-Conflict}.
The third example is by \rnp{CC-Fun}.
\begin{align*}
  \ottsym{(}   \iota \texttt{?}^{ \ottnt{p} }   \ottsym{;}   \ottkw{id} _{ \iota }   \ottsym{)}  \fatsemi  \ottsym{(}   \ottkw{id} _{ \iota }   \ottsym{;}   \iota \texttt{!}   \ottsym{)}
  &=  \iota \texttt{?}^{ \ottnt{p} }   \ottsym{;}  \ottsym{(}   \ottkw{id} _{ \iota }   \fatsemi  \ottsym{(}   \ottkw{id} _{ \iota }   \ottsym{;}   \iota \texttt{!}   \ottsym{)}  \ottsym{)}
  =  \iota \texttt{?}^{ \ottnt{p} }   \ottsym{;}  \ottsym{(}  \ottsym{(}   \ottkw{id} _{ \iota }   \fatsemi   \ottkw{id} _{ \iota }   \ottsym{)}  \ottsym{;}   \iota \texttt{!}   \ottsym{)}
  =  \iota \texttt{?}^{ \ottnt{p} }   \ottsym{;}  \ottsym{(}   \ottkw{id} _{ \iota }   \ottsym{;}   \iota \texttt{!}   \ottsym{)}
  \\
  \ottsym{(}   \ottkw{id} _{ \iota }   \ottsym{;}   \iota \texttt{!}   \ottsym{)}  \fatsemi  \ottsym{(}   \iota \texttt{?}^{ \ottnt{p} }   \ottsym{;}  \ottsym{(}   \ottkw{id} _{ \iota }   \ottsym{;}   \iota \texttt{!}   \ottsym{)}  \ottsym{)}
  &=  \ottkw{id} _{ \iota }   \fatsemi  \ottsym{(}   \ottkw{id} _{ \iota }   \ottsym{;}   \iota \texttt{!}   \ottsym{)}
  = \ottsym{(}   \ottkw{id} _{ \iota }   \fatsemi   \ottkw{id} _{ \iota }   \ottsym{)}  \ottsym{;}   \iota \texttt{!}  =  \ottkw{id} _{ \iota }   \ottsym{;}   \iota \texttt{!} 
\end{align*}
As the fourth example shows,
a projection followed by an injection does not collapse
since the projection might fail.
Such a coercion is simplified when it is preceded by another injection (the fifth example).

The following lemma states that composition is defined for two well-formed coercions with
matching target and source types.
\begin{lemma}[restate=lemCmpWelldef,name=] \label{lem:sc-cmp-welldef}
  If $\ottnt{s}  \ottsym{:}  \ottnt{A}  \rightsquigarrow  \ottnt{B}$ and $\ottnt{t}  \ottsym{:}  \ottnt{B}  \rightsquigarrow  \ottnt{C}$, then $\ottsym{(}  \ottnt{s}  \fatsemi  \ottnt{t}  \ottsym{)}  \ottsym{:}  \ottnt{A}  \rightsquigarrow  \ottnt{C}$.
\end{lemma}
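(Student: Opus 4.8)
The plan is to prove the lemma by well-founded induction on $ \metafun{crcsize} ( \ottnt{s} )  +  \metafun{crcsize} ( \ottnt{t} ) $, the total size of the two coercions, with a case analysis that follows the clauses defining $ \fatsemi $ in Figure~\ref{fig:semS-cmp}. Since a space-efficient coercion has one of the five shapes $ \ottkw{id} _{ \mathord{\star} } $, $ \ottnt{G} \texttt{?}^{ \ottnt{p} }   \ottsym{;}  \ottnt{i}$, $\ottnt{g}  \ottsym{;}   \ottnt{G} \texttt{!} $, $\ottnt{g}$, or $ \bot^{ \ottnt{G}   \ottnt{p}   \ottnt{H} } $, and these (together with the shape of $\ottnt{t}$ in the two cases where $\ottnt{s}$ is a nondynamic intermediate coercion) determine which clause applies, the case split is essentially syntax-directed. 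In each case I would invert the well-formedness derivations of $\ottnt{s}  \ottsym{:}  \ottnt{A}  \rightsquigarrow  \ottnt{B}$ and $\ottnt{t}  \ottsym{:}  \ottnt{B}  \rightsquigarrow  \ottnt{C}$ to recover the source and target types of the immediate sub-coercions, and then either read off the typing of the result directly (for \rnp{CC-IdDynL}, \rnp{CC-InjId}, \rnp{CC-FailL}, \rnp{CC-IdL}, \rnp{CC-IdR}, whose right-hand side is literally one of the given sub-coercions) or invoke the induction hypothesis on the recursive subcalls and reassemble a derivation using \rnp{CT-Seq}, \rnp{CT-Inj}, \rnp{CT-Fun}, or \rnp{CT-Fail} (for \rnp{CC-ProjL}, \rnp{CC-Collapse}, \rnp{CC-Conflict}, \rnp{CC-InjR}, \rnp{CC-FailR}, \rnp{CC-Fun}). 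Taking the combined size as the induction measure is necessary because in \rnp{CC-Fun} the recursive calls $\ottnt{c_{{\mathrm{3}}}}  \fatsemi  \ottnt{c_{{\mathrm{1}}}}$ and $\ottnt{c_{{\mathrm{2}}}}  \fatsemi  \ottnt{c_{{\mathrm{4}}}}$ (writing $\ottnt{s} = \ottnt{c_{{\mathrm{1}}}}  \rightarrow  \ottnt{c_{{\mathrm{2}}}}$, $\ottnt{t} = \ottnt{c_{{\mathrm{3}}}}  \rightarrow  \ottnt{c_{{\mathrm{4}}}}$) mix a sub-coercion of $\ottnt{s}$ with a sub-coercion of $\ottnt{t}$, so neither coercion shrinks structurally on its own; the sum, however, strictly decreases in every recursive clause.

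Implicit in the statement is the claim that $\ottnt{s}  \fatsemi  \ottnt{t}$ is in fact defined, i.e.\ that some clause always applies, and establishing this is where Proposition~\ref{prop:src-tgt} is essential. When $\ottnt{s}$ has the form $\ottnt{g}  \ottsym{;}   \ottnt{G} \texttt{!} $ its target type is $\mathord{\star}$, so $\ottnt{t}$ has source type $\mathord{\star}$; by Proposition~\ref{prop:src-tgt} no intermediate coercion has source type $\mathord{\star}$, so $\ottnt{t}$ must be $ \ottkw{id} _{ \mathord{\star} } $ or start with a projection, which are exactly the cases covered by \rnp{CC-InjId}, \rnp{CC-Collapse}, and \rnp{CC-Conflict}. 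Dually, when $\ottnt{s}$ is a bare ground coercion its target type is not $\mathord{\star}$ (again Proposition~\ref{prop:src-tgt}), so $\ottnt{t}$ can be neither $ \ottkw{id} _{ \mathord{\star} } $ nor a coercion beginning with a projection, hence is a bare intermediate coercion, handled by \rnp{CC-FailR}, \rnp{CC-InjR}, \rnp{CC-IdL}, \rnp{CC-IdR}, and \rnp{CC-Fun} according to whether $\ottnt{g}$ and the head of $\ottnt{t}$ are identity or function coercions. Proposition~\ref{prop:src-tgt} is also what makes the failure clauses \rnp{CC-Conflict} and \rnp{CC-FailR} type-check: the result $ \bot^{ \ottnt{G}   \ottnt{p}   \ottnt{H} } $ requires, by \rnp{CT-Fail}, that its (possibly changed) source type $\ottnt{A}$ be nondynamic and consistent with $\ottnt{G}$, and both facts follow from Proposition~\ref{prop:src-tgt} together with the observation that a nondynamic type is consistent with at most one ground type.

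The one genuinely delicate case is \rnp{CC-Fun}. Inverting \rnp{CT-Fun} on $\ottnt{s}$ and on $\ottnt{t}$ exposes $\ottnt{c_{{\mathrm{1}}}},\dots,\ottnt{c_{{\mathrm{4}}}}$ together with intermediate types that line up, so the induction hypothesis yields well-formedness of $\ottnt{c_{{\mathrm{3}}}}  \fatsemi  \ottnt{c_{{\mathrm{1}}}}$ and $\ottnt{c_{{\mathrm{2}}}}  \fatsemi  \ottnt{c_{{\mathrm{4}}}}$ with precisely the source and target types needed to re-apply \rnp{CT-Fun}; this settles the ``otherwise'' subcase. In the subcase where both recursive compositions collapse to identities $ \ottkw{id} _{ \ottnt{A_0} } $ and $ \ottkw{id} _{ \ottnt{B_0} } $, I would use that \rnp{CT-Id} is the unique typing rule for an identity coercion, so the induction hypotheses pin down the indices $\ottnt{A_0}$ and $\ottnt{B_0}$ as the relevant components of $\ottnt{A}$ and $\ottnt{C}$; consequently the result $ \ottkw{id} _{ \ottnt{A_0}  \rightarrow  \ottnt{B_0} } $ really does have type $\ottnt{A}  \rightsquigarrow  \ottnt{C}$ (and one checks $\ottnt{A} = \ottnt{C}$ here). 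I expect this identity-index bookkeeping in \rnp{CC-Fun}, together with making the exhaustiveness argument watertight via Proposition~\ref{prop:src-tgt}, to be the only non-mechanical parts; the remaining clauses reduce to routine inversion and reconstruction of well-formedness derivations.
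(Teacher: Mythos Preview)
Your argument is correct and shares the paper's underlying mechanics—an induction tracking the recursive clauses of $\fatsemi$, with Proposition~\ref{prop:src-tgt} doing the work of showing the case split is exhaustive—but the paper packages the induction differently. Instead of a single statement proved by well-founded induction on total size, the paper proves four items simultaneously: the lemma itself, together with the refinements that $\ottnt{i}  \fatsemi  \ottnt{t}$ is always an \emph{intermediate} coercion, that $\ottnt{g}  \fatsemi  \ottnt{i}$ is always an intermediate coercion, and that $\ottnt{g_{{\mathrm{1}}}}  \fatsemi  \ottnt{g_{{\mathrm{2}}}}$ is always a \emph{ground} coercion (each with the appropriate typing). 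Your version treats the right-hand sides of the clauses as general coercions and closes the derivation with \rnp{CT-Seq}, which suffices for the typing judgment the lemma literally states. The paper's stratified formulation additionally establishes that $\fatsemi$ is closed on space-efficient coercions—so that, e.g., $ \ottnt{G} \texttt{?}^{ \ottnt{p} }   \ottsym{;}  \ottsym{(}  \ottnt{i}  \fatsemi  \ottnt{t}  \ottsym{)}$ in \rnp{CC-ProjL} is a syntactically valid space-efficient coercion, not merely a well-typed general one. That closure is relied on implicitly elsewhere (for instance so that $\ottnt{M}  \langle  \ottnt{s}  \fatsemi  \ottnt{t}  \rangle$ in \rnp{R-MergeC} is a well-formed \lamS term), which is presumably why the paper builds it into the induction. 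Your size measure and your use of Proposition~\ref{prop:src-tgt} would drive the paper's four-way induction equally well; conversely, adding the syntactic-category conclusions to your hypothesis would recover the paper's stronger result with no change to the case analysis you describe.
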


\subsubsection{Evaluation} 

We give a small-step operational semantics to \lamS
consisting of two relations on closed terms:
the reduction relation $ \ottnt{M}    \reduces_{\mathsf{S} }    \ottnt{N} $ for basic computation, and
the evaluation relation $ \ottnt{M}    \mathbin{  \evalto_{\mathsf{S} }  }    \ottnt{N} $ for computing subterms and raising errors.

We show the reduction rules and the evaluation rules of \lamS in Figure~\ref{fig:semS}.
The reduction/evaluation rules are labeled either \textsf{e} or \textsf{c}.
The label \textsf{e} is for essential computation, and
the label \textsf{c} is for coercion applications.
As we see later, this distinction is important in our correctness proof.
We write $ \reduces_{\mathsf{S} } $ for $ \mathbin{\accentset{\mathsf{e} }{\reduces}_{\mathsf{S} } }  \cup  \mathbin{\accentset{\mathsf{c} }{\reduces}_{\mathsf{S} } } $, and
$ \evalto_{\mathsf{S} } $ for $ \accentset{\mathsf{e} }{\evalto}_{\mathsf{S} }  \cup  \accentset{\mathsf{c} }{\evalto}_{\mathsf{S} } $.  We sometimes call $ \accentset{\mathsf{e} }{\evalto}_{\mathsf{S} } $
and $ \accentset{\mathsf{c} }{\evalto}_{\mathsf{S} } $ e-evaluation and c-evaluation, respectively.

The rule \rnp{R-Op} applies to primitive operations.
Here, $ \delta $ is a (partial) function
that takes an operator $\ottnt{op}$ and two constants $\ottnt{a_{{\mathrm{1}}}},\ottnt{a_{{\mathrm{2}}}}$,
and returns the resulting constant of the primitive operation.
We assume that if $ \metafun{ty} ( \ottnt{op} )   \ottsym{=}  \iota_{{\mathrm{1}}}  \rightarrow  \iota_{{\mathrm{2}}}  \rightarrow  \iota$ and
$ \metafun{ty} ( \ottnt{a_{{\mathrm{1}}}} )   \ottsym{=}  \iota_{{\mathrm{1}}}$ and $ \metafun{ty} ( \ottnt{a_{{\mathrm{2}}}} )   \ottsym{=}  \iota_{{\mathrm{2}}}$, then
$\delta \, \ottsym{(}  \ottnt{op}  \ottsym{,}  \ottnt{a_{{\mathrm{1}}}}  \ottsym{,}  \ottnt{a_{{\mathrm{2}}}}  \ottsym{)}  \ottsym{=}  \ottnt{a}$ and $ \metafun{ty} ( \ottnt{a} )   \ottsym{=}  \iota$ for some constant $\ottnt{a}$.

The rule \rnp{R-Beta} performs the standard call-by-value $\beta$-reduction.
We write $\ottnt{M}  [  \ottmv{x}  \ottsym{:=}  \ottnt{V}  ]$ for capture-avoiding substitution of $\ottnt{V}$ for free occurrences of $\ottmv{x}$ in $\ottnt{M}$.
The definition of substitution is standard and thus omitted.

The rule \rnp{R-Wrap} applies to applications of wrapped function
$\ottnt{U}  \langle\!\langle  \ottnt{s}  \rightarrow  \ottnt{t}  \rangle\!\rangle$ to value $\ottnt{V}$.  In this case, we first apply
coercion $\ottnt{s}$ on the argument to $\ottnt{V}$, and get $\ottnt{V}  \langle  \ottnt{s}  \rangle$. We
next apply function $\ottnt{U}$ to $\ottnt{V}  \langle  \ottnt{s}  \rangle$, and get $\ottnt{U} \, \ottsym{(}  \ottnt{V}  \langle  \ottnt{s}  \rangle  \ottsym{)}$. We
then apply coercion $\ottnt{t}$ on the returned value, hence
$\ottsym{(}  \ottnt{U} \, \ottsym{(}  \ottnt{V}  \langle  \ottnt{s}  \rangle  \ottsym{)}  \ottsym{)}  \langle  \ottnt{t}  \rangle$.

The rule \rnp{R-Id} represents that identity coercion $ \ottkw{id} _{ \ottnt{A} } $ returns
the input value $\ottnt{U}$ as it is.
The rule \rnp{R-Fail} applies to applications of failure coercion $ \bot^{ \ottnt{G}   \ottnt{p}   \ottnt{H} } $
to uncoerced value $\ottnt{U}$, which reduces to $\ottkw{blame} \, \ottnt{p}$.
The rule \rnp{R-Crc} applies to applications $\ottnt{U}  \langle  \ottnt{d}  \rangle$
of delayed coercion $\ottnt{d}$ to uncoerced value $\ottnt{U}$,
which reduces to a coerced value $\ottnt{U}  \langle\!\langle  \ottnt{d}  \rangle\!\rangle$.

The rules \rnp{R-MergeC} and \rnp{R-MergeV} apply to
two consecutive coercion applications, and
the two coercions are merged by the composition operation.
These rules are key to space efficiency.
Thanks to \rnp{R-MergeV}, we can assume that
there is at most one coercion in a value.
Since $\ottnt{d}  \fatsemi  \ottnt{t}$ may or may not be a delayed coercion, the right-hand side
has to be $\ottnt{U}  \langle  \ottnt{d}  \fatsemi  \ottnt{t}  \rangle$, rather than $\ottnt{U}  \langle\!\langle  \ottnt{d}  \fatsemi  \ottnt{t}  \rangle\!\rangle$.
The outermost nested coercion applications are merged by \rnp{R-MergeC}.

Now, we explain \emph{evaluation contexts}, ranged over by $\mathcal{E}$,
shown in the top of Figure~\ref{fig:semS}.
Following Siek et al.~\cite{DBLP:conf/pldi/SiekTW15}, we define them in the
so-called ``inside-out''
style~\cite{DBLP:conf/lfp/FelleisenWFD88,DBLP:journals/tcs/DanvyN01}.
Evaluation contexts represent that function calls in \lamS are
call-by-value and that primitive operations and function applications
are evaluated from left to right.  The grammar is mutually recursive
with $\mathcal{F}$, which stands for evaluation contexts whose innermost
frames are not a coercion application, whereas $\mathcal{E}$ may contain a
coercion application as the innermost frame.\footnote{%
  $\mathcal{F}  [  \square \, \langle  \ottnt{s}  \rangle  ]$ (instead of $\mathcal{F}  [  \square \, \langle  \ottnt{f}  \rangle  ]$) in the definition of
  $\mathcal{E}$ fixes a problem in Siek et al.~\cite{DBLP:conf/pldi/SiekTW15} that an
  identity coercion applied to a nonvalue gets stuck (personal
  communication).  }  Careful inspection will reveal that both $\mathcal{E}$
and $\mathcal{F}$ contain no consecutive coercion applications.  As usual,
we write $\mathcal{E}  [  \ottnt{M}  ]$ for the term obtained by replacing the hole in
$\mathcal{E}$ with $\ottnt{M}$, similarly for $\mathcal{F}  [  \ottnt{M}  ]$.
(We omit their definitions.)

We present a few examples of evaluation contexts below:
\begin{align*}
  \mathcal{F}_{{\mathrm{1}}} &=  \square  &
  \mathcal{E}_{{\mathrm{1}}} &= \mathcal{F}_{{\mathrm{1}}}  [  \square \, \langle  \ottnt{s}  \rangle  ] = \square \, \langle  \ottnt{s}  \rangle \\
  \mathcal{F}_{{\mathrm{2}}} &= \mathcal{E}_{{\mathrm{1}}}  [  \ottnt{V} \, \square  ] = \ottsym{(}  \ottnt{V} \, \square \, \ottsym{)}  \langle  \ottnt{s}  \rangle &
  \mathcal{E}_{{\mathrm{2}}} &= \mathcal{F}_{{\mathrm{2}}}  [  \square \, \langle  \ottnt{t}  \rangle  ] = \ottsym{(}  \ottnt{V} \, \ottsym{(}  \square \, \langle  \ottnt{t}  \rangle  \ottsym{)}  \ottsym{)}  \langle  \ottnt{s}  \rangle \\
  \mathcal{F}_{{\mathrm{3}}} &= \mathcal{E}_{{\mathrm{2}}}  [  \square \, \ottnt{M}  ] = \ottsym{(}  \ottnt{V} \, \ottsym{(}  \ottsym{(} \, \square \, \ottnt{M}  \ottsym{)}  \langle  \ottnt{t}  \rangle  \ottsym{)}  \ottsym{)}  \langle  \ottnt{s}  \rangle
\end{align*}

We then come back to evaluation rules:
The rules \rnp{E-CtxE} and \rnp{E-CtxC} enable us to evaluate the subterm in an evaluation context.
Here, \rnp{E-CtxC} requires that
computation of coercion applications is only performed under contexts $\mathcal{F}$---otherwise,
the innermost frame may be a coercion application, in which case
\rnp{R-MergeC} has to be applied first.  For example,
$\ottnt{U}  \langle  \ottnt{d}  \rangle  \langle  \ottnt{t}  \rangle$ reduces to $\ottnt{U}  \langle  \ottnt{d}  \fatsemi  \ottnt{t}  \rangle$ rather than $\ottnt{U}  \langle\!\langle  \ottnt{d}  \rangle\!\rangle  \langle  \ottnt{t}  \rangle$.
The rule \rnp{E-Abort} halts the evaluation of a program if it raises blame.

\begin{example}\label{example:source}
  Let $\ottnt{U}$ be $ \lambda   \ottmv{x} .\,  \ottsym{(}  \ottmv{x}  \langle   \ottkw{int} \texttt{?}^{ \ottnt{p} }   \rangle  \ottsym{+}  \ottsym{2}  \ottsym{)}  \langle   \ottkw{int} \texttt{!}   \rangle $.  Term
  $\ottsym{(}  \ottsym{(}  \ottnt{U}  \langle   \ottkw{int} \texttt{!}   \rightarrow   \ottkw{int} \texttt{?}^{ \ottnt{p} }   \rangle  \ottsym{)} \, \ottsym{3}  \ottsym{)}  \langle   \ottkw{int} \texttt{!}   \rangle$ evaluates to $\ottsym{5}  \langle\!\langle   \ottkw{int} \texttt{!}   \rangle\!\rangle$ as follows:
\begin{align*}
\mathrlap{\ottsym{(}  \ottsym{(}  \ottnt{U}  \langle   \ottkw{int} \texttt{!}   \rightarrow   \ottkw{int} \texttt{?}^{ \ottnt{p} }   \rangle  \ottsym{)} \, \ottsym{3}  \ottsym{)}  \langle   \ottkw{int} \texttt{!}   \rangle} \\
\ifluxuryspace
& \evalto  \ottsym{(}  \ottsym{(}  \ottnt{U}  \langle\!\langle   \ottkw{int} \texttt{!}   \rightarrow   \ottkw{int} \texttt{?}^{ \ottnt{p} }   \rangle\!\rangle  \ottsym{)} \, \ottsym{3}  \ottsym{)}  \langle   \ottkw{int} \texttt{!}   \rangle &\text{by \rnp{R-Crc}}\\
& \evalto  \ottsym{(}  \ottnt{U} \, \ottsym{(}  \ottsym{3}  \langle   \ottkw{int} \texttt{!}   \rangle  \ottsym{)}  \ottsym{)}  \langle   \ottkw{int} \texttt{?}^{ \ottnt{p} }   \rangle  \langle   \ottkw{int} \texttt{!}   \rangle &\text{by \rnp{R-Wrap}}\\
& \evalto  \ottsym{(}  \ottnt{U} \, \ottsym{(}  \ottsym{3}  \langle   \ottkw{int} \texttt{!}   \rangle  \ottsym{)}  \ottsym{)}  \langle   \ottkw{int} \texttt{?}^{ \ottnt{p} }   \ottsym{;}  \ottkw{id}  \ottsym{;}   \ottkw{int} \texttt{!}   \rangle &\text{by \rnp{R-MergeC}}\\
& \evalto  \ottsym{(}  \ottnt{U} \, \ottsym{(}  \ottsym{3}  \langle\!\langle   \ottkw{int} \texttt{!}   \rangle\!\rangle  \ottsym{)}  \ottsym{)}  \langle   \ottkw{int} \texttt{?}^{ \ottnt{p} }   \ottsym{;}  \ottkw{id}  \ottsym{;}   \ottkw{int} \texttt{!}   \rangle &\text{by \rnp{R-Crc}}\\
& \evalto  \ottsym{(}  \ottsym{3}  \langle\!\langle   \ottkw{int} \texttt{!}   \rangle\!\rangle  \langle   \ottkw{int} \texttt{?}^{ \ottnt{p} }   \rangle  \ottsym{+}  \ottsym{2}  \ottsym{)}  \langle   \ottkw{int} \texttt{!}   \rangle  \langle   \ottkw{int} \texttt{?}^{ \ottnt{p} }   \ottsym{;}  \ottkw{id}  \ottsym{;}   \ottkw{int} \texttt{!}   \rangle &\text{by \rnp{R-Beta}}\\
& \evalto  \ottsym{(}  \ottsym{3}  \langle\!\langle   \ottkw{int} \texttt{!}   \rangle\!\rangle  \langle   \ottkw{int} \texttt{?}^{ \ottnt{p} }   \rangle  \ottsym{+}  \ottsym{2}  \ottsym{)}  \langle   \ottkw{int} \texttt{!}   \rangle &\text{by \rnp{R-MergeC}}\\
& \evalto  \ottsym{(}  \ottsym{3}  \langle  \ottkw{id}  \rangle  \ottsym{+}  \ottsym{2}  \ottsym{)}  \langle   \ottkw{int} \texttt{!}   \rangle &\text{by \rnp{R-MergeV}}\\
& \evalto  \ottsym{(}  \ottsym{3}  \ottsym{+}  \ottsym{2}  \ottsym{)}  \langle   \ottkw{int} \texttt{!}   \rangle &\text{by \rnp{R-Id}}\\
& \evalto  \ottsym{5}  \langle   \ottkw{int} \texttt{!}   \rangle &\text{by \rnp{R-Op}}\\
& \evalto  \ottsym{5}  \langle\!\langle   \ottkw{int} \texttt{!}   \rangle\!\rangle &\text{by \rnp{R-Crc}}
\else
& \mathbin{ \evalto ^*}  \ottsym{(}  \ottnt{U} \, \ottsym{(}  \ottsym{3}  \langle   \ottkw{int} \texttt{!}   \rangle  \ottsym{)}  \ottsym{)}  \langle   \ottkw{int} \texttt{?}^{ \ottnt{p} }   \rangle  \langle   \ottkw{int} \texttt{!}   \rangle &\text{by \rnp{R-Crc}, \rnp{R-Wrap}}\\
& \evalto  \ottsym{(}  \ottnt{U} \, \ottsym{(}  \ottsym{3}  \langle   \ottkw{int} \texttt{!}   \rangle  \ottsym{)}  \ottsym{)}  \langle   \ottkw{int} \texttt{?}^{ \ottnt{p} }   \ottsym{;}  \ottkw{id}  \ottsym{;}   \ottkw{int} \texttt{!}   \rangle &\text{by \rnp{R-MergeC}}\\
& \mathbin{ \evalto ^*}  \ottsym{(}  \ottsym{3}  \langle\!\langle   \ottkw{int} \texttt{!}   \rangle\!\rangle  \langle   \ottkw{int} \texttt{?}^{ \ottnt{p} }   \rangle  \ottsym{+}  \ottsym{2}  \ottsym{)}  \langle   \ottkw{int} \texttt{!}   \rangle  \langle   \ottkw{int} \texttt{?}^{ \ottnt{p} }   \ottsym{;}  \ottkw{id}  \ottsym{;}   \ottkw{int} \texttt{!}   \rangle &\text{by \rnp{R-Crc}, \rnp{R-Beta}}\\
& \mathbin{ \evalto ^*}  \ottsym{(}  \ottsym{3}  \langle  \ottkw{id}  \rangle  \ottsym{+}  \ottsym{2}  \ottsym{)}  \langle   \ottkw{int} \texttt{!}   \rangle &\text{by \rnp{R-MergeC}, \rnp{R-MergeV}}\\
& \mathbin{ \evalto ^*}  \ottsym{5}  \langle\!\langle   \ottkw{int} \texttt{!}   \rangle\!\rangle &\text{by \rnp{R-Id}, \rnp{R-Op}, \rnp{R-Crc}.}
\fi
\end{align*}
\end{example}

\subsection{Properties}

We state a few important properties of \lamS, including determinacy of
the evaluation relation and type safety via progress and
preservation~\citep{WrightFelleisenIC94}.  We write $ \mathbin{  \evalto_{\mathsf{S} }  ^*} $ for
the reflexive and transitive closure of $ \evalto_{\mathsf{S} } $, and $ \mathbin{  \evalto_{\mathsf{S} }  ^+} $
for the transitive closure of $ \evalto_{\mathsf{S} } $.  We say that \lamS-term
$\ottnt{M}$ \emph{diverges}, denoted by $\ottnt{M} \,  \mathord{\Uparrow_{\mathsf{S} } } $, if there exists
an infinite evaluation sequence from $\ottnt{M}$.

\iffull
Proofs of the stated properties are in Appendix~\ref{sec:appendix}.
\else
Proofs of the stated properties are in the full version.
\fi

\begin{lemma}[name=Determinacy,restate=lemDeterminacyS]\label{lem:determinism-S}
  If $ \ottnt{M}    \mathbin{  \evalto_{\mathsf{S} }  }    \ottnt{N} $ and $ \ottnt{M}    \mathbin{  \evalto_{\mathsf{S} }  }    \ottnt{N'} $, then $\ottnt{N}  \ottsym{=}  \ottnt{N'}$.
\end{lemma}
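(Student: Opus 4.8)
The plan is to reduce Lemma~\ref{lem:determinism-S} to two standard facts: determinacy of the one-step reduction relation, and a unique-decomposition lemma for evaluation contexts. First I would show that $ \reduces_{\mathsf{S} } $ is a partial function, i.e.\ if $ \ottnt{M}    \reduces_{\mathsf{S} }    \ottnt{N} $ and $ \ottnt{M}    \reduces_{\mathsf{S} }    \ottnt{N'} $ then $\ottnt{N} = \ottnt{N'}$. This is a routine inspection that the left-hand sides of the reduction rules in Figure~\ref{fig:semS} are pairwise non-overlapping. The e-rules are told apart by the outermost constructor and, for the two application rules, by the syntactic class of the head ($ \lambda   \ottmv{x} .\,  \ottnt{M} $ for \rnp{R-Beta} versus $\ottnt{U}  \langle\!\langle  \ottnt{s}  \rightarrow  \ottnt{t}  \rangle\!\rangle$ for \rnp{R-Wrap}); $ \delta $ being a function gives determinacy of \rnp{R-Op}. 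The only cases needing a second look are the c-rules: \rnp{R-Id}, \rnp{R-Fail}, and \rnp{R-Crc} all act on a term of the shape $\ottnt{U}  \langle  \ottnt{s}  \rangle$, but the grammar of Figure~\ref{fig:syntaxS} forces $\ottnt{s}$ into exactly one of the disjoint classes ``$ \ottkw{id} _{ \ottnt{A} } $'', ``$ \bot^{ \ottnt{G}   \ottnt{p}   \ottnt{H} } $'', and ``delayed coercion $\ottnt{d}$''; and \rnp{R-MergeC} and \rnp{R-MergeV} are distinguished by whether the coerced subterm is a coercion application $\ottnt{M}  \langle  \ottnt{s}  \rangle$ or a coerced value $\ottnt{U}  \langle\!\langle  \ottnt{d}  \rangle\!\rangle$.

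The main ingredient is a unique-decomposition lemma: for every closed term $\ottnt{M}$, \emph{exactly one} of the following holds --- $\ottnt{M}$ is a value; $\ottnt{M} = \ottkw{blame} \, \ottnt{p}$; or there is a \emph{unique} pair $( \mathcal{E} , \ottnt{M'} )$ with $\ottnt{M} = \mathcal{E}  [  \ottnt{M'}  ]$ such that $\ottnt{M'}$ is an e-redex, or $\ottnt{M'}$ is a c-redex and $\mathcal{E}$ has sort $\mathcal{F}$, or $\ottnt{M'} = \ottkw{blame} \, \ottnt{p}$ and $\mathcal{E}  \ne  \square$. I would prove this by induction on $\ottnt{M}$, examining each term form and, for compound terms, invoking the induction hypothesis on the leftmost subterm not yet a value (left-to-right for $\ottnt{op}  \ottsym{(}  \ottnt{M}  \ottsym{,}  \ottnt{N}  \ottsym{)}$ and $\ottnt{M} \, \ottnt{N}$, under the coercion frame for $\ottnt{M}  \langle  \ottnt{s}  \rangle$), as dictated by the grammars of $\mathcal{E}$ and $\mathcal{F}$. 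The delicate point is nested coercion applications. Since no evaluation context contains two consecutive coercion frames ($\mathcal{E}$ admits at most one, as its innermost frame, and $\mathcal{F}$ admits none), a term $\ottnt{M}  \langle  \ottnt{s}  \rangle  \langle  \ottnt{t}  \rangle$ with $\ottnt{M}$ not a value --- or even with $\ottnt{M} = \ottkw{blame} \, \ottnt{p}$ --- admits no decomposition reducing anything strictly inside $\ottnt{M}  \langle  \ottnt{s}  \rangle$, so \rnp{R-MergeC} is forced at the root with context $\square$; and when $\ottnt{M}$ is a value (or itself a coercion application), the competing decomposition $( \square \, \langle  \ottnt{t}  \rangle ,\ \ottnt{M}  \langle  \ottnt{s}  \rangle )$ is ruled out because $\square \, \langle  \ottnt{t}  \rangle$ is not of sort $\mathcal{F}$, which the side condition on \rnp{E-CtxC} requires of the context of a c-redex. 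Analogous care handles $\ottnt{U}  \langle\!\langle  \ottnt{d}  \rangle\!\rangle  \langle  \ottnt{t}  \rangle$ (only \rnp{R-MergeV} applies) and the contrast between $( \ottkw{blame} \, \ottnt{p} )  \langle  \ottnt{s}  \rangle$ (\rnp{E-Abort}) and $( \ottkw{blame} \, \ottnt{p} )  \langle  \ottnt{s}  \rangle  \langle  \ottnt{t}  \rangle$ (\rnp{R-MergeC}).

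Given these two lemmas, the statement follows directly. Suppose $ \ottnt{M}    \mathbin{  \evalto_{\mathsf{S} }  }    \ottnt{N} $ and $ \ottnt{M}    \mathbin{  \evalto_{\mathsf{S} }  }    \ottnt{N'} $. If $\ottnt{M}$ is a value or $\ottkw{blame} \, \ottnt{p}$, no evaluation step exists, so the hypotheses are vacuous. Otherwise the unique-decomposition lemma yields the unique $( \mathcal{E} , \ottnt{M'} )$, and this already determines which evaluation rule can fire: if $\ottnt{M'}$ is an e-redex it must be \rnp{E-CtxE} (\rnp{E-Abort} is excluded because $\ottnt{M'}  \ne  \ottkw{blame} \, \ottnt{p}$, and \rnp{E-CtxC} because $\ottnt{M'}$ is not a c-redex), whence $\ottnt{N} = \mathcal{E}  [  \ottnt{M''}  ] = \ottnt{N'}$ with $\ottnt{M''}$ the unique reduct of $\ottnt{M'}$ from the first lemma; the c-redex case is symmetric, using that $\mathcal{E}$ is the unique context of sort $\mathcal{F}$; and if $\ottnt{M'} = \ottkw{blame} \, \ottnt{p}$ then only \rnp{E-Abort} applies and $\ottnt{N} = \ottkw{blame} \, \ottnt{p} = \ottnt{N'}$.

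I expect the unique-decomposition lemma to be the main obstacle: one has to unfold the inside-out context grammar carefully and check that none of the several coercion-application and coerced-value term forms admits a spurious second decomposition, the only real subtleties being the interplay of the ``no consecutive coercion frames'' restriction, the \rnp{E-CtxC} side condition, and the two merge rules. Once the lemma statement and its invariants are pinned down correctly, each individual case is mechanical.
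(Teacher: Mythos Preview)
Your approach matches the paper's: Determinacy is derived from a unique-decomposition lemma (the paper's Lemma~\ref{lem:decom-S}), with determinacy of the one-step reduction relation left implicit in the paper but spelled out by you. One caveat: your decomposition lemma, stated as an exhaustive trichotomy for \emph{every} closed term, is false as written---a stuck ill-typed term such as $3\;4$ satisfies none of the three cases, so your induction on term structure would break at the application case when the head is a non-function value. The paper sidesteps this by assuming $\emptyset \vdash_{\mathsf{S}} \ottnt{M} : \ottnt{A}$ and inducting on the typing derivation instead. For your purposes, since the hypothesis $\ottnt{M} \evalto_{\mathsf{S}} \ottnt{N}$ already furnishes a decomposition by inversion on the evaluation rules, it suffices to prove only \emph{uniqueness} of decompositions (not existence), and that does hold for arbitrary closed terms via your structural induction.
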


\ifluxuryspace
\begin{theorem}[name=Progress,restate=thmProgressS]\label{thm:progress-S}
  If $  \emptyset     \vdash_{\mathsf{S} }    \ottnt{M}  :  \ottnt{A} $, then one of the following holds.
  \begin{enumerate}
  \item $ \ottnt{M}    \mathbin{  \evalto_{\mathsf{S} }  }    \ottnt{M'} $ for some $\ottnt{M'}$.
  \item $\ottnt{M}  \ottsym{=}  \ottnt{V}$ for some $\ottnt{V}$.
  \item $\ottnt{M}  \ottsym{=}  \ottkw{blame} \, \ottnt{p}$ for some $\ottnt{p}$.
  \end{enumerate}
\end{theorem}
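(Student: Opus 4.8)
The plan is to prove Theorem~\ref{thm:progress-S} by structural induction on $\ottnt{M}$ (equivalently, on the derivation of $ \emptyset   \vdash_{\mathsf{S} }  \ottnt{M}  :  \ottnt{A}$), after first establishing a standard \emph{canonical forms} lemma for closed values: a closed value of base type $\iota$ is a constant $\ottnt{a}$; a closed value of function type $\ottnt{A}  \rightarrow  \ottnt{B}$ is either an abstraction $ \lambda   \ottmv{x} .\,  \ottnt{N} $ or a wrapped function $\ottnt{U}  \langle\!\langle  \ottnt{s}  \rightarrow  \ottnt{t}  \rangle\!\rangle$; and a closed value of $ \mathord{\star} $ is an injected value $\ottnt{U}  \langle\!\langle  \ottnt{g}  \ottsym{;}   \ottnt{G} \texttt{!}   \rangle\!\rangle$. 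A useful consequence (cf.\ Proposition~\ref{prop:src-tgt}) is that no closed \emph{uncoerced} value $\ottnt{U}$ is typed at $ \mathord{\star} $, so by inspection of \rnp{CT-Id} and \rnp{CT-Proj}, a well-formed coercion applied to such a value in a well-typed term is never $ \ottkw{id} _{ \mathord{\star} } $ nor of the form $ \ottnt{G} \texttt{?}^{ \ottnt{p} }   \ottsym{;}  \ottnt{i}$.

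The cases for \rnp{T-Var} (vacuous under $ \emptyset $), \rnp{T-Const}, \rnp{T-Abs} (values), \rnp{T-Blame} (case~3), \rnp{T-Op}, and \rnp{T-App} are routine: for $\ottnt{op}  \ottsym{(}  \ottnt{M}  \ottsym{,}  \ottnt{N}  \ottsym{)}$ and $\ottnt{M} \, \ottnt{N}$ I apply the induction hypothesis to $\ottnt{M}$ and then $\ottnt{N}$, using the frames $\ottnt{op}  \ottsym{(}  \square  \ottsym{,}  \ottnt{N}  \ottsym{)}$, $\ottnt{op}  \ottsym{(}  \ottnt{V}  \ottsym{,} \, \square \, \ottsym{)}$, $\square \, \ottnt{N}$, $\ottnt{V} \, \square$ with \rnp{E-CtxE} to propagate an evaluation step and \rnp{E-Abort} to propagate $\ottkw{blame}$; when both subterms are values, canonical forms fixes their shapes and one of \rnp{R-Op} (using the assumed totality of $\delta$ on well-typed inputs), \rnp{R-Beta}, \rnp{R-Wrap} fires.

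The substantive case is \rnp{T-Crc}, i.e.\ $\ottnt{M}  \langle  \ottnt{s}  \rangle$ with $ \emptyset   \vdash  \ottnt{M}  :  \ottnt{A}$ and $\ottnt{s}  \ottsym{:}  \ottnt{A}  \rightsquigarrow  \ottnt{B}$; apply the induction hypothesis to $\ottnt{M}$:
\begin{itemize}
\item If $\ottnt{M} = \ottkw{blame} \, \ottnt{p}$, then \rnp{E-Abort} applies with $\mathcal{E} = \square \, \langle  \ottnt{s}  \rangle \ne \square$.
\item If $\ottnt{M}$ is a value, it is $\ottnt{U}$ or $\ottnt{U}  \langle\!\langle  \ottnt{d}  \rangle\!\rangle$ (a variable is excluded since $\ottnt{M}$ is closed). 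For $\ottnt{U}  \langle\!\langle  \ottnt{d}  \rangle\!\rangle$, \rnp{R-MergeV} fires under $\square$ (an $\mathcal{F}$), with $\ottnt{d}  \fatsemi  \ottnt{s}$ defined by Lemma~\ref{lem:sc-cmp-welldef}. For $\ottnt{U}$, I split on $\ottnt{s}$: by the remark above $\ottnt{s}$ is neither $ \ottkw{id} _{ \mathord{\star} } $ nor $ \ottnt{G} \texttt{?}^{ \ottnt{p} }   \ottsym{;}  \ottnt{i}$, leaving $ \ottkw{id} _{ \ottnt{A} } $, $ \bot^{ \ottnt{G}   \ottnt{p}   \ottnt{H} } $, and delayed coercions $\ottnt{d}$, handled by \rnp{R-Id}, \rnp{R-Fail}, \rnp{R-Crc} respectively.
\item If $\ottnt{M}  \evalto_{\mathsf{S}}  \ottnt{M'}$: if $\ottnt{M}$ is itself a coercion application $\ottnt{N}  \langle  \ottnt{t}  \rangle$, then \rnp{R-MergeC} fires under $\square$, reducing $\ottnt{N}  \langle  \ottnt{t}  \rangle  \langle  \ottnt{s}  \rangle$ to $\ottnt{N}  \langle  \ottnt{t}  \fatsemi  \ottnt{s}  \rangle$ (again with $\ottnt{t}  \fatsemi  \ottnt{s}$ defined by Lemma~\ref{lem:sc-cmp-welldef}). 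Otherwise, the evaluation-context decomposition $\ottnt{M} = \mathcal{E}  [  \ottnt{R}  ]$ witnessing $\ottnt{M}  \evalto_{\mathsf{S}}  \ottnt{M'}$ has outermost frame (if any) other than a coercion application, so appending the frame $\square \, \langle  \ottnt{s}  \rangle$ at the outside again yields an evaluation context of the appropriate class ($\mathcal{E}$, or $\mathcal{F}$ when $\ottnt{R}$ is a c-redex), and $\ottnt{M}  \langle  \ottnt{s}  \rangle$ steps by \rnp{E-CtxE}, \rnp{E-CtxC}, or \rnp{E-Abort}.
\end{itemize}

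I expect the last sub-case to be the only real obstacle: since the grammar of evaluation contexts forbids two consecutive coercion applications and c-evaluation happens only in $\mathcal{F}$-contexts, one must verify that prepending (outermost) a coercion frame to the context preserves well-formedness, and recognise that the single situation where it would not---namely $\ottnt{M}$ already being a coercion application---is exactly where \rnp{R-MergeC} applies instead. I would factor this out as a short lemma on composing an evaluation context with a trailing coercion frame, or, more cleanly, as a preliminary unique-decomposition lemma stating that every closed well-typed $\ottnt{M}$ is a value, is $\ottkw{blame} \, \ottnt{p}$, or is $\mathcal{E}  [  \ottnt{R}  ]$ for an evaluation context $\mathcal{E}$ and a redex $\ottnt{R}$; granting that, the induction above is entirely routine.
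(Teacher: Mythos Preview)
Your proposal is correct, and in fact you explicitly anticipate the paper's approach in your final paragraph: the paper proves Progress as an immediate corollary of a Unique Decomposition Lemma (Lemma~\ref{lem:decom-S}), which states that every closed well-typed $\ottnt{M}$ is a value, is $\ottkw{blame}\,\ottnt{p}$, or decomposes (uniquely) as $\mathcal{E}[\ottnt{R}]$ for a redex $\ottnt{R}$. Your direct induction is essentially the inlined version of that argument; the context-wrapping subtlety you isolate in the \rnp{T-Crc} case---that prepending an outer $\langle\ottnt{s}\rangle$ frame is only problematic when $\ottnt{M}$ is already a coercion application, which is exactly when \rnp{R-MergeC} fires---is precisely what the decomposition lemma packages up once and for all.
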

\else
\begin{theorem}[name=Progress,restate=thmProgressS]\label{thm:progress-S}
  If $  \emptyset     \vdash_{\mathsf{S} }    \ottnt{M}  :  \ottnt{A} $, then one of the following holds:
(1) $ \ottnt{M}    \mathbin{  \evalto_{\mathsf{S} }  }    \ottnt{M'} $ for some $\ottnt{M'}$;
(2) $\ottnt{M}  \ottsym{=}  \ottnt{V}$ for some $\ottnt{V}$; or
(3) $\ottnt{M}  \ottsym{=}  \ottkw{blame} \, \ottnt{p}$ for some $\ottnt{p}$.
\end{theorem}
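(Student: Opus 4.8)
The plan is to prove Theorem~\ref{thm:progress-S} by induction on the derivation of $  \emptyset     \vdash_{\mathsf{S} }    \ottnt{M}  :  \ottnt{A} $, after first establishing a \emph{canonical forms} lemma for closed values: if $  \emptyset     \vdash_{\mathsf{S} }    \ottnt{V}  :  \ottnt{A} $, then $\ottnt{V}  \ottsym{=}  \ottnt{a}$ for some constant $\ottnt{a}$ when $\ottnt{A}  \ottsym{=}  \iota$; $\ottnt{V}$ is either $ \lambda   \ottmv{x} .\,  \ottnt{N} $ or a wrapped function $\ottnt{U}  \langle\!\langle  \ottnt{s}  \rightarrow  \ottnt{t}  \rangle\!\rangle$ when $\ottnt{A}$ is a function type; and $\ottnt{V}$ is an injected value $\ottnt{U}  \langle\!\langle  \ottnt{g}  \ottsym{;}   \ottnt{G} \texttt{!}   \rangle\!\rangle$ when $\ottnt{A}  \ottsym{=}  \mathord{\star}$. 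This follows by inspecting \rnp{T-Const}, \rnp{T-Abs}, and \rnp{T-CrcV} (the variable case being excluded because $\ottnt{M}$ is closed), using Proposition~\ref{prop:src-tgt} to see that the target type of a ground coercion is never $ \mathord{\star} $ and that an uncoerced value never has type $ \mathord{\star} $.

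For the induction, the cases \rnp{T-Const}, \rnp{T-Abs}, \rnp{T-CrcV}, and \rnp{T-Blame} are immediate ($\ottnt{M}$ is already a value, or is $\ottkw{blame} \, \ottnt{p}$), and \rnp{T-Var} is vacuous under the empty environment. For \rnp{T-Op} and \rnp{T-App} I would invoke the induction hypothesis on the immediate subterms from left to right: if one of them reduces, then so does $\ottnt{M}$ by \rnp{E-CtxE}/\rnp{E-CtxC} after pushing the reduction into the appropriate evaluation context ($\ottnt{op}  \ottsym{(} \, \square \, \ottsym{,}  \ottnt{N}  \ottsym{)}$, $\ottnt{op}  \ottsym{(}  \ottnt{V}  \ottsym{,} \, \square \, \ottsym{)}$, $\square \, \ottnt{N}$, or $\ottnt{V} \, \square$); if one of them is $\ottkw{blame} \, \ottnt{p}$, then $\ottnt{M}$ reduces by \rnp{E-Abort}; and if all of them are values, then \rnp{R-Op}, \rnp{R-Beta}, or \rnp{R-Wrap} applies, with the choice in the \rnp{T-App} case settled by the canonical forms lemma applied to the operator value.

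The delicate case is \rnp{T-Crc}, where $\ottnt{M}  \ottsym{=}  \ottnt{N}  \langle  \ottnt{s}  \rangle$ with $  \emptyset     \vdash_{\mathsf{S} }    \ottnt{N}  :  \ottnt{A} $ and $\ottnt{s}  \ottsym{:}  \ottnt{A}  \rightsquigarrow  \ottnt{B}$; here the interplay between the nonstandard evaluation contexts and the eager-merging rule \rnp{R-MergeC} matters. Applying the induction hypothesis to $\ottnt{N}$: if $\ottnt{N}  \ottsym{=}  \ottkw{blame} \, \ottnt{p}$, then $\ottnt{M}  \ottsym{=}  \mathcal{E}  [  \ottkw{blame} \, \ottnt{p}  ]$ with $\mathcal{E}  \ottsym{=}  \square \, \langle  \ottnt{s}  \rangle  \ne  \square$, so \rnp{E-Abort} applies. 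If $\ottnt{N}$ is a value $\ottnt{V}$: when $\ottnt{V}$ is an uncoerced value $\ottnt{U}$, Proposition~\ref{prop:src-tgt} together with the canonical forms lemma (which precludes $\ottnt{U}$ from having type $ \mathord{\star} $) forces $\ottnt{s}$ to be an intermediate coercion, and then one of \rnp{R-Id}, \rnp{R-Crc}, \rnp{R-Fail} applies according as $\ottnt{s}$ is $ \ottkw{id} _{ \ottnt{A} } $, a delayed coercion, or $ \bot^{ \ottnt{G}   \ottnt{p}   \ottnt{H} } $; when $\ottnt{V}$ is a coerced value $\ottnt{U}  \langle\!\langle  \ottnt{d}  \rangle\!\rangle$, \rnp{R-MergeV} applies, with $\ottnt{d}  \fatsemi  \ottnt{s}$ well defined by Lemma~\ref{lem:sc-cmp-welldef}. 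Finally, if $\ottnt{N}$ itself reduces, I would distinguish whether $\ottnt{N}$ is a coercion application at the top, i.e.\ $\ottnt{N}  \ottsym{=}  \ottnt{N_{{\mathrm{0}}}}  \langle  \ottnt{t}  \rangle$: if so, then $\ottnt{M}  \ottsym{=}  \ottnt{N_{{\mathrm{0}}}}  \langle  \ottnt{t}  \rangle  \langle  \ottnt{s}  \rangle$ matches the left-hand side of \rnp{R-MergeC} (regardless of why $\ottnt{N}$ reduces) and steps to $\ottnt{N_{{\mathrm{0}}}}  \langle  \ottnt{t}  \fatsemi  \ottnt{s}  \rangle$; otherwise $\ottnt{N}$ decomposes as $\mathcal{E}  [  \ottnt{R}  ]$ (or $\mathcal{E}  [  \ottkw{blame} \, \ottnt{p}  ]$) where the outermost frame of $\mathcal{E}$ is not a coercion application, so prepending $\square \, \langle  \ottnt{s}  \rangle$ as the new outermost frame produces a legal evaluation context and $\ottnt{M}$ steps by the same rule as $\ottnt{N}$.

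The step I expect to be the main obstacle is exactly this last one: checking, against the mutually recursive grammar of $\mathcal{E}$ and $\mathcal{F}$, that prepending $\square \, \langle  \ottnt{s}  \rangle$ never creates two adjacent coercion frames, and, more generally, that every well-typed closed term that is neither a value nor $\ottkw{blame} \, \ottnt{p}$ decomposes as $\mathcal{E}  [  \ottnt{R}  ]$ for a redex $\ottnt{R}$, with $\mathcal{E}$ restricted to the form $\mathcal{F}$ when $\ottnt{R}$ reduces by a \textsf{c}-rule. It is likely cleanest to factor this out as an auxiliary unique-decomposition lemma proved by induction on the structure of terms, which also supports Lemma~\ref{lem:determinism-S}; with that lemma in hand, Progress reduces to a short case analysis on which reduction rule the exposed redex triggers.
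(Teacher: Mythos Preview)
Your proposal is correct and converges on exactly the approach the paper takes: the paper states a Unique Decomposition lemma (Lemma~\ref{lem:decom-S}) proved by induction on the typing derivation, from which Progress follows in one line, and it also records the Canonical Forms lemma you describe (Lemma~\ref{lem:canonical-S}). Your final paragraph already identifies this factoring as the cleanest route; the detailed \rnp{T-Crc} analysis you give is essentially the interesting case of that decomposition lemma, and your identification of the $\mathcal{E}/\mathcal{F}$ context-composition issue as the main obstacle is on point.
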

\fi 

\begin{theorem}[name=Preservation,restate=thmPreservationS]\label{thm:preservation-S}
  If $  \emptyset     \vdash_{\mathsf{S} }    \ottnt{M}  :  \ottnt{A} $ and $ \ottnt{M}    \mathbin{  \evalto_{\mathsf{S} }  }    \ottnt{N} $, then
  $  \emptyset     \vdash_{\mathsf{S} }    \ottnt{N}  :  \ottnt{A} $.
\end{theorem}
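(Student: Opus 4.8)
The plan is to prove \textbf{Preservation} by the standard subject-reduction argument, splitting on the three rules defining $ \evalto_{\mathsf{S} } $. The case \rnp{E-Abort} is immediate, since $\ottkw{blame} \, \ottnt{p}$ is typable at any type by \rnp{T-Blame}. For \rnp{E-CtxE} and \rnp{E-CtxC} the argument factors into two independent pieces: (i) a \emph{context lemma} saying that typing can be decomposed and recomposed along evaluation contexts, and (ii) \emph{preservation for the reduction relations} $ \mathbin{\accentset{\mathsf{e} }{\reduces}_{\mathsf{S} } } $ and $ \mathbin{\accentset{\mathsf{c} }{\reduces}_{\mathsf{S} } } $. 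Combining them: from $  \emptyset   \vdash_{\mathsf{S} }  \mathcal{E}  [  \ottnt{M}  ] : \ottnt{A}$ (resp.\ $\mathcal{F}  [  \ottnt{M}  ]$) and $ \ottnt{M}   \reduces_{\mathsf{S} }   \ottnt{N} $, the context lemma yields some $\ottnt{B}$ with $  \emptyset   \vdash_{\mathsf{S} }  \ottnt{M} : \ottnt{B}$, reduction preservation gives $  \emptyset   \vdash_{\mathsf{S} }  \ottnt{N} : \ottnt{B}$, and recomposition gives $  \emptyset   \vdash_{\mathsf{S} }  \mathcal{E}  [  \ottnt{N}  ] : \ottnt{A}$.

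For the context lemma I would prove, by simultaneous induction on the structure of $\mathcal{E}$ and $\mathcal{F}$ (which are mutually recursive and written inside-out), that (a) $  \emptyset   \vdash_{\mathsf{S} }  \mathcal{E}  [  \ottnt{M}  ] : \ottnt{A}$ implies $  \emptyset   \vdash_{\mathsf{S} }  \ottnt{M} : \ottnt{B}$ for some $\ottnt{B}$, and (b) if additionally $  \emptyset   \vdash_{\mathsf{S} }  \ottnt{M'} : \ottnt{B}$ then $  \emptyset   \vdash_{\mathsf{S} }  \mathcal{E}  [  \ottnt{M'}  ] : \ottnt{A}$, and likewise for $\mathcal{F}$. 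Since evaluation contexts bind no variables, the ambient type environment stays empty throughout. The only mildly delicate frame is $\mathcal{F}  [  \square \, \langle  \ottnt{s}  \rangle  ]$ appearing in the clause for $\mathcal{E}$, handled via inversion of \rnp{T-Crc}; the remaining frames follow by inversion of \rnp{T-Op}, \rnp{T-App}, and \rnp{T-Crc}. This step is routine but bookkeeping-heavy because of the two-layer grammar.

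Reduction preservation is by case analysis on the reduction rule, using: a standard \emph{value substitution lemma} (if $\Gamma  \ottsym{,}  \ottmv{x}  \ottsym{:}  \ottnt{A}  \vdash  \ottnt{M}  \ottsym{:}  \ottnt{B}$ and $\Gamma  \vdash  \ottnt{V}  \ottsym{:}  \ottnt{A}$ then $\Gamma  \vdash  \ottnt{M}  [  \ottmv{x}  \ottsym{:=}  \ottnt{V}  ]  \ottsym{:}  \ottnt{B}$, relying on values being closed under value substitution); inversion lemmas for the typing and well-formed-coercion judgments, in particular inversion of \rnp{CT-Fun}, \rnp{CT-Id}, \rnp{T-Crc}, and \rnp{T-CrcV}; Lemma~\ref{lem:sc-cmp-welldef}; and the stated typing assumption on $\delta$. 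Concretely: \rnp{R-Op} uses the $\delta$ assumption with \rnp{T-Op}; \rnp{R-Beta} uses \rnp{T-App}/\rnp{T-Abs} inversion and the substitution lemma; \rnp{R-Wrap} inverts \rnp{T-CrcV} and \rnp{CT-Fun} to re-split the function coercion and rebuilds with \rnp{T-Crc}, \rnp{T-App}, \rnp{T-Crc}; \rnp{R-Id} inverts \rnp{CT-Id}; \rnp{R-Fail} is immediate by \rnp{T-Blame}; \rnp{R-Crc} turns a \rnp{T-Crc} derivation for a delayed coercion into a \rnp{T-CrcV} one; and \rnp{R-MergeC}, \rnp{R-MergeV} invert two nested coercion (value) applications, compose the coercions with Lemma~\ref{lem:sc-cmp-welldef} (which preserves source/target types), and reassemble with \rnp{T-Crc} --- note that $\ottnt{d}  \fatsemi  \ottnt{t}$ need not be delayed, so \rnp{T-Crc} rather than \rnp{T-CrcV} is used.

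The main obstacle I anticipate is not any single reduction case but the careful statement and proof of the context lemma for the inside-out, mutually recursive $\mathcal{E}$ and $\mathcal{F}$: one must phrase the induction so decomposition and recomposition go through simultaneously for both syntactic categories, and check that the $\mathcal{F}  [  \square \, \langle  \ottnt{s}  \rangle  ]$ layer interacts correctly with \rnp{T-Crc}. A secondary source of friction is assembling the right inversion lemmas for coerced values $\ottnt{U}  \langle\!\langle  \ottnt{d}  \rangle\!\rangle$ and for function coercions, where Proposition~\ref{prop:src-tgt} helps keep track of which types may appear as sources and targets of ground and intermediate coercions.
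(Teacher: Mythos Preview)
Your proposal is correct and matches the paper's approach. The paper also proceeds by case analysis on the evaluation rule, dispatching \rnp{E-Abort} via \rnp{T-Blame} and handling the context rules by combining a separately stated Preservation-for-Reduction lemma with an induction on the evaluation context to extract and replace the typing subderivation of the redex; the paper is simply terser about the context decomposition/recomposition step (it says ``more precisely, by induction on $\mathcal{E}$'' rather than isolating a standalone context lemma), but the content is the same.
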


\ifluxuryspace
\begin{corollary}[name=Type Safety,restate=corSafetyS]\label{cor:safety-S}
  If $  \emptyset     \vdash_{\mathsf{S} }    \ottnt{M}  :  \ottnt{A} $, then one of the following holds.
  \begin{enumerate}
  \item $ \ottnt{M}    \mathbin{  \evalto_{\mathsf{S} }  ^*}    \ottnt{V} $ and $  \emptyset     \vdash_{\mathsf{S} }    \ottnt{V}  :  \ottnt{A} $ for some $\ottnt{V}$.
  \item $ \ottnt{M}    \mathbin{  \evalto_{\mathsf{S} }  ^*}    \ottkw{blame} \, \ottnt{p} $ for some $\ottnt{p}$.
  \item $\ottnt{M} \,  \mathord{\Uparrow_{\mathsf{S} } } $.
  \end{enumerate}
\end{corollary}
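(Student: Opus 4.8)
The plan is to obtain the corollary from Progress (Theorem~\ref{thm:progress-S}), Preservation (Theorem~\ref{thm:preservation-S}), and Determinacy (Lemma~\ref{lem:determinism-S}) by the standard argument. First I would record the routine iterated form of preservation, proved by induction on the number of steps: if $  \emptyset     \vdash_{\mathsf{S} }    \ottnt{M}  :  \ottnt{A} $ and $ \ottnt{M}    \mathbin{  \evalto_{\mathsf{S} }  ^*}    \ottnt{N} $, then $  \emptyset     \vdash_{\mathsf{S} }    \ottnt{N}  :  \ottnt{A} $.

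Then, starting from the given $\ottnt{M}$, consider its maximal evaluation sequence; by Determinacy it is unique. If this sequence is infinite, then $\ottnt{M} \,  \mathord{\Uparrow_{\mathsf{S} } } $ holds by the definition of divergence, which is alternative~(3). Otherwise the sequence is finite and ends at some $\ottnt{N}$ with $ \ottnt{M}    \mathbin{  \evalto_{\mathsf{S} }  ^*}    \ottnt{N} $ such that there is no $\ottnt{N'}$ with $ \ottnt{N}    \mathbin{  \evalto_{\mathsf{S} }  }    \ottnt{N'} $. By iterated preservation $  \emptyset     \vdash_{\mathsf{S} }    \ottnt{N}  :  \ottnt{A} $, so Progress applies to $\ottnt{N}$; since $\ottnt{N}$ cannot evaluate, the first alternative of Progress is excluded, so either $\ottnt{N}  \ottsym{=}  \ottnt{V}$ for some value $\ottnt{V}$ --- which, together with the typing $  \emptyset     \vdash_{\mathsf{S} }    \ottnt{V}  :  \ottnt{A} $ just obtained, yields alternative~(1) --- or $\ottnt{N}  \ottsym{=}  \ottkw{blame} \, \ottnt{p}$, which yields alternative~(2).

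I do not expect a genuine obstacle: the proof is pure bookkeeping once Progress and Preservation are available. The one point that warrants a little care is lining up the case analysis with the definitions, namely that $\ottnt{M} \,  \mathord{\Uparrow_{\mathsf{S} } } $ means precisely the existence of an infinite $ \evalto_{\mathsf{S} } $-sequence from $\ottnt{M}$, so that the ``infinite vs.\ finite'' dichotomy matches exactly the three stated alternatives, and that a finite evaluation necessarily terminates at an $ \evalto_{\mathsf{S} } $-normal form, which by Progress must be a value or $\ottkw{blame} \, \ottnt{p}$. Determinacy is not logically required, since the three alternatives are stated existentially, but it makes the case split unambiguous and is the most economical route.
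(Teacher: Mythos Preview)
Your proposal is correct and matches the paper's proof, which simply cites Progress (Theorem~\ref{thm:progress-S}) and Preservation (Theorem~\ref{thm:preservation-S}). Your explicit unfolding of the standard argument, including the optional use of Determinacy, is exactly what the paper leaves implicit.
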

\else
\begin{corollary}[name=Type Safety,restate=corSafetyS]\label{cor:safety-S}
  If $  \emptyset     \vdash_{\mathsf{S} }    \ottnt{M}  :  \ottnt{A} $, then one of the following holds:
(1) $ \ottnt{M}    \mathbin{  \evalto_{\mathsf{S} }  ^*}    \ottnt{V} $ and $  \emptyset     \vdash_{\mathsf{S} }    \ottnt{V}  :  \ottnt{A} $ for some $\ottnt{V}$;
(2) $ \ottnt{M}    \mathbin{  \evalto_{\mathsf{S} }  ^*}    \ottkw{blame} \, \ottnt{p} $ for some $\ottnt{p}$; or
(3) $\ottnt{M} \,  \mathord{\Uparrow_{\mathsf{S} } } $.
\end{corollary}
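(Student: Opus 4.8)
The plan is to obtain the corollary as the usual packaging of Progress (Theorem~\ref{thm:progress-S}) and Preservation (Theorem~\ref{thm:preservation-S}), with Determinacy (Lemma~\ref{lem:determinism-S}) used only to keep the case analysis clean~\cite{WrightFelleisenIC94}. First I would note that, by Determinacy, $\evalto_{\mathsf{S}}$ is a partial function on closed terms, so from $M$ there is a \emph{unique} maximal evaluation sequence $M = M_0 \evalto_{\mathsf{S}} M_1 \evalto_{\mathsf{S}} M_2 \evalto_{\mathsf{S}} \cdots$; it is either infinite or terminates at some $M_k$ with no $\evalto_{\mathsf{S}}$-successor. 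If it is infinite, then $M \mathord{\Uparrow_{\mathsf{S}}}$ holds by definition and we are in case~(3). So assume the sequence is finite, ending at $M_k$.

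Next I would prove $\emptyset \vdash_{\mathsf{S}} M_k : A$ by a trivial induction on $k$: the base case is the hypothesis $\emptyset \vdash_{\mathsf{S}} M : A$, and the step is one application of Preservation to $M_j \evalto_{\mathsf{S}} M_{j+1}$ (i.e., the routine lifting of Preservation to $\evalto_{\mathsf{S}}^{*}$). Applying Progress to $M_k$, one of its three alternatives must hold, but the first---$M_k \evalto_{\mathsf{S}} M'$ for some $M'$---contradicts the choice of $M_k$ as having no successor. Hence either $M_k = V$ for some value $V$, and then $M \evalto_{\mathsf{S}}^{*} V$ with $\emptyset \vdash_{\mathsf{S}} V : A$ (from the preservation step just established), giving case~(1); or $M_k = \ottkw{blame} \, \ottnt{p}$ for some $p$, and then $M \evalto_{\mathsf{S}}^{*} \ottkw{blame} \, \ottnt{p}$, giving case~(2). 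In both cases $M \evalto_{\mathsf{S}}^{*} M_k$ follows because $\evalto_{\mathsf{S}}^{*}$ is the reflexive--transitive closure of $\evalto_{\mathsf{S}}$.

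There is no real obstacle here: the argument is entirely standard. The single point that deserves a word is the dichotomy ``the sequence is infinite, or it is finite and maximal,'' which is immediate precisely because Determinacy makes $\evalto_{\mathsf{S}}$ functional, so the reduction graph out of $M$ is a single path rather than a branching tree; the auxiliary induction extending Preservation along $\evalto_{\mathsf{S}}^{*}$ is wholly routine.
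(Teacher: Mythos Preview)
Your proposal is correct and follows the same approach as the paper: the paper's proof consists solely of ``By Theorem~\ref{thm:progress-S} and Theorem~\ref{thm:preservation-S},'' and you have simply spelled out the standard Wright--Felleisen argument that this citation abbreviates. Your appeal to Determinacy (Lemma~\ref{lem:determinism-S}) is a harmless addition that the paper does not invoke here; since the conclusion is a plain disjunction, the argument goes through without it even in principle, but using it to speak of \emph{the} maximal sequence is perfectly fine.
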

\fi 

\section{Space-Efficient First-Class Coercion Calculus}
\label{sec:target}

In this section, we introduce \lamSx, a new space-efficient coercion
calculus with first-class coercions; \lamSx serves as the target
calculus of the translation into coercion-passing style.  The design
of \lamSx is tailored to coercion-passing style and, as a result,
first-class coercions are not as general as one might expect: for
example, 
coercions for coercions are restricted to identity coercions
(e.g., $ \ottkw{id} _{ \iota  \rightsquigarrow  \iota } $).

Since coercions are first-class in \lamSx, the use of
(space-efficient) coercions $\ottnt{s}$ is not limited to coercion
applications $\ottnt{M}  \langle  \ottnt{s}  \rangle$; they can be passed to a function as an
argument, for example.  We equip \lamS with the infix (object-level)
operator $\ottnt{M}  \mathbin{;\!;}  \ottnt{N}$ to compute the composition of two coercions: if
$\ottnt{M}$ and $\ottnt{N}$ evaluate to coercions $\ottnt{s}$ and $\ottnt{t}$,
respectively, then $\ottnt{M}  \mathbin{;\!;}  \ottnt{N}$ reduces to their composition
$\ottnt{s}  \fatsemi  \ottnt{t}$, which is another space-efficient coercion.  The type of
(first-class) coercions from $\ottnt{A}$ to $\ottnt{B}$ is written
$\ottnt{A}  \rightsquigarrow  \ottnt{B}$.\footnote{%
  In \lamS, $ \rightsquigarrow $ is the symbol used in the three-place judgment
  form $\ottnt{c}  \ottsym{:}  \ottnt{A}  \rightsquigarrow  \ottnt{B}$, whereas $ \rightsquigarrow $ is also a type constructor in
  \lamSx.  }

In \lamSx, every function abstraction takes two arguments, one of
which is a parameter for a continuation coercion to be applied to the
value returned from this abstraction.  For example, $ \lambda   \ottmv{x} .\,  \ottsym{1} $ in
\lamS corresponds to $ \lambda  ( \ottmv{x} , \kappa ).\,  \ottsym{1}  \langle  \kappa  \rangle $ in \lamSx---here, $\kappa$
is a coercion parameter.  Correspondingly, a function application
takes the form $ \ottnt{M} \, ( \ottnt{N} , \ottnt{L} ) $, which calls function $\ottnt{M}$ with an
argument pair $(\ottnt{N}, \ottnt{L})$, in which $\ottnt{L}$ is a coercion
argument, which is applied to the value returned from $\ottnt{M}$.  For
example, $\ottsym{(}  \ottnt{f} \, \ottsym{3}  \ottsym{)}  \langle  \ottnt{s}  \rangle$ in \lamS corresponds to $ \ottnt{f} \, ( \ottsym{3} , \ottnt{s} ) $ in
\lamSx; $\ottsym{(}  \ottnt{f} \, \ottsym{3}  \ottsym{)}$ (without a coercion application) corresponds to
$ \ottnt{f} \, ( \ottsym{3} , \ottkw{id} ) $.

The type of a function abstraction in \lamSx is written $\ottnt{A}  \Rightarrow  \ottnt{B}$,
which means that the type of the first argument is the type $\ottnt{A}$ and
the source type of the second coercion argument is $\ottnt{B}$.  An
abstraction is polymorphic over the target type of the coercion
argument; so, if a function of type $\ottnt{A}  \Rightarrow  \ottnt{B}$ is applied to a pair of
$\ottnt{A}$ and $\ottnt{B}  \rightsquigarrow  \ottnt{C}$, then the type of the application will be
$\ottnt{C}$.  Polymorphism is useful---and in fact required---for
coercion-passing translation to work because coercions with different
target types may be passed to calls to the same function in \lamS.
Intuitively, $\ottnt{A}  \Rightarrow  \ottnt{B}$ means
$\forall \ottmv{X}.(\ottnt{A} \times (\ottnt{B}  \rightsquigarrow  \ottmv{X})) \rightarrow \ottmv{X}$ but we do not
introduce $\forall$-types explicitly because our use of $\forall$ is
limited to the target-type polymorphism.  However, we do have to introduce type
variables for typing function abstractions.

Following the change to function types, function coercions in \lamSx
take the form $\ottnt{s}  \Rightarrow  \ottnt{t}$.  Roughly speaking, its meaning is the same:
it coerces an input to a function by $\ottnt{s}$ and coerces an output by
$\ottnt{t}$.  However, due to the coercion passing semantics, there is
slight change in how $\ottnt{t}$ is used at a function call.  Consider
$\ottnt{f}  \langle\!\langle  \ottnt{s}  \Rightarrow  \ottnt{t}  \rangle\!\rangle$, i.e., coercion-passing function $\ottnt{f}$ wrapped by coercion
$\ottnt{s}  \Rightarrow  \ottnt{t}$.  If the wrapped function is applied to $(\ottnt{V},\ottnt{t'})$,
$\ottnt{V}$ is coerced by $\ottnt{s}$ before passing to $\ottnt{f}$ as in \lamS;
instead of coercing the return value by $\ottnt{t}$, however, $\ottnt{t}$ is
prepended to $\ottnt{t'}$ and passed to $\ottnt{f}$ (together with the coerced
$\ottnt{V}$) so that the return value is coerced by $\ottnt{t}$ and then
$\ottnt{t'}$.  In the reduction rule, prepending $\ottnt{t}$ to $\ottnt{t'}$ is
represented by composition $\ottnt{t}  \mathbin{;\!;}  \ottnt{t'}$.

\subsection{Syntax}

\begin{figure}[tb]\small
  \begin{align*}
    \text{Variables} && \ottmv{x}, \ottmv{y}, \kappa &\hgap
    \text{Type variables} \hgap \ottmv{X}, \ottmv{Y}
    \\
    \text{Types} &&
    \ottnt{A},\ottnt{B},\ottnt{C} &\grmeq
     \mathord{\star}  \grmor
    \graytext{\iota} \grmor
    \ottnt{A}  \rightsquigarrow  \ottnt{B} \grmor
    \ottnt{A}  \Rightarrow  \ottnt{B} \grmor
    \ottmv{X}
    \\
    \text{Ground types} &&
    \ottnt{G},\ottnt{H} &\grmeq
    \iota \grmor
    \mathord{\star}  \Rightarrow  \mathord{\star}
    \\
    \text{Space-efficient coercions} &&
    \ottnt{s}, \ottnt{t} &\grmeq
     \ottkw{id} _{ \mathord{\star} }  \grmor
     \ottnt{G} \texttt{?}^{ \ottnt{p} }   \ottsym{;}  \ottnt{i} \grmor
    \ottnt{i}
    \\
    \text{Intermediate coercions} &&
    \ottnt{i} &\grmeq
    \ottnt{g}  \ottsym{;}   \ottnt{G} \texttt{!}  \grmor
    \ottnt{g} \grmor
     \bot^{ \ottnt{G}   \ottnt{p}   \ottnt{H} } 
    \\
    \text{Ground coercions} &&
    \ottnt{g}, \ottnt{h} &\grmeq
     \ottkw{id} _{ \ottnt{A} }  \graytext{\text{ (if $\ottnt{A}  \ne  \mathord{\star}$)}} \grmor
    \ottnt{s}  \Rightarrow  \ottnt{t} \graytext{\text{ (if $\ottnt{s}  \ne  \ottkw{id}$ or $\ottnt{t}  \ne  \ottkw{id}$)}}
    \\
    \text{Delayed coercions} &&
    \ottnt{d} &\grmeq
    \ottnt{g}  \ottsym{;}   \ottnt{G} \texttt{!}  \grmor
    \ottnt{s}  \Rightarrow  \ottnt{t} \graytext{\text{ (if $\ottnt{s}  \ne  \ottkw{id}$ or $\ottnt{t}  \ne  \ottkw{id}$)}}
    \\
    \text{Terms}&&
    \ottnt{L},\ottnt{M},\ottnt{N} &\grmeq
    \graytext{
      \ottnt{V} \grmor
      \ottnt{op}  \ottsym{(}  \ottnt{M}  \ottsym{,}  \ottnt{N}  \ottsym{)}
    } \grmor
     \ottnt{L} \, ( \ottnt{M} , \ottnt{N} )  \grmor
     \ottkw{let} \,  \ottmv{x} = \ottnt{M} \, \ottkw{in}\,  \ottnt{N} 
    \\ && & \grmsp\grmor
    \ottnt{M}  \mathbin{;\!;}  \ottnt{N} \grmor
    \ottnt{M}  \langle  \ottnt{N}  \rangle \grmor
    \ottkw{blame} \, \ottnt{p}
    \\
    \text{Values}&&
    \ottnt{V}, \ottnt{W}, \ottnt{K} &\grmeq
    \graytext{
      \ottmv{x} \grmor
      \ottnt{U}
    } \grmor
    \ottnt{U}  \langle\!\langle  \ottnt{d}  \rangle\!\rangle
    \\
    \text{Uncoerced values} &&
    \ottnt{U} &\grmeq
    \graytext{
      \ottnt{a}
    } \grmor
     \lambda  ( \ottmv{x} , \kappa ).\,  \ottnt{M}  \grmor
    \ottnt{s}
    \\
    \text{Type environments} &&
    \Gamma &\grmeq
    \graytext{
       \emptyset  \grmor
      \Gamma  \ottsym{,}  \ottmv{x}  \ottsym{:}  \ottnt{A}
    }
  \end{align*}
  \caption{Syntax of \lamSx.}
  \label{fig:syntaxS1}
\end{figure}

We show the syntax of \lamSx in Figure~\ref{fig:syntaxS1}.
We reuse the same metavariables from \lamS.
We also use $\kappa$ for variables, and $\ottnt{K}$ for values.

We replace $\ottnt{A}  \rightarrow  \ottnt{B}$ with $\ottnt{A}  \Rightarrow  \ottnt{B}$ and add $\ottnt{A}  \rightsquigarrow  \ottnt{B}$ and type
variables to types.  The syntax for ground types and space-efficient,
intermediate, ground, and delayed coercions is the same except that
$ \rightarrow $ is replaced with $ \Rightarrow $, similarly to types.  As we have
mentioned, we replace abstractions and applications with two-argument
versions.  We also add let-expressions (although they could be
introduced as derived forms) and coercion composition $\ottnt{M}  \mathbin{;\!;}  \ottnt{N}$.
The syntax for coercion applications is now $\ottnt{M}  \langle  \ottnt{N}  \rangle$, where $\ottnt{N}$
is a general term (of type $\ottnt{A}  \rightsquigarrow  \ottnt{B}$).  Uncoerced values now include
space-efficient coercions.

The term $ \lambda  ( \ottmv{x} , \kappa ).\,  \ottnt{M} $ binds $\ottmv{x}$ and $\kappa$ in $\ottnt{M}$, and
the term $ \ottkw{let} \,  \ottmv{x} = \ottnt{M} \, \ottkw{in}\,  \ottnt{N} $ binds $\ottmv{x}$ in $\ottnt{N}$.  The
definitions of free variables and $\alpha$-equivalence of terms are
standard, and thus we omit them.  We identify $\alpha$-equivalent
terms.

The definition of type environments, ranged over by $\Gamma$, is the same as \lamS.

\subsection{Type System}

\begin{figure}[tb]\small
  \textbf{Well-formed coercions (replacement)} \hfill\fbox{$\ottnt{c}  \ottsym{:}  \ottnt{A}  \rightsquigarrow  \ottnt{B}$}
  \begin{center}
    \infrule[CT-Fun]{
      \ottnt{c_{{\mathrm{1}}}}  \ottsym{:}  \ottnt{A'}  \rightsquigarrow  \ottnt{A} \andalso
      \ottnt{c_{{\mathrm{2}}}}  \ottsym{:}  \ottnt{B}  \rightsquigarrow  \ottnt{B'}
    }{
      \ottnt{c_{{\mathrm{1}}}}  \Rightarrow  \ottnt{c_{{\mathrm{2}}}}  \ottsym{:}  \ottnt{A}  \Rightarrow  \ottnt{B}  \rightsquigarrow  \ottnt{A'}  \Rightarrow  \ottnt{B'}
    }
  \end{center}
  \textbf{Term typing (excerpt)} \hfill\fbox{$ \Gamma    \vdash_{\mathsf{S_1} }    \ottnt{M}  :  \ottnt{A} $}
  \begin{center}
    \infrule[T-Crcn]{
      \ottnt{s}  \ottsym{:}  \ottnt{A}  \rightsquigarrow  \ottnt{B}
    }{
       \Gamma   \vdash   \ottnt{s}  :  \ottnt{A}  \rightsquigarrow  \ottnt{B} 
    } \hgap
    \infrule[T-Cmp]{
       \Gamma   \vdash   \ottnt{M}  :  \ottnt{A}  \rightsquigarrow  \ottnt{B}  \andalso
       \Gamma   \vdash   \ottnt{N}  :  \ottnt{B}  \rightsquigarrow  \ottnt{C} 
    }{
       \Gamma   \vdash   \ottnt{M}  \mathbin{;\!;}  \ottnt{N}  :  \ottnt{A}  \rightsquigarrow  \ottnt{C} 
    } \\\vgap
    \infrule[T-Crc]{
       \Gamma   \vdash   \ottnt{M}  :  \ottnt{A}  \andalso
       \Gamma   \vdash   \ottnt{N}  :  \ottnt{A}  \rightsquigarrow  \ottnt{B} 
    }{
       \Gamma   \vdash   \ottnt{M}  \langle  \ottnt{N}  \rangle  :  \ottnt{B} 
    } \hgap
    \infrule[T-CrcV]{
        \emptyset    \vdash   \ottnt{U}  :  \ottnt{A}  \andalso
        \emptyset    \vdash   \ottnt{d}  :  \ottnt{A}  \rightsquigarrow  \ottnt{B} 
    }{
        \emptyset    \vdash   \ottnt{U}  \langle\!\langle  \ottnt{d}  \rangle\!\rangle  :  \ottnt{B} 
    } \\\vgap
    \infrule[T-Abs]{
       \Gamma  \ottsym{,}  \ottmv{x}  \ottsym{:}  \ottnt{A}  \ottsym{,}  \kappa  \ottsym{:}  \ottnt{B}  \rightsquigarrow  \ottmv{X}   \vdash   \ottnt{M}  :  \ottmv{X}  \andalso
      \text{($X$ does not appear in $\Gamma,\ottnt{A},\ottnt{B}$)}
    }{
       \Gamma   \vdash    \lambda  ( \ottmv{x} , \kappa ).\,  \ottnt{M}   :  \ottnt{A}  \Rightarrow  \ottnt{B} 
    } \\\vgap
    \infrule[T-Let]{
       \Gamma   \vdash   \ottnt{M}  :  \ottnt{A}  \andalso
       \Gamma  \ottsym{,}  \ottmv{x}  \ottsym{:}  \ottnt{A}   \vdash   \ottnt{N}  :  \ottnt{B} 
    }{
       \Gamma   \vdash    \ottkw{let} \,  \ottmv{x} = \ottnt{M} \, \ottkw{in}\,  \ottnt{N}   :  \ottnt{B} 
    } \hgap
    \infrule[T-App]{
       \Gamma   \vdash   \ottnt{L}  :  \ottnt{A}  \Rightarrow  \ottnt{B}  \andalso
       \Gamma   \vdash   \ottnt{M}  :  \ottnt{A}  \andalso
       \Gamma   \vdash   \ottnt{N}  :  \ottnt{B}  \rightsquigarrow  \ottnt{C} 
    }{
       \Gamma   \vdash    \ottnt{L} \, ( \ottnt{M} , \ottnt{N} )   :  \ottnt{C} 
    }
  \end{center}
  \caption{Typing rules of \lamSx.}
  \label{fig:typingS1}
\end{figure}

Figure~\ref{fig:typingS1} shows
the main typing rules of \lamSx, which are a straightforward adaption from
\lamS.  

The relation $\ottnt{c}  \ottsym{:}  \ottnt{A}  \rightsquigarrow  \ottnt{B}$ is mostly the same as that of \lamS.
We replace the rule \rnp{CT-Fun} as shown.
As in \lamS, function coercions are contravariant in their argument coercions
and covariant in their return-value coercions.

The judgment $ \Gamma    \vdash_{\mathsf{S_1} }    \ottnt{M}  :  \ottnt{A} $ means that
term $\ottnt{M}$ of \lamSx has type $\ottnt{A}$ under type environment $\Gamma$.
The rules \rnp{T-Const}, \rnp{T-Op}, \rnp{T-Var}, and \rnp{T-Blame}
are the same as \lamS, and so we omit them.
The rule \rnp{T-Let} is standard.

The rules \rnp{T-Abs} and \rnp{T-App} look involved but the intuition
that $\ottnt{A}  \Rightarrow  \ottnt{B}$ corresponds to
$\forall \ottmv{X}.\,(\ottnt{A} \times (\ottnt{B}  \rightsquigarrow  \ottmv{X})) \rightarrow \ottmv{X}$ should help
to understand them.  The rule \rnp{T-Abs} assigns type $\ottnt{A}  \Rightarrow  \ottnt{B}$ to an
abstraction $ \lambda  ( \ottmv{x} , \kappa ).\,  \ottnt{M} $ if the body is well typed under the
assumption that $\ottmv{x}$ is of type $\ottnt{A}$ and $\kappa$ is of type
$\ottnt{B}  \rightsquigarrow  \ottmv{X}$ for fresh $\ottmv{X}$.  The type variable $\ottmv{X}$ must not
appear in $\Gamma,\ottnt{A},\ottnt{B}$ so that the target type can be
polymorphic at call sites.
The rule \rnp{T-App} for applications is already explained.

The rule \rnp{T-Crcn} assigns type $\ottnt{A}  \rightsquigarrow  \ottnt{B}$ to
space-efficient coercion $\ottnt{s}$ if it converts a value from type $\ottnt{A}$
to type $\ottnt{B}$.
The rules \rnp{T-Crc} and \rnp{T-CrcV} are similar to the corresponding rules
of \lamS, but adjusted to first-class coercions.

\subsection{Operational Semantics}

\begin{figure}[tb]\small
  \textbf{Coercion composition (replacement)} \hfill\fbox{$\ottnt{s}  \fatsemi  \ottnt{t}  \ottsym{=}  \ottnt{s'}$}
  \begin{align*}
    \ottsym{(}  \ottnt{s}  \Rightarrow  \ottnt{t}  \ottsym{)}  \fatsemi  \ottsym{(}  \ottnt{s'}  \Rightarrow  \ottnt{t'}  \ottsym{)} &= \begin{cases}
       \ottkw{id} _{ \ottnt{A}  \Rightarrow  \ottnt{B} }  &\text{if $\ottnt{s'}  \fatsemi  \ottnt{s}  \ottsym{=}   \ottkw{id} _{ \ottnt{A} } $ and $\ottnt{t}  \fatsemi  \ottnt{t'}  \ottsym{=}   \ottkw{id} _{ \ottnt{B} } $} \\
      \ottsym{(}  \ottnt{s'}  \fatsemi  \ottnt{s}  \ottsym{)}  \Rightarrow  \ottsym{(}  \ottnt{t}  \fatsemi  \ottnt{t'}  \ottsym{)} &\text{otherwise}
    \end{cases} &\rn{CC-Fun}
  \end{align*}
  \textbf{Evaluation contexts}
  \begin{align*}
    \mathcal{E} &\grmeq
    \graytext{
       \square 
    } \grmor
    \mathcal{E}  [   \square \, (  \ottnt{M} , \ottnt{N}  )   ] \grmor
    \mathcal{E}  [   \ottnt{V} \, (  \square , \ottnt{N}  )   ] \grmor
    \mathcal{E}  [   \ottnt{V} \, (  \ottnt{W} , \square  )   ] 
         \grmor
    \graytext{
      \mathcal{E}  [  \ottnt{op}  \ottsym{(}  \square  \ottsym{,}  \ottnt{M}  \ottsym{)}  ] \grmor
      \mathcal{E}  [  \ottnt{op}  \ottsym{(}  \ottnt{V}  \ottsym{,} \, \square \, \ottsym{)}  ]
    } \\ & \grmsp \grmor
    \mathcal{E}  [   \ottkw{let} \,  \ottmv{x} = \square \, \ottkw{in}\,  \ottnt{M}   ] \grmor
    \mathcal{E}  [  \square  \mathbin{;\!;}  \ottnt{M}  ] \grmor
    \mathcal{E}  [  \ottnt{V}  \mathbin{;\!;}  \square  ] \grmor
    \mathcal{E}  [  \square \, \langle  \ottnt{M}  \rangle  ] \grmor
    \mathcal{E}  [  \ottnt{V}  \langle \, \square \, \rangle  ]
  \end{align*}
  \textbf{Reduction} \hfill\fbox{$ \ottnt{M}    \mathbin{\accentset{\mathsf{e} }{\reduces}_{\mathsf{S_1} } }    \ottnt{N} $}\quad\fbox{$ \ottnt{M}    \mathbin{\accentset{\mathsf{c} }{\reduces}_{\mathsf{S_1} } }    \ottnt{N} $}
  \begin{align*}
    \ottnt{op}  \ottsym{(}  \ottnt{a}  \ottsym{,}  \ottnt{b}  \ottsym{)} & \mathbin{\accentset{\mathsf{e} }{\reduces} }  \delta \, \ottsym{(}  \ottnt{op}  \ottsym{,}  \ottnt{a}  \ottsym{,}  \ottnt{b}  \ottsym{)} &\rn{R-Op}\\
     \ottsym{(}   \lambda  ( \ottmv{x} , \kappa ).\,  \ottnt{M}   \ottsym{)} \, ( \ottnt{V} , \ottnt{W} )  & \mathbin{\accentset{\mathsf{e} }{\reduces} }  \ottnt{M}  [  \ottmv{x}  \ottsym{:=}  \ottnt{V}  \ottsym{,}  \kappa  \ottsym{:=}  \ottnt{W}  ] &\rn{R-Beta}\\
     \ottsym{(}  \ottnt{U}  \langle\!\langle  \ottnt{s}  \Rightarrow  \ottnt{t}  \rangle\!\rangle  \ottsym{)} \, ( \ottnt{V} , \ottnt{W} )  & \mathbin{\accentset{\mathsf{e} }{\reduces} } 
      \ottkw{let} \,  \kappa = \ottnt{t}  \mathbin{;\!;}  \ottnt{W} \, \ottkw{in}\,  \ottnt{U}  \, ( \ottnt{V}  \langle  \ottnt{s}  \rangle , \kappa )  &\rn{R-Wrap} 
  \end{align*}
  \begin{align*}
     \ottkw{let} \,  \ottmv{x} = \ottnt{V} \, \ottkw{in}\,  \ottnt{M}  & \mathbin{\accentset{\mathsf{c} }{\reduces} }  \ottnt{M}  [  \ottmv{x}  \ottsym{:=}  \ottnt{V}  ] &\rn{R-Let} &&
    \ottnt{s}  \mathbin{;\!;}  \ottnt{t} & \mathbin{\accentset{\mathsf{c} }{\reduces} }  \ottnt{s}  \fatsemi  \ottnt{t} &\rn{R-Cmp} \\
    \ottnt{U}  \langle   \ottkw{id} _{ \ottnt{A} }   \rangle & \mathbin{\accentset{\mathsf{c} }{\reduces} }  \ottnt{U} &\rn{R-Id} &&
    \ottnt{U}  \langle   \bot^{ \ottnt{G}   \ottnt{p}   \ottnt{H} }   \rangle & \mathbin{\accentset{\mathsf{c} }{\reduces} }  \ottkw{blame} \, \ottnt{p} &\rn{R-Fail} \\
    \ottnt{U}  \langle  \ottnt{d}  \rangle & \mathbin{\accentset{\mathsf{c} }{\reduces} }  \ottnt{U}  \langle\!\langle  \ottnt{d}  \rangle\!\rangle &\rn{R-Crc} &&
    \ottnt{U}  \langle\!\langle  \ottnt{d}  \rangle\!\rangle  \langle  \ottnt{t}  \rangle & \mathbin{\accentset{\mathsf{c} }{\reduces} }  \ottnt{U}  \langle  \ottnt{d}  \mathbin{;\!;}  \ottnt{t}  \rangle &\rn{R-MergeV}
  \end{align*}
  \textbf{Evaluation} \hfill\fbox{$ \ottnt{M}    \mathbin{  \accentset{\mathsf{e} }{\evalto}_{\mathsf{S_1} }  }    \ottnt{N} $}\quad\fbox{$ \ottnt{M}    \mathbin{  \accentset{\mathsf{c} }{\evalto}_{\mathsf{S_1} }  }    \ottnt{N} $}
  \begin{center}
    \infrule[E-Ctx]{
       \ottnt{M}    \mathbin{\accentset{\mathcal{X} }{\reduces} }    \ottnt{N}  \andalso
       \mathcal{X}\in   \set{\mathsf{e},\mathsf{c} }  
    }{
       \mathcal{E}  [  \ottnt{M}  ]    \mathbin{  \accentset{\mathcal{X} }{\evalto}  }    \mathcal{E}  [  \ottnt{N}  ] 
    } \hgap
    \infrule[E-Abort]{
      \mathcal{E}  \ne  \square
    }{
       \mathcal{E}  [  \ottkw{blame} \, \ottnt{p}  ]    \mathbin{  \accentset{\mathsf{e} }{\evalto}  }    \ottkw{blame} \, \ottnt{p} 
    }
  \end{center}
  \caption{Reduction/evaluation rules of \lamSx.}
  \label{fig:semS1}
\end{figure}

The composition function $\ottnt{s}  \fatsemi  \ottnt{t}$ is mostly the same as that of \lamS.
We only replace \rnp{CC-Fun} as shown in Figure~\ref{fig:semS1}.

Similarly to \lamS, we give a small-step operational semantics to \lamSx
consisting of two relations on closed terms:
the reduction relation $ \ottnt{M}    \reduces_{\mathsf{S_1} }    \ottnt{N} $ and
the evaluation relation $ \ottnt{M}    \mathbin{  \evalto_{\mathsf{S_1} }  }    \ottnt{N} $.
We show the reduction/evaluation rules of \lamSx in Figure~\ref{fig:semS1}.
As in \lamS, they are labeled either \textsf{e} or \textsf{c}.
We write $ \reduces_{\mathsf{S_1} } $ for $ \mathbin{\accentset{\mathsf{e} }{\reduces}_{\mathsf{S_1} } }  \cup  \mathbin{\accentset{\mathsf{c} }{\reduces}_{\mathsf{S_1} } } $, and
$ \evalto_{\mathsf{S_1} } $ for $ \accentset{\mathsf{e} }{\evalto}_{\mathsf{S_1} }  \cup  \accentset{\mathsf{c} }{\evalto}_{\mathsf{S_1} } $.

The rules \rnp{R-Op} and \rnp{R-Beta} are standard.
Note that \rnp{R-Beta} is adjusted for pair arguments.
We write $\ottnt{M}  [  \ottmv{x}  \ottsym{:=}  \ottnt{V}  \ottsym{,}  \kappa  \ottsym{:=}  \ottnt{K}  ]$ for capture-avoiding simultaneous substitution
of $\ottnt{V}$ and $\ottnt{K}$ for $\ottmv{x}$ and $\kappa$, respectively, in $\ottnt{M}$.

The rule \rnp{R-Wrap} applies to
applications of wrapped function $\ottnt{U}  \langle\!\langle  \ottnt{s}  \Rightarrow  \ottnt{t}  \rangle\!\rangle$ to value $\ottnt{V}$.
Since coercion $\ottnt{s}$ is for function arguments,
it is applied to $\ottnt{V}$, as in \lamS.
Additionally, we compose coercion $\ottnt{t}$ on the return value with continuation coercion $\ottnt{W}$.
Thus, $\ottnt{V}  \langle  \ottnt{s}  \rangle$ and $\ottnt{t}  \mathbin{;\!;}  \ottnt{W}$ are passed to function $\ottnt{U}$.
Note that we use a let expression to evaluate the second argument $\ottnt{t}  \mathbin{;\!;}  \ottnt{W}$ before
$\ottnt{V}  \langle  \ottnt{s}  \rangle$.
It is a necessary adjustment for the semantics of \lamS and \lamSx to match.

The rule \rnp{R-Let} is standard; it is labeled as \textsf{c} because
we use let-expressions only for coercion compositions.
The rule \rnp{R-Cmp} applies to coercion compositions $\ottnt{s}  \mathbin{;\!;}  \ottnt{t}$, which
is evaluated by meta-level coercion composition function $\ottnt{s}  \fatsemi  \ottnt{t}$.
The rules \rnp{R-Id}, \rnp{R-Fail}, \rnp{R-Crc}, and \rnp{R-MergeV}
are the same as \lamS. 

Evaluation contexts, ranged over by $\mathcal{E}$, are defined also in Figure~\ref{fig:semS1}.
In contrast to \lamS, evaluation contexts are standard in \lamSx.
The definition \ifluxuryspace of evaluation contexts $\mathcal{E}$ \fi represents that
function calls in \lamSx are call-by-value, and
primitive operations, function applications,
coercion compositions, and coercion applications
are all evaluated from left to right.

We then come back to evaluation rules:
The evaluation rules \rnp{E-Ctx} and \rnp{E-Abort} are the same as
\lamS.  (However, evaluation contexts in \rnp{E-Ctx} are more
straightforward in \lamSx.)

Finally, we should emphasize that we no longer need \rnp{R-MergeC} in
\lamSx.  So, \lamSx is an ordinary call-by-value language and its
semantics should be easy to implement.

\begin{example}\label{example:target}
  Let $\ottnt{U}$ be
  $ \lambda  ( \ottmv{x} , \kappa ).\,   \ottkw{let} \,  \kappa' =  \ottkw{int} \texttt{!}   \mathbin{;\!;}  \kappa \, \ottkw{in}\,  \ottsym{(}  \ottmv{x}  \langle   \ottkw{int} \texttt{?}^{ \ottnt{p} }   \rangle  \ottsym{+}  \ottsym{2}  \ottsym{)}   \langle  \kappa'  \rangle $,
  which corresponds to the \lamS-term $ \lambda   \ottmv{x} .\,  \ottsym{(}  \ottmv{x}  \langle   \ottkw{int} \texttt{?}^{ \ottnt{p} }   \rangle  \ottsym{+}  \ottsym{2}  \ottsym{)}  \langle   \ottkw{int} \texttt{!}   \rangle $
  in Example~\ref{example:source}.  In fact, we will obtain this term
  as a result of our coercion-passing translation defined in the next
  section.  The term $ \ottsym{(}  \ottnt{U}  \langle   \ottkw{int} \texttt{!}   \Rightarrow   \ottkw{int} \texttt{?}^{ \ottnt{p} }   \rangle  \ottsym{)} \, ( \ottsym{3} ,  \ottkw{int} \texttt{!}  ) $ evaluates to
  $\ottsym{5}  \langle\!\langle   \ottkw{int} \texttt{!}   \rangle\!\rangle$ as follows:
\begin{align*}
\mathrlap{ \ottsym{(}  \ottnt{U}  \langle   \ottkw{int} \texttt{!}   \Rightarrow   \ottkw{int} \texttt{?}^{ \ottnt{p} }   \rangle  \ottsym{)} \, ( \ottsym{3} ,  \ottkw{int} \texttt{!}  ) } \\
\ifluxuryspace
&  \evalto   \ottsym{(}  \ottnt{U}  \langle\!\langle   \ottkw{int} \texttt{!}   \Rightarrow   \ottkw{int} \texttt{?}^{ \ottnt{p} }   \rangle\!\rangle  \ottsym{)} \, ( \ottsym{3} ,  \ottkw{int} \texttt{!}  )  &\text{by \rnp{R-Crc}}\\
&  \evalto    \ottkw{let} \,  \kappa'' =  \ottkw{int} \texttt{?}^{ \ottnt{p} }   \mathbin{;\!;}   \ottkw{int} \texttt{!}  \, \ottkw{in}\,  \ottnt{U}  \, ( \ottsym{3}  \langle   \ottkw{int} \texttt{!}   \rangle , \kappa'' )  &\text{by \rnp{R-Wrap}}\\
&  \evalto    \ottkw{let} \,  \kappa'' =  \ottkw{int} \texttt{?}^{ \ottnt{p} }   \ottsym{;}  \ottkw{id}  \ottsym{;}   \ottkw{int} \texttt{!}  \, \ottkw{in}\,  \ottnt{U}  \, ( \ottsym{3}  \langle   \ottkw{int} \texttt{!}   \rangle , \kappa'' )  &\text{by \rnp{R-Cmp}} \\
&  \evalto   \ottnt{U} \, ( \ottsym{3}  \langle   \ottkw{int} \texttt{!}   \rangle , \ottsym{(}   \ottkw{int} \texttt{?}^{ \ottnt{p} }   \ottsym{;}  \ottkw{id}  \ottsym{;}   \ottkw{int} \texttt{!}   \ottsym{)} )  &\text{by \rnp{R-Let}}\\
&  \evalto   \ottnt{U} \, ( \ottsym{3}  \langle\!\langle   \ottkw{int} \texttt{!}   \rangle\!\rangle , \ottsym{(}   \ottkw{int} \texttt{?}^{ \ottnt{p} }   \ottsym{;}  \ottkw{id}  \ottsym{;}   \ottkw{int} \texttt{!}   \ottsym{)} )  &\text{by \rnp{R-Crc}}\\
&  \evalto   \ottkw{let} \,  \kappa' =  \ottkw{int} \texttt{!}   \mathbin{;\!;}  \ottsym{(}   \ottkw{int} \texttt{?}^{ \ottnt{p} }   \ottsym{;}  \ottkw{id}  \ottsym{;}   \ottkw{int} \texttt{!}   \ottsym{)} \, \ottkw{in}\,  \ottsym{(}  \ottsym{3}  \langle\!\langle   \ottkw{int} \texttt{!}   \rangle\!\rangle  \langle   \ottkw{int} \texttt{?}^{ \ottnt{p} }   \rangle  \ottsym{+}  \ottsym{2}  \ottsym{)}  \langle  \kappa'  \rangle  &\text{by \rnp{R-Beta}}\\
&  \evalto   \ottkw{let} \,  \kappa' =  \ottkw{int} \texttt{!}  \, \ottkw{in}\,  \ottsym{(}  \ottsym{3}  \langle\!\langle   \ottkw{int} \texttt{!}   \rangle\!\rangle  \langle   \ottkw{int} \texttt{?}^{ \ottnt{p} }   \rangle  \ottsym{+}  \ottsym{2}  \ottsym{)}  \langle  \kappa'  \rangle  &\text{by \rnp{R-Cmp}}\\
&  \evalto  \ottsym{(}  \ottsym{3}  \langle\!\langle   \ottkw{int} \texttt{!}   \rangle\!\rangle  \langle   \ottkw{int} \texttt{?}^{ \ottnt{p} }   \rangle  \ottsym{+}  \ottsym{2}  \ottsym{)}  \langle   \ottkw{int} \texttt{!}   \rangle &\text{by \rnp{R-Let}}\\
&  \evalto  \ottsym{(}  \ottsym{3}  \langle  \ottkw{id}  \rangle  \ottsym{+}  \ottsym{2}  \ottsym{)}  \langle   \ottkw{int} \texttt{!}   \rangle &\text{by \rnp{R-MergeV}}\\
&  \evalto  \ottsym{(}  \ottsym{3}  \ottsym{+}  \ottsym{2}  \ottsym{)}  \langle   \ottkw{int} \texttt{!}   \rangle &\text{by \rnp{R-Id}}\\
&  \evalto  \ottsym{5}  \langle   \ottkw{int} \texttt{!}   \rangle &\text{by \rnp{R-Op}}\\
&  \evalto  \ottsym{5}  \langle\!\langle   \ottkw{int} \texttt{!}   \rangle\!\rangle &\text{by \rnp{R-Crc}}
\else
&  \mathbin{ \evalto ^*}    \ottkw{let} \,  \kappa'' =  \ottkw{int} \texttt{?}^{ \ottnt{p} }   \mathbin{;\!;}   \ottkw{int} \texttt{!}  \, \ottkw{in}\,  \ottnt{U}  \, ( \ottsym{3}  \langle   \ottkw{int} \texttt{!}   \rangle , \kappa'' )  &\text{by \rnp{R-Crc}, \rnp{R-Wrap}}\\
&  \evalto    \ottkw{let} \,  \kappa'' =  \ottkw{int} \texttt{?}^{ \ottnt{p} }   \ottsym{;}  \ottkw{id}  \ottsym{;}   \ottkw{int} \texttt{!}  \, \ottkw{in}\,  \ottnt{U}  \, ( \ottsym{3}  \langle   \ottkw{int} \texttt{!}   \rangle , \kappa'' )  &\text{by \rnp{R-Cmp}} \\
&  \mathbin{ \evalto ^*}   \ottnt{U} \, ( \ottsym{3}  \langle\!\langle   \ottkw{int} \texttt{!}   \rangle\!\rangle , \ottsym{(}   \ottkw{int} \texttt{?}^{ \ottnt{p} }   \ottsym{;}  \ottkw{id}  \ottsym{;}   \ottkw{int} \texttt{!}   \ottsym{)} )  &\text{by \rnp{R-Let}, \rnp{R-Crc}}\\
&  \evalto  \mathrlap{ \ottkw{let} \,  \kappa' =  \ottkw{int} \texttt{!}   \mathbin{;\!;}  \ottsym{(}   \ottkw{int} \texttt{?}^{ \ottnt{p} }   \ottsym{;}  \ottkw{id}  \ottsym{;}   \ottkw{int} \texttt{!}   \ottsym{)} \, \ottkw{in}\,  \ottsym{(}  \ottsym{3}  \langle\!\langle   \ottkw{int} \texttt{!}   \rangle\!\rangle  \langle   \ottkw{int} \texttt{?}^{ \ottnt{p} }   \rangle  \ottsym{+}  \ottsym{2}  \ottsym{)}  \langle  \kappa'  \rangle } &\text{by \rnp{R-Beta}}\\
&  \mathbin{ \evalto ^*}  \ottsym{(}  \ottsym{3}  \langle\!\langle   \ottkw{int} \texttt{!}   \rangle\!\rangle  \langle   \ottkw{int} \texttt{?}^{ \ottnt{p} }   \rangle  \ottsym{+}  \ottsym{2}  \ottsym{)}  \langle   \ottkw{int} \texttt{!}   \rangle &\text{by \rnp{R-Cmp}, \rnp{R-Let}}\\
&  \mathbin{ \evalto ^*}  \ottsym{5}  \langle\!\langle   \ottkw{int} \texttt{!}   \rangle\!\rangle &\text{by \rnp{R-MergeV}, \rnp{R-ID}, \rnp{R-Op}, \rnp{R-Crc}}
\fi                          
\end{align*}
It is easy to see that the steps by \rnp{R-MergeC} in Example~\ref{example:source}
are simulated by \rnp{R-Cmp} followed by \rnp{R-Let}.
\end{example}

\subsection{Properties}

We state a few properties of \lamSx below.
\iffull
Their proofs are in Appendix~\ref{sec:appendix}.
\else
Their proofs are in the full version.
\fi

\begin{lemma}[name=Determinacy,restate=lemDeterminacySx]\label{lem:determinism-S1}
  If $ \ottnt{M}    \mathbin{  \evalto_{\mathsf{S_1} }  }    \ottnt{N} $ and $ \ottnt{M}    \mathbin{  \evalto_{\mathsf{S_1} }  }    \ottnt{N'} $, then $\ottnt{N}  \ottsym{=}  \ottnt{N'}$.
\end{lemma}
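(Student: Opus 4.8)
The plan is to follow the standard recipe for call-by-value calculi: first establish a \emph{unique decomposition} property for closed terms, and then read off determinacy by a short case analysis on which evaluation rule fires. Since the evaluation contexts of \lamSx are the ordinary left-to-right ones (the inside-out contexts and the $\mathcal{F}$-versus-$\mathcal{E}$ split of \lamS are gone), the argument is routine and closely parallels the one for Lemma~\ref{lem:determinism-S}.

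First I would prove, by induction on the structure of a closed term $M$, that exactly one of the following holds: $M$ is a value; $M$ is $\ottkw{blame} \, \ottnt{p}$; or there is a \emph{unique} pair $(\mathcal{E},M_0)$ with $M = \mathcal{E}[M_0]$ such that $M_0$ is a redex of $\reduces_{\mathsf{S_1}}$, or $M_0 = \ottkw{blame} \, \ottnt{p}$ with $\mathcal{E}  \ne  \square$. The uniqueness half carries the weight. For each term former one checks that at most one evaluation-context frame can sit at the outside — this is where the left-to-right order and the side conditions that earlier positions already be values ($\ottnt{V}  \langle \, \square \, \rangle$, $\ottnt{V} \, (  \square , \ottnt{N}  ) $, $\ottnt{V} \, (  \ottnt{W} , \square  ) $, etc.) get used — and that a term which is itself a redex admits no nontrivial surrounding context: e.g.\ $ \ottsym{(}   \lambda  ( \ottmv{x} , \kappa ).\,  \ottnt{M}   \ottsym{)} \, ( \ottnt{V} , \ottnt{W} ) $ cannot be written $\mathcal{E}[M_0]$ with $\mathcal{E}  \ne  \square$ because every application frame requires a non-value in a position already occupied by a value. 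Alongside this I would record two easy facts: the syntactic shapes of the redexes are pairwise disjoint — in particular $\ottnt{U}  \langle   \ottkw{id} _{ \ottnt{A} }   \rangle$, $\ottnt{U}  \langle   \bot^{ \ottnt{G}   \ottnt{p}   \ottnt{H} }   \rangle$, $\ottnt{U}  \langle  \ottnt{d}  \rangle$ and $\ottnt{U}  \langle\!\langle  \ottnt{d}  \rangle\!\rangle  \langle  \ottnt{t}  \rangle$ are told apart using the grammar's distinctions between $ \ottkw{id} _{ \ottnt{A} } $, failure coercions, delayed coercions $\ottnt{d}$, and coerced values — and that $\reduces_{\mathsf{S_1}}$ is a partial function on redexes, since $ \delta $ and $ \fatsemi $ are (partial) functions and every other reduction rule has a single deterministic right-hand side.

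Given these, determinacy follows quickly. Suppose $ \ottnt{M}    \mathbin{  \evalto_{\mathsf{S_1} }  }    \ottnt{N} $ and $ \ottnt{M}    \mathbin{  \evalto_{\mathsf{S_1} }  }    \ottnt{N'} $. Then $M$ is neither a value nor $\ottkw{blame} \, \ottnt{p}$ (no evaluation rule applies in those cases, using the side condition $\mathcal{E}  \ne  \square$ of \rnp{E-Abort}), so $M = \mathcal{E}[M_0]$ with $(\mathcal{E},M_0)$ the unique decomposition. If $M_0 = \ottkw{blame} \, \ottnt{p}$, then $\mathcal{E}  \ne  \square$ and only \rnp{E-Abort} can fire — \rnp{E-Ctx} would require $M_0$ to be a redex, contradicting uniqueness — so $N = N' = \ottkw{blame} \, \ottnt{p}$. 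If $M_0$ is a redex, then \rnp{E-Abort} cannot fire (it would exhibit a decomposition with a blame in the hole, again contradicting uniqueness), so both steps are by \rnp{E-Ctx}: $N = \mathcal{E}[N_0]$ and $N' = \mathcal{E}[N_0']$ with $M_0 \reduces_{\mathsf{S_1}} N_0$ and $M_0 \reduces_{\mathsf{S_1}} N_0'$; since $\reduces_{\mathsf{S_1}}$ is functional on redexes, $N_0 = N_0'$, hence $N = N'$.

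The main obstacle is the uniqueness part of the decomposition lemma: verifying that the evaluation-context grammar together with its side conditions pins down a single ``active'' subterm, and that no term is at once a redex and the filling of a nontrivial context. Everything else — disjointness of the redex shapes and functionality of $ \delta $ and $ \fatsemi $ — is a mechanical check.
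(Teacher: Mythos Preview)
Your approach is correct and matches the paper's, which simply invokes the Unique Decomposition lemma (Lemma~\ref{lem:decom-S1}) and leaves the remaining case analysis implicit. One small caveat: your trichotomy for \emph{every} closed term is a bit too strong without a typing hypothesis---ill-typed stuck terms satisfy none of the three alternatives, which is why the paper's Lemma~\ref{lem:decom-S1} carries the premise $\emptyset \vdash_{\mathsf{S_1}} M : A$---but since your assumption $M \evalto_{\mathsf{S_1}} N$ already furnishes a decomposition via the evaluation rules, only the \emph{uniqueness} half is needed for determinacy, and that half is purely syntactic and goes through exactly as you describe.
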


\ifluxuryspace
\begin{theorem}[name=Progress,restate=thmProgressSx]\label{thm:progress-S1}
  If $  \emptyset     \vdash_{\mathsf{S_1} }    \ottnt{M}  :  \ottnt{A} $, then one of the following holds.
  \begin{enumerate}
  \item $ \ottnt{M}    \mathbin{  \evalto_{\mathsf{S_1} }  }    \ottnt{M'} $ for some $\ottnt{M'}$.
  \item $\ottnt{M}  \ottsym{=}  \ottnt{V}$ for some $\ottnt{V}$.
  \item $\ottnt{M}  \ottsym{=}  \ottkw{blame} \, \ottnt{p}$ for some $\ottnt{p}$.
  \end{enumerate}
\end{theorem}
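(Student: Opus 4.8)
The plan is to prove this by a routine induction on the derivation of $  \emptyset     \vdash_{\mathsf{S_1} }    \ottnt{M}  :  \ottnt{A} $, following the standard ``progress via canonical forms'' recipe. First I would establish a \textbf{Canonical Forms} lemma classifying closed values by type: if $  \emptyset     \vdash_{\mathsf{S_1} }    \ottnt{V}  :  \ottnt{A} $, then (i) $\ottnt{A}  \ottsym{=}  \iota$ implies $\ottnt{V}$ is a constant $\ottnt{a}$; (ii) $\ottnt{A}  \ottsym{=}  \mathord{\star}$ implies $\ottnt{V}$ is an injected value $\ottnt{U}  \langle\!\langle  \ottnt{g}  \ottsym{;}   \ottnt{G} \texttt{!}   \rangle\!\rangle$; (iii) $\ottnt{A}  \ottsym{=}  \ottnt{B}  \Rightarrow  \ottnt{C}$ implies $\ottnt{V}$ is either $ \lambda  ( \ottmv{x} , \kappa ).\,  \ottnt{N} $ or a wrapped function $\ottnt{U}  \langle\!\langle  \ottnt{s}  \Rightarrow  \ottnt{t}  \rangle\!\rangle$; and (iv) $\ottnt{A}  \ottsym{=}  \ottnt{B}  \rightsquigarrow  \ottnt{C}$ implies $\ottnt{V}$ is a space-efficient coercion $\ottnt{s}$. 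This follows by inspection of the value grammar against the typing rules \rnp{T-Const}, \rnp{T-Abs}, \rnp{T-Crcn}, \rnp{T-CrcV}, using two facts: an uncoerced value $\ottnt{U}$ never has type $ \mathord{\star} $ (constants have base types, abstractions have $ \Rightarrow $-types, coercions have $ \rightsquigarrow $-types), and a delayed coercion $\ottnt{d}$ has target type either $ \mathord{\star} $ or a $ \Rightarrow $-type, so coerced values inhabit only those types.

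Next I would case on the last rule of the typing derivation. The rules \rnp{T-Const}, \rnp{T-Abs}, \rnp{T-Crcn}, \rnp{T-CrcV} put $\ottnt{M}$ directly into case (2), and \rnp{T-Blame} into case (3); \rnp{T-Var} is vacuous under the empty environment. For \rnp{T-Op}, \rnp{T-Let}, \rnp{T-Cmp}, \rnp{T-Crc}, and \rnp{T-App} I would apply the induction hypothesis to the leftmost immediate subterm (which is also typed under $ \emptyset $): if it takes a step, the whole term steps by \rnp{E-Ctx} with the corresponding evaluation context; if it equals $\ottkw{blame} \, \ottnt{p}$ and the surrounding context is not $\square$, then \rnp{E-Abort} fires; otherwise it is a value and I move on to the next subterm, repeating. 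Once every subterm is a value, a reduction applies: \rnp{R-Op} for \rnp{T-Op} (using the totality assumption on $\delta$ at matching base types, justified by Canonical Forms), \rnp{R-Let} for \rnp{T-Let}, \rnp{R-Cmp} for \rnp{T-Cmp} (using the \lamSx analogue of Lemma~\ref{lem:sc-cmp-welldef}, so that $\ottnt{s}  \fatsemi  \ottnt{t}$ is defined), and for \rnp{T-App} either \rnp{R-Beta} or \rnp{R-Wrap} according to whether the function value is an abstraction or a wrapped function $\ottnt{U}  \langle\!\langle  \ottnt{s}  \Rightarrow  \ottnt{t}  \rangle\!\rangle$ — Canonical Forms guarantees these are the only two shapes of a value of type $\ottnt{A}  \Rightarrow  \ottnt{B}$, and that the coercion-argument position holds some space-efficient coercion, so the redex is syntactically well-formed in both cases.

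The one case that needs real care is \rnp{T-Crc}, where $\ottnt{M}  \ottsym{=}  \ottnt{M_{{\mathrm{1}}}}  \langle  \ottnt{M_{{\mathrm{2}}}}  \rangle$: after reducing $\ottnt{M_{{\mathrm{1}}}}$ to a value $\ottnt{V_{{\mathrm{1}}}}$ and $\ottnt{M_{{\mathrm{2}}}}$ to a coercion $\ottnt{s}$ (by Canonical Forms, since $\ottnt{M_{{\mathrm{2}}}}$ has a $ \rightsquigarrow $-type), I must show that $\ottnt{V_{{\mathrm{1}}}}  \langle  \ottnt{s}  \rangle$ is always a redex. If $\ottnt{V_{{\mathrm{1}}}}  \ottsym{=}  \ottnt{U}  \langle\!\langle  \ottnt{d}  \rangle\!\rangle$, then \rnp{R-MergeV} applies. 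If $\ottnt{V_{{\mathrm{1}}}}  \ottsym{=}  \ottnt{U}$ is uncoerced, its type is not $ \mathord{\star} $, so by the source/target-type invariants (Proposition~\ref{prop:src-tgt}) the coercion $\ottnt{s}$ cannot be $ \ottkw{id} _{ \mathord{\star} } $ nor of the form $ \ottnt{G} \texttt{?}^{ \ottnt{p} }   \ottsym{;}  \ottnt{i}$ (both have source type $ \mathord{\star} $); hence $\ottnt{s}$ is an intermediate coercion, and a short sub-case analysis on its shape ($ \ottkw{id} _{ \ottnt{A} } $, $ \bot^{ \ottnt{G}   \ottnt{p}   \ottnt{H} } $, $\ottnt{g}  \ottsym{;}   \ottnt{G} \texttt{!} $, or $\ottnt{s_{{\mathrm{1}}}}  \Rightarrow  \ottnt{s_{{\mathrm{2}}}}$) shows that exactly one of \rnp{R-Id}, \rnp{R-Fail}, \rnp{R-Crc} applies. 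I expect this interaction — ensuring that no space-efficient coercion gets stuck when applied to an uncoerced value, which hinges precisely on ruling out the $ \mathord{\star} $-sourced coercions via Proposition~\ref{prop:src-tgt} — to be the main obstacle; everything else mirrors the progress proof of \lamS, only simpler because \lamSx uses ordinary call-by-value evaluation contexts and has no \rnp{R-MergeC} rule.
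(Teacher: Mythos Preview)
Your proposal is correct and follows essentially the same approach as the paper's proof: induction on the typing derivation, a Canonical Forms lemma (the paper's Lemma~\ref{lem:canonical-S1}), and the same case analysis, including the key sub-case for \rnp{T-Crc} where you rule out $ \mathord{\star} $-sourced coercions so that an uncoerced value meets only $ \ottkw{id} _{ \ottnt{A} } $, $ \bot^{ \ottnt{G}   \ottnt{p}   \ottnt{H} } $, or a delayed coercion. The only small technicality the paper makes explicit that you leave implicit is composing evaluation contexts (Lemma~\ref{lem:ctx-cmp}) when lifting a subterm's step through the surrounding frame, but that is routine.
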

\else
\begin{theorem}[name=Progress,restate=thmProgressSx]\label{thm:progress-S1}
  If $  \emptyset     \vdash_{\mathsf{S_1} }    \ottnt{M}  :  \ottnt{A} $, then one of the following holds:
  (1) $ \ottnt{M}    \mathbin{  \evalto_{\mathsf{S_1} }  }    \ottnt{M'} $ for some $\ottnt{M'}$;
  (2) $\ottnt{M}  \ottsym{=}  \ottnt{V}$ for some $\ottnt{V}$; or
  (3) $\ottnt{M}  \ottsym{=}  \ottkw{blame} \, \ottnt{p}$ for some $\ottnt{p}$.
\end{theorem}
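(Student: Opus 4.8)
The plan is to prove Progress for \lamSx by induction on the derivation of $\emptyset \vdash_{\mathsf{S_1}} \ottnt{M} : \ottnt{A}$, in the usual syntactic style. First I would establish a canonical forms lemma classifying closed values by their type: a value of base type $\iota$ is a constant $\ottnt{a}$; a value of coercion type $\ottnt{A} \rightsquigarrow \ottnt{B}$ is a space-efficient coercion $\ottnt{s}$; a value of function type $\ottnt{A} \Rightarrow \ottnt{B}$ is either an abstraction $ \lambda  ( \ottmv{x} , \kappa ).\,  \ottnt{M} $ or a wrapped function $\ottnt{U}  \langle\!\langle  \ottnt{s}  \Rightarrow  \ottnt{t}  \rangle\!\rangle$; and a value of the dynamic type $\mathord{\star}$ is an injected value $\ottnt{U}  \langle\!\langle  \ottnt{g}  \ottsym{;}   \ottnt{G} \texttt{!}   \rangle\!\rangle$. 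Each of these follows by inspecting the value grammar and the types that \rnp{T-Const}, \rnp{T-Abs}, \rnp{T-Crcn}, and \rnp{T-CrcV} (together with \rnp{CT-Fun} and \rnp{CT-Inj} for the delayed coercions $\ottnt{d}$) can assign. I would also use the standard fact that evaluation contexts compose, so that a step taken inside a subterm can be lifted by \rnp{E-Ctx}, and that \rnp{E-Abort} propagates $\ottkw{blame}\,\ottnt{p}$ out of any nontrivial context.

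For the induction, the value-forming rules \rnp{T-Const}, \rnp{T-Abs}, \rnp{T-Crcn}, and \rnp{T-CrcV} land in case~(2), \rnp{T-Blame} lands in case~(3), and \rnp{T-Var} is vacuous because the type environment is empty. For each of the remaining rules---\rnp{T-Op}, \rnp{T-App}, \rnp{T-Cmp}, \rnp{T-Crc}, \rnp{T-Let}---I would scan the immediate subterms in left-to-right evaluation order and apply the induction hypothesis to each in turn: if some subterm is $\ottkw{blame}\,\ottnt{p}$, then $\ottnt{M}$ has the form $\mathcal{E}  [  \ottkw{blame} \, \ottnt{p}  ]$ with $\mathcal{E}  \ne  \square$ (the frame being read off from the evaluation-context grammar of Figure~\ref{fig:semS1}), so \rnp{E-Abort} fires; if some subterm steps, then composing its evaluation context with the surrounding frame and using \rnp{E-Ctx} shows $\ottnt{M}$ steps; and if all subterms are values, the corresponding head redex reduces. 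The head-redex subcases are routine: \rnp{R-Op} for $\ottnt{op}  \ottsym{(}  \ottnt{V}  \ottsym{,}  \ottnt{W}  \ottsym{)}$, where canonical forms give two constants and the totality assumption on $\delta$ yields a result; \rnp{R-Beta} or \rnp{R-Wrap} for $ \ottnt{L} \, ( \ottnt{V} , \ottnt{W} ) $, according to whether the function value $\ottnt{L}$ is an abstraction or a wrapped function; \rnp{R-Cmp} for $\ottnt{s}  \mathbin{;\!;}  \ottnt{t}$; and \rnp{R-Let} for $ \ottkw{let} \,  \ottmv{x} = \ottnt{V} \, \ottkw{in}\,  \ottnt{M} $.

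The case that needs care is \rnp{T-Crc}, where $\ottnt{M}  \ottsym{=}  \ottnt{M_{{\mathrm{1}}}}  \langle  \ottnt{M_{{\mathrm{2}}}}  \rangle$ with $\ottnt{M_{{\mathrm{1}}}}  \ottsym{:}  \ottnt{A}$ and $\ottnt{M_{{\mathrm{2}}}}  \ottsym{:}  \ottnt{A}  \rightsquigarrow  \ottnt{B}$: once $\ottnt{M_{{\mathrm{1}}}}$ has become a value $\ottnt{V}$ and $\ottnt{M_{{\mathrm{2}}}}$ a space-efficient coercion $\ottnt{s}$, I must show that some reduction rule covers every combination of the shape of $\ottnt{V}$ (an uncoerced value $\ottnt{U}$ or a coerced value $\ottnt{U}  \langle\!\langle  \ottnt{d}  \rangle\!\rangle$) with the shape of $\ottnt{s}$. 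When $\ottnt{V}$ is a coerced value, \rnp{R-MergeV} always applies. When $\ottnt{V}  \ottsym{=}  \ottnt{U}$, I case on $\ottnt{s}$: $\ottnt{s}  \ottsym{=}   \ottkw{id} _{ \ottnt{A} } $ gives \rnp{R-Id}, $\ottnt{s}  \ottsym{=}   \bot^{ \ottnt{G}   \ottnt{p}   \ottnt{H} } $ gives \rnp{R-Fail}, and a delayed coercion $\ottnt{s}  \ottsym{=}  \ottnt{d}$ gives \rnp{R-Crc}; the only remaining possibility, $\ottnt{s}  \ottsym{=}   \ottnt{G} \texttt{?}^{ \ottnt{p} }   \ottsym{;}  \ottnt{i}$, cannot in fact pair with an uncoerced $\ottnt{U}$, since \rnp{CT-Proj} forces $\ottnt{A}  \ottsym{=}  \mathord{\star}$ and then canonical forms make $\ottnt{V}$ an injected value, hence coerced---so this subcase is discharged by the typing invariant rather than by a reduction rule. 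I expect this exhaustiveness argument, together with the parallel need in \rnp{T-Cmp} to know that $\ottnt{s}  \fatsemi  \ottnt{t}$ is actually defined (which follows from Lemma~\ref{lem:sc-cmp-welldef} applied to $\ottnt{s}  \ottsym{:}  \ottnt{A}  \rightsquigarrow  \ottnt{B}$ and $\ottnt{t}  \ottsym{:}  \ottnt{B}  \rightsquigarrow  \ottnt{C}$), to be the only nonmechanical parts; everything else is a replay of Progress for \lamS adapted to the two-argument calling convention, made simpler by the fact that \lamSx has ordinary evaluation contexts and no \rnp{R-MergeC} rule.
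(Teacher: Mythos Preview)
Your proposal is correct and follows essentially the same approach as the paper: induction on the typing derivation, a canonical-forms lemma (the paper's Lemma~\ref{lem:canonical-S1}), composition of evaluation contexts to lift steps, and Lemma~\ref{lem:sc-cmp-welldef} for the \rnp{T-Cmp} case. Your handling of \rnp{T-Crc} matches the paper's, which likewise argues that when $\ottnt{V}$ is an uncoerced $\ottnt{U}$ the source type must be nondynamic and hence the coercion is one of $\ottkw{id}$, $\bot^{\ottnt{G}\ottnt{p}\ottnt{H}}$, or a delayed $\ottnt{d}$; the paper even flags the same corner case you implicitly cover, namely that $\ottnt{U}$ may itself be a coercion value.
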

\fi

\begin{theorem}[name=Preservation,restate=thmPreservationSx]\label{thm:preservation-S1}
  If $  \emptyset     \vdash_{\mathsf{S_1} }    \ottnt{M}  :  \ottnt{A} $ and $ \ottnt{M}    \mathbin{  \evalto_{\mathsf{S_1} }  }    \ottnt{N} $, then
  $  \emptyset     \vdash_{\mathsf{S_1} }    \ottnt{N}  :  \ottnt{A} $.
\end{theorem}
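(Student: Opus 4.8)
The plan is to prove Theorem~\ref{thm:preservation-S1} by the standard \emph{subject reduction} argument, split into a statement about one-step reduction and one about evaluation contexts. First I would establish the usual auxiliary results for \lamSx: weakening and exchange for $ \vdash_{\mathsf{S_1}} $; a term-substitution lemma in the simultaneous form demanded by \rnp{R-Beta}, namely that if $ \Gamma  \ottsym{,}  \ottmv{x}  \ottsym{:}  \ottnt{A}  \ottsym{,}  \kappa  \ottsym{:}  \ottnt{B}  \rightsquigarrow  \ottnt{C}   \vdash_{\mathsf{S_1}}   \ottnt{M}  :  \ottnt{D} $, $ \Gamma    \vdash_{\mathsf{S_1}}   \ottnt{V}  :  \ottnt{A} $, and $ \Gamma    \vdash_{\mathsf{S_1}}   \ottnt{K}  :  \ottnt{B}  \rightsquigarrow  \ottnt{C} $, then $ \Gamma    \vdash_{\mathsf{S_1}}   \ottnt{M}  [  \ottmv{x}  \ottsym{:=}  \ottnt{V}  \ottsym{,}  \kappa  \ottsym{:=}  \ottnt{K}  ]  :  \ottnt{D} $; inversion lemmas for each typing rule, in particular for coerced values $\ottnt{U}  \langle\!\langle  \ottnt{d}  \rangle\!\rangle$ (from \rnp{T-CrcV}) and for the coercion typing of $\ottnt{s}  \Rightarrow  \ottnt{t}$ and $ \ottkw{id} _{ \ottnt{A} } $; and the straightforward \lamSx-analogue of Lemma~\ref{lem:sc-cmp-welldef} stating that $\ottnt{s}  \fatsemi  \ottnt{t}  \ottsym{:}  \ottnt{A}  \rightsquigarrow  \ottnt{C}$ whenever $\ottnt{s}  \ottsym{:}  \ottnt{A}  \rightsquigarrow  \ottnt{B}$ and $\ottnt{t}  \ottsym{:}  \ottnt{B}  \rightsquigarrow  \ottnt{C}$ (its proof carries over verbatim, since only \rnp{CC-Fun} changes and the new clause for $\ottnt{s}  \Rightarrow  \ottnt{t}$ is typed exactly like the old one for $\ottnt{s}  \rightarrow  \ottnt{t}$).

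Next I would prove a \emph{context-replacement lemma}: if $ \emptyset    \vdash_{\mathsf{S_1}}   \mathcal{E}  [  \ottnt{M}  ]  :  \ottnt{A} $, then there is a $\ottnt{B}$ with $ \emptyset    \vdash_{\mathsf{S_1}}   \ottnt{M}  :  \ottnt{B} $ and $ \emptyset    \vdash_{\mathsf{S_1}}   \mathcal{E}  [  \ottnt{N}  ]  :  \ottnt{A} $ for every $\ottnt{N}$ with $ \emptyset    \vdash_{\mathsf{S_1}}   \ottnt{N}  :  \ottnt{B} $; since the evaluation contexts of \lamSx are ordinary congruence contexts (for applications, operators, $ \ottkw{let} $, $ \mathbin{;\!;} $, and $\langle  \cdot  \rangle$), this is a routine induction on $\mathcal{E}$ using the inversion lemmas. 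I would then prove a reduction-preservation lemma---if $ \emptyset    \vdash_{\mathsf{S_1}}   \ottnt{M}  :  \ottnt{A} $ and $ \ottnt{M}    \reduces_{\mathsf{S_1}}    \ottnt{N} $ then $ \emptyset    \vdash_{\mathsf{S_1}}   \ottnt{N}  :  \ottnt{A} $---by cases on the reduction rule: \rnp{R-Op} uses the coherence assumption on $\delta$ and $ \metafun{ty} $; \rnp{R-Let} and \rnp{R-Beta} use the substitution lemma; \rnp{R-Cmp} uses the composition lemma via \rnp{T-Crcn} and \rnp{T-Cmp}; \rnp{R-Id} uses \rnp{CT-Id}; \rnp{R-Fail} uses \rnp{T-Blame}; \rnp{R-Crc} and \rnp{R-MergeV} use inversion of \rnp{T-Crc} and \rnp{T-CrcV} together with \rnp{T-Cmp}, noting that in \rnp{R-MergeV} the residual $\ottnt{U}  \langle  \ottnt{d}  \mathbin{;\!;}  \ottnt{t}  \rangle$ must be typed by \rnp{T-Crc} rather than \rnp{T-CrcV} because $\ottnt{d}  \fatsemi  \ottnt{t}$ need not be a delayed coercion; and \rnp{R-Wrap} combines inversion of the function-coercion typing $\ottnt{s}  \Rightarrow  \ottnt{t}  \ottsym{:}  \ottsym{(}  \ottnt{A}  \Rightarrow  \ottnt{B}  \ottsym{)}  \rightsquigarrow  \ottsym{(}  \ottnt{A'}  \Rightarrow  \ottnt{B'}  \ottsym{)}$ with \rnp{T-Let}, \rnp{T-Cmp}, \rnp{T-Crc}, and \rnp{T-App}, the crucial point being that $\ottnt{t}  \mathbin{;\!;}  \ottnt{W}$ is well typed because the target type $\ottnt{B'}$ of $\ottnt{t}$ matches the source type of the continuation coercion $\ottnt{W}$. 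Theorem~\ref{thm:preservation-S1} then follows by induction on the derivation of $ \ottnt{M}    \evalto_{\mathsf{S_1}}    \ottnt{N} $: for \rnp{E-Ctx} combine context-replacement with reduction-preservation, and for \rnp{E-Abort} apply \rnp{T-Blame}.

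The main obstacle is the \emph{target-type polymorphism} implicit in \rnp{T-Abs} and \rnp{T-App}, which surfaces in the \rnp{R-Beta} case. There, inversion of \rnp{T-App} on $ \ottsym{(}   \lambda  ( \ottmv{x} , \kappa ).\,  \ottnt{M}   \ottsym{)} \, ( \ottnt{V} , \ottnt{W} ) $ gives $ \emptyset    \vdash_{\mathsf{S_1}}    \lambda  ( \ottmv{x} , \kappa ).\,  \ottnt{M}   :  \ottnt{A}  \Rightarrow  \ottnt{B} $, $ \emptyset    \vdash_{\mathsf{S_1}}   \ottnt{V}  :  \ottnt{A} $, and $ \emptyset    \vdash_{\mathsf{S_1}}   \ottnt{W}  :  \ottnt{B}  \rightsquigarrow  \ottnt{C} $ with conclusion type $\ottnt{C}$, whereas inversion of \rnp{T-Abs} only yields $ \ottmv{x}  \ottsym{:}  \ottnt{A}  \ottsym{,}  \kappa  \ottsym{:}  \ottnt{B}  \rightsquigarrow  \ottmv{X}   \vdash_{\mathsf{S_1}}   \ottnt{M}  :  \ottmv{X} $ for some fresh $\ottmv{X}$ not occurring in $\ottnt{A}$ or $\ottnt{B}$. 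To bridge this I need a \emph{type-variable substitution lemma}: if $ \Gamma    \vdash_{\mathsf{S_1}}   \ottnt{M}  :  \ottnt{D} $ then $ \Gamma  [  \ottmv{X}  \ottsym{:=}  \ottnt{C}  ]    \vdash_{\mathsf{S_1}}   \ottnt{M}  :  \ottnt{D}  [  \ottmv{X}  \ottsym{:=}  \ottnt{C}  ] $, by induction on the typing derivation. The delicate case is \rnp{T-Abs}: the bound type variable of an inner abstraction must be kept disjoint from $\ottmv{X}$ and from the free type variables of $\ottnt{C}$, which is arranged by $\alpha$-renaming type variables so that the side condition ``$\ottmv{X}$ does not appear in $\Gamma,\ottnt{A},\ottnt{B}$'' remains satisfied after substitution. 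Instantiating with $[  \ottmv{X}  \ottsym{:=}  \ottnt{C}  ]$---which affects only the declared type of $\kappa$, since $\ottmv{X}$ occurs in neither $\ottnt{A}$ nor $\ottnt{B}$---gives $ \ottmv{x}  \ottsym{:}  \ottnt{A}  \ottsym{,}  \kappa  \ottsym{:}  \ottnt{B}  \rightsquigarrow  \ottnt{C}   \vdash_{\mathsf{S_1}}   \ottnt{M}  :  \ottnt{C} $, and the simultaneous term-substitution lemma then delivers $ \emptyset    \vdash_{\mathsf{S_1}}   \ottnt{M}  [  \ottmv{x}  \ottsym{:=}  \ottnt{V}  \ottsym{,}  \kappa  \ottsym{:=}  \ottnt{W}  ]  :  \ottnt{C} $. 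All remaining cases, together with the context-replacement and evaluation inductions, are routine bookkeeping.
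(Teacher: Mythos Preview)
Your proposal is correct and follows essentially the same route as the paper: a reduction-preservation lemma by case analysis on the reduction rule (relying on term substitution, type-variable substitution for the implicit polymorphism in \rnp{T-Abs}/\rnp{T-App}, and the composition lemma), followed by a context-replacement argument for \rnp{E-Ctx} and \rnp{T-Blame} for \rnp{E-Abort}. The only packaging differences are that the paper applies one-variable substitution twice in \rnp{R-Beta} rather than using a simultaneous form, and it handles the type-variable substitution in a single parenthetical remark rather than as a standalone lemma; your explicit treatment of this point is, if anything, more careful.
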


\ifluxuryspace
\begin{corollary}[name=Type Safety,restate=corSafetySx]\label{cor:safety-S1}
  If $  \emptyset     \vdash_{\mathsf{S_1} }    \ottnt{M}  :  \ottnt{A} $, then one of the following holds.
  \begin{enumerate}
  \item $ \ottnt{M}    \mathbin{  \evalto_{\mathsf{S_1} }  ^*}    \ottnt{V} $ and $  \emptyset     \vdash_{\mathsf{S_1} }    \ottnt{V}  :  \ottnt{A} $ for some $\ottnt{V}$.
  \item $ \ottnt{M}    \mathbin{  \evalto_{\mathsf{S_1} }  ^*}    \ottkw{blame} \, \ottnt{p} $ for some $\ottnt{p}$.
  \item $\ottnt{M} \,  \mathord{\Uparrow_{\mathsf{S_1} } } $.
  \end{enumerate}
\end{corollary}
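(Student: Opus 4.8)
The plan is to obtain Type Safety in the textbook manner, as an immediate corollary of Progress (Theorem~\ref{thm:progress-S1}), Preservation (Theorem~\ref{thm:preservation-S1}), and Determinacy (Lemma~\ref{lem:determinism-S1}). The argument is essentially identical to the one that establishes Corollary~\ref{cor:safety-S} for \lamS, since \lamSx enjoys the same three properties; so in practice I would transcribe that proof, replacing the subscript $\mathsf{S}$ by $\mathsf{S_1}$ throughout.

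Concretely, the steps are as follows. By Determinacy, each closed term has at most one $\evalto_{\mathsf{S_1}}$-successor, so from $M$ there is a unique maximal evaluation sequence, which is either infinite or of the form $M = M_0 \evalto_{\mathsf{S_1}} \cdots \evalto_{\mathsf{S_1}} M_k$ with $M_k$ having no $\evalto_{\mathsf{S_1}}$-successor. If it is infinite, then $M \, \mathord{\Uparrow_{\mathsf{S_1}}}$ holds by the definition of divergence, which is alternative~(3). If it is finite, a routine induction on $k$ --- the base case being the hypothesis $\emptyset \vdash_{\mathsf{S_1}} M : A$ and the inductive step an application of Preservation to $M_i \evalto_{\mathsf{S_1}} M_{i+1}$ --- yields $\emptyset \vdash_{\mathsf{S_1}} M_k : A$. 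Applying Progress to the well-typed term $M_k$ and discarding its reduction case (there is no successor) leaves exactly two possibilities: either $M_k = V$ for some value $V$, so that $M \evalto_{\mathsf{S_1}}^{*} V$ and $\emptyset \vdash_{\mathsf{S_1}} V : A$ --- alternative~(1), the typing being the one produced by the Preservation induction --- or $M_k = \ottkw{blame} \, p$, so that $M \evalto_{\mathsf{S_1}}^{*} \ottkw{blame} \, p$, which is alternative~(2).

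I do not anticipate a genuine obstacle, as all the real content resides in Progress and Preservation, which may be assumed here. The only points requiring a little care are keeping the case split faithful to the definition of $\mathord{\Uparrow_{\mathsf{S_1}}}$ (``there exists an infinite evaluation sequence from $M$''), where Determinacy conveniently lets me speak of \emph{the} evaluation sequence instead of reasoning about a branching reduction tree, and observing that the well-typedness hypotheses consumed by the successive appeals to Progress are exactly those delivered by iterating Preservation along the sequence.
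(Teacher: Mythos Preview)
Your proposal is correct and follows essentially the same approach as the paper, which simply cites Progress (Theorem~\ref{thm:progress-S1}) and Preservation (Theorem~\ref{thm:preservation-S1}); your explicit appeal to Determinacy is a harmless extra that merely streamlines the case split on the maximal evaluation sequence.
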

\else
\begin{corollary}[name=Type Safety,restate=corSafetySx]\label{cor:safety-S1}
  If $  \emptyset     \vdash_{\mathsf{S_1} }    \ottnt{M}  :  \ottnt{A} $, then one of the following holds:
  (1) $ \ottnt{M}    \mathbin{  \evalto_{\mathsf{S_1} }  ^*}    \ottnt{V} $ and $  \emptyset     \vdash_{\mathsf{S_1} }    \ottnt{V}  :  \ottnt{A} $ for some $\ottnt{V}$;
  (2) $ \ottnt{M}    \mathbin{  \evalto_{\mathsf{S_1} }  ^*}    \ottkw{blame} \, \ottnt{p} $ for some $\ottnt{p}$; or
  (3) $\ottnt{M} \,  \mathord{\Uparrow_{\mathsf{S_1} } } $.
\end{corollary}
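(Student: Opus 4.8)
The plan is to obtain Type Safety as a routine consequence of Progress (Theorem~\ref{thm:progress-S1}), Preservation (Theorem~\ref{thm:preservation-S1}), and Determinacy (Lemma~\ref{lem:determinism-S1}). Starting from a closed well-typed term $\ottnt{M}$ with $  \emptyset     \vdash_{\mathsf{S_1} }    \ottnt{M}  :  \ottnt{A} $, I would consider the evaluation sequence issuing from $\ottnt{M}$; by Determinacy this sequence is unique, so it is meaningful to speak of \emph{the} (maximal) reduction sequence $\ottnt{M} = \ottnt{M}_{0}  \mathbin{  \evalto_{\mathsf{S_1} }  }  \ottnt{M}_{1}  \mathbin{  \evalto_{\mathsf{S_1} }  }  \cdots$ starting from $\ottnt{M}$. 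The argument then splits on whether this sequence is infinite or finite.

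If the sequence is infinite, then by definition $\ottnt{M} \,  \mathord{\Uparrow_{\mathsf{S_1} } } $ and case~(3) holds. If it is finite, let $\ottnt{N}$ be its last term, so that $ \ottnt{M}    \mathbin{  \evalto_{\mathsf{S_1} }  ^*}    \ottnt{N} $ and there is no $\ottnt{N'}$ with $ \ottnt{N}    \mathbin{  \evalto_{\mathsf{S_1} }  }    \ottnt{N'} $. A straightforward induction on the number of evaluation steps from $\ottnt{M}$ to $\ottnt{N}$, invoking Preservation at each step, yields $  \emptyset     \vdash_{\mathsf{S_1} }    \ottnt{N}  :  \ottnt{A} $. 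Applying Progress to $\ottnt{N}$: since $\ottnt{N}$ admits no evaluation step, the first alternative of Theorem~\ref{thm:progress-S1} is ruled out, so either $\ottnt{N} = \ottnt{V}$ for some value $\ottnt{V}$, which together with $  \emptyset     \vdash_{\mathsf{S_1} }    \ottnt{V}  :  \ottnt{A} $ gives case~(1), or $\ottnt{N} = \ottkw{blame} \, \ottnt{p}$ for some $\ottnt{p}$, which gives case~(2).

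There is essentially no obstacle at the level of this corollary itself; all the substance lives in Progress and Preservation, already established. The only point that needs a moment of care is the finite-sequence case: one must observe that ``$\ottnt{N}$ cannot be evaluated'' combined with Progress being a genuine trichotomy (no stuck states) forces $\ottnt{N}$ to be exactly a value or a blame term. Determinacy is not strictly required to prove the disjunction, but it is convenient for phrasing the case split cleanly as ``the evaluation sequence is infinite or finite''; an alternative is to quantify over some maximal sequence throughout, and the same reasoning goes through.
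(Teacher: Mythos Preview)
Your proof is correct and follows the same approach as the paper, which simply cites Progress (Theorem~\ref{thm:progress-S1}) and Preservation (Theorem~\ref{thm:preservation-S1}); you have merely spelled out the standard argument that these two results combine to give type safety. Your invocation of Determinacy is a convenience rather than a necessity, as you yourself note.
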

\fi

\section{Translation into Coercion-Passing Style}
\label{sec:translation}

In this section, we formalize a translation into coercion-passing
style as a translation from \lamS to \lamSx and state its correctness.
As its name suggests, this translation is similar to transformations
into continuation-passing style (CPS transformations) for the
call-by-value
\(\lambda\)-calculus~\citep{DBLP:journals/tcs/Plotkin75}.

\subsection{Definition of Translation}
\label{subsec:trans-def}

\UseColorTranstrue  
\begin{figure}[tb]\small
  \textbf{Type translation} \hfill\fbox{$ \Psi (\maybebluetext{ \ottnt{A} })   \ottsym{=}  \ottnt{A'}$}
  \[
     \Psi (\maybebluetext{ \mathord{\star} })  =  \mathord{\star}  \hgap
     \Psi (\maybebluetext{ \iota })  = \iota \hgap
     \Psi (\maybebluetext{ \ottnt{A}  \rightarrow  \ottnt{B} })  =  \Psi (\maybebluetext{ \ottnt{A} })   \Rightarrow   \Psi (\maybebluetext{ \ottnt{B} }) 
  \]
  \begin{minipage}[t]{0.4\textwidth}
    \textbf{Coercion translation} \hfill\fbox{$ \Psi (\maybebluetext{ \ottnt{s} })   \ottsym{=}  \ottnt{s'}$}
    \begin{align*}
       \Psi (\maybebluetext{  \ottkw{id} _{ \ottnt{A} }  })  &=  \ottkw{id} _{  \Psi (\maybebluetext{ \ottnt{A} })  }  \\
       \Psi (\maybebluetext{ \ottnt{g}  \ottsym{;}   \ottnt{G} \texttt{!}  })  &=  \Psi (\maybebluetext{ \ottnt{g} })   \ottsym{;}    \Psi (\maybebluetext{ \ottnt{G} })  \texttt{!}  \\
       \Psi (\maybebluetext{  \ottnt{G} \texttt{?}^{ \ottnt{p} }   \ottsym{;}  \ottnt{i} })  &=   \Psi (\maybebluetext{ \ottnt{G} })  \texttt{?}^{ \ottnt{p} }   \ottsym{;}   \Psi (\maybebluetext{ \ottnt{i} })  \\
       \Psi (\maybebluetext{ \ottnt{s}  \rightarrow  \ottnt{t} })  &=  \Psi (\maybebluetext{ \ottnt{s} })   \Rightarrow   \Psi (\maybebluetext{ \ottnt{t} })  \\
       \Psi (\maybebluetext{  \bot^{ \ottnt{G}   \ottnt{p}   \ottnt{H} }  })  &=  \bot^{ \ottnt{G}   \ottnt{p}   \ottnt{H} } 
    \end{align*}
  \end{minipage}%
  \hfill
  \begin{minipage}[t]{0.5\textwidth}
    \textbf{Value translation} \hfill\fbox{$ \Psi (\maybebluetext{ \ottnt{V} })   \ottsym{=}  \ottnt{V'}$}
    \begin{align*}
       \Psi (\maybebluetext{ \ottmv{x} })  &= \ottmv{x} \\
       \Psi (\maybebluetext{ \ottnt{a} })  &= \ottnt{a} \\
       \Psi (\maybebluetext{  \lambda   \ottmv{x} .\,  \ottnt{M}  })  &=  \lambda  ( \ottmv{x} , \kappa ).\,  \ottsym{(}   \mathscr{K}\llbracket \maybebluetext{ \ottnt{M} } \rrbracket  \kappa   \ottsym{)}  \\
       \Psi (\maybebluetext{ \ottnt{U}  \langle\!\langle  \ottnt{d}  \rangle\!\rangle })  &=  \Psi (\maybebluetext{ \ottnt{U} })   \langle\!\langle   \Psi (\maybebluetext{ \ottnt{d} })   \rangle\!\rangle
    \end{align*}
  \end{minipage}
  \\
  \textbf{Term translation} \hfill\fbox{$ \mathscr{C}\llbracket \maybebluetext{ \ottnt{M} } \rrbracket   \ottsym{=}  \ottnt{M'}$}\quad\fbox{$ \mathscr{K}\llbracket \maybebluetext{ \ottnt{M} } \rrbracket  \ottnt{K}   \ottsym{=}  \ottnt{M'}$}
  \begin{align*}
     \mathscr{C}\llbracket \maybebluetext{ \ottnt{V} } \rrbracket  &=  \Psi (\maybebluetext{ \ottnt{V} })  &&&\rn{TrC-Val}\\
     \mathscr{C}\llbracket \maybebluetext{ \ottnt{M}  \langle  \ottnt{s}  \rangle } \rrbracket  &=  \mathscr{K}\llbracket \maybebluetext{ \ottnt{M} } \rrbracket   \Psi (\maybebluetext{ \ottnt{s} })   &&&\rn{TrC-Crc}\\
     \mathscr{C}\llbracket \maybebluetext{  \ottnt{M} ^{ \ottnt{A} }  } \rrbracket  &=  \mathscr{K}\llbracket \maybebluetext{ \ottnt{M} } \rrbracket   \ottkw{id} _{  \Psi (\maybebluetext{ \ottnt{A} })  }  
    &\text{otherwise} &&\rn{TrC-Else}
    \\[0.5em]
     \mathscr{K}\llbracket \maybebluetext{ \ottnt{V} } \rrbracket  \ottnt{K}  &=  \Psi (\maybebluetext{ \ottnt{V} })   \langle  \ottnt{K}  \rangle &&&\rn{Tr-Val}\\
     \mathscr{K}\llbracket \maybebluetext{ \ottnt{op}  \ottsym{(}  \ottnt{M}  \ottsym{,}  \ottnt{N}  \ottsym{)} } \rrbracket  \ottnt{K} 
    &= \ottnt{op}  \ottsym{(}   \mathscr{C}\llbracket \maybebluetext{ \ottnt{M} } \rrbracket   \ottsym{,}   \mathscr{C}\llbracket \maybebluetext{ \ottnt{N} } \rrbracket   \ottsym{)}  \langle  \ottnt{K}  \rangle &&&\rn{Tr-Op}\\
     \mathscr{K}\llbracket \maybebluetext{ \ottnt{M} \, \ottnt{N} } \rrbracket  \ottnt{K} 
    &=  \ottsym{(}   \mathscr{C}\llbracket \maybebluetext{ \ottnt{M} } \rrbracket   \ottsym{)} \, (  \mathscr{C}\llbracket \maybebluetext{ \ottnt{N} } \rrbracket  , \ottnt{K} )  &&&\rn{Tr-App}\\
     \mathscr{K}\llbracket \maybebluetext{ \ottnt{M}  \langle  \ottnt{s}  \rangle } \rrbracket  \ottnt{K}  &=  \ottkw{let} \,  \kappa =  \Psi (\maybebluetext{ \ottnt{s} })   \mathbin{;\!;}  \ottnt{K} \, \ottkw{in}\,  \ottsym{(}   \mathscr{K}\llbracket \maybebluetext{ \ottnt{M} } \rrbracket  \kappa   \ottsym{)} 
    &&&\rn{Tr-Crc}\\
     \mathscr{K}\llbracket \maybebluetext{ \ottkw{blame} \, \ottnt{p} } \rrbracket  \ottnt{K}  &= \ottkw{blame} \, \ottnt{p} &&&\rn{Tr-Blame}
  \end{align*}
  \caption{Translation into coercion-passing style (from \lamS to \lamSx).}
  \label{fig:transSS1}
\end{figure}
\UseColorTransfalse

We give the translation into coercion-passing style
by the translation rules presented in Figure~\ref{fig:transSS1}.
In order to distinguish metavariables of \lamS and \lamSx,
we often use \bluetext{blue} for the source calculus \lamS.
When we need static type information in translation rules, we write
$ \ottnt{M} ^{ \ottnt{A} } $ to indicate that term $\ottnt{M}$ has type $\ottnt{A}$.  Thus,
strictly speaking, the translation is defined for type derivations in
\lamS.

Translations for types $ \Psi (\maybebluetext{ \ottnt{A} }) $ and coercions $ \Psi (\maybebluetext{ \ottnt{s} }) $ are
very straightforward, thanks to the special type/coercion constructor
$ \Rightarrow $: they just recursively replace \ifluxuryspace type/coercion constructor \fi
$ \rightarrow $ with $ \Rightarrow $.

Value translation $ \Psi (\maybebluetext{ \ottnt{V} }) $ and term translation $ \mathscr{K}\llbracket \maybebluetext{ \ottnt{M} } \rrbracket  \ottnt{K} $ are
defined in a mutually recursive manner.  In $ \mathscr{K}\llbracket \maybebluetext{ \ottnt{M} } \rrbracket  \ottnt{K} $, $\ottnt{M}$ is a
\lamS-term whereas $\ottnt{K}$ is a \lamSx-term, which is either a
variable or a \lamSx-coercion.  $ \mathscr{K}\llbracket \maybebluetext{ \ottnt{M} } \rrbracket  \ottnt{K} $ returns a \lamSx-term---in
coercion-passing style---that applies $\ottnt{K}$ to the value of $\ottnt{M}$.

Value translation $ \Psi (\maybebluetext{ \ottnt{V} }) $ is straightforward:
every function $ \lambda   \ottmv{x} .\,  \ottnt{M} $ is translated to
a \lamSx-ab\-strac\-tion that takes as the second argument $\kappa$ a coercion which
is to be applied to the return value.
So, the body is translated by term translation $ \mathscr{K}\llbracket \maybebluetext{ \ottnt{M} } \rrbracket  \kappa $.

We now describe the translation for terms.
We write $ \mathscr{K}\llbracket \maybebluetext{ \ottnt{M} } \rrbracket  \ottnt{K} $ for the translation of \lamS-term $\ottnt{M}$
with continuation coercion $\ottnt{K}$.
We first explain the basic transformation scheme given by
the recursive function $\mathscr{K}'$ defined by the following simpler rules:
\UseColorTranstrue
\begin{align*}
   \mathscr{K}'\llbracket \maybebluetext{ \ottnt{V} } \rrbracket  \ottnt{K}  &=  \Psi (\maybebluetext{ \ottnt{V} })   \langle  \ottnt{K}  \rangle &\rnFakeTr{Val}\\
   \mathscr{K}'\llbracket \maybebluetext{ \ottnt{op}  \ottsym{(}   \ottnt{M} ^{ \iota_{{\mathrm{1}}} }   \ottsym{,}   \ottnt{N} ^{ \iota_{{\mathrm{2}}} }   \ottsym{)} } \rrbracket  \ottnt{K}  &= \ottnt{op}  \ottsym{(}   \mathscr{K}'\llbracket \maybebluetext{ \ottnt{M} } \rrbracket   \ottkw{id} _{ \iota_{{\mathrm{1}}} }    \ottsym{,}   \mathscr{K}'\llbracket \maybebluetext{ \ottnt{N} } \rrbracket   \ottkw{id} _{ \iota_{{\mathrm{2}}} }    \ottsym{)}  \langle  \ottnt{K}  \rangle &\rnFakeTr{Op}\\
   \mathscr{K}'\llbracket \maybebluetext{   \ottnt{M} ^{ \ottnt{A}  \rightarrow  \ottnt{B} }  \, \ottnt{N} ^{ \ottnt{A} }  } \rrbracket  \ottnt{K}  &=  \ottsym{(}   \mathscr{K}'\llbracket \maybebluetext{ \ottnt{M} } \rrbracket   \ottkw{id} _{  \Psi (\maybebluetext{ \ottnt{A}  \rightarrow  \ottnt{B} })  }    \ottsym{)} \, (  \mathscr{K}'\llbracket \maybebluetext{ \ottnt{N} } \rrbracket   \ottkw{id} _{  \Psi (\maybebluetext{ \ottnt{A} })  }   , \ottnt{K} )  &\rnFakeTr{App}\\
   \mathscr{K}'\llbracket \maybebluetext{ \ottnt{M}  \langle  \ottnt{s}  \rangle } \rrbracket  \ottnt{K}  &=  \ottkw{let} \,  \kappa =  \Psi (\maybebluetext{ \ottnt{s} })   \mathbin{;\!;}  \ottnt{K} \, \ottkw{in}\,  \ottsym{(}   \mathscr{K}'\llbracket \maybebluetext{ \ottnt{M} } \rrbracket  \kappa   \ottsym{)}  &\rnFakeTr{Crc}\\
   \mathscr{K}'\llbracket \maybebluetext{ \ottkw{blame} \, \ottnt{p} } \rrbracket  \ottnt{K}  &= \ottkw{blame} \, \ottnt{p} &\rnFakeTr{Blame}
\end{align*}
\UseColorTransfalse
(We put a prime on $\mathscr{K}$ to distinguish with the final version.)

The rule \rnpFakeTr{Val} applies to values $\ottnt{V}$, where we apply
coercion $\ottnt{K}$ to the result of value translation $ \Psi (\maybebluetext{ \ottnt{V} }) $.

The rule \rnpFakeTr{Op} applies to primitive operations $\ottnt{op}  \ottsym{(}  \ottnt{M}  \ottsym{,}  \ottnt{N}  \ottsym{)}$.
We translate the arguments $\ottnt{M}$ and $\ottnt{N}$ with identity
continuation coercions by $ \mathscr{K}'\llbracket \maybebluetext{ \ottnt{M} } \rrbracket  \ottkw{id} $ and $ \mathscr{K}'\llbracket \maybebluetext{ \ottnt{N} } \rrbracket  \ottkw{id} $ and pass
them to the primitive operation.  The given continuation coercion
$\ottnt{K}$ is applied to the result.  Translating subexpressions with
$\ottkw{id}$ is one of the main differences from CPS transformation.
While continuations in continuation-passing style capture the whole
rest of computation, continuation coercions in coercion-passing style
capture only the coercion applied right after the current computation.
Since neither $\ottnt{M}$ nor $\ottnt{N}$ is surrounded by a coercion, they
are translated with identity coercions of appropriate types.  (Cases
where a subexpression itself is a coercion application will be
discussed shortly.)  Careful readers may notice at this point that
left-to-right evaluation of arguments is enforced by the semantics (or
the definition of evaluation contexts) of \lamS, not by the translation.
In other words, the correctness of the translation relies on the fact that
\lamS evaluation is left-to-right and call-by-value.  This is another
point that is different from CPS transformation, which dismisses the
distinction of call-by-name and call-by-value.

The rule \rnpFakeTr{App} applies to function applications $\ottnt{M} \, \ottnt{N}$.  We
translate function $\ottnt{M}$ and argument $\ottnt{N}$ with identity
continuation coercions just like the case for primitive operations.
We then pass the continuation coercion $\ottnt{K}$ as the second argument to
function $ \mathscr{K}'\llbracket \maybebluetext{ \ottnt{M} } \rrbracket  \ottkw{id} $.

The rule \rnpFakeTr{Crc} applies to coercion applications $\ottnt{M}  \langle  \ottnt{s}  \rangle$.
We can think of the sequential composition of $ \Psi (\maybebluetext{ \ottnt{s} }) $ and $\ottnt{K}$
as the continuation coercion for $\ottnt{M}$.  Thus, we first compute the
composition $ \Psi (\maybebluetext{ \ottnt{s} })   \mathbin{;\!;}  \ottnt{K}$, bind its result to $\kappa$, and
translate $\ottnt{M}$ with continuation $\kappa$.  The let-expression is
necessary to compose $ \Psi (\maybebluetext{ \ottnt{s} }) $ and $\ottnt{K}$ before evaluating
$ \mathscr{K}'\llbracket \maybebluetext{ \ottnt{M} } \rrbracket  \kappa $.  In general, it is not necessarily the case that
$ \mathscr{K}'\llbracket \maybebluetext{ \ottnt{M} } \rrbracket  \ottnt{K} $ evaluates $\ottnt{K}$ first, so if we set
$ \mathscr{K}'\llbracket \maybebluetext{ \ottnt{M}  \langle  \ottnt{s}  \rangle } \rrbracket  \ottnt{K}  = \ottsym{(}   \mathscr{K}'\llbracket \maybebluetext{ \ottnt{M} } \rrbracket  \ottsym{(}   \Psi (\maybebluetext{ \ottnt{s} })   \mathbin{;\!;}  \ottnt{K}  \ottsym{)}   \ottsym{)}$, then the order of
computation would change by the translation and correctness of translation
would be harder to show.

Lastly, the rule \rnpFakeTr{Blame} means that continuation $\ottnt{K}$ is discarded for $\ottkw{blame} \, \ottnt{p}$.

The translation $\mathscr{K}'$ seems acceptable but, just as na\"ive CPS
transformation leaves administrative redexes, it leaves many
applications of $\ottkw{id}$, which we call \emph{administrative
  coercions}.  We expect $\ottnt{M}$ and $ \mathscr{K}'\llbracket \maybebluetext{ \ottnt{M} } \rrbracket  \ottnt{K} $ to ``behave similarly''
but administrative redexes make it hard to show such semantic
correspondence.
Therefore, we will optimize the translation so that
administrative coercions are eliminated, similarly to CPS
transformations that eliminate administrative
redexes~\cite{DBLP:journals/tcs/Plotkin75, DBLP:books/cu/Appel1992,
  DBLP:conf/mfps/Wand91, DBLP:journals/lisp/SabryF93,
  DBLP:journals/tcs/DanvyN03, DBLP:journals/mscs/DanvyF92,
  DBLP:journals/toplas/SabryW97}.

The bottom of Figure~\ref{fig:transSS1} shows the optimized
translation rules.  The idea to eliminate administrative coercions is
close to the colon translation by
Plotkin~\cite{DBLP:journals/tcs/Plotkin75}: we avoid translating
values with administrative coercions.  So, we introduce an auxiliary
translation function $ \mathscr{C}\llbracket \maybebluetext{ \ottnt{M} } \rrbracket $, which, if $\ottnt{M}$ is a value
$\ottnt{V}$, returns $ \Psi (\maybebluetext{ \ottnt{V} }) $---without a coercion application---and,
if $\ottnt{M}$ is a coercion application $\ottnt{N}  \langle  \ottnt{s}  \rangle$, returns
$ \mathscr{K}\llbracket \maybebluetext{ \ottnt{N} } \rrbracket   \Psi (\maybebluetext{ \ottnt{s} })  $---with the trivial composition $ \Psi (\maybebluetext{ \ottnt{s} })   \fatsemi  \ottkw{id}$
optimized away---and returns $ \mathscr{K}\llbracket \maybebluetext{ \ottnt{M} } \rrbracket  \ottkw{id} $ otherwise.  Translation
rules for primitive operations and function applications are adapted
so that they use $ \mathscr{C}\llbracket \maybebluetext{ \ottnt{M} } \rrbracket $ to translate subexpressions.

In other words, $ \mathscr{C}\llbracket \maybebluetext{ \ottnt{M} } \rrbracket $ helps us precisely distinguish
between $\ottkw{id}$ introduced by the translation
and $\ottkw{id}$ that was present in the original term.
Whenever we introduce $\ottkw{id}$ as an initial coercion for the translation,
we first apply $ \mathscr{C}\llbracket \maybebluetext{ \ottnt{M} } \rrbracket $ and then apply $ \mathscr{K}\llbracket \maybebluetext{ \ottnt{M} } \rrbracket  \ottkw{id} $ only if necessary.
We note that $  \mathscr{K}\llbracket \maybebluetext{ \ottnt{M} } \rrbracket  \ottkw{id}     \mathbin{  \evalto_{\mathsf{S_1} }  }     \mathscr{C}\llbracket \maybebluetext{ \ottnt{M} } \rrbracket  $ holds. \iffull(Lemma~\ref{lem:trans-id-CV})\fi

We present a few examples of the translation below:
\begin{align*}
  \ifluxuryspace
   \Psi (\maybebluetext{ \ottsym{5} })  &= \ottsym{5} \\
  \fi
   \Psi (\maybebluetext{  \lambda   \ottmv{x} .\,  \ottmv{x}  \ottsym{+}  \ottsym{1}  })  &=  \lambda  ( \ottmv{x} , \kappa ).\,  \ottsym{(}  \ottmv{x}  \ottsym{+}  \ottsym{1}  \ottsym{)}  \langle  \kappa  \rangle  \\
   \mathscr{K}\llbracket \maybebluetext{ \ottsym{(}   \lambda   \ottmv{x} .\,  \ottmv{x}   \ottsym{)} \, \ottsym{5} } \rrbracket   \ottkw{int} \texttt{!}   &=  \ottsym{(}   \lambda  ( \ottmv{x} , \kappa ).\,  \ottmv{x}   \langle  \kappa  \rangle  \ottsym{)} \, ( \ottsym{5} ,  \ottkw{int} \texttt{!}  )  \\
   \mathscr{K}\llbracket \maybebluetext{ \ottsym{(}  \ottsym{(}   \lambda   \ottmv{x} .\,  \ottmv{x}   \ottsym{)} \, \ottsym{5}  \ottsym{)}  \langle   \ottkw{int} \texttt{!}   \rangle } \rrbracket   \ottkw{int} \texttt{?}^{ \ottnt{p} }   &=   \ottkw{let} \,  \kappa =  \ottkw{int} \texttt{!}   \mathbin{;\!;}   \ottkw{int} \texttt{?}^{ \ottnt{p} }  \, \ottkw{in}\,  \ottsym{(}   \lambda  ( \ottmv{x} , \kappa ).\,  \ottmv{x}   \langle  \kappa  \rangle  \ottsym{)}  \, ( \ottsym{5} , \kappa ) 
\end{align*}

The following example shows the translation of the \lamS-term
in Example~\ref{example:source} will be the \lamSx-term in Example~\ref{example:target}.

\begin{example}\label{example:trans}
  Let $\ottnt{U}$ be a \lamS-term $ \lambda   \ottmv{x} .\,  \ottsym{(}  \ottmv{x}  \langle   \ottkw{int} \texttt{?}^{ \ottnt{p} }   \rangle  \ottsym{+}  \ottsym{2}  \ottsym{)}  \langle   \ottkw{int} \texttt{!}   \rangle $.
\[
\begin{array}{rll}
   \Psi (\maybebluetext{ \ottnt{U} })  & = &  \lambda  ( \ottmv{x} , \kappa ).\,  \ottsym{(}   \mathscr{K}\llbracket \maybebluetext{ \ottsym{(}  \ottmv{x}  \langle   \ottkw{int} \texttt{?}^{ \ottnt{p} }   \rangle  \ottsym{+}  \ottsym{2}  \ottsym{)}  \langle   \ottkw{int} \texttt{!}   \rangle } \rrbracket  \kappa   \ottsym{)}  \\
             &= &  \lambda  ( \ottmv{x} , \kappa ).\,   \ottkw{let} \,  \kappa' =  \ottkw{int} \texttt{!}   \mathbin{;\!;}  \kappa \, \ottkw{in}\,  \ottsym{(}   \mathscr{K}\llbracket \maybebluetext{ \ottsym{(}  \ottmv{x}  \langle   \ottkw{int} \texttt{?}^{ \ottnt{p} }   \rangle  \ottsym{+}  \ottsym{2}  \ottsym{)} } \rrbracket  \kappa'   \ottsym{)}   \\
&= &  \lambda  ( \ottmv{x} , \kappa ).\,   \ottkw{let} \,  \kappa' =  \ottkw{int} \texttt{!}   \mathbin{;\!;}  \kappa \, \ottkw{in}\,  \ottsym{(}  \ottmv{x}  \langle   \ottkw{int} \texttt{?}^{ \ottnt{p} }   \rangle  \ottsym{+}  \ottsym{2}  \ottsym{)}   \langle  \kappa'  \rangle  \\ \\
 \mathscr{K}\llbracket \maybebluetext{ \ottsym{(}  \ottsym{(}  \ottnt{U}  \langle   \ottkw{int} \texttt{!}   \rightarrow   \ottkw{int} \texttt{?}^{ \ottnt{p} }   \rangle  \ottsym{)} \, \ottsym{3}  \ottsym{)} } \rrbracket  \ottkw{id} 
&= &  \ottsym{(}   \mathscr{K}\llbracket \maybebluetext{ \ottsym{(}  \ottnt{U}  \langle   \ottkw{int} \texttt{!}   \rightarrow   \ottkw{int} \texttt{?}^{ \ottnt{p} }   \rangle  \ottsym{)} } \rrbracket  \ottkw{id}   \ottsym{)} \, (  \mathscr{K}\llbracket \maybebluetext{ \ottsym{3} } \rrbracket  \ottkw{id}  , \ottkw{id} )  \\
&= &  \ottsym{(}   \mathscr{K}\llbracket \maybebluetext{ \ottnt{U} } \rrbracket  \ottsym{(}   \ottkw{int} \texttt{!}   \rightarrow   \ottkw{int} \texttt{?}^{ \ottnt{p} }   \ottsym{)}   \ottsym{)} \, ( \ottsym{3} , \ottkw{id} )  \\
&= &  \ottsym{(}   \Psi (\maybebluetext{ \ottnt{U} })   \langle   \ottkw{int} \texttt{!}   \Rightarrow   \ottkw{int} \texttt{?}^{ \ottnt{p} }   \rangle  \ottsym{)} \, ( \ottsym{3} , \ottkw{id} ) 
\end{array}
\]
\end{example}

\subsection{Correctness of Translation}
\label{subsec:correctness}

Having defined the translation, we now state its correctness properties with auxiliary lemmas.
\iffull
(Their proofs are in Appendix~\ref{sec:appendix}.)
\else
(Their proofs are in the full version.)
\fi

To begin with, the translation preserves typing.
Here, we write $ \Psi ( \Gamma ) $ for the type environment
\ifluxuryspace
satisfying the following:
\begin{center}
  $ ( \ottmv{x}  :  \ottnt{A} ) \in  \Gamma $ if and only if $ ( \ottmv{x}  :   \Psi (\maybebluetext{ \ottnt{A} })  ) \in   \Psi ( \Gamma )  $.
\end{center}
\else
satisfying:
  $ ( \ottmv{x}  :  \ottnt{A} ) \in  \Gamma $ if and only if $ ( \ottmv{x}  :   \Psi (\maybebluetext{ \ottnt{A} })  ) \in   \Psi ( \Gamma )  $.
\fi

\begin{theorem}[name=Translation Preserves Typing,restate=thmTransTyping] \label{thm:trans-typing}
  \leavevmode
  \begin{enumerate}
  \item   If $ \Gamma    \vdash_{\mathsf{S} }    \ottnt{M}  :  \ottnt{A} $ and $\ottnt{s}  \ottsym{:}  \ottnt{A}  \rightsquigarrow  \ottnt{B}$ ,
  then $  \Psi ( \Gamma )     \vdash_{\mathsf{S_1} }    \ottsym{(}   \mathscr{K}\llbracket \maybebluetext{ \ottnt{M} } \rrbracket   \Psi (\maybebluetext{ \ottnt{s} })    \ottsym{)}  :   \Psi (\maybebluetext{ \ottnt{B} })  $.
  \item If $ \Gamma    \vdash_{\mathsf{S} }    \ottnt{V}  :  \ottnt{A} $,
  then $  \Psi ( \Gamma )     \vdash_{\mathsf{S_1} }     \Psi (\maybebluetext{ \ottnt{V} })   :   \Psi (\maybebluetext{ \ottnt{A} })  $.
  \end{enumerate}
\end{theorem}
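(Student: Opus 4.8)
The plan is to prove a strengthened statement by simultaneous induction on the \lamS typing derivation, splitting it into three clauses: (a) if $\Gamma \vdash_{\mathsf{S}} M : A$ and $\Delta \vdash_{\mathsf{S_1}} K : \Psi(A) \rightsquigarrow C$ for some \lamSx-type $C$ and some type environment $\Delta$ extending $\Psi(\Gamma)$, then $\Delta \vdash_{\mathsf{S_1}} \mathscr{K}\llbracket M\rrbracket K : C$; (b) if $\Gamma \vdash_{\mathsf{S}} M : A$ then $\Psi(\Gamma) \vdash_{\mathsf{S_1}} \mathscr{C}\llbracket M\rrbracket : \Psi(A)$; and (c) if $\Gamma \vdash_{\mathsf{S}} V : A$ then $\Psi(\Gamma) \vdash_{\mathsf{S_1}} \Psi(V) : \Psi(A)$. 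Part 2 of the theorem is clause (c) verbatim, and part 1 follows from clause (a) by taking $\Delta = \Psi(\Gamma)$, $C = \Psi(B)$, and $K = \Psi(s)$, which is typable by \rnp{T-Crcn} once we know the translated coercion is well-formed. The extra clause (b) is needed because the optimized rules \rnp{TrC-Crc} and \rnp{TrC-Else} feed $\mathscr{C}\llbracket\cdot\rrbracket$ into \rnp{Tr-Op}/\rnp{Tr-App}, and clause (a) must be generalized to an arbitrary continuation term $K$ (not merely $\Psi(s)$) with an arbitrary codomain type $C$ because of \rnp{Tr-Crc} (where $K$ becomes a fresh variable $\kappa$) and \rnp{T-Abs} (where $C$ becomes a fresh type variable).

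Before the main induction I would establish two auxiliary facts. First, a lemma that $\Psi(\cdot)$ maps well-formed \lamS coercions to well-formed \lamSx coercions of the translated types: if $c : A \rightsquigarrow B$ then $\Psi(c) : \Psi(A) \rightsquigarrow \Psi(B)$. This is a routine induction on the derivation of $c : A \rightsquigarrow B$, using that $\Psi(\cdot)$ sends ground types to ground types ($\Psi(\iota) = \iota$ and $\Psi(\mathord{\star} \rightarrow \mathord{\star}) = \mathord{\star} \Rightarrow \mathord{\star}$) so that \rnp{CT-Inj}, \rnp{CT-Proj}, and \rnp{CT-Fail} still apply, and that the new \rnp{CT-Fun} of \lamSx mirrors the old one under $\rightarrow \mapsto \Rightarrow$. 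As part of the same lemma I would check that $\Psi(\cdot)$ preserves the syntactic subclasses of coercions (space-efficient, intermediate, ground, delayed), so that a translated delayed coercion is still delayed and can reappear inside a coerced value. Second, I would use the standard weakening property for $\vdash_{\mathsf{S_1}}$ so that terms typed under $\Psi(\Gamma)$ in clauses (b) and (c) can be retyped under the larger environment $\Delta$ that arises in clause (a).

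The induction then proceeds by case analysis on the last \lamS typing rule (and, for $\mathscr{K}$ and $\mathscr{C}$, on which translation clause fires). The value rules \rnp{T-Const}, \rnp{T-Var}, \rnp{T-CrcV} go through immediately using clause (c) and, for \rnp{T-CrcV}, the coercion lemma together with \lamSx's \rnp{T-CrcV}; \rnp{T-Abs} is the interesting value case, where $\Psi(\lambda x.M) = \lambda(x,\kappa).(\mathscr{K}\llbracket M\rrbracket\kappa)$ is typed by \lamSx's \rnp{T-Abs}, discharging the premise by clause (a) applied to $M$ with continuation $\kappa : \Psi(B) \rightsquigarrow X$ for a fresh type variable $X$ (this is exactly why clause (a) must allow $C$ to be a type variable, and why $X$ is required not to occur in $\Psi(\Gamma),\Psi(A),\Psi(B)$). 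For $\mathscr{K}$: \rnp{Tr-Val} uses (c) plus \rnp{T-Crc}; \rnp{Tr-Op} and \rnp{Tr-App} use (b) on the subterms plus \rnp{T-Op}/\rnp{T-App} (and \rnp{T-Crc} for \rnp{Tr-Op}), with weakening to lift (b) into $\Delta$; \rnp{Tr-Crc} is the key case, where $\Psi(s)\mathbin{;\!;}K$ is typed by \rnp{T-Cmp} (using the coercion lemma and \rnp{T-Crcn}), bound by \rnp{T-Let} to $\kappa$, and clause (a) is reapplied to $M$ with continuation $\kappa$ in the extended environment $\Delta, \kappa : \Psi(A')\rightsquigarrow C$; and \rnp{Tr-Blame} is immediate since $\ottkw{blame}\,\ottnt{p}$ occurs only in closed \lamS-terms and retypes at any type. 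For $\mathscr{C}$: \rnp{TrC-Val} is clause (c), \rnp{TrC-Crc} is exactly part 1 applied to the subderivation, and \rnp{TrC-Else} is clause (a) instantiated with $K = \ottkw{id}_{\Psi(A)}$ (well-formed by \rnp{CT-Id}). The recursion is well-founded because every appeal to the induction hypothesis is on a proper subderivation of $\Gamma \vdash_{\mathsf{S}} M : A$.

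I expect the main difficulty to be bookkeeping rather than anything deep: getting the strengthened invariant precisely right so that the three mutually recursive translations ($\Psi(\cdot)$ on values, $\mathscr{C}\llbracket\cdot\rrbracket$, and $\mathscr{K}\llbracket\cdot\rrbracket$) line up, and in particular threading the enlarged environment $\Delta$ through \rnp{Tr-Crc} and tracking the freshness of the type variable $X$ introduced in \rnp{T-Abs}. A secondary subtlety is the part of the auxiliary coercion lemma asserting that $\Psi(\cdot)$ preserves the coercion side conditions (e.g. ``$\ottnt{s} \ne \ottkw{id}$ or $\ottnt{t} \ne \ottkw{id}$'' for ground and delayed coercions), which must be verified so that \rnp{T-CrcV} of \lamSx genuinely applies to translated coerced values.
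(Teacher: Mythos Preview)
Your proposal is correct and follows the same simultaneous-induction strategy the paper sketches in a single line. Your explicit strengthening of part~1 to an arbitrary continuation $K$ typed in a possibly larger environment $\Delta$, together with the auxiliary clause~(b) for $\mathscr{C}\llbracket\cdot\rrbracket$, are exactly the details one must supply to make the \rnp{T-Abs} and \rnp{Tr-Crc} cases go through, and the paper simply leaves them implicit.
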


As for the preservation of semantics, we will prove the following theorem that states
the semantics is preserved by the translation:

\ifluxuryspace
\begin{theorem}[name=Translation Preserves Semantics,restate=thmTransSem] \label{thm:trans-correctness}
  If $  \emptyset     \vdash_{\mathsf{S} }    \ottnt{M}  :  \iota $, then
  \begin{enumerate}
  \item $ \ottnt{M}    \mathbin{  \evalto_{\mathsf{S} }  ^*}    \ottnt{a} $ iff $  \mathscr{C}\llbracket \maybebluetext{ \ottnt{M} } \rrbracket     \mathbin{  \evalto_{\mathsf{S_1} }  ^*}    \ottnt{a} $;
  \item $ \ottnt{M}    \mathbin{  \evalto_{\mathsf{S} }  ^*}    \ottkw{blame} \, \ottnt{p} $ iff $  \mathscr{C}\llbracket \maybebluetext{ \ottnt{M} } \rrbracket     \mathbin{  \evalto_{\mathsf{S_1} }  ^*}    \ottkw{blame} \, \ottnt{p} $; and
  \item $\ottnt{M} \,  \mathord{\Uparrow_{\mathsf{S} } } $ iff $ \mathscr{C}\llbracket \maybebluetext{ \ottnt{M} } \rrbracket  \,  \mathord{\Uparrow_{\mathsf{S_1} } } $.
  \end{enumerate}
\end{theorem}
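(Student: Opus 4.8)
The plan is to prove Theorem~\ref{thm:trans-correctness} by a simulation argument, using the determinacy and type-safety results of both calculi (Lemmas~\ref{lem:determinism-S} and \ref{lem:determinism-S1} and Corollaries~\ref{cor:safety-S} and \ref{cor:safety-S1}) to upgrade a one-directional (forward) simulation into the ``iff'' statements. The forward simulation will show that every $\lamS$ reduction step is matched by $\lamSx$, and that the $\mathsf{e}$/$\mathsf{c}$ labels are respected in a controlled way (each $\mathsf{e}$-step produces exactly one $\mathsf{e}$-step, each $\mathsf{c}$-step only $\mathsf{c}$-steps); this gives the three left-to-right implications. For the converses I would argue by cases on the behaviour of the source term $\ottnt{M}$ given by Corollary~\ref{cor:safety-S} (it reaches a value, reaches $\ottkw{blame}\,p$, or diverges), push each case through the forward simulation, and invoke determinacy of $\lamSx$ (Lemma~\ref{lem:determinism-S1}) to rule out the incompatible cases and to pin down the resulting constant or blame label.

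First I would assemble the auxiliary lemmas. I expect to need: (i) $\Psi$ commutes with value substitution, i.e. $ \mathscr{C}\llbracket \ottnt{M}  [  \ottmv{x}  \ottsym{:=}  \ottnt{V}  ] \rrbracket  = ( \mathscr{C}\llbracket \ottnt{M} \rrbracket ) [  \ottmv{x}  \ottsym{:=}   \Psi (\ottnt{V})   ]$, and likewise for $ \mathscr{K}\llbracket \cdot \rrbracket $; (ii) $\Psi$ is a homomorphism for coercion composition, $ \Psi (\ottnt{s}  \fatsemi  \ottnt{t})  =  \Psi (\ottnt{s})   \fatsemi   \Psi (\ottnt{t}) $, together with associativity of $ \fatsemi $ on well-typed coercions (cf.\ Lemma~\ref{lem:sc-cmp-welldef}); these are needed because one step of \rnp{R-MergeC} in $\lamS$ must be matched by \rnp{R-Cmp} followed by \rnp{R-Let} in $\lamSx$, which reassociates the accumulated coercions; (iii) a context-decomposition lemma that, given a $\lamS$ evaluation context $\mathcal{E}$ and the redex it surrounds, exhibits a matching $\lamSx$ evaluation context after translation, so that e.g.\ a $\lamS$ redex sitting under a chain $\mathcal{F}  [  \square \, \langle  \ottnt{s}  \rangle  ]$ corresponds to a $\lamSx$ state where the surrounding coercion has been composed into a single continuation coercion; and (iv) the administrative-coercion lemma $  \mathscr{K}\llbracket \ottnt{M} \rrbracket  \ottkw{id}     \mathbin{  \evalto_{\mathsf{S_1} }  }     \mathscr{C}\llbracket \ottnt{M} \rrbracket  $ (Lemma~\ref{lem:trans-id-CV}), explaining why the optimized translation differs from the naive $\mathscr{K}'$ only by removable steps.

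Next I would define a relation $\mathrel{\approx}$ between closed well-typed $\lamS$ terms and $\lamSx$ terms with $\ottnt{M}  \approx   \mathscr{C}\llbracket \ottnt{M} \rrbracket $, with $\ottkw{blame}\,p$ and constants related to themselves, and closed under reduction in the controlled sense: if $\ottnt{M}  \approx  \ottnt{N}$ and $ \ottnt{M}    \mathbin{\accentset{\mathsf{e} }{\evalto}_{\mathsf{S} } }    \ottnt{M'} $ then $\ottnt{N}  \mathbin{  \accentset{\mathsf{c} }{\evalto}_{\mathsf{S_1} }  ^*} \cdot  \accentset{\mathsf{e} }{\evalto}_{\mathsf{S_1} }  \cdot  \mathbin{  \accentset{\mathsf{c} }{\evalto}_{\mathsf{S_1} }  ^*} \, \ottnt{N'}$ (in particular at least one $\mathsf{e}$-step) with $\ottnt{M'}  \approx  \ottnt{N'}$, and if $ \ottnt{M}    \mathbin{\accentset{\mathsf{c} }{\evalto}_{\mathsf{S} } }    \ottnt{M'} $ then $\ottnt{N}  \mathbin{  \accentset{\mathsf{c} }{\evalto}_{\mathsf{S_1} }  ^*} \ottnt{N'}$ with $\ottnt{M'}  \approx  \ottnt{N'}$. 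The simulation lemma is proved by case analysis on the $\lamS$ rule, using (iii) for the context, (i) for \rnp{R-Beta}, and (ii) for \rnp{R-MergeC} and \rnp{R-MergeV}; the wrapped-function case \rnp{R-Wrap} is matched by $\lamSx$'s \rnp{R-Wrap} (an $\mathsf{e}$-step) possibly followed by $\mathsf{c}$-steps (\rnp{R-Cmp}, \rnp{R-Let}) cleaning up the introduced $\ottkw{let}$. The theorem then follows: $\mathsf{c}$-evaluation in $\lamS$ is strongly normalizing (a lexicographic measure on the number of coercion applications strictly decreases), so an infinite $\lamS$ run has infinitely many $\mathsf{e}$-steps, hence $ \mathscr{C}\llbracket \ottnt{M} \rrbracket $ performs infinitely many $\mathsf{e}$-steps and diverges; and a finite $\lamS$ run ending in a value or $\ottkw{blame}\,p$ gives, via $\approx$, a finite $\lamSx$ run ending in the $\approx$-related term, which at base type $\iota$ is the same constant (the only closed $\lamS$-values of type $\iota$ are constants) or the same $\ottkw{blame}\,p$.

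The main obstacle will be the definition and invariance of $\approx$. The relation must absorb the administrative coercions the optimized translation introduces ``on demand'' (the $ \mathscr{C}\llbracket \cdot \rrbracket $ versus $ \mathscr{K}\llbracket \cdot \rrbracket  \ottkw{id} $ discrepancy) and the $\ottkw{let}$-bindings arising from \rnp{Tr-Crc} and from $\lamSx$'s \rnp{R-Wrap}, while simultaneously enforcing the label discipline on which the divergence and convergence transfers rest. Getting the context-decomposition lemma (iii) exactly right — reconciling the ``inside-out'' evaluation contexts of $\lamS$, with their restriction forbidding consecutive coercion applications, against the ordinary left-to-right contexts of $\lamSx$, and showing that a nested stack $\mathcal{F}  [  \square \, \langle  \ottnt{s}  \rangle  ]$ in $\lamS$ corresponds precisely to a single accumulated continuation coercion in $\lamSx$ — is the technically delicate step, since it is exactly where the eager-composition behaviour of the two calculi is made to coincide.
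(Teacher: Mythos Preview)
Your approach is essentially the paper's: a forward simulation through the translation (the paper's Lemmas~\ref{lem:trans-ctx}, \ref{lem:subst-trans}, \ref{lem:trans-reduce}, \ref{lem:trans-eval}) followed by type safety of \lamS\ plus determinacy of \lamSx\ to obtain the converse directions. Two small deviations are worth noting: the paper proves the sharper simulation statement that a source $\mathsf{c}$-step yields $\mathsf{c}^+$ (not merely $\mathsf{c}^*$) and an $\mathsf{e}$-step yields $\mathsf{e}\cdot\mathsf{c}^*$ with no leading $\mathsf{c}^*$, so divergence preservation is immediate from ``one step to at least one step'' without invoking strong normalization of $\mathsf{c}$-evaluation; and associativity of $\fatsemi$ is never used---the homomorphism $\Psi(s\fatsemi t)=\Psi(s)\fatsemi\Psi(t)$ together with the substitution lemma for $\kappa$ already handles the \rnp{R-MergeC} case.
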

\else
\begin{theorem}[name=Translation Preserves Semantics,restate=thmTransSem] \label{thm:trans-correctness}
  If $  \emptyset     \vdash_{\mathsf{S} }    \ottnt{M}  :  \iota $, then
  (1) $ \ottnt{M}    \mathbin{  \evalto_{\mathsf{S} }  ^*}    \ottnt{a} $ iff $  \mathscr{C}\llbracket \maybebluetext{ \ottnt{M} } \rrbracket     \mathbin{  \evalto_{\mathsf{S_1} }  ^*}    \ottnt{a} $;
  (2) $ \ottnt{M}    \mathbin{  \evalto_{\mathsf{S} }  ^*}    \ottkw{blame} \, \ottnt{p} $ iff $  \mathscr{C}\llbracket \maybebluetext{ \ottnt{M} } \rrbracket     \mathbin{  \evalto_{\mathsf{S_1} }  ^*}    \ottkw{blame} \, \ottnt{p} $; and
  (3) $\ottnt{M} \,  \mathord{\Uparrow_{\mathsf{S} } } $ iff $ \mathscr{C}\llbracket \maybebluetext{ \ottnt{M} } \rrbracket  \,  \mathord{\Uparrow_{\mathsf{S_1} } } $.
\end{theorem}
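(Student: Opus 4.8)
The plan is to prove all three equivalences from a single \emph{forward simulation lemma}, together with type safety of both calculi and determinacy of \lamSx. For a closed \lamS-term $M$ of base type $\iota$, type safety of \lamS (Corollary~\ref{cor:safety-S}) says $M$ either reduces to a constant $a$, reduces to $\ottkw{blame}\,p$, or diverges, and these are mutually exclusive; since $\mathscr{C}\llbracket M \rrbracket$ is well typed at $\iota$ (Theorem~\ref{thm:trans-typing}, using that a closed value of base type is a constant) the same trichotomy holds for it via Corollary~\ref{cor:safety-S1}. Hence it suffices to prove the three ``$\Rightarrow$'' directions: $M \evalto_{\mathsf{S}}^{*} a \Rightarrow \mathscr{C}\llbracket M \rrbracket \evalto_{\mathsf{S_1}}^{*} a$, the analogue for $\ottkw{blame}\,p$, and $M\,\mathord{\Uparrow_{\mathsf{S}}} \Rightarrow \mathscr{C}\llbracket M \rrbracket\,\mathord{\Uparrow_{\mathsf{S_1}}}$. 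Each ``$\Leftarrow$'' then follows by contraposition: if $M$ did not reduce to $a$ it would reach $\ottkw{blame}\,p$ or diverge, the forward direction would force $\mathscr{C}\llbracket M \rrbracket$ to do the same, and by determinacy of \lamSx (Lemma~\ref{lem:determinism-S1}) it could not also reduce to $a$; similarly for the other two cases.

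Toward the simulation I would first establish some routine auxiliary lemmas: (a) the translation commutes with substitution, $(\mathscr{K}\llbracket M \rrbracket\,K)[x := \Psi(V)] = \mathscr{K}\llbracket M[x:=V] \rrbracket\,K$ and likewise for $\mathscr{C}\llbracket\cdot\rrbracket$, when $x$ does not occur in $K$; (b) the coercion translation is a homomorphism for composition, $\Psi(s \fatsemi t) = \Psi(s) \fatsemi \Psi(t)$, and sends delayed coercions to delayed coercions; (c) $\mathscr{K}\llbracket M \rrbracket\,\ottkw{id} \evalto_{\mathsf{S_1}}^{*} \mathscr{C}\llbracket M \rrbracket$ using only $\mathsf{c}$-labelled steps (the already-noted Lemma~\ref{lem:trans-id-CV}); and (d) $\mathsf{c}$-evaluation of \lamS terminates, via the measure that counts each coercion application $\langle\cdot\rangle$ twice and each coerced value $\langle\!\langle\cdot\rangle\!\rangle$ once, which strictly decreases under \rnp{R-Id}, \rnp{R-Fail}, \rnp{R-Crc}, \rnp{R-MergeC}, and \rnp{R-MergeV}; consequently an infinite \lamS evaluation contains infinitely many $\mathsf{e}$-steps. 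I would also prove a \emph{context lemma}, by induction on the ``inside-out'' evaluation context $\mathcal{E}$ of \lamS, stating that, up to the $\mathsf{c}$-steps that carry out the relevant coercion composition, $\mathscr{C}\llbracket\mathcal{E}[M_0]\rrbracket$ reduces to an \lamSx evaluation context wrapping $\mathscr{K}\llbracket M_0 \rrbracket\,s$, where $s$ is $\ottkw{id}$ unless the innermost frame of $\mathcal{E}$ is a coercion frame $\square\langle s\rangle$: the \lamS frames $op(\square,M)$, $\square\,M$, $V\,\square$ translate directly to \lamSx evaluation frames, and the innermost coercion frame translates to the $\ottkw{let}\,\kappa = \Psi(s)\mathbin{;\!;}\ottkw{id}\;\ldots$ pattern, which forces the composition first.

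With these in hand, the forward simulation lemma reads: if $M \evalto_{\mathsf{S}} N$ then $\mathscr{C}\llbracket M \rrbracket \evalto_{\mathsf{S_1}}^{*} \mathscr{C}\llbracket N \rrbracket$, and moreover a $\mathsf{c}$-step of \lamS is matched by $\mathsf{c}$-steps only, whereas an $\mathsf{e}$-step of \lamS is matched by an \lamSx sequence containing exactly one $\mathsf{e}$-step (surrounded by $\mathsf{c}$-steps, including those needed to push through the context and the residue of (c)). The proof is by cases on the reduction rule at the hole — \rnp{R-Op}, \rnp{R-Beta}, \rnp{R-Wrap}, \rnp{R-Id}, \rnp{R-Fail}, \rnp{R-Crc}, \rnp{R-MergeC}, \rnp{R-MergeV}, \rnp{E-Abort} — invoking the context lemma to reduce to the hole, (a) for \rnp{R-Beta}, (b) for \rnp{R-MergeC}/\rnp{R-MergeV} (the object-level $\mathbin{;\!;}$ reducing to $\fatsemi$ and matching $\Psi(s \fatsemi t)$), and (c) to absorb the $\ottkw{id}$-coercions introduced by $\mathscr{C}\llbracket\cdot\rrbracket$. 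Iterating along a reduction sequence yields the forward implications: for the constant and blame cases, $\mathscr{C}\llbracket M \rrbracket \evalto_{\mathsf{S_1}}^{*} \mathscr{C}\llbracket a \rrbracket = a$ and $\mathscr{C}\llbracket M \rrbracket \evalto_{\mathsf{S_1}}^{*} \ottkw{blame}\,p$; for divergence, an infinite \lamS run has infinitely many $\mathsf{e}$-steps by (d), each contributing an $\mathsf{e}$-step on the \lamSx side, so $\mathscr{C}\llbracket M \rrbracket$ also diverges.

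I expect the main obstacle to be the combination of the context lemma and the $\mathsf{e}$-step cases of the simulation lemma, because that is exactly where \lamS's nonstandard features — evaluating \emph{under} a coercion frame together with the eager merging rule \rnp{R-MergeC} — must be reconciled with the ordinary left-to-right call-by-value semantics of \lamSx. One must verify that the $\ottkw{let}$-based translation of coercion applications compels the continuation-coercion composition to occur exactly where \lamS fires \rnp{R-MergeC}, so no essential step is reordered, and that the administrative $\ottkw{id}$-coercions and the $\langle\!\langle\cdot\rangle\!\rangle$/$\langle\cdot\rangle$ housekeeping never pile up into an infinite stutter — controlled by (d) and the $\mathsf{e}$/$\mathsf{c}$ bookkeeping. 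A secondary complication is that the translation is defined on \lamS typing derivations (it consults the static types of subterms), so the context and simulation lemmas must be stated for well-typed terms and rely on preservation (Theorem~\ref{thm:preservation-S}) to keep the derivations available as evaluation proceeds.
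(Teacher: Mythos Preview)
Your plan is the paper's: establish forward simulation with $\mathsf{e}/\mathsf{c}$ labelling (Lemmas~\ref{lem:trans-reduce} and~\ref{lem:trans-eval}, hence Lemma~\ref{lem:trans-reduction-preserve}), derive the three forward implications (Theorem~\ref{thm:trans-soundness}), and recover the converses from type safety of \lamS and determinacy of \lamSx; your auxiliary lemmas (a)--(d) are exactly Lemmas~\ref{lem:subst-trans}, \ref{lem:cmp-trans}, \ref{lem:trans-id-CV}, and~\ref{lem:Sc-terminate}.

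One point needs correcting. The context lemma (Lemma~\ref{lem:trans-ctx}) is an \emph{equality}, not a reduction: $\mathscr{C}\llbracket\mathcal{F}[M]\rrbracket = \mathcal{E}'[\,\mathscr{C}\llbracket M\rrbracket\,]$ and $\mathscr{C}\llbracket\mathcal{F}[M\langle s\rangle]\rrbracket = \mathcal{E}'[\,\mathscr{K}\llbracket M\rrbracket\,\Psi(s)\,]$, with the same $\mathcal{E}'$ working for every filler. No $\ottkw{let}$ is produced by the innermost coercion frame, because that frame is handled by \textsc{TrC-Crc}, which gives $\mathscr{C}\llbracket N\langle s\rangle\rrbracket = \mathscr{K}\llbracket N\rrbracket\,\Psi(s)$ directly; the $\ottkw{let}$ lives only in \textsc{Tr-Crc}. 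Your formulation, which reduces $\mathscr{C}\llbracket\mathcal{E}[M_0]\rrbracket$ to $\mathcal{E}'[\mathscr{K}\llbracket M_0\rrbracket\,\ottkw{id}]$, points the wrong way: by your own lemma~(c), $\mathscr{K}\llbracket M_0\rrbracket\,\ottkw{id}$ reduces to $\mathscr{C}\llbracket M_0\rrbracket$ and not conversely, so when $M_0$ is a $\mathsf{c}$-redex (always a coercion application, hence $\mathscr{C}\llbracket M_0\rrbracket\neq\mathscr{K}\llbracket M_0\rrbracket\,\ottkw{id}$) you cannot close the diagram back to $\mathscr{C}\llbracket\mathcal{F}[N_0]\rrbracket$. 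With the equality form and a uniform $\mathcal{E}'$ this closure is immediate.
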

\fi

To prove this theorem, it suffices to show the left-to-right direction
(Theorem~\ref{thm:trans-soundness} below) for each item because the
other direction follows from Theorem~\ref{thm:trans-soundness} together with other properties: for
example, if $  \emptyset     \vdash_{\mathsf{S} }    \ottnt{M}  :  \iota $ and
$ \mathscr{C}\llbracket \maybebluetext{ \ottnt{M} } \rrbracket  \,  \mathord{\Uparrow_{\mathsf{S_1} } } $, then $\ottnt{M}$ can neither get stuck (by
type soundness of \lamS) nor terminate (as it contradicts the
left-to-right direction and the fact that $ \evalto_{\mathsf{S_1} } $ is
deterministic).

\ifluxuryspace
\begin{theorem}[name=Translation Soundness,restate=thmTransSoundness] \label{thm:trans-soundness} Suppose $ \Gamma    \vdash_{\mathsf{S} }    \ottnt{M}  :  \ottnt{A} $.
  \begin{enumerate}
  \item If $ \ottnt{M}    \mathbin{  \evalto_{\mathsf{S} }  ^*}    \ottnt{V} $, then $  \mathscr{C}\llbracket \maybebluetext{ \ottnt{M} } \rrbracket     \mathbin{  \evalto_{\mathsf{S_1} }  ^*}     \Psi (\maybebluetext{ \ottnt{V} })  $.
  \item If $ \ottnt{M}    \mathbin{  \evalto_{\mathsf{S} }  ^*}    \ottkw{blame} \, \ottnt{p} $, then $  \mathscr{C}\llbracket \maybebluetext{ \ottnt{M} } \rrbracket     \mathbin{  \evalto_{\mathsf{S_1} }  ^*}    \ottkw{blame} \, \ottnt{p} $.
  \item If $\ottnt{M} \,  \mathord{\Uparrow_{\mathsf{S} } } $, then $ \mathscr{C}\llbracket \maybebluetext{ \ottnt{M} } \rrbracket  \,  \mathord{\Uparrow_{\mathsf{S_1} } } $.
  \end{enumerate}
\end{theorem}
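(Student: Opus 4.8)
The plan is to prove a \emph{forward simulation} lemma and transport it along source reduction sequences (since the evaluation relations act on closed terms, it suffices to treat $\Gamma =  \emptyset $). The technical core is the single-step lemma: if $ \emptyset   \vdash_{\mathsf{S}}  \ottnt{M} : \ottnt{A}$ and $ \ottnt{M}  \evalto_{\mathsf{S}}  \ottnt{N} $, then $ \mathscr{C}\llbracket \ottnt{M} \rrbracket  \evalto_{\mathsf{S_1}}^*  \mathscr{C}\llbracket \ottnt{N} \rrbracket $, and this target sequence is \emph{nonempty} unless the source step is an \rnp{E-Abort} step collapsing an evaluation context whose translation is the empty context $\square$. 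Granting this, part~(1) follows by induction on the length of $ \ottnt{M}  \evalto_{\mathsf{S}}^*  \ottnt{V} $: each source step contributes a (possibly empty) $ \evalto_{\mathsf{S_1}}^* $-segment, these compose by transitivity, and the sequence ends at $ \mathscr{C}\llbracket \ottnt{V} \rrbracket  =  \Psi ( \ottnt{V} )$; Preservation (Theorem~\ref{thm:preservation-S}) keeps all terms involved well typed. Part~(2) is the same, ending at $ \mathscr{C}\llbracket \ottkw{blame}\,\ottnt{p} \rrbracket  = \ottkw{blame}\,\ottnt{p}$. For part~(3), a divergent source computation never reaches a value or $\ottkw{blame}\,\ottnt{p}$, hence uses neither \rnp{E-Abort} nor \rnp{R-Fail}; every one of its infinitely many steps is therefore matched by a \emph{nonempty} $ \evalto_{\mathsf{S_1}}^* $-segment, and concatenating them gives an infinite $ \evalto_{\mathsf{S_1}} $-sequence, i.e.\ $ \mathscr{C}\llbracket \ottnt{M} \rrbracket \, \mathord{\Uparrow_{\mathsf{S_1}}}$.

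Before the single-step lemma I would record four auxiliary facts. (i)~The translations commute with substitution: $ \Psi ( \ottnt{V} [ \ottmv{x} :=  \ottnt{W} ]) =  \Psi ( \ottnt{V} )[ \ottmv{x} :=  \Psi ( \ottnt{W} )]$, $ \mathscr{C}\llbracket \ottnt{M} [ \ottmv{x} :=  \ottnt{V} ] \rrbracket  =  \mathscr{C}\llbracket \ottnt{M} \rrbracket [ \ottmv{x} :=  \Psi ( \ottnt{V} )]$, and likewise for $ \mathscr{K}\llbracket \ottnt{M} \rrbracket \, \ottnt{K} $, so that \rnp{R-Beta} translates correctly. (ii)~$\Psi$ is a homomorphism for composition, $ \Psi ( \ottnt{s}  \fatsemi  \ottnt{t} ) =  \Psi ( \ottnt{s} )  \fatsemi   \Psi ( \ottnt{t} )$, and $\ottkw{id}$ is a unit for $ \fatsemi $ on well-typed coercions; these make the eager-merge rules \rnp{R-MergeC} and \rnp{R-MergeV} translate correctly, the former via \rnp{R-Cmp} followed by \rnp{R-Let}, the latter via \rnp{R-MergeV}. (iii)~The property $ \mathscr{K}\llbracket \ottnt{M} \rrbracket \, \ottkw{id}  \evalto_{\mathsf{S_1}}^*  \mathscr{C}\llbracket \ottnt{M} \rrbracket $ noted in Section~\ref{subsec:trans-def}, which lets us discard the administrative coercions that $\mathscr{K}$ introduces but $\mathscr{C}$ does not. (iv)~A decomposition lemma for evaluation contexts: if $\mathcal{E}$ is a source context and $\ottnt{L}$ is the redex in its hole, then $ \mathscr{C}\llbracket \mathcal{E}[ \ottnt{L} ] \rrbracket  = \widehat{\mathcal{E}}[\, \mathscr{K}\llbracket \ottnt{L} \rrbracket \, \ottnt{K}_{\mathcal{E}}\,]$ for some \lamSx evaluation context $\widehat{\mathcal{E}}$ and closed coercion $\ottnt{K}_{\mathcal{E}}$, where $\ottnt{K}_{\mathcal{E}} =  \Psi ( \ottnt{s} )$ if the innermost frame of $\mathcal{E}$ is $\square\langle \ottnt{s} \rangle$ and $\ottnt{K}_{\mathcal{E}} = \ottkw{id}$ otherwise. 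With (iv), the single-step lemma reduces to checking, for each reduction rule $ \ottnt{L}  \reduces_{\mathsf{S}}  \ottnt{L}' $ and each admissible $\mathcal{E}$, that $\widehat{\mathcal{E}}[\, \mathscr{K}\llbracket \ottnt{L} \rrbracket \, \ottnt{K}_{\mathcal{E}}\,]  \evalto_{\mathsf{S_1}}^*   \mathscr{C}\llbracket \mathcal{E}[ \ottnt{L}' ] \rrbracket $, where (iii) bridges the small gap between $ \mathscr{K}\llbracket \ottnt{L}' \rrbracket \, \ottkw{id} $ and $ \mathscr{C}\llbracket \ottnt{L}' \rrbracket $. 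For instance \rnp{R-Wrap} of \lamS is matched by a single \rnp{R-Wrap} of \lamSx --- the two sides coincide exactly because of the $\ottkw{let}$-binding in the target rule --- \rnp{R-Beta} by a single \rnp{R-Beta} via (i), and the coercion-application rules by \rnp{R-Id}, \rnp{R-Crc}, or \rnp{R-Cmp}/\rnp{R-Let} as appropriate; propagation of $\ottkw{blame}$ is immediate since $ \mathscr{K}\llbracket \ottkw{blame}\,\ottnt{p} \rrbracket \, \ottnt{K}  = \ottkw{blame}\,\ottnt{p}$ for every $\ottnt{K}$.

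I expect the decomposition lemma (iv) to be the main obstacle. Unlike in a CPS transformation, the continuation coercion cannot swallow the whole evaluation context: only the innermost coercion-application frame of $\mathcal{E}$ is absorbed into $\ottnt{K}_{\mathcal{E}}$, while every other frame must be reconstructed as a genuine \lamSx evaluation-context frame, and one must verify that the ``inside-out'' structure of \lamS contexts lines up with the (standard, left-to-right) \lamSx contexts --- this needs a careful mutual induction on $\mathcal{E}$ and $\mathcal{F}$ with precise bookkeeping of which coercion becomes $\ottnt{K}_{\mathcal{E}}$. The second source of friction, the exact analogue of handling administrative redexes in proofs of CPS correctness, is that $\mathscr{C}$ strips the coercion $\ottkw{id}$ only from values and from outermost coercion applications, so $ \mathscr{K}\llbracket \ottnt{L}' \rrbracket \, \ottkw{id} $ and $ \mathscr{C}\llbracket \ottnt{L}' \rrbracket $ may differ by a bounded number of $\mathsf{c}$-steps; these must be accounted for so that the segments $ \mathscr{C}\llbracket \ottnt{M} \rrbracket  \evalto_{\mathsf{S_1}}^*  \mathscr{C}\llbracket \ottnt{N} \rrbracket $ really compose without gaps. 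Everything else --- the per-rule case analysis, the substitution lemmas, and the homomorphism property of $\Psi$ --- is routine, if lengthy.
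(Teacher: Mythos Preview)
Your proposal is correct and follows essentially the same route as the paper: your auxiliary facts (i)--(iv) are precisely the paper's Lemmas~\ref{lem:subst-trans}, \ref{lem:cmp-trans}, \ref{lem:trans-id-CV}, and \ref{lem:trans-ctx}, and the single-step simulation plus induction is the paper's Lemma~\ref{lem:trans-reduction-preserve} plus the short argument in the proof of Theorem~\ref{thm:trans-soundness}. The only presentational difference is how nonemptiness is obtained for part~(3): the paper labels source and target steps as $\mathsf{e}$ or $\mathsf{c}$ and proves a uniform $\evalto_{\mathsf{S_1}}^{+}$ simulation (Lemma~\ref{lem:trans-eval} and Lemma~\ref{lem:trans-reduction-preserve}), whereas you allow the target sequence to be empty on an \rnp{E-Abort} step whose translated context is $\square$ and recover~(3) by the (correct) observation that a divergent source computation never fires \rnp{E-Abort} or \rnp{R-Fail}.
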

\else
\begin{theorem}[name=Translation Soundness,restate=thmTransSoundness] \label{thm:trans-soundness}
  Suppose $ \Gamma    \vdash_{\mathsf{S} }    \ottnt{M}  :  \ottnt{A} $.  (1) If $ \ottnt{M}    \mathbin{  \evalto_{\mathsf{S} }  ^*}    \ottnt{V} $, then
  $  \mathscr{C}\llbracket \maybebluetext{ \ottnt{M} } \rrbracket     \mathbin{  \evalto_{\mathsf{S_1} }  ^*}     \Psi (\maybebluetext{ \ottnt{V} })  $; (2) if $ \ottnt{M}    \mathbin{  \evalto_{\mathsf{S} }  ^*}    \ottkw{blame} \, \ottnt{p} $, then
  $  \mathscr{C}\llbracket \maybebluetext{ \ottnt{M} } \rrbracket     \mathbin{  \evalto_{\mathsf{S_1} }  ^*}    \ottkw{blame} \, \ottnt{p} $; and (3) if $\ottnt{M} \,  \mathord{\Uparrow_{\mathsf{S} } } $, then
  $ \mathscr{C}\llbracket \maybebluetext{ \ottnt{M} } \rrbracket  \,  \mathord{\Uparrow_{\mathsf{S_1} } } $.
\end{theorem}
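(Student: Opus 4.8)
The plan is to prove the stated (left-to-right) direction by a forward-simulation argument in the \emph{decompose--simulate--recompose} style familiar from correctness proofs of CPS transformations, reducing everything to a one-step simulation lemma and then lifting it. First I would set up the lifting: for (1), induct on the length of $M \evalto_{\mathsf{S}}^{*} V$, using $\mathscr{C}\llbracket M\rrbracket = \Psi(V)$ for $M=V$ (by \rnp{TrC-Val}) in the base case; for (2), additionally use $\mathscr{C}\llbracket \mathsf{blame}\,p\rrbracket = \mathsf{blame}\,p$; for (3), use that the c-reduction sub-relation $\mathbin{\accentset{\mathsf{c}}{\reduces}_{\mathsf{S}}}$ of $\lamS$ is strongly normalising (a weighted count of coercion applications strictly decreases under every c-step), so any infinite $\evalto_{\mathsf{S}}$-sequence from $M$ contains infinitely many e-steps; since the one-step lemma matches each e-step by at least one $\lamSx$-step and each c-step by zero or more steps, the simulated segments concatenate into an infinite $\evalto_{\mathsf{S_1}}$-sequence from $\mathscr{C}\llbracket M\rrbracket$. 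Type preservation of the translation (Theorem~\ref{thm:trans-typing}) together with type soundness of $\lamSx$ (Theorems~\ref{thm:progress-S1} and~\ref{thm:preservation-S1}) and well-definedness of coercion composition (Lemma~\ref{lem:sc-cmp-welldef} and its $\lamSx$-analogue) keep all intermediate terms well typed and ensure the simulating $\lamSx$-reductions are never stuck.

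Before the core lemma I would establish a few routine auxiliary facts about the translation: (i) the coercion translation is a homomorphism, $\Psi(\mathsf{id}_{A}) = \mathsf{id}_{\Psi(A)}$ and $\Psi(s\fatsemi t) = \Psi(s)\fatsemi\Psi(t)$, and it reflects the ``not an identity'' side conditions of the coercion grammar; combined with associativity of $\fatsemi$, this is exactly what is needed to reconcile $\lamS$'s \rnp{R-MergeC}/\rnp{R-MergeV} with the eager compositions the translation has already performed; (ii) the term, value, and coercion translations commute with value substitution, e.g.\ $\mathscr{K}\llbracket M[x:=V]\rrbracket K = (\mathscr{K}\llbracket M\rrbracket K)[x:=\Psi(V)]$, which handles \rnp{R-Beta} (with the simultaneous-substitution variant for $\lamSx$); (iii) the already-stated fact $\mathscr{K}\llbracket M\rrbracket\,\mathsf{id} \evalto_{\mathsf{S_1}}^{*} \mathscr{C}\llbracket M\rrbracket$ (Lemma~\ref{lem:trans-id-CV}), which lets us erase administrative coercions on demand; and (iv) a decomposition fact: the translation of a $\lamS$ evaluation context $\mathcal{E}$ can be written as a $\lamSx$ evaluation context $\mathcal{E}'$ together with a closed continuation coercion $K'$ (a $\lamSx$-coercion value) such that $\mathscr{C}\llbracket\mathcal{E}[M]\rrbracket$ equals $\mathcal{E}'[\,\mathscr{K}\llbracket M\rrbracket\, K'\,]$ up to a bounded number of $\evalto_{\mathsf{S_1}}$-steps, with $\mathcal{E}' = \square$ iff $\mathcal{E} = \square$. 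Fact (iv) is where $\lamS$'s nonstandard ``inside-out'' contexts---and the restriction that $\mathcal{E}$ and $\mathcal{F}$ contain no consecutive coercion applications---are translated into the ordinary left-to-right contexts of $\lamSx$.

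The heart of the proof is the one-step simulation lemma, stated for $\mathscr{C}$ but proved via its $\mathscr{K}$-with-arbitrary-continuation form so as to be stable under (iv): if $\Gamma \vdash_{\mathsf{S}} M : A$ and $M \mathbin{\accentset{\mathsf{e}}{\evalto}_{\mathsf{S}}} N$ then $\mathscr{C}\llbracket M\rrbracket \evalto_{\mathsf{S_1}}^{+} \mathscr{C}\llbracket N\rrbracket$, and if $M \mathbin{\accentset{\mathsf{c}}{\evalto}_{\mathsf{S}}} N$ then $\mathscr{C}\llbracket M\rrbracket \evalto_{\mathsf{S_1}}^{*} \mathscr{C}\llbracket N\rrbracket$ (possibly phrased modulo a congruence $\simeq$ on $\lamSx$-terms that identifies terms differing only by reassociating coercion compositions and by immediately reducible administrative let-bindings, and obtained by first establishing the analogous statement for the na\"ive translation $\mathscr{K}'$ and transferring it through $\mathscr{K}''$ to the optimized $\mathscr{K}$). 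The proof goes by inspection of the reduction rule and the innermost frame of the context (via (iv) for the contextual cases, using (ii) and (iii) to re-wrap). \rnp{R-Op}, \rnp{R-Id}, \rnp{R-Fail}, \rnp{R-Crc} are direct; \rnp{R-Beta} uses (ii); \rnp{R-MergeC}/\rnp{R-MergeV} unfold \rnp{Tr-Crc}/\rnp{Tr-Val}, fire \rnp{R-Cmp} and \rnp{R-Let} on both sides, and close the diagram by (i). The fiddliest case is \rnp{R-Wrap}: the $\lamS$ step yields $(U\,(V\langle s\rangle))\langle t\rangle$ while the $\lamSx$ step yields $\mathsf{let}\,\kappa = t\mathbin{;\!;}W\,\mathsf{in}\,U\,(V\langle s\rangle,\kappa)$, and one must check that forcing $t\mathbin{;\!;}W$ via the let exactly mirrors the eager, left-to-right merge behaviour of $\lamS$---this is precisely the subtlety the paper flagged when defining \rnp{R-Wrap}---which again reduces to (i). \rnp{E-Abort} follows from the ``$\mathcal{E}' = \square$ iff $\mathcal{E} = \square$'' clause of (iv) together with $\mathscr{K}\llbracket\mathsf{blame}\,p\rrbracket K = \mathsf{blame}\,p$.

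I expect the main obstacle to be pinning down the simulation \emph{invariant} so that it is genuinely preserved by every rule. Because the translation merges coercions eagerly whereas $\lamS$'s \rnp{R-MergeC} can fire arbitrarily deep inside a context, a literal ``$\mathscr{C}\llbracket M\rrbracket \evalto_{\mathsf{S_1}}^{*} \mathscr{C}\llbracket N\rrbracket$'' is unlikely to hold on the nose; the remedy---working up to the congruence $\simeq$ and routing the argument through $\mathscr{K}'$ and $\mathscr{K}''$---then forces one to show that $\simeq$ is a $\evalto_{\mathsf{S_1}}$-bisimulation that is respected by evaluation contexts. Intertwined with this is the context-decomposition fact (iv), whose statement must be simultaneously strong enough to survive every reduction step and faithful to the idiosyncratic shape of $\lamS$'s $\mathcal{F}$/$\mathcal{E}$ contexts. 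Getting these two pieces to fit together is the real technical content; the homomorphism and substitution lemmas, the lifting to $\evalto^{*}$, and the divergence argument (via strong normalisation of c-reduction) are then essentially bookkeeping.
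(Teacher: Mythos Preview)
Your overall strategy---prove a one-step forward simulation and lift it---is exactly what the paper does, but you have significantly overestimated the technical difficulty and, as a result, proposed machinery that the paper shows is unnecessary.

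The central point you are missing is that the paper's simulation (Lemma~\ref{lem:trans-eval}, hence Lemma~\ref{lem:trans-reduction-preserve}) is \emph{strict} and \emph{on the nose}: every single $\lamS$ step, whether an e-step or a c-step, is matched by at least one $\lamSx$ step, and the target term is literally $\mathscr{C}\llbracket N\rrbracket$, not something related to it by a congruence. In particular, c-steps are simulated by $\accentset{\mathsf{c}}{\evalto}_{\mathsf{S_1}}^{+}$, not $\accentset{\mathsf{c}}{\evalto}_{\mathsf{S_1}}^{*}$. Once you have $M \evalto_{\mathsf{S}} N$ implies $\mathscr{C}\llbracket M\rrbracket \evalto_{\mathsf{S_1}}^{+} \mathscr{C}\llbracket N\rrbracket$, all three items of Theorem~\ref{thm:trans-soundness} are immediate; divergence preservation needs no strong-normalisation argument for c-reduction. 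Your anticipated obstacle---``a literal $\mathscr{C}\llbracket M\rrbracket \evalto_{\mathsf{S_1}}^{*} \mathscr{C}\llbracket N\rrbracket$ is unlikely to hold on the nose''---simply does not materialise; the translation is tuned precisely so that it does.

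What makes this work is the interplay of three design choices you treated as incidental. First, the context lemma (your (iv), the paper's Lemma~\ref{lem:trans-ctx}) gives \emph{exact} equalities, not ``up to a bounded number of steps'': for any $\mathcal{F}$ there is $\mathcal{E}'$ with $\mathscr{C}\llbracket\mathcal{F}[M]\rrbracket = \mathcal{E}'[\mathscr{C}\llbracket M\rrbracket]$, and similarly $\mathscr{C}\llbracket\mathcal{F}[M\langle s\rangle]\rrbracket = \mathcal{E}'[\mathscr{K}\llbracket M\rrbracket\,\Psi(s)]$. This is why the auxiliary translation $\mathscr{C}$ has the three-way case split \rnp{TrC-Val}/\rnp{TrC-Crc}/\rnp{TrC-Else}: it removes exactly the administrative coercions that would otherwise force an ``up to'' argument, and no more. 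Second, the let-binding in \rnp{Tr-Crc} and in \rnp{R-Wrap} of $\lamSx$ is placed so that \rnp{R-MergeC} in $\lamS$ is matched step-for-step by \rnp{R-Cmp} followed by \rnp{R-Let}; that is where the ``at least one'' for c-steps comes from. Third, the syntactic distinction $U\langle d\rangle$ versus $U\langle\!\langle d\rangle\!\rangle$ is what makes the substitution lemma (Lemma~\ref{lem:subst-trans}) an equality rather than a relation.

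Your proposed detour through $\mathscr{K}'$ and $\mathscr{K}''$ is not only unnecessary but risky: the paper explicitly exhibits (in the discussion after Lemma~\ref{lem:trans-ctx}) an instance where $\mathscr{K}''$, which optimises away \emph{all} identity coercions, fails to satisfy the directed simulation $\mathscr{K}''\llbracket M\rrbracket\,\mathsf{id} \evalto^{+} \mathscr{K}''\llbracket N\rrbracket\,\mathsf{id}$, so transferring the result through $\mathscr{K}''$ would itself require the congruence you hoped to avoid. The right move is to work directly with $\mathscr{C}$ and $\mathscr{K}$ and prove the strict, exact simulation; the auxiliary facts you listed under (i)--(iii) are the right ones, but you can drop (iv)'s ``up to bounded steps'' qualifier, drop the congruence $\simeq$, and drop the strong-normalisation appeal entirely.
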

\fi

A standard proof strategy would be to show that single-step evaluation in the source language
is simulated by multi-step evaluation in the target language.  In fact, we prove
the following lemma:
\begin{lemma}[name=Simulation,restate=lemTransEval]\label{lem:trans-eval}\leavevmode
  \begin{enumerate}
  \item If $ \ottnt{M}    \mathbin{  \accentset{\mathsf{e} }{\evalto}_{\mathsf{S} }  }    \ottnt{N} $, then
    $ \mathscr{C}\llbracket \maybebluetext{ \ottnt{M} } \rrbracket   \mathbin{  \accentset{\mathsf{e} }{\evalto}_{\mathsf{S_1} }     \accentset{\mathsf{c} }{\evalto}_{\mathsf{S_1} }  ^*}   \mathscr{C}\llbracket \maybebluetext{ \ottnt{N} } \rrbracket $.
  \item If $ \ottnt{M}    \mathbin{  \accentset{\mathsf{c} }{\evalto}_{\mathsf{S} }  }    \ottnt{N} $, then $  \mathscr{C}\llbracket \maybebluetext{ \ottnt{M} } \rrbracket     \mathbin{  \accentset{\mathsf{c} }{\evalto}_{\mathsf{S_1} }  ^+}     \mathscr{C}\llbracket \maybebluetext{ \ottnt{N} } \rrbracket  $.
  \end{enumerate}
\[
\xymatrix@=50pt{
    \ottnt{M} \ar@{|->}[rr]^{\mathsf{e}}_>>{\mathsf{S}} \ar[d]^{\mathscr{C} \llbracket \_  \rrbracket}& & \ottnt{N} \ar[d]^{\mathscr{C} \llbracket \_  \rrbracket} \\
     \mathscr{C}\llbracket \maybebluetext{ \ottnt{M} } \rrbracket  \ar@{|.>}[r]^-{\mathsf{e}} _>>{\mathsf{S}_1} & \ar@{|.>}[r]^-{\mathsf{c}} ^>>{*}_>>{\mathsf{S}_1} &  \mathscr{C}\llbracket \maybebluetext{ \ottnt{N} } \rrbracket 
}
\qquad
\xymatrix@=50pt{
    \ottnt{M} \ar@{|->}[r]^{\mathsf{c}}_>>{\mathsf{S}} \ar[d]^{\mathscr{C} \llbracket \_  \rrbracket} & \ottnt{N} \ar[d]^{\mathscr{C} \llbracket \_  \rrbracket} \\
     \mathscr{C}\llbracket \maybebluetext{ \ottnt{M} } \rrbracket  \ar@{|.>}[r]^{\mathsf{c}} ^>>{+}_>>{\mathsf{S}_1} &   \mathscr{C}\llbracket \maybebluetext{ \ottnt{N} } \rrbracket 
}
\]
\end{lemma}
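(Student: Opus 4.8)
The plan is to analyse the single \lamS-evaluation step $M \evalto_{\mathsf{S}} N$. By definition of $\evalto_{\mathsf{S}}$, either $M = \mathcal{E}[R]$ and $N = \mathcal{E}[R']$ for a reduction $R \reduces_{\mathsf{S}} R'$ (whose \textsf{e}/\textsf{c} label, together with whether the context is an $\mathcal{E}$ or the more restrictive $\mathcal{F}$, tells us which part of the lemma we are in), or $M = \mathcal{E}[\ottkw{blame}\,p]$ with $\mathcal{E} \ne \square$ and $N = \ottkw{blame}\,p$ (rule \textsc{E-Abort}). I would (i) decompose the translation along the evaluation context, reducing to a statement about the bare redex $R$, and then (ii) check, rule by rule, that the induced \lamSx-sequence has the advertised shape. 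The asymmetry between the two parts --- a nonempty run of \textsf{c}-steps for a source \textsf{c}-step, but exactly one \textsf{e}-step followed by a (possibly empty) run of \textsf{c}-steps for a source \textsf{e}-step --- reflects the fact that \lamSx has no analogue of \textsc{R-MergeC}: merging of nested coercions in \lamS is simulated by \textsc{R-Cmp} followed by the \textsc{R-Let} that the translation rule \textsc{Tr-Crc} inserts, whereas an \textsf{e}-redex translates to the \lamSx-\textsf{e}-redex of the same name, at the cost of an administrative identity coercion or a trivial $\ottkw{let}$-binding left behind.

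\textbf{Auxiliary lemmas.} Four groups of easy inductive lemmas would be set up first. (1) $\Psi$ is a homomorphism for coercion composition, $\Psi(s \fatsemi t) = \Psi(s) \fatsemi \Psi(t)$, and carries the syntactic classes of \lamS-coercions (space-efficient, intermediate, ground, delayed) to the corresponding classes of \lamSx; with $\Psi(s) \fatsemi \mathsf{id}_A = \Psi(s)$ (a corollary of the composition rules, cf.\ Lemma~\ref{lem:sc-cmp-welldef}) this settles the coercion-level bookkeeping. (2) All three translations commute with value substitution: $\Psi(V[x:=W]) = \Psi(V)[x:=\Psi(W)]$, $\mathscr{C}\llbracket M[x:=V]\rrbracket = \mathscr{C}\llbracket M\rrbracket[x:=\Psi(V)]$, and $\mathscr{K}\llbracket M[x:=V]\rrbracket K = (\mathscr{K}\llbracket M\rrbracket K)[x:=\Psi(V)]$ when $x \notin K$; this feeds the \textsc{R-Beta} case. (3) Administrative-coercion elimination: $\mathscr{K}\llbracket M\rrbracket \mathsf{id}$ \textsf{c}-evaluates to $\mathscr{C}\llbracket M\rrbracket$ (Lemma~\ref{lem:trans-id-CV}), and in particular $\Psi(W)\langle \mathsf{id}\rangle$ \textsf{c}-evaluates to $\Psi(W)$. (4) The technical heart: a context-decomposition lemma stating that for every \lamS-context $\mathcal{F}$ there is a \lamSx-evaluation context $\mathcal{D}$ with $\mathscr{C}\llbracket \mathcal{F}[M]\rrbracket = \mathcal{D}[\mathscr{C}\llbracket M\rrbracket]$ for all $M$, and hence $\mathscr{C}\llbracket \mathcal{F}[\,\square\langle s\rangle\,][M]\rrbracket = \mathcal{D}[\mathscr{K}\llbracket M\rrbracket \Psi(s)]$ for the only other shape of $\mathcal{E}$. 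This is proved by induction on the size of the context: the application/operator frames of $\mathcal{F}$ are translated compositionally, with $\mathscr{C}$ occupying their subexpression positions (rules \textsc{Tr-App}, \textsc{Tr-Op}), and the single possible coercion frame $\square\langle s\rangle$ is absorbed by \textsc{TrC-Crc}. Since $\mathscr{C}\llbracket R\rrbracket$ is a non-value whenever $R$ is a redex, any reduction fired in the hole lifts through $\mathcal{D}$.

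\textbf{Case work on the redex.} With the decomposition available it remains to check each reduction $R \reduces_{\mathsf{S}} R'$. For the \textsf{c}-rules: \textsc{R-Id}, \textsc{R-Fail}, \textsc{R-Crc} unfold (via \textsc{TrC-Crc}/\textsc{Tr-Val}) to $\Psi(U)\langle \mathsf{id}_A\rangle$, $\Psi(U)\langle \bot^{G\,p\,H}\rangle$, $\Psi(U)\langle \Psi(d)\rangle$, each of which reduces in one \textsf{c}-step by the \lamSx-rule of the same name; \textsc{R-MergeC} unfolds via \textsc{Tr-Crc} to $\ottkw{let}\,\kappa = \Psi(s) \mathbin{;\!;} \Psi(t)\,\ottkw{in}\,(\mathscr{K}\llbracket M\rrbracket \kappa)$, which by \textsc{R-Cmp} then \textsc{R-Let} reduces in two \textsf{c}-steps to $\mathscr{K}\llbracket M\rrbracket(\Psi(s) \fatsemi \Psi(t)) = \mathscr{C}\llbracket M\langle s \fatsemi t\rangle\rrbracket$; \textsc{R-MergeV} reduces by \lamSx's \textsc{R-MergeV} (yielding $\Psi(d) \mathbin{;\!;} \Psi(t)$) and then \textsc{R-Cmp}, again two \textsf{c}-steps. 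Each is a nonempty \textsf{c}-run, matching part~(2), and stays \textsf{c}-labelled inside $\mathcal{D}$. For the \textsf{e}-rules: \textsc{R-Op}, \textsc{R-Beta}, \textsc{R-Wrap} reduce by the \lamSx-rule of the same name in one \textsf{e}-step (\textsc{R-Wrap} itself produces a $\ottkw{let}$ for $t \mathbin{;\!;} W$); the residue --- a trailing $\langle \mathsf{id}\rangle$ from \textsc{Tr-Op}/\textsc{TrC-Else}, or $\mathscr{K}\llbracket M[x:=V]\rrbracket \mathsf{id}$ obtained after \textsc{R-Beta} via lemma~(2), or the $\ottkw{let}$ around $\Psi(t) \fatsemi \mathsf{id}$ --- is then discharged by administrative \textsf{c}-steps via lemmas~(1) and~(3), giving the shape of part~(1); when the surrounding context already ends in $\square\langle s\rangle$ the coercion $s$ is absorbed by \textsc{TrC-Crc} and no cleanup is needed. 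For \textsc{E-Abort}, when $\mathcal{E}$ is an $\mathcal{F}$-context the blame sits inside an application or operator frame, so $\mathscr{C}\llbracket \mathcal{E}[\ottkw{blame}\,p]\rrbracket = \mathcal{D}[\ottkw{blame}\,p]$ with $\mathcal{D} \ne \square$, and one \textsf{e}-step of \lamSx's \textsc{E-Abort} finishes.

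\textbf{Expected main obstacle.} The bulk of the work --- and the step most likely to hide subtleties --- is the context-decomposition lemma, because it must thread \lamS's ``inside-out'' $\mathcal{E}$/$\mathcal{F}$ distinction through the two mutually recursive translation functions while keeping careful track of which coercion applications are administrative (written $\mathsf{id}$ by the translation) and which come from the source; that bookkeeping is precisely what the auxiliary function $\mathscr{C}$ and the $\ottkw{let}$ of \textsc{Tr-Crc} exist to support. A small but real trap is the \textsc{E-Abort} case with $\mathcal{E} = \mathcal{F}[\,\square\langle s\rangle\,]$: its translation would be the already-stuck term $\ottkw{blame}\,p$, which cannot take the \textsf{e}-step demanded by part~(1); this case must be excluded by observing that $(\ottkw{blame}\,p)\langle s\rangle$ never occurs as a subterm of a reachable \lamS-term, since \textsc{R-MergeC} always collapses nested coercions before \textsc{R-Fail} can produce blame beneath one (and this is compatible with the well-typedness hypothesis of Theorem~\ref{thm:trans-soundness}, where this lemma is used). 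Finally, the \textsf{e}/\textsf{c} labels must be maintained exactly throughout, since the divergence clause of Theorem~\ref{thm:trans-soundness} relies on every \lamS-step forcing at least one \lamSx-step of the stated kind.
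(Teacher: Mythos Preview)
Your plan matches the paper's proof: it too factors through a redex-level simulation (Lemma~\ref{lem:trans-reduce}, with part~(1) parameterised over an arbitrary continuation $K$ and part~(2) stated directly for $\mathscr{C}$) and the two-clause context-decomposition lemma (Lemma~\ref{lem:trans-ctx}), bridging $\mathscr{K}\llbracket\cdot\rrbracket\,\ottkw{id}$ to $\mathscr{C}\llbracket\cdot\rrbracket$ via Lemma~\ref{lem:trans-id-CV}; your auxiliary lemmas (1)--(4) are exactly the paper's Lemmas~\ref{lem:cmp-trans}/\ref{lem:merge-id}, \ref{lem:subst-trans}, \ref{lem:trans-id-CV}, and~\ref{lem:trans-ctx}.

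On the \textsc{E-Abort} ``trap'' you flag: you have in fact noticed something the paper glosses over --- its proof of the subcase $\mathcal{E} = \mathcal{F}[\square\langle t\rangle]$ applies \textsc{E-Abort} in \lamSx without checking that the target context $\mathcal{E}'$ is nonempty, and when $\mathcal{F} = \square$ one has $\mathcal{E}' = \square$ and $\mathscr{C}\llbracket(\ottkw{blame}\,p)\langle t\rangle\rrbracket = \ottkw{blame}\,p$ is already stuck. Your reachability argument does not repair the lemma \emph{as stated}, however: $(\ottkw{blame}\,p)\langle t\rangle$ is itself well-typed and closed, hence a legitimate instance of the hypothesis. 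The cleaner patch is to observe that in this one case $\mathscr{C}\llbracket M\rrbracket = \mathscr{C}\llbracket N\rrbracket$ already, so only the non-emptiness of the required $\mathsf{e}$-step fails; and since \textsc{E-Abort} is always the terminal step of any \lamS evaluation sequence it never occurs in a diverging run, so Lemma~\ref{lem:trans-reduction-preserve} and all three parts of Theorem~\ref{thm:trans-soundness} remain unaffected.
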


The straightforward simulation property below follows from Lemma~\ref{lem:trans-eval}.
\begin{lemma}[restate=lemSimulation,name=]\label{lem:trans-reduction-preserve}
  If $ \ottnt{M}    \mathbin{  \evalto_{\mathsf{S} }  }    \ottnt{N} $, then $  \mathscr{C}\llbracket \maybebluetext{ \ottnt{M} } \rrbracket     \mathbin{  \evalto_{\mathsf{S_1} }  ^+}     \mathscr{C}\llbracket \maybebluetext{ \ottnt{N} } \rrbracket  $.
\end{lemma}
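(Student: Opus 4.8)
This is meant to be a routine corollary of the Simulation lemma (Lemma~\ref{lem:trans-eval}), so the plan is simply to unfold the definition of $ \evalto_{\mathsf{S} } $ and dispatch the two labelled cases using the two parts of that lemma. Recall that, by definition, $ \evalto_{\mathsf{S} } $ is $ \accentset{\mathsf{e} }{\evalto}_{\mathsf{S} }  \cup  \accentset{\mathsf{c} }{\evalto}_{\mathsf{S} } $ and likewise $ \evalto_{\mathsf{S_1} } $ is $ \accentset{\mathsf{e} }{\evalto}_{\mathsf{S_1} }  \cup  \accentset{\mathsf{c} }{\evalto}_{\mathsf{S_1} } $; hence every labelled evaluation relation of \lamSx\ is contained in $ \evalto_{\mathsf{S_1} } $, and any finite alternation of $\mathsf{e}$- and $\mathsf{c}$-evaluation steps of \lamSx\ is in particular a sequence of $ \evalto_{\mathsf{S_1} } $-steps. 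So I would first perform a two-way case split on the label of the hypothesised step $ \ottnt{M}    \mathbin{  \evalto_{\mathsf{S} }  }    \ottnt{N} $.

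If $ \ottnt{M}    \mathbin{  \accentset{\mathsf{e} }{\evalto}_{\mathsf{S} }  }    \ottnt{N} $, then Lemma~\ref{lem:trans-eval}(1) gives $ \mathscr{C}\llbracket \maybebluetext{ \ottnt{M} } \rrbracket   \mathbin{  \accentset{\mathsf{e} }{\evalto}_{\mathsf{S_1} }     \accentset{\mathsf{c} }{\evalto}_{\mathsf{S_1} }  ^*}   \mathscr{C}\llbracket \maybebluetext{ \ottnt{N} } \rrbracket $, i.e. a single $\mathsf{e}$-step of \lamSx\ from $ \mathscr{C}\llbracket \maybebluetext{ \ottnt{M} } \rrbracket $ followed by finitely many $\mathsf{c}$-steps ending at $ \mathscr{C}\llbracket \maybebluetext{ \ottnt{N} } \rrbracket $; forgetting the labels, this is a nonempty $ \evalto_{\mathsf{S_1} } $-sequence (nonempty because it always contains the leading $\mathsf{e}$-step), so $  \mathscr{C}\llbracket \maybebluetext{ \ottnt{M} } \rrbracket     \mathbin{  \evalto_{\mathsf{S_1} }  ^+}     \mathscr{C}\llbracket \maybebluetext{ \ottnt{N} } \rrbracket  $. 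If instead $ \ottnt{M}    \mathbin{  \accentset{\mathsf{c} }{\evalto}_{\mathsf{S} }  }    \ottnt{N} $, then Lemma~\ref{lem:trans-eval}(2) gives $  \mathscr{C}\llbracket \maybebluetext{ \ottnt{M} } \rrbracket     \mathbin{  \accentset{\mathsf{c} }{\evalto}_{\mathsf{S_1} }  ^+}     \mathscr{C}\llbracket \maybebluetext{ \ottnt{N} } \rrbracket  $ outright, and since $ \accentset{\mathsf{c} }{\evalto}_{\mathsf{S_1} }  \subseteq  \evalto_{\mathsf{S_1} } $ this immediately yields $  \mathscr{C}\llbracket \maybebluetext{ \ottnt{M} } \rrbracket     \mathbin{  \evalto_{\mathsf{S_1} }  ^+}     \mathscr{C}\llbracket \maybebluetext{ \ottnt{N} } \rrbracket  $. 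Combining the two cases proves the claim.

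The main obstacle, such as it is, lies entirely in Lemma~\ref{lem:trans-eval} rather than here: its $\mathsf{e}$-case must account for the administrative coercions introduced by the optimized translation (the interplay between $ \mathscr{C}\llbracket \maybebluetext{ \ottnt{M} } \rrbracket $ and $ \mathscr{K}\llbracket \maybebluetext{ \ottnt{M} } \rrbracket  \ottnt{K} $) and appeals to the auxiliary fact $  \mathscr{K}\llbracket \maybebluetext{ \ottnt{M} } \rrbracket  \ottkw{id}     \mathbin{  \evalto_{\mathsf{S_1} }  }     \mathscr{C}\llbracket \maybebluetext{ \ottnt{M} } \rrbracket  $. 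For the present lemma the only point requiring (very mild) care is strictness of the closure: one must check that the sequence produced in the $\mathsf{e}$-case is genuinely nonempty, so that the conclusion is a ``$+$'' and not merely a ``$*$''.
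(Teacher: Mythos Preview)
Your proposal is correct and matches the paper's approach exactly: the paper's proof is the single line ``Immediate by Lemma~\ref{lem:trans-eval},'' and you have simply spelled out the two-case argument that makes that immediacy explicit. One tiny slip in your side remark: the auxiliary fact should be $  \mathscr{K}\llbracket \maybebluetext{ \ottnt{M} } \rrbracket  \ottkw{id}     \mathbin{  \accentset{\mathsf{c} }{\evalto}_{\mathsf{S_1} }  ^*}     \mathscr{C}\llbracket \maybebluetext{ \ottnt{M} } \rrbracket  $ (a possibly empty sequence of $\mathsf{c}$-steps), not a single step.
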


As is the case for simulation proofs for CPS
translation~\cite{DBLP:journals/tcs/Plotkin75,
  DBLP:books/cu/Appel1992, DBLP:conf/mfps/Wand91,
  DBLP:journals/lisp/SabryF93, DBLP:journals/tcs/DanvyN03,
  DBLP:journals/mscs/DanvyF92, DBLP:journals/toplas/SabryW97}, the
simulation property\footnote{%
  If we had been interested only in the property that translation preserves term
  equivalence, we could have simplified the technical development by,
  say, removing the distinction between $\ottnt{U}  \langle  \ottnt{s}  \rangle$ and $\ottnt{U}  \langle\!\langle  \ottnt{s}  \rangle\!\rangle$.
  However, simulation is crucial for showing that divergence is preserved
  by the translation.  } is quite subtle. We discuss this subtlety below.

\begin{sloppypar}
First, it is important that the translation removes administrative
identity coercions by distinguishing values and nonvalues in
$ \mathscr{C}\llbracket \maybebluetext{ \ottnt{M} } \rrbracket $.  For example, $ \ottsym{(}   \lambda   \ottmv{x} .\,  \ottmv{x}   \ottsym{)} \, \ottsym{5}    \mathbin{  \accentset{\mathsf{e} }{\evalto}  }    \ottsym{5} $ holds in \lamS, but
the translation $ \mathscr{K}'\llbracket \maybebluetext{ \ottsym{(}   \lambda   \ottmv{x} .\,  \ottmv{x}   \ottsym{)} \, \ottsym{5} } \rrbracket  \ottnt{K} $ without removing administrative
redexes would yield $ \ottsym{(}  \ottsym{(}   \lambda  ( \ottmv{x} , \kappa ).\,  \ottmv{x}   \langle  \kappa  \rangle  \ottsym{)}  \langle  \ottkw{id}  \rangle  \ottsym{)} \, ( \ottsym{5}  \langle  \ottkw{id}  \rangle , \ottnt{K} ) $, which
performs c-evaluation before calling the function.
We avoid such a situation. More formally, we
prove the following lemma, which means the redex in the source is also
the redex in the target.
\end{sloppypar}
\begin{lemma}[restate=lemTransCtx,name=]\label{lem:trans-ctx}\leavevmode
  \begin{enumerate}
  \item For any $\mathcal{F}$, there exists $\mathcal{E}'$
    such that for any $\ottnt{M}$, $ \mathscr{C}\llbracket \maybebluetext{ \mathcal{F}  [  \ottnt{M}  ] } \rrbracket   \ottsym{=}  \mathcal{E}'  [   \mathscr{C}\llbracket \maybebluetext{ \ottnt{M} } \rrbracket   ]$.
  \item For any $\mathcal{F}$ and $\ottnt{s}$, there exists $\mathcal{E}'$ such that
    for any $\ottnt{M}$, $ \mathscr{C}\llbracket \maybebluetext{ \mathcal{F}  [  \ottnt{M}  \langle  \ottnt{s}  \rangle  ] } \rrbracket   \ottsym{=}  \mathcal{E}'  [   \mathscr{K}\llbracket \maybebluetext{ \ottnt{M} } \rrbracket   \Psi (\maybebluetext{ \ottnt{s} })    ]$.
  \end{enumerate}
\end{lemma}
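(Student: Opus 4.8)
The plan is to reduce item (2) to item (1) and then prove item (1) by structural induction on the \lamS evaluation context $\mathcal{F}$. The reduction is easy: item (1) is quantified over \emph{all} \lamS-terms, so applying it with the term $M\langle s\rangle$ in place of $M$ gives $\mathscr{C}\llbracket\mathcal{F}[M\langle s\rangle]\rrbracket = \mathcal{E}'[\mathscr{C}\llbracket M\langle s\rangle\rrbracket]$ for an $\mathcal{E}'$ depending only on $\mathcal{F}$, and $\mathscr{C}\llbracket M\langle s\rangle\rrbracket = \mathscr{K}\llbracket M\rrbracket\,\Psi(s)$ holds by \rnp{TrC-Crc} (note $M\langle s\rangle$, with single brackets, is never a value, so \rnp{TrC-Val} cannot fire); this is exactly item (2), with the same $\mathcal{E}'$, which in fact does not even depend on $s$.

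For item (1), the base case $\mathcal{F}=\square$ holds with $\mathcal{E}'=\square$. For the inductive step, write $\mathcal{F}=\mathcal{E}[\phi]$ with $\phi$ one of the four innermost frames $\mathit{op}(\square,M_0)$, $\mathit{op}(V,\square)$, $\square\,M_0$, $V\,\square$, so $\mathcal{F}[M]=\mathcal{E}[\phi[M]]$. The crucial local observation is that translating a frame pulls out a fixed \lamSx evaluation context: unfolding \rnp{Tr-Op}/\rnp{Tr-App}, and using \rnp{TrC-Else} for the subterm $\phi[M]$ (which is never a value or a coercion application, so the ``otherwise'' clause always applies), one gets, for every \lamSx term $K$ and every $M$, $\mathscr{K}\llbracket\phi[M]\rrbracket K = \psi_{\phi,K}[\mathscr{C}\llbracket M\rrbracket]$ for a \lamSx evaluation context $\psi_{\phi,K}$ determined by $\phi$ and $K$ alone (for instance $\psi_{\square\,M_0,K}=\square\,(\mathscr{C}\llbracket M_0\rrbracket,K)$ and $\psi_{\mathit{op}(\square,M_0),K}=(\mathit{op}(\square,\mathscr{C}\llbracket M_0\rrbracket))\langle K\rangle$; the other two frames are analogous). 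One then splits on the shape of the enclosing $\mathcal{E}$. If $\mathcal{E}=\mathcal{F}_1$, the induction hypothesis applied to $\mathcal{F}_1$ gives $\mathcal{E}'_1$ with $\mathscr{C}\llbracket\mathcal{F}_1[N]\rrbracket=\mathcal{E}'_1[\mathscr{C}\llbracket N\rrbracket]$ for all $N$; taking $N:=\phi[M]$ and using $\mathscr{C}\llbracket\phi[M]\rrbracket=\mathscr{K}\llbracket\phi[M]\rrbracket\,\mathsf{id}=\psi_{\phi,\mathsf{id}}[\mathscr{C}\llbracket M\rrbracket]$ yields $\mathscr{C}\llbracket\mathcal{F}[M]\rrbracket=(\mathcal{E}'_1[\psi_{\phi,\mathsf{id}}])[\mathscr{C}\llbracket M\rrbracket]$, so $\mathcal{E}':=\mathcal{E}'_1[\psi_{\phi,\mathsf{id}}]$. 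If instead $\mathcal{E}=\mathcal{F}_1[\square\langle s'\rangle]$, then $\mathcal{F}[M]=\mathcal{F}_1[\phi[M]\langle s'\rangle]$; applying the induction hypothesis to $\mathcal{F}_1$ at $N:=\phi[M]\langle s'\rangle$ and using $\mathscr{C}\llbracket\phi[M]\langle s'\rangle\rrbracket=\mathscr{K}\llbracket\phi[M]\rrbracket\,\Psi(s')=\psi_{\phi,\Psi(s')}[\mathscr{C}\llbracket M\rrbracket]$ yields $\mathcal{E}':=\mathcal{E}'_1[\psi_{\phi,\Psi(s')}]$. In both subcases $\mathcal{F}_1$ is a strictly smaller context, so the induction is well founded.

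Two auxiliary facts are invoked. First, for the frames $\mathit{op}(V,\square)$ and $V\,\square$ one needs $\Psi(V)$ to be a \lamSx value, so that $\mathit{op}(\Psi(V),\square)$ and $\Psi(V)\,(\square,K)$ are legitimate \lamSx evaluation-context frames; this follows by a routine induction, since the value and coercion translations map the value, delayed-coercion, and ground-coercion subgrammars into themselves. Second, one needs that the composition $\mathcal{E}_1[\mathcal{E}_2]$ of two \lamSx evaluation contexts is again one, which is an easy induction on $\mathcal{E}_1$. I expect the proof itself to be essentially mechanical; the main care is organizational, namely keeping the inside-out, mutually recursive presentation of \lamS contexts ($\mathcal{E}$ versus $\mathcal{F}$) straight so that the induction hypothesis is only ever used on the genuinely smaller $\mathcal{F}_1$, and verifying in each frame/$\mathcal{E}$ combination that the \lamSx context produced really conforms to the grammar of Figure~\ref{fig:semS1} (in particular that the identity coercion $\mathsf{id}$ inside $\mathit{op}(\cdot,\cdot)\langle\mathsf{id}\rangle$, and any $\Psi(s')$ inherited from an enclosing coercion frame, land in term-argument position rather than as stray coercion frames). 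It is also worth stressing that $\psi_{\phi,K}$ does not depend on $M$ --- this is what delivers the ``for any $M$'' quantification needed in the simulation proof, Lemma~\ref{lem:trans-eval}, and it rests precisely on $\phi[M]$ always falling under \rnp{TrC-Else}.
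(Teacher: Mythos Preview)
Your proposal is correct and follows essentially the same inductive argument as the paper, with one organizational improvement worth noting: the paper proves items (1) and (2) \emph{simultaneously} by induction on $\mathcal{F}$ (using the IH for item (2) in the cases $\mathcal{F}=\mathcal{F}_1[\phi\langle t\rangle]$), whereas you observe that (2) follows from (1) non-inductively via \rnp{TrC-Crc} and then prove (1) alone---a slight streamlining, since the paper's use of the item-(2) IH is exactly your inline use of $\mathscr{C}\llbracket\phi[M]\langle s'\rangle\rrbracket=\mathscr{K}\llbracket\phi[M]\rrbracket\,\Psi(s')$. Your factoring of the frame-local identity $\mathscr{K}\llbracket\phi[M]\rrbracket\,K=\psi_{\phi,K}[\mathscr{C}\llbracket M\rrbracket]$ also cleanly isolates what the paper does inline in each of its eight cases; the resulting case analysis and the auxiliary facts you invoke (that $\Psi(V)$ is a value, that \lamSx contexts compose) match the paper's Lemmas~\ref{lem:val-coe-val} and~\ref{lem:ctx-cmp}.
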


To prove this lemma, the rule \rnp{TrC-Crc} also plays an important
role: for example, if we removed \rnp{TrC-Crc},
$ \mathscr{K}\llbracket \maybebluetext{ \ottsym{(}  \ottsym{1}  \ottsym{+}  \ottsym{1}  \ottsym{)}  \langle   \ottkw{int} \texttt{!}   \rangle } \rrbracket  \ottkw{id} $ would translate
to $ \ottkw{let} \,  \kappa =  \ottkw{int} \texttt{!}   \mathbin{;\!;}  \ottkw{id} \, \ottkw{in}\,  \ottsym{(}  \ottsym{1}  \ottsym{+}  \ottsym{1}  \ottsym{)}  \langle  \kappa  \rangle $, which performs
c-evaluation before adding 1 and 1, which is the first thing the
original term $\ottsym{(}  \ottsym{1}  \ottsym{+}  \ottsym{1}  \ottsym{)}  \langle   \ottkw{int} \texttt{!}   \rangle$ will do.

\begin{sloppypar}
Second,
optimizing too many (identity) coercions can break simulation.
We should only remove administrative identity coercions, and
keep identity coercions that were present in the original term.
Consider
$\ottnt{M} \defeq \ottsym{(}  \ottsym{(}  \ottsym{(}   \lambda   \ottmv{x} .\,  \ottnt{M_{{\mathrm{1}}}}   \ottsym{)}  \langle\!\langle   \ottkw{id} _{ \iota }   \rightarrow   \iota \texttt{!}   \rangle\!\rangle  \ottsym{)} \, \ottnt{a}  \ottsym{)}  \langle   \iota \texttt{?}^{ \ottnt{p} }   \rangle$ and
$\ottnt{N} \defeq \ottsym{(}  \ottsym{(}   \lambda   \ottmv{x} .\,  \ottnt{M_{{\mathrm{1}}}}   \ottsym{)} \, \ottsym{(}  \ottnt{a}  \langle   \ottkw{id} _{ \iota }   \rangle  \ottsym{)}  \ottsym{)}  \langle   \iota \texttt{!}   \rangle  \langle   \iota \texttt{?}^{ \ottnt{p} }   \rangle$, for which
$ \ottnt{M}    \mathbin{  \evalto_{\mathsf{S} }  }    \ottnt{N} $ holds by \rnp{R-Wrap}.  Then,
\[
\begin{array}{lll}
 \mathscr{C}\llbracket \maybebluetext{ \ottnt{M} } \rrbracket  &=&  \mathscr{K}\llbracket \maybebluetext{ \ottnt{M} } \rrbracket  \ottkw{id}  =  \ottsym{(}  \ottsym{(}   \mathscr{K}\llbracket \maybebluetext{  \lambda  ( \ottmv{x} , \kappa ).\,  \ottnt{M_{{\mathrm{1}}}}  } \rrbracket  \kappa   \ottsym{)}  \langle\!\langle   \ottkw{id} _{ \iota }   \Rightarrow   \iota \texttt{!}   \rangle\!\rangle  \ottsym{)} \, ( \ottnt{a} ,  \iota \texttt{?}^{ \ottnt{p} }  )  \\
           & \evalto_{\mathsf{S_1} } &   \ottkw{let} \,  \kappa' =  \iota \texttt{!}   \mathbin{;\!;}   \iota \texttt{?}^{ \ottnt{p} }  \, \ottkw{in}\,  \ottsym{(}   \mathscr{K}\llbracket \maybebluetext{  \lambda  ( \ottmv{x} , \kappa ).\,  \ottnt{M_{{\mathrm{1}}}}  } \rrbracket  \kappa   \ottsym{)}  \, ( \ottnt{a}  \langle   \ottkw{id} _{ \iota }   \rangle , \kappa' ) 
           =  \mathscr{C}\llbracket \maybebluetext{ \ottnt{N} } \rrbracket .
\end{array}
\]
At one point, we defined the translation (let's call it $\mathscr{K}''$)
so that applications of identity coercions would be removed as much as
possible, namely,
\[
   \mathscr{K}''\llbracket \maybebluetext{ \ottnt{N} } \rrbracket  \ottkw{id}  =   \ottkw{let} \,  \kappa' =  \iota \texttt{!}   \mathbin{;\!;}   \iota \texttt{?}^{ \ottnt{p} }  \, \ottkw{in}\,  \ottsym{(}   \mathscr{K}''\llbracket \maybebluetext{  \lambda  ( \ottmv{x} , \kappa ).\,  \ottnt{M_{{\mathrm{1}}}}  } \rrbracket  \kappa   \ottsym{)}  \, ( \ottnt{a} , \kappa' ) 
\]
(notice that $\langle   \ottkw{id} _{ \iota }   \rangle$ on $\ottnt{a}$ is removed).  Although $ \mathscr{K}''\llbracket \maybebluetext{ \ottnt{M} } \rrbracket  \ottkw{id} $ and
$ \mathscr{K}''\llbracket \maybebluetext{ \ottnt{N} } \rrbracket  \ottkw{id} $ reduced to the same term, we did not quite have $  \mathscr{K}''\llbracket \maybebluetext{ \ottnt{M} } \rrbracket  \ottkw{id}     \mathbin{ \evalto ^+}     \mathscr{K}''\llbracket \maybebluetext{ \ottnt{N} } \rrbracket  \ottkw{id}  $ as
we had desired.

Third, the distinction between $\ottnt{U}  \langle  \ottnt{s}  \rangle$ and $\ottnt{U}  \langle\!\langle  \ottnt{s}  \rangle\!\rangle$ is crucial
for ensuring that substitution commutes with the translation:
\begin{lemma}[name=Substitution,restate=lemSubstTrans]\label{lem:subst-trans}
  If $ \kappa \notin   \metafun{FV} ( \ottnt{M} )   \cup  \metafun{FV} ( \ottnt{V} ) $, then
  $\ottsym{(}   \mathscr{K}\llbracket \maybebluetext{ \ottnt{M} } \rrbracket  \kappa   \ottsym{)}  [  \ottmv{x}  \ottsym{:=}   \Psi (\maybebluetext{ \ottnt{V} })   \ottsym{,}  \kappa  \ottsym{:=}  \ottnt{K}  ]  \ottsym{=}   \mathscr{K}\llbracket \maybebluetext{ \ottnt{M}  [  \ottmv{x}  \ottsym{:=}  \ottnt{V}  ] } \rrbracket  \ottnt{K} $.
\end{lemma}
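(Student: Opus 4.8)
The plan is to prove the statement by induction on $\ottnt{M}$ --- more precisely, on the \lamS typing derivation, on which the translation is defined --- after strengthening it so that the three mutually recursive translation functions are handled in a single induction. Concretely, I would prove simultaneously, for the relevant well-typed terms and values: (i) $\Psi(\ottnt{V}[\ottmv{x} := \ottnt{W}]) = \Psi(\ottnt{V})[\ottmv{x} := \Psi(\ottnt{W})]$; (ii) $\mathscr{C}\llbracket \ottnt{M}[\ottmv{x} := \ottnt{V}] \rrbracket = \mathscr{C}\llbracket \ottnt{M} \rrbracket[\ottmv{x} := \Psi(\ottnt{V})]$; and (iii) the lemma itself, for every $\kappa$ with $\kappa \notin \metafun{FV}(\ottnt{M}) \cup \metafun{FV}(\ottnt{V})$ and every \lamSx-term $\ottnt{K}$. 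Two routine observations are used throughout: substitution in \lamS commutes with every term constructor in the obvious way (and preserves typing, so the type annotation read off by \rnp{TrC-Else} is the same on both sides), and a \lamS-coercion $\ottnt{s}$ carries no free term variables, so $\ottnt{s}[\ottmv{x} := \ottnt{V}] = \ottnt{s}$ and $\Psi(\ottnt{s})$ is untouched by term substitution.

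The interesting base case is $\ottnt{M} = \ottmv{x}$, where $\ottnt{M}[\ottmv{x} := \ottnt{V}] = \ottnt{V}$: on the right, $\mathscr{K}\llbracket \ottnt{V} \rrbracket \ottnt{K} = \Psi(\ottnt{V}) \langle \ottnt{K} \rangle$ by \rnp{Tr-Val}; on the left, $\mathscr{K}\llbracket \ottmv{x} \rrbracket \kappa = \ottmv{x} \langle \kappa \rangle$, so applying $[\ottmv{x} := \Psi(\ottnt{V}), \kappa := \ottnt{K}]$ yields $\Psi(\ottnt{V}) \langle \ottnt{K} \rangle$, using that $\kappa \notin \metafun{FV}(\Psi(\ottnt{V}))$, which follows from $\kappa \notin \metafun{FV}(\ottnt{V})$ and the fact that value translation introduces no free variables. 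The cases $\ottnt{M} = \ottmv{y} \ne \ottmv{x}$, constants, and $\ottkw{blame}\,\ottnt{p}$ are immediate. For an abstraction $\ottnt{M} = \lambda \ottmv{y}.\,\ottnt{N}$, \rnp{Tr-Val} together with the value-translation clause gives $\mathscr{K}\llbracket \lambda \ottmv{y}.\,\ottnt{N} \rrbracket \ottnt{K} = (\lambda(\ottmv{y},\kappa').\,(\mathscr{K}\llbracket \ottnt{N} \rrbracket \kappa')) \langle \ottnt{K} \rangle$; picking $\ottmv{y}$ and $\kappa'$ fresh for $\ottmv{x},\kappa,\ottnt{V},\ottnt{K}$ by $\alpha$-conversion, the outer substitution descends only into the body, and hypothesis (iii) for $\ottnt{N}$ with continuation variable $\kappa'$ closes the case. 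The coercion-application case $\ottnt{M} = \ottnt{N} \langle \ottnt{s} \rangle$ is where \rnp{Tr-Crc} matters: its result is a $\ottkw{let}$ whose bound variable (chosen fresh) is left alone by the substitution, whose binding $\Psi(\ottnt{s}) \mathbin{;\!;} \kappa$ becomes $\Psi(\ottnt{s}) \mathbin{;\!;} \ottnt{K}$ since coercions carry no free term variables, and whose body $\mathscr{K}\llbracket \ottnt{N} \rrbracket \kappa'$ has $\kappa$ non-free, so that only $\ottmv{x}$ is substituted there and hypothesis (iii) for $\ottnt{N}$ applies; \rnp{TrC-Crc} handles the same shape under $\mathscr{C}$ analogously. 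Finally, $\ottnt{M} = \ottnt{op}(\ottnt{M}_1,\ottnt{M}_2)$ and $\ottnt{M} = \ottnt{L}\,\ottnt{N}$ use \rnp{Tr-Op}/\rnp{Tr-App}: after pushing the substitution through the outer constructor and the trailing $\langle \ottnt{K} \rangle$ (again using that $\kappa$ is non-free in the $\mathscr{C}$-translated subterms), hypothesis (ii) on the immediate subterms finishes.

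I expect the main obstacle to be the variable-hygiene bookkeeping rather than any individual computation. Two supporting facts should be established first, by an easy induction on the translation: that the translations introduce no spurious free variables --- $\metafun{FV}(\Psi(\ottnt{V})) \subseteq \metafun{FV}(\ottnt{V})$, and the free variables of $\mathscr{C}\llbracket \ottnt{M} \rrbracket$ and of $\mathscr{K}\llbracket \ottnt{M} \rrbracket \ottnt{K}$ lie in $\metafun{FV}(\ottnt{M})$ together with those of $\ottnt{K}$ --- which is exactly what licenses the recurring step ``$\kappa$ is not free here, so the $\kappa := \ottnt{K}$ component of the substitution is inert''; and that the translations are stable under $\alpha$-renaming of bound variables, which is what lets us always choose the $\kappa'$ and the bound $\ottmv{y}$ produced by the translation fresh for $\ottmv{x},\kappa,\ottnt{V},\ottnt{K}$. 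A tidy way to package part of this is to first prove the auxiliary identity $\mathscr{K}\llbracket \ottnt{M} \rrbracket \ottnt{K} = (\mathscr{K}\llbracket \ottnt{M} \rrbracket \kappa)[\kappa := \ottnt{K}]$ for fresh $\kappa$ (immediate from the compositional shape of the rules), which isolates the role of the continuation-coercion variable; with it in hand, parts (ii) and (iii) interlock cleanly and the lemma reduces to the ordinary $[\ottmv{x} := \Psi(\ottnt{V})]$ calculation, with the $\kappa := \ottnt{K}$ part discharged uniformly. None of these steps is deep, but getting the freshness side conditions exactly right across the mutually recursive clauses is the delicate part.
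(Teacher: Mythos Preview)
Your proposal is correct, and in fact the ``tidy way to package'' you mention at the end is exactly how the paper proceeds. The paper factors the result into two separate lemmas: one (proved by simultaneous induction on $\mathscr{C}$ and $\mathscr{K}$) showing $(\mathscr{K}\llbracket M\rrbracket K)[x:=\Psi(V)]=\mathscr{K}\llbracket M[x:=V]\rrbracket K$ whenever $x\notin\metafun{FV}(K)$, together with the analogous statement for $\mathscr{C}$; and another (proved by a single level of case analysis, with no induction) showing $(\mathscr{K}\llbracket M\rrbracket\kappa)[\kappa:=K]=\mathscr{K}\llbracket M\rrbracket K$ whenever $\kappa\notin\metafun{FV}(M)$. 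The main lemma is then three lines: split the simultaneous substitution into the sequential composition $[x:=\Psi(V)][\kappa:=K]$ (licensed by $\kappa\notin\metafun{FV}(\Psi(V))$), and apply each lemma in turn. Your primary route, the monolithic simultaneous induction on (i)--(iii), also works, but notice that in the coercion-application and abstraction cases you need the \emph{pure} $x$-substitution statement for $\mathscr{K}$, which your (iii) only gives after instantiating $K:=\kappa'$ and observing $[\kappa':=\kappa']$ is the identity. The paper's decomposition buys simpler bookkeeping here: the $\kappa$-substitution lemma needs no induction at all, and the $x$-substitution induction never has to reason about the continuation variable.
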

Roughly speaking, if we identified a value $\ottnt{U}  \langle\!\langle  \ottnt{s}  \rangle\!\rangle$ and an
application $\ottnt{U}  \langle  \ottnt{s}  \rangle$ of $\ottnt{s}$ to an uncoerced value $\ottnt{U}$, then the term
$\ottnt{U}  \langle  \ottnt{s}  \rangle  \langle  \ottnt{t}  \rangle$ would allow two interpretations: an application of $\ottnt{t}$ to
a value $\ottnt{U}  \langle  \ottnt{s}  \rangle$ and applications of $\ottnt{s}$ and $\ottnt{t}$ to $\ottnt{U}$
and committing to either interpretation would break Lemma~\ref{lem:subst-trans}.
\end{sloppypar}

\section{Implementation and Evaluation}
\label{sec:implementation}

\subsection{Implementation}
We have implemented the coercion-passing translation described in
Section~\ref{sec:translation} and the semantics of \lamSx for
Grift~\citep{DBLP:conf/pldi/KuhlenschmidtAS19}\footnote{%
  The semantics of coercions in Grift is so-called
  D~\cite{DBLP:conf/esop/SiekGT09}, which is slightly different from
  that of \lamSx, which is UD.  Since the main difference is in the
  coercion composition, our technique can be applied to Grift.}, %
an experimental compiler for gradually typed languages.  GTLC+, the
language that the Grift compiler implements, supports integers,
floating-point numbers, Booleans, higher-order functions, local
binding by \texttt{let}, (mutually) recursive definitions by
\texttt{letrec}, conditional expressions, iterations, sequencing,
mutable references, and vectors (mutable arrays).

The Grift compiler compiles a GTLC+ program into the C language
where coercions are represented as values of a \texttt{struct} type, and
operations such as coercion application and coercion composition are C functions.
The compiler supports different run-time check schemes, those based on
type-based casts~\cite{conf/scheme/SiekT06} and space-efficient
coercions~\cite{DBLP:conf/pldi/SiekTW15}.
Note that, although meta-level composition $\ottnt{s_{{\mathrm{1}}}}  \fatsemi  \ottnt{s_{{\mathrm{2}}}}$ is implemented, only
nested coercions on \emph{values} are composed; in other words,
\rnp{R-MergeC} was not implemented.  Thus, implicit run-time checks may
break tail calls and seemingly tail-recursive functions may cause
stack overflow.

We modify the compiler phases for run-time checking based on
the space-efficient coercions.  After typechecking a user program, the
compiler inserts type-based casts to the program and converts
type-based casts to space-efficient coercions, following the
translation from blame calculus \lamB to \lamS~\cite{DBLP:conf/pldi/SiekTW15}.  Our
implementation performs the coercion-passing translation after the translation into \lamS.  It is
straightforward to extend the translation scheme to language features
that are not present in \lamS.  For example, here is translation for
conditional expressions:
\[
 \mathscr{K}\llbracket \maybebluetext{  \textsf{if }  \ottnt{M}  \textsf{ then }  \ottnt{N_{{\mathrm{1}}}}  \textsf{ else }  \ottnt{N_{{\mathrm{2}}}}  } \rrbracket  \ottnt{K}  =  \textsf{if }   \mathscr{C}\llbracket \maybebluetext{ \ottnt{M} } \rrbracket   \textsf{ then }  \ottsym{(}   \mathscr{K}\llbracket \maybebluetext{ \ottnt{N_{{\mathrm{1}}}} } \rrbracket  \ottnt{K}   \ottsym{)}  \textsf{ else }  \ottsym{(}   \mathscr{K}\llbracket \maybebluetext{ \ottnt{N_{{\mathrm{2}}}} } \rrbracket  \ottnt{K}   \ottsym{)} .
\]
Since coercions are represented as \texttt{struct}s, we did not have
to do anything special to make coercions first-class.  We
modify another compiler phase that generates operations on coercions
such as $\ottnt{M}  \mathbin{;\!;}  \ottnt{N}$ and \rnp{R-Wrap}.  The current implementation, which
generates C code and uses \texttt{clang}\footnote{\url{https://clang.llvm.org/}} for compilation to machine code,
relies on the C compiler to perform tail-call optimizations.  We have
found the original compiler's handling of recursive types hampers
tail-call optimizations,\footnote{%
  The C function to compose coercions takes a pointer to a \emph{stack-allocated} object
  as an argument and writes into the object when recursive coercions are composed.
  Although those stack-allocated objects never escape and tail-call optimization is safe,
  the C compiler is not powerful enough to see it.
} so our implementation does not deal
with recursive types.  We leave their implementation for future work.

\subsection{Even and Odd Functions}

We first inspected the tail-recursive even--odd functions in GTLC+:
\begin{lstlisting}[language=Grift]
(letrec ([even (lambda ([n : $\ottnt{A_{{\mathrm{1}}}}$]) : $\ottnt{A_{{\mathrm{3}}}}$
           (if (= 0 n) #t (odd (- n 1))))]
          [odd (lambda ([n : $\ottnt{A_{{\mathrm{2}}}}$]) : $\ottnt{A_{{\mathrm{4}}}}$
           (if (= 0 n) #f (even (- n 1))))])
  (odd $\ottmv{n}$))
\end{lstlisting}
where $\ottnt{A_{{\mathrm{1}}}}$ and $\ottnt{A_{{\mathrm{2}}}}$ are either \texttt{Int} or \texttt{Dyn},
and $\ottnt{A_{{\mathrm{3}}}}$ and $\ottnt{A_{{\mathrm{4}}}}$ are either \texttt{Bool} or \texttt{Dyn}.
We run this program with the original and modified compilers
for all combinations of $\ottnt{A_{{\mathrm{1}}}},\ottnt{A_{{\mathrm{2}}}},\ottnt{A_{{\mathrm{3}}}}$, and $\ottnt{A_{{\mathrm{4}}}}$.
We call the program compiled by the original compiler Base,
the program compiled by the modified compiler CrcPS.

We have confirmed that, as $\ottmv{n}$ increases, 12 of 16
configurations of Base cause stack overflow.\footnote{%
  The size of the run-time stack is 8 MB.}  In the four configurations
that survived, both $\ottnt{A_{{\mathrm{3}}}}$ and $\ottnt{A_{{\mathrm{4}}}}$ are set to \texttt{Bool}.
CrcPS \emph{never} causes stack overflow for any configuration.

Although we expected that Base would crash if $\ottnt{A_{{\mathrm{3}}}}$ and $\ottnt{A_{{\mathrm{4}}}}$
are different, it is our surprise that Base causes stack overflow even
when $\ottnt{A_{{\mathrm{3}}}} = \ottnt{A_{{\mathrm{4}}}} = \texttt{Dyn}$.  We have found that it is due
to the typing rule of Grift for conditional expressions.  In Grift, if
one of the branches is given a static type, say \texttt{Bool}, and the
other is \texttt{Dyn}, the whole \texttt{if}-expression is given the
static type and the compiler put a cast from \texttt{Dyn} on the
branch of type \texttt{Dyn}.  If both $\ottnt{A_{{\mathrm{3}}}}$ and $\ottnt{A_{{\mathrm{4}}}}$ are
\texttt{Dyn}, the recursive calls in the two else-branches will
involve casts $ \ottkw{bool} \texttt{?}^{ \ottnt{p} } $ from \texttt{Dyn} to \texttt{Bool} because
the two then-branches are Boolean constants and the if-expressions are given type \texttt{Bool}.
However, since the
return types are declared to be \texttt{Dyn}, the whole
\texttt{if}-expressions are cast back to \texttt{Dyn}, inserting
injections $ \ottkw{bool} \texttt{!} $.  Thus, every recursive call involves a
projection immediately followed by an injection, as shown below,
eventually causing stack overflow.
\begin{lstlisting}[language=Grift]
(letrec ([even (lambda ([n : Dyn]) : Dyn
           (if (= 0 n$\langle   \ottkw{int} \texttt{?}^{ \ottnt{p_{{\mathrm{1}}}} }   \rangle$) #t
               (odd (- n$\langle   \ottkw{int} \texttt{?}^{ \ottnt{p_{{\mathrm{2}}}} }   \rangle$ 1))$\langle   \ottkw{bool} \texttt{?}^{ \ottnt{p_{{\mathrm{3}}}} }   \rangle$)$\langle   \ottkw{bool} \texttt{!}   \rangle$)]
          [odd (lambda ([n : Dyn) : Dyn
           (if (= 0 n$\langle   \ottkw{int} \texttt{?}^{ \ottnt{p_{{\mathrm{4}}}} }   \rangle$) #f
               (even (- n$\langle   \ottkw{int} \texttt{?}^{ \ottnt{p_{{\mathrm{5}}}} }   \rangle$ 1))$\langle   \ottkw{bool} \texttt{?}^{ \ottnt{p_{{\mathrm{6}}}} }   \rangle$)$\langle   \ottkw{bool} \texttt{!}   \rangle$))])
  (odd $\ottmv{n}$))
\end{lstlisting}

\subsection{Evaluation}

We have conducted some experiments to measure the overhead of
the coercion-passing style translation.  The benchmark programs we have used
are taken from Kuhlenschmidt et al.~\cite{DBLP:conf/pldi/KuhlenschmidtAS19}\footnote{\url{https://github.com/Gradual-Typing/benchmarks}};
we excluded the sieve program because of the use of recursive types.
We also include the even/odd program only for reference,
which is relatively small compared to other programs.

We compare the running time of a benchmark program between Base and CrcPS.
To take many partially typed configurations for each benchmark program into account,
we focus on the so-called \emph{fine-grained} approach,
where everywhere a type is required is given either the dynamic type \texttt{Dyn}
or an appropriate static type.\footnote{%
  The other approach is called \emph{coarse-grained},
  where functions in each module are all statically or all dynamically typed.
}
In the fine-grained approach, the number of configurations is $2^n$
where $n$ is the number of type annotations.
When this number is very large, we consider uniformly sampled configurations.
We use the sampling algorithm\footnote{\url{https://github.com/Gradual-Typing/Dynamizer}}
from \citep{DBLP:conf/pldi/KuhlenschmidtAS19}.

We describe the (sampled) number of partially typed configurations and
main language features used for each benchmark program below.
(Each benchmark program has one additional type annotation for the return type
of the 0-ary main function.)  For more detailed description of
benchmark programs, we refer readers to Kuhlenschmidt et
al.~\cite{DBLP:conf/pldi/KuhlenschmidtAS19}.

\begin{center}
\begin{tabular}{lll}
name & \# of configurations & description \\ \hline
even--odd & all $32 = 2^5$ &  mutually tail-recursive functions \\
n-body & 300 out of $2^{136}$ & vectors \\  
tak & all $256 = 2^8$  & recursive function \\
ray & 300 out of $2^{280}$  & tuples and iterations \\
blackscholes & 300 out of $2^{128}$  & vectors and iterations \\
matmult & 300 out of $2^{33}$  & vectors and iterations \\
quicksort & 300 out of $2^{44}$  & vectors \\
fft & 300 out of $2^{67}$ & vectors
\end{tabular}
\end{center}

Our benchmark method is as follows:
For each partially typed configuration of a benchmark program,
we measure its running time by taking the average of 5 runs for Base and CrcPS,
and compute the ratio of CrcPS to Base.
We use a machine with a 8-core 3.6 GHz Intel Core i7-7700 and 16 GB memory,
and run the benchmark programs within a Docker 
container (Docker version 19.03.5) which runs Arch Linux.
The generated C code is compiled by clang version
9.0.0 with \texttt{-O3} so that tail-call optimization is applied.
The size of the run-time stack is set as unlimited.

\begin{figure}[t]
  \centering
  \includegraphics[draft=false,width=\columnwidth]{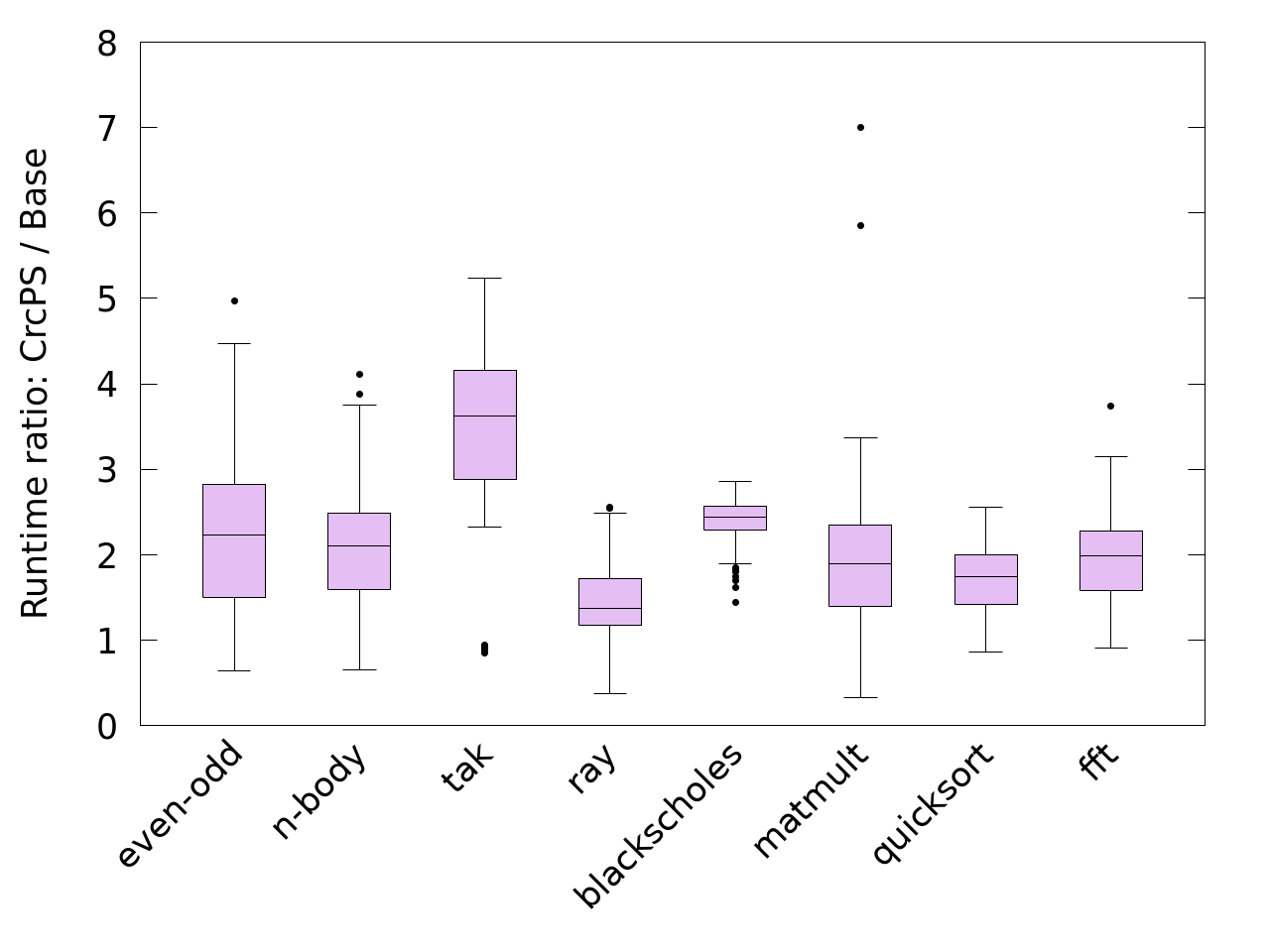}
  \caption{A box plot for the running time ratios of CrcPS to Base
    across (sampled) partially typed configurations of the benchmark programs.
    (As is standard,
    the lower/upper end of a box indicates the first/third quartile, respectively,
    and the middle line in a box indicates the median.
    The length of each whisker is below 1.5 times of interquartile range, and
    outliers are plotted individually.)
  }
  \label{fig:boxplot}
\end{figure}

Figure~\ref{fig:boxplot} shows the result in box plots.
\iffull
(Detailed plots for each benchmark are shown in Appendix~\ref{sec:detailedbenchmark}.)
\else
(Detailed plots for each benchmark are shown in the full version.)
\fi
It shows that, except for tak (and even--odd),
practical programs in CrcPS run up to three times as slow as Base, for most configurations.
It is natural because coercion-passing style translation adds an extra coercion argument
to each function.
In fact, tak and even--odd, which have a lot of function calls, have large overhead
compared with other programs.
In even--odd, CrcPS performs many coercion composition operations (and one coercion application)
while Base performs many coercion applications (without any coercion composition).\footnote{%
  An application of a projection coercion to an injected value
  is always computed by coercion composition in CrcPS,
  while the implementation of Base is slightly optimized for first-order types.
}
Thus, the difference between Base and CrcPS for even--odd is partially due to the difference
of the cost of coercion application and coercion composition.

The benchmark programs other than tak and even--odd
mainly concern vectors and iterations over them.
Vector operations are treated in the translation as primitive operations, which
we consider do not have much overhead by the translation.
In fact, our translation implementation optimizes the rule \rnp{Tr-Op}
when its continuation is $\ottkw{id}$: $ \mathscr{K}\llbracket \maybebluetext{ \ottnt{op}  \ottsym{(}  \ottnt{M}  \ottsym{,}  \ottnt{N}  \ottsym{)} } \rrbracket  \ottkw{id}   \ottsym{=}  \ottnt{op}  \ottsym{(}   \mathscr{C}\llbracket \maybebluetext{ \ottnt{M} } \rrbracket   \ottsym{,}   \mathscr{C}\llbracket \maybebluetext{ \ottnt{N} } \rrbracket   \ottsym{)}$
without an application of an identity coercion.

There are several configurations in which CrcPS is faster than Base but
we have not figured out why this is the case.

\section{Related Work}
\label{sec:related}

\subsection{Space-Efficient Coercion/Cast Calculi}
\begin{sloppypar}
As we have already mentioned, it is fairly well known that
coercions~\cite{DBLP:journals/scp/Henglein94} and
casts~\cite{DBLP:conf/esop/WadlerF09} hamper tail-call
optimization and make the space complexity of the execution of a
program worse than the execution under an unchecked semantics.
We discuss below a few pieces of work~\cite{%
  DBLP:conf/sfp/HermanTF07,DBLP:journals/lisp/HermanTF10,
  DBLP:conf/esop/SiekGT09,
  DBLP:conf/popl/SiekW10,DBLP:conf/icfp/Garcia13,DBLP:conf/pldi/SiekTW15}
addressing the problem.
\end{sloppypar}

To the best of our knowledge, Herman et
al.~\cite{DBLP:conf/sfp/HermanTF07,DBLP:journals/lisp/HermanTF10} were
the first to observe the space-efficiency problem of inserted dynamic
checks.  They developed a variant of Henglein's coercion calculus with
semantics such that a sequence of coercion applications is eagerly
composed to reduce the size of coercions.  However, they identified
two coercions $\ottsym{(}  \ottnt{c_{{\mathrm{1}}}}  \ottsym{;}  \ottnt{c_{{\mathrm{2}}}}  \ottsym{)}  \ottsym{;}  \ottnt{c_{{\mathrm{3}}}}$ and $\ottnt{c_{{\mathrm{1}}}}  \ottsym{;}  \ottsym{(}  \ottnt{c_{{\mathrm{2}}}}  \ottsym{;}  \ottnt{c_{{\mathrm{3}}}}  \ottsym{)}$ (note that
$\ottnt{c_{{\mathrm{1}}}}  \ottsym{;}  \ottnt{c_{{\mathrm{2}}}}$ is not a meta-level operator but only a \emph{formal}
composition constructor); thus, an algorithm for computing coercion
composition was not very clear.  They did not take blame
tracking~\cite{DBLP:conf/icfp/FindlerF02} into account, either.

Later, Siek et al.~\cite{DBLP:conf/esop/SiekGT09} extended
Herman et al.~\cite{DBLP:conf/sfp/HermanTF07,DBLP:journals/lisp/HermanTF10} with a
few different blame tracking strategies.  The issue of identifying
$\ottsym{(}  \ottnt{c_{{\mathrm{1}}}}  \ottsym{;}  \ottnt{c_{{\mathrm{2}}}}  \ottsym{)}  \ottsym{;}  \ottnt{c_{{\mathrm{3}}}}$ and $\ottnt{c_{{\mathrm{1}}}}  \ottsym{;}  \ottsym{(}  \ottnt{c_{{\mathrm{2}}}}  \ottsym{;}  \ottnt{c_{{\mathrm{3}}}}  \ottsym{)}$ remained.  According to
their terminology, our work, which follows previous
work~\cite{DBLP:conf/pldi/SiekTW15}, adopts the UD semantics, which
allows only $\mathord{\star}  \rightarrow  \mathord{\star}$ as a tag to functional values, as opposed to
the D semantics, which allows any function types to be used as a tag.

Siek and Wadler~\cite{DBLP:conf/popl/SiekW10} introduced threesomes to
a blame calculus as another solution to the space-efficiency problem.
Threesome casts have a third type (called a mediating type) in
addition to the source and target types; a threesome cast is
considered a downcast from the source type to the mediating, followed
by an upcast from the mediating type to the target.  Threesome casts
allow a simple recursive algorithm to compose two threesome casts but
blame tracking is rather complicated.

Garcia~\cite{DBLP:conf/icfp/Garcia13} gave a translation from coercion
calculi to threesome calculi and show that the two solutions to the
space-efficiency problem are equivalent in some sense.  He introduced
supercoercions and a recursive algorithm to compute composition of
supercoercions but they were complex, too.

Siek et al.~\cite{DBLP:conf/pldi/SiekTW15} proposed yet another space-efficient
coercion calculus \lamS, in which they succeeded in developing a
simple recursive algorithm for coercion composition by restricting
coercions to be in certain canonical forms---what they call
space-efficient coercions.  They also gave a translation from blame
calculus \lamB to \lamS (via Henglein's coercion calculus \lamC) and showed
that the translation is fully abstract.  As we have discussed already,
our \lamS has introduced syntax that distinguishes an application
$\ottnt{U}  \langle  \ottnt{s}  \rangle$ of a coercion to (uncoerced) values from $\ottnt{U}  \langle\!\langle  \ottnt{d}  \rangle\!\rangle$ for a
value wrapped by a delayed coercion.  Such distinction, which can be seen
in some blame calculi~\cite{DBLP:conf/esop/WadlerF09}, is not just
an aesthetic choice but crucial for proving correctness of the translation.

All the above-mentioned calculi adopt a nonstandard reduction rule to
compose coercions or casts even before the subject evaluates to a
value, together with a nonstandard form of evaluation contexts, and as a result it
has not been clear how to implement them efficiently.
Herman et al.~\cite{DBLP:conf/sfp/HermanTF07,DBLP:journals/lisp/HermanTF10}
sketched a few possible implementation strategies, including coercion
passing, but details were not discussed.
Siek and Garcia~\cite{DBLP:conf/icfp/SiekG12} showed an interpreter which performs
coercion composition at tail calls.  Although not showing correctness
of the interpreter, their interpreter would give a hint to direct
low-level implementation of space-efficient coercions.  Our work
addresses the problem of the nonstandard semantics in a different
way---by translating a program into coercion-passing style.  The
difference, however, may not be so large as it may appear at first: in
Siek and Garcia~\cite{DBLP:conf/icfp/SiekG12}, a state of the abstract machine
includes an evaluation context, which contains the information on a
coercion to be applied to a return value and such a coercion roughly
corresponds to our continuation coercions.  More detailed analysis of
the relationship between the two implementation schemes is left for
future work.

Kuhlenschmidt et al.~\cite{DBLP:conf/pldi/KuhlenschmidtAS19} built an experimental compiler
Grift for gradual typing with structural types.  It supports run-time
checking with the space-efficient coercions of \lamS but does not support
composition of coercions at tail positions.  We have implemented
our coercion-passing translation for the Grift compiler.

Greenberg~\cite{DBLP:conf/popl/Greenberg15} has studied the same
space-efficiency problem in the context of manifest contract
calculi~\cite{DBLP:journals/toplas/KnowlesF10,
  DBLP:conf/popl/GreenbergPW10, DBLP:journals/jfp/GreenbergPW12} and
proposed a few semantics for composing casts that involve contract
checking.  Feltey et al.~\cite{DBLP:journals/pacmpl/FelteyGSFS18} recently
implemented Greenberg's eidetic contracts on top of Typed
Racket~\cite{DBLP:conf/popl/Tobin-HochstadtF08} but, similarly to
Kuhlenschmidt et al.~\cite{DBLP:conf/pldi/KuhlenschmidtAS19}, composition is limited to a
sequence of contracts applied to values.

There is other recent work for making gradual typing
efficient~\cite{DBLP:journals/pacmpl/BaumanBST17,
  DBLP:journals/pacmpl/MuehlboeckT17,
  DBLP:journals/pacmpl/RichardsAT17,DBLP:conf/popl/RastogiSFBV15} but
as far as we know, none of them addresses the problem caused by
run-time checking applied to tail positions.
Additionally, Castagna et al.~\cite{conf/ifl/CastagnaDLS19} implemented a virtual machine
for space-efficient gradual typing in presence of set-theoretic types,
but without blame tracking.
They address the problem caused by casts applied to tail positions
by an approach similar to the one in the interpreter by Siek and Garcia~\cite{DBLP:conf/icfp/SiekG12}.
They implemented their virtual machine and evaluated their implementation
by benchmarks such as the even--odd functions.

\subsection{Continuation-Passing Style}

Our coercion-passing style translation is inspired by
continuation-passing style translation, first formalized
by Plotkin~\cite{DBLP:journals/tcs/Plotkin75}.  However, coercions represent
only a part of the rest of computation and are, in this sense, closer
to delimited continuations~\cite{DBLP:conf/lfp/DanvyF90}.  Roughly
speaking, translating a subexpression with $\ottkw{id}$ corresponds to the
reset operation~\cite{DBLP:conf/lfp/DanvyF90} to delimit
continuations.  Unlike (delimited) continuations, which are usually
expressed by first-class functions, coercions have compact
representations and compactness can be preserved by composition.

Wallach and Felten~\cite{DBLP:conf/sp/WallachF98} proposed security-passing style to
implement Java stack inspection~\cite{JVMspec}.  The idea is indeed
similar to ours: each function is augmented by an additional argument
to pass information on run-time security checking.

In CPS, it is crucial to eliminate administrative redexes to achieve a
simulation property~\cite{DBLP:journals/tcs/Plotkin75,
  DBLP:books/cu/Appel1992, DBLP:conf/mfps/Wand91,
  DBLP:journals/lisp/SabryF93, DBLP:journals/tcs/DanvyN03,
  DBLP:journals/mscs/DanvyF92, DBLP:journals/toplas/SabryW97}, which
says that a reduction in the source is simulated by a sequence of
(one-directional) reductions in the translation.  Simulation is
usually achieved by applying different translations to an application
$\ottnt{M} \, \ottnt{N}$, depending on whether $\ottnt{M}$ and $\ottnt{N}$ are values or not.
In addition to such value/nonvalue distinction, our coercion-passing style
translation also relies on whether subterms are coercion applications or not.

Continuation-passing style eliminates the difference between
call-by-name and call-by-value but our coercion-passing style
translation works only under the call-by-value semantics of the target
language because coercions have to be eagerly composed.  It would be
interesting to investigate call-by-name for either the source or
the target language, or both.

\subsection{First-Class Coercions}

The idea of first-class coercions is also found in Cretin and R\'emy~\cite{DBLP:conf/popl/CretinR12}.
Their language $\mathrm{F}_\iota$ is equipped with abstraction over coercions.
However, their coercions are not for gradual typing but for parametric polymorphism and
subtyping polymorphism.

\section{Conclusion}
\label{sec:conclusion}

We have developed a new coercion calculus \lamSx with first-class
coercions as a target language of coercion-passing style translation
from \lamS, an existing space-efficient coercion calculus.  We have
proved the translation preserves both typing and semantics.  To
achieve a simulation property, it is important to reduce
administrative coercions, just as in CPS transformations.  Our
coercion-passing style translation solves the difficulty in implementing
the semantics of \lamS in a faithful manner and, with the help of
first-class coercions, makes it possible to implement in a compiler
for a call-by-value language.  We have modified an existing compiler
for a gradually typed language and conducted some experiments.
We have confirmed that our implementation successfully overcomes
stack overflow caused by coercions at tail positions,
which Kuhlenschmidt at al.~\cite{DBLP:conf/pldi/KuhlenschmidtAS19} did not support.
Our experiment has shown that for practical programs (without heavy use of function calls),
the coercion-passing style translation
causes slowdown up to 3 times for most partially typed configurations.

Aside from completing the implementation by adding recursive types,
which the original Grift compiler supports, more efficient
implementation is an obvious direction of future work.  Our coercion-passing style
translation introduces several identity coercions and optimizing
operations on coercions will be necessary.

From a theoretical point of view, it would be interesting to extend
the technique to gradual typing in the presence of parametric
polymorphism~\cite{DBLP:conf/popl/AhmedFSW11,
  DBLP:journals/pacmpl/AhmedJSW17, DBLP:journals/pacmpl/IgarashiSI17,
  DBLP:conf/esop/XieBO18,
  DBLP:journals/pacmpl/ToroLT19}, for which a polymorphic coercion
calculus has to be studied first---Luo~\cite{DBLP:journals/mscs/Luo08}
and Kie{\ss}ling and Luo~\cite{DBLP:conf/types/KiesslingL03}, who study coercive subtyping in
polymorphic settings, may be relevant.  The present design of \lamSx
is geared towards coercion-passing style.  For example, in \lamSx,
trivial (namely identity) coercions for coercion types $\ottnt{A}  \rightsquigarrow  \ottnt{B}$ are
allowed; passing coercions to dynamically typed code is prohibited;
variables cannot appear as an argument to coercion constructors, like
$\ottmv{x}  \Rightarrow  \ottnt{s}$. It may be interesting to study more general first-class
coercions without such restrictions.

\bibliography{biblio}

\iffull
\clearpage
\appendix
\section{Proofs}
\label{sec:appendix}

\begin{figure}[h]
  Type consistency \hfill\fbox{$\ottnt{A}  \sim  \ottnt{B}$}
  \begin{center}
    \infrule[C-Base]{}{
      \iota  \sim  \iota
    } \hfill
    \infrule[C-DynR]{}{
      \ottnt{A}  \sim  \mathord{\star}
    } \hfill
    \infrule[C-DynL]{}{
      \mathord{\star}  \sim  \ottnt{A}
    } \hfill
    \infrule[C-Fun]{            
      \ottnt{A}  \sim  \ottnt{A'} \andalso
      \ottnt{B}  \sim  \ottnt{B'}
    }{
      \ottnt{A}  \rightarrow  \ottnt{B}  \sim  \ottnt{A'}  \rightarrow  \ottnt{B'}
    }
  \end{center}
  \caption{Type consistency.}
  \label{fig:consistency}
\end{figure}

\subsection{Properties of \lamS}

\iffull\propSrcTgt*
\else
\begin{proposition}[Source and Target Types]\leavevmode
  \begin{enumerate}
  \item If $\ottnt{i}  \ottsym{:}  \ottnt{A}  \rightsquigarrow  \ottnt{B}$ then $\ottnt{A}  \ne  \mathord{\star}$.
  \item If $\ottnt{g}  \ottsym{:}  \ottnt{A}  \rightsquigarrow  \ottnt{B}$, then $\ottnt{A}  \ne  \mathord{\star}$ and $\ottnt{B}  \ne  \mathord{\star}$
    and there exists a unique $\ottnt{G}$ such that $\ottnt{A}  \sim  \ottnt{G}$ and $\ottnt{G}  \sim  \ottnt{B}$.
  \end{enumerate}
\end{proposition}
\fi
\begin{proof}
  (1) By case analysis on $\ottnt{i}$ with (2).
  (2) By case analysis on $\ottnt{g}$.
\end{proof}

\begin{proposition}
  Coercion composition $\ottnt{s}  \fatsemi  \ottnt{t}$ is terminating.
\end{proposition}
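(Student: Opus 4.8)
The plan is to exhibit a well-founded measure on pairs of coercions that strictly decreases along every recursive call issued by the defining equations of Figure~\ref{fig:semS-cmp}, and then conclude by well-founded induction. Reading $\ottnt{s}  \fatsemi  \ottnt{t}$ as a recursive algorithm, ``terminating'' means there is no infinite chain of recursive calls; since the only nested invocations of $ \fatsemi $ on a right-hand side are on strictly smaller arguments, this is exactly what a decreasing measure into $(\mathbb{N},<)$ buys us. The natural candidate is $\metafun{crcsize}(\ottnt{s}) + \metafun{crcsize}(\ottnt{t})$, where $\metafun{crcsize}(\cdot)$ is the size of a coercion's abstract syntax tree, counting $\ottnt{g}  \ottsym{;}   \ottnt{G} \texttt{!} $, $ \ottnt{G} \texttt{?}^{ \ottnt{p} }   \ottsym{;}  \ottnt{i}$ and $\ottnt{s}  \rightarrow  \ottnt{t}$ as one node plus the sizes of their immediate subcoercions, and $ \ottkw{id} _{ \ottnt{A} } $ and $ \bot^{ \ottnt{G}   \ottnt{p}   \ottnt{H} } $ as leaves.

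First I would note that the shape of the first argument (and in a couple of cases of the second) picks out at most one applicable rule, so it suffices to check each rule in turn that every occurrence of $ \fatsemi $ on its right-hand side is applied to a pair strictly below the left-hand pair in the measure. The rules \rnp{CC-IdDynL}, \rnp{CC-InjId}, \rnp{CC-FailL}, \rnp{CC-Conflict}, \rnp{CC-FailR}, \rnp{CC-IdL}, and \rnp{CC-IdR} make no recursive call, so nothing to check. For \rnp{CC-ProjL}, i.e.\ $\ottsym{(}   \ottnt{G} \texttt{?}^{ \ottnt{p} }   \ottsym{;}  \ottnt{i}  \ottsym{)}  \fatsemi  \ottnt{t} =  \ottnt{G} \texttt{?}^{ \ottnt{p} }   \ottsym{;}  \ottsym{(}  \ottnt{i}  \fatsemi  \ottnt{t}  \ottsym{)}$, the second component is unchanged while the first strictly shrinks, so the sum strictly decreases; symmetrically for \rnp{CC-InjR}, $\ottnt{g}  \fatsemi  \ottsym{(}  \ottnt{h}  \ottsym{;}   \ottnt{H} \texttt{!}   \ottsym{)} = \ottsym{(}  \ottnt{g}  \fatsemi  \ottnt{h}  \ottsym{)}  \ottsym{;}   \ottnt{H} \texttt{!} $. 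For \rnp{CC-Collapse}, $\ottsym{(}  \ottnt{g}  \ottsym{;}   \ottnt{G} \texttt{!}   \ottsym{)}  \fatsemi  \ottsym{(}   \ottnt{G} \texttt{?}^{ \ottnt{p} }   \ottsym{;}  \ottnt{i}  \ottsym{)} = \ottnt{g}  \fatsemi  \ottnt{i}$, both components strictly shrink.

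The one case needing a bit of care is \rnp{CC-Fun}, for two reasons: it issues \emph{two} recursive calls, $\ottnt{s'}  \fatsemi  \ottnt{s}$ and $\ottnt{t}  \fatsemi  \ottnt{t'}$, the first with its argument order swapped by contravariance, and it is followed by a syntactic test of whether the results are identity coercions. Because the measure is the symmetric sum of sizes, the swap is harmless, and since $\metafun{crcsize}(\ottnt{s}  \rightarrow  \ottnt{t}) \ge \metafun{crcsize}(\ottnt{s}) + 1$ and likewise on the other side, both $\metafun{crcsize}(\ottnt{s'}) + \metafun{crcsize}(\ottnt{s})$ and $\metafun{crcsize}(\ottnt{t}) + \metafun{crcsize}(\ottnt{t'})$ are strictly below $\metafun{crcsize}(\ottnt{s}  \rightarrow  \ottnt{t}) + \metafun{crcsize}(\ottnt{s'}  \rightarrow  \ottnt{t'})$; the trailing identity test is decidable and does no further composition, so it is irrelevant to termination. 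With the strict decrease established for every rule, the call tree is finitely branching and admits no infinite descending path, so well-founded induction on the measure gives termination. The argument transfers verbatim to \lamSx, whose only change is the \rnp{CC-Fun} rule with $ \rightarrow $ replaced by $ \Rightarrow $ and the identical recursive structure.
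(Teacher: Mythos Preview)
Your proposal is correct and takes essentially the same approach as the paper: the paper's proof is the single sentence ``The sum of sizes of two arguments gets smaller at each recursive call of $\fatsemi$,'' which is exactly your measure argument spelled out in detail. Your case-by-case verification, including the observation that the symmetric sum handles the contravariant swap in \rnp{CC-Fun}, is a faithful elaboration of that one-line claim.
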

\begin{proof}
  The sum of sizes of two arguments gets smaller
  at each recursive call of $ \fatsemi $.
\end{proof}

\iffull\lemCmpWelldef*
\else
\begin{lemma} \label{lem:sc-cmp-welldef}
  If $\ottnt{s}  \ottsym{:}  \ottnt{A}  \rightsquigarrow  \ottnt{B}$ and $\ottnt{t}  \ottsym{:}  \ottnt{B}  \rightsquigarrow  \ottnt{C}$, then $\ottsym{(}  \ottnt{s}  \fatsemi  \ottnt{t}  \ottsym{)}  \ottsym{:}  \ottnt{A}  \rightsquigarrow  \ottnt{C}$.
\end{lemma}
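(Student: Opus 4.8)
The plan is to prove the statement by well-founded induction on the sum of the sizes of $\ottnt{s}$ and $\ottnt{t}$, following exactly the case structure of the defining clauses of $ \fatsemi $ in Figure~\ref{fig:semS-cmp} (recall from the termination argument that each recursive call strictly decreases this measure). Since the clause that fires is determined by the shape of $\ottnt{s}$ --- and, in the clauses where an injection meets a projection, also by the shape of $\ottnt{t}$ --- I would first split on the grammar of $\ottnt{s}$ ($ \ottkw{id} _{ \mathord{\star} } $, $ \ottnt{G} \texttt{?}^{ \ottnt{p} }   \ottsym{;}  \ottnt{i}$, $\ottnt{g}  \ottsym{;}   \ottnt{G} \texttt{!} $, $\ottnt{g}$, or $ \bot^{ \ottnt{G}   \ottnt{p}   \ottnt{H} } $) and, within each case, invert the derivations of $\ottnt{s}  \ottsym{:}  \ottnt{A}  \rightsquigarrow  \ottnt{B}$ and $\ottnt{t}  \ottsym{:}  \ottnt{B}  \rightsquigarrow  \ottnt{C}$ to pin down the forms of $\ottnt{A}$, $\ottnt{B}$, $\ottnt{C}$ and of any subcoercions. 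Along the way this also establishes that the function is total on well-typed inputs (the ``defined'' part claimed in the surrounding text): in every case some clause indeed matches, because the typing of $\ottnt{t}$ constrains its shape once its source type $\ottnt{B}$ is forced to coincide with the target type of $\ottnt{s}$.

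The base-like clauses --- \rnp{CC-IdDynL}, \rnp{CC-FailL}, \rnp{CC-FailR}, \rnp{CC-InjId}, \rnp{CC-IdL}, \rnp{CC-IdR} --- are then immediate from the corresponding well-formedness rules (\rnp{CT-Inj}, \rnp{CT-Proj}, \rnp{CT-Id}, \rnp{CT-Fail}) together with the inverted hypotheses. The genuinely recursive clauses --- \rnp{CC-ProjL}, \rnp{CC-InjR}, \rnp{CC-Collapse}, \rnp{CC-Fun} --- are handled by applying the induction hypothesis to the smaller composition and then re-assembling a derivation with \rnp{CT-Seq}, \rnp{CT-Inj}, \rnp{CT-Proj}, and \rnp{CT-Fun}. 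Here Proposition~\ref{prop:src-tgt} does the bookkeeping: it tells us that a ground coercion's source and target are never $ \mathord{\star} $ and that it factors through a unique ground type, which is precisely what is needed, e.g. in \rnp{CC-Collapse}, to see that the type projected to matches the source type of the ground coercion $\ottnt{g}$, so that $\ottnt{g}  \fatsemi  \ottnt{i}$ is well formed and the induction hypothesis applies.

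I expect two points to need the most care. First, the conflict clause \rnp{CC-Conflict}, which produces $ \bot^{ \ottnt{G}   \ottnt{p}   \ottnt{H} } $ from $\ottsym{(}  \ottnt{g}  \ottsym{;}   \ottnt{G} \texttt{!}   \ottsym{)}  \fatsemi  \ottsym{(}   \ottnt{H} \texttt{?}^{ \ottnt{p} }   \ottsym{;}  \ottnt{i}  \ottsym{)}$ with $\ottnt{G}  \ne  \ottnt{H}$: to invoke \rnp{CT-Fail} I must check $\ottnt{A}  \ne  \mathord{\star}$ and $\ottnt{A}  \sim  \ottnt{G}$, both of which follow from Proposition~\ref{prop:src-tgt} since $\ottnt{A}$ is the source of the ground coercion $\ottnt{g}$ (and $\ottnt{g}  \ottsym{:}  \ottnt{A}  \rightsquigarrow  \ottnt{G}$ by inverting \rnp{CT-Inj} on $\ottnt{g}  \ottsym{;}   \ottnt{G} \texttt{!} $), while the target $\ottnt{C}$ may be arbitrary, which \rnp{CT-Fail} permits. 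Second, the normalization branch of \rnp{CC-Fun}, where $\ottsym{(}  \ottnt{s}  \rightarrow  \ottnt{t}  \ottsym{)}  \fatsemi  \ottsym{(}  \ottnt{s'}  \rightarrow  \ottnt{t'}  \ottsym{)}$ collapses to $ \ottkw{id} _{ \ottnt{A}  \rightarrow  \ottnt{B} } $ when the recursive compositions already equal identities: from inverting \rnp{CT-Fun} twice and applying the induction hypothesis, $\ottnt{s'}  \fatsemi  \ottnt{s}$ and $\ottnt{t}  \fatsemi  \ottnt{t'}$ are well typed, and since the subscript of an identity coercion is syntactically determined, the equalities $\ottnt{s'}  \fatsemi  \ottnt{s}  \ottsym{=}   \ottkw{id} _{ \ottnt{A} } $ and $\ottnt{t}  \fatsemi  \ottnt{t'}  \ottsym{=}   \ottkw{id} _{ \ottnt{B} } $ force the relevant source/target types to line up, so $ \ottkw{id} _{ \ottnt{A}  \rightarrow  \ottnt{B} } $ has the required type via \rnp{CT-Id}. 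I would isolate this last point as a one-line observation that $ \ottkw{id} _{ \ottnt{D} }   \ottsym{:}  \ottnt{D'}  \rightsquigarrow  \ottnt{D''}$ forces $\ottnt{D}  \ottsym{=}  \ottnt{D'}  \ottsym{=}  \ottnt{D''}$. Everything else is routine re-assembly of typing derivations, and the overall obstacle is simply keeping the many subcases of the injection/projection interaction organized rather than any deep argument.
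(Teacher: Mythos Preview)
Your plan is essentially the paper's, but the paper packages the induction as four statements proved simultaneously: beyond the top-level claim for $\ottnt{s}  \fatsemi  \ottnt{t}$, it also records that composing an intermediate coercion with a space-efficient one yields an \emph{intermediate} coercion, that composing a ground coercion with an intermediate one yields an \emph{intermediate} coercion, and that composing two ground coercions yields a \emph{ground} coercion. This strengthening is the one piece your proposal does not make explicit, and it is genuinely needed. In the recursive clauses \rnp{CC-ProjL} and \rnp{CC-InjR} the right-hand side is only a legal space-efficient (resp.\ intermediate) coercion if the recursive call lands in the correct syntactic category: $ \ottnt{G} \texttt{?}^{ \ottnt{p} }   \ottsym{;}  \ottsym{(}  \ottnt{i}  \fatsemi  \ottnt{t}  \ottsym{)}$ is in the grammar of space-efficient coercions only when $\ottnt{i}  \fatsemi  \ottnt{t}$ is an intermediate coercion, and $\ottsym{(}  \ottnt{g}  \fatsemi  \ottnt{h}  \ottsym{)}  \ottsym{;}   \ottnt{H} \texttt{!} $ is an intermediate coercion only when $\ottnt{g}  \fatsemi  \ottnt{h}$ is a ground coercion. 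Your induction hypothesis as stated only hands you a typing derivation for the recursive result, not its syntactic class, so invoking \rnp{CT-Seq} alone does not certify that you have reassembled a legitimate space-efficient coercion (and hence that $ \fatsemi $ is even defined at that point).

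Once you fold this category information into the IH---exactly the paper's four-item formulation---everything else in your outline goes through as written; in particular the careful points you flag for \rnp{CC-Conflict} and for the normalizing branch of \rnp{CC-Fun} are handled just as you describe.
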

\fi
\begin{proof}
  We prove the following four items simultaneously by straightforward induction:
  \begin{itemize}
  \item If $\ottnt{s}  \ottsym{:}  \ottnt{A}  \rightsquigarrow  \ottnt{B}$ and $\ottnt{t}  \ottsym{:}  \ottnt{B}  \rightsquigarrow  \ottnt{C}$, then $\ottsym{(}  \ottnt{s}  \fatsemi  \ottnt{t}  \ottsym{)}  \ottsym{:}  \ottnt{A}  \rightsquigarrow  \ottnt{C}$.
  \item If $\ottnt{i}  \ottsym{:}  \ottnt{A}  \rightsquigarrow  \ottnt{B}$ and $\ottnt{t}  \ottsym{:}  \ottnt{B}  \rightsquigarrow  \ottnt{C}$, then there exists $\ottnt{i'}$ such that $\ottnt{i'} = \ottnt{i}  \fatsemi  \ottnt{t}$ and $\ottnt{i'}  \ottsym{:}  \ottnt{A}  \rightsquigarrow  \ottnt{C}$.
  \item If $\ottnt{g}  \ottsym{:}  \ottnt{A}  \rightsquigarrow  \ottnt{B}$ and $\ottnt{i}  \ottsym{:}  \ottnt{B}  \rightsquigarrow  \ottnt{C}$, then there exists $\ottnt{i'}$ such that $\ottnt{i'} = \ottnt{g}  \fatsemi  \ottnt{i}$ and $\ottnt{i'}  \ottsym{:}  \ottnt{A}  \rightsquigarrow  \ottnt{C}$.
  \item If $\ottnt{g_{{\mathrm{1}}}}  \ottsym{:}  \ottnt{A}  \rightsquigarrow  \ottnt{B}$ and $\ottnt{g_{{\mathrm{2}}}}  \ottsym{:}  \ottnt{B}  \rightsquigarrow  \ottnt{C}$, then there exists $\ottnt{g_{{\mathrm{3}}}}$ such that $\ottnt{g_{{\mathrm{3}}}} = \ottnt{g_{{\mathrm{1}}}}  \fatsemi  \ottnt{g_{{\mathrm{2}}}}$ and $\ottnt{g_{{\mathrm{3}}}}  \ottsym{:}  \ottnt{A}  \rightsquigarrow  \ottnt{C}$. \qedhere
  \end{itemize}
\end{proof}

\begin{lemma} \label{lem:Sc-terminate}
  $ \accentset{\mathsf{c} }{\evalto}_{\mathsf{S} } $ is terminating.
\end{lemma}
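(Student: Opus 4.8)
The plan is to equip the terms of \lamS with a natural-number measure $\mu$ that strictly decreases along every $\mathbin{  \accentset{\mathsf{c} }{\evalto}_{\mathsf{S} }  }$-step; since $\mathbb{N}$ is well-founded, termination follows at once. Write $a(M)$ for the number of coercion-application subterms of $M$ (those of the form $N  \langle  s  \rangle$) and $v(M)$ for the number of coerced-value subterms (those of the form $U  \langle\!\langle  d  \rangle\!\rangle$), both defined by the evident structural recursion---the internal structure of coercions is irrelevant, since space-efficient coercions contain no term-level coercion applications. Counting $a$ alone does not suffice: although it strictly drops under \rnp{R-Id}, \rnp{R-Fail}, \rnp{R-Crc}, and \rnp{R-MergeC}, it is left unchanged by \rnp{R-MergeV}, which replaces the coercion application $U  \langle\!\langle  d  \rangle\!\rangle  \langle  t  \rangle$ by $U  \langle  d  \fatsemi  t  \rangle$ while merely removing the coerced value $U  \langle\!\langle  d  \rangle\!\rangle$; so we must also track $v$, and since \rnp{R-Crc} trades a coercion application for a coerced value, the former must carry strictly more weight. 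We therefore set $\mu(M) = 2\,a(M) + v(M)$. As $\mu$ is additive over term structure, for every evaluation context $\mathcal{F}$ there is a constant $c_{\mathcal{F}}$ with $\mu(\mathcal{F}  [  M  ]) = \mu(M) + c_{\mathcal{F}}$, whence $\mu(\mathcal{F}  [  M  ]) - \mu(\mathcal{F}  [  N  ]) = \mu(M) - \mu(N)$.

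The crux is to prove that $\mu(M) > \mu(N)$ whenever $ M    \mathbin{\accentset{\mathsf{c} }{\reduces}_{\mathsf{S} } }    N $, by case analysis on the c-reduction rules of Figure~\ref{fig:semS}. For \rnp{R-Id} and \rnp{R-Fail} an outermost coercion application is discarded (and for \rnp{R-Fail} the whole subterm collapses to $\mathsf{blame}\,p$, with $\mu = 0$), so $\mu$ drops by at least $2$. For \rnp{R-MergeC}, $M  \langle  s  \rangle  \langle  t  \rangle$ becomes $M  \langle  s  \fatsemi  t  \rangle$: the only fact needed is that $s  \fatsemi  t$ is again a single space-efficient coercion, so the possible growth of the coercion itself is harmless---$\mu$ counts coercion-application \emph{nodes}, not coercion sizes---and $\mu$ drops by $2$. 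For \rnp{R-MergeV}, $v$ decreases by one and $a$ is unchanged, so $\mu$ drops by $1$. Finally, for \rnp{R-Crc}, $a$ decreases by one and $v$ increases by one, so by the chosen weights $\mu$ drops by $2 - 1 = 1$.

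Given this claim the lemma is immediate: the sole c-evaluation rule is \rnp{E-CtxC}, so every step $ \mathcal{F}  [  M  ]    \mathbin{  \accentset{\mathsf{c} }{\evalto}_{\mathsf{S} }  }    \mathcal{F}  [  N  ] $ comes from a reduction $ M    \mathbin{\accentset{\mathsf{c} }{\reduces}_{\mathsf{S} } }    N $, and additivity gives $\mu(\mathcal{F}  [  M  ]) > \mu(\mathcal{F}  [  N  ])$; since $\mu$ is $\mathbb{N}$-valued, no infinite c-evaluation sequence exists. No real obstacle is expected: the only delicate points are the weighting that rescues \rnp{R-Crc} and the observation that the meta-level composition $s  \fatsemi  t$, though it may enlarge a coercion, introduces no new term-level coercion-application node.
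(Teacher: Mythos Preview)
Your proposal is correct and follows the same overall strategy as the paper---define a natural-number measure on terms and show it strictly decreases under every c-reduction rule---but your measure is strictly simpler. The paper uses $f(M)=4(k+l)+2m+n$, where $m,n$ are your $a,v$ and $k,l$ are the \emph{sums of the sizes} of coercions appearing in $\langle\cdot\rangle$ and $\langle\!\langle\cdot\rangle\!\rangle$ respectively. Your $\mu(M)=2a(M)+v(M)$ is exactly the paper's $2m+n$ component, with the $4(k+l)$ part dropped.

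Your observation that coercion sizes are irrelevant is the key simplification: since space-efficient coercions contain no term-level coercion-application nodes, the rules \rnp{R-MergeC} and \rnp{R-MergeV} merely replace two coercion data by one composed coercion $s\fatsemi t$, leaving the count of $\langle\cdot\rangle$/$\langle\!\langle\cdot\rangle\!\rangle$ nodes unaffected by whatever happens inside the composition. The paper's inclusion of $k+l$ additionally requires (implicitly) that $\mathit{size}(s\fatsemi t)\le\mathit{size}(s)+\mathit{size}(t)$, a property of space-efficient composition that your argument does not need. So your route is both shorter and avoids an extra side condition; the paper's route would, on the other hand, generalise more directly if one ever wanted to bound the total coercion content rather than just termination.
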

\begin{proof}
  Consider a metric $f(M) = 4(k + l) + 2m + n$ of a term $\ottnt{M}$ where:
  \begin{itemize}
  \item $k$ is the sum of the sizes of coercions in $ \langle \cdot \rangle $ in $\ottnt{M}$
  \item $l$ is the sum of the sizes of coercions in $ \langle\!\langle \cdot \rangle\!\rangle $ in $\ottnt{M}$
  \item $m$ is the number of $ \langle \cdot \rangle $
  \item $n$ is the number of $ \langle\!\langle \cdot \rangle\!\rangle $ in $\ottnt{M}$
  \end{itemize}
  It is easy to show that if $ \ottnt{M}    \mathbin{  \accentset{\mathsf{c} }{\evalto}_{\mathsf{S} }  }    \ottnt{N} $ then $f(M) > f(N)$.
\end{proof}

\begin{lemma}[Unique Decomposition]\label{lem:decom-S}
  If $  \emptyset     \vdash_{\mathsf{S} }    \ottnt{M}  :  \ottnt{A} $, then one of the following holds.
  \begin{enumerate}
  \item There uniquely exist a redex $\ottnt{M_{{\mathrm{1}}}}$ and an evaluation context $\mathcal{E}$
    such that $\ottnt{M}  \ottsym{=}  \mathcal{E}  [  \ottnt{M_{{\mathrm{1}}}}  ]$.
  \item $\ottnt{M}  \ottsym{=}  \ottnt{V}$ for some $\ottnt{V}$.
  \item $\ottnt{M}  \ottsym{=}  \ottkw{blame} \, \ottnt{p}$ for some $\ottnt{p}$.
  \end{enumerate}
\end{lemma}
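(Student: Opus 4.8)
The plan is to prove the lemma by induction on the derivation of $  \emptyset     \vdash_{\mathsf{S} }    \ottnt{M}  :  \ottnt{A} $ (equivalently, by structural induction on $\ottnt{M}$ with inversion of the typing rules), establishing \emph{existence} and \emph{uniqueness} of the decomposition simultaneously. The proof leans on a standard canonical-forms lemma for \lamS values: a value of a base type $\iota$ is a constant; a value of a function type is either $ \lambda   \ottmv{x} .\,  \ottnt{N} $ or a wrapped function $\ottnt{U}  \langle\!\langle  \ottnt{s}  \rightarrow  \ottnt{t}  \rangle\!\rangle$; a value of type $ \mathord{\star} $ is an injected value $\ottnt{U}  \langle\!\langle  \ottnt{g}  \ottsym{;}   \ottnt{G} \texttt{!}   \rangle\!\rangle$; and, crucially, no \emph{uncoerced} value $\ottnt{U}$ has type $ \mathord{\star} $. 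Recall that a redex is the left-hand side of one of \rnp{R-Op}, \rnp{R-Beta}, \rnp{R-Wrap}, \rnp{R-Id}, \rnp{R-Fail}, \rnp{R-Crc}, \rnp{R-MergeC}, \rnp{R-MergeV}, that no redex is a value, and that $\ottkw{blame} \, \ottnt{p}$ under a non-trivial context (fired by \rnp{E-Abort}) is counted under item~(1); the constraints on admissible pairs $(\mathcal{E},\ottnt{M_{{\mathrm{1}}}})$ are that an e-redex may sit under any $\mathcal{E}$, a c-redex only under an $\mathcal{F}$ (as \rnp{E-CtxC} demands), and $\ottkw{blame} \, \ottnt{p}$ only under some $\mathcal{E}  \ne  \square$.

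The easy cases are immediate: if $\ottnt{M}$ is a value we are in case~(2); if $\ottnt{M}  \ottsym{=}  \ottkw{blame} \, \ottnt{p}$, case~(3); the case $\ottnt{M}  \ottsym{=}  \ottmv{x}$ is vacuous since $\ottmv{x}$ is not typable under $ \emptyset $ (\rnp{T-Var}). For $\ottnt{M}  \ottsym{=}  \ottnt{op}  \ottsym{(}  \ottnt{M_{{\mathrm{1}}}}  \ottsym{,}  \ottnt{M_{{\mathrm{2}}}}  \ottsym{)}$ and $\ottnt{M}  \ottsym{=}  \ottnt{M_{{\mathrm{1}}}} \, \ottnt{M_{{\mathrm{2}}}}$, I would scan the immediate subterms left to right: if the first non-value subterm decomposes (by the induction hypothesis) as $\mathcal{E}'  [  \ottnt{R}  ]$, prepend the corresponding frame ($\ottnt{op}  \ottsym{(} \, \square \, \ottsym{,}  \ottnt{M_{{\mathrm{2}}}}  \ottsym{)}$, $\ottnt{op}  \ottsym{(}  \ottnt{V}  \ottsym{,} \, \square \, \ottsym{)}$, $\square \, \ottnt{M_{{\mathrm{2}}}}$, or $\ottnt{V} \, \square$) at the root of $\mathcal{E}'$, which the inside-out grammar permits since each is a legal $\mathcal{F}$-frame (and the base remains an $\mathcal{F}$ for any inner coercion frame of $\mathcal{E}'$); if that subterm is $\ottkw{blame} \, \ottnt{p}$, use the same frame as a non-trivial $\mathcal{E}$ together with \rnp{E-Abort}; if all subterms are values, canonical forms produce a redex (both operands constants for \rnp{R-Op}; the operator $ \lambda   \ottmv{x} .\,  \ottnt{N} $ or $\ottnt{U}  \langle\!\langle  \ottnt{s}  \rightarrow  \ottnt{t}  \rangle\!\rangle$ for \rnp{R-Beta}/\rnp{R-Wrap}).

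The delicate case is the coercion application $\ottnt{M}  \ottsym{=}  \ottnt{M_{{\mathrm{1}}}}  \langle  \ottnt{s}  \rangle$, where the key idea is to \emph{first inspect the top-level shape of $\ottnt{M_{{\mathrm{1}}}}$} rather than immediately recursing. (i) If $\ottnt{M_{{\mathrm{1}}}}  \ottsym{=}  \ottnt{U}  \langle\!\langle  \ottnt{d}  \rangle\!\rangle$, then $\ottnt{M}$ is itself an \rnp{R-MergeV} redex, with context $\square$. (ii) If $\ottnt{M_{{\mathrm{1}}}}  \ottsym{=}  \ottnt{N}  \langle  \ottnt{t}  \rangle$, then $\ottnt{M}  \ottsym{=}  \ottnt{N}  \langle  \ottnt{t}  \rangle  \langle  \ottnt{s}  \rangle$ is an \rnp{R-MergeC} redex, again with context $\square$. (iii) If $\ottnt{M_{{\mathrm{1}}}}$ is an uncoerced value $\ottnt{U}$: since $\ottnt{U}$'s type is not $ \mathord{\star} $, inversion of \rnp{T-Crc} together with Proposition~\ref{prop:src-tgt} forces $\ottnt{s}$ to be neither $ \ottkw{id} _{ \mathord{\star} } $ nor of the form $ \ottnt{G} \texttt{?}^{ \ottnt{p} }   \ottsym{;}  \ottnt{i}$, so $\ottnt{s}$ is $ \ottkw{id} _{ \ottnt{A} } $ (\rnp{R-Id}), $ \bot^{ \ottnt{G}   \ottnt{p}   \ottnt{H} } $ (\rnp{R-Fail}), or a delayed coercion (\rnp{R-Crc}). (iv) If $\ottnt{M_{{\mathrm{1}}}}  \ottsym{=}  \ottkw{blame} \, \ottnt{p}$, then $\ottnt{M}  \ottsym{=}  \ottsym{(} \, \square \, \langle  \ottnt{s}  \rangle  \ottsym{)}  [  \ottkw{blame} \, \ottnt{p}  ]$ with $\square \, \langle  \ottnt{s}  \rangle  \ne  \square$ (\rnp{E-Abort}). (v) Otherwise $\ottnt{M_{{\mathrm{1}}}}$ is a primitive operation or an application, hence neither a value nor $\ottkw{blame} \, \ottnt{p}$, so the induction hypothesis gives $\ottnt{M_{{\mathrm{1}}}}  \ottsym{=}  \mathcal{E}_{{\mathrm{1}}}  [  \ottnt{R_{{\mathrm{1}}}}  ]$ where the outermost frame of $\mathcal{E}_{{\mathrm{1}}}$ is the $\ottnt{op}$- or application-frame (or $\mathcal{E}_{{\mathrm{1}}}  \ottsym{=}  \square$ with $\ottnt{R_{{\mathrm{1}}}}$ a non-coercion redex); prepending $\square \, \langle  \ottnt{s}  \rangle$ at the root produces a legal evaluation context, and $\ottnt{M}  \ottsym{=}  \mathcal{E}'  [  \ottnt{R_{{\mathrm{1}}}}  ]$. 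The reason for inspecting $\ottnt{M_{{\mathrm{1}}}}$ first is precisely that whenever its top constructor is a coercion application or a coerced value, $\ottnt{M}$ must be read as a single merge-redex; blindly recursing would return a sub-context carrying a coercion at its root, and prepending $\langle  \ottnt{s}  \rangle$ would yield the non-context $\square \, \langle  \ottnt{t}  \rangle  \langle  \ottnt{s}  \rangle$.

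Uniqueness is carried along with existence and is the main obstacle. At every node the grammar of $\mathcal{E}$ and $\mathcal{F}$ leaves exactly one choice for which subterm carries the hole (left-to-right, call-by-value), so the only genuine threat is a nested coercion application such as $\ottnt{U}  \langle  \ottnt{d}  \rangle  \langle  \ottnt{t}  \rangle$: a priori it could be split as $\ottsym{(} \, \square \, \langle  \ottnt{t}  \rangle  \ottsym{)}  [  \ottnt{U}  \langle  \ottnt{d}  \rangle  ]$, reading $\ottnt{U}  \langle  \ottnt{d}  \rangle$ as an \rnp{R-Crc} redex, or as an \rnp{R-MergeC} redex under $\square$; only the latter is admissible, because $\square \, \langle  \ottnt{t}  \rangle$ is \emph{not} an $\mathcal{F}$ so \rnp{E-CtxC} cannot fire there, and the side condition $\mathcal{E}  \ne  \square$ of \rnp{E-Abort} eliminates the analogous ambiguity for $\ottkw{blame}$. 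The bulk of the work is thus the bookkeeping in the coercion-application case — classifying $\ottnt{M_{{\mathrm{1}}}}$ finely enough, and tracking which newly built contexts remain $\mathcal{F}$-contexts — so that exactly one decomposition survives; this uniqueness is also what feeds into the proofs of Determinacy (Lemma~\ref{lem:determinism-S}) and Progress.
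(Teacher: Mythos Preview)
Your proposal is correct and takes the same approach as the paper: induction on the typing derivation with case analysis on the last rule applied. The paper's proof is a one-line sketch, so your careful treatment of the coercion-application case and of uniqueness simply fills in what the paper leaves implicit.
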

\begin{proof}
  By induction on the derivation of $  \emptyset     \vdash_{\mathsf{S} }    \ottnt{M}  :  \ottnt{A} $
  with case analysis on the rule applied last.
\end{proof}

\iffull\lemDeterminacyS*
\else
\begin{lemma}[Determinacy]\label{lem:determinism-S}
  If $ \ottnt{M}    \mathbin{  \evalto_{\mathsf{S} }  }    \ottnt{N} $ and $ \ottnt{M}    \mathbin{  \evalto_{\mathsf{S} }  }    \ottnt{N'} $, then $\ottnt{N}  \ottsym{=}  \ottnt{N'}$.
\end{lemma}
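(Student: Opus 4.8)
The plan is to deduce the statement from the Unique Decomposition property (Lemma~\ref{lem:decom-S}) together with the observation that the one-step relations underlying $\evalto_{\mathsf{S}}$ are functional. Note that, although Lemma~\ref{lem:decom-S} is stated for well-typed terms, only its \emph{uniqueness} part is needed here, and that part is purely syntactic: if $\ottnt{M}$ takes an evaluation step then a suitable (evaluation context, redex) decomposition already exists, and we merely appeal to its uniqueness.

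First I would record that the basic reduction relations $\mathbin{\accentset{\mathsf{e}}{\reduces}_{\mathsf{S}}}$ and $\mathbin{\accentset{\mathsf{c}}{\reduces}_{\mathsf{S}}}$ are each partial functions on terms: in every reduction rule the syntactic shape of the left-hand side determines which rule fires, and the right-hand side is then determined, since $\delta$ is a partial function by assumption and coercion composition $\fatsemi$ is a well-defined (terminating) function by its definition, which is syntax-directed on its first argument. Moreover the left-hand sides of the e-rules and the c-rules are pairwise non-overlapping: e-redexes are primitive operations on constants, $\beta$-redexes, and applications of wrapped functions $\ottnt{U}  \langle\!\langle  \ottnt{s}  \rightarrow  \ottnt{t}  \rangle\!\rangle$, whereas every c-redex has a coercion application at the top, and the forms $\ottnt{M}  \langle  \ottnt{s}  \rangle  \langle  \ottnt{t}  \rangle$, $\ottnt{U}  \langle\!\langle  \ottnt{d}  \rangle\!\rangle  \langle  \ottnt{t}  \rangle$ and the various $\ottnt{U}  \langle  \ottnt{s}  \rangle$ are syntactically distinct. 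Hence $\reduces_{\mathsf{S}}$ is itself a partial function and no term is simultaneously an e-redex and a c-redex.

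Then, given $ \ottnt{M}    \evalto_{\mathsf{S}}    \ottnt{N} $ and $ \ottnt{M}    \evalto_{\mathsf{S}}    \ottnt{N'} $, any derivation of such a step must end in \rnp{E-CtxE}, \rnp{E-CtxC}, or \rnp{E-Abort}, which exhibit $\ottnt{M}$ as $\mathcal{E}  [  \ottnt{M_{{\mathrm{0}}}}  ]$ with $\ottnt{M_{{\mathrm{0}}}}$ an e-redex, a c-redex (with the context moreover an $\mathcal{F}$), or $\ottkw{blame} \, \ottnt{p}$ (with $\mathcal{E}  \ne  \square$). By Lemma~\ref{lem:decom-S} the decomposition of $\ottnt{M}$ into an evaluation context filled with a redex (or $\ottkw{blame} \, \ottnt{p}$ under a non-empty context) is unique, so the two derivations select the same $\mathcal{E}$ and the same $\ottnt{M_{{\mathrm{0}}}}$; since e-redexes, c-redexes and $\ottkw{blame} \, \ottnt{p}$ are mutually exclusive, the two derivations end in the same rule. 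For \rnp{E-Abort} both results are $\ottkw{blame} \, \ottnt{p}$, and for \rnp{E-CtxE}/\rnp{E-CtxC} we have $\ottnt{N}  \ottsym{=}  \mathcal{E}  [  \ottnt{N_{{\mathrm{0}}}}  ]$ and $\ottnt{N'}  \ottsym{=}  \mathcal{E}  [  \ottnt{N_{{\mathrm{0}}}'}  ]$ with $ \ottnt{M_{{\mathrm{0}}}}    \reduces_{\mathsf{S}}    \ottnt{N_{{\mathrm{0}}}} $ and $ \ottnt{M_{{\mathrm{0}}}}    \reduces_{\mathsf{S}}    \ottnt{N_{{\mathrm{0}}}'} $ by the same reduction relation, so $\ottnt{N_{{\mathrm{0}}}}  \ottsym{=}  \ottnt{N_{{\mathrm{0}}}'}$ by functionality and hence $\ottnt{N}  \ottsym{=}  \ottnt{N'}$.

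The only genuinely delicate point — which is, however, already absorbed into Lemma~\ref{lem:decom-S} — is the interplay between c-evaluation and the non-standard evaluation contexts: \rnp{E-CtxC} insists that the surrounding context be of the restricted form $\mathcal{F}$, precisely so that a term such as $\ottnt{M}  \langle  \ottnt{s}  \rangle  \langle  \ottnt{t}  \rangle$ cannot be read both as a c-reduction of $\ottnt{M}  \langle  \ottnt{s}  \rangle$ inside the context $\square \, \langle  \ottnt{t}  \rangle$ (which is \emph{not} an $\mathcal{F}$) and as $\ottnt{M}  \langle  \ottnt{s}  \fatsemi  \ottnt{t}  \rangle$ via \rnp{R-MergeC}. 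I therefore expect essentially no difficulty in the present lemma once Lemma~\ref{lem:decom-S} is in hand; the real work is in proving that decomposition unique.
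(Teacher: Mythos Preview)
Your proposal is correct and follows exactly the approach the paper takes: the paper's proof is the single line ``By Lemma~\ref{lem:decom-S},'' and you have simply unpacked what that entails, including the functionality of $\reduces_{\mathsf{S}}$ and the role of the $\mathcal{F}$-restriction in \rnp{E-CtxC}. Your observation that only the uniqueness part of Lemma~\ref{lem:decom-S} is needed (and that this part is purely syntactic, independent of the typing hypothesis) is a useful clarification that the paper leaves implicit.
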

\fi
\begin{proof}
  By Lemma~\ref{lem:decom-S}.
\end{proof}

We state type safety for \lamS with auxiliary lemmas.

\begin{lemma}[Canonical Forms]\label{lem:canonical-S}
  If $  \emptyset     \vdash_{\mathsf{S} }    \ottnt{V}  :  \ottnt{A} $, then one of the following holds.
  \begin{enumerate}
  \item $\ottnt{V}  \ottsym{=}  \ottnt{a}$ and $\ottnt{A}  \ottsym{=}  \iota$ for some $\ottnt{a},\iota$.
  \item $\ottnt{V}  \ottsym{=}   \lambda   \ottmv{x} .\,  \ottnt{M} $ and $\ottnt{A}  \ottsym{=}  \ottnt{A_{{\mathrm{1}}}}  \rightarrow  \ottnt{A_{{\mathrm{2}}}}$ for some $\ottmv{x},\ottnt{M},\ottnt{A_{{\mathrm{1}}}},\ottnt{A_{{\mathrm{2}}}}$.
  \item $\ottnt{V}  \ottsym{=}  \ottnt{U}  \langle\!\langle  \ottnt{s}  \rightarrow  \ottnt{t}  \rangle\!\rangle$ and $\ottnt{A}  \ottsym{=}  \ottnt{A_{{\mathrm{1}}}}  \rightarrow  \ottnt{A_{{\mathrm{2}}}}$
    for some $\ottnt{U},\ottnt{s},\ottnt{t},\ottnt{A_{{\mathrm{1}}}},\ottnt{A_{{\mathrm{2}}}}$. 
  \item $\ottnt{V}  \ottsym{=}  \ottnt{U}  \langle\!\langle  \ottnt{g}  \ottsym{;}   \ottnt{G} \texttt{!}   \rangle\!\rangle$ and $\ottnt{A}  \ottsym{=}  \mathord{\star}$ for some $\ottnt{U},\ottnt{g},\ottnt{G}$.
  \end{enumerate}
\end{lemma}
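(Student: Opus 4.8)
The plan is to proceed by case analysis on the syntactic form of the value $\ottnt{V}$, as given by the grammar in Figure~\ref{fig:syntaxS}, using inversion of the typing rules of Figure~\ref{fig:typingS}. A value is one of a variable $\ottmv{x}$, an uncoerced value $\ottnt{U}$ (that is, a constant $\ottnt{a}$ or an abstraction $ \lambda   \ottmv{x} .\,  \ottnt{M} $), or a coerced value $\ottnt{U}  \langle\!\langle  \ottnt{d}  \rangle\!\rangle$. Since the typing environment is empty, rule \rnp{T-Var} is inapplicable, so $\ottnt{V}$ cannot be a variable; this disposes of the one value form that does not appear in the conclusion.

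First I would handle the uncoerced cases. If $\ottnt{V}  \ottsym{=}  \ottnt{a}$, the only applicable rule is \rnp{T-Const}, so $\ottnt{A}  \ottsym{=}   \metafun{ty} ( \ottnt{a} ) $, which is a base type $\iota$ by our assumption on $ \metafun{ty} $; this is item~1. If $\ottnt{V}  \ottsym{=}   \lambda   \ottmv{x} .\,  \ottnt{M} $, the only applicable rule is \rnp{T-Abs}, so $\ottnt{A}$ has the form $\ottnt{A_{{\mathrm{1}}}}  \rightarrow  \ottnt{A_{{\mathrm{2}}}}$; this is item~2.

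Next I would treat the coerced case $\ottnt{V}  \ottsym{=}  \ottnt{U}  \langle\!\langle  \ottnt{d}  \rangle\!\rangle$. The only applicable rule is \rnp{T-CrcV}, so $  \emptyset    \vdash   \ottnt{U}  :  \ottnt{B} $ and $\ottnt{d}  \ottsym{:}  \ottnt{B}  \rightsquigarrow  \ottnt{A}$ for some $\ottnt{B}$. Now I case-split on the form of the delayed coercion $\ottnt{d}$, which by the grammar is either $\ottnt{g}  \ottsym{;}   \ottnt{G} \texttt{!} $ or $\ottnt{s}  \rightarrow  \ottnt{t}$ (with $\ottnt{s}  \ne  \ottkw{id}$ or $\ottnt{t}  \ne  \ottkw{id}$). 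In the first subcase, inverting the well-formedness derivation via \rnp{CT-Seq} and \rnp{CT-Inj} forces the target type of $ \ottnt{G} \texttt{!} $, hence of $\ottnt{d}$, to be $ \mathord{\star} $, so $\ottnt{A}  \ottsym{=}  \mathord{\star}$; this is item~4. In the second subcase, inverting \rnp{CT-Fun} forces $\ottnt{A}$ to be a function type $\ottnt{A_{{\mathrm{1}}}}  \rightarrow  \ottnt{A_{{\mathrm{2}}}}$; this is item~3. This exhausts all cases.

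The argument is entirely routine: no induction is needed, only inversion of a handful of typing and coercion well-formedness rules together with the shapes of the value and delayed-coercion grammars, so I do not expect a genuine obstacle. The only point requiring a moment's care is that one must read off the \emph{target} type of each coercion form directly from its well-formedness rule rather than from the (more permissive) source type; in particular, \rnp{CT-Fail} never arises here because a failure coercion $ \bot^{ \ottnt{G}   \ottnt{p}   \ottnt{H} } $ is an intermediate coercion but not a delayed coercion $\ottnt{d}$, so a coerced value never carries one, which is why the list of cases has no entry for failure coercions.
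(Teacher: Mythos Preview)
Your proposal is correct and follows the standard approach. The paper does not actually supply a proof for this lemma at all; for the analogous canonical-forms lemma in \lamSx (Lemma~\ref{lem:canonical-S1}) it gives only the one-line hint ``By case analysis on the typing rule applied to $  \emptyset     \vdash_{\mathsf{S_1} }    \ottnt{V}  :  \ottnt{A} $,'' which is equivalent to your case analysis on the syntactic form of $\ottnt{V}$ followed by inversion, since the type system is syntax-directed.
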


\iffull\thmProgressS*
\else
\begin{theorem}[Progress]\label{thm:progress-S}
  If $  \emptyset     \vdash_{\mathsf{S} }    \ottnt{M}  :  \ottnt{A} $, then one of the following holds.
  \begin{enumerate}
  \item $ \ottnt{M}    \mathbin{  \evalto_{\mathsf{S} }  }    \ottnt{M'} $ for some $\ottnt{M'}$.
  \item $\ottnt{M}  \ottsym{=}  \ottnt{V}$ for some $\ottnt{V}$.
  \item $\ottnt{M}  \ottsym{=}  \ottkw{blame} \, \ottnt{p}$ for some $\ottnt{p}$.
  \end{enumerate}
\end{theorem}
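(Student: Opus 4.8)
The plan is to derive Progress as a direct consequence of the Unique Decomposition lemma (Lemma~\ref{lem:decom-S}) together with the Canonical Forms lemma (Lemma~\ref{lem:canonical-S}). Given $  \emptyset     \vdash_{\mathsf{S} }    \ottnt{M}  :  \ottnt{A} $, Unique Decomposition tells us that exactly one of three situations holds: $\ottnt{M}$ is a value $\ottnt{V}$, or $\ottnt{M}$ is $\ottkw{blame} \, \ottnt{p}$, or $\ottnt{M}  \ottsym{=}  \mathcal{E}  [  \ottnt{M_{{\mathrm{1}}}}  ]$ for some evaluation context $\mathcal{E}$ and some redex $\ottnt{M_{{\mathrm{1}}}}$. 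The first two coincide verbatim with clauses (2) and (3) of the theorem, so no work remains there. All the actual content is in showing that in the third case $\ottnt{M}$ can take an evaluation step, i.e.\ $ \ottnt{M}    \mathbin{  \evalto_{\mathsf{S} }  }    \ottnt{M'} $ for some $\ottnt{M'}$.

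For that I would proceed by case analysis on the shape of the redex $\ottnt{M_{{\mathrm{1}}}}$, lifting each basic reduction to an evaluation step through \rnp{E-CtxE}, \rnp{E-CtxC}, or \rnp{E-Abort}. For the \textsf{e}-redexes: a primitive application $\ottnt{op}  \ottsym{(}  \ottnt{a}  \ottsym{,}  \ottnt{b}  \ottsym{)}$ steps by \rnp{R-Op}, using the totality assumption on $\delta$ for well-typed operands (which applies because typing forces $\ottnt{a}$ and $\ottnt{b}$ to have the base types expected by $\ottnt{op}$); a $\beta$-redex $\ottsym{(}   \lambda   \ottmv{x} .\,  \ottnt{N}   \ottsym{)} \, \ottnt{V}$ steps by \rnp{R-Beta}; a wrapped application $\ottsym{(}  \ottnt{U}  \langle\!\langle  \ottnt{s}  \rightarrow  \ottnt{t}  \rangle\!\rangle  \ottsym{)} \, \ottnt{V}$ steps by \rnp{R-Wrap}. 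Here Canonical Forms is precisely what guarantees that whenever the head of an application is a value of a function type it is one of these two forms, never a constant. For the \textsf{c}-redexes: in a coercion application $\ottnt{U}  \langle  \ottnt{s}  \rangle$ to an uncoerced value, typing forces the source type of $\ottnt{s}$ to be non-dynamic, so by Proposition~\ref{prop:src-tgt} and the grammar of space-efficient coercions $\ottnt{s}$ must be an intermediate coercion, hence $ \ottkw{id} _{ \ottnt{A} } $ (step by \rnp{R-Id}), $ \bot^{ \ottnt{G}   \ottnt{p}   \ottnt{H} } $ (step by \rnp{R-Fail}), or a delayed coercion $\ottnt{d}$ (step by \rnp{R-Crc}); a nested coercion application $\ottnt{M}  \langle  \ottnt{s}  \rangle  \langle  \ottnt{t}  \rangle$ steps by \rnp{R-MergeC}; and a coerced value applied to a coercion, $\ottnt{U}  \langle\!\langle  \ottnt{d}  \rangle\!\rangle  \langle  \ottnt{t}  \rangle$, steps by \rnp{R-MergeV}. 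Finally, if the decomposition exhibits $\ottkw{blame} \, \ottnt{p}$ under a nonempty context (the only remaining redex case), \rnp{E-Abort} fires.

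The point that needs the most care is the bookkeeping between the two families of evaluation contexts, $\mathcal{E}$ and $\mathcal{F}$, and the two labelled reduction relations. The rule \rnp{E-CtxC} is restricted to $\mathcal{F}$-contexts, so I must make sure that Unique Decomposition never hands back a \textsf{c}-redex sitting immediately inside a coercion-application frame: in such a configuration the outer frame has to be peeled first and the genuine redex is the nested coercion application, closed by \rnp{R-MergeC}. Most of this subtlety, however, is already absorbed into the proof of Lemma~\ref{lem:decom-S}, so the Progress argument itself is essentially a matter of matching each redex class to the correct reduction rule; the corner case of an identity coercion applied to a non-value (the one addressed by the footnote on the definition of $\mathcal{E}$) is likewise handled at the level of evaluation contexts rather than here. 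I therefore expect the main obstacle to be purely organizational: enumerating the redex forms and confirming, with Canonical Forms in hand, that each one is genuinely reducible under the context shape that Unique Decomposition provides.
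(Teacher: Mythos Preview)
Your proposal is correct and follows the same route as the paper, which simply writes ``By Lemma~\ref{lem:decom-S}.'' You supply considerably more detail than the paper does---the redex-by-redex case analysis, the appeal to Canonical Forms for applications, and the $\mathcal{E}/\mathcal{F}$ bookkeeping---but this is elaboration of the same argument rather than a different one.
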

\fi
\begin{proof}
  By Lemma~\ref{lem:decom-S}.
\end{proof}

\begin{lemma}[Preservation of Types under Substitution]\label{lem:preservation-subst-S}\leavevmode
  If $ \Gamma  \ottsym{,}  \ottmv{x}  \ottsym{:}  \ottnt{A}    \vdash_{\mathsf{S} }    \ottnt{M}  :  \ottnt{B} $ and $ \Gamma    \vdash_{\mathsf{S} }    \ottnt{V}  :  \ottnt{A} $, then
  $ \Gamma    \vdash_{\mathsf{S} }    \ottnt{M}  [  \ottmv{x}  \ottsym{:=}  \ottnt{V}  ]  :  \ottnt{B} $.
\end{lemma}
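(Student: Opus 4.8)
The plan is to prove the lemma by induction on the derivation of $\Gamma, x{:}A \vdash_{\mathsf{S}} M : B$, with a case analysis on the last typing rule applied. First I would recall (or dispatch as trivial auxiliary inductions) the two standard structural properties of the type system that the interesting case needs: \emph{weakening} (if $\Gamma \vdash_{\mathsf{S}} N : C$ and $y$ does not occur in $\Gamma$, then $\Gamma, y{:}D \vdash_{\mathsf{S}} N : C$) and \emph{exchange} (permuting two distinct bindings in a type environment preserves typability). Both follow by routine induction on typing derivations, and \rnp{CT-Fun} etc.\ are untouched since well-formedness of coercions $\ottnt{s}$ does not refer to the type environment.

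For the base and most inductive cases the argument is immediate. In the \rnp{T-Var} case I split on whether $\ottnt{M}$ is the variable $\ottmv{x}$ or some other variable $\ottmv{y}$: if $\ottnt{M} = \ottmv{x}$ then $\ottnt{B} = \ottnt{A}$, $\ottnt{M}[\ottmv{x}{:=}\ottnt{V}] = \ottnt{V}$, and the goal is exactly the hypothesis $\Gamma \vdash_{\mathsf{S}} \ottnt{V} : \ottnt{A}$; if $\ottnt{M} = \ottmv{y} \ne \ottmv{x}$ then $(\ottmv{y}{:}\ottnt{B}) \in \Gamma$ already, so $\Gamma \vdash_{\mathsf{S}} \ottmv{y} : \ottnt{B}$ by \rnp{T-Var} and $\ottmv{y}[\ottmv{x}{:=}\ottnt{V}] = \ottmv{y}$. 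The cases \rnp{T-Const}, \rnp{T-Op}, \rnp{T-App}, and \rnp{T-Crc} are handled by applying the induction hypothesis to the immediate subterms and reassembling with the same rule, since substitution commutes with every term constructor and leaves coercions and $ \metafun{ty} ( \ottnt{a} ) $, $ \metafun{ty} ( \ottnt{op} ) $ unchanged. The cases \rnp{T-CrcV} and \rnp{T-Blame} are vacuous: both rules force the type environment to be $ \emptyset $, whereas $\Gamma, \ottmv{x}{:}\ottnt{A}$ is nonempty.

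The one case needing care is \rnp{T-Abs}, where $\ottnt{M} = \lambda \ottmv{y}.\, \ottnt{N}$ with type $\ottnt{C_{{\mathrm{1}}}} \to \ottnt{C_{{\mathrm{2}}}}$. By $\alpha$-conversion I may assume $\ottmv{y} \ne \ottmv{x}$ and $\ottmv{y} \notin \metafun{FV}(\ottnt{V})$, so that $(\lambda \ottmv{y}.\, \ottnt{N})[\ottmv{x}{:=}\ottnt{V}] = \lambda \ottmv{y}.\, (\ottnt{N}[\ottmv{x}{:=}\ottnt{V}])$. The premise is $\Gamma, \ottmv{x}{:}\ottnt{A}, \ottmv{y}{:}\ottnt{C_{{\mathrm{1}}}} \vdash_{\mathsf{S}} \ottnt{N} : \ottnt{C_{{\mathrm{2}}}}$; by exchange this gives $(\Gamma, \ottmv{y}{:}\ottnt{C_{{\mathrm{1}}}}), \ottmv{x}{:}\ottnt{A} \vdash_{\mathsf{S}} \ottnt{N} : \ottnt{C_{{\mathrm{2}}}}$, and by weakening from $\Gamma \vdash_{\mathsf{S}} \ottnt{V} : \ottnt{A}$ I get $\Gamma, \ottmv{y}{:}\ottnt{C_{{\mathrm{1}}}} \vdash_{\mathsf{S}} \ottnt{V} : \ottnt{A}$. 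The induction hypothesis then yields $\Gamma, \ottmv{y}{:}\ottnt{C_{{\mathrm{1}}}} \vdash_{\mathsf{S}} \ottnt{N}[\ottmv{x}{:=}\ottnt{V}] : \ottnt{C_{{\mathrm{2}}}}$, and \rnp{T-Abs} concludes $\Gamma \vdash_{\mathsf{S}} \lambda \ottmv{y}.\, (\ottnt{N}[\ottmv{x}{:=}\ottnt{V}]) : \ottnt{C_{{\mathrm{1}}}} \to \ottnt{C_{{\mathrm{2}}}}$, as required.

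The main (and only mild) obstacle is the bookkeeping in the \rnp{T-Abs} case: avoiding variable capture by $\alpha$-renaming and justifying the environment reshuffling via the auxiliary exchange and weakening lemmas; everything else is entirely routine, so I would present it briefly.
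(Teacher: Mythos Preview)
Your proposal is correct and follows the same approach as the paper, which simply states (for the analogous \lamSx{} version, Lemma~\ref{lem:preservation-subst-S1}) that the result holds ``by straightforward induction on the derivation of $\Gamma, x{:}A \vdash M : B$ with case analysis on the rule applied last''; the \lamS{} version is stated without proof. Your write-up is considerably more detailed than the paper's, but the method is identical.
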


\begin{lemma}[Preservation for Reduction]
  If $  \emptyset     \vdash_{\mathsf{S} }    \ottnt{M}  :  \ottnt{A} $ and $ \ottnt{M}    \reduces_{\mathsf{S} }    \ottnt{N} $, then $  \emptyset     \vdash_{\mathsf{S} }    \ottnt{N}  :  \ottnt{A} $.
\end{lemma}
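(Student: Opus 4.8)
The plan is to prove the statement by a case analysis on the rule used to derive $ \ottnt{M}    \reduces_{\mathsf{S} }    \ottnt{N} $; that is, on whether the step is by \rnp{R-Op}, \rnp{R-Beta}, \rnp{R-Wrap}, \rnp{R-Id}, \rnp{R-Fail}, \rnp{R-Crc}, \rnp{R-MergeC}, or \rnp{R-MergeV}. In every case the redex $\ottnt{M}$ has a fixed syntactic shape, so the derivation of $  \emptyset     \vdash_{\mathsf{S} }    \ottnt{M}  :  \ottnt{A} $ must end with the corresponding typing rule, and repeated inversion of the typing and well-formed-coercion judgments exposes enough type information to retype $\ottnt{N}$ at $\ottnt{A}$. (This lemma only concerns the base reduction $\reduces_{\mathsf{S} }$; the contextual closure to $\evalto_{\mathsf{S} }$ is a separate argument.)

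First I would dispatch the routine cases. For \rnp{R-Fail}, $\ottnt{N}  \ottsym{=}  \ottkw{blame} \, \ottnt{p}$ has type $\ottnt{A}$ by \rnp{T-Blame}. For \rnp{R-Id}, inverting \rnp{T-Crc} and then \rnp{CT-Id} yields $  \emptyset     \vdash_{\mathsf{S} }    \ottnt{U}  :  \ottnt{A} $ directly. For \rnp{R-Crc}, inverting \rnp{T-Crc} gives $  \emptyset     \vdash_{\mathsf{S} }    \ottnt{U}  :  \ottnt{A'} $ and $\ottnt{d}  \ottsym{:}  \ottnt{A'}  \rightsquigarrow  \ottnt{A}$, and \rnp{T-CrcV} re-assembles $  \emptyset     \vdash_{\mathsf{S} }    \ottnt{U}  \langle\!\langle  \ottnt{d}  \rangle\!\rangle  :  \ottnt{A} $. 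For \rnp{R-Op}, inverting \rnp{T-Op} and \rnp{T-Const} yields $ \metafun{ty} ( \ottnt{op} )   \ottsym{=}  \iota_{{\mathrm{1}}}  \rightarrow  \iota_{{\mathrm{2}}}  \rightarrow  \ottnt{A}$ with $ \metafun{ty} ( \ottnt{a} )   \ottsym{=}  \iota_{{\mathrm{1}}}$ and $ \metafun{ty} ( \ottnt{b} )   \ottsym{=}  \iota_{{\mathrm{2}}}$; the stated assumption on $\delta$ then makes $\delta \, \ottsym{(}  \ottnt{op}  \ottsym{,}  \ottnt{a}  \ottsym{,}  \ottnt{b}  \ottsym{)}$ a constant of type $\ottnt{A}$, retypable by \rnp{T-Const}. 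For the two merge rules, \rnp{R-MergeC} and \rnp{R-MergeV}, I would invert \rnp{T-Crc} twice (resp.\ \rnp{T-Crc} then \rnp{T-CrcV}) to obtain coercions whose target and source types agree, apply Lemma~\ref{lem:sc-cmp-welldef} to conclude their composition is well formed with the expected source and target, and close with \rnp{T-Crc}. For \rnp{R-Beta}, inverting \rnp{T-App} and \rnp{T-Abs} gives $ \ottmv{x}  \ottsym{:}  \ottnt{A'}   \vdash_{\mathsf{S} }    \ottnt{M_{{\mathrm{1}}}}  :  \ottnt{A} $ (where $\ottnt{M_{{\mathrm{1}}}}$ is the abstraction body) and $  \emptyset     \vdash_{\mathsf{S} }    \ottnt{V}  :  \ottnt{A'} $, so Lemma~\ref{lem:preservation-subst-S} (preservation of types under substitution) yields $  \emptyset     \vdash_{\mathsf{S} }    \ottnt{M_{{\mathrm{1}}}}  [  \ottmv{x}  \ottsym{:=}  \ottnt{V}  ]  :  \ottnt{A} $.

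The one case requiring a bit of care, and the place I expect to spend the most attention, is \rnp{R-Wrap}. From $  \emptyset     \vdash_{\mathsf{S} }    \ottsym{(}  \ottnt{U}  \langle\!\langle  \ottnt{s}  \rightarrow  \ottnt{t}  \rangle\!\rangle  \ottsym{)} \, \ottnt{V}  :  \ottnt{A} $, inverting \rnp{T-App} gives $  \emptyset     \vdash_{\mathsf{S} }    \ottnt{U}  \langle\!\langle  \ottnt{s}  \rightarrow  \ottnt{t}  \rangle\!\rangle  :  \ottnt{A''}  \rightarrow  \ottnt{A} $ and $  \emptyset     \vdash_{\mathsf{S} }    \ottnt{V}  :  \ottnt{A''} $; inverting \rnp{T-CrcV} gives $  \emptyset     \vdash_{\mathsf{S} }    \ottnt{U}  :  \ottnt{B} $ and $\ottsym{(}  \ottnt{s}  \rightarrow  \ottnt{t}  \ottsym{)}  \ottsym{:}  \ottnt{B}  \rightsquigarrow  \ottnt{A''}  \rightarrow  \ottnt{A}$; and inverting \rnp{CT-Fun} gives $\ottnt{B}  \ottsym{=}  \ottnt{B_{{\mathrm{1}}}}  \rightarrow  \ottnt{B_{{\mathrm{2}}}}$ with $\ottnt{s}  \ottsym{:}  \ottnt{A''}  \rightsquigarrow  \ottnt{B_{{\mathrm{1}}}}$ and $\ottnt{t}  \ottsym{:}  \ottnt{B_{{\mathrm{2}}}}  \rightsquigarrow  \ottnt{A}$. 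Then $\ottnt{V}  \langle  \ottnt{s}  \rangle$ has type $\ottnt{B_{{\mathrm{1}}}}$ by \rnp{T-Crc}, $\ottnt{U} \, \ottsym{(}  \ottnt{V}  \langle  \ottnt{s}  \rangle  \ottsym{)}$ has type $\ottnt{B_{{\mathrm{2}}}}$ by \rnp{T-App}, and $\ottsym{(}  \ottnt{U} \, \ottsym{(}  \ottnt{V}  \langle  \ottnt{s}  \rangle  \ottsym{)}  \ottsym{)}  \langle  \ottnt{t}  \rangle$ has type $\ottnt{A}$ by \rnp{T-Crc}, which is exactly $\ottnt{N}$. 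Overall no case is conceptually hard: once Lemmas~\ref{lem:sc-cmp-welldef} and~\ref{lem:preservation-subst-S} are in hand, the argument is a mechanical sequence of inversions, and the only genuine bookkeeping is threading the intermediate types through the \rnp{R-Wrap} case.
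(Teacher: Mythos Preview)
Your proposal is correct and follows exactly the approach the paper takes: a case analysis on the reduction rule, using inversion of the typing and coercion judgments together with Lemma~\ref{lem:sc-cmp-welldef} for the merge cases and Lemma~\ref{lem:preservation-subst-S} for \rnp{R-Beta}. The paper itself only sketches this proof as ``similar to'' the detailed \lamSx version (Lemma~\ref{lem:preservation-red-S1}), and your write-up matches that detailed version case for case.
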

\begin{proof}
  By case analysis on the reduction rule applied to $ \ottnt{M}    \reduces_{\mathsf{S} }    \ottnt{N} $.
  (Similar to Lemma~\ref{lem:preservation-red-S1}.)
\end{proof}

\iffull\thmPreservationS*
\else
\begin{theorem}[Preservation]\label{thm:preservation-S}
  If $  \emptyset     \vdash_{\mathsf{S} }    \ottnt{M}  :  \ottnt{A} $ and $ \ottnt{M}    \mathbin{  \evalto_{\mathsf{S} }  }    \ottnt{N} $, then $  \emptyset     \vdash_{\mathsf{S} }    \ottnt{N}  :  \ottnt{A} $.
\end{theorem}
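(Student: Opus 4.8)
The plan is to proceed by case analysis on the last rule of the derivation of $ \ottnt{M}    \mathbin{  \evalto_{\mathsf{S} }  }    \ottnt{N} $. By the definition $ \evalto_{\mathsf{S} }  =  \accentset{\mathsf{e} }{\evalto}_{\mathsf{S} }  \cup  \accentset{\mathsf{c} }{\evalto}_{\mathsf{S} } $ and the evaluation rules of Figure~\ref{fig:semS}, this rule is one of \rnp{E-CtxE}, \rnp{E-CtxC}, or \rnp{E-Abort}. The case \rnp{E-Abort} is immediate, since there $\ottnt{N}  \ottsym{=}  \ottkw{blame} \, \ottnt{p}$, which is typable at any type, in particular at $\ottnt{A}$, by \rnp{T-Blame}. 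The two context rules are handled uniformly once we have a typing lemma for evaluation contexts.

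The key auxiliary result I would establish first is a \emph{replacement lemma} for evaluation contexts: if $  \emptyset     \vdash_{\mathsf{S} }    \mathcal{E}  [  \ottnt{M_{{\mathrm{1}}}}  ]  :  \ottnt{A} $, then there is a type $\ottnt{B}$ with $  \emptyset     \vdash_{\mathsf{S} }    \ottnt{M_{{\mathrm{1}}}}  :  \ottnt{B} $ such that $  \emptyset     \vdash_{\mathsf{S} }    \mathcal{E}  [  \ottnt{N_{{\mathrm{1}}}}  ]  :  \ottnt{A} $ for every $\ottnt{N_{{\mathrm{1}}}}$ with $  \emptyset     \vdash_{\mathsf{S} }    \ottnt{N_{{\mathrm{1}}}}  :  \ottnt{B} $, and similarly with $\mathcal{F}$ in place of $\mathcal{E}$. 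I would prove it by simultaneous induction on the (mutually recursive, inside-out) grammars of $\mathcal{E}$ and $\mathcal{F}$. In the base cases ($\mathcal{F}  \ottsym{=}  \square$, and $\mathcal{E}  \ottsym{=}  \square$ or $\mathcal{E}  \ottsym{=}  \square \, \langle  \ottnt{s}  \rangle$) the claim follows directly from the hypothesis, using \rnp{T-Crc} for the last form. In each inductive case the outermost frame---$\square \, \ottnt{M}$, $\ottnt{V} \, \square$, $\ottnt{op}  \ottsym{(}  \square  \ottsym{,}  \ottnt{M}  \ottsym{)}$, $\ottnt{op}  \ottsym{(}  \ottnt{V}  \ottsym{,} \, \square \, \ottsym{)}$, or $\mathcal{F}  [  \square \, \langle  \ottnt{s}  \rangle  ]$---corresponds to exactly one (syntax-directed) typing rule, namely \rnp{T-App}, \rnp{T-Op}, or \rnp{T-Crc}; inverting the typing of $\mathcal{E}  [  \ottnt{M_{{\mathrm{1}}}}  ]$ exposes the type demanded at the hole, and re-applying the same rule to a term of that type---combined with the induction hypothesis for the inner context---rebuilds a derivation at $\ottnt{A}$.

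With the replacement lemma in hand, the case \rnp{E-CtxE} is discharged as follows: we have $\ottnt{M}  \ottsym{=}  \mathcal{E}  [  \ottnt{M_{{\mathrm{1}}}}  ]$, $\ottnt{N}  \ottsym{=}  \mathcal{E}  [  \ottnt{N_{{\mathrm{1}}}}  ]$ with $ \ottnt{M_{{\mathrm{1}}}}    \mathbin{\accentset{\mathsf{e} }{\reduces}_{\mathsf{S} } }    \ottnt{N_{{\mathrm{1}}}} $; the replacement lemma yields $\ottnt{B}$ with $  \emptyset     \vdash_{\mathsf{S} }    \ottnt{M_{{\mathrm{1}}}}  :  \ottnt{B} $; the Preservation for Reduction lemma above (whose essential-reduction cases use Preservation of Types under Substitution, Lemma~\ref{lem:preservation-subst-S}) gives $  \emptyset     \vdash_{\mathsf{S} }    \ottnt{N_{{\mathrm{1}}}}  :  \ottnt{B} $; and the second half of the replacement lemma gives $  \emptyset     \vdash_{\mathsf{S} }    \mathcal{E}  [  \ottnt{N_{{\mathrm{1}}}}  ]  :  \ottnt{A} $. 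The case \rnp{E-CtxC} is identical, with $\mathcal{F}$ for $\mathcal{E}$ and $ \mathbin{\accentset{\mathsf{c} }{\reduces}_{\mathsf{S} } } $ for $ \mathbin{\accentset{\mathsf{e} }{\reduces}_{\mathsf{S} } } $. I expect the main obstacle to be the replacement lemma itself rather than this outer case analysis: the induction must respect the inside-out presentation of evaluation contexts (where $\mathcal{F}$ occurs in $\mathcal{E}$ through $\mathcal{F}  [  \square \, \langle  \ottnt{s}  \rangle  ]$ and $\mathcal{E}$ sits under each $\mathcal{F}$-frame), and one has to verify carefully that plugging a term whose type is the \emph{source} type of $\ottnt{s}$ into the frame $\mathcal{F}  [  \square \, \langle  \ottnt{s}  \rangle  ]$ is type-preserving via \rnp{T-Crc}. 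Everything else reduces to routine inversion of the syntax-directed typing rules of \lamS.
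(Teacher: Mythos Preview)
Your proposal is correct and follows essentially the same approach as the paper: case analysis on the evaluation rule, with \rnp{E-Abort} handled by \rnp{T-Blame} and the context cases handled by combining the Preservation for Reduction lemma with a replacement argument proved by induction on the (mutually recursive) structure of $\mathcal{E}$ and $\mathcal{F}$. The paper phrases the replacement step more informally as ``substituting $\mathcal{D}_2$ for $\mathcal{D}_1$ in $\mathcal{D}$'' (with the parenthetical ``more precisely, by induction on $\mathcal{E}$''), but your explicit replacement lemma is exactly what this amounts to.
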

\fi
\begin{proof}
  By case analysis on the evaluation rule applied to $ \ottnt{M}    \mathbin{  \evalto_{\mathsf{S} }  }    \ottnt{N} $.
  (Similar to Theorem~\ref{thm:preservation-S1}.)
\end{proof}

\iffull\corSafetyS*
\else
\begin{corollary}[Type Safety]\label{cor:safety-S}
  If $  \emptyset     \vdash_{\mathsf{S} }    \ottnt{M}  :  \ottnt{A} $, then one of the following holds.
  \begin{enumerate}
  \item $ \ottnt{M}    \mathbin{  \evalto_{\mathsf{S} }  ^*}    \ottnt{V} $ and $  \emptyset     \vdash_{\mathsf{S} }    \ottnt{V}  :  \ottnt{A} $ for some $\ottnt{V}$.
  \item $ \ottnt{M}    \mathbin{  \evalto_{\mathsf{S} }  ^*}    \ottkw{blame} \, \ottnt{p} $ for some $\ottnt{p}$.
  \item $\ottnt{M} \,  \mathord{\Uparrow_{\mathsf{S} } } $.
  \end{enumerate}
\end{corollary}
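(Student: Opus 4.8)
The plan is to derive the corollary from the \emph{progress} (Theorem~\ref{thm:progress-S}), \emph{preservation} (Theorem~\ref{thm:preservation-S}), and \emph{determinacy} (Lemma~\ref{lem:determinism-S}) results already established, by the usual syntactic type-soundness argument adapted to account for nontermination. Starting from a well-typed closed term with $  \emptyset     \vdash_{\mathsf{S} }    \ottnt{M}  :  \ottnt{A} $, I would consider the evaluation sequence issuing from $\ottnt{M}$. By determinacy this sequence is unique, so exactly one of two situations obtains: either it is infinite, or it is finite and ends at some term $\ottnt{N}$ from which no evaluation step is possible.

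In the infinite case there is, by definition, an infinite evaluation sequence from $\ottnt{M}$, i.e.\ $\ottnt{M} \,  \mathord{\Uparrow_{\mathsf{S} } } $, which is alternative~(3). In the finite case, writing $ \ottnt{M}    \mathbin{  \evalto_{\mathsf{S} }  ^*}    \ottnt{N} $ for the terminal term, I would first apply Theorem~\ref{thm:preservation-S} inductively along the sequence to conclude $  \emptyset     \vdash_{\mathsf{S} }    \ottnt{N}  :  \ottnt{A} $. Since $\ottnt{N}$ admits no evaluation step, Theorem~\ref{thm:progress-S} forces $\ottnt{N}$ to be either a value $\ottnt{V}$ (giving alternative~(1), with $  \emptyset     \vdash_{\mathsf{S} }    \ottnt{V}  :  \ottnt{A} $ coming from preservation) or $\ottkw{blame} \, \ottnt{p}$ for some $\ottnt{p}$ (giving alternative~(2)). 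This exhausts all cases.

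The only point requiring a little care is making the dichotomy ``the evaluation sequence is infinite, or reaches a term that is stuck for a good reason'' precise. I would phrase it as a case split on whether, for every $\ottmv{n}$, there is an $\ottmv{n}$-step evaluation sequence from $\ottnt{M}$: if so, determinacy lets these finite prefixes be amalgamated into a single infinite sequence, witnessing $\ottnt{M} \,  \mathord{\Uparrow_{\mathsf{S} } } $; if not, the shortest maximal sequence ends at a term $\ottnt{N}$ from which no step is available, and $\ottnt{N}$ plays the role above. No appeal to König's lemma is needed because $ \evalto_{\mathsf{S} } $ is deterministic. I do not anticipate any genuine obstacle here; the corollary is a routine consequence of the lemmas already in hand.
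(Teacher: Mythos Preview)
Your proposal is correct and follows the same standard progress-plus-preservation route that the paper uses; the paper's proof is simply the one-line ``By Theorem~\ref{thm:progress-S} and Theorem~\ref{thm:preservation-S}.'' Your explicit appeal to determinacy (Lemma~\ref{lem:determinism-S}) to organize the finite/infinite dichotomy is a harmless elaboration but not strictly required: one can instead argue directly that either an infinite sequence exists (giving~(3)), or else greedily iterating progress from $\ottnt{M}$ yields a necessarily finite sequence whose terminal term is, by progress, a value or $\ottkw{blame}\,\ottnt{p}$, with preservation supplying the typing.
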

\fi
\begin{proof}
  By Theorem~\ref{thm:progress-S} and Theorem~\ref{thm:preservation-S}.
\end{proof}

\subsection{Properties of \lamSx}

The properties for \lamS-coercions still hold in \lamSx.
We do not repeat all of them.

\iffull
\else
\begin{proposition}[Source and Target Types]\leavevmode
  \begin{enumerate}
  \item If $\ottnt{i}  \ottsym{:}  \ottnt{A}  \rightsquigarrow  \ottnt{B}$ then $\ottnt{A}  \ne  \mathord{\star}$.
  \item If $\ottnt{g}  \ottsym{:}  \ottnt{A}  \rightsquigarrow  \ottnt{B}$, then $\ottnt{A}  \ne  \mathord{\star}$ and $\ottnt{B}  \ne  \mathord{\star}$
    and there exists a unique $\ottnt{G}$ such that $\ottnt{A}  \sim  \ottnt{G}$ and $\ottnt{G}  \sim  \ottnt{B}$.
  \end{enumerate}
\end{proposition}
\fi

\begin{lemma}\label{lem:cmp-ty}
  If $\ottnt{s}  \ottsym{:}  \ottnt{A}  \rightsquigarrow  \ottnt{B}$ and $\ottnt{t}  \ottsym{:}  \ottnt{B}  \rightsquigarrow  \ottnt{C}$, $\ottsym{(}  \ottnt{s}  \fatsemi  \ottnt{t}  \ottsym{)}  \ottsym{:}  \ottnt{A}  \rightsquigarrow  \ottnt{C}$.
\end{lemma}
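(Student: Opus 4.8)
The plan is to reuse the proof of Lemma~\ref{lem:sc-cmp-welldef} almost verbatim, observing that the coercion grammar, the composition rules of Figure~\ref{fig:semS-cmp}, and the well-formedness rules of Figure~\ref{fig:typingS} all carry over to \lamSx\ under the systematic replacement of $\rightarrow$ by $\Rightarrow$ (which affects only function types, function coercions, and the rules \rnp{CC-Fun} and \rnp{CT-Fun}). First I would note that $\fatsemi$ terminates because the sum of the sizes of its two arguments strictly decreases at each recursive call, so induction on that measure is available.

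Then I would strengthen the statement, exactly as for \lamS, to the four mutually dependent claims and prove them simultaneously by this induction: (i) if $\ottnt{s}  \ottsym{:}  \ottnt{A}  \rightsquigarrow  \ottnt{B}$ and $\ottnt{t}  \ottsym{:}  \ottnt{B}  \rightsquigarrow  \ottnt{C}$ then $\ottsym{(}  \ottnt{s}  \fatsemi  \ottnt{t}  \ottsym{)}  \ottsym{:}  \ottnt{A}  \rightsquigarrow  \ottnt{C}$; (ii) if $\ottnt{i}  \ottsym{:}  \ottnt{A}  \rightsquigarrow  \ottnt{B}$ and $\ottnt{t}  \ottsym{:}  \ottnt{B}  \rightsquigarrow  \ottnt{C}$ then $\ottnt{i}  \fatsemi  \ottnt{t}$ is an intermediate coercion $\ottnt{i'}$ with $\ottnt{i'}  \ottsym{:}  \ottnt{A}  \rightsquigarrow  \ottnt{C}$; (iii) if $\ottnt{g}  \ottsym{:}  \ottnt{A}  \rightsquigarrow  \ottnt{B}$ and $\ottnt{i}  \ottsym{:}  \ottnt{B}  \rightsquigarrow  \ottnt{C}$ then $\ottnt{g}  \fatsemi  \ottnt{i}$ is an intermediate coercion $\ottnt{i'}$ with $\ottnt{i'}  \ottsym{:}  \ottnt{A}  \rightsquigarrow  \ottnt{C}$; and (iv) if $\ottnt{g_{{\mathrm{1}}}}  \ottsym{:}  \ottnt{A}  \rightsquigarrow  \ottnt{B}$ and $\ottnt{g_{{\mathrm{2}}}}  \ottsym{:}  \ottnt{B}  \rightsquigarrow  \ottnt{C}$ then $\ottnt{g_{{\mathrm{1}}}}  \fatsemi  \ottnt{g_{{\mathrm{2}}}}$ is a ground coercion $\ottnt{g_{{\mathrm{3}}}}$ with $\ottnt{g_{{\mathrm{3}}}}  \ottsym{:}  \ottnt{A}  \rightsquigarrow  \ottnt{C}$. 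Since the applicable composition rule is determined by the shape of the first argument, the case analysis follows the rules of Figure~\ref{fig:semS-cmp}. The cases \rnp{CC-IdDynL}, \rnp{CC-ProjL}, \rnp{CC-InjId}, \rnp{CC-InjR}, \rnp{CC-FailL}, \rnp{CC-FailR}, \rnp{CC-IdL}, and \rnp{CC-IdR} are discharged exactly as in Lemma~\ref{lem:sc-cmp-welldef} by inverting the relevant well-formedness derivations and appealing to the induction hypotheses; the cases \rnp{CC-Collapse} and \rnp{CC-Conflict} additionally use Proposition~\ref{prop:src-tgt} to connect the source and target types of the ground coercions with the ground type carried by the injection and projection, and in the conflict case the resulting $ \bot^{ \ottnt{G}   \ottnt{p}   \ottnt{H} } $ is well formed by \rnp{CT-Fail} (its source type is nondynamic and consistent with $\ottnt{G}$).

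The single case that genuinely differs from \lamS\ is \rnp{CC-Fun}, with $\ottnt{g_{{\mathrm{1}}}}  \ottsym{=}  \ottnt{s}  \Rightarrow  \ottnt{t}$ and $\ottnt{g_{{\mathrm{2}}}}  \ottsym{=}  \ottnt{s'}  \Rightarrow  \ottnt{t'}$. Inverting the modified rule \rnp{CT-Fun} gives $\ottnt{s}  \ottsym{:}  \ottnt{A_{{\mathrm{2}}}}  \rightsquigarrow  \ottnt{A_{{\mathrm{1}}}}$, $\ottnt{t}  \ottsym{:}  \ottnt{B_{{\mathrm{1}}}}  \rightsquigarrow  \ottnt{B_{{\mathrm{2}}}}$, $\ottnt{s'}  \ottsym{:}  \ottnt{A_{{\mathrm{3}}}}  \rightsquigarrow  \ottnt{A_{{\mathrm{2}}}}$, $\ottnt{t'}  \ottsym{:}  \ottnt{B_{{\mathrm{2}}}}  \rightsquigarrow  \ottnt{B_{{\mathrm{3}}}}$, with $\ottnt{A}  \ottsym{=}  \ottnt{A_{{\mathrm{1}}}}  \Rightarrow  \ottnt{B_{{\mathrm{1}}}}$, $\ottnt{B}  \ottsym{=}  \ottnt{A_{{\mathrm{2}}}}  \Rightarrow  \ottnt{B_{{\mathrm{2}}}}$, $\ottnt{C}  \ottsym{=}  \ottnt{A_{{\mathrm{3}}}}  \Rightarrow  \ottnt{B_{{\mathrm{3}}}}$. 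The induction hypotheses for the strictly smaller argument coercions yield $\ottnt{s'}  \fatsemi  \ottnt{s}  \ottsym{:}  \ottnt{A_{{\mathrm{3}}}}  \rightsquigarrow  \ottnt{A_{{\mathrm{1}}}}$ and $\ottnt{t}  \fatsemi  \ottnt{t'}  \ottsym{:}  \ottnt{B_{{\mathrm{1}}}}  \rightsquigarrow  \ottnt{B_{{\mathrm{3}}}}$, so $\ottsym{(}  \ottnt{s'}  \fatsemi  \ottnt{s}  \ottsym{)}  \Rightarrow  \ottsym{(}  \ottnt{t}  \fatsemi  \ottnt{t'}  \ottsym{)}  \ottsym{:}  \ottnt{A_{{\mathrm{1}}}}  \Rightarrow  \ottnt{B_{{\mathrm{1}}}}  \rightsquigarrow  \ottnt{A_{{\mathrm{3}}}}  \Rightarrow  \ottnt{B_{{\mathrm{3}}}}$ by \rnp{CT-Fun}, as required; in the degenerate subcase where $\ottnt{s'}  \fatsemi  \ottnt{s}  \ottsym{=}   \ottkw{id} _{ \ottnt{A} } $ and $\ottnt{t}  \fatsemi  \ottnt{t'}  \ottsym{=}   \ottkw{id} _{ \ottnt{B} } $, uniqueness of the type index of an identity coercion forces $\ottnt{A_{{\mathrm{1}}}}  \ottsym{=}  \ottnt{A_{{\mathrm{3}}}}  \ottsym{=}  \ottnt{A}$ and $\ottnt{B_{{\mathrm{1}}}}  \ottsym{=}  \ottnt{B_{{\mathrm{3}}}}  \ottsym{=}  \ottnt{B}$, so $ \ottkw{id} _{ \ottnt{A}  \Rightarrow  \ottnt{B} } $ again has the right type. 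I expect the only real care point to be the contravariant swap of the argument coercions in \rnp{CC-Fun}, but this is identical to the corresponding step of Lemma~\ref{lem:sc-cmp-welldef} and poses no new difficulty.
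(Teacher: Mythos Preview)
Your proposal is correct and follows essentially the same approach as the paper, which simply says ``Similar to Lemma~\ref{lem:sc-cmp-welldef}.'' You have spelled out in detail what that one-line proof means: the same four-way mutual strengthening and the same induction, with the only nontrivial change being the $\rightarrow\mapsto\Rightarrow$ replacement in the \rnp{CC-Fun} and \rnp{CT-Fun} rules, which you handle correctly.
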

\begin{proof}
  Similar to Lemma~\ref{lem:sc-cmp-welldef}.
\end{proof}

We explicitly state a few lemmas on evaluation contexts.

The following lemma ensures that the composition of evaluation contexts
is also an evaluation context.
Here, we note that
\[
  \ottsym{(}  \mathcal{E}_{{\mathrm{1}}}  [  \mathcal{E}_{{\mathrm{2}}}  ]  \ottsym{)}  [  \ottnt{L}  ] = \mathcal{E}_{{\mathrm{1}}}  [  \mathcal{E}_{{\mathrm{2}}}  [  \ottnt{L}  ]  ]
\]
where $\ottnt{L}$ is a term that may contain at most one hole $ \square $.
For example,
\[
  \ottsym{(}  \ottnt{op}  \ottsym{(}  \square  \ottsym{,}  \ottnt{M}  \ottsym{)}  [  \ottnt{op}  \ottsym{(}  \ottnt{V}  \ottsym{,} \, \square \, \ottsym{)}  ]  \ottsym{)}  [  \ottnt{L}  ] = \ottsym{(}  \ottnt{op}  \ottsym{(}  \ottnt{op}  \ottsym{(}  \ottnt{V}  \ottsym{,} \, \square \, \ottsym{)}  \ottsym{,}  \ottnt{M}  \ottsym{)}  \ottsym{)}  [  \ottnt{L}  ]
  = \ottnt{op}  \ottsym{(}  \ottnt{op}  \ottsym{(}  \ottnt{V}  \ottsym{,}  \ottnt{L}  \ottsym{)}  \ottsym{,}  \ottnt{M}  \ottsym{)}.
\]

\begin{lemma}[Composition of Contexts]\label{lem:ctx-cmp}
  For any evaluation contexts $\mathcal{E}_{{\mathrm{1}}},\mathcal{E}_{{\mathrm{2}}}$ of \lamSx,
  there exists an evaluation context $\mathcal{E}$ such that $\mathcal{E}_{{\mathrm{1}}}  [  \mathcal{E}_{{\mathrm{2}}}  ]  \ottsym{=}  \mathcal{E}$.
\end{lemma}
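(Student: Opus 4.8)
The plan is to proceed by structural induction on the evaluation context $\mathcal{E}_2$, exploiting the inside-out presentation of evaluation contexts of \lamSx, under which every $\mathcal{E}_2$ is either the empty context $\square$ or of the form $\mathcal{E}_2'  [  \mathcal{G}  ]$ for a strictly smaller context $\mathcal{E}_2'$ and a single-frame context $\mathcal{G}$ drawn from the list $ \square \, (  \ottnt{M} , \ottnt{N}  ) $, $ \ottnt{V} \, (  \square , \ottnt{N}  ) $, $ \ottnt{V} \, (  \ottnt{W} , \square  ) $, $\ottnt{op}  \ottsym{(}  \square  \ottsym{,}  \ottnt{M}  \ottsym{)}$, $\ottnt{op}  \ottsym{(}  \ottnt{V}  \ottsym{,} \, \square \, \ottsym{)}$, $ \ottkw{let} \,  \ottmv{x} = \square \, \ottkw{in}\,  \ottnt{M} $, $\square  \mathbin{;\!;}  \ottnt{M}$, $\ottnt{V}  \mathbin{;\!;}  \square$, $\square \, \langle  \ottnt{M}  \rangle$, and $\ottnt{V}  \langle \, \square \, \rangle$. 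In the base case $\mathcal{E}_2  \ottsym{=}  \square$, we have $\mathcal{E}_1  [  \mathcal{E}_2  ]  \ottsym{=}  \mathcal{E}_1  [  \square  ]  \ottsym{=}  \mathcal{E}_1$, so we simply take $\mathcal{E}  \ottsym{=}  \mathcal{E}_1$.

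For the inductive step, suppose $\mathcal{E}_2  \ottsym{=}  \mathcal{E}_2'  [  \mathcal{G}  ]$. Using the associativity of context plugging noted just above the lemma, namely $\ottsym{(}  \mathcal{E}_1  [  \mathcal{E}_2  ]  \ottsym{)}  [  \ottnt{L}  ]  \ottsym{=}  \mathcal{E}_1  [  \mathcal{E}_2  [  \ottnt{L}  ]  ]$, applied with $\ottnt{L}  \ottsym{=}  \mathcal{G}$ (which is a term with exactly one hole), we obtain $\mathcal{E}_1  [  \mathcal{E}_2  ]  \ottsym{=}  \mathcal{E}_1  [  \mathcal{E}_2'  [  \mathcal{G}  ]  ]  \ottsym{=}  \ottsym{(}  \mathcal{E}_1  [  \mathcal{E}_2'  ]  \ottsym{)}  [  \mathcal{G}  ]$. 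By the induction hypothesis there is an evaluation context $\mathcal{E}'$ with $\mathcal{E}_1  [  \mathcal{E}_2'  ]  \ottsym{=}  \mathcal{E}'$, hence $\mathcal{E}_1  [  \mathcal{E}_2  ]  \ottsym{=}  \mathcal{E}'  [  \mathcal{G}  ]$; the grammar production for $\mathcal{E}$ corresponding to the shape of $\mathcal{G}$ immediately certifies that $\mathcal{E}'  [  \mathcal{G}  ]$ is again an evaluation context, and we take $\mathcal{E}  \ottsym{=}  \mathcal{E}'  [  \mathcal{G}  ]$.

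The argument is essentially bookkeeping; the only point requiring care is keeping the inside-out notation straight—$\mathcal{E}_1  [  \mathcal{E}_2  ]$ denotes plugging $\mathcal{E}_2$ into the hole of $\mathcal{E}_1$, so that peeling a frame off $\mathcal{E}_2$ removes its \emph{innermost} frame—and invoking the associativity identity with the frame $\mathcal{G}$ in the role of a term $\ottnt{L}$ carrying a single hole. I expect no genuine obstacle here; a companion statement for \lamS would be slightly more delicate since its evaluation contexts are given by the mutually recursive grammars $\mathcal{E}$ and $\mathcal{F}$ (so one would induct on both simultaneously), but for \lamSx, whose contexts come from a single grammar, the single induction on $\mathcal{E}_2$ suffices.
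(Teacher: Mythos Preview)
Your proof is correct and follows essentially the same approach as the paper: induction on $\mathcal{E}_2$, using associativity of plugging to peel off the innermost frame, applying the induction hypothesis to the smaller context, and then recognizing the result as an evaluation context by the appropriate grammar production. The paper spells out only the case $\mathcal{E}_2 = \mathcal{E}'_2[\,\square\,(\ottnt{M},\ottnt{N})\,]$ and leaves the rest as similar, whereas you treat all frame shapes uniformly; otherwise the arguments are identical.
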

\begin{proof}
  By induction on $\mathcal{E}_{{\mathrm{2}}}$. We only show one case.
  \begin{description}
  \item[\Case{$\mathcal{E}_{{\mathrm{2}}}  \ottsym{=}  \mathcal{E}'_{{\mathrm{2}}}  [   \square \, (  \ottnt{M} , \ottnt{N}  )   ]$}]
    \[
      \mathcal{E}_{{\mathrm{1}}}  [  \mathcal{E}_{{\mathrm{2}}}  ] = \mathcal{E}_{{\mathrm{1}}}  [  \mathcal{E}'_{{\mathrm{2}}}  [   \square \, (  \ottnt{M} , \ottnt{N}  )   ]  ]
      = \ottsym{(}  \mathcal{E}_{{\mathrm{1}}}  [  \mathcal{E}'_{{\mathrm{2}}}  ]  \ottsym{)}  [   \square \, (  \ottnt{M} , \ottnt{N}  )   ]
    \]
    By the IH, we have $\mathcal{E}_{{\mathrm{1}}}  [  \mathcal{E}'_{{\mathrm{2}}}  ]  \ottsym{=}  \mathcal{E}'$ for some $\mathcal{E}'$.
    Then, $\mathcal{E}_{{\mathrm{1}}}  [  \mathcal{E}_{{\mathrm{2}}}  ]  \ottsym{=}  \mathcal{E}'  [   \square \, (  \ottnt{M} , \ottnt{N}  )   ]$ is an evaluation context. \qedhere
  \end{description}
\end{proof}

The following lemma is useful in Lemma~\ref{lem:trans-eval}.

\begin{lemma}\label{lem:ctx-multi}  
  Assume $\ottnt{N}  \ne  \ottkw{blame} \, \ottnt{p}$ and $ \mathcal{X}\in   \set{\mathsf{e},\mathsf{c} }  $.
  If $ \ottnt{M}    \mathbin{  \accentset{\mathcal{X} }{\evalto}_{\mathsf{S_1} }  }    \ottnt{N} $, then $ \mathcal{E}  [  \ottnt{M}  ]    \mathbin{  \accentset{\mathcal{X} }{\evalto}_{\mathsf{S_1} }  }    \mathcal{E}  [  \ottnt{N}  ] $.
\end{lemma}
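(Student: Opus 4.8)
\textbf{Proof plan for Lemma~\ref{lem:ctx-multi}.}
The plan is to prove this by a straightforward induction on the structure of the evaluation context $\mathcal{E}$, using the one-step context rule \rnp{E-Ctx} together with Lemma~\ref{lem:ctx-cmp} (Composition of Contexts). First I would observe that, since $ \ottnt{M}    \mathbin{  \accentset{\mathcal{X} }{\evalto}_{\mathsf{S_1} }  }    \ottnt{N} $ and $\ottnt{N}  \ne  \ottkw{blame} \, \ottnt{p}$, the step must come from \rnp{E-Ctx} rather than \rnp{E-Abort}; hence there is an evaluation context $\mathcal{E}_{{\mathrm{0}}}$ and a redex $\ottnt{M_{{\mathrm{0}}}}$ with $ \ottnt{M_{{\mathrm{0}}}}    \mathbin{\accentset{\mathcal{X} }{\reduces}_{\mathsf{S_1} } }    \ottnt{N_{{\mathrm{0}}}} $, $\ottnt{M}  \ottsym{=}  \mathcal{E}_{{\mathrm{0}}}  [  \ottnt{M_{{\mathrm{0}}}}  ]$, and $\ottnt{N}  \ottsym{=}  \mathcal{E}_{{\mathrm{0}}}  [  \ottnt{N_{{\mathrm{0}}}}  ]$. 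Then by Lemma~\ref{lem:ctx-cmp} there is an evaluation context $\mathcal{E}'$ with $\mathcal{E}  [  \mathcal{E}_{{\mathrm{0}}}  ]  \ottsym{=}  \mathcal{E}'$, so $\mathcal{E}  [  \ottnt{M}  ]  \ottsym{=}  \mathcal{E}'  [  \ottnt{M_{{\mathrm{0}}}}  ]$ and $\mathcal{E}  [  \ottnt{N}  ]  \ottsym{=}  \mathcal{E}'  [  \ottnt{N_{{\mathrm{0}}}}  ]$. Applying \rnp{E-Ctx} to $ \ottnt{M_{{\mathrm{0}}}}    \mathbin{\accentset{\mathcal{X} }{\reduces}_{\mathsf{S_1} } }    \ottnt{N_{{\mathrm{0}}}} $ under context $\mathcal{E}'$ yields $ \mathcal{E}  [  \ottnt{M}  ]    \mathbin{  \accentset{\mathcal{X} }{\evalto}_{\mathsf{S_1} }  }    \mathcal{E}  [  \ottnt{N}  ] $, as desired.

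Alternatively, one can avoid the explicit decomposition and argue directly: inverting the derivation of $ \ottnt{M}    \mathbin{  \accentset{\mathcal{X} }{\evalto}_{\mathsf{S_1} }  }    \ottnt{N} $, the only applicable rule (given $\ottnt{N}  \ne  \ottkw{blame} \, \ottnt{p}$) is \rnp{E-Ctx}, which already exhibits $\ottnt{M}  \ottsym{=}  \mathcal{E}_{{\mathrm{0}}}  [  \ottnt{L}  ]$, $\ottnt{N}  \ottsym{=}  \mathcal{E}_{{\mathrm{0}}}  [  \ottnt{L'}  ]$ with $ \ottnt{L}    \mathbin{\accentset{\mathcal{X} }{\reduces}_{\mathsf{S_1} } }    \ottnt{L'} $. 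I would then invoke Lemma~\ref{lem:ctx-cmp} on $\mathcal{E}$ and $\mathcal{E}_{{\mathrm{0}}}$ exactly as above. This makes the proof essentially a two-line argument, so structural induction on $\mathcal{E}$ is not even strictly needed once Lemma~\ref{lem:ctx-cmp} is available; I would phrase it in that concise form.

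The only subtlety worth flagging is the side condition $\ottnt{N}  \ne  \ottkw{blame} \, \ottnt{p}$: it is exactly what rules out the case that the original step was an \rnp{E-Abort} step, which could not in general be replayed under an arbitrary surrounding context $\mathcal{E}$ while keeping the \emph{same} label $\mathsf{e}$ on the step and producing $\mathcal{E}  [  \ottkw{blame} \, \ottnt{p}  ]$ rather than $\ottkw{blame} \, \ottnt{p}$. I expect no real obstacle here; the main thing to get right is simply citing Lemma~\ref{lem:ctx-cmp} so that $\mathcal{E}  [  \mathcal{E}_{{\mathrm{0}}}  ]$ is known to be a legitimate evaluation context, which is what licenses the final application of \rnp{E-Ctx}.
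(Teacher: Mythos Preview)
Your proposal is correct and matches the paper's proof essentially line for line: case analysis on the evaluation rule, ruling out \rnp{E-Abort} via the assumption $\ottnt{N} \ne \ottkw{blame}\,\ottnt{p}$, then composing contexts with Lemma~\ref{lem:ctx-cmp} and reapplying \rnp{E-Ctx}. Your observation that structural induction on $\mathcal{E}$ is unnecessary once Lemma~\ref{lem:ctx-cmp} is available is exactly right; the paper proceeds directly by that case analysis as well.
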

\begin{proof}
  By case analysis on the evaluation rule applied to $ \ottnt{M}    \mathbin{  \accentset{\mathcal{X} }{\evalto}_{\mathsf{S_1} }  }    \ottnt{N} $.
  \begin{description}
  \item[\Case{\rnp{E-Ctx}}]
    We are given
    \[
       \ottnt{M_{{\mathrm{1}}}}    \mathbin{\accentset{\mathcal{X} }{\reduces} }    \ottnt{N_{{\mathrm{1}}}}  \hgap
      \ottnt{M}  \ottsym{=}  \mathcal{E}_{{\mathrm{1}}}  [  \ottnt{M_{{\mathrm{1}}}}  ] \hgap
      \ottnt{N}  \ottsym{=}  \mathcal{E}_{{\mathrm{1}}}  [  \ottnt{N_{{\mathrm{1}}}}  ] \hgap
    \]
    for some $\mathcal{E},\ottnt{M_{{\mathrm{1}}}},\ottnt{N_{{\mathrm{1}}}}$.
    By Lemma~\ref{lem:ctx-cmp}, we have $\mathcal{E}  [  \mathcal{E}_{{\mathrm{1}}}  ]  \ottsym{=}  \mathcal{E}'$ for some $\mathcal{E}'$.
    \[
      \mathcal{E}'  [  \ottnt{M_{{\mathrm{1}}}}  ]  \ottsym{=}  \mathcal{E}  [  \mathcal{E}_{{\mathrm{1}}}  [  \ottnt{M_{{\mathrm{1}}}}  ]  ] = \mathcal{E}  [  \ottnt{M}  ] \hgap
      \mathcal{E}'  [  \ottnt{N_{{\mathrm{1}}}}  ]  \ottsym{=}  \mathcal{E}  [  \ottnt{N}  ]
    \]
    By \rnp{E-Ctx} with evaluation context $\mathcal{E}  [  \mathcal{E}_{{\mathrm{1}}}  ]  \ottsym{=}  \mathcal{E}'$,
    we have $ \mathcal{E}  [  \ottnt{M}  ]    \mathbin{  \accentset{\mathcal{X} }{\evalto}  }    \mathcal{E}  [  \ottnt{N}  ] $.

  \item[\Case{\rnp{E-Abort}}]
    Cannot happen (since $\ottnt{N}  \ne  \ottkw{blame} \, \ottnt{p}$). \qedhere
  \end{description}
\end{proof}

We note that the following property (for any natural number $\ottmv{n}$)
follows from Lemma~\ref{lem:ctx-multi}.
(By $\ottnt{N}  \ne  \ottkw{blame} \, \ottnt{p}$, we can assume no use of \rnp{E-Abort} in the derivation of $\ottnt{M}  \mathbin{  \accentset{\mathcal{X} }{\evalto}_{\mathsf{S_1} }  ^ \ottmv{n} }  \ottnt{N}$.)
\begin{center}
  Assume $\ottnt{N}  \ne  \ottkw{blame} \, \ottnt{p}$ and $ \mathcal{X}\in   \set{\mathsf{e},\mathsf{c} }  $.
  If $\ottnt{M}  \mathbin{  \accentset{\mathcal{X} }{\evalto}_{\mathsf{S_1} }  ^ \ottmv{n} }  \ottnt{N}$, then $\mathcal{E}  [  \ottnt{M}  ]  \mathbin{  \accentset{\mathcal{X} }{\evalto}_{\mathsf{S_1} }  ^ \ottmv{n} }  \mathcal{E}  [  \ottnt{N}  ]$.
\end{center}

\begin{lemma}[Unique Decomposition]\label{lem:decom-S1}
  If $  \emptyset     \vdash_{\mathsf{S_1} }    \ottnt{M}  :  \ottnt{A} $, then one of the following holds.
  \begin{enumerate}
  \item There uniquely exist a redex $\ottnt{M_{{\mathrm{1}}}}$ and an evaluation context $\mathcal{E}$
    such that $\ottnt{M}  \ottsym{=}  \mathcal{E}  [  \ottnt{M_{{\mathrm{1}}}}  ]$.
  \item $\ottnt{M}  \ottsym{=}  \ottnt{V}$ for some $\ottnt{V}$.
  \item $\ottnt{M}  \ottsym{=}  \ottkw{blame} \, \ottnt{p}$ for some $\ottnt{p}$.
  \end{enumerate}
\end{lemma}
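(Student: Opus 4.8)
The plan is to mirror the proof of Lemma~\ref{lem:decom-S} for \lamS: proceed by induction on the derivation of $  \emptyset     \vdash_{\mathsf{S_1} }    \ottnt{M}  :  \ottnt{A} $, with a case analysis on the last typing rule. Before starting, I would establish the \lamSx analogue of Lemma~\ref{lem:canonical-S} (canonical forms): a closed value $\ottnt{V}$ with $  \emptyset     \vdash_{\mathsf{S_1} }    \ottnt{V}  :  \ottnt{A} $ has a shape determined by $\ottnt{A}$ --- if $\ottnt{A}  \ottsym{=}  \iota$ then $\ottnt{V}$ is a constant $\ottnt{a}$; if $\ottnt{A}  \ottsym{=}  \ottnt{B}  \Rightarrow  \ottnt{C}$ then $\ottnt{V}$ is $ \lambda  ( \ottmv{x} , \kappa ).\,  \ottnt{M} $ or $\ottnt{U}  \langle\!\langle  \ottnt{s}  \Rightarrow  \ottnt{t}  \rangle\!\rangle$; if $\ottnt{A}  \ottsym{=}  \ottnt{B}  \rightsquigarrow  \ottnt{C}$ then $\ottnt{V}$ is a space-efficient coercion $\ottnt{s}$ (the only uncoerced value of coercion type, and delayed coercions never target a coercion type); and if $\ottnt{A}  \ottsym{=}  \mathord{\star}$ then $\ottnt{V}$ is $\ottnt{U}  \langle\!\langle  \ottnt{g}  \ottsym{;}   \ottnt{G} \texttt{!}   \rangle\!\rangle$. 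Since \lamSx has \emph{standard} evaluation contexts --- unlike the inside-out contexts of \lamS with their awkward $\mathcal{E}/\mathcal{F}$ split --- uniqueness of the decomposition $\ottnt{M}  \ottsym{=}  \mathcal{E}  [  \ottnt{M_{{\mathrm{1}}}}  ]$ follows immediately once existence is shown: the grammar of $\mathcal{E}$ is unambiguous, and at every compound node exactly one hole position is admissible given which neighbours are already values.

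For the inductive step I would dispatch the cases as follows. The axioms \rnp{T-Const}, \rnp{T-Abs}, \rnp{T-Crcn} give case~(2) directly; \rnp{T-Var} is vacuous under the empty environment; \rnp{T-Blame} gives case~(3); and \rnp{T-CrcV} yields the value $\ottnt{U}  \langle\!\langle  \ottnt{d}  \rangle\!\rangle$, case~(2). For the compound rules \rnp{T-Op}, \rnp{T-App}, \rnp{T-Let}, \rnp{T-Cmp}, \rnp{T-Crc} I would apply the induction hypothesis to the leftmost immediate subterm: if it reduces, prepend the appropriate one-level frame (say $\ottnt{op}  \ottsym{(}  \square  \ottsym{,}  \ottnt{N}  \ottsym{)}$, $ \square \, (  \ottnt{M} , \ottnt{N}  ) $, $ \ottkw{let} \,  \ottmv{x} = \square \, \ottkw{in}\,  \ottnt{N} $, $\square  \mathbin{;\!;}  \ottnt{N}$, or $\square \, \langle  \ottnt{N}  \rangle$) and compose it with the context from the IH using Lemma~\ref{lem:ctx-cmp}; if it is $\ottkw{blame} \, \ottnt{p}$, the whole term decomposes as a nontrivial context around $\ottkw{blame} \, \ottnt{p}$ (the abort case); and if it is a value, move on to the next subterm in evaluation order. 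Once all relevant subterms are values, the canonical forms lemma pins down their shapes and shows the term is a redex: $\ottnt{op}  \ottsym{(}  \ottnt{a}  \ottsym{,}  \ottnt{b}  \ottsym{)}$ is an \rnp{R-Op} redex; $ \ottnt{V} \, ( \ottnt{W} , \ottnt{K} ) $ with $\ottnt{V}$ of type $\ottnt{A}  \Rightarrow  \ottnt{B}$ is either $ \ottsym{(}   \lambda  ( \ottmv{x} , \kappa ).\,  \ottnt{M}   \ottsym{)} \, ( \ottnt{V} , \ottnt{W} ) $ (\rnp{R-Beta}) or $ \ottsym{(}  \ottnt{U}  \langle\!\langle  \ottnt{s}  \Rightarrow  \ottnt{t}  \rangle\!\rangle  \ottsym{)} \, ( \ottnt{V} , \ottnt{W} ) $ (\rnp{R-Wrap}); $ \ottkw{let} \,  \ottmv{x} = \ottnt{V} \, \ottkw{in}\,  \ottnt{M} $ is an \rnp{R-Let} redex; and $\ottnt{s}  \mathbin{;\!;}  \ottnt{t}$ is an \rnp{R-Cmp} redex.

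The delicate case is coercion application $\ottnt{M}  \langle  \ottnt{N}  \rangle$ (rule \rnp{T-Crc} with $ \emptyset    \vdash_{\mathsf{S_1} }    \ottnt{N}  :  \ottnt{A}  \rightsquigarrow  \ottnt{B} $). Here I would apply the IH first to $\ottnt{M}$ and then, when $\ottnt{M}$ is a value $\ottnt{V}$, to $\ottnt{N}$; when both are values, canonical forms gives $\ottnt{N}  \ottsym{=}  \ottnt{s}$ with $\ottnt{s}  \ottsym{:}  \ottnt{A}  \rightsquigarrow  \ottnt{B}$, and $\ottnt{V}$ is either an uncoerced value $\ottnt{U}$ or a coerced value $\ottnt{U}  \langle\!\langle  \ottnt{d}  \rangle\!\rangle$. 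If $\ottnt{V}  \ottsym{=}  \ottnt{U}  \langle\!\langle  \ottnt{d}  \rangle\!\rangle$ then $\ottnt{U}  \langle\!\langle  \ottnt{d}  \rangle\!\rangle  \langle  \ottnt{s}  \rangle$ is an \rnp{R-MergeV} redex. If $\ottnt{V}  \ottsym{=}  \ottnt{U}$, I would use Proposition~\ref{prop:src-tgt}: the source type of $\ottnt{s}$ must equal the type of $\ottnt{U}$, which is never $\mathord{\star}$ (all uncoerced values have a non-$\star$ type); since an intermediate coercion cannot have source $\mathord{\star}$ and ground coercions are only $ \ottkw{id} _{ \ottnt{A} } $ or $\ottnt{s'}  \Rightarrow  \ottnt{t'}$, the coercion $\ottnt{s}$ is necessarily one of $ \ottkw{id} _{ \ottnt{A} } $, $ \bot^{ \ottnt{G}   \ottnt{p}   \ottnt{H} } $, or a delayed coercion $\ottnt{d}$, so $\ottnt{U}  \langle  \ottnt{s}  \rangle$ is always a redex of \rnp{R-Id}, \rnp{R-Fail}, or \rnp{R-Crc}. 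This classification --- closing the only apparent gap by showing that the seemingly uncovered shapes are either impossible or already handled --- is the main obstacle, and it is precisely where the restricted grammar of space-efficient coercions together with Proposition~\ref{prop:src-tgt} do the real work. Everything else is a routine transcription of the \lamS proof, made smoother by the standard form of \lamSx evaluation contexts.
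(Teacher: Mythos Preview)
Your proposal is correct and follows essentially the same approach as the paper, which simply defers to the proof of Theorem~\ref{thm:progress-S1} (Progress for \lamSx): induction on the typing derivation with case analysis on the last rule, using the canonical forms lemma (Lemma~\ref{lem:canonical-S1}) to pin down value shapes at the leaves. Your treatment of the \rnp{T-Crc} case---arguing that a closed uncoerced value never has type $\mathord{\star}$, hence the coercion must be an intermediate coercion and thus one of $ \ottkw{id} _{ \ottnt{A} } $, $ \bot^{ \ottnt{G}   \ottnt{p}   \ottnt{H} } $, or a delayed $\ottnt{d}$---is exactly what the paper does in the corresponding case of the Progress proof, and your explicit remark about uniqueness (which the paper merely footnotes as ``a bit more involved'') is a welcome addition.
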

\begin{proof}
  Similarly for Theorem~\ref{thm:progress-S1}.\footnote{%
    It would be a bit more involved so as to show the uniqueness.
    We avoid involvedness to write them down.
    (Theorem~\ref{thm:progress-S1} would follow Lemma~\ref{lem:decom-S1}.)
  }
\end{proof}

\iffull\lemDeterminacySx*
\else
\begin{lemma}[Determinacy]\label{lem:determinism-S1}
  If $ \ottnt{M}    \mathbin{  \evalto_{\mathsf{S_1} }  }    \ottnt{N} $ and $ \ottnt{M}    \mathbin{  \evalto_{\mathsf{S_1} }  }    \ottnt{N'} $, then $\ottnt{N}  \ottsym{=}  \ottnt{N'}$.
\end{lemma}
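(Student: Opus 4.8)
The plan is to obtain Determinacy directly from the Unique Decomposition property for \lamSx (Lemma~\ref{lem:decom-S1}), mirroring exactly how Lemma~\ref{lem:determinism-S} was derived from Lemma~\ref{lem:decom-S} in the \lamS case. First I would note that $\evalto_{\mathsf{S_1}}$ is defined on closed terms, so (invoking well-typedness) Lemma~\ref{lem:decom-S1} applies to $\ottnt{M}$: either $\ottnt{M}$ is a value, or $\ottnt{M}  \ottsym{=}  \ottkw{blame}\,\ottnt{p}$ with the trivial decomposition $\square[\ottkw{blame}\,\ottnt{p}]$, or $\ottnt{M}$ uniquely decomposes either as $\mathcal{E}[\ottnt{M_{{\mathrm{1}}}}]$ for a redex $\ottnt{M_{{\mathrm{1}}}}$, or as $\mathcal{E}[\ottkw{blame}\,\ottnt{p}]$ with $\mathcal{E}  \ne  \square$. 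In the first two cases no evaluation rule applies and there is nothing to prove; in the remaining cases the two forms are mutually exclusive because $\ottkw{blame}\,\ottnt{p}$ is not a redex, and this exclusivity together with the uniqueness of $\mathcal{E}$ and $\ottnt{M_{{\mathrm{1}}}}$ is precisely the content of Lemma~\ref{lem:decom-S1}.

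Next I would perform a case analysis on the rule deriving $ \ottnt{M}    \mathbin{  \evalto_{\mathsf{S_1} }  }    \ottnt{N} $. If the step is by \rnp{E-Abort}, then $\ottnt{M}  \ottsym{=}  \mathcal{E}[\ottkw{blame}\,\ottnt{p}]$ and $\ottnt{N}  \ottsym{=}  \ottkw{blame}\,\ottnt{p}$; by uniqueness in Lemma~\ref{lem:decom-S1}, $\ottnt{M}$ is not of the form $\mathcal{E}'[\ottnt{M_{{\mathrm{1}}}}]$ for any redex $\ottnt{M_{{\mathrm{1}}}}$, so any derivation of $ \ottnt{M}    \mathbin{  \evalto_{\mathsf{S_1} }  }    \ottnt{N'} $ must again use \rnp{E-Abort}, forcing $\ottnt{N'}  \ottsym{=}  \ottkw{blame}\,\ottnt{p}  \ottsym{=}  \ottnt{N}$. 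If the step is by \rnp{E-Ctx}, then $\ottnt{M}  \ottsym{=}  \mathcal{E}[\ottnt{M_{{\mathrm{1}}}}]$ with $ \ottnt{M_{{\mathrm{1}}}}    \mathbin{\accentset{\mathcal{X} }{\reduces} }    \ottnt{M'_{{\mathrm{1}}}} $ and $\ottnt{N}  \ottsym{=}  \mathcal{E}[\ottnt{M'_{{\mathrm{1}}}}]$. Since $\mathcal{E}$ and $\ottnt{M_{{\mathrm{1}}}}$ are uniquely determined by $\ottnt{M}$, any other step must also use \rnp{E-Ctx} with the same $\mathcal{E}$ and $\ottnt{M_{{\mathrm{1}}}}$, so it suffices to check that the reduction relation is deterministic on redexes: inspecting the reduction rules of Figure~\ref{fig:semS1}, the syntactic shape of $\ottnt{M_{{\mathrm{1}}}}$ (one of $\ottnt{op}(\ottnt{a},\ottnt{b})$, $(\lambda(\ottmv{x},\kappa).\,\ottnt{M})\,(\ottnt{V},\ottnt{W})$, $(\ottnt{U}\langle\!\langle\ottnt{s}  \Rightarrow  \ottnt{t}\rangle\!\rangle)\,(\ottnt{V},\ottnt{W})$, $\ottkw{let}\,\ottmv{x}=\ottnt{V}\,\ottkw{in}\,\ottnt{M}$, $\ottnt{s}\mathbin{;\!;}\ottnt{t}$, $\ottnt{U}\langle\ottnt{d}\rangle$, $\ottnt{U}\langle\!\langle\ottnt{d}\rangle\!\rangle\langle\ottnt{t}\rangle$, etc.) selects exactly one rule whose right-hand side is a function of $\ottnt{M_{{\mathrm{1}}}}$ (for \rnp{R-Cmp} this uses that the meta-level composition $\ottnt{s}\fatsemi\ottnt{t}$ is itself a function). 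Hence $\ottnt{M'_{{\mathrm{1}}}}$ is unique, so $\ottnt{N'}  \ottsym{=}  \mathcal{E}[\ottnt{M'_{{\mathrm{1}}}}]  \ottsym{=}  \ottnt{N}$.

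The only point requiring care is ensuring that the redex patterns do not overlap and that \rnp{E-Ctx} and \rnp{E-Abort} cannot both fire on the same $\ottnt{M}$; but both facts are already subsumed by the uniqueness clause of Lemma~\ref{lem:decom-S1}, so no additional argument is needed. I therefore expect the write-up to be essentially one line---``By Lemma~\ref{lem:decom-S1}''---structurally identical to the proof of Lemma~\ref{lem:determinism-S}, with the case analysis above spelled out only if a more detailed exposition is wanted.
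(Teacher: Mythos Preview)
Your proposal is correct and matches the paper's approach exactly: the paper's proof is the single line ``By Lemma~\ref{lem:decom-S1},'' just as you anticipated. Your expanded case analysis is a faithful unpacking of what that invocation entails.
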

\fi
\begin{proof}
  By Lemma~\ref{lem:decom-S1}.
\end{proof}

We state type safety for \lamSx with auxiliary lemmas.
We omit inversion lemmas for the typing judgments.

\begin{lemma}[Canonical Forms]\label{lem:canonical-S1}
  If $  \emptyset     \vdash_{\mathsf{S_1} }    \ottnt{V}  :  \ottnt{A} $, then one of the following holds.
  \begin{enumerate}
  \item $\ottnt{V}  \ottsym{=}  \ottnt{a}$ and $\ottnt{A}  \ottsym{=}  \iota$ for some $\ottnt{a},\iota$.
  \item $\ottnt{V}  \ottsym{=}   \lambda  ( \ottmv{x} , \kappa ).\,  \ottnt{M} $ and $\ottnt{A}  \ottsym{=}  \ottnt{A_{{\mathrm{1}}}}  \Rightarrow  \ottnt{A_{{\mathrm{2}}}}$ for some $\ottmv{x},\kappa,\ottnt{M},\ottnt{A_{{\mathrm{1}}}},\ottnt{A_{{\mathrm{2}}}}$.
  \item $\ottnt{V}  \ottsym{=}  \ottnt{U}  \langle\!\langle  \ottnt{s}  \Rightarrow  \ottnt{t}  \rangle\!\rangle$ and $\ottnt{A}  \ottsym{=}  \ottnt{A_{{\mathrm{1}}}}  \Rightarrow  \ottnt{A_{{\mathrm{2}}}}$ for some $\ottnt{U},\ottnt{s},\ottnt{t},\ottnt{A_{{\mathrm{1}}}},\ottnt{A_{{\mathrm{2}}}}$.
  \item $\ottnt{V}  \ottsym{=}  \ottnt{U}  \langle\!\langle  \ottnt{g}  \ottsym{;}   \ottnt{G} \texttt{!}   \rangle\!\rangle$ and $\ottnt{A}  \ottsym{=}  \mathord{\star}$ for some $\ottnt{U},\ottnt{g},\ottnt{G}$.
  \item $\ottnt{V}  \ottsym{=}  \ottnt{s}$ and $\ottnt{A}  \ottsym{=}  \ottnt{A_{{\mathrm{1}}}}  \rightsquigarrow  \ottnt{A_{{\mathrm{2}}}}$ for some $\ottnt{s},\ottnt{A_{{\mathrm{1}}}},\ottnt{A_{{\mathrm{2}}}}$.
  \end{enumerate}
\end{lemma}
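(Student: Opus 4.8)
The plan is to proceed by case analysis on the shape of the value $V$, reading off its possible forms from the grammar in Figure~\ref{fig:syntaxS1} and combining this with inversion on the typing derivation $ \emptyset  \vdash_{\mathsf{S_1}}  \ottnt{V}  :  \ottnt{A} $. Recall that a value is either a variable $\ottmv{x}$, an uncoerced value $\ottnt{U}$, or a coerced value $\ottnt{U}  \langle\!\langle  \ottnt{d}  \rangle\!\rangle$, where an uncoerced value $\ottnt{U}$ is a constant $\ottnt{a}$, an abstraction $ \lambda  ( \ottmv{x} , \kappa ).\,  \ottnt{M} $, or a space-efficient coercion $\ottnt{s}$, and a delayed coercion $\ottnt{d}$ is $\ottnt{g}  \ottsym{;}   \ottnt{G} \texttt{!} $ or $\ottnt{s}  \Rightarrow  \ottnt{t}$. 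Since the type environment is empty, the case $\ottnt{V}  \ottsym{=}  \ottmv{x}$ is ruled out immediately: such a typing would have to be derived by \rnp{T-Var}, but $ ( \ottmv{x}  :  \ottnt{A} ) \in   \emptyset  $ is impossible.

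For the case $\ottnt{V}  \ottsym{=}  \ottnt{U}$ I would split on the three forms of $\ottnt{U}$, each of which is typed by a unique rule. If $\ottnt{U}  \ottsym{=}  \ottnt{a}$, only \rnp{T-Const} applies, so $\ottnt{A}  \ottsym{=}   \metafun{ty} ( \ottnt{a} ) $, a base type $\iota$; this gives item~1. If $\ottnt{U}  \ottsym{=}   \lambda  ( \ottmv{x} , \kappa ).\,  \ottnt{M} $, only \rnp{T-Abs} applies, so $\ottnt{A}$ has the form $\ottnt{A_{{\mathrm{1}}}}  \Rightarrow  \ottnt{A_{{\mathrm{2}}}}$; this gives item~2. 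If $\ottnt{U}  \ottsym{=}  \ottnt{s}$, only \rnp{T-Crcn} applies, so $\ottnt{A}$ has the form $\ottnt{A_{{\mathrm{1}}}}  \rightsquigarrow  \ottnt{A_{{\mathrm{2}}}}$; this gives item~5.

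For the case $\ottnt{V}  \ottsym{=}  \ottnt{U}  \langle\!\langle  \ottnt{d}  \rangle\!\rangle$, the only applicable typing rule is \rnp{T-CrcV}, which yields $ \emptyset  \vdash_{\mathsf{S_1}}  \ottnt{U}  :  \ottnt{A'} $ and $ \emptyset  \vdash_{\mathsf{S_1}}  \ottnt{d}  :  \ottnt{A'}  \rightsquigarrow  \ottnt{A} $ for some $\ottnt{A'}$, so $\ottnt{d}$ is a well-formed delayed coercion. I then split on the two forms of $\ottnt{d}$. If $\ottnt{d}  \ottsym{=}  \ottnt{g}  \ottsym{;}   \ottnt{G} \texttt{!} $, inversion on the well-formed-coercion rules forces a derivation by \rnp{CT-Seq} whose right component $ \ottnt{G} \texttt{!} $ is typed by \rnp{CT-Inj}, so the target type of $\ottnt{d}$ is $ \mathord{\star} $ and hence $\ottnt{A}  \ottsym{=}  \mathord{\star}$; this gives item~4. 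If $\ottnt{d}  \ottsym{=}  \ottnt{s}  \Rightarrow  \ottnt{t}$, inversion forces a derivation by the \lamSx variant of \rnp{CT-Fun}, so the target type of $\ottnt{d}$ has the form $\ottnt{A_{{\mathrm{1}}}}  \Rightarrow  \ottnt{A_{{\mathrm{2}}}}$ and hence $\ottnt{A}  \ottsym{=}  \ottnt{A_{{\mathrm{1}}}}  \Rightarrow  \ottnt{A_{{\mathrm{2}}}}$; this gives item~3.

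None of these steps is genuinely hard; the proof is essentially a bookkeeping exercise over the value grammar and the typing rules, so I do not anticipate a real obstacle. The only point requiring mild care is reading off the target types of $\ottnt{g}  \ottsym{;}   \ottnt{G} \texttt{!} $ and $\ottnt{s}  \Rightarrow  \ottnt{t}$ from the well-formed-coercion rules, which is immediate once one recalls that these rules for \lamSx coincide with those of \lamS except that $ \rightarrow $ is replaced by $ \Rightarrow $, and that \rnp{CT-Inj} forces the target of an injection to be $ \mathord{\star} $. (Proposition~\ref{prop:src-tgt}, which also holds for \lamSx, could be used to pin down source types as well, but it is not needed here.) Collecting the cases establishes the lemma.
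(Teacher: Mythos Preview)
Your proposal is correct and essentially matches the paper's own proof, which is the one-line ``By case analysis on the typing rule applied to $\emptyset \vdash_{\mathsf{S_1}} V : A$.'' You organize the case analysis by the syntactic form of $V$ rather than by the last typing rule, but since each closed value form admits exactly one applicable rule these two decompositions coincide; the inversion steps you spell out for the coerced-value case are exactly what the paper leaves implicit.
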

\begin{proof}
  By case analysis on the typing rule applied to $  \emptyset     \vdash_{\mathsf{S_1} }    \ottnt{V}  :  \ottnt{A} $.
\end{proof}

In the proof of the following theorem,
we only write down case \rnp{T-Op} in detail
and write ``Similar.'' for the other cases.

\iffull\thmProgressSx*
\else
\begin{theorem}[Progress]\label{thm:progress-S1}
  If $  \emptyset     \vdash_{\mathsf{S_1} }    \ottnt{M}  :  \ottnt{A} $, then one of the following holds.
  \begin{enumerate}
  \item $ \ottnt{M}    \mathbin{  \evalto_{\mathsf{S_1} }  }    \ottnt{M'} $ for some $\ottnt{M'}$.
  \item $\ottnt{M}  \ottsym{=}  \ottnt{V}$ for some $\ottnt{V}$.
  \item $\ottnt{M}  \ottsym{=}  \ottkw{blame} \, \ottnt{p}$ for some $\ottnt{p}$.
  \end{enumerate}
\end{theorem}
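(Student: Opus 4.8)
The plan is to prove the statement by induction on the derivation of $  \emptyset     \vdash_{\mathsf{S_1} }    \ottnt{M}  :  \ottnt{A} $, with a case analysis on the typing rule applied last, mirroring the Progress proof for \lamS\ but using the Canonical Forms lemma for \lamSx\ (Lemma~\ref{lem:canonical-S1}) together with the revised grammars of coercions and of evaluation contexts. The rules \rnp{T-Const}, \rnp{T-Abs}, \rnp{T-Crcn}, and \rnp{T-CrcV} immediately put $\ottnt{M}$ in case~(2), \rnp{T-Blame} puts it in case~(3), and \rnp{T-Var} cannot apply since the environment is empty. For the compound rules \rnp{T-Op}, \rnp{T-App}, \rnp{T-Cmp}, \rnp{T-Crc}, and \rnp{T-Let} the uniform recipe is: apply the induction hypothesis to the immediate subterms from left to right; whenever a subterm can evaluate, $\ottnt{M}$ evaluates by \rnp{E-Ctx} with the appropriate evaluation context (which exists by the grammar of $\mathcal{E}$ and, where contexts must be nested, by Lemma~\ref{lem:ctx-cmp}); whenever a subterm is $\ottkw{blame}\, \ottnt{p}$ inside a nonempty context, $\ottnt{M}$ evaluates to $\ottkw{blame}\, \ottnt{p}$ by \rnp{E-Abort}; and once all relevant subterms are values, a redex rule fires.

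Concretely, for \rnp{T-Op} with $\ottnt{M}  \ottsym{=}  \ottnt{op}  \ottsym{(}  \ottnt{M_{{\mathrm{1}}}}  \ottsym{,}  \ottnt{M_{{\mathrm{2}}}}  \ottsym{)}$ and $ \metafun{ty} ( \ottnt{op} )   \ottsym{=}  \iota_{{\mathrm{1}}}  \rightarrow  \iota_{{\mathrm{2}}}  \rightarrow  \iota$, the IH on $\ottnt{M_{{\mathrm{1}}}}$ (of type $\iota_{{\mathrm{1}}}$) either makes $\ottnt{M}$ step or reduces $\ottnt{M_{{\mathrm{1}}}}$ to a value, which by Canonical Forms must be a constant $\ottnt{a_{{\mathrm{1}}}}$ with $ \metafun{ty} ( \ottnt{a_{{\mathrm{1}}}} )   \ottsym{=}  \iota_{{\mathrm{1}}}$; the IH on $\ottnt{M_{{\mathrm{2}}}}$ then similarly yields a constant $\ottnt{a_{{\mathrm{2}}}}$ with $ \metafun{ty} ( \ottnt{a_{{\mathrm{2}}}} )   \ottsym{=}  \iota_{{\mathrm{2}}}$, and $\ottnt{op}  \ottsym{(}  \ottnt{a_{{\mathrm{1}}}}  \ottsym{,}  \ottnt{a_{{\mathrm{2}}}}  \ottsym{)}  \reduces  \delta \, \ottsym{(}  \ottnt{op}  \ottsym{,}  \ottnt{a_{{\mathrm{1}}}}  \ottsym{,}  \ottnt{a_{{\mathrm{2}}}}  \ottsym{)}$ by \rnp{R-Op}, well defined by the assumption on $\delta$. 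The remaining compound cases are analogous: \rnp{T-App} uses Canonical Forms to see that a value of type $\ottnt{A}  \Rightarrow  \ottnt{B}$ is either $ \lambda  ( \ottmv{x} , \kappa ).\,  \ottnt{N} $ or $\ottnt{U}  \langle\!\langle  \ottnt{s}  \Rightarrow  \ottnt{t}  \rangle\!\rangle$, so \rnp{R-Beta} or \rnp{R-Wrap} applies once the argument and continuation-coercion argument are values; \rnp{T-Cmp} uses that a value of type $\ottnt{A}  \rightsquigarrow  \ottnt{B}$ is a space-efficient coercion $\ottnt{s}$, so \rnp{R-Cmp} applies (with $\ottnt{s}  \fatsemi  \ottnt{t}$ defined by Lemma~\ref{lem:cmp-ty}); and \rnp{T-Let} uses \rnp{R-Let}.

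The most delicate case is \rnp{T-Crc} with $\ottnt{M}  \ottsym{=}  \ottnt{M_{{\mathrm{1}}}}  \langle  \ottnt{M_{{\mathrm{2}}}}  \rangle$, where the IH reduces $\ottnt{M_{{\mathrm{1}}}}$ to a value $\ottnt{V}$ of some type $\ottnt{A}$ and $\ottnt{M_{{\mathrm{2}}}}$ to a coercion $\ottnt{t}$ of type $\ottnt{A}  \rightsquigarrow  \ottnt{B}$. Here I would first split on whether $\ottnt{V}$ is a coerced value $\ottnt{U}  \langle\!\langle  \ottnt{d}  \rangle\!\rangle$, in which case \rnp{R-MergeV} always applies, or an uncoerced value $\ottnt{U}$. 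In the latter case the key observation is that an uncoerced value never has type $ \mathord{\star} $ (constants have base type, abstractions a $ \Rightarrow $-type, coercions a $ \rightsquigarrow $-type), hence the source type $\ottnt{A}$ of $\ottnt{t}$ is not $ \mathord{\star} $, so by the grammar of space-efficient coercions (and Proposition~\ref{prop:src-tgt}) $\ottnt{t}$ is an intermediate coercion $\ottnt{g}  \ottsym{;}   \ottnt{G} \texttt{!} $, $\ottnt{g}$, or $ \bot^{ \ottnt{G}   \ottnt{p}   \ottnt{H} } $; a further split on the shape of $\ottnt{g}$ shows exactly one of \rnp{R-Id}, \rnp{R-Crc}, \rnp{R-Fail} fires. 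Verifying that this enumeration of coercion shapes is exhaustive and coheres with Canonical Forms is the main obstacle; the accompanying bookkeeping about evaluation contexts is routine, so the write-up can present \rnp{T-Op} in full and mark the others "Similar."

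\begin{proof}[Proof sketch]
  By induction on the derivation of $  \emptyset     \vdash_{\mathsf{S_1} }    \ottnt{M}  :  \ottnt{A} $ with case analysis on the rule applied last, using Lemma~\ref{lem:canonical-S1}; we detail \rnp{T-Op} and leave the rest similar.
\end{proof}
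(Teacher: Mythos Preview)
Your proposal is correct and follows essentially the same approach as the paper: induction on the typing derivation with case analysis on the last rule, using the Canonical Forms lemma for \lamSx, Lemma~\ref{lem:ctx-cmp} to compose contexts, and the observation in the \rnp{T-Crc} case that an uncoerced value cannot have type $\mathord{\star}$ so the applied coercion must be an intermediate coercion (hence one of \rnp{R-Id}, \rnp{R-Fail}, \rnp{R-Crc} fires). The paper likewise spells out \rnp{T-Op} in detail and marks the remaining compound cases as similar.
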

\fi
\begin{proof}
  By induction on the derivation of $  \emptyset     \vdash_{\mathsf{S_1} }    \ottnt{M}  :  \ottnt{A} $
  with case analysis on the rule applied last.
  \begin{description}
  \item[\Case{\rnp{T-Var}}] Cannot happen.
  \item[\Case{\rnp{T-Const}}] Immediate. ($\ottnt{M}  \ottsym{=}  \ottnt{a}$ is a value.)
  \item[\Case{\rnp{T-Abs}}] Immediate. ($\ottnt{M}  \ottsym{=}   \lambda  ( \ottmv{x} , \kappa ).\,  \ottnt{M_{{\mathrm{1}}}} $ is a value.)
  \item[\Case{\rnp{T-Op}}]
    We are given
    \[
      \ottnt{M}  \ottsym{=}  \ottnt{op}  \ottsym{(}  \ottnt{N_{{\mathrm{1}}}}  \ottsym{,}  \ottnt{N_{{\mathrm{2}}}}  \ottsym{)} \hgap
        \emptyset    \vdash   \ottnt{N_{{\mathrm{1}}}}  :  \iota_{{\mathrm{1}}}  \hgap
        \emptyset    \vdash   \ottnt{N_{{\mathrm{2}}}}  :  \iota_{{\mathrm{2}}} 
    \]
    for some $\ottnt{N_{{\mathrm{1}}}},\ottnt{N_{{\mathrm{2}}}},\iota_{{\mathrm{1}}},\iota_{{\mathrm{2}}}$.
    We have the IHs for $  \emptyset    \vdash   \ottnt{N_{{\mathrm{1}}}}  :  \iota_{{\mathrm{1}}} $ and $  \emptyset    \vdash   \ottnt{N_{{\mathrm{2}}}}  :  \iota_{{\mathrm{2}}} $.
    We proceed by case analysis on $\ottnt{N_{{\mathrm{1}}}},\ottnt{N_{{\mathrm{2}}}}$.
    \begin{description}
    \item[\SubCase{$ \ottnt{N_{{\mathrm{1}}}}    \mathbin{ \evalto }    \ottnt{N'_{{\mathrm{1}}}} $}]
      By case analysis on the evaluation rule applied to $\ottnt{N_{{\mathrm{1}}}}$.
      \begin{description}
      \item[\SubSubCase{\rnp{E-Ctx}}]
        We are given
        \[
           \ottnt{N_{{\mathrm{11}}}}   \reduces   \ottnt{N'_{{\mathrm{11}}}}  \hgap
          \ottnt{N_{{\mathrm{1}}}}  \ottsym{=}  \mathcal{E}_{{\mathrm{1}}}  [  \ottnt{N_{{\mathrm{11}}}}  ] \hgap
          \ottnt{N'_{{\mathrm{1}}}}  \ottsym{=}  \mathcal{E}_{{\mathrm{1}}}  [  \ottnt{N'_{{\mathrm{11}}}}  ]
        \]
        for some $\mathcal{E}_{{\mathrm{1}}},\ottnt{N_{{\mathrm{11}}}},\ottnt{N'_{{\mathrm{11}}}}$.
        Take $\mathcal{E}  \ottsym{=}  \ottsym{(}  \ottnt{op}  \ottsym{(}  \square  \ottsym{,}  \ottnt{N_{{\mathrm{2}}}}  \ottsym{)}  \ottsym{)}  [  \mathcal{E}_{{\mathrm{1}}}  ]$ by Lemma~\ref{lem:ctx-cmp}.
        \begin{align*}
          \mathcal{E}  [  \ottnt{N_{{\mathrm{11}}}}  ] &=
          \ottsym{(}  \ottnt{op}  \ottsym{(}  \square  \ottsym{,}  \ottnt{N_{{\mathrm{2}}}}  \ottsym{)}  \ottsym{)}  [  \mathcal{E}_{{\mathrm{1}}}  [  \ottnt{N_{{\mathrm{11}}}}  ]  ] = \ottsym{(}  \ottnt{op}  \ottsym{(}  \square  \ottsym{,}  \ottnt{N_{{\mathrm{2}}}}  \ottsym{)}  \ottsym{)}  [  \ottnt{N_{{\mathrm{1}}}}  ] = \ottnt{op}  \ottsym{(}  \ottnt{N_{{\mathrm{1}}}}  \ottsym{,}  \ottnt{N_{{\mathrm{2}}}}  \ottsym{)} \\
          \mathcal{E}  [  \ottnt{N'_{{\mathrm{11}}}}  ] &= \ottnt{op}  \ottsym{(}  \ottnt{N'_{{\mathrm{1}}}}  \ottsym{,}  \ottnt{N_{{\mathrm{2}}}}  \ottsym{)}
        \end{align*}
        By \rnp{E-Ctx} with $\mathcal{E}  \ottsym{=}  \ottsym{(}  \ottnt{op}  \ottsym{(}  \square  \ottsym{,}  \ottnt{N_{{\mathrm{2}}}}  \ottsym{)}  \ottsym{)}  [  \mathcal{E}_{{\mathrm{1}}}  ]$,
        we have $ \ottnt{op}  \ottsym{(}  \ottnt{N_{{\mathrm{1}}}}  \ottsym{,}  \ottnt{N_{{\mathrm{2}}}}  \ottsym{)}    \mathbin{ \evalto }    \ottnt{op}  \ottsym{(}  \ottnt{N'_{{\mathrm{1}}}}  \ottsym{,}  \ottnt{N_{{\mathrm{2}}}}  \ottsym{)} $.
        Take $\ottnt{M'}  \ottsym{=}  \ottnt{op}  \ottsym{(}  \ottnt{N'_{{\mathrm{1}}}}  \ottsym{,}  \ottnt{N_{{\mathrm{2}}}}  \ottsym{)}$.
      \item[\SubSubCase{\rnp{E-Abort}}]
        We are given
        \[
          \ottnt{N_{{\mathrm{1}}}}  \ottsym{=}  \mathcal{E}_{{\mathrm{1}}}  [  \ottkw{blame} \, \ottnt{p}  ] \hgap
          \ottnt{N'_{{\mathrm{1}}}}  \ottsym{=}  \ottkw{blame} \, \ottnt{p} \hgap
          \mathcal{E}  \ne  \square
        \]
        for some $\mathcal{E}_{{\mathrm{1}}},\ottnt{p}$.
        By Lemma~\ref{lem:ctx-cmp},
        we have
        \[
          \ottsym{(}  \ottnt{op}  \ottsym{(}  \square  \ottsym{,}  \ottnt{N_{{\mathrm{2}}}}  \ottsym{)}  [  \mathcal{E}_{{\mathrm{1}}}  ]  \ottsym{)}  [  \ottkw{blame} \, \ottnt{p}  ] =
          \ottnt{op}  \ottsym{(}  \square  \ottsym{,}  \ottnt{N_{{\mathrm{2}}}}  \ottsym{)}  [  \mathcal{E}_{{\mathrm{1}}}  [  \ottkw{blame} \, \ottnt{p}  ]  ]  \ottsym{=}  \ottnt{op}  \ottsym{(}  \ottnt{N_{{\mathrm{1}}}}  \ottsym{,}  \ottnt{N_{{\mathrm{2}}}}  \ottsym{)}
          .
        \]
        By \rnp{E-Abort} with $\mathcal{E}  \ottsym{=}  \ottsym{(}  \ottnt{op}  \ottsym{(}  \square  \ottsym{,}  \ottnt{N_{{\mathrm{2}}}}  \ottsym{)}  \ottsym{)}  [  \mathcal{E}_{{\mathrm{1}}}  ]$,
        we have $ \ottnt{op}  \ottsym{(}  \ottnt{N_{{\mathrm{1}}}}  \ottsym{,}  \ottnt{N_{{\mathrm{2}}}}  \ottsym{)}    \mathbin{ \evalto }    \ottkw{blame} \, \ottnt{p} $.
        Take $\ottnt{M'}  \ottsym{=}  \ottkw{blame} \, \ottnt{p}$.
      \end{description}
    \item[\SubCase{$\ottnt{N_{{\mathrm{1}}}}  \ottsym{=}  \ottkw{blame} \, \ottnt{p}$}]
      Take $\ottnt{M'}  \ottsym{=}  \ottkw{blame} \, \ottnt{p}$.
      By \rnp{E-Abort} with $\mathcal{E}  \ottsym{=}  \ottnt{op}  \ottsym{(}  \square  \ottsym{,}  \ottnt{N_{{\mathrm{2}}}}  \ottsym{)}$, we have
      $ \ottnt{op}  \ottsym{(}  \ottkw{blame} \, \ottnt{p}  \ottsym{,}  \ottnt{N_{{\mathrm{2}}}}  \ottsym{)}    \mathbin{ \evalto }    \ottkw{blame} \, \ottnt{p} $;
      i.e., $ \ottnt{M}    \mathbin{ \evalto }    \ottnt{M'} $.
    \item[\SubCase{$\ottnt{N_{{\mathrm{1}}}}  \ottsym{=}  \ottnt{V_{{\mathrm{1}}}}$ and $ \ottnt{N_{{\mathrm{2}}}}    \mathbin{ \evalto }    \ottnt{N'_{{\mathrm{2}}}} $}]
      By case analysis on the evaluation rule applied to $\ottnt{N_{{\mathrm{2}}}}$.
      \begin{description}
      \item[\SubSubCase{\rnp{E-Ctx}}]
        Similarly, take $\ottnt{M'}  \ottsym{=}  \ottnt{op}  \ottsym{(}  \ottnt{V_{{\mathrm{1}}}}  \ottsym{,}  \ottnt{N'_{{\mathrm{2}}}}  \ottsym{)}$.
      \item[\SubSubCase{\rnp{E-Abort}}]
        Similarly, take $\ottnt{M'}  \ottsym{=}  \ottkw{blame} \, \ottnt{p}$.
      \end{description}
    \item[\SubCase{$\ottnt{N_{{\mathrm{1}}}}  \ottsym{=}  \ottnt{V_{{\mathrm{1}}}}$ and $\ottnt{N_{{\mathrm{2}}}}  \ottsym{=}  \ottkw{blame} \, \ottnt{p}$}]
      Take $\ottnt{M'}  \ottsym{=}  \ottkw{blame} \, \ottnt{p}$.
      By \rnp{E-Abort} with $\mathcal{E}  \ottsym{=}  \ottnt{op}  \ottsym{(}  \ottnt{V_{{\mathrm{1}}}}  \ottsym{,} \, \square \, \ottsym{)}$,
      $ \ottnt{op}  \ottsym{(}  \ottnt{V_{{\mathrm{1}}}}  \ottsym{,}  \ottkw{blame} \, \ottnt{p}  \ottsym{)}    \mathbin{ \evalto }    \ottkw{blame} \, \ottnt{p} $;
      i.e., $ \ottnt{M}    \mathbin{ \evalto }    \ottnt{M'} $.
    \item[\SubCase{$\ottnt{N_{{\mathrm{1}}}}  \ottsym{=}  \ottnt{V_{{\mathrm{1}}}}$ and $\ottnt{N_{{\mathrm{2}}}}  \ottsym{=}  \ottnt{V_{{\mathrm{2}}}}$}] 
      By $  \emptyset    \vdash   \ottnt{V_{{\mathrm{1}}}}  :  \iota_{{\mathrm{1}}} $ and $  \emptyset    \vdash   \ottnt{V_{{\mathrm{2}}}}  :  \iota_{{\mathrm{2}}} $
      and Lemma~\ref{lem:canonical-S1}, we have
      $\ottnt{V_{{\mathrm{1}}}}  \ottsym{=}  \ottnt{a_{{\mathrm{1}}}}$ and $\ottnt{V_{{\mathrm{2}}}}  \ottsym{=}  \ottnt{a_{{\mathrm{2}}}}$ for some $\ottnt{a_{{\mathrm{1}}}},\ottnt{a_{{\mathrm{2}}}}$.
      Then, \rnp{R-Op} finishes.
    \end{description}

  \item[\Case{\rnp{T-App}}]
    We are given
    \[
      \ottnt{M}  \ottsym{=}   \ottnt{N_{{\mathrm{1}}}} \, ( \ottnt{N_{{\mathrm{2}}}} , \ottnt{N_{{\mathrm{3}}}} )  \hgap
        \emptyset    \vdash   \ottnt{N_{{\mathrm{1}}}}  :  \ottnt{A_{{\mathrm{2}}}}  \Rightarrow  \ottnt{B}  \hgap
        \emptyset    \vdash   \ottnt{N_{{\mathrm{2}}}}  :  \ottnt{A_{{\mathrm{2}}}}  \hgap
        \emptyset    \vdash   \ottnt{N_{{\mathrm{3}}}}  :  \ottnt{B}  \rightsquigarrow  \ottnt{A} 
    \]
    for some $\ottnt{N_{{\mathrm{1}}}},\ottnt{N_{{\mathrm{2}}}},\ottnt{N_{{\mathrm{3}}}},\ottnt{A_{{\mathrm{2}}}},\ottnt{B}$.
    We have the IHs for three typing derivations.
    We proceed by case analysis on $\ottnt{N_{{\mathrm{1}}}},\ottnt{N_{{\mathrm{2}}}},\ottnt{N_{{\mathrm{3}}}}$.
    \begin{description}
    \item[\SubCase{$\ottnt{N_{{\mathrm{1}}}}  \ottsym{=}  \ottnt{V_{{\mathrm{1}}}}$ and $\ottnt{N_{{\mathrm{2}}}}  \ottsym{=}  \ottnt{V_{{\mathrm{2}}}}$ and $\ottnt{N_{{\mathrm{3}}}}  \ottsym{=}  \ottnt{V_{{\mathrm{3}}}}$}]
      By $  \emptyset    \vdash   \ottnt{V_{{\mathrm{1}}}}  :  \ottnt{A_{{\mathrm{2}}}}  \Rightarrow  \ottnt{A} $ and Lemma~\ref{lem:canonical-S1}, we have either
      \[
        \ottnt{V_{{\mathrm{1}}}}  \ottsym{=}   \lambda  ( \ottmv{x} , \kappa ).\,  \ottnt{L}  \hgap
        \ottnt{V_{{\mathrm{1}}}}  \ottsym{=}  \ottnt{U}  \langle  \ottnt{V_{{\mathrm{11}}}}  \Rightarrow  \ottnt{V_{{\mathrm{12}}}}  \rangle.
      \]
      Then, \rnp{R-Beta} or \rnp{R-Wrap} finishes the case.
    \item[\Otherwise] Similar.
    \end{description}

  \item[\Case{\rnp{T-Let}}]
    We are given
    \[
      \ottnt{M}  \ottsym{=}   \ottkw{let} \,  \ottmv{x} = \ottnt{N_{{\mathrm{1}}}} \, \ottkw{in}\,  \ottnt{N_{{\mathrm{2}}}}  \hgap
        \emptyset    \vdash   \ottnt{N_{{\mathrm{1}}}}  :  \ottnt{A_{{\mathrm{1}}}}  \hgap
       \ottmv{x}  \ottsym{:}  \ottnt{A_{{\mathrm{1}}}}   \vdash   \ottnt{N_{{\mathrm{2}}}}  :  \ottnt{A} 
    \]
    for some $\ottnt{N_{{\mathrm{1}}}},\ottnt{N_{{\mathrm{2}}}},\ottmv{x},\ottnt{A_{{\mathrm{1}}}}$.
    We use the IH with $  \emptyset    \vdash   \ottnt{N_{{\mathrm{1}}}}  :  \ottnt{A_{{\mathrm{1}}}} $.
    We proceed by case analysis on $\ottnt{N_{{\mathrm{1}}}}$.
    \begin{description}
    \item[\SubCase{$\ottnt{N_{{\mathrm{1}}}}  \ottsym{=}  \ottnt{V_{{\mathrm{1}}}}$}] Take $\ottnt{M'}  \ottsym{=}  \ottnt{N_{{\mathrm{2}}}}  [  \ottmv{x}  \ottsym{:=}  \ottnt{V_{{\mathrm{1}}}}  ]$ by \rnp{R-Let}.
    \item[\SubCase{$ \ottnt{N_{{\mathrm{1}}}}    \mathbin{ \evalto }    \ottnt{N'_{{\mathrm{1}}}} $}] Similar.
    \item[\SubCase{$\ottnt{N_{{\mathrm{1}}}}  \ottsym{=}  \ottkw{blame} \, \ottnt{p}$}] Similar.
    \end{description}

  \item[\Case{\rnp{T-Cmp}}]
    We are given
    \[
      \ottnt{M}  \ottsym{=}  \ottnt{N_{{\mathrm{1}}}}  \fatsemi  \ottnt{N_{{\mathrm{2}}}} \hgap
        \emptyset    \vdash   \ottnt{N_{{\mathrm{1}}}}  :  \ottnt{A_{{\mathrm{1}}}}  \rightsquigarrow  \ottnt{B}  \hgap
        \emptyset    \vdash   \ottnt{N_{{\mathrm{2}}}}  :  \ottnt{B}  \rightsquigarrow  \ottnt{A_{{\mathrm{2}}}}  \hgap
      \ottnt{A}  \ottsym{=}  \ottnt{A_{{\mathrm{1}}}}  \rightsquigarrow  \ottnt{A_{{\mathrm{2}}}}
    \]
    for some $\ottnt{N_{{\mathrm{1}}}},\ottnt{N_{{\mathrm{2}}}},\ottnt{A_{{\mathrm{1}}}},\ottnt{A_{{\mathrm{2}}}},\ottnt{B}$.
    We have the IHs for $  \emptyset    \vdash   \ottnt{N_{{\mathrm{1}}}}  :  \ottnt{A_{{\mathrm{1}}}}  \rightsquigarrow  \ottnt{B} $ and $  \emptyset    \vdash   \ottnt{N_{{\mathrm{2}}}}  :  \ottnt{B}  \rightsquigarrow  \ottnt{A_{{\mathrm{2}}}} $.
    We proceed by case analysis on $\ottnt{N_{{\mathrm{1}}}},\ottnt{N_{{\mathrm{2}}}}$.
    \begin{description}
    \item[\SubCase{$\ottnt{N_{{\mathrm{1}}}}  \ottsym{=}  \ottnt{V_{{\mathrm{1}}}}$ and $\ottnt{N_{{\mathrm{2}}}}  \ottsym{=}  \ottnt{V_{{\mathrm{2}}}}$}]
      By $  \emptyset    \vdash   \ottnt{V_{{\mathrm{1}}}}  :  \ottnt{A_{{\mathrm{1}}}}  \rightsquigarrow  \ottnt{B} $ and $  \emptyset    \vdash   \ottnt{V_{{\mathrm{2}}}}  :  \ottnt{B}  \rightsquigarrow  \ottnt{A_{{\mathrm{2}}}} $
      and Lemma~\ref{lem:canonical-S1}, we have
      $\ottnt{V_{{\mathrm{1}}}}  \ottsym{=}  \ottnt{s_{{\mathrm{1}}}}$ and $\ottnt{V_{{\mathrm{2}}}}  \ottsym{=}  \ottnt{s_{{\mathrm{2}}}}$ for some $\ottnt{s_{{\mathrm{1}}}},\ottnt{s_{{\mathrm{2}}}}$.
      Take $\ottnt{M'}  \ottsym{=}  \ottnt{s}  \fatsemi  \ottnt{t}$ by \rnp{R-Op}.
      (Here, $\ottnt{s}  \fatsemi  \ottnt{t}$ is defined by Lemma~\ref{lem:sc-cmp-welldef}.)
    \item[\Otherwise] Similar.
    \end{description}

  \item[\Case{\rnp{T-Crc}}]
    We are given
    \[
      \ottnt{M}  \ottsym{=}  \ottnt{N_{{\mathrm{1}}}}  \langle  \ottnt{N_{{\mathrm{2}}}}  \rangle \hgap
        \emptyset    \vdash   \ottnt{N_{{\mathrm{1}}}}  :  \ottnt{A_{{\mathrm{1}}}}  \hgap
        \emptyset    \vdash   \ottnt{N_{{\mathrm{2}}}}  :  \ottnt{A_{{\mathrm{1}}}}  \rightsquigarrow  \ottnt{A} 
    \]
    for some $\ottnt{N_{{\mathrm{1}}}},\ottnt{N_{{\mathrm{2}}}},\ottnt{A_{{\mathrm{1}}}}$.
    We have the IHs for $  \emptyset    \vdash   \ottnt{N_{{\mathrm{1}}}}  :  \ottnt{A_{{\mathrm{1}}}} $ and $  \emptyset    \vdash   \ottnt{N_{{\mathrm{2}}}}  :  \ottnt{A_{{\mathrm{1}}}}  \rightsquigarrow  \ottnt{A} $.
    We proceed by case analysis on $\ottnt{N_{{\mathrm{1}}}},\ottnt{N_{{\mathrm{2}}}}$.
    \begin{description}
    \item[\SubCase{$\ottnt{N_{{\mathrm{1}}}}  \ottsym{=}  \ottnt{V_{{\mathrm{1}}}}$ and $\ottnt{N_{{\mathrm{2}}}}  \ottsym{=}  \ottnt{V_{{\mathrm{2}}}}$}]
      By $  \emptyset    \vdash   \ottnt{N_{{\mathrm{2}}}}  :  \ottnt{A_{{\mathrm{1}}}}  \rightsquigarrow  \ottnt{A} $ and Lemma~\ref{lem:canonical-S1},
      we have $\ottnt{N_{{\mathrm{2}}}}  \ottsym{=}  \ottnt{t}$ for some $\ottnt{t}$.
      We proceed by case analysis on closed value $\ottnt{V_{{\mathrm{1}}}}$.
      \begin{description}
      \item[\SubCase{$\ottnt{N_{{\mathrm{1}}}}  \ottsym{=}  \ottnt{U}$}]
        By $  \emptyset    \vdash   \ottnt{U}  :  \ottnt{A_{{\mathrm{1}}}} $, we have $\ottnt{A_{{\mathrm{1}}}}  \ne  \mathord{\star}$.
        As the source type of $\ottnt{t}$ is nondynamic,
        we have either $\ottnt{t}  \ottsym{=}  \ottkw{id}, \bot^{ \ottnt{G}   \ottnt{p}   \ottnt{H} } ,\ottnt{d}$.
        Then, \rnp{R-Id} or \rnp{R-Fail} or \rnp{R-Crc} finishes the case.
        (Note: it might be the case that $\ottnt{U}  \ottsym{=}  \ottnt{s}$ for some $\ottnt{s}$;
        e.g., $  \ottkw{id} _{ \iota }   \langle   \ottkw{id} _{ \iota  \rightsquigarrow  \iota }   \rangle    \mathbin{ \evalto }     \ottkw{id} _{ \iota }  $.)

      \item[\SubCase{$\ottnt{N_{{\mathrm{1}}}}  \ottsym{=}  \ottnt{U}  \langle\!\langle  \ottnt{d}  \rangle\!\rangle$}]
        Take $\ottnt{M'}  \ottsym{=}  \ottnt{U}  \langle  \ottnt{d}  \mathbin{;\!;}  \ottnt{t}  \rangle$ by \rnp{R-MergeV}.
      \end{description}
    \item[\Otherwise] Similar.
    \end{description}

  \item[\Case{\rnp{T-CrcV}}] Immediate. ($\ottnt{M}  \ottsym{=}  \ottnt{U}  \langle\!\langle  \ottnt{V}  \rangle\!\rangle$ is a value.)
  \item[\Case{\rnp{T-Crcn}}] Immediate. ($\ottnt{M}  \ottsym{=}  \ottnt{s}$ is a value.)
  \item[\Case{\rnp{T-Blame}}] Immediate. ($\ottnt{M}  \ottsym{=}  \ottkw{blame} \, \ottnt{p}$) \qedhere
  \end{description}
\end{proof}

\begin{lemma}[Preservation of Types under Substitution]\label{lem:preservation-subst-S1}\leavevmode
  If $ \Gamma  \ottsym{,}  \ottmv{x}  \ottsym{:}  \ottnt{A}    \vdash_{\mathsf{S_1} }    \ottnt{M}  :  \ottnt{B} $ and $ \Gamma    \vdash_{\mathsf{S_1} }    \ottnt{V}  :  \ottnt{A} $, then
    $ \Gamma    \vdash_{\mathsf{S_1} }    \ottnt{M}  [  \ottmv{x}  \ottsym{:=}  \ottnt{V}  ]  :  \ottnt{B} $.
\end{lemma}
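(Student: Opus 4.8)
The plan is to proceed by induction on the derivation of $\Gamma, x : A \vdash_{\mathsf{S_1}} M : B$ with a case analysis on the last typing rule, mirroring the corresponding substitution lemma for \lamS (Lemma~\ref{lem:preservation-subst-S}). Before starting the induction I would establish two routine structural properties of $\vdash_{\mathsf{S_1}}$: \emph{weakening} (if $\Gamma \vdash_{\mathsf{S_1}} N : C$ and $\Gamma'$ extends $\Gamma$ with bindings whose variables are fresh, then $\Gamma' \vdash_{\mathsf{S_1}} N : C$) and \emph{exchange} (permuting bindings for distinct variables in the environment preserves typing). Both are proved by straightforward induction on typing derivations and are needed only for the abstraction and let cases.

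For the non-binding rules --- \rnp{T-Const}, \rnp{T-Op}, \rnp{T-App}, \rnp{T-Cmp}, \rnp{T-Crc}, and \rnp{T-Crcn} --- capture-avoiding substitution commutes with the term constructor, so I would simply apply the induction hypothesis to each premise and reassemble the derivation with the same rule. For \rnp{T-Var} there are two subcases: if $M = x$, the premise forces $B = A$ and $M[x := V] = V$, so the goal is exactly the second hypothesis $\Gamma \vdash_{\mathsf{S_1}} V : A$; if $M = y$ with $y \neq x$, then $(y : B) \in \Gamma, x:A$ together with $y \neq x$ gives $(y : B) \in \Gamma$, and $M[x:=V] = y$, so \rnp{T-Var} applies directly. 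The rules \rnp{T-CrcV} and \rnp{T-Blame} demand the empty type environment $\emptyset$, which cannot coincide with the nonempty environment $\Gamma, x:A$, so these cases are vacuous.

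The binding rules \rnp{T-Abs} and \rnp{T-Let} are the only mildly delicate cases. For \rnp{T-Abs}, $M = \lambda(y,\kappa).\,M_1$ where, using $\alpha$-equivalence, $y$ and $\kappa$ are chosen fresh for $x$ and for the free variables of $V$; moreover $B = A_1 \Rightarrow A_2$ and $\Gamma, x:A, y:A_1, \kappa:A_2 \rightsquigarrow X \vdash_{\mathsf{S_1}} M_1 : X$ with $X$ not occurring in $\Gamma, A, A_1, A_2$, so $X$ may also be chosen fresh for $V$. By weakening, $\Gamma, y:A_1, \kappa:A_2 \rightsquigarrow X \vdash_{\mathsf{S_1}} V : A$, and by exchange $\Gamma, y:A_1, \kappa:A_2 \rightsquigarrow X, x:A \vdash_{\mathsf{S_1}} M_1 : X$; the induction hypothesis then yields $\Gamma, y:A_1, \kappa:A_2 \rightsquigarrow X \vdash_{\mathsf{S_1}} M_1[x:=V] : X$. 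Since the side condition that $X$ is fresh for $\Gamma, A_1, A_2$ still holds, \rnp{T-Abs} gives $\Gamma \vdash_{\mathsf{S_1}} \lambda(y,\kappa).\,(M_1[x:=V]) : A_1 \Rightarrow A_2$, which is the goal because $(\lambda(y,\kappa).\,M_1)[x:=V] = \lambda(y,\kappa).\,(M_1[x:=V])$. The case \rnp{T-Let} is analogous but simpler, involving a single bound variable.

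The one point I would write out with care --- and the only place the argument deviates from a textbook substitution lemma --- is the bookkeeping for the fresh type variable $X$ in the \rnp{T-Abs} case: one must ensure that the freshness side condition of \rnp{T-Abs} can be re-established after the substitution, which holds because $X$ can be picked fresh for $V$ up front and substituting $V$ introduces no free type variables beyond those already in $V$. Everything else is mechanical.
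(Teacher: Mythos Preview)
Your proposal is correct and matches the paper's approach: the paper's proof is simply ``By straightforward induction on the derivation of $\Gamma, x{:}A \vdash_{\mathsf{S_1}} M : B$ with case analysis on the rule applied last,'' and what you have written is a faithful elaboration of exactly that. Your attention to the freshness side condition on $X$ in the \rnp{T-Abs} case is the one non-routine point, and you handle it correctly; in fact the side condition only constrains $\Gamma$, $A_1$, and $A_2$ (not the body), so once $X$ was fresh for $\Gamma, x{:}A, A_1, A_2$ it is automatically fresh for $\Gamma, A_1, A_2$, and your remark about choosing $X$ fresh for $V$ is needed only to make the weakening step go through cleanly.
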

\begin{proof}
  By straightforward induction on the derivation of $ \Gamma  \ottsym{,}  \ottmv{x}  \ottsym{:}  \ottnt{A}    \vdash_{\mathsf{S_1} }    \ottnt{M}  :  \ottnt{B} $
  with case analysis on the rule applied last.
\end{proof}

\begin{lemma}[Preservation for Reduction]\label{lem:preservation-red-S1}
  If $  \emptyset     \vdash_{\mathsf{S_1} }    \ottnt{M}  :  \ottnt{A} $ and $ \ottnt{M}    \reduces_{\mathsf{S_1} }    \ottnt{N} $, then $  \emptyset     \vdash_{\mathsf{S_1} }    \ottnt{N}  :  \ottnt{A} $.
\end{lemma}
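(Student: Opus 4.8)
The plan is to proceed by case analysis on the reduction rule fired in $ \ottnt{M}    \reduces_{\mathsf{S_1} }    \ottnt{N} $. Every case follows the same recipe: invert the typing derivation of $\ottnt{M}$ down to the typings of the relevant subterms (using the standard inversion lemmas for the typing and well-formed-coercion judgments, which we state without proof), and then reassemble a derivation of $  \emptyset     \vdash_{\mathsf{S_1} }    \ottnt{N}  :  \ottnt{A} $. The substitution lemma (Lemma~\ref{lem:preservation-subst-S1}) handles the redexes that contract by substitution, and well-typedness of coercion composition (Lemma~\ref{lem:cmp-ty}) handles the redexes that fire $ \fatsemi $.

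The routine cases come first. For \rnp{R-Op}, inverting \rnp{T-Op} gives $ \metafun{ty} ( \ottnt{op} )  = \iota_{{\mathrm{1}}}  \rightarrow  \iota_{{\mathrm{2}}}  \rightarrow  \iota$ and $\ottnt{A} = \iota$, and the assumption on $ \delta $ yields $ \metafun{ty} ( \delta \, \ottsym{(}  \ottnt{op}  \ottsym{,}  \ottnt{a}  \ottsym{,}  \ottnt{b}  \ottsym{)} )  = \iota$, so \rnp{T-Const} retypes the result. For \rnp{R-Fail}, \rnp{T-Blame} types $\ottkw{blame} \, \ottnt{p}$ at any $\ottnt{A}$. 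For \rnp{R-Id}, inverting \rnp{T-Crc}, \rnp{T-Crcn}, and \rnp{CT-Id} forces the source type of $ \ottkw{id} _{ \ottnt{A} } $ to equal $\ottnt{A}$, so the stripped $\ottnt{U}$ already has type $\ottnt{A}$. For \rnp{R-Crc}, invert \rnp{T-Crc} and re-derive $  \emptyset     \vdash_{\mathsf{S_1} }    \ottnt{U}  \langle\!\langle  \ottnt{d}  \rangle\!\rangle  :  \ottnt{A} $ by \rnp{T-CrcV}; for \rnp{R-MergeV}, invert \rnp{T-Crc} and \rnp{T-CrcV}, build $\ottnt{d}  \mathbin{;\!;}  \ottnt{t}$ by \rnp{T-Cmp}, and retype by \rnp{T-Crc}. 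For \rnp{R-Cmp}, inverting \rnp{T-Cmp} gives $\ottnt{s}  \ottsym{:}  \ottnt{A_{{\mathrm{1}}}}  \rightsquigarrow  \ottnt{B}$, $\ottnt{t}  \ottsym{:}  \ottnt{B}  \rightsquigarrow  \ottnt{A_{{\mathrm{2}}}}$, and $\ottnt{A} = \ottnt{A_{{\mathrm{1}}}}  \rightsquigarrow  \ottnt{A_{{\mathrm{2}}}}$; Lemma~\ref{lem:cmp-ty} gives $\ottsym{(}  \ottnt{s}  \fatsemi  \ottnt{t}  \ottsym{)}  \ottsym{:}  \ottnt{A_{{\mathrm{1}}}}  \rightsquigarrow  \ottnt{A_{{\mathrm{2}}}}$, and \rnp{T-Crcn} concludes. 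For \rnp{R-Let}, invert \rnp{T-Let} and apply Lemma~\ref{lem:preservation-subst-S1}.

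The two delicate cases are \rnp{R-Beta} and \rnp{R-Wrap}, since they pass through the polymorphic abstraction/application rules. For \rnp{R-Beta}, inverting \rnp{T-App} yields $  \emptyset    \vdash    \lambda  ( \ottmv{x} , \kappa ).\,  \ottnt{M}   :  \ottnt{A_{{\mathrm{2}}}}  \Rightarrow  \ottnt{B} $, $  \emptyset    \vdash   \ottnt{V}  :  \ottnt{A_{{\mathrm{2}}}} $, $  \emptyset    \vdash   \ottnt{W}  :  \ottnt{B}  \rightsquigarrow  \ottnt{A} $, and inverting \rnp{T-Abs} yields $ \ottmv{x}  \ottsym{:}  \ottnt{A_{{\mathrm{2}}}}  \ottsym{,}  \kappa  \ottsym{:}  \ottnt{B}  \rightsquigarrow  \ottmv{X}   \vdash   \ottnt{M}  :  \ottmv{X} $ with $\ottmv{X}$ fresh for $\ottnt{A_{{\mathrm{2}}}},\ottnt{B}$. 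The point requiring care is the instantiation of the target-type variable $\ottmv{X}$: since $\ottmv{X}$ occurs in this derivation only via the type of $\kappa$ and the conclusion, replacing $\ottmv{X}$ by $\ottnt{A}$ throughout gives $ \ottmv{x}  \ottsym{:}  \ottnt{A_{{\mathrm{2}}}}  \ottsym{,}  \kappa  \ottsym{:}  \ottnt{B}  \rightsquigarrow  \ottnt{A}   \vdash   \ottnt{M}  :  \ottnt{A} $ (this uses a routine type-variable substitution property of the type system, which one can either state once or absorb into a generalized substitution lemma). Two uses of Lemma~\ref{lem:preservation-subst-S1} — or its simultaneous variant — then substitute $\ottnt{W}$ for $\kappa$ and $\ottnt{V}$ for $\ottmv{x}$, yielding $  \emptyset     \vdash_{\mathsf{S_1} }    \ottnt{M}  [  \ottmv{x}  \ottsym{:=}  \ottnt{V}  \ottsym{,}  \kappa  \ottsym{:=}  \ottnt{W}  ]  :  \ottnt{A} $. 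For \rnp{R-Wrap}, inverting \rnp{T-App}, then \rnp{T-CrcV}, then the $ \Rightarrow $-version of \rnp{CT-Fun} gives $  \emptyset    \vdash   \ottnt{U}  :  \ottnt{A_{{\mathrm{1}}}}  \Rightarrow  \ottnt{B_{{\mathrm{1}}}} $ with $\ottnt{s}  \ottsym{:}  \ottnt{A_{{\mathrm{2}}}}  \rightsquigarrow  \ottnt{A_{{\mathrm{1}}}}$ and $\ottnt{t}  \ottsym{:}  \ottnt{B_{{\mathrm{1}}}}  \rightsquigarrow  \ottnt{B}$, together with $  \emptyset    \vdash   \ottnt{V}  :  \ottnt{A_{{\mathrm{2}}}} $ and $  \emptyset    \vdash   \ottnt{W}  :  \ottnt{B}  \rightsquigarrow  \ottnt{A} $ (keeping the contravariance of the argument coercion straight). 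Then $\ottnt{t}  \mathbin{;\!;}  \ottnt{W}$ is typed $\ottnt{B_{{\mathrm{1}}}}  \rightsquigarrow  \ottnt{A}$ by \rnp{T-Crcn} and \rnp{T-Cmp}, $\ottnt{V}  \langle  \ottnt{s}  \rangle$ is typed $\ottnt{A_{{\mathrm{1}}}}$ by \rnp{T-Crcn} and \rnp{T-Crc}, and under $\kappa  \ottsym{:}  \ottnt{B_{{\mathrm{1}}}}  \rightsquigarrow  \ottnt{A}$ the body $ \ottnt{U}  \, ( \ottnt{V}  \langle  \ottnt{s}  \rangle , \kappa ) $ is typed $\ottnt{A}$ by \rnp{T-App}; finally \rnp{T-Let} assembles $ \ottkw{let} \,  \kappa = \ottnt{t}  \mathbin{;\!;}  \ottnt{W} \, \ottkw{in}\,  \ottnt{U}  \, ( \ottnt{V}  \langle  \ottnt{s}  \rangle , \kappa ) $ at type $\ottnt{A}$.

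I expect the main obstacle to be the bookkeeping in \rnp{R-Wrap} — the nested inversions through \rnp{T-CrcV} and \rnp{CT-Fun}, and tracking which coercion is contravariant — and making the fresh-type-variable instantiation in \rnp{R-Beta} precise; neither is conceptually difficult, but both lean on the (omitted) inversion lemmas and on an explicit account of how the polymorphic target type is specialized at a call site.
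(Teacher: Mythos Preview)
Your proposal is correct and follows essentially the same approach as the paper: case analysis on the reduction rule, inversion of the typing derivation, and reassembly using the substitution lemma (Lemma~\ref{lem:preservation-subst-S1}) for \rnp{R-Beta} and \rnp{R-Let}, and the composition-typing lemma (Lemma~\ref{lem:cmp-ty}) for \rnp{R-Cmp}. Your handling of the type-variable instantiation in \rnp{R-Beta} and the nested inversions in \rnp{R-Wrap} matches the paper's proof exactly.
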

\begin{proof}
  By case analysis on the reduction rule applied to $ \ottnt{M}    \reduces_{\mathsf{S_1} }    \ottnt{N} $.
  \begin{description} 
    \item[\Case{\rnp{R-Op}}]
      We are given
      \[
        \ottnt{M}  \ottsym{=}  \ottnt{op}  \ottsym{(}  \ottnt{a_{{\mathrm{1}}}}  \ottsym{,}  \ottnt{a_{{\mathrm{2}}}}  \ottsym{)} \hgap
        \ottnt{N}  \ottsym{=}  \delta \, \ottsym{(}  \ottnt{op}  \ottsym{,}  \ottnt{a_{{\mathrm{1}}}}  \ottsym{,}  \ottnt{a_{{\mathrm{2}}}}  \ottsym{)}
      \]
      for some $\ottnt{a_{{\mathrm{1}}}},\ottnt{a_{{\mathrm{2}}}},\ottnt{a}$.
      By inversion on $  \emptyset    \vdash   \ottnt{op}  \ottsym{(}  \ottnt{a_{{\mathrm{1}}}}  \ottsym{,}  \ottnt{a_{{\mathrm{2}}}}  \ottsym{)}  :  \ottnt{A} $,
      \[
        \ottnt{A}  \ottsym{=}  \iota \hgap
         \metafun{ty} ( \ottnt{op} )   \ottsym{=}  \iota_{{\mathrm{1}}}  \rightarrow  \iota_{{\mathrm{2}}}  \rightarrow  \iota \hgap
          \emptyset    \vdash   \ottnt{a_{{\mathrm{1}}}}  :  \iota_{{\mathrm{1}}}  \hgap
          \emptyset    \vdash   \ottnt{a_{{\mathrm{2}}}}  :  \iota_{{\mathrm{2}}} 
      \]
      for some $\iota_{{\mathrm{1}}},\iota_{{\mathrm{2}}},\iota$.
      Assumptions on $ \delta $ (called $\delta$-typability) ensure that
      \[
        \delta \, \ottsym{(}  \ottnt{op}  \ottsym{,}  \ottnt{a_{{\mathrm{1}}}}  \ottsym{,}  \ottnt{a_{{\mathrm{2}}}}  \ottsym{)}  \ottsym{=}  \ottnt{a} \hgap
         \metafun{ty} ( \ottnt{a} )   \ottsym{=}  \iota
      \]
      for some constant $\ottnt{a}$.
      By \rnp{T-Const}, we have $  \emptyset    \vdash   \ottnt{a}  :  \iota $.

    \item[\Case{\rnp{R-Beta}}]
      We are given
      \[
        \ottnt{M}  \ottsym{=}   \ottsym{(}   \lambda  ( \ottmv{x} , \kappa ).\,  \ottnt{M_{{\mathrm{1}}}}   \ottsym{)} \, ( \ottnt{V} , \ottnt{W} )  \hgap
        \ottnt{N}  \ottsym{=}  \ottnt{M_{{\mathrm{1}}}}  [  \ottmv{x}  \ottsym{:=}  \ottnt{V}  \ottsym{,}  \kappa  \ottsym{:=}  \ottnt{W}  ]
      \]
      for some $\ottmv{x},\kappa,\ottnt{M_{{\mathrm{1}}}},\ottnt{V},\ottnt{W}$.
      By inversion on $  \emptyset    \vdash    \ottsym{(}   \lambda  ( \ottmv{x} , \kappa ).\,  \ottnt{M_{{\mathrm{1}}}}   \ottsym{)} \, ( \ottnt{V} , \ottnt{W} )   :  \ottnt{A} $,
      \[
          \emptyset    \vdash    \lambda  ( \ottmv{x} , \kappa ).\,  \ottnt{M_{{\mathrm{1}}}}   :  \ottnt{A_{{\mathrm{1}}}}  \Rightarrow  \ottnt{A_{{\mathrm{2}}}}  \hgap
          \emptyset    \vdash   \ottnt{V}  :  \ottnt{A_{{\mathrm{1}}}}  \hgap
          \emptyset    \vdash   \ottnt{W}  :  \ottnt{A_{{\mathrm{2}}}}  \rightsquigarrow  \ottnt{A} 
      \]
      for some $\ottnt{A_{{\mathrm{1}}}},\ottnt{A_{{\mathrm{2}}}}$.
      By inversion on the left judgment,
      \[
         \ottmv{x}  \ottsym{:}  \ottnt{A_{{\mathrm{1}}}}  \ottsym{,}  \kappa  \ottsym{:}  \ottnt{A_{{\mathrm{2}}}}  \rightsquigarrow  \ottmv{X}   \vdash   \ottnt{M_{{\mathrm{1}}}}  :  \ottmv{X} 
      \]
      for some $\ottmv{X}$. Thus, we have  
      \[
         \ottmv{x}  \ottsym{:}  \ottnt{A_{{\mathrm{1}}}}  \ottsym{,}  \kappa  \ottsym{:}  \ottnt{A_{{\mathrm{2}}}}  \rightsquigarrow  \ottnt{A}   \vdash   \ottnt{M_{{\mathrm{1}}}}  :  \ottnt{A} 
      \]
      (by type substitution of $\ottnt{A}$ for $\ottmv{X}$).
      By Lemma~\ref{lem:preservation-subst-S1} (twice),
      $  \emptyset    \vdash   \ottnt{M_{{\mathrm{1}}}}  [  \ottmv{x}  \ottsym{:=}  \ottnt{V}  \ottsym{,}  \kappa  \ottsym{:=}  \ottnt{W}  ]  :  \ottnt{A} $ follows.
    \item[\Case{\rnp{R-Wrap}}]
      We are given
      \[
        \ottnt{M}  \ottsym{=}   \ottsym{(}  \ottnt{U}  \langle\!\langle  \ottnt{s}  \Rightarrow  \ottnt{t}  \rangle\!\rangle  \ottsym{)} \, ( \ottnt{V} , \ottnt{W} )  \hgap
        \ottnt{N}  \ottsym{=}    \ottkw{let} \,  \kappa = \ottnt{t}  \mathbin{;\!;}  \ottnt{W} \, \ottkw{in}\,  \ottnt{U}  \, ( \ottnt{V}  \langle  \ottnt{s}  \rangle , \kappa ) 
      \]
      for some $\ottnt{U},\ottnt{s},\ottnt{t},\ottnt{V},\ottnt{W},\kappa$.
      By inversion on $  \emptyset    \vdash    \ottsym{(}  \ottnt{U}  \langle\!\langle  \ottnt{s}  \rightarrow  \ottnt{t}  \rangle\!\rangle  \ottsym{)} \, ( \ottnt{V} , \ottnt{W} )   :  \ottnt{A} $,
      \[
          \emptyset    \vdash   \ottnt{U}  \langle\!\langle  \ottnt{s}  \Rightarrow  \ottnt{t}  \rangle\!\rangle  :  \ottnt{A_{{\mathrm{1}}}}  \Rightarrow  \ottnt{A_{{\mathrm{2}}}}  \hgap
          \emptyset    \vdash   \ottnt{V}  :  \ottnt{A_{{\mathrm{1}}}}  \hgap
          \emptyset    \vdash   \ottnt{W}  :  \ottnt{A_{{\mathrm{2}}}}  \rightsquigarrow  \ottnt{A} .
      \]
      for some $\ottnt{A_{{\mathrm{1}}}},\ottnt{A_{{\mathrm{2}}}}$.
      By inversion on the left judgment,  
      \[
          \emptyset    \vdash   \ottnt{U}  :  \ottnt{A'}  \hgap
          \emptyset    \vdash   \ottnt{s}  \Rightarrow  \ottnt{t}  :  \ottnt{A'}  \rightsquigarrow  \ottsym{(}  \ottnt{A_{{\mathrm{1}}}}  \Rightarrow  \ottnt{A_{{\mathrm{2}}}}  \ottsym{)} .
      \]
      for some $\ottnt{A'}$.
      By inversion on the right judgment,  
      \[
        \ottnt{A'}  \ottsym{=}  \ottnt{A'_{{\mathrm{1}}}}  \Rightarrow  \ottnt{A'_{{\mathrm{2}}}} \hgap
          \emptyset    \vdash   \ottnt{s}  :  \ottnt{A_{{\mathrm{1}}}}  \rightsquigarrow  \ottnt{A'_{{\mathrm{1}}}}  \hgap
          \emptyset    \vdash   \ottnt{t}  :  \ottnt{A'_{{\mathrm{2}}}}  \rightsquigarrow  \ottnt{A_{{\mathrm{2}}}} 
      \]
      for some $\ottnt{A'_{{\mathrm{1}}}},\ottnt{A'_{{\mathrm{2}}}}$.
      By \rnp{T-Crc} and \rnp{T-Cmp}, we have
      \[
          \emptyset    \vdash   \ottnt{V}  \langle  \ottnt{s}  \rangle  :  \ottnt{A'_{{\mathrm{1}}}}  \hgap
          \emptyset    \vdash   \ottnt{t}  \mathbin{;\!;}  \ottnt{W}  :  \ottnt{A'_{{\mathrm{2}}}}  \rightsquigarrow  \ottnt{A} .
      \]
      Then, \rnp{T-Let} and \rnp{T-App} finish this case.

    \item[\Case{\rnp{R-Let}}]
      We are given
      \[
        \ottnt{M}  \ottsym{=}   \ottkw{let} \,  \ottmv{x} = \ottnt{V} \, \ottkw{in}\,  \ottnt{M_{{\mathrm{1}}}}  \hgap
        \ottnt{N}  \ottsym{=}  \ottnt{M_{{\mathrm{1}}}}  [  \ottmv{x}  \ottsym{:=}  \ottnt{V}  ]
      \]
      for some $\ottmv{x},\ottnt{V},\ottnt{M_{{\mathrm{1}}}}$.
      By inversion on $  \emptyset    \vdash    \ottkw{let} \,  \ottmv{x} = \ottnt{V} \, \ottkw{in}\,  \ottnt{M_{{\mathrm{1}}}}   :  \ottnt{A} $,
      \[
          \emptyset    \vdash   \ottnt{V}  :  \ottnt{A_{{\mathrm{1}}}}  \hgap
         \ottmv{x}  \ottsym{:}  \ottnt{A_{{\mathrm{1}}}}   \vdash   \ottnt{M_{{\mathrm{1}}}}  :  \ottnt{A} 
      \]
      for some $\ottnt{A_{{\mathrm{1}}}}$.
      By Lemma~\ref{lem:preservation-subst-S1},
      $  \emptyset    \vdash   \ottnt{M_{{\mathrm{1}}}}  [  \ottmv{x}  \ottsym{:=}  \ottnt{V}  ]  :  \ottnt{A} $ follows.

    \item[\Case{\rnp{R-Cmp}}]
      We are given
      \[
        \ottnt{M}  \ottsym{=}  \ottnt{s}  \mathbin{;\!;}  \ottnt{t} \hgap
        \ottnt{N}  \ottsym{=}  \ottnt{s}  \fatsemi  \ottnt{t}
      \]
      for some $\ottnt{s},\ottnt{t}$.
      By inversion on $  \emptyset    \vdash   \ottnt{s}  \mathbin{;\!;}  \ottnt{t}  :  \ottnt{A} $,
      \[
          \emptyset    \vdash   \ottnt{s}  :  \ottnt{A_{{\mathrm{1}}}}  \rightsquigarrow  \ottnt{B}  \hgap
          \emptyset    \vdash   \ottnt{t}  :  \ottnt{B}  \rightsquigarrow  \ottnt{A_{{\mathrm{2}}}}  \hgap
        \ottnt{A}  \ottsym{=}  \ottnt{A_{{\mathrm{1}}}}  \rightsquigarrow  \ottnt{A_{{\mathrm{2}}}}
      \]
      for some $\ottnt{A_{{\mathrm{1}}}},\ottnt{A_{{\mathrm{2}}}},\ottnt{B}$.
      By Lemma~\ref{lem:cmp-ty}, we have $  \emptyset    \vdash   \ottsym{(}  \ottnt{s}  \fatsemi  \ottnt{t}  \ottsym{)}  :  \ottnt{A_{{\mathrm{1}}}}  \rightsquigarrow  \ottnt{A_{{\mathrm{2}}}} $.

    \item[\Case{\rnp{R-Id}}]
      We are given
      \[
        \ottnt{M}  \ottsym{=}  \ottnt{U}  \langle   \ottkw{id} _{ \ottnt{A} }   \rangle \hgap
        \ottnt{N}  \ottsym{=}  \ottnt{U}
      \]
      for some $\ottnt{U},\ottnt{A}$.
      By inversion on $  \emptyset    \vdash   \ottnt{U}  \langle   \ottkw{id} _{ \ottnt{A} }   \rangle  :  \ottnt{A} $,
      \[
          \emptyset    \vdash   \ottnt{U}  :  \ottnt{A'}  \hgap
          \emptyset    \vdash    \ottkw{id} _{ \ottnt{A} }   :  \ottnt{A'}  \rightsquigarrow  \ottnt{A} 
      \]
      for some $\ottnt{A'}$.
      By $\ottnt{A'}  \ottsym{=}  \ottnt{A}$, we have $  \emptyset    \vdash   \ottnt{U}  :  \ottnt{A} $.

    \item[\Case{\rnp{R-Fail}}] By \rnp{T-Blame}.
    \item[\Case{\rnp{R-Crc}}] By \rnp{T-CrcV}.  

    \item[\Case{\rnp{R-MergeV}}] 
      We are given
      \[
        \ottnt{M}  \ottsym{=}  \ottnt{U}  \langle\!\langle  \ottnt{d}  \rangle\!\rangle  \langle  \ottnt{t}  \rangle \hgap
        \ottnt{N}  \ottsym{=}  \ottnt{U}  \langle  \ottnt{d}  \mathbin{;\!;}  \ottnt{t}  \rangle
      \]
      for some $\ottnt{U},\ottnt{d},\ottnt{t}$.
      By inversion on $  \emptyset    \vdash   \ottnt{U}  \langle\!\langle  \ottnt{d}  \rangle\!\rangle  \langle  \ottnt{t}  \rangle  :  \ottnt{A} $,
      \[
          \emptyset    \vdash   \ottnt{U}  \langle\!\langle  \ottnt{d}  \rangle\!\rangle  :  \ottnt{A'}  \hgap
          \emptyset    \vdash   \ottnt{t}  :  \ottnt{A'}  \rightsquigarrow  \ottnt{A} 
      \]
      for some $\ottnt{A'}$.
      By inversion on the left judgment,
      \[
          \emptyset    \vdash   \ottnt{U}  :  \ottnt{A''}  \hgap
          \emptyset    \vdash   \ottnt{d}  :  \ottnt{A''}  \rightsquigarrow  \ottnt{A'} 
      \]
      for some $\ottnt{A''}$.
      By \rnp{T-Cmp}, $  \emptyset    \vdash   \ottsym{(}  \ottnt{d}  \mathbin{;\!;}  \ottnt{t}  \ottsym{)}  :  \ottnt{A''}  \rightsquigarrow  \ottnt{A} $.
      By \rnp{T-Crc}, $  \emptyset    \vdash   \ottnt{U}  \langle  \ottnt{d}  \mathbin{;\!;}  \ottnt{t}  \rangle  :  \ottnt{A} $. \qedhere
  \end{description}
\end{proof}

\iffull\thmPreservationSx*
\else
\begin{theorem}[Preservation]\label{thm:preservation-S1}
  If $  \emptyset     \vdash_{\mathsf{S_1} }    \ottnt{M}  :  \ottnt{A} $ and $ \ottnt{M}    \mathbin{  \evalto_{\mathsf{S_1} }  }    \ottnt{N} $, then $  \emptyset     \vdash_{\mathsf{S_1} }    \ottnt{N}  :  \ottnt{A} $.
\end{theorem}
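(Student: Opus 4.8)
The plan is to bootstrap from Lemma~\ref{lem:preservation-red-S1}, which already establishes that the bare reduction relation $ \reduces_{\mathsf{S_1} } $ preserves types, and then lift the result through evaluation contexts. Accordingly, the proof proceeds by case analysis on the evaluation rule used to derive $ \ottnt{M}    \mathbin{  \evalto_{\mathsf{S_1} }  }    \ottnt{N} $, of which there are only two: \rnp{E-Ctx} and \rnp{E-Abort}. For \rnp{E-Abort} the conclusion is immediate: there $\ottnt{N}  \ottsym{=}  \ottkw{blame} \, \ottnt{p}$, which is typable at any type by \rnp{T-Blame}.

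For \rnp{E-Ctx} we have $\ottnt{M}  \ottsym{=}  \mathcal{E}  [  \ottnt{M_{{\mathrm{1}}}}  ]$, $\ottnt{N}  \ottsym{=}  \mathcal{E}  [  \ottnt{N_{{\mathrm{1}}}}  ]$, and $ \ottnt{M_{{\mathrm{1}}}}    \mathbin{\accentset{\mathcal{X} }{\reduces} }    \ottnt{N_{{\mathrm{1}}}} $ for some $\mathcal{X}\in\{\mathsf{e},\mathsf{c}\}$. The key auxiliary fact is a \emph{context replacement lemma}: if $  \emptyset     \vdash_{\mathsf{S_1} }    \mathcal{E}  [  \ottnt{M_{{\mathrm{1}}}}  ]  :  \ottnt{A} $, then there is a type $\ottnt{B}$ with $  \emptyset     \vdash_{\mathsf{S_1} }    \ottnt{M_{{\mathrm{1}}}}  :  \ottnt{B} $ such that for every $\ottnt{N_{{\mathrm{1}}}}$ with $  \emptyset     \vdash_{\mathsf{S_1} }    \ottnt{N_{{\mathrm{1}}}}  :  \ottnt{B} $ we also have $  \emptyset     \vdash_{\mathsf{S_1} }    \mathcal{E}  [  \ottnt{N_{{\mathrm{1}}}}  ]  :  \ottnt{A} $. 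I would prove this by induction on the structure of $\mathcal{E}$ following the inside-out grammar of Figure~\ref{fig:semS1}, peeling off one frame at a time and applying the appropriate inversion lemma for the outermost typing rule---\rnp{T-App}, \rnp{T-Op}, \rnp{T-Let}, \rnp{T-Cmp}, or \rnp{T-Crc}---to recover the type of the subterm in the hole and then to re-assemble the result. (These inversion lemmas are the ones alluded to, but not written out, just before Lemma~\ref{lem:canonical-S1}.) Given this lemma, Lemma~\ref{lem:preservation-red-S1} supplies $  \emptyset     \vdash_{\mathsf{S_1} }    \ottnt{N_{{\mathrm{1}}}}  :  \ottnt{B} $ from $  \emptyset     \vdash_{\mathsf{S_1} }    \ottnt{M_{{\mathrm{1}}}}  :  \ottnt{B} $ and $ \ottnt{M_{{\mathrm{1}}}}    \reduces_{\mathsf{S_1} }    \ottnt{N_{{\mathrm{1}}}} $, and re-plugging yields $  \emptyset     \vdash_{\mathsf{S_1} }    \mathcal{E}  [  \ottnt{N_{{\mathrm{1}}}}  ]  :  \ottnt{A} $, as required.

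The main obstacle I anticipate lies in the context replacement lemma, specifically in the frames that manipulate coercions: $\mathcal{E}  [  \square \, \langle  \ottnt{M}  \rangle  ]$, $\mathcal{E}  [  \ottnt{V}  \langle \, \square \, \rangle  ]$, $\mathcal{E}  [  \square  \mathbin{;\!;}  \ottnt{M}  ]$, and $\mathcal{E}  [  \ottnt{V}  \mathbin{;\!;}  \square  ]$. For the coercion-composition frames, inverting \rnp{T-Cmp} forces the mediating type to be shared exactly between the two operands, so one must check that the type extracted for the hole is the unique one compatible with the rest of the context; here the syntactic distinction between $\ottnt{U}  \langle  \ottnt{s}  \rangle$ and $\ottnt{U}  \langle\!\langle  \ottnt{d}  \rangle\!\rangle$ pays off, since it keeps the relevant typing derivations unambiguous. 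The polymorphic treatment of the coercion argument in \rnp{T-App}/\rnp{T-Abs} also calls for a little care---a type-substitution step, exactly as in the \rnp{R-Beta} case of Lemma~\ref{lem:preservation-red-S1}---but since it concerns the redex itself rather than the surrounding context, it is already dispatched by Lemma~\ref{lem:preservation-red-S1} and poses no genuine difficulty. With these pieces in place the theorem follows by a routine combination.

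\begin{proof}
  By case analysis on the evaluation rule applied to $ \ottnt{M}    \mathbin{  \evalto_{\mathsf{S_1} }  }    \ottnt{N} $,
  using Lemma~\ref{lem:preservation-red-S1} and a context replacement lemma as sketched above.
  (Similar to Theorem~\ref{thm:preservation-S}.)
\end{proof}
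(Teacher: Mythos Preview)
Your proposal is correct and follows essentially the same approach as the paper: case analysis on the evaluation rule, with \rnp{E-Abort} handled by \rnp{T-Blame} and \rnp{E-Ctx} handled by combining Lemma~\ref{lem:preservation-red-S1} with a context replacement argument established by induction on $\mathcal{E}$. The paper phrases the latter as ``substituting the subderivation $\mathcal{D}_2$ for $\mathcal{D}_1$ in $\mathcal{D}$'' (noting parenthetically that this is made precise by induction on $\mathcal{E}$), which is exactly your context replacement lemma; your anticipated difficulties with the coercion frames and uniqueness of the mediating type are a bit overcautious---inversion on \rnp{T-Cmp} and \rnp{T-Crc} simply yields \emph{some} intermediate type, which is all you need---but they do not affect the argument.
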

\fi
\begin{proof}
  By case analysis on the evaluation rule applied to $ \ottnt{M}    \mathbin{  \evalto_{\mathsf{S_1} }  }    \ottnt{N} $.
  \begin{description}
  \item[\Case{\rnp{E-Ctx}}]  
    We are given
    \[
       \ottnt{M_{{\mathrm{1}}}}   \reduces   \ottnt{N_{{\mathrm{1}}}}  \hgap
      \ottnt{M}  \ottsym{=}  \mathcal{E}  [  \ottnt{M_{{\mathrm{1}}}}  ] \hgap
      \ottnt{N}  \ottsym{=}  \mathcal{E}  [  \ottnt{N_{{\mathrm{1}}}}  ]
    \]
    for some $\mathcal{E},\ottnt{M_{{\mathrm{1}}}},\ottnt{N_{{\mathrm{1}}}}$.
    We have derivation $\mathcal{D}$ of $  \emptyset    \vdash   \mathcal{E}  [  \ottnt{M_{{\mathrm{1}}}}  ]  :  \ottnt{A} $.
    In derivation $\mathcal{D}$, there exists
    subderivation $\mathcal{D}_1$ of $  \emptyset    \vdash   \ottnt{M_{{\mathrm{1}}}}  :  \ottnt{A_{{\mathrm{1}}}} $ for some $\ottnt{A_{{\mathrm{1}}}}$.
    By $ \ottnt{M_{{\mathrm{1}}}}   \reduces   \ottnt{N_{{\mathrm{1}}}} $ and Lemma~\ref{lem:preservation-red-S1},
    we have derivation $\mathcal{D}_2$ of $  \emptyset    \vdash   \ottnt{N_{{\mathrm{1}}}}  :  \ottnt{A_{{\mathrm{1}}}} $.
    Thus, we can form derivation of $  \emptyset    \vdash   \mathcal{E}  [  \ottnt{N_{{\mathrm{1}}}}  ]  :  \ottnt{A} $
    by substituting $\mathcal{D}_2$ for $\mathcal{D}_1$ in $\mathcal{D}$.
    We have $  \emptyset    \vdash   \ottnt{N}  :  \ottnt{A} $.
    (More precisely, by induction on $\mathcal{E}$.)
  \item[\Case{\rnp{E-Abort}}]
    We are given
    \[
      \ottnt{M}  \ottsym{=}  \mathcal{E}  [  \ottkw{blame} \, \ottnt{p}  ] \hgap
      \ottnt{N}  \ottsym{=}  \ottkw{blame} \, \ottnt{p}
    \]
    for some $\mathcal{E},\ottnt{p}$.
    By \rnp{T-Blame}, $  \emptyset    \vdash   \ottkw{blame} \, \ottnt{p}  :  \ottnt{A} $. \qedhere
  \end{description}
\end{proof}

\iffull\corSafetySx*
\else
\begin{corollary}[Type Safety]\label{cor:safety-S1}
  If $  \emptyset     \vdash_{\mathsf{S_1} }    \ottnt{M}  :  \ottnt{A} $, then one of the following holds.
  \begin{enumerate}
  \item $ \ottnt{M}    \mathbin{  \evalto_{\mathsf{S_1} }  ^*}    \ottnt{V} $ and $  \emptyset     \vdash_{\mathsf{S_1} }    \ottnt{V}  :  \ottnt{A} $ for some $\ottnt{V}$.
  \item $ \ottnt{M}    \mathbin{  \evalto_{\mathsf{S_1} }  ^*}    \ottkw{blame} \, \ottnt{p} $ for some $\ottnt{p}$.
  \item $\ottnt{M} \,  \mathord{\Uparrow_{\mathsf{S_1} } } $.
  \end{enumerate}
\end{corollary}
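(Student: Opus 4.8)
The plan is to derive the corollary from Progress (Theorem~\ref{thm:progress-S1}) and Preservation (Theorem~\ref{thm:preservation-S1}) by a standard argument on the evaluation sequence starting from $\ottnt{M}$. First I would distinguish two cases according to whether the evaluation sequence from $\ottnt{M}$ is infinite or finite. If it is infinite, then by definition $\ottnt{M} \,  \mathord{\Uparrow_{\mathsf{S_1} } } $ and we are in case~(3).

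If the sequence is finite, there is a term $\ottnt{N}$ with $ \ottnt{M}    \mathbin{  \evalto_{\mathsf{S_1} }  ^*}    \ottnt{N} $ such that no $\ottnt{N'}$ satisfies $ \ottnt{N}    \mathbin{  \evalto_{\mathsf{S_1} }  }    \ottnt{N'} $; moreover, by repeated application of Preservation along the sequence (formally, an induction on its length using Theorem~\ref{thm:preservation-S1} at each step), we have $  \emptyset     \vdash_{\mathsf{S_1} }    \ottnt{N}  :  \ottnt{A} $. Now I would apply Progress to $\ottnt{N}$: since $\ottnt{N}$ does not evaluate, case~(1) of Theorem~\ref{thm:progress-S1} is excluded, so either $\ottnt{N}  \ottsym{=}  \ottnt{V}$ for some value $\ottnt{V}$, in which case $ \ottnt{M}    \mathbin{  \evalto_{\mathsf{S_1} }  ^*}    \ottnt{V} $ and $  \emptyset     \vdash_{\mathsf{S_1} }    \ottnt{V}  :  \ottnt{A} $, giving case~(1) of the corollary; or $\ottnt{N}  \ottsym{=}  \ottkw{blame} \, \ottnt{p}$ for some $\ottnt{p}$, giving case~(2).

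This proof is entirely routine---no step is a genuine obstacle, since all the work has already been done in Theorems~\ref{thm:progress-S1} and~\ref{thm:preservation-S1}. The only mildly delicate point is the bookkeeping that Preservation applied iteratively yields well-typedness of the final term $\ottnt{N}$; this is a trivial induction on the number of $ \evalto_{\mathsf{S_1} } $-steps and is the kind of detail I would state in one line. Note that this mirrors exactly the proof of Corollary~\ref{cor:safety-S} for \lamS, so the same structure carries over verbatim with $\mathsf{S}$ replaced by $\mathsf{S_1}$.
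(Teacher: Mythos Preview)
Your proposal is correct and matches the paper's approach: the paper simply cites Theorem~\ref{thm:progress-S1} and Theorem~\ref{thm:preservation-S1}, and the standard argument you spell out is exactly what is intended.
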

\fi
\begin{proof}
  By Theorem~\ref{thm:progress-S1} and Theorem~\ref{thm:preservation-S1}.
\end{proof}

\subsection{Translation Preserves Typing}

\iffull
\else
\begin{definition}
  Let $ \Psi ( \Gamma ) $ be the type environment satisfying the following:
  \begin{center}
    $ ( \ottmv{x}  :  \ottnt{A} ) \in  \Gamma $ if and only if $ ( \ottmv{x}  :   \Psi (\maybebluetext{ \ottnt{A} })  ) \in   \Psi ( \Gamma )  $.
  \end{center}
\end{definition}
\fi

\begin{lemma}
  If $\ottnt{s}  \ottsym{:}  \ottnt{A}  \rightsquigarrow  \ottnt{B}$ in \lamS,
  then $  \emptyset     \vdash_{\mathsf{S_1} }     \Psi (\maybebluetext{ \ottnt{s} })   :   \Psi (\maybebluetext{ \ottnt{A} })   \rightsquigarrow   \Psi (\maybebluetext{ \ottnt{B} })  $.
\end{lemma}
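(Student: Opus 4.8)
The plan is to reduce the statement to its purely coercion-level counterpart and then close it with the term-typing rule \rnp{T-Crcn}. Concretely, it suffices to prove that the coercion translation preserves the well-formed-coercion judgment, i.e.\ that $\ottnt{s}  \ottsym{:}  \ottnt{A}  \rightsquigarrow  \ottnt{B}$ in \lamS implies $ \Psi (\maybebluetext{ \ottnt{s} })   \ottsym{:}   \Psi (\maybebluetext{ \ottnt{A} })   \rightsquigarrow   \Psi (\maybebluetext{ \ottnt{B} }) $ in \lamSx. Since $ \Psi (\maybebluetext{ \ottnt{s} }) $ is again a space-efficient coercion (immediate from the parallel grammars of \lamS and \lamSx, which differ only by replacing $ \rightarrow $ with $ \Rightarrow $, and from the structural definition of $\Psi$), \rnp{T-Crcn} then yields $  \emptyset     \vdash_{\mathsf{S_1} }     \Psi (\maybebluetext{ \ottnt{s} })   :   \Psi (\maybebluetext{ \ottnt{A} })   \rightsquigarrow   \Psi (\maybebluetext{ \ottnt{B} })  $.

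First I would record a few auxiliary facts about the type translation $ \Psi (\maybebluetext{ \ottnt{A} }) $, all by straightforward induction on types following its three defining clauses: (i) $\Psi$ sends \lamS-ground types to \lamSx-ground types, with $ \Psi (\maybebluetext{ \iota })  = \iota$ and $ \Psi (\maybebluetext{ \mathord{\star}  \rightarrow  \mathord{\star} })  =  \mathord{\star}  \Rightarrow  \mathord{\star} $; (ii) $\Psi$ is injective, so $\ottnt{A}  \ne  \mathord{\star}$ implies $ \Psi (\maybebluetext{ \ottnt{A} })   \ne  \mathord{\star}$ and $\ottnt{G}  \ne  \ottnt{H}$ implies $ \Psi (\maybebluetext{ \ottnt{G} })   \ne   \Psi (\maybebluetext{ \ottnt{H} }) $; and (iii) $\Psi$ preserves consistency, i.e.\ $\ottnt{A}  \sim  \ottnt{B}$ implies $ \Psi (\maybebluetext{ \ottnt{A} })   \sim   \Psi (\maybebluetext{ \ottnt{B} }) $ (by induction on the consistency derivation, using the \lamSx consistency rules). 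As a minor point, I read the translation clause for $ \bot^{ \ottnt{G}   \ottnt{p}   \ottnt{H} } $ as producing $ \bot^{  \Psi (\maybebluetext{ \ottnt{G} })    \ottnt{p}    \Psi (\maybebluetext{ \ottnt{H} })  } $, the only sensible reading since the annotation must be a \lamSx-ground type; this agrees with the displayed rule once ground types are identified up to replacing $ \rightarrow $ by $ \Rightarrow $.

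The core argument is an induction on the derivation of $\ottnt{s}  \ottsym{:}  \ottnt{A}  \rightsquigarrow  \ottnt{B}$ --- equivalently, a simultaneous structural induction over space-efficient, intermediate, and ground coercions --- with a case split on the outermost form of $\ottnt{s}$. For $ \ottkw{id} _{ \ottnt{A} } $ the derivation ends with \rnp{CT-Id} and $ \Psi (\maybebluetext{  \ottkw{id} _{ \ottnt{A} }  })  =  \ottkw{id} _{  \Psi (\maybebluetext{ \ottnt{A} })  } $, so \rnp{CT-Id} in \lamSx closes it. For $ \ottnt{G} \texttt{?}^{ \ottnt{p} }   \ottsym{;}  \ottnt{i}$, inversion forces the derivation to be \rnp{CT-Seq} over \rnp{CT-Proj} (so $\ottnt{A}  \ottsym{=}  \mathord{\star}$) and $\ottnt{i}  \ottsym{:}  \ottnt{G}  \rightsquigarrow  \ottnt{B}$; the induction hypothesis gives $ \Psi (\maybebluetext{ \ottnt{i} })   \ottsym{:}   \Psi (\maybebluetext{ \ottnt{G} })   \rightsquigarrow   \Psi (\maybebluetext{ \ottnt{B} }) $, fact (i) makes $ \Psi (\maybebluetext{ \ottnt{G} }) $ a ground type so \rnp{CT-Proj} and \rnp{CT-Seq} assemble $ \Psi (\maybebluetext{ \ottnt{G} })  \texttt{?}^{ \ottnt{p} }   \ottsym{;}   \Psi (\maybebluetext{ \ottnt{i} })   \ottsym{:}  \mathord{\star}  \rightsquigarrow   \Psi (\maybebluetext{ \ottnt{B} }) $, which is exactly $ \Psi (\maybebluetext{  \ottnt{G} \texttt{?}^{ \ottnt{p} }   \ottsym{;}  \ottnt{i} })   \ottsym{:}   \Psi (\maybebluetext{ \mathord{\star} })   \rightsquigarrow   \Psi (\maybebluetext{ \ottnt{B} }) $. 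The case $\ottnt{g}  \ottsym{;}   \ottnt{G} \texttt{!} $ is symmetric via \rnp{CT-Inj}. For $\ottnt{s}  \rightarrow  \ottnt{t}$ the derivation ends with \lamS's \rnp{CT-Fun}; the induction hypotheses for $\ottnt{s}$ (contravariantly) and $\ottnt{t}$ (covariantly), together with the \emph{replacement} rule \rnp{CT-Fun} of \lamSx, give the required judgment, and the clause $ \Psi (\maybebluetext{ \ottnt{A}  \rightarrow  \ottnt{B} })  =  \Psi (\maybebluetext{ \ottnt{A} })   \Rightarrow   \Psi (\maybebluetext{ \ottnt{B} }) $ makes source and target types line up. Finally, for $ \bot^{ \ottnt{G}   \ottnt{p}   \ottnt{H} } $ the derivation is \rnp{CT-Fail} with side conditions $\ottnt{A}  \ne  \mathord{\star}$, $\ottnt{A}  \sim  \ottnt{G}$, $\ottnt{G}  \ne  \ottnt{H}$; facts (ii) and (iii) transport these to $ \Psi (\maybebluetext{ \ottnt{A} })   \ne  \mathord{\star}$, $ \Psi (\maybebluetext{ \ottnt{A} })   \sim   \Psi (\maybebluetext{ \ottnt{G} }) $, $ \Psi (\maybebluetext{ \ottnt{G} })   \ne   \Psi (\maybebluetext{ \ottnt{H} }) $, so \rnp{CT-Fail} in \lamSx gives the result for the arbitrary target $ \Psi (\maybebluetext{ \ottnt{B} }) $.

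There is no deep obstacle; the whole thing is a routine structural induction. The points that need care are bookkeeping ones: proving the auxiliary facts (i)--(iii) about $\Psi$ \emph{before} the main induction (the \rnp{CT-Fail}, \rnp{CT-Proj}, and \rnp{CT-Inj} cases all rely on $\Psi$ not collapsing distinct ground types, i.e.\ on injectivity), organizing the induction so that the hypotheses are available for the intermediate/ground sub-coercions nested inside $ \ottnt{G} \texttt{?}^{ \ottnt{p} }   \ottsym{;}  \ottnt{i}$, $\ottnt{g}  \ottsym{;}   \ottnt{G} \texttt{!} $, and $\ottnt{s}  \rightarrow  \ottnt{t}$, and checking that $ \Psi (\maybebluetext{ \ottnt{s} }) $ genuinely lies in the \lamSx grammar of space-efficient coercions so that \rnp{T-Crcn} applies --- which again follows from the same structural induction since the grammars coincide modulo $ \rightarrow  \leadsto  \Rightarrow $.
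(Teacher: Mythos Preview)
Your proposal is correct and follows essentially the same approach as the paper, which gives only the one-line justification ``By case analysis on $\ottnt{s}$.'' Your write-up is simply a fully expanded version of that case analysis (necessarily an induction for the function-coercion case), together with the bookkeeping facts about $\Psi$ on types that the paper leaves implicit; your reading of the $ \bot^{ \ottnt{G}   \ottnt{p}   \ottnt{H} } $ clause as $ \bot^{  \Psi (\maybebluetext{ \ottnt{G} })    \ottnt{p}    \Psi (\maybebluetext{ \ottnt{H} })  } $ is indeed the intended one.
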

\begin{proof}
  By case analysis on $\ottnt{s}$.
\end{proof}

\iffull\thmTransTyping*
\else
\begin{theorem}[Translation Preserves Typing] 
  \leavevmode
  \begin{enumerate}
  \item   If $ \Gamma    \vdash_{\mathsf{S} }    \ottnt{M}  :  \ottnt{A} $ and $\ottnt{s}  \ottsym{:}  \ottnt{A}  \rightsquigarrow  \ottnt{B}$ ,
  then $  \Psi ( \Gamma )     \vdash_{\mathsf{S_1} }    \ottsym{(}   \mathscr{K}\llbracket \maybebluetext{ \ottnt{M} } \rrbracket   \Psi (\maybebluetext{ \ottnt{s} })    \ottsym{)}  :   \Psi (\maybebluetext{ \ottnt{B} })  $.
  \item If $ \Gamma    \vdash_{\mathsf{S} }    \ottnt{V}  :  \ottnt{A} $,
  then $  \Psi ( \Gamma )     \vdash_{\mathsf{S_1} }     \Psi (\maybebluetext{ \ottnt{V} })   :   \Psi (\maybebluetext{ \ottnt{A} })  $.
  \end{enumerate}
\end{theorem}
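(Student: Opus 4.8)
The natural move is to prove a slightly strengthened statement by simultaneous induction, since the three translation functions $\Psi(-)$ on values, $ \mathscr{K}\llbracket \maybebluetext{-} \rrbracket  K $ on terms, and $ \mathscr{C}\llbracket \maybebluetext{-} \rrbracket $ are mutually recursive, and since clause (2) as stated is too weak to serve as its own induction hypothesis (the \rnp{T-Abs} case translates the body as $ \mathscr{K}\llbracket \maybebluetext{ \ottnt{M} } \rrbracket  \kappa $, where the continuation $\kappa$ is a \emph{variable} of coercion type into a \emph{fresh type variable}, not a translated source coercion $ \Psi (\maybebluetext{ \ottnt{s} }) $). The plan is therefore to establish, by induction on the \lamS typing derivation (equivalently on the size of the source term, since the translation is really defined on derivations because of the $ \ottnt{M} ^{ \ottnt{A} } $ annotations), the following three clauses: (a) if $ \Gamma    \vdash_{\mathsf{S} }    \ottnt{V}  :  \ottnt{A} $ then $  \Psi ( \Gamma )     \vdash_{\mathsf{S_1} }     \Psi (\maybebluetext{ \ottnt{V} })   :   \Psi (\maybebluetext{ \ottnt{A} })  $; (b) if $ \Gamma    \vdash_{\mathsf{S} }    \ottnt{M}  :  \ottnt{A} $ and $ \Gamma'   \vdash_{\mathsf{S_1} }   \ottnt{K}  :   \Psi (\maybebluetext{ \ottnt{A} })   \rightsquigarrow  \ottnt{C} $ for some $\ottnt{C}$ (possibly a type variable) and some $\Gamma' \supseteq  \Psi ( \Gamma ) $, then $ \Gamma'   \vdash_{\mathsf{S_1} }    \mathscr{K}\llbracket \maybebluetext{ \ottnt{M} } \rrbracket  \ottnt{K}   :  \ottnt{C} $; and (c) if $ \Gamma    \vdash_{\mathsf{S} }    \ottnt{M}  :  \ottnt{A} $ then $  \Psi ( \Gamma )     \vdash_{\mathsf{S_1} }     \mathscr{C}\llbracket \maybebluetext{ \ottnt{M} } \rrbracket   :   \Psi (\maybebluetext{ \ottnt{A} })  $. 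Clause (a) is part (2); part (1) is the instance of (b) with $\Gamma' =  \Psi ( \Gamma ) $, $\ottnt{C} =  \Psi (\maybebluetext{ \ottnt{B} }) $, and $\ottnt{K} =  \Psi (\maybebluetext{ \ottnt{s} }) $, which is well typed because the coercion-translation lemma just above gives $ \Psi (\maybebluetext{ \ottnt{s} })   \ottsym{:}   \Psi (\maybebluetext{ \ottnt{A} })   \rightsquigarrow   \Psi (\maybebluetext{ \ottnt{B} }) $, hence $  \emptyset     \vdash_{\mathsf{S_1} }     \Psi (\maybebluetext{ \ottnt{s} })   :   \Psi (\maybebluetext{ \ottnt{A} })   \rightsquigarrow   \Psi (\maybebluetext{ \ottnt{B} })  $ by \rnp{T-Crcn}. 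Two standard auxiliary facts are used freely: weakening for $ \vdash_{\mathsf{S_1} } $, and the observation that $\Psi$ maps ground/intermediate/delayed \lamS-coercions to ground/intermediate/delayed \lamSx-coercions of the translated types.

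\textbf{Case analysis.} For (a): \rnp{T-Var} and \rnp{T-Const} are immediate ($ \Psi (\maybebluetext{ \iota })  = \iota$); \rnp{T-CrcV} uses (a) on $\ottnt{U}$ together with the coercion-translation lemma and \rnp{T-CrcV} of \lamSx; \rnp{T-Abs} is the crucial case, where we pick $\ottmv{X}$ fresh and invoke \lamSx's \rnp{T-Abs}, whose premise $  \Psi ( \Gamma )   \ottsym{,}  \ottmv{x}  \ottsym{:}   \Psi (\maybebluetext{ \ottnt{A} })   \ottsym{,}  \kappa  \ottsym{:}   \Psi (\maybebluetext{ \ottnt{B} })   \rightsquigarrow  \ottmv{X}    \vdash_{\mathsf{S_1} }     \mathscr{K}\llbracket \maybebluetext{ \ottnt{M} } \rrbracket  \kappa   :  \ottmv{X} $ is exactly clause (b) applied to $ \Gamma  \ottsym{,}  \ottmv{x}  \ottsym{:}  \ottnt{A}    \vdash_{\mathsf{S} }    \ottnt{M}  :  \ottnt{B} $ with continuation $\ottnt{K} = \kappa$ (typed by \rnp{T-Var}) and $\ottnt{C} = \ottmv{X}$. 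For (b), case on the translation rule: \rnp{Tr-Val} uses (a), weakening, and \rnp{T-Crc}; \rnp{Tr-Op} and \rnp{Tr-App} use (c) on the (strictly smaller) subterms, then \rnp{T-Op}/\rnp{T-App} and (for \rnp{Tr-Op}) \rnp{T-Crc}, recalling $ \Psi (\maybebluetext{ \ottnt{A}  \rightarrow  \ottnt{B} })  =  \Psi (\maybebluetext{ \ottnt{A} })   \Rightarrow   \Psi (\maybebluetext{ \ottnt{B} }) $; \rnp{Tr-Crc} (on $ \ottnt{M}  \langle  \ottnt{s}  \rangle $ via \rnp{T-Crc} with $\ottnt{s}  \ottsym{:}  \ottnt{A}  \rightsquigarrow  \ottnt{B}$) types $ \Psi (\maybebluetext{ \ottnt{s} })   \mathbin{;\!;}  \ottnt{K}$ at $ \Psi (\maybebluetext{ \ottnt{A} })   \rightsquigarrow  \ottnt{C}$ by \rnp{T-Crcn} and \rnp{T-Cmp}, applies (b) to the smaller $\ottnt{M}$ with continuation $\kappa$ under $\Gamma', \kappa \ottsym{:}  \Psi (\maybebluetext{ \ottnt{A} })   \rightsquigarrow  \ottnt{C}$, and concludes with \rnp{T-Let}; \rnp{Tr-Blame} is \rnp{T-Blame}. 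For (c): \rnp{TrC-Val} is (a); \rnp{TrC-Crc} types $ \Psi (\maybebluetext{ \ottnt{s} }) $ by the coercion lemma plus \rnp{T-Crcn} and applies (b) to the smaller $\ottnt{M}$ with $\ottnt{K} =  \Psi (\maybebluetext{ \ottnt{s} }) $; \rnp{TrC-Else} types $ \ottkw{id} _{  \Psi (\maybebluetext{ \ottnt{A} })  } $ via \rnp{CT-Id}/\rnp{T-Crcn} and applies (b) to $\ottnt{M}$ with $\ottnt{K} =  \ottkw{id} _{  \Psi (\maybebluetext{ \ottnt{A} })  } $.

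\textbf{Well-foundedness and main obstacle.} Two recursive calls do not decrease the subject size, namely $ \mathscr{C}\llbracket \maybebluetext{ \ottnt{M} } \rrbracket $ in \rnp{TrC-Else} calling $ \mathscr{K}\llbracket \maybebluetext{ \ottnt{M} } \rrbracket  \ottkw{id} $ and $ \mathscr{K}\llbracket \maybebluetext{ \ottnt{V} } \rrbracket  \ottnt{K} $ in \rnp{Tr-Val} calling $ \Psi (\maybebluetext{ \ottnt{V} }) $; these are handled by using the lexicographic order (source-term size, tag) with the tag ordering $\mathscr{C} \succ \mathscr{K} \succ \Psi$, under which every recursive call strictly decreases. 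I expect the only genuine subtlety to be precisely the strengthening of the statement described above: recognizing that clause (b) must permit an arbitrary well-typed continuation term---in particular a variable whose target type is a fresh type variable---and that the $ \mathscr{C}\llbracket \maybebluetext{-} \rrbracket $-typing clause and $ \vdash_{\mathsf{S_1} } $-weakening must be threaded through the same mutual induction. Once the statement is set up this way, each individual case is a routine application of the \lamSx typing rules and the coercion-translation lemma.
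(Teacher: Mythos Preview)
Your proposal is correct and follows the same high-level approach as the paper---simultaneous induction on the \lamS typing derivation---but you have done the work the paper's one-line proof sketch elides: the paper says only ``Simultaneously proved by induction on the derivation of $ \Gamma    \vdash_{\mathsf{S} }    \ottnt{M}  :  \ottnt{A} $ and $ \Gamma    \vdash_{\mathsf{S} }    \ottnt{V}  :  \ottnt{A} $,'' whereas you correctly observe that part~(1) as stated is too weak to close the \rnp{T-Abs} case (the continuation there is a variable $\kappa$ whose target is a fresh type variable, not a translated source coercion), strengthen it to your clause~(b) with an arbitrary well-typed continuation $K$ and extended context $\Gamma'$, add the auxiliary clause~(c) for $\mathscr{C}\llbracket{-}\rrbracket$, and make the mutual recursion well-founded via the lexicographic measure. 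Your treatment is a faithful and more explicit version of what the paper intends.
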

\fi
\begin{proof}
  Simultaneously proved by induction on the derivation of $ \Gamma    \vdash_{\mathsf{S} }    \ottnt{M}  :  \ottnt{A} $ and $ \Gamma    \vdash_{\mathsf{S} }    \ottnt{V}  :  \ottnt{A} $.
\end{proof}

\subsection{Translation Preserves Semantics}

\begin{lemma}[Composition]\label{lem:cmp-trans}
  If $\ottnt{s}  \fatsemi  \ottnt{t}  \ottsym{=}  \ottnt{s'}$ in \lamS, then
  $ \Psi (\maybebluetext{ \ottnt{s} })   \fatsemi   \Psi (\maybebluetext{ \ottnt{t} })   \ottsym{=}   \Psi (\maybebluetext{ \ottnt{s'} }) $.
\end{lemma}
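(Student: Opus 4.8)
The plan is to prove the statement by induction on the total size of the two argument coercions $\ottnt{s}$ and $\ottnt{t}$ --- the very measure that strictly decreases at each recursive call of $\fatsemi$ and that was used to establish termination of $\fatsemi$ --- with a case analysis on which of the composition rules of Figure~\ref{fig:semS-cmp} derives $\ottnt{s} \fatsemi \ottnt{t} = \ottnt{s'}$. The two facts that drive the argument are: first, that the coercion translation $\Psi$ (Figure~\ref{fig:transSS1}) is an injective, structure-preserving map that merely relabels the binary constructor $\rightarrow$ as $\Rightarrow$ (including inside ground types, where $\mathord{\star} \rightarrow \mathord{\star}$ becomes $\mathord{\star} \Rightarrow \mathord{\star}$): it commutes with $\ottkw{id}_{(\cdot)}$, with sequencing $(\cdot);(\cdot)$, with injection and projection tags, with $\bot$, and with the function-coercion constructor; and second, that the composition rules of $\lamSx$ (Figure~\ref{fig:semS1}) are literally identical to those of $\lamS$ except that \rnp{CC-Fun} is reformulated with $\Rightarrow$ in place of $\rightarrow$. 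Because $\Psi$ is injective and $\Psi(\mathord{\star}) = \mathord{\star}$, all side conditions of the rules are preserved and reflected by $\Psi$: $\ottnt{A} \ne \mathord{\star}$ iff $\Psi(\ottnt{A}) \ne \mathord{\star}$, $\ottnt{G} \ne \ottnt{H}$ iff $\Psi(\ottnt{G}) \ne \Psi(\ottnt{H})$, and $\ottnt{g} \ne \ottkw{id}_{\ottnt{A}}$ iff $\Psi(\ottnt{g}) \ne \ottkw{id}_{\Psi(\ottnt{A})}$.

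For each of the rules \rnp{CC-IdDynL}, \rnp{CC-ProjL}, \rnp{CC-InjId}, \rnp{CC-Collapse}, \rnp{CC-FailL}, \rnp{CC-Conflict}, \rnp{CC-FailR}, \rnp{CC-InjR}, \rnp{CC-IdL}, and \rnp{CC-IdR}, I would apply $\Psi$ to both sides of $\ottnt{s} \fatsemi \ottnt{t} = \ottnt{s'}$, push $\Psi$ through the outermost constructors on each side, rewrite any nested composition ($\ottnt{i} \fatsemi \ottnt{t}$, $\ottnt{g} \fatsemi \ottnt{i}$, $\ottnt{g} \fatsemi \ottnt{h}$, \dots) appearing in $\ottnt{s'}$ by the translation of its result using the induction hypothesis (its arguments being strictly smaller), and then read off that the resulting equation between $\lamSx$ coercions is exactly an instance of the corresponding $\lamSx$ composition rule. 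For instance, in the \rnp{CC-Collapse} case $(\ottnt{g};\ottnt{G}\texttt{!}) \fatsemi (\ottnt{G}\texttt{?}^{\ottnt{p}};\ottnt{i}) = \ottnt{g} \fatsemi \ottnt{i}$, the left side translates to $(\Psi(\ottnt{g});\Psi(\ottnt{G})\texttt{!}) \fatsemi (\Psi(\ottnt{G})\texttt{?}^{\ottnt{p}};\Psi(\ottnt{i}))$, which equals $\Psi(\ottnt{g}) \fatsemi \Psi(\ottnt{i})$ by $\lamSx$'s \rnp{CC-Collapse} and then $\Psi(\ottnt{g} \fatsemi \ottnt{i}) = \Psi(\ottnt{s'})$ by the induction hypothesis. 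Lemma~\ref{lem:sc-cmp-welldef} and its $\lamSx$ analogue guarantee that all compositions in sight are defined, so these steps are legitimate, and well-formedness of $\Psi(\ottnt{s})$, $\Psi(\ottnt{t})$ as $\lamSx$ coercions of the right shapes is immediate from the grammars.

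The one case needing an actual argument rather than a mechanical check --- the main obstacle --- is \rnp{CC-Fun}, because of its case split. Here $\ottnt{s} = \ottnt{s}_1 \rightarrow \ottnt{t}_1$, $\ottnt{t} = \ottnt{s}_2 \rightarrow \ottnt{t}_2$, and $\ottnt{s'}$ is $\ottkw{id}_{\ottnt{A} \rightarrow \ottnt{B}}$ when $\ottnt{s}_2 \fatsemi \ottnt{s}_1 = \ottkw{id}_{\ottnt{A}}$ and $\ottnt{t}_1 \fatsemi \ottnt{t}_2 = \ottkw{id}_{\ottnt{B}}$, and $(\ottnt{s}_2 \fatsemi \ottnt{s}_1) \rightarrow (\ottnt{t}_1 \fatsemi \ottnt{t}_2)$ otherwise. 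The $\lamSx$ rule applied to $(\Psi(\ottnt{s}_1) \Rightarrow \Psi(\ottnt{t}_1)) \fatsemi (\Psi(\ottnt{s}_2) \Rightarrow \Psi(\ottnt{t}_2))$ performs the same case split on $\Psi(\ottnt{s}_2) \fatsemi \Psi(\ottnt{s}_1)$ and $\Psi(\ottnt{t}_1) \fatsemi \Psi(\ottnt{t}_2)$; by the induction hypothesis these equal $\Psi(\ottnt{s}_2 \fatsemi \ottnt{s}_1)$ and $\Psi(\ottnt{t}_1 \fatsemi \ottnt{t}_2)$, and since $\Psi$ is injective with $\Psi(\ottkw{id}_{\ottnt{A}}) = \ottkw{id}_{\Psi(\ottnt{A})}$, the branch taken on the $\lamSx$ side coincides with the branch taken on the $\lamS$ side. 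In the identity branch the $\lamSx$ rule yields $\ottkw{id}_{\Psi(\ottnt{A}) \Rightarrow \Psi(\ottnt{B})} = \Psi(\ottkw{id}_{\ottnt{A} \rightarrow \ottnt{B}})$; in the other branch it yields $\Psi(\ottnt{s}_2 \fatsemi \ottnt{s}_1) \Rightarrow \Psi(\ottnt{t}_1 \fatsemi \ottnt{t}_2) = \Psi((\ottnt{s}_2 \fatsemi \ottnt{s}_1) \rightarrow (\ottnt{t}_1 \fatsemi \ottnt{t}_2))$. Either way the result is $\Psi(\ottnt{s'})$, which closes the induction.
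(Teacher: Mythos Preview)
Your proof is correct and follows essentially the same approach as the paper, which simply says ``By induction on the derivation of $\ottnt{s} \fatsemi \ottnt{t} = \ottnt{s'}$.'' Your induction on the combined size of the arguments is equivalent to the paper's induction on the recursion structure of $\fatsemi$, and your detailed case analysis---especially the careful treatment of the \rnp{CC-Fun} branch, where injectivity of $\Psi$ ensures the identity test is taken on the same side in $\lamSx$ as in $\lamS$---spells out exactly what the paper leaves implicit.
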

\begin{proof}
  By induction on the derivation of $\ottnt{s}  \fatsemi  \ottnt{t}  \ottsym{=}  \ottnt{s'}$.
\end{proof}

\begin{lemma} \label{lem:merge-id}
  If $\ottnt{s}  \ottsym{:}  \ottnt{A}  \rightsquigarrow  \ottnt{B}$ and $\ottkw{id}  \ottsym{:}   \Psi (\maybebluetext{ \ottnt{B} })   \rightsquigarrow   \Psi (\maybebluetext{ \ottnt{B} }) $, then
  $ \Psi (\maybebluetext{ \ottnt{s} })   \fatsemi  \ottkw{id}  \ottsym{=}   \Psi (\maybebluetext{ \ottnt{s} }) $.
\end{lemma}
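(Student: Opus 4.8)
The plan is to prove a slightly more convenient statement by induction on the structure of the source coercion: for every $\ottnt{s}  \ottsym{:}  \ottnt{A}  \rightsquigarrow  \ottnt{B}$ we show $ \Psi (\maybebluetext{ \ottnt{s} })   \fatsemi   \ottkw{id} _{  \Psi (\maybebluetext{ \ottnt{B} })  }   \ottsym{=}   \Psi (\maybebluetext{ \ottnt{s} }) $, which is exactly the claim since the hypothesis ``$\ottkw{id}  \ottsym{:}   \Psi (\maybebluetext{ \ottnt{B} })   \rightsquigarrow   \Psi (\maybebluetext{ \ottnt{B} }) $'' just says the identity coercion in question is $ \ottkw{id} _{  \Psi (\maybebluetext{ \ottnt{B} })  } $. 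Throughout I would rely on three easy facts about the translation: $ \Psi (\maybebluetext{\cdot}) $ commutes with the coercion constructors; it sends ground coercions of \lamS to ground coercions of \lamSx (so the grammar side conditions are preserved); and $ \Psi (\maybebluetext{ \ottnt{C} })   \ottsym{=}  \mathord{\star}$ iff $\ottnt{C}  \ottsym{=}  \mathord{\star}$, directly from the definition of the type translation. Note that the composition function is driven by the shape of the \emph{first} argument, and here that shape mirrors $\ottnt{s}$, so a case split on $\ottnt{s}$ determines which composition rule fires.

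The case analysis is short. If $\ottnt{s}  \ottsym{=}   \ottkw{id} _{ \mathord{\star} } $, then $\ottnt{B}  \ottsym{=}  \mathord{\star}$ and the goal $ \ottkw{id} _{ \mathord{\star} }   \fatsemi   \ottkw{id} _{ \mathord{\star} }   \ottsym{=}   \ottkw{id} _{ \mathord{\star} } $ is \rnp{CC-IdDynL}. If $\ottnt{s}  \ottsym{=}   \ottnt{G} \texttt{?}^{ \ottnt{p} }   \ottsym{;}  \ottnt{i}$ with $\ottnt{i}  \ottsym{:}  \ottnt{G}  \rightsquigarrow  \ottnt{B}$, then \rnp{CC-ProjL} gives $ \Psi (\maybebluetext{ \ottnt{s} })   \fatsemi   \ottkw{id} _{  \Psi (\maybebluetext{ \ottnt{B} })  }   \ottsym{=}    \Psi (\maybebluetext{ \ottnt{G} })  \texttt{?}^{ \ottnt{p} }   \ottsym{;}  \ottsym{(}   \Psi (\maybebluetext{ \ottnt{i} })   \fatsemi   \ottkw{id} _{  \Psi (\maybebluetext{ \ottnt{B} })  }   \ottsym{)}$, and the induction hypothesis applied to the proper subcoercion $\ottnt{i}$ collapses the inner composition, recovering $ \Psi (\maybebluetext{ \ottnt{s} }) $. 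If $\ottnt{s}  \ottsym{=}  \ottnt{g}  \ottsym{;}   \ottnt{G} \texttt{!} $, then again $\ottnt{B}  \ottsym{=}  \mathord{\star}$, so $ \ottkw{id} _{  \Psi (\maybebluetext{ \ottnt{B} })  }   \ottsym{=}   \ottkw{id} _{ \mathord{\star} } $ and \rnp{CC-InjId} gives the result immediately. If $\ottnt{s}  \ottsym{=}   \bot^{ \ottnt{G}   \ottnt{p}   \ottnt{H} } $, then \rnp{CC-FailL} gives $ \bot^{ \ottnt{G}   \ottnt{p}   \ottnt{H} } $, as required.

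The only case that needs Proposition~\ref{prop:src-tgt} is $\ottnt{s}  \ottsym{=}  \ottnt{g}$ a ground coercion. There Proposition~\ref{prop:src-tgt}(2) gives $\ottnt{B}  \ne  \mathord{\star}$, hence $ \Psi (\maybebluetext{ \ottnt{B} })   \ne  \mathord{\star}$, so $ \ottkw{id} _{  \Psi (\maybebluetext{ \ottnt{B} })  } $ is itself a ground coercion of \lamSx. If $ \Psi (\maybebluetext{ \ottnt{g} })   \ottsym{=}   \ottkw{id} _{ \ottnt{C} } $, then matching source and target types forces $\ottnt{C}  \ottsym{=}   \Psi (\maybebluetext{ \ottnt{B} }) $ and \rnp{CC-IdL} yields $ \ottkw{id} _{  \Psi (\maybebluetext{ \ottnt{B} })  }   \ottsym{=}   \Psi (\maybebluetext{ \ottnt{g} }) $. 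Otherwise $ \Psi (\maybebluetext{ \ottnt{g} })   \ne   \ottkw{id} _{  \Psi (\maybebluetext{ \ottnt{B} })  } $ and, since $ \Psi (\maybebluetext{ \ottnt{B} })   \ne  \mathord{\star}$, \rnp{CC-IdR} applies and yields $ \Psi (\maybebluetext{ \ottnt{g} }) $ directly; in particular no appeal to \rnp{CC-Fun} is needed. I do not anticipate a genuine obstacle; the only subtlety is ensuring the correct composition rule fires in this last case, which is precisely why the split on whether $ \Psi (\maybebluetext{ \ottnt{g} }) $ is an identity coercion is required, and this in turn rests on the trivial observation that $ \Psi (\maybebluetext{\cdot}) $ preserves ``being an identity coercion'' and preserves non-dynamicity of types.
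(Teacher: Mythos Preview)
Your proposal is correct and follows essentially the same approach as the paper, which simply says ``Easy case analysis on $\ottnt{s}$.'' Your use of structural induction (to handle the projection case $\ottnt{s}  \ottsym{=}   \ottnt{G} \texttt{?}^{ \ottnt{p} }   \ottsym{;}  \ottnt{i}$ via the hypothesis on $\ottnt{i}$) is just a slightly more explicit packaging of the same case split the paper has in mind.
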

\begin{proof}
  Easy case analysis on $\ottnt{s}$.
\end{proof}

\begin{lemma}\label{lem:val-coe-val}
  $ \Psi (\maybebluetext{ \ottnt{U} }) $ is an uncoerced value and $ \Psi (\maybebluetext{ \ottnt{V} }) $ is a value.
\end{lemma}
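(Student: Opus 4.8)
The plan is to prove both clauses by a simple case analysis on the shapes of $\ottnt{U}$ and $\ottnt{V}$, after first recording a few routine facts about how the type and coercion translations respect the syntactic subclasses of coercions in \lamSx.

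First I would establish, by straightforward inspection of the defining clauses, that $ \Psi (\maybebluetext{ \ottnt{A} })   \ottsym{=}  \mathord{\star}$ iff $\ottnt{A}  \ottsym{=}  \mathord{\star}$ (since $\Psi$ fixes $\mathord{\star}$ and $\iota$ and sends $\ottnt{A}  \rightarrow  \ottnt{B}$ to $ \Psi (\maybebluetext{ \ottnt{A} })   \Rightarrow   \Psi (\maybebluetext{ \ottnt{B} }) $), and that $ \Psi (\maybebluetext{ \ottnt{s} }) $ is of the form $ \ottkw{id} _{ \ottnt{B} } $ iff $\ottnt{s}$ is of the form $ \ottkw{id} _{ \ottnt{A} } $, because $ \Psi (\maybebluetext{  \ottkw{id} _{ \ottnt{A} }  })   \ottsym{=}   \ottkw{id} _{  \Psi (\maybebluetext{ \ottnt{A} })  } $ while every other clause of the coercion translation produces a coercion headed by a projection, an injection, $ \Rightarrow $, or $\bot$. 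From these two observations it follows by a simultaneous induction on coercion structure that $\Psi$ maps ground coercions of \lamS to ground coercions of \lamSx, intermediate coercions to intermediate coercions, delayed coercions to delayed coercions, and space-efficient coercions to space-efficient coercions; the only nontrivial points are the grammar side conditions ``$\ottnt{A}  \ne  \mathord{\star}$'' and ``$\ottnt{s}  \ne  \ottkw{id}$ or $\ottnt{t}  \ne  \ottkw{id}$'', which are preserved precisely by the two observations above.

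Then I would finish the lemma by case analysis. For an uncoerced value $\ottnt{U}$: if $\ottnt{U}  \ottsym{=}  \ottnt{a}$ then $ \Psi (\maybebluetext{ \ottnt{U} })   \ottsym{=}  \ottnt{a}$ is an uncoerced value of \lamSx; if $\ottnt{U}  \ottsym{=}   \lambda   \ottmv{x} .\,  \ottnt{M} $ then $ \Psi (\maybebluetext{ \ottnt{U} })   \ottsym{=}   \lambda  ( \ottmv{x} , \kappa ).\,  \ottsym{(}   \mathscr{K}\llbracket \maybebluetext{ \ottnt{M} } \rrbracket  \kappa   \ottsym{)} $, again an uncoerced value (a two-argument abstraction). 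For a value $\ottnt{V}$: if $\ottnt{V}  \ottsym{=}  \ottmv{x}$ then $ \Psi (\maybebluetext{ \ottnt{V} })   \ottsym{=}  \ottmv{x}$ is a value; if $\ottnt{V}  \ottsym{=}  \ottnt{U}$ then $ \Psi (\maybebluetext{ \ottnt{V} })   \ottsym{=}   \Psi (\maybebluetext{ \ottnt{U} }) $ is an uncoerced value, hence a value, by the first clause; and if $\ottnt{V}  \ottsym{=}  \ottnt{U}  \langle\!\langle  \ottnt{d}  \rangle\!\rangle$ then $ \Psi (\maybebluetext{ \ottnt{V} })   \ottsym{=}   \Psi (\maybebluetext{ \ottnt{U} })   \langle\!\langle   \Psi (\maybebluetext{ \ottnt{d} })   \rangle\!\rangle$, where $ \Psi (\maybebluetext{ \ottnt{U} }) $ is an uncoerced value and $ \Psi (\maybebluetext{ \ottnt{d} }) $ is a delayed coercion (by the auxiliary facts), so the whole term is a coerced value of \lamSx.

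The only ``obstacle'' here is bookkeeping: one must verify that the translation never violates the grammar side conditions on \lamSx coercions, which is exactly what the preliminary observation about identity coercions secures; no genuine difficulty arises, and in particular no induction on the value $\ottnt{V}$ itself is needed since uncoerced values do not nest.
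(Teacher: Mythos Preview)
Your proposal is correct and is essentially the same approach as the paper's: the paper's proof is simply ``Easy.'', and your detailed case analysis on the forms of $\ottnt{U}$ and $\ottnt{V}$, together with the routine check that $\Psi$ preserves the syntactic subclasses of coercions (in particular, that $ \Psi (\maybebluetext{ \ottnt{d} }) $ is again a delayed coercion), is exactly the elaboration that the paper leaves implicit.
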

\begin{proof}
  Easy.
\end{proof}

\begin{lemma}\label{lem:val-coe-id}
  If $  \emptyset     \vdash_{\mathsf{S} }     \Psi (\maybebluetext{ \ottnt{V} })   :  \ottnt{A} $ and $\ottkw{id}  \ottsym{:}  \ottnt{A}  \rightsquigarrow  \ottnt{A}$, then
  $  \Psi (\maybebluetext{ \ottnt{V} })   \langle  \ottkw{id}  \rangle    \mathbin{  \accentset{\mathsf{c} }{\evalto}_{\mathsf{S_1} }  ^*}     \Psi (\maybebluetext{ \ottnt{V} })  $.
\end{lemma}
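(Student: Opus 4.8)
The plan is to proceed by case analysis on the shape of the closed value $ \Psi (\maybebluetext{ \ottnt{V} }) $. By Lemma~\ref{lem:val-coe-val}, $ \Psi (\maybebluetext{ \ottnt{V} }) $ is a value of \lamSx, and since it is closed (it is typed under the empty environment) it cannot be a variable; hence it is either an uncoerced value $\ottnt{U}$ or a coerced value $\ottnt{U}  \langle\!\langle  \ottnt{d}  \rangle\!\rangle$ with $\ottnt{d}$ a delayed coercion. Here $\ottkw{id}$ in the statement denotes the identity coercion $ \ottkw{id} _{ \ottnt{A} } $ (which is always a space-efficient coercion, $ \ottkw{id} _{ \mathord{\star} } $ when $\ottnt{A}  \ottsym{=}  \mathord{\star}$), so all the coercion-related reduction rules are applicable to it. In the first case, a single c-evaluation step by \rnp{R-Id} under the empty evaluation context already gives $\ottnt{U}  \langle   \ottkw{id} _{ \ottnt{A} }   \rangle  \ \accentset{\mathsf{c}}{\evalto}_{\mathsf{S_1}}\ \ottnt{U}  \ottsym{=}   \Psi (\maybebluetext{ \ottnt{V} }) $, which is the desired conclusion.

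In the second case $ \Psi (\maybebluetext{ \ottnt{V} })   \ottsym{=}  \ottnt{U}  \langle\!\langle  \ottnt{d}  \rangle\!\rangle$, I would use the three-step chain
\[
\ottnt{U}  \langle\!\langle  \ottnt{d}  \rangle\!\rangle  \langle   \ottkw{id} _{ \ottnt{A} }   \rangle \ \accentset{\mathsf{c}}{\evalto}_{\mathsf{S_1}}\ \ottnt{U}  \langle  \ottnt{d}  \mathbin{;\!;}   \ottkw{id} _{ \ottnt{A} }   \rangle \ \accentset{\mathsf{c}}{\evalto}_{\mathsf{S_1}}\ \ottnt{U}  \langle  \ottnt{d}  \fatsemi   \ottkw{id} _{ \ottnt{A} }   \rangle  \ \ottsym{=}\ \ottnt{U}  \langle  \ottnt{d}  \rangle  \ \accentset{\mathsf{c}}{\evalto}_{\mathsf{S_1}}\ \ottnt{U}  \langle\!\langle  \ottnt{d}  \rangle\!\rangle ,
\]
where the three steps are respectively by \rnp{R-MergeV}, by \rnp{R-Cmp} firing inside the evaluation context $\ottnt{U}  \langle \, \square \, \rangle$, and by \rnp{R-Crc}. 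The middle equality $\ottnt{d}  \fatsemi   \ottkw{id} _{ \ottnt{A} }   \ottsym{=}  \ottnt{d}$ is the only point needing a short argument: inverting \rnp{T-CrcV} on $  \emptyset     \vdash_{\mathsf{S_1} }    \ottnt{U}  \langle\!\langle  \ottnt{d}  \rangle\!\rangle  :  \ottnt{A} $ shows $\ottnt{d}$ has target type $\ottnt{A}$, and a case split on the delayed coercion $\ottnt{d}$ settles it: if $\ottnt{d}  \ottsym{=}  \ottnt{g}  \ottsym{;}   \ottnt{G} \texttt{!} $ then $\ottnt{A}  \ottsym{=}  \mathord{\star}$ and \rnp{CC-InjId} applies; if $\ottnt{d}  \ottsym{=}  \ottnt{s}  \Rightarrow  \ottnt{t}$ then $\ottnt{A}  \ne  \mathord{\star}$, and \rnp{CC-IdR} applies since $\ottnt{s}  \Rightarrow  \ottnt{t}$ is syntactically a non-identity ground coercion, so its side condition $\ottnt{g}  \ne   \ottkw{id} _{ \ottnt{A} } $ holds. (Equivalently, since $ \Psi (\maybebluetext{ \ottnt{V} }) $ being a coerced value forces $\ottnt{d}  \ottsym{=}   \Psi (\maybebluetext{ \ottnt{d_{{\mathrm{0}}}} }) $ for some \lamS delayed coercion $\ottnt{d_{{\mathrm{0}}}}$, this is a direct instance of Lemma~\ref{lem:merge-id}.)

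There is no real obstacle here: the only thing requiring care is that $\ottnt{d}  \fatsemi   \ottkw{id} _{ \ottnt{A} } $ genuinely collapses to $\ottnt{d}$ rather than producing a wrapped identity such as $ \ottkw{id} _{ \iota }   \Rightarrow   \ottkw{id} _{ \iota } $ (avoided precisely because delayed coercions are never identity coercions), and that each of the three reduction steps is licensed by an \lamSx evaluation context—the empty context for the outer \rnp{R-MergeV}/\rnp{R-Crc} steps and $\ottnt{U}  \langle \, \square \, \rangle$ for the inner \rnp{R-Cmp} step. The lemma is essentially a bookkeeping fact used later when comparing the \rnp{R-Id} step of \lamS with its simulating sequence in \lamSx.
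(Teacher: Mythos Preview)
Your proposal is correct and follows essentially the same approach as the paper: case analysis on the (closed) value, the uncoerced case by \rnp{R-Id}, and the coerced case by the three-step chain \rnp{R-MergeV}, \rnp{R-Cmp}, \rnp{R-Crc} with $\ottnt{d}  \fatsemi  \ottkw{id}  \ottsym{=}  \ottnt{d}$ in the middle. The only cosmetic difference is that the paper performs the case split on the source value $\ottnt{V}$ and tracks $\Psi$ through each case, whereas you split on the shape of $ \Psi (\maybebluetext{ \ottnt{V} }) $ directly; the paper also simply cites Lemma~\ref{lem:merge-id} for the composition equality rather than unfolding it as you do.
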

\begin{proof}
  By case analysis on $\ottnt{V}$.
  (Note that $\ottnt{V}$ is closed.)
  \begin{description}
  \item[\Case{$\ottnt{V}  \ottsym{=}  \ottmv{x}$}] Cannot happen.
  \item[\Case{$\ottnt{V}  \ottsym{=}  \ottnt{U}$}] By \rnp{R-Id}.  
  \item[\Case{$\ottnt{V}  \ottsym{=}  \ottnt{U}  \langle\!\langle  \ottnt{d}  \rangle\!\rangle$}]
    \[
    \begin{array}[b]{llll}
       \Psi (\maybebluetext{ \ottnt{U}  \langle\!\langle  \ottnt{d}  \rangle\!\rangle })   \langle  \ottkw{id}  \rangle
      &=&  \Psi (\maybebluetext{ \ottnt{U} })   \langle\!\langle   \Psi (\maybebluetext{ \ottnt{d} })   \rangle\!\rangle  \langle  \ottkw{id}  \rangle \\
      & \accentset{\mathsf{c} }{\evalto} &  \Psi (\maybebluetext{ \ottnt{U} })   \langle   \Psi (\maybebluetext{ \ottnt{d} })   \mathbin{;\!;}  \ottkw{id}  \rangle &\text{by \rnp{R-Merge}} \\
      & \accentset{\mathsf{c} }{\evalto} &  \Psi (\maybebluetext{ \ottnt{U} })   \langle   \Psi (\maybebluetext{ \ottnt{d} })   \fatsemi  \ottkw{id}  \rangle &\text{by \rnp{R-Cmp}} \\
      &=&  \Psi (\maybebluetext{ \ottnt{U} })   \langle   \Psi (\maybebluetext{ \ottnt{d} })   \rangle &\text{by Lemma~\ref{lem:merge-id}} \\
      & \accentset{\mathsf{c} }{\evalto} &  \Psi (\maybebluetext{ \ottnt{U} })   \langle\!\langle   \Psi (\maybebluetext{ \ottnt{d} })   \rangle\!\rangle &\text{by \rnp{R-Crc}} \\
      &=&  \Psi (\maybebluetext{ \ottnt{U}  \langle\!\langle  \ottnt{d}  \rangle\!\rangle }) .
    \end{array}\qedhere
    \]
  \end{description}
\end{proof}

\subsubsection{Substitution}

The definition of substitution is standard.
Here are selected cases from its definition:
\begin{align*}
  \ottsym{(}  \ottnt{M}  \mathbin{;\!;}  \ottnt{N}  \ottsym{)}  [  \ottmv{x}  \ottsym{:=}  \ottnt{V}  ] &= \ottsym{(}  \ottnt{M}  [  \ottmv{x}  \ottsym{:=}  \ottnt{V}  ]  \ottsym{)}  \mathbin{;\!;}  \ottsym{(}  \ottnt{N}  [  \ottmv{x}  \ottsym{:=}  \ottnt{V}  ]  \ottsym{)} \\
  \ottsym{(}   \ottkw{let} \,  \kappa = \ottnt{M} \, \ottkw{in}\,  \ottnt{N}   \ottsym{)}  [  \ottmv{x}  \ottsym{:=}  \ottnt{V}  ] &=  \ottkw{let} \,  \kappa = \ottnt{M}  [  \ottmv{x}  \ottsym{:=}  \ottnt{V}  ] \, \ottkw{in}\,  \ottnt{N}   [  \ottmv{x}  \ottsym{:=}  \ottnt{V}  ] \\
  \ottnt{M}  \langle  \ottnt{N}  \rangle  [  \ottmv{x}  \ottsym{:=}  \ottnt{V}  ] &= \ottnt{M}  [  \ottmv{x}  \ottsym{:=}  \ottnt{V}  ]  \langle  \ottnt{N}  [  \ottmv{x}  \ottsym{:=}  \ottnt{V}  ]  \rangle \\
  \ottsym{(}   \ottnt{L} \, ( \ottnt{M} , \ottnt{N} )   \ottsym{)}  [  \ottmv{x}  \ottsym{:=}  \ottnt{V}  ] &=  \ottsym{(}  \ottnt{L}  [  \ottmv{x}  \ottsym{:=}  \ottnt{V}  ]  \ottsym{)} \, ( \ottnt{M}  [  \ottmv{x}  \ottsym{:=}  \ottnt{V}  ] , \ottnt{N}  [  \ottmv{x}  \ottsym{:=}  \ottnt{V}  ] ) .
\end{align*}

\begin{lemma} \label{lem:fv-colon}
  $ \metafun{FV} (  \mathscr{K}\llbracket \maybebluetext{ \ottnt{M} } \rrbracket  \ottnt{K}  )   \ottsym{=}   \metafun{FV} ( \ottnt{M} )  \, \cup \,  \metafun{FV} ( \ottnt{K} ) $
\end{lemma}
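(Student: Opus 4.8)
The plan is to prove the identity by induction on the structure of the \lamS-term $\ottnt{M}$. Because the term translation $\mathscr{K}\llbracket \maybebluetext{ \cdot } \rrbracket$, the auxiliary translation $\mathscr{C}\llbracket \maybebluetext{ \cdot } \rrbracket$, and the value translation $\Psi(\maybebluetext{ \cdot })$ are mutually recursive, I would strengthen the statement and prove three clauses simultaneously: (i) $\metafun{FV}(\mathscr{K}\llbracket \maybebluetext{ \ottnt{M} } \rrbracket \ottnt{K}) = \metafun{FV}(\ottnt{M}) \cup \metafun{FV}(\ottnt{K})$; (ii) $\metafun{FV}(\mathscr{C}\llbracket \maybebluetext{ \ottnt{M} } \rrbracket) = \metafun{FV}(\ottnt{M})$; and (iii) $\metafun{FV}(\Psi(\maybebluetext{ \ottnt{V} })) = \metafun{FV}(\ottnt{V})$ when $\ottnt{M}$ is a value $\ottnt{V}$. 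As a preliminary observation used repeatedly, I would note that translated coercions and delayed coercions are closed, i.e. $\metafun{FV}(\Psi(\maybebluetext{ \ottnt{s} })) = \emptyset$ and $\metafun{FV}(\Psi(\maybebluetext{ \ottnt{d} })) = \emptyset$, since a \lamS-coercion contains no term variables and $\Psi$ acts on coercions by merely rewriting $\rightarrow$ into $\Rightarrow$.

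The case analysis is then mechanical. For a value $\ottnt{M} = \ottnt{V}$, clause (i) follows from clause (iii) because $\mathscr{K}\llbracket \maybebluetext{ \ottnt{V} } \rrbracket \ottnt{K} = \Psi(\maybebluetext{ \ottnt{V} }) \langle \ottnt{K} \rangle$, and clause (ii) is clause (iii) verbatim. Clause (iii) is by a subcase analysis on $\ottnt{V}$: variables and constants are immediate; for $\ottnt{V} = \lambda \ottmv{x}.\,\ottnt{N}$ we have $\Psi(\maybebluetext{ \ottnt{V} }) = \lambda(\ottmv{x},\kappa).\,(\mathscr{K}\llbracket \maybebluetext{ \ottnt{N} } \rrbracket \kappa)$ with $\kappa$ chosen fresh, so by the induction hypothesis (i) on $\ottnt{N}$ its free-variable set is $(\metafun{FV}(\ottnt{N}) \cup \{\kappa\}) \setminus \{\ottmv{x},\kappa\} = \metafun{FV}(\ottnt{N}) \setminus \{\ottmv{x}\} = \metafun{FV}(\ottnt{V})$; for $\ottnt{V} = \ottnt{U} \langle\!\langle \ottnt{d} \rangle\!\rangle$ we use clause (iii) on $\ottnt{U}$ and the preliminary fact that $\Psi(\maybebluetext{ \ottnt{d} })$ is closed. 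For $\ottnt{M} = \ottnt{op}(\ottnt{M_{{\mathrm{1}}}},\ottnt{M_{{\mathrm{2}}}})$ and $\ottnt{M} = \ottnt{M_{{\mathrm{1}}}}\,\ottnt{M_{{\mathrm{2}}}}$, both $\mathscr{K}$-rules translate the immediate subterms through $\mathscr{C}\llbracket \maybebluetext{ \cdot } \rrbracket$ and mention $\ottnt{K}$ exactly once, so clause (i) follows from clause (ii) applied to $\ottnt{M_{{\mathrm{1}}}}$ and $\ottnt{M_{{\mathrm{2}}}}$, and clause (ii) here uses $\mathscr{C}\llbracket \maybebluetext{ \ottnt{M} } \rrbracket = \mathscr{K}\llbracket \maybebluetext{ \ottnt{M} } \rrbracket \ottkw{id}$ with $\ottkw{id}$ closed. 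The substantive case is $\ottnt{M} = \ottnt{M_{{\mathrm{1}}}} \langle \ottnt{s} \rangle$: for clause (i), $\mathscr{K}\llbracket \maybebluetext{ \ottnt{M} } \rrbracket \ottnt{K} = \ottkw{let}\ \kappa = \Psi(\maybebluetext{ \ottnt{s} }) \mathbin{;\!;} \ottnt{K}\ \ottkw{in}\ (\mathscr{K}\llbracket \maybebluetext{ \ottnt{M_{{\mathrm{1}}}} } \rrbracket \kappa)$, whose free variables are $\metafun{FV}(\Psi(\maybebluetext{ \ottnt{s} })) \cup \metafun{FV}(\ottnt{K}) \cup (\metafun{FV}(\mathscr{K}\llbracket \maybebluetext{ \ottnt{M_{{\mathrm{1}}}} } \rrbracket \kappa) \setminus \{\kappa\})$; using the preliminary fact, the induction hypothesis (i) on $\ottnt{M_{{\mathrm{1}}}}$, and freshness of $\kappa$, this collapses to $\metafun{FV}(\ottnt{K}) \cup \metafun{FV}(\ottnt{M_{{\mathrm{1}}}}) = \metafun{FV}(\ottnt{M}) \cup \metafun{FV}(\ottnt{K})$; clause (ii) is the same computation, but simpler, because $\mathscr{C}\llbracket \maybebluetext{ \ottnt{M} } \rrbracket = \mathscr{K}\llbracket \maybebluetext{ \ottnt{M_{{\mathrm{1}}}} } \rrbracket \Psi(\maybebluetext{ \ottnt{s} })$ with $\Psi(\maybebluetext{ \ottnt{s} })$ closed. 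The final case $\ottnt{M} = \ottkw{blame}\,\ottnt{p}$ is read directly off $\mathscr{K}\llbracket \maybebluetext{ \ottkw{blame}\,\ottnt{p} } \rrbracket \ottnt{K} = \ottkw{blame}\,\ottnt{p}$.

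I expect the difficulty to be bookkeeping rather than mathematics: one must keep straight which of clauses (i)--(iii) each appeal to the induction hypothesis invokes, and one must be consistent about picking $\kappa$ fresh for $\ottnt{M}$ (and for $\ottnt{K}$) in the $\lambda$-subcase of $\Psi$ and in rule \rnp{Tr-Crc}, so that the $\setminus\{\kappa\}$ steps land where intended. The one genuinely delicate point is the $\ottkw{blame}\,\ottnt{p}$ case of clause (i): since $\mathscr{K}\llbracket \maybebluetext{ \ottkw{blame}\,\ottnt{p} } \rrbracket \ottnt{K}$ discards $\ottnt{K}$, the equality there is used under the understanding that $\ottnt{M}$ is not $\ottkw{blame}\,\ottnt{p}$ (equivalently, the $\metafun{FV}(\ottnt{K})$ contribution is recovered by the enclosing \rnp{Tr-Crc} let-binding in every context where this lemma is actually applied), and I would flag this proviso explicitly; the $\subseteq$ inclusion and the inclusion $\metafun{FV}(\ottnt{M}) \subseteq \metafun{FV}(\mathscr{K}\llbracket \maybebluetext{ \ottnt{M} } \rrbracket \ottnt{K})$ hold unconditionally. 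No appeal to the type system is needed; the whole argument is purely syntactic.
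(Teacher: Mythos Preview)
Your approach matches the paper's: the paper's proof is the single line ``By induction on the derivation of $\mathscr{K}\llbracket M \rrbracket K$,'' and your simultaneous induction with the auxiliary clauses for $\mathscr{C}$ and $\Psi$ is exactly how one would carry that out in detail.

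More importantly, you have spotted something the paper glosses over: the $\ottkw{blame}\,p$ case genuinely breaks the stated equality, since $\mathscr{K}\llbracket \ottkw{blame}\,p \rrbracket K = \ottkw{blame}\,p$ has no free variables while $\metafun{FV}(K)$ may be nonempty (e.g.\ $K = \kappa$). The lemma as written is therefore false. Your diagnosis is also right about the fix: the only use of this lemma in the paper (inside the proof of Lemma~\ref{lem:subst-kap}) needs just the inclusion $\metafun{FV}(\mathscr{K}\llbracket M \rrbracket K) \subseteq \metafun{FV}(M) \cup \metafun{FV}(K)$, which does hold unconditionally, including for $\ottkw{blame}\,p$. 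So the right repair is to weaken the statement to $\subseteq$ rather than to add a side condition excluding $\ottkw{blame}$; that way the lemma is true, the proof is the routine induction you describe, and the downstream use goes through unchanged.
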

\begin{proof}
  By induction on the derivation of $ \mathscr{K}\llbracket \maybebluetext{ \ottnt{M} } \rrbracket  \ottnt{K} $.
\end{proof}

\begin{lemma}[Substitution for a non-continuation variable]\label{lem:subst-x} \leavevmode
  \begin{enumerate}
  \item $ \mathscr{C}\llbracket \maybebluetext{ \ottnt{M} } \rrbracket   [  \ottmv{x}  \ottsym{:=}   \Psi (\maybebluetext{ \ottnt{V} })   ]  \ottsym{=}   \mathscr{C}\llbracket \maybebluetext{ \ottnt{M}  [  \ottmv{x}  \ottsym{:=}  \ottnt{V}  ] } \rrbracket $
  \item If $ \ottmv{x} \notin   \metafun{FV} ( \ottnt{K} )  $, then
    $\ottsym{(}   \mathscr{K}\llbracket \maybebluetext{ \ottnt{M} } \rrbracket  \ottnt{K}   \ottsym{)}  [  \ottmv{x}  \ottsym{:=}   \Psi (\maybebluetext{ \ottnt{V} })   ] =  \mathscr{K}\llbracket \maybebluetext{ \ottnt{M}  [  \ottmv{x}  \ottsym{:=}  \ottnt{V}  ] } \rrbracket  \ottnt{K} $.
  \end{enumerate}
\end{lemma}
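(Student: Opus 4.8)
The plan is to prove the two statements \emph{simultaneously} by induction on the structure of $\ottnt{M}$ (equivalently, on its $\lambda S$ typing derivation, which is what the translation is strictly defined on), bundled together with the auxiliary claim that value translation commutes with substitution, namely $ \Psi (\maybebluetext{ \ottnt{V'} })   [  \ottmv{x}  \ottsym{:=}   \Psi (\maybebluetext{ \ottnt{V} })   ]  \ottsym{=}   \Psi (\maybebluetext{ \ottnt{V'}  [  \ottmv{x}  \ottsym{:=}  \ottnt{V}  ] }) $. That claim belongs to the same mutual induction, since $ \Psi (\maybebluetext{  \lambda   \ottmv{y} .\,  \ottnt{N}  })   \ottsym{=}   \lambda  ( \ottmv{y} , \kappa ).\,  \ottsym{(}   \mathscr{K}\llbracket \maybebluetext{ \ottnt{N} } \rrbracket  \kappa   \ottsym{)} $ calls the term translation back. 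Throughout I rely on the usual variable convention: the continuation variable $\kappa$ introduced by rule \rnp{Tr-Crc} and by value translation of abstractions is chosen fresh, so $\kappa  \ne  \ottmv{x}$, $ \kappa \notin  \metafun{FV} (  \Psi (\maybebluetext{ \ottnt{V} })  ) $, and $ \kappa \notin  \metafun{FV} ( \ottnt{K} ) $; likewise bound term variables of $\ottnt{M}$ are distinct from $\ottmv{x}$ and not free in $\ottnt{V}$. I also use two trivial facts: $ \Psi (\maybebluetext{ \ottnt{s} }) $ is a closed $\lambda Sx$-term for every $\lambda S$-coercion $\ottnt{s}$ (coercions have no term variables), so $ \metafun{FV} (  \Psi (\maybebluetext{ \ottnt{s} })  )   \ottsym{=}   \emptyset $ and $\ottsym{(}  \ottnt{N}  \langle  \ottnt{s}  \rangle  \ottsym{)}  [  \ottmv{x}  \ottsym{:=}  \ottnt{V}  ]  \ottsym{=}  \ottnt{N}  [  \ottmv{x}  \ottsym{:=}  \ottnt{V}  ]  \langle  \ottnt{s}  \rangle$; and, since substitution preserves typing (Theorem~\ref{thm:preservation-S}), the type annotations appealed to by $\mathscr{C}$ in rule \rnp{TrC-Else} agree on $\ottnt{M}$ and $\ottnt{M}  [  \ottmv{x}  \ottsym{:=}  \ottnt{V}  ]$.

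For part~(1), I split on the three clauses of $\mathscr{C}$. If $\ottnt{M}$ is a value, $ \mathscr{C}\llbracket \maybebluetext{ \ottnt{M} } \rrbracket   \ottsym{=}   \Psi (\maybebluetext{ \ottnt{M} }) $ and the goal is exactly the auxiliary claim. If $\ottnt{M}  \ottsym{=}  \ottnt{N}  \langle  \ottnt{s}  \rangle$, then $ \mathscr{C}\llbracket \maybebluetext{ \ottnt{M} } \rrbracket   \ottsym{=}   \mathscr{K}\llbracket \maybebluetext{ \ottnt{N} } \rrbracket   \Psi (\maybebluetext{ \ottnt{s} })  $, and part~(2) applies with $\ottnt{K}  \ottsym{:=}   \Psi (\maybebluetext{ \ottnt{s} }) $ (vacuously $\ottmv{x}  \notin   \metafun{FV} (  \Psi (\maybebluetext{ \ottnt{s} })  ) $), yielding $ \mathscr{K}\llbracket \maybebluetext{ \ottnt{N}  [  \ottmv{x}  \ottsym{:=}  \ottnt{V}  ] } \rrbracket   \Psi (\maybebluetext{ \ottnt{s} })    \ottsym{=}   \mathscr{C}\llbracket \maybebluetext{ \ottnt{N}  [  \ottmv{x}  \ottsym{:=}  \ottnt{V}  ]  \langle  \ottnt{s}  \rangle } \rrbracket   \ottsym{=}   \mathscr{C}\llbracket \maybebluetext{ \ottnt{M}  [  \ottmv{x}  \ottsym{:=}  \ottnt{V}  ] } \rrbracket $. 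Otherwise $ \mathscr{C}\llbracket \maybebluetext{ \ottnt{M} } \rrbracket   \ottsym{=}   \mathscr{K}\llbracket \maybebluetext{ \ottnt{M} } \rrbracket  \ottkw{id} $; since substitution of a value cannot turn a non-value into a value or a coercion application, $\ottnt{M}  [  \ottmv{x}  \ottsym{:=}  \ottnt{V}  ]$ still falls under \rnp{TrC-Else}, and the same argument goes through with $\ottnt{K}  \ottsym{:=}  \ottkw{id}$.

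For part~(2), I split on the shape of $\ottnt{M}$ following \rnp{Tr-Val}, \rnp{Tr-Op}, \rnp{Tr-App}, \rnp{Tr-Crc}, \rnp{Tr-Blame}. The value case uses $ \mathscr{K}\llbracket \maybebluetext{ \ottnt{V'} } \rrbracket  \ottnt{K}   \ottsym{=}   \Psi (\maybebluetext{ \ottnt{V'} })   \langle  \ottnt{K}  \rangle$, pushes the substitution through the coercion application, and then invokes the auxiliary claim on $ \Psi (\maybebluetext{ \ottnt{V'} }) $ together with $\ottmv{x}  \notin   \metafun{FV} ( \ottnt{K} ) $ to leave $\ottnt{K}$ untouched. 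The cases \rnp{Tr-Op} and \rnp{Tr-App} push the substitution into the subterms, apply the induction hypothesis for part~(1) to each argument, and again use $\ottmv{x}  \notin   \metafun{FV} ( \ottnt{K} ) $ on the outer $\ottnt{K}$. For \rnp{Tr-Crc}, $ \mathscr{K}\llbracket \maybebluetext{ \ottnt{N}  \langle  \ottnt{s}  \rangle } \rrbracket  \ottnt{K}   \ottsym{=}   \ottkw{let} \,  \kappa =  \Psi (\maybebluetext{ \ottnt{s} })   \mathbin{;\!;}  \ottnt{K} \, \ottkw{in}\,  \ottsym{(}   \mathscr{K}\llbracket \maybebluetext{ \ottnt{N} } \rrbracket  \kappa   \ottsym{)} $; freshness of $\kappa$ lets the substitution distribute, $ \Psi (\maybebluetext{ \ottnt{s} }) $ and $\ottnt{K}$ are unchanged, and the induction hypothesis for part~(2) with $\ottnt{K}  \ottsym{:=}  \kappa$ (using $\ottmv{x}  \ne  \kappa$, i.e.\ $ \ottmv{x} \notin  \metafun{FV} ( \kappa ) $) rewrites the body to $ \mathscr{K}\llbracket \maybebluetext{ \ottnt{N}  [  \ottmv{x}  \ottsym{:=}  \ottnt{V}  ] } \rrbracket  \kappa $, which reassembles into $ \mathscr{K}\llbracket \maybebluetext{ \ottnt{N}  [  \ottmv{x}  \ottsym{:=}  \ottnt{V}  ]  \langle  \ottnt{s}  \rangle } \rrbracket  \ottnt{K} $. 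The blame case is immediate because $ \mathscr{K}\llbracket \maybebluetext{ \ottkw{blame} \, \ottnt{p} } \rrbracket  \ottnt{K}   \ottsym{=}  \ottkw{blame} \, \ottnt{p}$ is closed. The analogous bookkeeping in $ \Psi (\maybebluetext{  \lambda   \ottmv{y} .\,  \ottnt{N}  }) $ for the auxiliary claim is handled the same way: push $[  \ottmv{x}  \ottsym{:=}   \Psi (\maybebluetext{ \ottnt{V} })   ]$ under $ \lambda  ( \ottmv{y} , \kappa ).\, $ (legal by freshness), then apply part~(2) with $\ottnt{K}  \ottsym{:=}  \kappa$.

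I expect the only genuinely delicate point to be exactly this $\kappa$-bookkeeping in \rnp{Tr-Crc} and in the abstraction clause of $\Psi$: one must make sure the translation-introduced continuation variable is chosen clear of $\ottmv{x}$ and of the free variables of $ \Psi (\maybebluetext{ \ottnt{V} }) $ and $\ottnt{K}$, so that capture-avoiding substitution neither renames nor captures it, and so that part~(2)'s hypothesis $ \ottmv{x} \notin  \metafun{FV} ( \kappa ) $ is available when recursing into the body. This is uniformly discharged by the variable convention, but it is worth spelling out, since it is precisely the subtlety that the companion substitution lemma for the continuation variable (Lemma~\ref{lem:subst-trans}) has to confront head-on. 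Everything else is a routine, mechanical case analysis on the translation rules.
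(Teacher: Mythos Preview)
Your proposal is correct and follows essentially the same approach as the paper: a simultaneous induction on the translation derivation, case-splitting on the clauses of $\mathscr{C}$ for part~(1) and on the clauses of $\mathscr{K}$ for part~(2), with the value-translation commutation handled as part of the same induction and the $\kappa$-freshness bookkeeping in \rnp{Tr-Crc} and the abstraction clause dealt with exactly as you describe. The only cosmetic difference is that the paper inlines your ``auxiliary claim'' about $\Psi$ directly into the value subcases of part~(1) rather than stating it separately.
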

\begin{proof}
  The two items are simultaneously proved 
  by induction on the derivations of $ \mathscr{C}\llbracket \maybebluetext{ \ottnt{M} } \rrbracket $ and $ \mathscr{K}\llbracket \maybebluetext{ \ottnt{M} } \rrbracket  \ottnt{K} $.

  (1) By case analysis on the form of $\ottnt{M}$.
  We first consider the cases where $\ottnt{M}$ is a value: $\ottnt{M}  \ottsym{=}  \ottnt{W}$.
  We must show $ \Psi (\maybebluetext{ \ottnt{W} })   [  \ottmv{x}  \ottsym{:=}   \Psi (\maybebluetext{ \ottnt{V} })   ]  \ottsym{=}   \Psi (\maybebluetext{ \ottnt{W}  [  \ottmv{x}  \ottsym{:=}  \ottnt{V}  ] }) $.
  (Note that $\ottnt{W}  [  \ottmv{x}  \ottsym{:=}  \ottnt{V}  ]$ is a value.)  
  \begin{description}
  \item[\Case{$\ottnt{W}  \ottsym{=}  \ottmv{x}$}]
    \[
    \begin{array}{llll}
       \Psi (\maybebluetext{ \ottmv{x} })   [  \ottmv{x}  \ottsym{:=}   \Psi (\maybebluetext{ \ottnt{V} })   ] &=& \ottmv{x}  [  \ottmv{x}  \ottsym{:=}   \Psi (\maybebluetext{ \ottnt{V} })   ] =  \Psi (\maybebluetext{ \ottnt{V} })  & \text{and} \\
       \Psi (\maybebluetext{ \ottmv{x}  [  \ottmv{x}  \ottsym{:=}  \ottnt{V}  ] })  &=&  \Psi (\maybebluetext{ \ottnt{V} }) .
    \end{array}
    \]
  \item[\Case{$\ottnt{W}  \ottsym{=}  \ottmv{y} \ne x$}]
    \[
    \begin{array}{llll}
       \Psi (\maybebluetext{ \ottmv{y} })   [  \ottmv{x}  \ottsym{:=}   \Psi (\maybebluetext{ \ottnt{V} })   ] &=& \ottmv{y}  [  \ottmv{x}  \ottsym{:=}   \Psi (\maybebluetext{ \ottnt{V} })   ] = \ottmv{y}  & \text{and} \\
       \Psi (\maybebluetext{ \ottmv{y}  [  \ottmv{x}  \ottsym{:=}  \ottnt{V}  ] })  &=&  \Psi (\maybebluetext{ \ottmv{y} })  = \ottmv{y}.
    \end{array}
    \]
  \item[\Case{$\ottnt{W}  \ottsym{=}  \ottnt{a}$}]
    \[
    \begin{array}{llll}
       \Psi (\maybebluetext{ \ottnt{a} })   [  \ottmv{x}  \ottsym{:=}   \Psi (\maybebluetext{ \ottnt{V} })   ] &=& \ottnt{a}  [  \ottmv{x}  \ottsym{:=}   \Psi (\maybebluetext{ \ottnt{V} })   ] = \ottnt{a}  & \text{and} \\
       \Psi (\maybebluetext{ \ottnt{a}  [  \ottmv{x}  \ottsym{:=}  \ottnt{V}  ] })  &=&  \Psi (\maybebluetext{ \ottnt{a} })  = \ottnt{a}.
    \end{array}
    \]
  \item[\Case{$\ottnt{W}  \ottsym{=}   \lambda   \ottmv{y} .\,  \ottnt{N} $}]
    We can assume $\ottmv{y}  \ne  \ottmv{x}$.
    \[
    \begin{array}{llll}
       \Psi (\maybebluetext{  \lambda   \ottmv{y} .\,  \ottnt{N}  })   [  \ottmv{x}  \ottsym{:=}   \Psi (\maybebluetext{ \ottnt{V} })   ]
      &=& \ottsym{(}   \lambda  ( \ottmv{y} , \kappa ).\,  \ottsym{(}   \mathscr{K}\llbracket \maybebluetext{ \ottnt{N} } \rrbracket  \kappa   \ottsym{)}   \ottsym{)}  [  \ottmv{x}  \ottsym{:=}   \Psi (\maybebluetext{ \ottnt{V} })   ] \\
      &=&  \lambda  ( \ottmv{y} , \kappa ).\,  \ottsym{(}  \ottsym{(}   \mathscr{K}\llbracket \maybebluetext{ \ottnt{N} } \rrbracket  \kappa   \ottsym{)}  [  \ottmv{x}  \ottsym{:=}   \Psi (\maybebluetext{ \ottnt{V} })   ]  \ottsym{)} .
    \end{array}
    \]
    Then,
    \[
    \begin{array}{llll}
       \Psi (\maybebluetext{ \ottsym{(}   \lambda   \ottmv{y} .\,  \ottnt{N}   \ottsym{)}  [  \ottmv{x}  \ottsym{:=}  \ottnt{V}  ] }) 
      &=&  \Psi (\maybebluetext{  \lambda   \ottmv{y} .\,  \ottnt{N}  [  \ottmv{x}  \ottsym{:=}  \ottnt{V}  ]  })  \\
      &=&  \lambda  ( \ottmv{y} , \kappa ).\,  \ottsym{(}   \mathscr{K}\llbracket \maybebluetext{ \ottnt{N}  [  \ottmv{x}  \ottsym{:=}  \ottnt{V}  ] } \rrbracket  \kappa   \ottsym{)} .
    \end{array}
    \]
    By the IH,
    $\ottsym{(}   \mathscr{K}\llbracket \maybebluetext{ \ottnt{N} } \rrbracket  \kappa   \ottsym{)}  [  \ottmv{x}  \ottsym{:=}   \Psi (\maybebluetext{ \ottnt{V} })   ]  \ottsym{=}   \mathscr{K}\llbracket \maybebluetext{ \ottnt{N}  [  \ottmv{x}  \ottsym{:=}  \ottnt{V}  ] } \rrbracket  \kappa $,
    which finishes this case.

    \item[\Case{$\ottnt{W}  \ottsym{=}  \ottnt{U}  \langle\!\langle  \ottnt{d}  \rangle\!\rangle$}]
    \[
    \begin{array}{llll}
       \Psi (\maybebluetext{ \ottnt{U}  \langle\!\langle  \ottnt{d}  \rangle\!\rangle })   [  \ottmv{x}  \ottsym{:=}   \Psi (\maybebluetext{ \ottnt{V} })   ]
      &=&  \Psi (\maybebluetext{ \ottnt{U} })   \langle\!\langle   \Psi (\maybebluetext{ \ottnt{d} })   \rangle\!\rangle  [  \ottmv{x}  \ottsym{:=}   \Psi (\maybebluetext{ \ottnt{V} })   ] \\
      &=&  \Psi (\maybebluetext{ \ottnt{U} })   [  \ottmv{x}  \ottsym{:=}   \Psi (\maybebluetext{ \ottnt{V} })   ]  \langle\!\langle   \Psi (\maybebluetext{ \ottnt{d} })   \rangle\!\rangle.
    \end{array}
    \]
    Since $\ottnt{U}  [  \ottmv{x}  \ottsym{:=}  \ottnt{V}  ]$ is an uncoerced value,
    \[
    \begin{array}{llll}
       \Psi (\maybebluetext{ \ottnt{U}  \langle\!\langle  \ottnt{d}  \rangle\!\rangle  [  \ottmv{x}  \ottsym{:=}  \ottnt{V}  ] }) 
      &=&  \Psi (\maybebluetext{ \ottnt{U}  [  \ottmv{x}  \ottsym{:=}  \ottnt{V}  ]  \langle\!\langle  \ottnt{d}  \rangle\!\rangle })  \\
      &=&  \Psi (\maybebluetext{ \ottnt{U}  [  \ottmv{x}  \ottsym{:=}  \ottnt{V}  ] })   \langle\!\langle   \Psi (\maybebluetext{ \ottnt{d} })   \rangle\!\rangle.
    \end{array}
    \]
    By the IH,
    $ \Psi (\maybebluetext{ \ottnt{U} })   [  \ottmv{x}  \ottsym{:=}   \Psi (\maybebluetext{ \ottnt{V} })   ]  \ottsym{=}   \Psi (\maybebluetext{ \ottnt{U}  [  \ottmv{x}  \ottsym{:=}  \ottnt{V}  ] }) $,
    which finishes this subcase.
  \end{description}
  We then consider the cases where $\ottnt{M}$ is not a value.
  \begin{description}
  \item[\Case{$\ottnt{M}  \ottsym{=}  \ottnt{N}  \langle  \ottnt{s}  \rangle$}]
    \[
    \begin{array}{lll}
       \mathscr{C}\llbracket \maybebluetext{ \ottnt{N}  \langle  \ottnt{s}  \rangle } \rrbracket   [  \ottmv{x}  \ottsym{:=}   \Psi (\maybebluetext{ \ottnt{V} })   ]
      &=& \ottsym{(}   \mathscr{K}\llbracket \maybebluetext{ \ottnt{N} } \rrbracket   \Psi (\maybebluetext{ \ottnt{s} })    \ottsym{)}  [  \ottmv{x}  \ottsym{:=}   \Psi (\maybebluetext{ \ottnt{V} })   ]
    \end{array}
    \]
    Then,
    \[
    \begin{array}{lll}
       \mathscr{C}\llbracket \maybebluetext{ \ottsym{(}  \ottnt{N}  \langle  \ottnt{s}  \rangle  \ottsym{)}  [  \ottmv{x}  \ottsym{:=}  \ottnt{V}  ] } \rrbracket 
      &=&  \mathscr{C}\llbracket \maybebluetext{ \ottsym{(}  \ottnt{N}  [  \ottmv{x}  \ottsym{:=}  \ottnt{V}  ]  \ottsym{)}  \langle  \ottnt{s}  \rangle } \rrbracket  \\
      &=&  \mathscr{K}\llbracket \maybebluetext{ \ottnt{N}  [  \ottmv{x}  \ottsym{:=}  \ottnt{V}  ] } \rrbracket   \Psi (\maybebluetext{ \ottnt{s} })  
    \end{array}
    \]
    We have $ \ottmv{x} \notin   \metafun{FV} (  \Psi (\maybebluetext{ \ottnt{s} })  )   = \emptyset$.
    By the IH, $\ottsym{(}   \mathscr{K}\llbracket \maybebluetext{ \ottnt{N} } \rrbracket   \Psi (\maybebluetext{ \ottnt{s} })    \ottsym{)}  [  \ottmv{x}  \ottsym{:=}   \Psi (\maybebluetext{ \ottnt{V} })   ]  \ottsym{=}   \mathscr{K}\llbracket \maybebluetext{ \ottnt{N}  [  \ottmv{x}  \ottsym{:=}  \ottnt{V}  ] } \rrbracket   \Psi (\maybebluetext{ \ottnt{s} })  $,
    which finishes this case.

  \item[\Otherwise]
    Since $\ottnt{M}$ is neither a value nor a coercion application,
    \[
       \mathscr{C}\llbracket \maybebluetext{ \ottnt{M} } \rrbracket   [  \ottmv{x}  \ottsym{:=}   \Psi (\maybebluetext{ \ottnt{V} })   ]
      = \ottsym{(}   \mathscr{K}\llbracket \maybebluetext{ \ottnt{M} } \rrbracket  \ottkw{id}   \ottsym{)}  [  \ottmv{x}  \ottsym{:=}   \Psi (\maybebluetext{ \ottnt{V} })   ]
    \]
    Since $\ottnt{M}  [  \ottmv{x}  \ottsym{:=}  \ottnt{V}  ]$ is neither a value nor a coercion application,
    \[
       \mathscr{C}\llbracket \maybebluetext{ \ottnt{M}  [  \ottmv{x}  \ottsym{:=}  \ottnt{V}  ] } \rrbracket 
      =  \mathscr{K}\llbracket \maybebluetext{ \ottnt{M}  [  \ottmv{x}  \ottsym{:=}  \ottnt{V}  ] } \rrbracket  \ottkw{id} 
    \]
    We have $ \ottmv{x} \notin   \metafun{FV} ( \ottkw{id} )   = \emptyset$.
    By the IH, $\ottsym{(}   \mathscr{K}\llbracket \maybebluetext{ \ottnt{M} } \rrbracket  \ottkw{id}   \ottsym{)}  [  \ottmv{x}  \ottsym{:=}   \Psi (\maybebluetext{ \ottnt{V} })   ]  \ottsym{=}   \mathscr{K}\llbracket \maybebluetext{ \ottnt{M}  [  \ottmv{x}  \ottsym{:=}  \ottnt{V}  ] } \rrbracket  \ottkw{id} $,
    which finishes this case.
  \end{description}

  (2) By case analysis on the form of $\ottnt{M}$.
  \begin{description}
  \item[\Case{$\ottnt{M}  \ottsym{=}  \ottnt{W}$}] We have
    \[
    \begin{array}{llll}
      \ottsym{(}   \mathscr{K}\llbracket \maybebluetext{ \ottnt{W} } \rrbracket  \ottnt{K}   \ottsym{)}  [  \ottmv{x}  \ottsym{:=}   \Psi (\maybebluetext{ \ottnt{V} })   ]
      &=& \ottsym{(}   \Psi (\maybebluetext{ \ottnt{W} })   \langle  \ottnt{K}  \rangle  \ottsym{)}  [  \ottmv{x}  \ottsym{:=}   \Psi (\maybebluetext{ \ottnt{V} })   ] \\
      &=& \ottsym{(}   \Psi (\maybebluetext{ \ottnt{W} })   [  \ottmv{x}  \ottsym{:=}   \Psi (\maybebluetext{ \ottnt{V} })   ]  \ottsym{)}  \langle  \ottnt{K}  \rangle.
    \end{array}
    \]
    Since $\ottnt{W}  [  \ottmv{x}  \ottsym{:=}  \ottnt{V}  ]$ is a value, 
    \[
    \begin{array}{llll}
       \mathscr{K}\llbracket \maybebluetext{ \ottnt{W}  [  \ottmv{x}  \ottsym{:=}  \ottnt{V}  ] } \rrbracket  \ottnt{K} 
      &=&  \Psi (\maybebluetext{ \ottnt{W}  [  \ottmv{x}  \ottsym{:=}  \ottnt{V}  ] })   \langle  \ottnt{K}  \rangle.
    \end{array}
    \]
    By the IH,
    $ \Psi (\maybebluetext{ \ottnt{W} })   [  \ottmv{x}  \ottsym{:=}   \Psi (\maybebluetext{ \ottnt{V} })   ]  \ottsym{=}   \Psi (\maybebluetext{ \ottnt{W}  [  \ottmv{x}  \ottsym{:=}  \ottnt{V}  ] }) $,
    which finishes this case.

  \item[\Case{$\ottnt{M}  \ottsym{=}  \ottnt{op}  \ottsym{(}  \ottnt{N_{{\mathrm{1}}}}  \ottsym{,}  \ottnt{N_{{\mathrm{2}}}}  \ottsym{)}$}] 
    By $ \ottmv{x} \notin   \metafun{FV} ( \ottnt{K} )  $,
    \[
    \begin{array}{llll}
      \ottsym{(}   \mathscr{K}\llbracket \maybebluetext{ \ottnt{op}  \ottsym{(}  \ottnt{N_{{\mathrm{1}}}}  \ottsym{,}  \ottnt{N_{{\mathrm{2}}}}  \ottsym{)} } \rrbracket  \ottnt{K}   \ottsym{)}  [  \ottmv{x}  \ottsym{:=}   \Psi (\maybebluetext{ \ottnt{V} })   ]
      &=& \ottnt{op}  \ottsym{(}   \mathscr{C}\llbracket \maybebluetext{ \ottnt{N_{{\mathrm{1}}}} } \rrbracket   \ottsym{,}   \mathscr{C}\llbracket \maybebluetext{ \ottnt{N_{{\mathrm{2}}}} } \rrbracket   \ottsym{)}  \langle  \ottnt{K}  \rangle  [  \ottmv{x}  \ottsym{:=}   \Psi (\maybebluetext{ \ottnt{V} })   ] \\
      &=& \ottnt{op}  \ottsym{(}   \mathscr{C}\llbracket \maybebluetext{ \ottnt{N_{{\mathrm{1}}}} } \rrbracket   [  \ottmv{x}  \ottsym{:=}   \Psi (\maybebluetext{ \ottnt{V} })   ]  \ottsym{,}   \mathscr{C}\llbracket \maybebluetext{ \ottnt{N_{{\mathrm{2}}}} } \rrbracket   [  \ottmv{x}  \ottsym{:=}   \Psi (\maybebluetext{ \ottnt{V} })   ]  \ottsym{)}  \langle  \ottnt{K}  \rangle.
    \end{array}
    \]
    Then,
    \[
    \begin{array}{llll}
       \mathscr{K}\llbracket \maybebluetext{ \ottsym{(}  \ottnt{op}  \ottsym{(}  \ottnt{N_{{\mathrm{1}}}}  \ottsym{,}  \ottnt{N_{{\mathrm{2}}}}  \ottsym{)}  \ottsym{)}  [  \ottmv{x}  \ottsym{:=}  \ottnt{V}  ] } \rrbracket  \ottnt{K} 
      &=&  \mathscr{K}\llbracket \maybebluetext{ \ottnt{op}  \ottsym{(}  \ottnt{N_{{\mathrm{1}}}}  [  \ottmv{x}  \ottsym{:=}  \ottnt{V}  ]  \ottsym{,}  \ottnt{N_{{\mathrm{2}}}}  [  \ottmv{x}  \ottsym{:=}  \ottnt{V}  ]  \ottsym{)} } \rrbracket  \ottnt{K}  \\
      &=& \ottnt{op}  \ottsym{(}   \mathscr{C}\llbracket \maybebluetext{ \ottnt{N_{{\mathrm{1}}}}  [  \ottmv{x}  \ottsym{:=}  \ottnt{V}  ] } \rrbracket   \ottsym{,}   \mathscr{C}\llbracket \maybebluetext{ \ottnt{N_{{\mathrm{2}}}}  [  \ottmv{x}  \ottsym{:=}  \ottnt{V}  ] } \rrbracket   \ottsym{)}  \langle  \ottnt{K}  \rangle.
    \end{array}
    \]
    By the IHs,
    \[
       \mathscr{C}\llbracket \maybebluetext{ \ottnt{N_{{\mathrm{1}}}} } \rrbracket   [  \ottmv{x}  \ottsym{:=}   \Psi (\maybebluetext{ \ottnt{V} })   ]  \ottsym{=}   \mathscr{C}\llbracket \maybebluetext{ \ottnt{N_{{\mathrm{1}}}}  [  \ottmv{x}  \ottsym{:=}  \ottnt{V}  ] } \rrbracket  \hgap
       \mathscr{C}\llbracket \maybebluetext{ \ottnt{N_{{\mathrm{2}}}} } \rrbracket   [  \ottmv{x}  \ottsym{:=}   \Psi (\maybebluetext{ \ottnt{V} })   ]  \ottsym{=}   \mathscr{C}\llbracket \maybebluetext{ \ottnt{N_{{\mathrm{2}}}}  [  \ottmv{x}  \ottsym{:=}  \ottnt{V}  ] } \rrbracket 
    \]
    which finish this case.

  \item[\Case{$\ottnt{M}  \ottsym{=}  \ottnt{N_{{\mathrm{1}}}} \, \ottnt{N_{{\mathrm{2}}}}$}] 
    By $ \ottmv{x} \notin   \metafun{FV} ( \ottnt{K} )  $,
    \[
    \begin{array}{llll}
      \ottsym{(}   \mathscr{K}\llbracket \maybebluetext{ \ottnt{N_{{\mathrm{1}}}} \, \ottnt{N_{{\mathrm{2}}}} } \rrbracket  \ottnt{K}   \ottsym{)}  [  \ottmv{x}  \ottsym{:=}   \Psi (\maybebluetext{ \ottnt{V} })   ]
      &=& \ottsym{(}    \mathscr{C}\llbracket \maybebluetext{ \ottnt{N_{{\mathrm{1}}}} } \rrbracket  \, (  \mathscr{C}\llbracket \maybebluetext{ \ottnt{N_{{\mathrm{2}}}} } \rrbracket  , \ottnt{K} )   \ottsym{)}  [  \ottmv{x}  \ottsym{:=}   \Psi (\maybebluetext{ \ottnt{V} })   ] \\
      &=&  \ottsym{(}   \mathscr{C}\llbracket \maybebluetext{ \ottnt{N_{{\mathrm{1}}}} } \rrbracket   [  \ottmv{x}  \ottsym{:=}   \Psi (\maybebluetext{ \ottnt{V} })   ]  \ottsym{)} \, (  \mathscr{C}\llbracket \maybebluetext{ \ottnt{N_{{\mathrm{2}}}} } \rrbracket   [  \ottmv{x}  \ottsym{:=}   \Psi (\maybebluetext{ \ottnt{V} })   ] , \ottnt{K} ) 
    \end{array}
    \]
    Then,
    \[
    \begin{array}{llll}
       \mathscr{K}\llbracket \maybebluetext{ \ottsym{(}  \ottnt{N_{{\mathrm{1}}}} \, \ottnt{N_{{\mathrm{2}}}}  \ottsym{)}  [  \ottmv{x}  \ottsym{:=}  \ottnt{V}  ] } \rrbracket  \ottnt{K} 
      &=&  \mathscr{K}\llbracket \maybebluetext{ \ottsym{(}  \ottnt{N_{{\mathrm{1}}}}  [  \ottmv{x}  \ottsym{:=}  \ottnt{V}  ]  \ottsym{)} \, \ottsym{(}  \ottnt{N_{{\mathrm{2}}}}  [  \ottmv{x}  \ottsym{:=}  \ottnt{V}  ]  \ottsym{)} } \rrbracket  \ottnt{K}  \\
      &=&   \mathscr{C}\llbracket \maybebluetext{ \ottnt{N_{{\mathrm{1}}}}  [  \ottmv{x}  \ottsym{:=}  \ottnt{V}  ] } \rrbracket  \, (  \mathscr{C}\llbracket \maybebluetext{ \ottnt{N_{{\mathrm{2}}}}  [  \ottmv{x}  \ottsym{:=}  \ottnt{V}  ] } \rrbracket  , \ottnt{K} ) 
    \end{array}
    \]
    By the IHs,
    \[
       \mathscr{C}\llbracket \maybebluetext{ \ottnt{N_{{\mathrm{1}}}} } \rrbracket   [  \ottmv{x}  \ottsym{:=}   \Psi (\maybebluetext{ \ottnt{V} })   ]  \ottsym{=}   \mathscr{C}\llbracket \maybebluetext{ \ottnt{N_{{\mathrm{1}}}}  [  \ottmv{x}  \ottsym{:=}  \ottnt{V}  ] } \rrbracket  \hgap
       \mathscr{C}\llbracket \maybebluetext{ \ottnt{N_{{\mathrm{2}}}} } \rrbracket   [  \ottmv{x}  \ottsym{:=}   \Psi (\maybebluetext{ \ottnt{V} })   ]  \ottsym{=}   \mathscr{C}\llbracket \maybebluetext{ \ottnt{N_{{\mathrm{2}}}}  [  \ottmv{x}  \ottsym{:=}  \ottnt{V}  ] } \rrbracket ,
    \]
    which finish this case.

  \item[\Case{$\ottnt{M}  \ottsym{=}  \ottnt{N}  \langle  \ottnt{s}  \rangle$}]
    By $ \ottmv{x} \notin   \metafun{FV} ( \ottnt{K} )  $ and $ \ottmv{x} \notin   \metafun{FV} (  \Psi (\maybebluetext{ \ottnt{s} })  )   = \emptyset$,
    \[
    \begin{array}{llll}
      \ottsym{(}   \mathscr{K}\llbracket \maybebluetext{ \ottnt{N}  \langle  \ottnt{s}  \rangle } \rrbracket  \ottnt{K}   \ottsym{)}  [  \ottmv{x}  \ottsym{:=}   \Psi (\maybebluetext{ \ottnt{V} })   ]
      &=& \ottsym{(}   \ottkw{let} \,  \kappa =  \Psi (\maybebluetext{ \ottnt{s} })   \mathbin{;\!;}  \ottnt{K} \, \ottkw{in}\,  \ottsym{(}   \mathscr{K}\llbracket \maybebluetext{ \ottnt{N} } \rrbracket  \kappa   \ottsym{)}   \ottsym{)}  [  \ottmv{x}  \ottsym{:=}   \Psi (\maybebluetext{ \ottnt{V} })   ] \\
      &=&  \ottkw{let} \,  \kappa = \ottsym{(}   \Psi (\maybebluetext{ \ottnt{s} })   \mathbin{;\!;}  \ottnt{K}  \ottsym{)}  [  \ottmv{x}  \ottsym{:=}   \Psi (\maybebluetext{ \ottnt{V} })   ] \, \ottkw{in}\,  \ottsym{(}   \mathscr{K}\llbracket \maybebluetext{ \ottnt{N} } \rrbracket  \kappa   \ottsym{)}   [  \ottmv{x}  \ottsym{:=}   \Psi (\maybebluetext{ \ottnt{V} })   ] \\
      &=&  \ottkw{let} \,  \kappa =  \Psi (\maybebluetext{ \ottnt{s} })   \mathbin{;\!;}  \ottnt{K} \, \ottkw{in}\,  \ottsym{(}  \ottsym{(}   \mathscr{K}\llbracket \maybebluetext{ \ottnt{N} } \rrbracket  \kappa   \ottsym{)}  [  \ottmv{x}  \ottsym{:=}   \Psi (\maybebluetext{ \ottnt{V} })   ]  \ottsym{)} 
    \end{array}
    \]
    Then,
    \[
    \begin{array}{llll}
       \mathscr{K}\llbracket \maybebluetext{ \ottnt{N}  \langle  \ottnt{s}  \rangle  [  \ottmv{x}  \ottsym{:=}  \ottnt{V}  ] } \rrbracket  \ottnt{K} 
      &=&  \mathscr{K}\llbracket \maybebluetext{ \ottnt{N}  [  \ottmv{x}  \ottsym{:=}  \ottnt{V}  ]  \langle  \ottnt{s}  \rangle } \rrbracket  \ottnt{K}  \\
      &=&  \ottkw{let} \,  \kappa =  \Psi (\maybebluetext{ \ottnt{s} })   \mathbin{;\!;}  \ottnt{K} \, \ottkw{in}\,  \ottsym{(}   \mathscr{K}\llbracket \maybebluetext{ \ottnt{N}  [  \ottmv{x}  \ottsym{:=}  \ottnt{V}  ] } \rrbracket  \kappa   \ottsym{)} .
    \end{array}
    \]
    Here, we can assume $\kappa  \ne  \ottmv{x}$.
    So, $ \ottmv{x} \notin   \metafun{FV} ( \kappa )  $.
    By the IH, $\ottsym{(}   \mathscr{K}\llbracket \maybebluetext{ \ottnt{N} } \rrbracket  \kappa   \ottsym{)}  [  \ottmv{x}  \ottsym{:=}   \Psi (\maybebluetext{ \ottnt{V} })   ]  \ottsym{=}   \mathscr{K}\llbracket \maybebluetext{ \ottnt{N}  [  \ottmv{x}  \ottsym{:=}  \ottnt{V}  ] } \rrbracket  \kappa $,
    which finishes this case.

  \item[\Case{$\ottnt{M}  \ottsym{=}  \ottkw{blame} \, \ottnt{p}$}]
    \[
    \begin{array}[b]{llll}
      \ottsym{(}   \mathscr{K}\llbracket \maybebluetext{ \ottkw{blame} \, \ottnt{p} } \rrbracket  \ottnt{K}   \ottsym{)}  [  \ottmv{x}  \ottsym{:=}   \Psi (\maybebluetext{ \ottnt{V} })   ]
      &=& \ottsym{(}  \ottkw{blame} \, \ottnt{p}  \ottsym{)}  [  \ottmv{x}  \ottsym{:=}   \Psi (\maybebluetext{ \ottnt{V} })   ] = \ottkw{blame} \, \ottnt{p} \\
       \mathscr{K}\llbracket \maybebluetext{ \ottsym{(}  \ottkw{blame} \, \ottnt{p}  \ottsym{)}  [  \ottmv{x}  \ottsym{:=}  \ottnt{V}  ] } \rrbracket  \ottnt{K} 
      &=&  \mathscr{K}\llbracket \maybebluetext{ \ottsym{(}  \ottkw{blame} \, \ottnt{p}  \ottsym{)} } \rrbracket  \ottnt{K}  = \ottkw{blame} \, \ottnt{p}.
    \end{array} \qedhere
    \]
  \end{description}
\end{proof}

\begin{lemma}[Substitution for a continuation variable]\label{lem:subst-kap}
  If $ \kappa \notin   \metafun{FV} ( \ottnt{M} )  $,
  then $\ottsym{(}   \mathscr{K}\llbracket \maybebluetext{ \ottnt{M} } \rrbracket  \kappa   \ottsym{)}  [  \kappa  \ottsym{:=}  \ottnt{K}  ]  \ottsym{=}   \mathscr{K}\llbracket \maybebluetext{ \ottnt{M} } \rrbracket  \ottnt{K} $.
\end{lemma}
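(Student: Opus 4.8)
The plan is to prove Lemma~\ref{lem:subst-kap} by induction on the structure of the \lamS-term $\ottnt{M}$, following exactly the case split used in the definition of $ \mathscr{K}\llbracket \maybebluetext{ \ottnt{M} } \rrbracket  \ottnt{K} $ (and, where subterms are translated, the auxiliary function $ \mathscr{C}\llbracket \maybebluetext{ \ottnt{M} } \rrbracket $). The key structural fact I would lean on is Lemma~\ref{lem:fv-colon}, which tells us $ \metafun{FV} (  \mathscr{K}\llbracket \maybebluetext{ \ottnt{M} } \rrbracket  \ottnt{K}  )   \ottsym{=}   \metafun{FV} ( \ottnt{M} )  \, \cup \,  \metafun{FV} ( \ottnt{K} ) $; since $ \kappa \notin   \metafun{FV} ( \ottnt{M} )  $ by assumption, the substitution $[  \kappa  \ottsym{:=}  \ottnt{K}  ]$ can only act on the single occurrence of $\kappa$ that was freshly introduced as the continuation argument (and on occurrences in subterms' translations that were themselves produced with $\kappa$). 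Note also that $ \mathscr{C}\llbracket \maybebluetext{ \ottnt{N} } \rrbracket $ never contains a free occurrence of the continuation variable being substituted, because $ \mathscr{C}\llbracket \maybebluetext{ \ottnt{N} } \rrbracket $ is either $ \Psi (\maybebluetext{ \ottnt{V} }) $ (closed up to the source variables of $\ottnt{N}$), or $ \mathscr{K}\llbracket \maybebluetext{ \ottnt{N'} } \rrbracket   \Psi (\maybebluetext{ \ottnt{s} })  $, or $ \mathscr{K}\llbracket \maybebluetext{ \ottnt{N} } \rrbracket  \ottkw{id} $, and in the latter two cases $\kappa$ would have to already be free in $\ottnt{N}$ (hence in $\ottnt{M}$), contradicting the hypothesis; so the only place $[  \kappa  \ottsym{:=}  \ottnt{K}  ]$ bites is the outer continuation slot.

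The case analysis I would carry out: (1) $\ottnt{M}  \ottsym{=}  \ottnt{V}$ a value — here $ \mathscr{K}\llbracket \maybebluetext{ \ottnt{V} } \rrbracket  \kappa  =  \Psi (\maybebluetext{ \ottnt{V} })   \langle  \kappa  \rangle$, and since $\kappa$ does not appear in $ \Psi (\maybebluetext{ \ottnt{V} }) $ (value translation introduces no free continuation variable; formally $ \kappa \notin   \metafun{FV} (  \Psi (\maybebluetext{ \ottnt{V} })  )  $ follows since $\kappa$ would otherwise be in $ \metafun{FV} ( \ottnt{V} ) $), substitution gives $ \Psi (\maybebluetext{ \ottnt{V} })   \langle  \ottnt{K}  \rangle =  \mathscr{K}\llbracket \maybebluetext{ \ottnt{V} } \rrbracket  \ottnt{K} $. (2) $\ottnt{M}  \ottsym{=}  \ottnt{op}  \ottsym{(}  \ottnt{N_{{\mathrm{1}}}}  \ottsym{,}  \ottnt{N_{{\mathrm{2}}}}  \ottsym{)}$ — here $ \mathscr{K}\llbracket \maybebluetext{ \ottnt{M} } \rrbracket  \kappa  = \ottnt{op}  \ottsym{(}   \mathscr{C}\llbracket \maybebluetext{ \ottnt{N_{{\mathrm{1}}}} } \rrbracket   \ottsym{,}   \mathscr{C}\llbracket \maybebluetext{ \ottnt{N_{{\mathrm{2}}}} } \rrbracket   \ottsym{)}  \langle  \kappa  \rangle$; since $\kappa \notin  \metafun{FV} ( \ottnt{N_{\ottmv{i}}} ) $, it is not free in $ \mathscr{C}\llbracket \maybebluetext{ \ottnt{N_{\ottmv{i}}} } \rrbracket $, so the substitution only replaces the outer $\kappa$, yielding $ \mathscr{K}\llbracket \maybebluetext{ \ottnt{M} } \rrbracket  \ottnt{K} $. (3) $\ottnt{M}  \ottsym{=}  \ottnt{N_{{\mathrm{1}}}} \, \ottnt{N_{{\mathrm{2}}}}$ — analogous, with $ \mathscr{K}\llbracket \maybebluetext{ \ottnt{M} } \rrbracket  \kappa  =  \ottsym{(}   \mathscr{C}\llbracket \maybebluetext{ \ottnt{N_{{\mathrm{1}}}} } \rrbracket   \ottsym{)} \, (  \mathscr{C}\llbracket \maybebluetext{ \ottnt{N_{{\mathrm{2}}}} } \rrbracket  , \kappa ) $. (4) $\ottnt{M}  \ottsym{=}  \ottkw{blame} \, \ottnt{p}$ — trivial, the continuation is discarded. (5) $\ottnt{M}  \ottsym{=}  \ottnt{N}  \langle  \ottnt{s}  \rangle$ — this is the interesting inductive case: $ \mathscr{K}\llbracket \maybebluetext{ \ottnt{M} } \rrbracket  \kappa  =  \ottkw{let} \,  \kappa' =  \Psi (\maybebluetext{ \ottnt{s} })   \mathbin{;\!;}  \kappa \, \ottkw{in}\,  \ottsym{(}   \mathscr{K}\llbracket \maybebluetext{ \ottnt{N} } \rrbracket  \kappa'   \ottsym{)} $ where $\kappa'$ is fresh and distinct from $\kappa$. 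Applying $[  \kappa  \ottsym{:=}  \ottnt{K}  ]$: $ \Psi (\maybebluetext{ \ottnt{s} }) $ is unaffected ($\kappa$ not free in it), the inner $ \mathscr{K}\llbracket \maybebluetext{ \ottnt{N} } \rrbracket  \kappa' $ is unaffected because $ \metafun{FV} (  \mathscr{K}\llbracket \maybebluetext{ \ottnt{N} } \rrbracket  \kappa'  )  =  \metafun{FV} ( \ottnt{N} )  \cup \set{\kappa'}$ and $\kappa \notin  \metafun{FV} ( \ottnt{N} ) $ (since $\kappa \notin  \metafun{FV} ( \ottnt{M} )  \supseteq  \metafun{FV} ( \ottnt{N} ) $) and $\kappa  \ne  \kappa'$, so only the bound occurrence of $\kappa$ in $ \Psi (\maybebluetext{ \ottnt{s} })   \mathbin{;\!;}  \kappa$ is replaced, giving exactly $ \mathscr{K}\llbracket \maybebluetext{ \ottnt{M} } \rrbracket  \ottnt{K} $. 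Here no induction hypothesis is even needed for the inner translation because $\kappa$ is simply absent from it; the claim is essentially syntactic. (If one prefers, invoke the IH on $\ottnt{N}$ vacuously.)

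The proof is therefore almost entirely bookkeeping about free variables, and I expect no genuine obstacle — the only subtlety, and the step I would state most carefully, is the coercion-application case~(5), where one must be precise that the freshly-chosen bound variable $\kappa'$ in the generated let-expression is chosen distinct from $\kappa$ (a side condition implicit in ``$\kappa'$ fresh''), so that $[  \kappa  \ottsym{:=}  \ottnt{K}  ]$ does not capture and does not descend into the inner translation uninvited. Everything else follows by routine unfolding of the translation clauses together with Lemma~\ref{lem:fv-colon}. I would write the value case and case~(5) out in full and abbreviate the operator/application/blame cases as ``Similar.'', mirroring the style of Lemma~\ref{lem:subst-x}.
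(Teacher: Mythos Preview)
Your proposal is correct and follows essentially the same approach as the paper: a case analysis on the shape of $\ottnt{M}$ along the clauses of $\mathscr{K}\llbracket\cdot\rrbracket$, with the coercion-application case handled via Lemma~\ref{lem:fv-colon} and the freshness of $\kappa'$ to conclude that $\kappa$ does not occur in the inner translation. The paper phrases it as ``case analysis'' rather than ``induction'' (you yourself note no IH is needed), but the argument is otherwise identical.
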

\begin{proof}
  By case analysis on the structure of $\ottnt{M}$.
  \begin{description}
  \item[\Case{$\ottnt{M}  \ottsym{=}  \ottnt{V}$}] \leavevmode
      \[
        \ottsym{(}   \mathscr{K}\llbracket \maybebluetext{ \ottnt{V} } \rrbracket  \kappa   \ottsym{)}  [  \kappa  \ottsym{:=}  \ottnt{K}  ]
        = \ottsym{(}   \Psi (\maybebluetext{ \ottnt{V} })   \langle  \kappa  \rangle  \ottsym{)}  [  \kappa  \ottsym{:=}  \ottnt{K}  ] =  \Psi (\maybebluetext{ \ottnt{V} })   \langle  \ottnt{K}  \rangle =  \mathscr{K}\llbracket \maybebluetext{ \ottnt{V} } \rrbracket  \ottnt{K} .
      \]

  \item[\Case{$\ottnt{M}  \ottsym{=}  \ottnt{op}  \ottsym{(}  \ottnt{N_{{\mathrm{1}}}}  \ottsym{,}  \ottnt{N_{{\mathrm{2}}}}  \ottsym{)}$}]
    Since $ \kappa \notin   \metafun{FV} ( \ottnt{M} )  $,
    we have $ \kappa \notin   \metafun{FV} ( \ottnt{N_{{\mathrm{1}}}} )  $ and $ \kappa \notin   \metafun{FV} ( \ottnt{N_{{\mathrm{2}}}} )  $.
    \[
    \begin{array}{lll}
      \ottsym{(}   \mathscr{K}\llbracket \maybebluetext{ \ottnt{op}  \ottsym{(}  \ottnt{N_{{\mathrm{1}}}}  \ottsym{,}  \ottnt{N_{{\mathrm{2}}}}  \ottsym{)} } \rrbracket  \kappa   \ottsym{)}  [  \kappa  \ottsym{:=}  \ottnt{K}  ]
      &=& \ottsym{(}  \ottnt{op}  \ottsym{(}   \mathscr{C}\llbracket \maybebluetext{ \ottnt{N_{{\mathrm{1}}}} } \rrbracket   \ottsym{,}   \mathscr{C}\llbracket \maybebluetext{ \ottnt{N_{{\mathrm{2}}}} } \rrbracket   \ottsym{)}  \langle  \kappa  \rangle  \ottsym{)}  [  \kappa  \ottsym{:=}  \ottnt{K}  ] \\
      &=& \ottnt{op}  \ottsym{(}   \mathscr{C}\llbracket \maybebluetext{ \ottnt{N_{{\mathrm{1}}}} } \rrbracket   \ottsym{,}   \mathscr{C}\llbracket \maybebluetext{ \ottnt{N_{{\mathrm{2}}}} } \rrbracket   \ottsym{)}  \langle  \ottnt{K}  \rangle \\
      &=&  \mathscr{K}\llbracket \maybebluetext{ \ottnt{op}  \ottsym{(}  \ottnt{N_{{\mathrm{1}}}}  \ottsym{,}  \ottnt{N_{{\mathrm{2}}}}  \ottsym{)} } \rrbracket  \ottnt{K} .
    \end{array}
    \]

  \item[\Case{$\ottnt{M}  \ottsym{=}  \ottnt{N_{{\mathrm{1}}}} \, \ottnt{N_{{\mathrm{2}}}}$}]
    Since $ \kappa \notin   \metafun{FV} ( \ottnt{M} )  $,
    we have $ \kappa \notin   \metafun{FV} ( \ottnt{N_{{\mathrm{1}}}} )  $ and $ \kappa \notin   \metafun{FV} ( \ottnt{N_{{\mathrm{2}}}} )  $.
    \[
    \begin{array}{lll}
      \ottsym{(}   \mathscr{K}\llbracket \maybebluetext{ \ottnt{N_{{\mathrm{1}}}} \, \ottnt{N_{{\mathrm{2}}}} } \rrbracket  \kappa   \ottsym{)}  [  \kappa  \ottsym{:=}  \ottnt{K}  ]
      &=& \ottsym{(}    \mathscr{C}\llbracket \maybebluetext{ \ottnt{N_{{\mathrm{1}}}} } \rrbracket  \, (  \mathscr{C}\llbracket \maybebluetext{ \ottnt{N_{{\mathrm{2}}}} } \rrbracket  , \kappa )   \ottsym{)}  [  \kappa  \ottsym{:=}  \ottnt{K}  ]\\
      &=&   \mathscr{C}\llbracket \maybebluetext{ \ottnt{N_{{\mathrm{1}}}} } \rrbracket  \, (  \mathscr{C}\llbracket \maybebluetext{ \ottnt{N_{{\mathrm{2}}}} } \rrbracket  , \ottnt{K} ) \\
      &=& \mathscr{K}\llbracket \maybebluetext{ \ottnt{N_{{\mathrm{1}}}} \, \ottnt{N_{{\mathrm{2}}}} } \rrbracket  \ottnt{K} .
    \end{array}
    \]

  \item[\Case{$\ottnt{M}  \ottsym{=}  \ottnt{N}  \langle  \ottnt{s}  \rangle$}]
    \[
    \begin{array}{lll}
      \ottsym{(}   \mathscr{K}\llbracket \maybebluetext{ \ottnt{N}  \langle  \ottnt{s}  \rangle } \rrbracket  \kappa   \ottsym{)}  [  \kappa  \ottsym{:=}  \ottnt{K}  ]
      &=& \ottsym{(}   \ottkw{let} \,  \kappa' =  \Psi (\maybebluetext{ \ottnt{s} })   \mathbin{;\!;}  \kappa \, \ottkw{in}\,  \ottsym{(}   \mathscr{K}\llbracket \maybebluetext{ \ottnt{N} } \rrbracket  \kappa'   \ottsym{)}   \ottsym{)}  [  \kappa  \ottsym{:=}  \ottnt{K}  ] \\
      &=&  \ottkw{let} \,  \kappa' = \ottsym{(}   \Psi (\maybebluetext{ \ottnt{s} })   \mathbin{;\!;}  \kappa  \ottsym{)}  [  \kappa  \ottsym{:=}  \ottnt{K}  ] \, \ottkw{in}\,  \ottsym{(}  \ottsym{(}   \mathscr{K}\llbracket \maybebluetext{ \ottnt{N} } \rrbracket  \kappa'   \ottsym{)}  [  \kappa  \ottsym{:=}  \ottnt{K}  ]  \ottsym{)}  \\
      &=&  \ottkw{let} \,  \kappa' =  \Psi (\maybebluetext{ \ottnt{s} })   \mathbin{;\!;}  \ottnt{K} \, \ottkw{in}\,  \ottsym{(}   \mathscr{K}\llbracket \maybebluetext{ \ottnt{N} } \rrbracket  \kappa'   \ottsym{)}  \\
      &=&  \mathscr{K}\llbracket \maybebluetext{ \ottnt{N}  \langle  \ottnt{s}  \rangle } \rrbracket  \ottnt{K} .
    \end{array}
    \]
    The third equality is by $\ottsym{(}   \mathscr{K}\llbracket \maybebluetext{ \ottnt{N} } \rrbracket  \kappa'   \ottsym{)}  [  \kappa  \ottsym{:=}  \ottnt{K}  ]  \ottsym{=}   \mathscr{K}\llbracket \maybebluetext{ \ottnt{N} } \rrbracket  \kappa' $,
    which is shown as below:
    We can assume $\kappa'  \ne  \kappa$.
    Since $ \kappa \notin   \metafun{FV} ( \ottnt{M} )  $, we have $ \kappa \notin   \metafun{FV} ( \ottnt{N} )  $.
    By Lemma~\ref{lem:fv-colon},
    \[
       \kappa \notin   \metafun{FV} ( \ottnt{N} )  \, \cup \,  \metafun{FV} ( \kappa' )   =  \metafun{FV} (  \mathscr{K}\llbracket \maybebluetext{ \ottnt{N} } \rrbracket  \kappa'  ) .
    \]
    Thus, $\ottsym{(}   \mathscr{K}\llbracket \maybebluetext{ \ottnt{N} } \rrbracket  \kappa'   \ottsym{)}  [  \kappa  \ottsym{:=}  \ottnt{K}  ]  \ottsym{=}   \mathscr{K}\llbracket \maybebluetext{ \ottnt{N} } \rrbracket  \kappa' $.

  \item[\Case{$\ottnt{M}  \ottsym{=}  \ottkw{blame} \, \ottnt{p}$}]
    \[
    \begin{array}[b]{lll}
      \ottsym{(}   \mathscr{K}\llbracket \maybebluetext{ \ottkw{blame} \, \ottnt{p} } \rrbracket  \kappa   \ottsym{)}  [  \kappa  \ottsym{:=}  \ottnt{K}  ]
      &=& \ottsym{(}  \ottkw{blame} \, \ottnt{p}  \ottsym{)}  [  \kappa  \ottsym{:=}  \ottnt{K}  ] \\
      &=& \ottkw{blame} \, \ottnt{p} \\
      &=&  \mathscr{K}\llbracket \maybebluetext{ \ottkw{blame} \, \ottnt{p} } \rrbracket  \ottnt{K} .
    \end{array} \qedhere
    \]
  \end{description}
\end{proof}

\iffull\lemSubstTrans*
\else
\begin{lemma}[Substitution]\label{lem:subst-trans}
  If $ \kappa \notin   \metafun{FV} ( \ottnt{M} )   \cup  \metafun{FV} ( \ottnt{V} ) $, then
  $\ottsym{(}   \mathscr{K}\llbracket \maybebluetext{ \ottnt{M} } \rrbracket  \kappa   \ottsym{)}  [  \ottmv{x}  \ottsym{:=}   \Psi (\maybebluetext{ \ottnt{V} })   \ottsym{,}  \kappa  \ottsym{:=}  \ottnt{K}  ]  \ottsym{=}   \mathscr{K}\llbracket \maybebluetext{ \ottnt{M}  [  \ottmv{x}  \ottsym{:=}  \ottnt{V}  ] } \rrbracket  \ottnt{K} $.
\end{lemma}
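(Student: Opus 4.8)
The statement is a standard ``substitution commutes with translation'' lemma, and the plan is to derive it from the two auxiliary substitution lemmas already established: Lemma~\ref{lem:subst-x} (substitution for a non-continuation variable) and Lemma~\ref{lem:subst-kap} (substitution for a continuation variable). The hypothesis $ \kappa \notin   \metafun{FV} ( \ottnt{M} )   \cup  \metafun{FV} ( \ottnt{V} ) $ is exactly what is needed to apply both of them cleanly. First I would decompose the simultaneous substitution $[  \ottmv{x}  \ottsym{:=}   \Psi (\maybebluetext{ \ottnt{V} })   \ottsym{,}  \kappa  \ottsym{:=}  \ottnt{K}  ]$ into two sequential substitutions. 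Since we can choose $\kappa$ fresh so that $\kappa$ does not occur in $\ottnt{V}$ (hence not in $ \Psi (\maybebluetext{ \ottnt{V} }) $) and $\ottmv{x} \ne \kappa$, the simultaneous substitution equals the composition $\ottsym{(}  \cdot  [  \ottmv{x}  \ottsym{:=}   \Psi (\maybebluetext{ \ottnt{V} })   ]  \ottsym{)}  [  \kappa  \ottsym{:=}  \ottnt{K}  ]$ (a routine fact about capture-avoiding substitution, relying on $\kappa \notin \metafun{FV}( \Psi (\maybebluetext{ \ottnt{V} }) )$).

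Next I would push the $\ottmv{x}$-substitution through the translation. Since $ \kappa \notin   \metafun{FV} ( \kappa )  $ is false---wait, rather: the term $ \mathscr{K}\llbracket \maybebluetext{ \ottnt{M} } \rrbracket  \kappa $ has continuation argument $\kappa$, and $\ottmv{x} \ne \kappa$ so $ \ottmv{x} \notin   \metafun{FV} ( \kappa )  $---Lemma~\ref{lem:subst-x}(2) applies with continuation $\kappa$, giving
\[
  \ottsym{(}   \mathscr{K}\llbracket \maybebluetext{ \ottnt{M} } \rrbracket  \kappa   \ottsym{)}  [  \ottmv{x}  \ottsym{:=}   \Psi (\maybebluetext{ \ottnt{V} })   ]  \ottsym{=}   \mathscr{K}\llbracket \maybebluetext{ \ottnt{M}  [  \ottmv{x}  \ottsym{:=}  \ottnt{V}  ] } \rrbracket  \kappa .
\]
Then I would apply the $\kappa$-substitution to both sides. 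On the right, Lemma~\ref{lem:subst-kap} requires $ \kappa \notin   \metafun{FV} ( \ottnt{M}  [  \ottmv{x}  \ottsym{:=}  \ottnt{V}  ] )  $, which follows from $ \kappa \notin   \metafun{FV} ( \ottnt{M} )   \cup  \metafun{FV} ( \ottnt{V} ) $ (substituting $\ottnt{V}$ for $\ottmv{x}$ in $\ottnt{M}$ can only introduce free variables of $\ottnt{V}$, and $\kappa$ is free in neither $\ottnt{M}$ nor $\ottnt{V}$). This yields $\ottsym{(}   \mathscr{K}\llbracket \maybebluetext{ \ottnt{M}  [  \ottmv{x}  \ottsym{:=}  \ottnt{V}  ] } \rrbracket  \kappa   \ottsym{)}  [  \kappa  \ottsym{:=}  \ottnt{K}  ]  \ottsym{=}   \mathscr{K}\llbracket \maybebluetext{ \ottnt{M}  [  \ottmv{x}  \ottsym{:=}  \ottnt{V}  ] } \rrbracket  \ottnt{K} $, which is the desired right-hand side. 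Chaining the equalities gives the result.

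The only genuinely delicate point---the ``hard part,'' though it is modest---is bookkeeping about the fresh continuation variable $\kappa$ and the composition of substitutions. One must be careful that $\kappa \ne \ottmv{x}$, that $\kappa \notin \metafun{FV}( \Psi (\maybebluetext{ \ottnt{V} }) )$ (so that the simultaneous substitution really splits into the two ordered substitutions in the right order), and that $\kappa$ remains fresh for $\ottnt{M}  [  \ottmv{x}  \ottsym{:=}  \ottnt{V}  ]$ after the first substitution. All of these follow from the hypothesis together with the standard freshness conventions on bound variables, so the proof is short. I would present it as: observe the split of the simultaneous substitution, apply Lemma~\ref{lem:subst-x}(2), then apply Lemma~\ref{lem:subst-kap}, discharging the side conditions inline. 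Strictly speaking one could also note that a structural induction mirroring the proofs of the two auxiliary lemmas would work directly, but reusing them is cleaner and avoids repeating the case analysis over the shape of $\ottnt{M}$.
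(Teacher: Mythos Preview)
Your proposal is correct and matches the paper's proof essentially step for step: split the simultaneous substitution into sequential ones, apply Lemma~\ref{lem:subst-x}(2) with continuation $\kappa$, then apply Lemma~\ref{lem:subst-kap} using $\kappa \notin \metafun{FV}(\ottnt{M}[\ottmv{x}:=\ottnt{V}])$. The paper presents this as a three-line calculation with exactly the same justifications.
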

\fi

\begin{proof}
  We have $ \kappa \notin   \metafun{FV} ( \ottnt{M}  [  \ottmv{x}  \ottsym{:=}  \ottnt{V}  ] )  $.
  \[
  \begin{array}[b]{llll}
    && \ottsym{(}   \mathscr{K}\llbracket \maybebluetext{ \ottnt{M} } \rrbracket  \kappa   \ottsym{)}  [  \ottmv{x}  \ottsym{:=}   \Psi (\maybebluetext{ \ottnt{V} })   \ottsym{,}  \kappa  \ottsym{:=}  \ottnt{K}  ] \\
    &=& \ottsym{(}   \mathscr{K}\llbracket \maybebluetext{ \ottnt{M} } \rrbracket  \kappa   \ottsym{)}  [  \ottmv{x}  \ottsym{:=}   \Psi (\maybebluetext{ \ottnt{V} })   ]  [  \kappa  \ottsym{:=}  \ottnt{K}  ] \\
    &=& \ottsym{(}   \mathscr{K}\llbracket \maybebluetext{ \ottnt{M}  [  \ottmv{x}  \ottsym{:=}  \ottnt{V}  ] } \rrbracket  \kappa   \ottsym{)}  [  \kappa  \ottsym{:=}  \ottnt{K}  ]
    &\text{by Lemma~\ref{lem:subst-x} with $\ottnt{K}  \ottsym{=}  \kappa$}\\
    &=&  \mathscr{K}\llbracket \maybebluetext{ \ottnt{M}  [  \ottmv{x}  \ottsym{:=}  \ottnt{V}  ] } \rrbracket  \ottnt{K} 
    &\text{by Lemma~\ref{lem:subst-kap}.}
  \end{array} \qedhere
  \]
\end{proof}

\subsubsection{Evaluation Contexts}

We can rewrite the syntax of evaluation contexts in \lamS as below:
\begin{align*}
  \mathcal{E} &\grmeq
  \mathcal{F} \grmor
  \mathcal{F}  [  \square \, \langle  \ottnt{s}  \rangle  ]
  \\
  \mathcal{F} &\grmeq
   \square  \grmor
  \mathcal{F}  [  \ottnt{op}  \ottsym{(}  \square  \ottsym{,}  \ottnt{M}  \ottsym{)}  ] \grmor
  \mathcal{F}  [  \ottnt{op}  \ottsym{(}  \ottnt{V}  \ottsym{,} \, \square \, \ottsym{)}  ] \grmor
  \mathcal{F}  [  \square \, \ottnt{M}  ] \grmor
  \mathcal{F}  [  \ottnt{V} \, \square  ] \\
  &\grmsp\grmor
  \mathcal{F}  [  \ottsym{(}  \ottnt{op}  \ottsym{(}  \square  \ottsym{,}  \ottnt{M}  \ottsym{)}  \ottsym{)}  \langle  \ottnt{s}  \rangle  ] \grmor
  \mathcal{F}  [  \ottsym{(}  \ottnt{op}  \ottsym{(}  \ottnt{V}  \ottsym{,} \, \square \, \ottsym{)}  \ottsym{)}  \langle  \ottnt{s}  \rangle  ] \grmor
  \mathcal{F}  [  \ottsym{(} \, \square \, \ottnt{M}  \ottsym{)}  \langle  \ottnt{s}  \rangle  ] \grmor
  \mathcal{F}  [  \ottsym{(}  \ottnt{V} \, \square \, \ottsym{)}  \langle  \ottnt{s}  \rangle  ]
\end{align*}

In the following lemma, the first item concerns the case $\mathcal{E}  \ottsym{=}  \mathcal{F}$,
and the second concerns the case $\mathcal{E}  \ottsym{=}  \mathcal{F}  [  \square \, \langle  \ottnt{s}  \rangle  ]$.
Note that $\ottsym{(}  \mathcal{F}  [  \square \, \langle  \ottnt{s}  \rangle  ]  \ottsym{)}  [  \ottnt{M}  ]  \ottsym{=}  \mathcal{F}  [  \ottnt{M}  \langle  \ottnt{s}  \rangle  ]$.

\iffull\lemTransCtx*
\else
\begin{lemma}[Contexts]\label{lem:trans-ctx}\leavevmode
  \begin{enumerate}
  \item For any $\mathcal{F}$, there exists $\mathcal{E}'$
    such that for any $\ottnt{M}$, $ \mathscr{C}\llbracket \maybebluetext{ \mathcal{F}  [  \ottnt{M}  ] } \rrbracket   \ottsym{=}  \mathcal{E}'  [   \mathscr{C}\llbracket \maybebluetext{ \ottnt{M} } \rrbracket   ]$.
  \item For any $\mathcal{F}$ and $\ottnt{s}$, there exists $\mathcal{E}'$ such that
    for any $\ottnt{M}$, $ \mathscr{C}\llbracket \maybebluetext{ \mathcal{F}  [  \ottnt{M}  \langle  \ottnt{s}  \rangle  ] } \rrbracket   \ottsym{=}  \mathcal{E}'  [   \mathscr{K}\llbracket \maybebluetext{ \ottnt{M} } \rrbracket   \Psi (\maybebluetext{ \ottnt{s} })    ]$.
  \end{enumerate}
\end{lemma}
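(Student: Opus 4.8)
\textbf{Proof proposal for Lemma~\ref{lem:trans-ctx}.}
The plan is to prove both items simultaneously by induction on the structure of $\mathcal{F}$, using the alternative grammar for $\mathcal{F}$ given just above the lemma statement (in which $\mathcal{F}$ is built up ``outside-in'' from $\square$ by the eight frame constructors, four without a trailing coercion and four with). The crucial observation is that the claim is \emph{uniform} in $\ottnt{M}$: we must exhibit a single target context $\mathcal{E}'$ that works for every hole-filler, so the induction hypothesis must be applied to the context obtained by stripping off the outermost frame, not to a particular term. Concretely, for item~(1) I would first treat the base case $\mathcal{F} = \square$, where $ \mathscr{C}\llbracket \maybebluetext{ \square  [  \ottnt{M}  ] } \rrbracket  =  \mathscr{C}\llbracket \maybebluetext{ \ottnt{M} } \rrbracket $ and we take $\mathcal{E}' = \square$. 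For an inductive frame such as $\mathcal{F} = \mathcal{F}_0  [  \square \, \ottnt{N}  ]$, note $\mathcal{F}  [  \ottnt{M}  ] = \mathcal{F}_0  [  \ottnt{M} \, \ottnt{N}  ]$; since $\ottnt{M} \, \ottnt{N}$ is never a value and never a coercion application, the translation rule \rnp{TrC-Else} forces $ \mathscr{C}\llbracket \maybebluetext{ \ottnt{M} \, \ottnt{N} } \rrbracket  =  \mathscr{K}\llbracket \maybebluetext{ \ottnt{M} \, \ottnt{N} } \rrbracket  \ottkw{id}  =  \ottsym{(}   \mathscr{C}\llbracket \maybebluetext{ \ottnt{M} } \rrbracket   \ottsym{)} \, (  \mathscr{C}\llbracket \maybebluetext{ \ottnt{N} } \rrbracket  , \ottkw{id} ) $ by \rnp{Tr-App}, which is $\mathcal{E}_0'  [   \mathscr{C}\llbracket \maybebluetext{ \ottnt{M} } \rrbracket   ]$ with $\mathcal{E}_0' =  \square \, (   \mathscr{C}\llbracket \maybebluetext{ \ottnt{N} } \rrbracket  , \ottkw{id}  ) $. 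Then the IH applied to $\mathcal{F}_0$ gives $\mathcal{E}_0''$ with $ \mathscr{C}\llbracket \maybebluetext{ \mathcal{F}_0  [  \ottnt{L}  ] } \rrbracket  = \mathcal{E}_0''  [   \mathscr{C}\llbracket \maybebluetext{ \ottnt{L} } \rrbracket   ]$ for all $\ottnt{L}$; taking $\ottnt{L} = \ottnt{M} \, \ottnt{N}$ and composing contexts (Lemma~\ref{lem:ctx-cmp}) yields $\mathcal{E}' = \mathcal{E}_0''  [  \mathcal{E}_0'  ]$.

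The remaining frames for item~(1) are handled the same way: for $\mathcal{F}_0  [  \ottnt{V} \, \square  ]$ use \rnp{Tr-App} with $ \mathscr{C}\llbracket \maybebluetext{ \ottnt{V} } \rrbracket  =  \Psi (\maybebluetext{ \ottnt{V} }) $; for the two $\ottnt{op}$-frames use \rnp{Tr-Op}; and for the four frames carrying a trailing coercion $\ottnt{s}$ (e.g.\ $\mathcal{F}_0  [  \ottsym{(} \, \square \, \ottnt{N}  \ottsym{)}  \langle  \ottnt{s}  \rangle  ]$) observe that $\mathcal{F}  [  \ottnt{M}  ] = \mathcal{F}_0  [  \ottsym{(}  \ottnt{M} \, \ottnt{N}  \ottsym{)}  \langle  \ottnt{s}  \rangle  ]$, and since $\ottsym{(}  \ottnt{M} \, \ottnt{N}  \ottsym{)}  \langle  \ottnt{s}  \rangle$ \emph{is} a coercion application, rule \rnp{TrC-Crc} gives $ \mathscr{C}\llbracket \maybebluetext{ \ottsym{(}  \ottnt{M} \, \ottnt{N}  \ottsym{)}  \langle  \ottnt{s}  \rangle } \rrbracket  =  \mathscr{K}\llbracket \maybebluetext{ \ottnt{M} \, \ottnt{N} } \rrbracket   \Psi (\maybebluetext{ \ottnt{s} })  $, and then \rnp{Tr-App} unfolds this to $ \ottsym{(}   \mathscr{C}\llbracket \maybebluetext{ \ottnt{M} } \rrbracket   \ottsym{)} \, (  \mathscr{C}\llbracket \maybebluetext{ \ottnt{N} } \rrbracket  ,  \Psi (\maybebluetext{ \ottnt{s} })  ) $, again of the form $\mathcal{E}_0'  [   \mathscr{C}\llbracket \maybebluetext{ \ottnt{M} } \rrbracket   ]$, and we close by the IH and context composition as before. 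For item~(2) I would do the same case analysis on $\mathcal{F}$, but now the hole-filler is $\ottnt{M}  \langle  \ottnt{s}  \rangle$; the base case $\mathcal{F} = \square$ is immediate from \rnp{TrC-Crc}, namely $ \mathscr{C}\llbracket \maybebluetext{ \ottnt{M}  \langle  \ottnt{s}  \rangle } \rrbracket  =  \mathscr{K}\llbracket \maybebluetext{ \ottnt{M} } \rrbracket   \Psi (\maybebluetext{ \ottnt{s} })  $, giving $\mathcal{E}' = \square$; and each inductive frame again surrounds $\ottnt{M}  \langle  \ottnt{s}  \rangle$ either inside an application/operation (so $\ottnt{M}  \langle  \ottnt{s}  \rangle$ is a coercion application and \rnp{TrC-Crc} applies to the subterm, then \rnp{Tr-App}/\rnp{Tr-Op} exposes $ \mathscr{K}\llbracket \maybebluetext{ \ottnt{M} } \rrbracket   \Psi (\maybebluetext{ \ottnt{s} })  $) or inside another coercion frame $\langle  \ottnt{t}  \rangle$, where $\ottsym{(}  \ldots \ottnt{M}  \langle  \ottnt{s}  \rangle \ldots \ottsym{)}  \langle  \ottnt{t}  \rangle$ is also a coercion application—\rnp{TrC-Crc} then \rnp{Tr-Crc} produce a let-expression wrapping $ \mathscr{K}\llbracket \maybebluetext{ \ottnt{M} } \rrbracket   \Psi (\maybebluetext{ \ottnt{s} })  $, which is again a valid evaluation context around that subterm.

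The main obstacle is bookkeeping, not depth: one must be careful that in every inductive frame the \emph{immediate} subterm containing the hole is correctly classified as a nonvalue (hence \rnp{TrC-Else} or \rnp{TrC-Crc} fires, never \rnp{TrC-Val}) so that the recursive unfolding by \rnp{Tr-Op}/\rnp{Tr-App}/\rnp{Tr-Crc} genuinely produces a \emph{target} evaluation context $\mathcal{E}'$ rather than some term in which the translated hole-filler sits in a non-evaluation position. This is where the design choices highlighted in the surrounding text—translating subexpressions with $\ottkw{id}$ via $ \mathscr{C}\llbracket \maybebluetext{ \cdot } \rrbracket $ and keeping rule \rnp{TrC-Crc}—pay off: without them, e.g., $ \mathscr{K}\llbracket \maybebluetext{ \ottsym{(} \, \square \, \ottnt{N}  \ottsym{)}  \langle  \ottnt{s}  \rangle } \rrbracket  \ottnt{K} $ could produce a $\mathbin{;\!;}$-composition that must be evaluated before the hole, so the ``context'' would not be an evaluation context at all. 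The secondary obstacle is simply the number of cases ($2 \times 8$ at the outer level, with two sub-shapes for the coercion-carrying frames), but each is a two-line unfolding plus one appeal to the IH and one to Lemma~\ref{lem:ctx-cmp}, so I would state the first couple of representative cases in full and abbreviate the rest with ``Similar.''
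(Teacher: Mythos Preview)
Your proposal is correct and follows essentially the same approach as the paper: a simultaneous induction on the structure of $\mathcal{F}$ using the expanded eight-frame grammar, with each case unfolded by the appropriate translation rule and closed by the IH plus context composition. One small inaccuracy: in the coercion-carrying frames of item~(2) you say ``\rnp{TrC-Crc} then \rnp{Tr-Crc} produce a let-expression,'' but no let is produced there---after \rnp{TrC-Crc} the subject is an application or operation, so \rnp{Tr-App}/\rnp{Tr-Op} (not \rnp{Tr-Crc}) fires and the result is directly of the form $\mathcal{E}_0'[\mathscr{K}\llbracket\ottnt{M}\rrbracket\Psi(\ottnt{s})]$; this does not affect the argument. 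A minor stylistic difference is that the paper invokes the IH for item~(2) on $\mathcal{F}_0$ when handling the coercion-carrying frames of item~(1) (and vice versa), whereas you always appeal to item~(1)'s IH; both routes give the same target context, and in fact your observation makes item~(2) an immediate corollary of item~(1).
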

\fi
\begin{proof}
  Two items are simultaneously proved
  by induction on the structure of $\mathcal{F}$.

  (1) By case analysis on the structure of $\mathcal{F}$.
  \begin{description}
  \item[\Case{$\mathcal{F}  \ottsym{=}  \square$}]
    By $\square  [  \ottnt{M}  ]  \ottsym{=}  \ottnt{M}$, we must show $ \mathscr{C}\llbracket \maybebluetext{ \ottnt{M} } \rrbracket   \ottsym{=}  \mathcal{E}'  [   \mathscr{C}\llbracket \maybebluetext{ \ottnt{M} } \rrbracket   ]$.
    Take $\mathcal{E}' =  \square $.

  \item[\Case{$\mathcal{F}  \ottsym{=}  \mathcal{F}_{{\mathrm{1}}}  [  \ottnt{op}  \ottsym{(}  \square  \ottsym{,}  \ottnt{N}  \ottsym{)}  ]$}]    
    By the IH (item 1), there exists $\mathcal{E}'_{{\mathrm{1}}}$ such that
    $ \mathscr{C}\llbracket \maybebluetext{ \mathcal{F}_{{\mathrm{1}}}  [  \ottnt{L}  ] } \rrbracket  = \mathcal{E}'_{{\mathrm{1}}}  [   \mathscr{C}\llbracket \maybebluetext{ \ottnt{L} } \rrbracket   ]$ for any $\ottnt{L}$.
    We have
    \[
      \mathcal{F}  [  \ottnt{M}  ]  \ottsym{=}  \ottsym{(}  \mathcal{F}_{{\mathrm{1}}}  [  \ottnt{op}  \ottsym{(}  \square  \ottsym{,}  \ottnt{N}  \ottsym{)}  ]  \ottsym{)}  [  \ottnt{M}  ] = \mathcal{F}_{{\mathrm{1}}}  [  \ottnt{op}  \ottsym{(}  \ottnt{M}  \ottsym{,}  \ottnt{N}  \ottsym{)}  ]
    \]
    and so
    \[
    \begin{array}{llll}
       \mathscr{C}\llbracket \maybebluetext{ \mathcal{F}  [  \ottnt{M}  ] } \rrbracket 
      &=&  \mathscr{C}\llbracket \maybebluetext{ \mathcal{F}_{{\mathrm{1}}}  [  \ottnt{op}  \ottsym{(}  \ottnt{M}  \ottsym{,}  \ottnt{N}  \ottsym{)}  ] } \rrbracket  \\
      &=& \mathcal{E}'_{{\mathrm{1}}}  [   \mathscr{C}\llbracket \maybebluetext{ \ottnt{op}  \ottsym{(}  \ottnt{M}  \ottsym{,}  \ottnt{N}  \ottsym{)} } \rrbracket   ] & \text{by IH with $\ottnt{L}  \ottsym{=}  \ottnt{op}  \ottsym{(}  \ottnt{M}  \ottsym{,}  \ottnt{N}  \ottsym{)}$} \\
      &=& \mathcal{E}'_{{\mathrm{1}}}  [   \mathscr{K}\llbracket \maybebluetext{ \ottnt{op}  \ottsym{(}  \ottnt{M}  \ottsym{,}  \ottnt{N}  \ottsym{)} } \rrbracket  \ottkw{id}   ]  \\ 
      &=& \mathcal{E}'_{{\mathrm{1}}}  [  \ottnt{op}  \ottsym{(}   \mathscr{C}\llbracket \maybebluetext{ \ottnt{M} } \rrbracket   \ottsym{,}   \mathscr{C}\llbracket \maybebluetext{ \ottnt{N} } \rrbracket   \ottsym{)}  \langle  \ottkw{id}  \rangle  ] &\text{by \rnp{Tr-Op}.}
    \end{array}
    \]
    (Note that $\ottnt{op}  \ottsym{(}  \ottnt{M}  \ottsym{,}  \ottnt{N}  \ottsym{)}$ is neither a value nor a coercion application.)
    Take $\mathcal{E}'  \ottsym{=}  \mathcal{E}'_{{\mathrm{1}}}  [  \ottnt{op}  \ottsym{(} \, \square \, \ottsym{,}   \mathscr{C}\llbracket \maybebluetext{ \ottnt{N} } \rrbracket   \ottsym{)}  \langle  \ottkw{id}  \rangle  ]$; then we have $\mathcal{E}'  [   \mathscr{C}\llbracket \maybebluetext{ \ottnt{M} } \rrbracket   ]  \ottsym{=}   \mathscr{C}\llbracket \maybebluetext{ \mathcal{F}  [  \ottnt{M}  ] } \rrbracket $.

  \item[\Case{$\mathcal{F}  \ottsym{=}  \mathcal{F}_{{\mathrm{1}}}  [  \ottnt{op}  \ottsym{(} \, \square \, \ottsym{,}  \ottnt{N}  \ottsym{)}  \langle  \ottnt{t}  \rangle  ]$}]
    By the IH (item 2), there exists $\mathcal{E}'_{{\mathrm{1}}}$ such that
    $ \mathscr{C}\llbracket \maybebluetext{ \mathcal{F}_{{\mathrm{1}}}  [  \ottnt{L}  \langle  \ottnt{t}  \rangle  ] } \rrbracket  = \mathcal{E}'_{{\mathrm{1}}}  [   \mathscr{K}\llbracket \maybebluetext{ \ottnt{L} } \rrbracket   \Psi (\maybebluetext{ \ottnt{t} })    ]$ for any $\ottnt{L}$.
    We have
    \[
      \mathcal{F}  [  \ottnt{M}  ]  \ottsym{=}  \ottsym{(}  \mathcal{F}_{{\mathrm{1}}}  [  \ottnt{op}  \ottsym{(} \, \square \, \ottsym{,}  \ottnt{N}  \ottsym{)}  \langle  \ottnt{t}  \rangle  ]  \ottsym{)}  [  \ottnt{M}  ] = \mathcal{F}_{{\mathrm{1}}}  [  \ottnt{op}  \ottsym{(}  \ottnt{M}  \ottsym{,}  \ottnt{N}  \ottsym{)}  \langle  \ottnt{t}  \rangle  ]
    \]
    and so
    \[
    \begin{array}{llll}
       \mathscr{C}\llbracket \maybebluetext{ \mathcal{F}  [  \ottnt{M}  ] } \rrbracket 
      &=&  \mathscr{C}\llbracket \maybebluetext{ \mathcal{F}_{{\mathrm{1}}}  [  \ottnt{op}  \ottsym{(}  \ottnt{M}  \ottsym{,}  \ottnt{N}  \ottsym{)}  \langle  \ottnt{t}  \rangle  ] } \rrbracket  \\
      &=& \mathcal{E}'_{{\mathrm{1}}}  [   \mathscr{K}\llbracket \maybebluetext{ \ottnt{op}  \ottsym{(}  \ottnt{M}  \ottsym{,}  \ottnt{N}  \ottsym{)} } \rrbracket   \Psi (\maybebluetext{ \ottnt{t} })    ] & \text{by IH with $\ottnt{L}  \ottsym{=}  \ottnt{op}  \ottsym{(}  \ottnt{M}  \ottsym{,}  \ottnt{N}  \ottsym{)}$} \\
      &=& \mathcal{E}'_{{\mathrm{1}}}  [  \ottnt{op}  \ottsym{(}   \mathscr{C}\llbracket \maybebluetext{ \ottnt{M} } \rrbracket   \ottsym{,}   \mathscr{C}\llbracket \maybebluetext{ \ottnt{N} } \rrbracket   \ottsym{)}  \langle   \Psi (\maybebluetext{ \ottnt{t} })   \rangle  ] &\text{by \rnp{Tr-Op}.}
    \end{array}
    \]
    Take $\mathcal{E}'  \ottsym{=}  \mathcal{E}'_{{\mathrm{1}}}  [  \ottnt{op}  \ottsym{(} \, \square \, \ottsym{,}   \mathscr{C}\llbracket \maybebluetext{ \ottnt{N} } \rrbracket   \ottsym{)}  \langle   \Psi (\maybebluetext{ \ottnt{t} })   \rangle  ]$; then we have $\mathcal{E}'  [   \mathscr{C}\llbracket \maybebluetext{ \ottnt{M} } \rrbracket   ]  \ottsym{=}   \mathscr{C}\llbracket \maybebluetext{ \mathcal{F}  [  \ottnt{M}  ] } \rrbracket $.

  \item[\Case{$\mathcal{F}  \ottsym{=}  \mathcal{F}_{{\mathrm{1}}}  [  \ottnt{op}  \ottsym{(}  \ottnt{V}  \ottsym{,} \, \square \, \ottsym{)}  ]$}]    
    \ifshownotes
    By the IH (item 1), there exists $\mathcal{E}'_{{\mathrm{1}}}$ such that
    $ \mathscr{C}\llbracket \maybebluetext{ \mathcal{F}_{{\mathrm{1}}}  [  \ottnt{L}  ] } \rrbracket  = \mathcal{E}'_{{\mathrm{1}}}  [   \mathscr{C}\llbracket \maybebluetext{ \ottnt{L} } \rrbracket   ]$ for any $\ottnt{L}$.
    We have
    \[
      \mathcal{F}  [  \ottnt{M}  ]  \ottsym{=}  \ottsym{(}  \mathcal{F}_{{\mathrm{1}}}  [  \ottnt{op}  \ottsym{(}  \ottnt{V}  \ottsym{,} \, \square \, \ottsym{)}  ]  \ottsym{)}  [  \ottnt{M}  ] = \mathcal{F}_{{\mathrm{1}}}  [  \ottnt{op}  \ottsym{(}  \ottnt{V}  \ottsym{,}  \ottnt{M}  \ottsym{)}  ]
    \]
    and so
    \[
    \begin{array}{llll}
       \mathscr{C}\llbracket \maybebluetext{ \mathcal{F}  [  \ottnt{M}  ] } \rrbracket 
      &=&  \mathscr{C}\llbracket \maybebluetext{ \mathcal{F}_{{\mathrm{1}}}  [  \ottnt{op}  \ottsym{(}  \ottnt{V}  \ottsym{,}  \ottnt{M}  \ottsym{)}  ] } \rrbracket  \\
      &=& \mathcal{E}'_{{\mathrm{1}}}  [   \mathscr{C}\llbracket \maybebluetext{ \ottnt{op}  \ottsym{(}  \ottnt{V}  \ottsym{,}  \ottnt{M}  \ottsym{)} } \rrbracket   ] & \text{by IH with $\ottnt{L}  \ottsym{=}  \ottnt{op}  \ottsym{(}  \ottnt{V}  \ottsym{,}  \ottnt{M}  \ottsym{)}$} \\
      &=& \mathcal{E}'_{{\mathrm{1}}}  [   \mathscr{K}\llbracket \maybebluetext{ \ottnt{op}  \ottsym{(}  \ottnt{V}  \ottsym{,}  \ottnt{M}  \ottsym{)} } \rrbracket  \ottkw{id}   ]  \\
      &=& \mathcal{E}'_{{\mathrm{1}}}  [  \ottnt{op}  \ottsym{(}   \mathscr{C}\llbracket \maybebluetext{ \ottnt{V} } \rrbracket   \ottsym{,}   \mathscr{C}\llbracket \maybebluetext{ \ottnt{M} } \rrbracket   \ottsym{)}  \langle  \ottkw{id}  \rangle  ] & \text{by \rnp{Tr-Op}.}
    \end{array}
    \]
    $ \mathscr{C}\llbracket \maybebluetext{ \ottnt{V} } \rrbracket  =  \Psi (\maybebluetext{ \ottnt{V} }) $ is a value.
    Take $\mathcal{E}'  \ottsym{=}  \mathcal{E}'_{{\mathrm{1}}}  [  \ottnt{op}  \ottsym{(}   \mathscr{C}\llbracket \maybebluetext{ \ottnt{V} } \rrbracket   \ottsym{,}  \square  \ottsym{)}  \langle  \ottkw{id}  \rangle  ]$; then we have $\mathcal{E}'  [   \mathscr{C}\llbracket \maybebluetext{ \ottnt{M} } \rrbracket   ]  \ottsym{=}   \mathscr{C}\llbracket \maybebluetext{ \mathcal{F}  [  \ottnt{M}  ] } \rrbracket $.
    \else
    Similar.
    \fi

  \item[\Case{$\mathcal{F}  \ottsym{=}  \mathcal{F}_{{\mathrm{1}}}  [  \ottnt{op}  \ottsym{(}  \ottnt{V}  \ottsym{,} \, \square \, \ottsym{)}  \langle  \ottnt{t}  \rangle  ]$}]
    \ifshownotes
    By the IH (item 2), there exists $\mathcal{E}'_{{\mathrm{1}}}$ such that
    $ \mathscr{C}\llbracket \maybebluetext{ \mathcal{F}_{{\mathrm{1}}}  [  \ottnt{L}  \langle  \ottnt{t}  \rangle  ] } \rrbracket  = \mathcal{E}'_{{\mathrm{1}}}  [   \mathscr{K}\llbracket \maybebluetext{ \ottnt{L} } \rrbracket   \Psi (\maybebluetext{ \ottnt{t} })    ]$ for any $\ottnt{L}$.
    We have
    \[
      \mathcal{F}  [  \ottnt{M}  ]  \ottsym{=}  \ottsym{(}  \mathcal{F}_{{\mathrm{1}}}  [  \ottnt{op}  \ottsym{(}  \ottnt{V}  \ottsym{,} \, \square \, \ottsym{)}  \langle  \ottnt{t}  \rangle  ]  \ottsym{)}  [  \ottnt{M}  ] = \mathcal{F}_{{\mathrm{1}}}  [  \ottnt{op}  \ottsym{(}  \ottnt{V}  \ottsym{,}  \ottnt{M}  \ottsym{)}  \langle  \ottnt{t}  \rangle  ]
    \]
    and so
    \[
    \begin{array}{llll}
       \mathscr{C}\llbracket \maybebluetext{ \mathcal{F}  [  \ottnt{M}  ] } \rrbracket 
      &=&  \mathscr{C}\llbracket \maybebluetext{ \mathcal{F}_{{\mathrm{1}}}  [  \ottnt{op}  \ottsym{(}  \ottnt{V}  \ottsym{,}  \ottnt{M}  \ottsym{)}  \langle  \ottnt{t}  \rangle  ] } \rrbracket  \\
      &=& \mathcal{E}'_{{\mathrm{1}}}  [   \mathscr{K}\llbracket \maybebluetext{ \ottnt{op}  \ottsym{(}  \ottnt{V}  \ottsym{,}  \ottnt{M}  \ottsym{)} } \rrbracket   \Psi (\maybebluetext{ \ottnt{t} })    ] & \text{by IH with $\ottnt{L}  \ottsym{=}  \ottnt{op}  \ottsym{(}  \ottnt{V}  \ottsym{,}  \ottnt{M}  \ottsym{)}$} \\
      &=& \mathcal{E}'_{{\mathrm{1}}}  [  \ottnt{op}  \ottsym{(}   \mathscr{C}\llbracket \maybebluetext{ \ottnt{V} } \rrbracket   \ottsym{,}   \mathscr{C}\llbracket \maybebluetext{ \ottnt{M} } \rrbracket   \ottsym{)}  \langle   \Psi (\maybebluetext{ \ottnt{t} })   \rangle  ] & \text{by \rnp{Tr-Op}.}
    \end{array}
    \]
    Take $\mathcal{E}'  \ottsym{=}  \mathcal{E}'_{{\mathrm{1}}}  [  \ottnt{op}  \ottsym{(}   \mathscr{C}\llbracket \maybebluetext{ \ottnt{V} } \rrbracket   \ottsym{,}  \square  \ottsym{)}  \langle   \Psi (\maybebluetext{ \ottnt{t} })   \rangle  ]$; then we have $\mathcal{E}'  [   \mathscr{C}\llbracket \maybebluetext{ \ottnt{M} } \rrbracket   ]  \ottsym{=}   \mathscr{C}\llbracket \maybebluetext{ \mathcal{F}  [  \ottnt{M}  ] } \rrbracket $.
    \else
    Similar.
    \fi

  \item[\Case{$\mathcal{F}  \ottsym{=}  \mathcal{F}_{{\mathrm{1}}}  [  \square \, \ottnt{N}  ]$}]  
    By the IH (item 1), there exists $\mathcal{E}'_{{\mathrm{1}}}$ such that
    $ \mathscr{C}\llbracket \maybebluetext{ \mathcal{F}_{{\mathrm{1}}}  [  \ottnt{L}  ] } \rrbracket  = \mathcal{E}'_{{\mathrm{1}}}  [   \mathscr{C}\llbracket \maybebluetext{ \ottnt{L} } \rrbracket   ]$ for any $\ottnt{L}$.
    We have
    \[
      \mathcal{F}  [  \ottnt{M}  ]  \ottsym{=}  \ottsym{(}  \mathcal{F}_{{\mathrm{1}}}  [  \square \, \ottnt{N}  ]  \ottsym{)}  [  \ottnt{M}  ] = \mathcal{F}_{{\mathrm{1}}}  [  \ottnt{M} \, \ottnt{N}  ]
    \]
    and so
    \[
    \begin{array}{llll}
       \mathscr{C}\llbracket \maybebluetext{ \mathcal{F}  [  \ottnt{M}  ] } \rrbracket 
      &=&  \mathscr{C}\llbracket \maybebluetext{ \mathcal{F}_{{\mathrm{1}}}  [  \ottnt{M} \, \ottnt{N}  ] } \rrbracket  \\
      &=& \mathcal{E}'_{{\mathrm{1}}}  [   \mathscr{C}\llbracket \maybebluetext{ \ottnt{M} \, \ottnt{N} } \rrbracket   ] & \text{by IH with $\ottnt{L}  \ottsym{=}  \ottnt{M} \, \ottnt{N}$} \\
      &=& \mathcal{E}'_{{\mathrm{1}}}  [   \mathscr{K}\llbracket \maybebluetext{ \ottnt{M} \, \ottnt{N} } \rrbracket  \ottkw{id}   ]  \\
      &=& \mathcal{E}'_{{\mathrm{1}}}  [    \mathscr{C}\llbracket \maybebluetext{ \ottnt{M} } \rrbracket  \, (  \mathscr{C}\llbracket \maybebluetext{ \ottnt{N} } \rrbracket  , \ottkw{id} )   ] &\text{by \rnp{Tr-App}.}
    \end{array}
  \]
  Take $\mathcal{E}'  \ottsym{=}  \mathcal{E}'_{{\mathrm{1}}}  [   \square \, (   \mathscr{C}\llbracket \maybebluetext{ \ottnt{N} } \rrbracket  , \ottkw{id}  )   ]$; then we have $\mathcal{E}'  [   \mathscr{C}\llbracket \maybebluetext{ \ottnt{M} } \rrbracket   ]  \ottsym{=}   \mathscr{C}\llbracket \maybebluetext{ \mathcal{F}  [  \ottnt{M}  ] } \rrbracket $.

  \item[\Case{$\mathcal{F}  \ottsym{=}  \mathcal{F}_{{\mathrm{1}}}  [  \ottsym{(} \, \square \, \ottnt{N}  \ottsym{)}  \langle  \ottnt{t}  \rangle  ]$}]
    By the IH (item 2), there exists $\mathcal{E}'_{{\mathrm{1}}}$ such that
    $ \mathscr{C}\llbracket \maybebluetext{ \mathcal{F}_{{\mathrm{1}}}  [  \ottnt{L}  \langle  \ottnt{t}  \rangle  ] } \rrbracket  = \mathcal{E}'_{{\mathrm{1}}}  [   \mathscr{K}\llbracket \maybebluetext{ \ottnt{L} } \rrbracket   \Psi (\maybebluetext{ \ottnt{t} })    ]$ for any $\ottnt{L}$.
    We have
    \[
      \mathcal{F}  [  \ottnt{M}  ]  \ottsym{=}  \ottsym{(}  \mathcal{F}_{{\mathrm{1}}}  [  \ottsym{(} \, \square \, \ottnt{N}  \ottsym{)}  \langle  \ottnt{t}  \rangle  ]  \ottsym{)}  [  \ottnt{M}  ] = \mathcal{F}_{{\mathrm{1}}}  [  \ottsym{(}  \ottnt{M} \, \ottnt{N}  \ottsym{)}  \langle  \ottnt{t}  \rangle  ]
    \]
    and so
    \[
    \begin{array}{llll}
       \mathscr{C}\llbracket \maybebluetext{ \mathcal{F}  [  \ottnt{M}  ] } \rrbracket 
      &=&  \mathscr{C}\llbracket \maybebluetext{ \mathcal{F}_{{\mathrm{1}}}  [  \ottsym{(}  \ottnt{M} \, \ottnt{N}  \ottsym{)}  \langle  \ottnt{t}  \rangle  ] } \rrbracket  \\
      &=& \mathcal{E}'_{{\mathrm{1}}}  [   \mathscr{K}\llbracket \maybebluetext{ \ottsym{(}  \ottnt{M} \, \ottnt{N}  \ottsym{)} } \rrbracket   \Psi (\maybebluetext{ \ottnt{t} })    ] & \text{by IH with $\ottnt{L}  \ottsym{=}  \ottnt{M} \, \ottnt{N}$} \\
      &=& \mathcal{E}'_{{\mathrm{1}}}  [    \mathscr{C}\llbracket \maybebluetext{ \ottnt{M} } \rrbracket  \, (  \mathscr{C}\llbracket \maybebluetext{ \ottnt{N} } \rrbracket  ,  \Psi (\maybebluetext{ \ottnt{t} })  )   ] &\text{by \rnp{Tr-App}.}
    \end{array}
    \]
    Take $\mathcal{E}'  \ottsym{=}  \mathcal{E}'_{{\mathrm{1}}}  [   \square \, (   \mathscr{C}\llbracket \maybebluetext{ \ottnt{N} } \rrbracket  ,  \Psi (\maybebluetext{ \ottnt{t} })   )   ]$; then we have $\mathcal{E}'  [   \mathscr{C}\llbracket \maybebluetext{ \ottnt{M} } \rrbracket   ]  \ottsym{=}   \mathscr{C}\llbracket \maybebluetext{ \mathcal{F}  [  \ottnt{M}  ] } \rrbracket $.

  \item[\Case{$\mathcal{F}  \ottsym{=}  \mathcal{F}_{{\mathrm{1}}}  [  \ottnt{V} \, \square  ]$}]  
    \ifshownotes
    By the IH (item 2), there exists $\mathcal{E}'_{{\mathrm{1}}}$ such that
    $ \mathscr{C}\llbracket \maybebluetext{ \mathcal{F}_{{\mathrm{1}}}  [  \ottnt{L}  ] } \rrbracket  = \mathcal{E}'_{{\mathrm{1}}}  [   \mathscr{C}\llbracket \maybebluetext{ \ottnt{L} } \rrbracket   ]$ for any $\ottnt{L}$.
    We have
    \[
      \mathcal{F}  [  \ottnt{M}  ]  \ottsym{=}  \ottsym{(}  \mathcal{F}_{{\mathrm{1}}}  [  \ottnt{V} \, \square  ]  \ottsym{)}  [  \ottnt{M}  ] = \mathcal{F}_{{\mathrm{1}}}  [  \ottnt{V} \, \ottnt{M}  ]
    \]
    and so
    \[
    \begin{array}{llll}
       \mathscr{C}\llbracket \maybebluetext{ \mathcal{F}  [  \ottnt{M}  ] } \rrbracket 
      &=&  \mathscr{C}\llbracket \maybebluetext{ \mathcal{F}_{{\mathrm{1}}}  [  \ottnt{V} \, \ottnt{M}  ] } \rrbracket  \\
      &=& \mathcal{E}'_{{\mathrm{1}}}  [   \mathscr{C}\llbracket \maybebluetext{ \ottnt{V} \, \ottnt{M} } \rrbracket   ] & \text{by IH with $\ottnt{L}  \ottsym{=}  \ottnt{V} \, \ottnt{M}$} \\
      &=& \mathcal{E}'_{{\mathrm{1}}}  [   \mathscr{K}\llbracket \maybebluetext{ \ottnt{V} \, \ottnt{M} } \rrbracket  \ottkw{id}   ]  \\
      &=& \mathcal{E}'_{{\mathrm{1}}}  [    \mathscr{C}\llbracket \maybebluetext{ \ottnt{V} } \rrbracket  \, (  \mathscr{C}\llbracket \maybebluetext{ \ottnt{M} } \rrbracket  , \ottkw{id} )   ] &\text{by \rnp{Tr-App}.}
    \end{array}
    \]
    Take $\mathcal{E}'  \ottsym{=}  \mathcal{E}'_{{\mathrm{1}}}  [    \mathscr{C}\llbracket \maybebluetext{ \ottnt{V} } \rrbracket  \, ( \square , \ottkw{id} )   ]$; then we have $\mathcal{E}'  [   \mathscr{C}\llbracket \maybebluetext{ \ottnt{M} } \rrbracket   ]  \ottsym{=}   \mathscr{C}\llbracket \maybebluetext{ \mathcal{F}  [  \ottnt{M}  ] } \rrbracket $.
    \else
    Similar.
    \fi

  \item[\Case{$\mathcal{F}  \ottsym{=}  \mathcal{F}_{{\mathrm{1}}}  [  \ottsym{(}  \ottnt{V} \, \square \, \ottsym{)}  \langle  \ottnt{t}  \rangle  ]$}]
    \ifshownotes
    By the IH (item 2), there exists $\mathcal{E}'_{{\mathrm{1}}}$ such that
    $ \mathscr{C}\llbracket \maybebluetext{ \mathcal{F}_{{\mathrm{1}}}  [  \ottnt{L}  \langle  \ottnt{t}  \rangle  ] } \rrbracket  = \mathcal{E}'_{{\mathrm{1}}}  [   \mathscr{K}\llbracket \maybebluetext{ \ottnt{L} } \rrbracket   \Psi (\maybebluetext{ \ottnt{t} })    ]$ for any $\ottnt{L}$.
    We have
    \[
      \mathcal{F}  [  \ottnt{M}  ]  \ottsym{=}  \ottsym{(}  \mathcal{F}_{{\mathrm{1}}}  [  \ottsym{(}  \ottnt{V} \, \square \, \ottsym{)}  \langle  \ottnt{t}  \rangle  ]  \ottsym{)}  [  \ottnt{M}  ] = \mathcal{F}_{{\mathrm{1}}}  [  \ottsym{(}  \ottnt{V} \, \ottnt{M}  \ottsym{)}  \langle  \ottnt{t}  \rangle  ]
    \]
    and so
    \[
    \begin{array}{llll}
       \mathscr{C}\llbracket \maybebluetext{ \mathcal{F}  [  \ottnt{M}  ] } \rrbracket 
      &=&  \mathscr{C}\llbracket \maybebluetext{ \mathcal{F}_{{\mathrm{1}}}  [  \ottsym{(}  \ottnt{V} \, \ottnt{M}  \ottsym{)}  \langle  \ottnt{t}  \rangle  ] } \rrbracket  \\
      &=& \mathcal{E}'_{{\mathrm{1}}}  [   \mathscr{K}\llbracket \maybebluetext{ \ottsym{(}  \ottnt{V} \, \ottnt{M}  \ottsym{)} } \rrbracket   \Psi (\maybebluetext{ \ottnt{t} })    ] & \text{by IH with $\ottnt{L}  \ottsym{=}  \ottnt{V} \, \ottnt{M}$} \\
      &=& \mathcal{E}'_{{\mathrm{1}}}  [    \mathscr{C}\llbracket \maybebluetext{ \ottnt{V} } \rrbracket  \, (  \mathscr{C}\llbracket \maybebluetext{ \ottnt{M} } \rrbracket  ,  \Psi (\maybebluetext{ \ottnt{t} })  )   ] &\text{by \rnp{Tr-App}.}
    \end{array}
    \]
    Take $\mathcal{E}'  \ottsym{=}  \mathcal{E}'_{{\mathrm{1}}}  [    \mathscr{C}\llbracket \maybebluetext{ \ottnt{V} } \rrbracket  \, ( \square ,  \Psi (\maybebluetext{ \ottnt{t} })  )   ]$; then we have $\mathcal{E}'  [   \mathscr{C}\llbracket \maybebluetext{ \ottnt{M} } \rrbracket   ]  \ottsym{=}   \mathscr{C}\llbracket \maybebluetext{ \mathcal{F}  [  \ottnt{M}  ] } \rrbracket $.
    \else
    Similar.
    \fi
  \end{description}

  (2) By case analysis on the structure of $\mathcal{F}$.
  \begin{description}
  \item[\Case{$\mathcal{F}  \ottsym{=}  \square$}]
    By $\square  [  \ottnt{M}  \langle  \ottnt{s}  \rangle  ]  \ottsym{=}  \ottnt{M}  \langle  \ottnt{s}  \rangle$, we must show $ \mathscr{C}\llbracket \maybebluetext{ \ottnt{M}  \langle  \ottnt{s}  \rangle } \rrbracket   \ottsym{=}  \mathcal{E}'  [   \mathscr{K}\llbracket \maybebluetext{ \ottnt{M} } \rrbracket   \Psi (\maybebluetext{ \ottnt{s} })    ]$.
    We now have $ \mathscr{C}\llbracket \maybebluetext{ \ottnt{M}  \langle  \ottnt{s}  \rangle } \rrbracket  =  \mathscr{K}\llbracket \maybebluetext{ \ottnt{M} } \rrbracket   \Psi (\maybebluetext{ \ottnt{s} })  $.
    Take $\mathcal{E}' =  \square $.

  \item[\Case{$\mathcal{F}  \ottsym{=}  \mathcal{F}_{{\mathrm{1}}}  [  \ottnt{op}  \ottsym{(}  \square  \ottsym{,}  \ottnt{N}  \ottsym{)}  ]$}]    
    By the IH (item 1), there exists $\mathcal{E}'_{{\mathrm{1}}}$ such that
    $ \mathscr{C}\llbracket \maybebluetext{ \mathcal{F}_{{\mathrm{1}}}  [  \ottnt{L}  ] } \rrbracket  = \mathcal{E}'_{{\mathrm{1}}}  [   \mathscr{C}\llbracket \maybebluetext{ \ottnt{L} } \rrbracket   ]$ for any $\ottnt{L}$.
    We have
    \[
      \mathcal{F}  [  \ottnt{M}  \langle  \ottnt{s}  \rangle  ]  \ottsym{=}  \ottsym{(}  \mathcal{F}_{{\mathrm{1}}}  [  \ottnt{op}  \ottsym{(}  \square  \ottsym{,}  \ottnt{N}  \ottsym{)}  ]  \ottsym{)}  [  \ottnt{M}  \langle  \ottnt{s}  \rangle  ] = \mathcal{F}_{{\mathrm{1}}}  [  \ottnt{op}  \ottsym{(}  \ottnt{M}  \langle  \ottnt{s}  \rangle  \ottsym{,}  \ottnt{N}  \ottsym{)}  ].
    \]
    and so
    \[
    \begin{array}{llll}
       \mathscr{C}\llbracket \maybebluetext{ \mathcal{F}  [  \ottnt{M}  \langle  \ottnt{s}  \rangle  ] } \rrbracket 
      &=&  \mathscr{C}\llbracket \maybebluetext{ \mathcal{F}_{{\mathrm{1}}}  [  \ottnt{op}  \ottsym{(}  \ottnt{M}  \langle  \ottnt{s}  \rangle  \ottsym{,}  \ottnt{N}  \ottsym{)}  ] } \rrbracket  \\
      &=& \mathcal{E}'_{{\mathrm{1}}}  [   \mathscr{C}\llbracket \maybebluetext{ \ottnt{op}  \ottsym{(}  \ottnt{M}  \langle  \ottnt{s}  \rangle  \ottsym{,}  \ottnt{N}  \ottsym{)} } \rrbracket   ] & \text{by IH with $\ottnt{L}  \ottsym{=}  \ottnt{op}  \ottsym{(}  \ottnt{M}  \langle  \ottnt{s}  \rangle  \ottsym{,}  \ottnt{N}  \ottsym{)}$} \\
      &=& \mathcal{E}'_{{\mathrm{1}}}  [   \mathscr{K}\llbracket \maybebluetext{ \ottnt{op}  \ottsym{(}  \ottnt{M}  \langle  \ottnt{s}  \rangle  \ottsym{,}  \ottnt{N}  \ottsym{)} } \rrbracket  \ottkw{id}   ]  \\ 
      &=& \mathcal{E}'_{{\mathrm{1}}}  [  \ottnt{op}  \ottsym{(}   \mathscr{C}\llbracket \maybebluetext{ \ottnt{M}  \langle  \ottnt{s}  \rangle } \rrbracket   \ottsym{,}   \mathscr{C}\llbracket \maybebluetext{ \ottnt{N} } \rrbracket   \ottsym{)}  \langle  \ottkw{id}  \rangle  ] & \text{by \rnp{Tr-Op}}\\
      &=& \mathcal{E}'_{{\mathrm{1}}}  [  \ottnt{op}  \ottsym{(}   \mathscr{K}\llbracket \maybebluetext{ \ottnt{M} } \rrbracket   \Psi (\maybebluetext{ \ottnt{s} })    \ottsym{,}   \mathscr{C}\llbracket \maybebluetext{ \ottnt{N} } \rrbracket   \ottsym{)}  \langle  \ottkw{id}  \rangle  ].
    \end{array}
    \]
    Take $\mathcal{E}'  \ottsym{=}  \mathcal{E}'_{{\mathrm{1}}}  [  \ottnt{op}  \ottsym{(} \, \square \, \ottsym{,}   \mathscr{C}\llbracket \maybebluetext{ \ottnt{N} } \rrbracket   \ottsym{)}  \langle  \ottkw{id}  \rangle  ]$; then we have $\mathcal{E}'  [   \mathscr{K}\llbracket \maybebluetext{ \ottnt{M} } \rrbracket   \Psi (\maybebluetext{ \ottnt{s} })    ]  \ottsym{=}   \mathscr{C}\llbracket \maybebluetext{ \mathcal{F}  [  \ottnt{M}  \langle  \ottnt{s}  \rangle  ] } \rrbracket $.

  \item[\Case{$\mathcal{F}  \ottsym{=}  \mathcal{F}_{{\mathrm{1}}}  [  \ottnt{op}  \ottsym{(} \, \square \, \ottsym{,}  \ottnt{N}  \ottsym{)}  \langle  \ottnt{t}  \rangle  ]$}]
    By the IH (item 2), there exists $\mathcal{E}'_{{\mathrm{1}}}$ such that
    $ \mathscr{C}\llbracket \maybebluetext{ \mathcal{F}_{{\mathrm{1}}}  [  \ottnt{L}  \langle  \ottnt{t}  \rangle  ] } \rrbracket  = \mathcal{E}'_{{\mathrm{1}}}  [   \mathscr{K}\llbracket \maybebluetext{ \ottnt{L} } \rrbracket   \Psi (\maybebluetext{ \ottnt{t} })    ]$ for any $\ottnt{L}$.
    We have
    \[
      \mathcal{F}  [  \ottnt{M}  \langle  \ottnt{s}  \rangle  ]  \ottsym{=}  \ottsym{(}  \mathcal{F}_{{\mathrm{1}}}  [  \ottnt{op}  \ottsym{(} \, \square \, \ottsym{,}  \ottnt{N}  \ottsym{)}  \langle  \ottnt{t}  \rangle  ]  \ottsym{)}  [  \ottnt{M}  \langle  \ottnt{s}  \rangle  ] = \mathcal{F}_{{\mathrm{1}}}  [  \ottnt{op}  \ottsym{(}  \ottnt{M}  \langle  \ottnt{s}  \rangle  \ottsym{,}  \ottnt{N}  \ottsym{)}  \langle  \ottnt{t}  \rangle  ].
    \]
    and so
    \[
    \begin{array}{llll}
       \mathscr{C}\llbracket \maybebluetext{ \mathcal{F}  [  \ottnt{M}  \langle  \ottnt{s}  \rangle  ] } \rrbracket 
      &=&  \mathscr{C}\llbracket \maybebluetext{ \mathcal{F}_{{\mathrm{1}}}  [  \ottnt{op}  \ottsym{(}  \ottnt{M}  \langle  \ottnt{s}  \rangle  \ottsym{,}  \ottnt{N}  \ottsym{)}  \langle  \ottnt{t}  \rangle  ] } \rrbracket  \\
      &=& \mathcal{E}'_{{\mathrm{1}}}  [   \mathscr{K}\llbracket \maybebluetext{ \ottnt{op}  \ottsym{(}  \ottnt{M}  \langle  \ottnt{s}  \rangle  \ottsym{,}  \ottnt{N}  \ottsym{)} } \rrbracket   \Psi (\maybebluetext{ \ottnt{t} })    ] & \text{by IH with $\ottnt{L}  \ottsym{=}  \ottnt{op}  \ottsym{(}  \ottnt{M}  \langle  \ottnt{s}  \rangle  \ottsym{,}  \ottnt{N}  \ottsym{)}$} \\
      &=& \mathcal{E}'_{{\mathrm{1}}}  [  \ottnt{op}  \ottsym{(}   \mathscr{C}\llbracket \maybebluetext{ \ottnt{M}  \langle  \ottnt{s}  \rangle } \rrbracket   \ottsym{,}   \mathscr{C}\llbracket \maybebluetext{ \ottnt{N} } \rrbracket   \ottsym{)}  \langle   \Psi (\maybebluetext{ \ottnt{t} })   \rangle  ] & \text{by \rnp{Tr-Op}}\\
      &=& \mathcal{E}'_{{\mathrm{1}}}  [  \ottnt{op}  \ottsym{(}   \mathscr{K}\llbracket \maybebluetext{ \ottnt{M} } \rrbracket   \Psi (\maybebluetext{ \ottnt{s} })    \ottsym{,}   \mathscr{C}\llbracket \maybebluetext{ \ottnt{N} } \rrbracket   \ottsym{)}  \langle   \Psi (\maybebluetext{ \ottnt{t} })   \rangle  ].
    \end{array}
    \]
    Take $\mathcal{E}'  \ottsym{=}  \mathcal{E}'_{{\mathrm{1}}}  [  \ottnt{op}  \ottsym{(} \, \square \, \ottsym{,}   \mathscr{C}\llbracket \maybebluetext{ \ottnt{N} } \rrbracket   \ottsym{)}  \langle   \Psi (\maybebluetext{ \ottnt{t} })   \rangle  ]$; then we have $\mathcal{E}'  [   \mathscr{K}\llbracket \maybebluetext{ \ottnt{M} } \rrbracket   \Psi (\maybebluetext{ \ottnt{s} })    ]  \ottsym{=}   \mathscr{C}\llbracket \maybebluetext{ \mathcal{F}  [  \ottnt{M}  \langle  \ottnt{s}  \rangle  ] } \rrbracket $.

  \item[\Case{$\mathcal{F}  \ottsym{=}  \mathcal{F}_{{\mathrm{1}}}  [  \ottnt{op}  \ottsym{(}  \ottnt{V}  \ottsym{,} \, \square \, \ottsym{)}  ]$}]    
    \ifshownotes
    By the IH (item 1), there exists $\mathcal{E}'_{{\mathrm{1}}}$ such that
    $ \mathscr{C}\llbracket \maybebluetext{ \mathcal{F}_{{\mathrm{1}}}  [  \ottnt{L}  ] } \rrbracket  = \mathcal{E}'_{{\mathrm{1}}}  [   \mathscr{C}\llbracket \maybebluetext{ \ottnt{L} } \rrbracket   ]$ for any $\ottnt{L}$.
    We have
    \[
      \mathcal{F}  [  \ottnt{M}  \langle  \ottnt{s}  \rangle  ]  \ottsym{=}  \ottsym{(}  \mathcal{F}_{{\mathrm{1}}}  [  \ottnt{op}  \ottsym{(}  \ottnt{V}  \ottsym{,} \, \square \, \ottsym{)}  ]  \ottsym{)}  [  \ottnt{M}  \langle  \ottnt{s}  \rangle  ] = \mathcal{F}_{{\mathrm{1}}}  [  \ottnt{op}  \ottsym{(}  \ottnt{V}  \ottsym{,}  \ottnt{M}  \langle  \ottnt{s}  \rangle  \ottsym{)}  ].
    \]
    and so
    \[
    \begin{array}{llll}
       \mathscr{C}\llbracket \maybebluetext{ \mathcal{F}  [  \ottnt{M}  \langle  \ottnt{s}  \rangle  ] } \rrbracket 
      &=&  \mathscr{C}\llbracket \maybebluetext{ \mathcal{F}_{{\mathrm{1}}}  [  \ottnt{op}  \ottsym{(}  \ottnt{V}  \ottsym{,}  \ottnt{M}  \langle  \ottnt{s}  \rangle  \ottsym{)}  ] } \rrbracket  \\
      &=& \mathcal{E}'_{{\mathrm{1}}}  [   \mathscr{C}\llbracket \maybebluetext{ \ottnt{op}  \ottsym{(}  \ottnt{V}  \ottsym{,}  \ottnt{M}  \langle  \ottnt{s}  \rangle  \ottsym{)} } \rrbracket   ] & \text{by IH with $\ottnt{L}  \ottsym{=}  \ottnt{op}  \ottsym{(}  \ottnt{V}  \ottsym{,}  \ottnt{M}  \langle  \ottnt{s}  \rangle  \ottsym{)}$} \\
      &=& \mathcal{E}'_{{\mathrm{1}}}  [   \mathscr{K}\llbracket \maybebluetext{ \ottnt{op}  \ottsym{(}  \ottnt{V}  \ottsym{,}  \ottnt{M}  \langle  \ottnt{s}  \rangle  \ottsym{)} } \rrbracket  \ottkw{id}   ]  \\
      &=& \mathcal{E}'_{{\mathrm{1}}}  [  \ottnt{op}  \ottsym{(}   \mathscr{C}\llbracket \maybebluetext{ \ottnt{V} } \rrbracket   \ottsym{,}   \mathscr{C}\llbracket \maybebluetext{ \ottnt{M}  \langle  \ottnt{s}  \rangle } \rrbracket   \ottsym{)}  \langle  \ottkw{id}  \rangle  ] & \text{by \rnp{Tr-Op}}\\
      &=& \mathcal{E}'_{{\mathrm{1}}}  [  \ottnt{op}  \ottsym{(}   \Psi (\maybebluetext{ \ottnt{V} })   \ottsym{,}   \mathscr{K}\llbracket \maybebluetext{ \ottnt{M} } \rrbracket   \Psi (\maybebluetext{ \ottnt{s} })    \ottsym{)}  \langle  \ottkw{id}  \rangle  ].
    \end{array}
    \]
    Take $\mathcal{E}'  \ottsym{=}  \mathcal{E}'_{{\mathrm{1}}}  [  \ottnt{op}  \ottsym{(}   \Psi (\maybebluetext{ \ottnt{V} })   \ottsym{,} \, \square \, \ottsym{)}  \langle  \ottkw{id}  \rangle  ]$; then we have $\mathcal{E}'  [   \mathscr{K}\llbracket \maybebluetext{ \ottnt{M} } \rrbracket   \Psi (\maybebluetext{ \ottnt{s} })    ]  \ottsym{=}   \mathscr{C}\llbracket \maybebluetext{ \mathcal{F}  [  \ottnt{M}  \langle  \ottnt{s}  \rangle  ] } \rrbracket $.
    \else
    Similar.
    \fi

  \item[\Case{$\mathcal{F}  \ottsym{=}  \mathcal{F}_{{\mathrm{1}}}  [  \ottnt{op}  \ottsym{(}  \ottnt{V}  \ottsym{,} \, \square \, \ottsym{)}  \langle  \ottnt{t}  \rangle  ]$}]
    \ifshownotes
    By the IH (item 2), there exists $\mathcal{E}'_{{\mathrm{1}}}$ such that
    $ \mathscr{C}\llbracket \maybebluetext{ \mathcal{F}_{{\mathrm{1}}}  [  \ottnt{L}  \langle  \ottnt{t}  \rangle  ] } \rrbracket  = \mathcal{E}'_{{\mathrm{1}}}  [   \mathscr{K}\llbracket \maybebluetext{ \ottnt{L} } \rrbracket   \Psi (\maybebluetext{ \ottnt{t} })    ]$ for any $\ottnt{L}$.
    We have
    \[
      \mathcal{F}  [  \ottnt{M}  \langle  \ottnt{s}  \rangle  ]  \ottsym{=}  \ottsym{(}  \mathcal{F}_{{\mathrm{1}}}  [  \ottnt{op}  \ottsym{(}  \ottnt{V}  \ottsym{,} \, \square \, \ottsym{)}  \langle  \ottnt{t}  \rangle  ]  \ottsym{)}  [  \ottnt{M}  \langle  \ottnt{s}  \rangle  ] = \mathcal{F}_{{\mathrm{1}}}  [  \ottnt{op}  \ottsym{(}  \ottnt{V}  \ottsym{,}  \ottnt{M}  \langle  \ottnt{s}  \rangle  \ottsym{)}  \langle  \ottnt{t}  \rangle  ].
    \]
    and so
    \[
    \begin{array}{llll}
       \mathscr{C}\llbracket \maybebluetext{ \mathcal{F}  [  \ottnt{M}  \langle  \ottnt{s}  \rangle  ] } \rrbracket 
      &=&  \mathscr{C}\llbracket \maybebluetext{ \mathcal{F}_{{\mathrm{1}}}  [  \ottnt{op}  \ottsym{(}  \ottnt{V}  \ottsym{,}  \ottnt{M}  \langle  \ottnt{s}  \rangle  \ottsym{)}  \langle  \ottnt{t}  \rangle  ] } \rrbracket  \\
      &=& \mathcal{E}'_{{\mathrm{1}}}  [   \mathscr{K}\llbracket \maybebluetext{ \ottnt{op}  \ottsym{(}  \ottnt{V}  \ottsym{,}  \ottnt{M}  \langle  \ottnt{s}  \rangle  \ottsym{)} } \rrbracket   \Psi (\maybebluetext{ \ottnt{t} })    ] & \text{by IH with $\ottnt{L}  \ottsym{=}  \ottnt{op}  \ottsym{(}  \ottnt{M}  \ottsym{,}  \ottnt{N}  \ottsym{)}$} \\
      &=& \mathcal{E}'_{{\mathrm{1}}}  [  \ottnt{op}  \ottsym{(}   \mathscr{C}\llbracket \maybebluetext{ \ottnt{V} } \rrbracket   \ottsym{,}   \mathscr{C}\llbracket \maybebluetext{ \ottnt{M}  \langle  \ottnt{s}  \rangle } \rrbracket   \ottsym{)}  \langle   \Psi (\maybebluetext{ \ottnt{t} })   \rangle  ] & \text{by \rnp{Tr-Op}}\\
      &=& \mathcal{E}'_{{\mathrm{1}}}  [  \ottnt{op}  \ottsym{(}   \Psi (\maybebluetext{ \ottnt{V} })   \ottsym{,}   \mathscr{K}\llbracket \maybebluetext{ \ottnt{M} } \rrbracket   \Psi (\maybebluetext{ \ottnt{s} })    \ottsym{)}  \langle   \Psi (\maybebluetext{ \ottnt{t} })   \rangle  ].
    \end{array}
    \]
    Take $\mathcal{E}'  \ottsym{=}  \mathcal{E}'_{{\mathrm{1}}}  [  \ottnt{op}  \ottsym{(}   \Psi (\maybebluetext{ \ottnt{V} })   \ottsym{,} \, \square \, \ottsym{)}  \langle   \Psi (\maybebluetext{ \ottnt{t} })   \rangle  ]$; then we have $\mathcal{E}'  [   \mathscr{K}\llbracket \maybebluetext{ \ottnt{M} } \rrbracket   \Psi (\maybebluetext{ \ottnt{s} })    ]  \ottsym{=}   \mathscr{C}\llbracket \maybebluetext{ \mathcal{F}  [  \ottnt{M}  \langle  \ottnt{s}  \rangle  ] } \rrbracket $.
    \else
    Similar.
    \fi

  \item[\Case{$\mathcal{F}  \ottsym{=}  \mathcal{F}_{{\mathrm{1}}}  [  \square \, \ottnt{N}  ]$}]  
    By the IH (item 1), there exists $\mathcal{E}'_{{\mathrm{1}}}$ such that
    $ \mathscr{C}\llbracket \maybebluetext{ \mathcal{F}_{{\mathrm{1}}}  [  \ottnt{L}  ] } \rrbracket  = \mathcal{E}'_{{\mathrm{1}}}  [   \mathscr{C}\llbracket \maybebluetext{ \ottnt{L} } \rrbracket   ]$ for any $\ottnt{L}$.
    We have
    \[
      \mathcal{F}  [  \ottnt{M}  \langle  \ottnt{s}  \rangle  ]  \ottsym{=}  \ottsym{(}  \mathcal{F}_{{\mathrm{1}}}  [  \square \, \ottnt{N}  ]  \ottsym{)}  [  \ottnt{M}  \langle  \ottnt{s}  \rangle  ] = \mathcal{F}_{{\mathrm{1}}}  [  \ottsym{(}  \ottnt{M}  \langle  \ottnt{s}  \rangle  \ottsym{)} \, \ottnt{N}  ].
    \]
    and so
    \[
    \begin{array}{llll}
       \mathscr{C}\llbracket \maybebluetext{ \mathcal{F}  [  \ottnt{M}  \langle  \ottnt{s}  \rangle  ] } \rrbracket 
      &=&  \mathscr{C}\llbracket \maybebluetext{ \mathcal{F}_{{\mathrm{1}}}  [  \ottsym{(}  \ottnt{M}  \langle  \ottnt{s}  \rangle  \ottsym{)} \, \ottnt{N}  ] } \rrbracket  \\
      &=& \mathcal{E}'_{{\mathrm{1}}}  [   \mathscr{C}\llbracket \maybebluetext{ \ottsym{(}  \ottnt{M}  \langle  \ottnt{s}  \rangle  \ottsym{)} \, \ottnt{N} } \rrbracket   ] & \text{by IH with $\ottnt{L}  \ottsym{=}  \ottnt{op}  \ottsym{(}  \ottnt{M}  \langle  \ottnt{s}  \rangle  \ottsym{,}  \ottnt{N}  \ottsym{)}$} \\
      &=& \mathcal{E}'_{{\mathrm{1}}}  [   \mathscr{K}\llbracket \maybebluetext{ \ottsym{(}  \ottnt{M}  \langle  \ottnt{s}  \rangle  \ottsym{)} \, \ottnt{N} } \rrbracket  \ottkw{id}   ]  & \text{by \rnp{Tr-App}}\\
      &=& \mathcal{E}'_{{\mathrm{1}}}  [    \mathscr{C}\llbracket \maybebluetext{ \ottnt{M}  \langle  \ottnt{s}  \rangle } \rrbracket  \, (  \mathscr{C}\llbracket \maybebluetext{ \ottnt{N} } \rrbracket  , \ottkw{id} )   ] \\
      &=& \mathcal{E}'_{{\mathrm{1}}}  [   \ottsym{(}   \mathscr{K}\llbracket \maybebluetext{ \ottnt{M} } \rrbracket   \Psi (\maybebluetext{ \ottnt{s} })    \ottsym{)} \, (  \mathscr{C}\llbracket \maybebluetext{ \ottnt{N} } \rrbracket  , \ottkw{id} )   ].
    \end{array}
    \]
    Take $\mathcal{E}'  \ottsym{=}  \mathcal{E}'_{{\mathrm{1}}}  [   \square \, (   \mathscr{C}\llbracket \maybebluetext{ \ottnt{N} } \rrbracket  , \ottkw{id}  )   ]$; then we have $\mathcal{E}'  [   \mathscr{K}\llbracket \maybebluetext{ \ottnt{M} } \rrbracket   \Psi (\maybebluetext{ \ottnt{s} })    ]  \ottsym{=}   \mathscr{C}\llbracket \maybebluetext{ \mathcal{F}  [  \ottnt{M}  \langle  \ottnt{s}  \rangle  ] } \rrbracket $.

  \item[\Case{$\mathcal{F}  \ottsym{=}  \mathcal{F}_{{\mathrm{1}}}  [  \ottsym{(} \, \square \, \ottnt{N}  \ottsym{)}  \langle  \ottnt{t}  \rangle  ]$}]
    By the IH (item 2), there exists $\mathcal{E}'_{{\mathrm{1}}}$ such that
    $ \mathscr{C}\llbracket \maybebluetext{ \mathcal{F}_{{\mathrm{1}}}  [  \ottnt{L}  \langle  \ottnt{t}  \rangle  ] } \rrbracket  = \mathcal{E}'_{{\mathrm{1}}}  [   \mathscr{K}\llbracket \maybebluetext{ \ottnt{L} } \rrbracket   \Psi (\maybebluetext{ \ottnt{t} })    ]$ for any $\ottnt{L}$.
    We have
    \[
      \mathcal{F}  [  \ottnt{M}  \langle  \ottnt{s}  \rangle  ]  \ottsym{=}  \ottsym{(}  \mathcal{F}_{{\mathrm{1}}}  [  \ottsym{(} \, \square \, \ottnt{N}  \ottsym{)}  \langle  \ottnt{t}  \rangle  ]  \ottsym{)}  [  \ottnt{M}  \langle  \ottnt{s}  \rangle  ] = \mathcal{F}_{{\mathrm{1}}}  [  \ottsym{(}  \ottsym{(}  \ottnt{M}  \langle  \ottnt{s}  \rangle  \ottsym{)} \, \ottnt{N}  \ottsym{)}  \langle  \ottnt{t}  \rangle  ].
    \]
    and so
    \[
    \begin{array}{llll}
       \mathscr{C}\llbracket \maybebluetext{ \mathcal{F}  [  \ottnt{M}  \langle  \ottnt{s}  \rangle  ] } \rrbracket 
      &=&  \mathscr{C}\llbracket \maybebluetext{ \mathcal{F}_{{\mathrm{1}}}  [  \ottsym{(}  \ottsym{(}  \ottnt{M}  \langle  \ottnt{s}  \rangle  \ottsym{)} \, \ottnt{N}  \ottsym{)}  \langle  \ottnt{t}  \rangle  ] } \rrbracket  \\
      &=& \mathcal{E}'_{{\mathrm{1}}}  [   \mathscr{K}\llbracket \maybebluetext{ \ottsym{(}  \ottnt{M}  \langle  \ottnt{s}  \rangle  \ottsym{)} \, \ottnt{N} } \rrbracket   \Psi (\maybebluetext{ \ottnt{t} })    ] & \text{by IH with $\ottnt{L}  \ottsym{=}  \ottsym{(}  \ottnt{M}  \langle  \ottnt{s}  \rangle  \ottsym{)} \, \ottnt{N}$} \\
      &=& \mathcal{E}'_{{\mathrm{1}}}  [    \mathscr{C}\llbracket \maybebluetext{ \ottnt{M}  \langle  \ottnt{s}  \rangle } \rrbracket  \, (  \mathscr{C}\llbracket \maybebluetext{ \ottnt{N} } \rrbracket  ,  \Psi (\maybebluetext{ \ottnt{t} })  )   ] & \text{by \rnp{Tr-App}}\\
      &=& \mathcal{E}'_{{\mathrm{1}}}  [   \ottsym{(}   \mathscr{K}\llbracket \maybebluetext{ \ottnt{M} } \rrbracket   \Psi (\maybebluetext{ \ottnt{s} })    \ottsym{)} \, (  \mathscr{C}\llbracket \maybebluetext{ \ottnt{N} } \rrbracket  ,  \Psi (\maybebluetext{ \ottnt{t} })  )   ].
    \end{array}
    \]
    Take $\mathcal{E}'  \ottsym{=}  \mathcal{E}'_{{\mathrm{1}}}  [   \square \, (   \mathscr{C}\llbracket \maybebluetext{ \ottnt{N} } \rrbracket  ,  \Psi (\maybebluetext{ \ottnt{t} })   )   ]$; then we have $\mathcal{E}'  [   \mathscr{K}\llbracket \maybebluetext{ \ottnt{M} } \rrbracket   \Psi (\maybebluetext{ \ottnt{s} })    ]  \ottsym{=}   \mathscr{C}\llbracket \maybebluetext{ \mathcal{F}  [  \ottnt{M}  \langle  \ottnt{s}  \rangle  ] } \rrbracket $.

  \item[\Case{$\mathcal{F}  \ottsym{=}  \mathcal{F}_{{\mathrm{1}}}  [  \ottnt{V} \, \square  ]$}]  
    \ifshownotes
    By the IH (item 2), there exists $\mathcal{E}'_{{\mathrm{1}}}$ such that
    $ \mathscr{C}\llbracket \maybebluetext{ \mathcal{F}_{{\mathrm{1}}}  [  \ottnt{L}  ] } \rrbracket  = \mathcal{E}'_{{\mathrm{1}}}  [   \mathscr{C}\llbracket \maybebluetext{ \ottnt{L} } \rrbracket   ]$ for any $\ottnt{L}$.
    We have
    \[
      \mathcal{F}  [  \ottnt{M}  \langle  \ottnt{s}  \rangle  ]  \ottsym{=}  \ottsym{(}  \mathcal{F}_{{\mathrm{1}}}  [  \ottnt{V} \, \square  ]  \ottsym{)}  [  \ottnt{M}  \langle  \ottnt{s}  \rangle  ] = \mathcal{F}_{{\mathrm{1}}}  [  \ottnt{V} \, \ottsym{(}  \ottnt{M}  \langle  \ottnt{s}  \rangle  \ottsym{)}  ].
    \]
    and so
    \[
    \begin{array}{llll}
       \mathscr{C}\llbracket \maybebluetext{ \mathcal{F}  [  \ottnt{M}  \langle  \ottnt{s}  \rangle  ] } \rrbracket 
      &=&  \mathscr{C}\llbracket \maybebluetext{ \mathcal{F}_{{\mathrm{1}}}  [  \ottnt{V} \, \ottsym{(}  \ottnt{M}  \langle  \ottnt{s}  \rangle  \ottsym{)}  ] } \rrbracket  \\
      &=& \mathcal{E}'_{{\mathrm{1}}}  [   \mathscr{C}\llbracket \maybebluetext{ \ottnt{V} \, \ottsym{(}  \ottnt{M}  \langle  \ottnt{s}  \rangle  \ottsym{)} } \rrbracket   ] & \text{by IH with $\ottnt{L}  \ottsym{=}  \ottnt{V} \, \ottsym{(}  \ottnt{M}  \langle  \ottnt{s}  \rangle  \ottsym{)}$} \\
      &=& \mathcal{E}'_{{\mathrm{1}}}  [   \mathscr{K}\llbracket \maybebluetext{ \ottnt{V} \, \ottsym{(}  \ottnt{M}  \langle  \ottnt{s}  \rangle  \ottsym{)} } \rrbracket  \ottkw{id}   ]  \\
      &=& \mathcal{E}'_{{\mathrm{1}}}  [    \mathscr{C}\llbracket \maybebluetext{ \ottnt{V} } \rrbracket  \, (  \mathscr{C}\llbracket \maybebluetext{ \ottnt{M}  \langle  \ottnt{s}  \rangle } \rrbracket  , \ottkw{id} )   ] &\text{by \rnp{Tr-App}}\\
      &=& \mathcal{E}'_{{\mathrm{1}}}  [    \Psi (\maybebluetext{ \ottnt{V} })  \, (  \mathscr{K}\llbracket \maybebluetext{ \ottnt{M} } \rrbracket   \Psi (\maybebluetext{ \ottnt{s} })   , \ottkw{id} )   ].
    \end{array}
    \]
    Take $\mathcal{E}'  \ottsym{=}  \mathcal{E}'_{{\mathrm{1}}}  [    \Psi (\maybebluetext{ \ottnt{V} })  \, (  \square , \ottkw{id}  )   ]$; then we have $\mathcal{E}'  [   \mathscr{K}\llbracket \maybebluetext{ \ottnt{M} } \rrbracket   \Psi (\maybebluetext{ \ottnt{s} })    ]  \ottsym{=}   \mathscr{C}\llbracket \maybebluetext{ \mathcal{F}  [  \ottnt{M}  \langle  \ottnt{s}  \rangle  ] } \rrbracket $.
    \else
    Similar.
    \fi

  \item[\Case{$\mathcal{F}  \ottsym{=}  \mathcal{F}_{{\mathrm{1}}}  [  \ottsym{(}  \ottnt{V} \, \square \, \ottsym{)}  \langle  \ottnt{t}  \rangle  ]$}]
    \ifshownotes
    By the IH (item 2), there exists $\mathcal{E}'_{{\mathrm{1}}}$ such that
    $ \mathscr{C}\llbracket \maybebluetext{ \mathcal{F}_{{\mathrm{1}}}  [  \ottnt{L}  \langle  \ottnt{t}  \rangle  ] } \rrbracket  = \mathcal{E}'_{{\mathrm{1}}}  [   \mathscr{K}\llbracket \maybebluetext{ \ottnt{L} } \rrbracket   \Psi (\maybebluetext{ \ottnt{t} })    ]$ for any $\ottnt{L}$.
    We have
    \[
      \mathcal{F}  [  \ottnt{M}  \langle  \ottnt{s}  \rangle  ]  \ottsym{=}  \ottsym{(}  \mathcal{F}_{{\mathrm{1}}}  [  \ottsym{(}  \ottnt{V} \, \square \, \ottsym{)}  \langle  \ottnt{t}  \rangle  ]  \ottsym{)}  [  \ottnt{M}  \langle  \ottnt{s}  \rangle  ] = \mathcal{F}_{{\mathrm{1}}}  [  \ottsym{(}  \ottnt{V} \, \ottsym{(}  \ottnt{M}  \langle  \ottnt{s}  \rangle  \ottsym{)}  \ottsym{)}  \langle  \ottnt{t}  \rangle  ].
    \]
    and so
    \[
    \begin{array}{llll}
       \mathscr{C}\llbracket \maybebluetext{ \mathcal{F}  [  \ottnt{M}  \langle  \ottnt{s}  \rangle  ] } \rrbracket 
      &=&  \mathscr{C}\llbracket \maybebluetext{ \mathcal{F}_{{\mathrm{1}}}  [  \ottsym{(}  \ottnt{V} \, \ottsym{(}  \ottnt{M}  \langle  \ottnt{s}  \rangle  \ottsym{)}  \ottsym{)}  \langle  \ottnt{t}  \rangle  ] } \rrbracket  \\
      &=& \mathcal{E}'_{{\mathrm{1}}}  [   \mathscr{K}\llbracket \maybebluetext{ \ottnt{V} \, \ottsym{(}  \ottnt{M}  \langle  \ottnt{s}  \rangle  \ottsym{)} } \rrbracket   \Psi (\maybebluetext{ \ottnt{t} })    ] & \text{by IH with $\ottnt{L}  \ottsym{=}  \ottnt{V} \, \ottsym{(}  \ottnt{M}  \langle  \ottnt{s}  \rangle  \ottsym{)}$} \\
      &=& \mathcal{E}'_{{\mathrm{1}}}  [    \mathscr{C}\llbracket \maybebluetext{ \ottnt{V} } \rrbracket  \, (  \mathscr{C}\llbracket \maybebluetext{ \ottnt{M}  \langle  \ottnt{s}  \rangle } \rrbracket  ,  \Psi (\maybebluetext{ \ottnt{t} })  )   ] &\text{by \rnp{Tr-App}}\\
      &=& \mathcal{E}'_{{\mathrm{1}}}  [    \Psi (\maybebluetext{ \ottnt{V} })  \, (  \mathscr{K}\llbracket \maybebluetext{ \ottnt{M} } \rrbracket   \Psi (\maybebluetext{ \ottnt{s} })   ,  \Psi (\maybebluetext{ \ottnt{t} })  )   ]
    \end{array}
    \]
    Take $\mathcal{E}'  \ottsym{=}  \mathcal{E}'_{{\mathrm{1}}}  [    \Psi (\maybebluetext{ \ottnt{V} })  \, (  \square ,  \Psi (\maybebluetext{ \ottnt{t} })   )   ]$; then we have $\mathcal{E}'  [   \mathscr{K}\llbracket \maybebluetext{ \ottnt{M} } \rrbracket   \Psi (\maybebluetext{ \ottnt{s} })    ]  \ottsym{=}   \mathscr{C}\llbracket \maybebluetext{ \mathcal{F}  [  \ottnt{M}  \langle  \ottnt{s}  \rangle  ] } \rrbracket $.
    \else
    Similar.
    \fi
    \qedhere
  \end{description}
\end{proof}

\subsubsection{Main Theorem}

As usual, $ \mathbin{  \accentset{\mathsf{e} }{\evalto}_{\mathsf{S_1} }     \accentset{\mathsf{c} }{\evalto}_{\mathsf{S_1} }  ^*} $ denotes the relational composition.

\begin{lemma}[Simulation for Reduction]\label{lem:trans-reduce}\leavevmode
  \begin{enumerate}
  \item If $ \ottnt{M}    \mathbin{\accentset{\mathsf{e} }{\reduces}_{\mathsf{S} } }    \ottnt{N} $ 
    , then
    $ \mathscr{K}\llbracket \maybebluetext{ \ottnt{M} } \rrbracket  \ottnt{K}   \mathbin{  \accentset{\mathsf{e} }{\evalto}_{\mathsf{S_1} }     \accentset{\mathsf{c} }{\evalto}_{\mathsf{S_1} }  ^*}   \mathscr{K}\llbracket \maybebluetext{ \ottnt{N} } \rrbracket  \ottnt{K} $.
  \item If $ \ottnt{M}    \mathbin{\accentset{\mathsf{c} }{\reduces}_{\mathsf{S} } }    \ottnt{N} $, then $  \mathscr{C}\llbracket \maybebluetext{ \ottnt{M} } \rrbracket     \mathbin{  \accentset{\mathsf{c} }{\evalto}_{\mathsf{S_1} }  ^+}     \mathscr{C}\llbracket \maybebluetext{ \ottnt{N} } \rrbracket  $.
  \end{enumerate}
\end{lemma}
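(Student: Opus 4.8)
I would prove both items by case analysis on the \lamS{} reduction rule used, in each case unfolding the translation of the redex $\ottnt{M}$ and of the contractum $\ottnt{N}$ and then exhibiting a short \lamSx{} reduction sequence between them. Two orienting observations keep the analysis uniform: every $\mathsf{e}$-redex of \lamS{} ($\ottnt{op}(\ottnt{a},\ottnt{b})$, $(\lambda\ottmv{x}.\,\ottnt{M})\,\ottnt{V}$, $(\ottnt{U}\langle\!\langle\ottnt{s}\rightarrow\ottnt{t}\rangle\!\rangle)\,\ottnt{V}$) is neither a value nor a coercion application, so $\mathscr{K}\llbracket\cdot\rrbracket$ unfolds it by \rnp{Tr-Op}, \rnp{Tr-Val}, or \rnp{Tr-App}; and every $\mathsf{c}$-redex is a coercion application $\ottnt{L}\langle\ottnt{s}\rangle$, so $\mathscr{C}\llbracket\cdot\rrbracket$ unfolds it by \rnp{TrC-Crc} into $\mathscr{K}\llbracket\ottnt{L}\rrbracket\,\Psi(\ottnt{s})$. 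Throughout I would use that $\Psi(\cdot)$ sends values to values and uncoerced values to uncoerced values (Lemma~\ref{lem:val-coe-val}), preserves the syntactic shape of coercions (so delayed coercions go to delayed coercions), and commutes with meta-level composition, i.e.\ $\Psi(\ottnt{s}\fatsemi\ottnt{t})=\Psi(\ottnt{s})\fatsemi\Psi(\ottnt{t})$ (Lemma~\ref{lem:cmp-trans}).

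For item~1, \rnp{R-Op} is immediate: $\mathscr{K}\llbracket\ottnt{op}(\ottnt{a},\ottnt{b})\rrbracket\,\ottnt{K}=\ottnt{op}(\ottnt{a},\ottnt{b})\langle\ottnt{K}\rangle$ takes one $\mathsf{e}$-step to $\delta(\ottnt{op},\ottnt{a},\ottnt{b})\langle\ottnt{K}\rangle=\mathscr{K}\llbracket\delta(\ottnt{op},\ottnt{a},\ottnt{b})\rrbracket\,\ottnt{K}$. For \rnp{R-Beta}, $\mathscr{K}\llbracket(\lambda\ottmv{x}.\,\ottnt{M})\,\ottnt{V}\rrbracket\,\ottnt{K}=(\lambda(\ottmv{x},\kappa).\,(\mathscr{K}\llbracket\ottnt{M}\rrbracket\,\kappa))\,(\Psi(\ottnt{V}),\ottnt{K})$ takes one $\mathsf{e}$-step to $(\mathscr{K}\llbracket\ottnt{M}\rrbracket\,\kappa)[\ottmv{x}{:=}\Psi(\ottnt{V}),\kappa{:=}\ottnt{K}]$, which equals $\mathscr{K}\llbracket\ottnt{M}[\ottmv{x}{:=}\ottnt{V}]\rrbracket\,\ottnt{K}$ by the Substitution Lemma~\ref{lem:subst-trans} (the $\kappa$ bound by the translation is fresh for $\ottnt{M}$, and for $\ottnt{V}$ after $\alpha$-renaming). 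The interesting case is \rnp{R-Wrap}: $\mathscr{K}\llbracket(\ottnt{U}\langle\!\langle\ottnt{s}\rightarrow\ottnt{t}\rangle\!\rangle)\,\ottnt{V}\rrbracket\,\ottnt{K}$ is an application of the wrapped value $\Psi(\ottnt{U})\langle\!\langle\Psi(\ottnt{s})\Rightarrow\Psi(\ottnt{t})\rangle\!\rangle$ to $(\Psi(\ottnt{V}),\ottnt{K})$, and the \lamSx{} rule \rnp{R-Wrap} rewrites it to $\ottkw{let}\ \kappa=\Psi(\ottnt{t})\mathbin{;\!;}\ottnt{K}\ \ottkw{in}\ \Psi(\ottnt{U})\,(\Psi(\ottnt{V})\langle\Psi(\ottnt{s})\rangle,\kappa)$; on the other side, applying \rnp{TrC-Crc}, \rnp{Tr-App}, and \rnp{Tr-Val} to the contractum $(\ottnt{U}\,(\ottnt{V}\langle\ottnt{s}\rangle))\langle\ottnt{t}\rangle$ yields exactly that term. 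Hence a single $\mathsf{e}$-step and no $\mathsf{c}$-steps suffice, which is consistent with the stated form.

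For item~2 the contractum is a coercion application, translated by \rnp{TrC-Crc}. For \rnp{R-Id}, \rnp{R-Fail}, and \rnp{R-Crc} the translated redex has the form $\Psi(\ottnt{U})\langle\Psi(\ottnt{s})\rangle$ with $\ottnt{s}$ one of $\ottkw{id}_\ottnt{A}$, $\bot^{\ottnt{G}\,\ottnt{p}\,\ottnt{H}}$, $\ottnt{d}$; since $\Psi(\ottnt{U})$ is an uncoerced value and $\Psi$ preserves the shape of $\ottnt{s}$, one application of the \lamSx{} rule \rnp{R-Id}, \rnp{R-Fail}, or \rnp{R-Crc} already reaches $\mathscr{C}\llbracket\ottnt{N}\rrbracket$ (which is $\Psi(\ottnt{U})$, $\ottkw{blame}\,\ottnt{p}$, or $\Psi(\ottnt{U})\langle\!\langle\Psi(\ottnt{d})\rangle\!\rangle$ respectively). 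For \rnp{R-MergeC}, $\mathscr{C}\llbracket\ottnt{L}\langle\ottnt{s}\rangle\langle\ottnt{t}\rangle\rrbracket$ unfolds by \rnp{TrC-Crc} and \rnp{Tr-Crc} into $\ottkw{let}\ \kappa=\Psi(\ottnt{s})\mathbin{;\!;}\Psi(\ottnt{t})\ \ottkw{in}\ (\mathscr{K}\llbracket\ottnt{L}\rrbracket\,\kappa)$; one $\mathsf{c}$-step by the \lamSx{} rule \rnp{R-Cmp} replaces $\Psi(\ottnt{s})\mathbin{;\!;}\Psi(\ottnt{t})$ with $\Psi(\ottnt{s})\fatsemi\Psi(\ottnt{t})$, and a second $\mathsf{c}$-step by \rnp{R-Let} substitutes it for $\kappa$, giving $\mathscr{K}\llbracket\ottnt{L}\rrbracket\,(\Psi(\ottnt{s})\fatsemi\Psi(\ottnt{t}))$ by Lemma~\ref{lem:subst-kap}; this equals $\mathscr{C}\llbracket\ottnt{L}\langle\ottnt{s}\fatsemi\ottnt{t}\rangle\rrbracket=\mathscr{C}\llbracket\ottnt{N}\rrbracket$ by Lemma~\ref{lem:cmp-trans}, so two $\mathsf{c}$-steps. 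The case \rnp{R-MergeV} is analogous, using the \lamSx{} rule \rnp{R-MergeV} followed by \rnp{R-Cmp} and Lemma~\ref{lem:cmp-trans}.

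The main obstacle is less any single hard argument than the discipline of making each \lamSx{} sequence land \emph{exactly} on $\mathscr{K}\llbracket\ottnt{N}\rrbracket\,\ottnt{K}$ (resp.\ $\mathscr{C}\llbracket\ottnt{N}\rrbracket$) rather than on a term that only further reduces to it. In \rnp{R-Wrap} and \rnp{R-MergeC} this depends on the $\ottkw{let}$ introduced by \rnp{Tr-Crc} coinciding with the $\ottkw{let}$ that \rnp{R-Wrap} (resp.\ \rnp{R-Cmp}/\rnp{R-Let}) produces---which is precisely the reason \rnp{Tr-Crc} is defined with a $\ottkw{let}$ rather than by eager substitution---and in \rnp{R-Beta} and \rnp{R-MergeC} it requires discharging the freshness side conditions of Lemmas~\ref{lem:subst-trans} and~\ref{lem:subst-kap}, which I would handle by $\alpha$-renaming the continuation variable the translation introduces.
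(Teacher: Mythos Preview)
Your proposal is correct and follows essentially the same case analysis as the paper's proof, invoking the same auxiliary lemmas (\ref{lem:subst-trans}, \ref{lem:subst-kap}, \ref{lem:cmp-trans}, \ref{lem:val-coe-val}) at the same points. One small slip: in the \rnp{R-Wrap} case you write \rnp{TrC-Crc} for the outermost unfolding of the contractum, but since you are computing $\mathscr{K}\llbracket(\ottnt{U}\,(\ottnt{V}\langle\ottnt{s}\rangle))\langle\ottnt{t}\rangle\rrbracket\,\ottnt{K}$ the first rule is \rnp{Tr-Crc} (which produces the $\ottkw{let}$); \rnp{TrC-Crc} is then used for the inner subterm $\mathscr{C}\llbracket\ottnt{V}\langle\ottnt{s}\rangle\rrbracket$---your described result is correct, only the rule label is off.
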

\begin{proof}
  (1) By case analysis on the reduction rule applied to $ \ottnt{M}    \mathbin{\accentset{\mathsf{e} }{\reduces}_{\mathsf{S} } }    \ottnt{N} $.
  \begin{description}
  \item[\Case{\rnp{R-Op}}]
    We are given
    \[
      \ottnt{M}  \ottsym{=}  \ottnt{op}  \ottsym{(}  \ottnt{a_{{\mathrm{1}}}}  \ottsym{,}  \ottnt{a_{{\mathrm{2}}}}  \ottsym{)} \hgap
      \ottnt{N}  \ottsym{=}  \ottnt{a} \hgap
      \delta \, \ottsym{(}  \ottnt{op}  \ottsym{,}  \ottnt{a_{{\mathrm{1}}}}  \ottsym{,}  \ottnt{a_{{\mathrm{2}}}}  \ottsym{)}  \ottsym{=}  \ottnt{a}
    \]
    for some $\ottnt{op},\ottnt{a_{{\mathrm{1}}}},\ottnt{a_{{\mathrm{2}}}},\ottnt{a}$.
    We assume $ \metafun{ty} ( \ottnt{a_{{\mathrm{1}}}} )   \ottsym{=}  \iota_{{\mathrm{1}}}$ and $ \metafun{ty} ( \ottnt{a_{{\mathrm{2}}}} )   \ottsym{=}  \iota_{{\mathrm{2}}}$.
    \[
    \begin{array}{llll}
       \mathscr{K}\llbracket \maybebluetext{ \ottnt{op}  \ottsym{(}  \ottnt{a_{{\mathrm{1}}}}  \ottsym{,}  \ottnt{a_{{\mathrm{2}}}}  \ottsym{)} } \rrbracket  \ottnt{K} 
      &=& \ottnt{op}  \ottsym{(}   \mathscr{C}\llbracket \maybebluetext{ \ottnt{a_{{\mathrm{1}}}} } \rrbracket   \ottsym{,}   \mathscr{C}\llbracket \maybebluetext{ \ottnt{a_{{\mathrm{2}}}} } \rrbracket   \ottsym{)}  \langle  \ottnt{K}  \rangle
      &\text{by \rnp{Tr-Op}}\\
      &=& \ottnt{op}  \ottsym{(}  \ottnt{a_{{\mathrm{1}}}}  \ottsym{,}  \ottnt{a_{{\mathrm{2}}}}  \ottsym{)}  \langle  \ottnt{K}  \rangle\\
      & \accentset{\mathsf{e} }{\evalto} & \delta \, \ottsym{(}  \ottnt{op}  \ottsym{,}  \ottnt{a_{{\mathrm{1}}}}  \ottsym{,}  \ottnt{a_{{\mathrm{2}}}}  \ottsym{)}  \langle  \ottnt{K}  \rangle &\text{by \rnp{R-Op}} \\
      &=& \ottnt{a}  \langle  \ottnt{K}  \rangle \\
      &=&  \mathscr{K}\llbracket \maybebluetext{ \ottnt{a} } \rrbracket  \ottnt{K} .
    \end{array}
    \]

  \item[\Case{\rnp{R-Beta}}]
    We are given
    \[
      \ottnt{M}  \ottsym{=}  \ottsym{(}   \lambda   \ottmv{x} .\,  \ottnt{M_{{\mathrm{1}}}}   \ottsym{)} \, \ottnt{V} \hgap
      \ottnt{N}  \ottsym{=}  \ottnt{M_{{\mathrm{1}}}}  [  \ottmv{x}  \ottsym{:=}  \ottnt{V}  ]
    \]
    for some $\ottmv{x},\ottnt{M_{{\mathrm{1}}}},\ottnt{V}$.
    Here,
    \[
    \begin{array}{llll}
       \mathscr{K}\llbracket \maybebluetext{ \ottsym{(}   \lambda   \ottmv{x} .\,  \ottnt{M_{{\mathrm{1}}}}   \ottsym{)} \, \ottnt{V} } \rrbracket  \ottnt{K} 
      &=&   \mathscr{C}\llbracket \maybebluetext{  \lambda   \ottmv{x} .\,  \ottnt{M_{{\mathrm{1}}}}  } \rrbracket  \, (  \mathscr{C}\llbracket \maybebluetext{ \ottnt{V} } \rrbracket  , \ottnt{K} ) 
      &\text{by \rnp{Tr-App}}\\
      &=&   \Psi (\maybebluetext{  \lambda   \ottmv{x} .\,  \ottnt{M_{{\mathrm{1}}}}  })  \, (  \Psi (\maybebluetext{ \ottnt{V} })  , \ottnt{K} ) 
      &\text{as $ \lambda   \ottmv{x} .\,  \ottnt{M_{{\mathrm{1}}}} $ is a value}\\
      &=&  \ottsym{(}   \lambda  ( \ottmv{x} , \kappa ).\,  \ottsym{(}   \mathscr{K}\llbracket \maybebluetext{ \ottnt{M_{{\mathrm{1}}}} } \rrbracket  \kappa   \ottsym{)}   \ottsym{)} \, (  \Psi (\maybebluetext{ \ottnt{V} })  , \ottnt{K} )  \\
      & \accentset{\mathsf{e} }{\evalto} & \ottsym{(}   \mathscr{K}\llbracket \maybebluetext{ \ottnt{M_{{\mathrm{1}}}} } \rrbracket  \kappa   \ottsym{)}  [  \ottmv{x}  \ottsym{:=}   \Psi (\maybebluetext{ \ottnt{V} })   \ottsym{,}  \kappa  \ottsym{:=}  \ottnt{K}  ] &\text{by \rnp{R-Beta}}\\
      &=&  \mathscr{K}\llbracket \maybebluetext{ \ottnt{M_{{\mathrm{1}}}}  [  \ottmv{x}  \ottsym{:=}  \ottnt{V}  ] } \rrbracket  \ottnt{K} 
      &\text{by Lemma~\ref{lem:subst-trans}.}
    \end{array}
    \]

  \item[\Case{\rnp{R-Wrap}}]
    We are given
    \[
      \ottnt{M}  \ottsym{=}  \ottsym{(}  \ottnt{U}  \langle\!\langle  \ottnt{s}  \rightarrow  \ottnt{t}  \rangle\!\rangle  \ottsym{)} \, \ottnt{V} \hgap
      \ottnt{N}  \ottsym{=}  \ottsym{(}  \ottnt{U} \, \ottsym{(}  \ottnt{V}  \langle  \ottnt{s}  \rangle  \ottsym{)}  \ottsym{)}  \langle  \ottnt{t}  \rangle
    \]
    for some $\ottnt{U},\ottnt{s},\ottnt{t},\ottnt{V}$.
    \[
    \begin{array}{llll}
      &&  \mathscr{K}\llbracket \maybebluetext{ \ottsym{(}  \ottnt{U}  \langle\!\langle  \ottnt{s}  \rightarrow  \ottnt{t}  \rangle\!\rangle  \ottsym{)} \, \ottnt{V} } \rrbracket  \ottnt{K}  \\
      &=&   \mathscr{C}\llbracket \maybebluetext{ \ottnt{U}  \langle\!\langle  \ottnt{s}  \rightarrow  \ottnt{t}  \rangle\!\rangle } \rrbracket  \, (  \mathscr{C}\llbracket \maybebluetext{ \ottnt{V} } \rrbracket  , \ottnt{K} ) 
      &\text{by \rnp{Tr-App}}\\
      &=&   \Psi (\maybebluetext{ \ottnt{U}  \langle\!\langle  \ottnt{s}  \rightarrow  \ottnt{t}  \rangle\!\rangle })  \, (  \Psi (\maybebluetext{ \ottnt{V} })  , \ottnt{K} ) 
      &\text{as $\ottnt{U}  \langle\!\langle  \ottnt{s}  \rightarrow  \ottnt{t}  \rangle\!\rangle$ is a value}\\
      &=&   \Psi (\maybebluetext{ \ottnt{U} })   \langle\!\langle   \Psi (\maybebluetext{ \ottnt{s} })   \Rightarrow   \Psi (\maybebluetext{ \ottnt{t} })   \rangle\!\rangle \, (  \Psi (\maybebluetext{ \ottnt{V} })  , \ottnt{K} )  \\
      & \accentset{\mathsf{e} }{\evalto} &   \ottkw{let} \,  \kappa =  \Psi (\maybebluetext{ \ottnt{t} })   \mathbin{;\!;}  \ottnt{K} \, \ottkw{in}\,   \Psi (\maybebluetext{ \ottnt{U} })   \, (  \Psi (\maybebluetext{ \ottnt{V} })   \langle   \Psi (\maybebluetext{ \ottnt{s} })   \rangle , \kappa ) 
      &\text{by \rnp{R-Wrap}}
    \end{array}
    \]
    Then,
    \[
    \begin{array}{llll}
      &&  \mathscr{K}\llbracket \maybebluetext{ \ottsym{(}  \ottnt{U} \, \ottsym{(}  \ottnt{V}  \langle  \ottnt{s}  \rangle  \ottsym{)}  \ottsym{)}  \langle  \ottnt{t}  \rangle } \rrbracket  \ottnt{K}  \\
      &=&  \ottkw{let} \,  \kappa =  \Psi (\maybebluetext{ \ottnt{t} })   \mathbin{;\!;}  \ottnt{K} \, \ottkw{in}\,  \ottsym{(}   \mathscr{K}\llbracket \maybebluetext{ \ottnt{U} \, \ottsym{(}  \ottnt{V}  \langle  \ottnt{s}  \rangle  \ottsym{)} } \rrbracket  \kappa   \ottsym{)} 
      &\text{by \rnp{Tr-Crc}}\\
      &=&   \ottkw{let} \,  \kappa =  \Psi (\maybebluetext{ \ottnt{t} })   \mathbin{;\!;}  \ottnt{K} \, \ottkw{in}\,   \mathscr{C}\llbracket \maybebluetext{ \ottnt{U} } \rrbracket   \, (  \mathscr{C}\llbracket \maybebluetext{ \ottnt{V}  \langle  \ottnt{s}  \rangle } \rrbracket  , \kappa ) 
      &\text{by \rnp{Tr-App}}\\
      &=&   \ottkw{let} \,  \kappa =  \Psi (\maybebluetext{ \ottnt{t} })   \mathbin{;\!;}  \ottnt{K} \, \ottkw{in}\,   \Psi (\maybebluetext{ \ottnt{U} })   \, (  \Psi (\maybebluetext{ \ottnt{V} })   \langle   \Psi (\maybebluetext{ \ottnt{s} })   \rangle , \kappa ) 
      &\text{as $ \mathscr{C}\llbracket \maybebluetext{ \ottnt{V}  \langle  \ottnt{s}  \rangle } \rrbracket  =  \mathscr{K}\llbracket \maybebluetext{ \ottnt{V} } \rrbracket   \Psi (\maybebluetext{ \ottnt{s} })   =  \Psi (\maybebluetext{ \ottnt{V} })   \langle   \Psi (\maybebluetext{ \ottnt{s} })   \rangle$.}
    \end{array}
    \]
    Thus,
    $ \mathscr{K}\llbracket \maybebluetext{ \ottsym{(}  \ottnt{U}  \langle\!\langle  \ottnt{s}  \rightarrow  \ottnt{t}  \rangle\!\rangle  \ottsym{)} \, \ottnt{V} } \rrbracket  \ottnt{K}   \accentset{\mathsf{e} }{\evalto}   \mathscr{K}\llbracket \maybebluetext{ \ottsym{(}  \ottnt{U} \, \ottsym{(}  \ottnt{V}  \langle  \ottnt{s}  \rangle  \ottsym{)}  \ottsym{)}  \langle  \ottnt{t}  \rangle } \rrbracket  \ottnt{K} $.
  \end{description}

  (2) By case analysis on the reduction rule applied to $ \ottnt{M}    \mathbin{\accentset{\mathsf{c} }{\reduces}_{\mathsf{S} } }    \ottnt{N} $.
  \begin{description}
  \item[\Case{\rnp{R-Id}}]
    We are given
    \[
      \ottnt{M}  \ottsym{=}  \ottnt{U}  \langle  \ottkw{id}  \rangle \hgap
      \ottnt{N}  \ottsym{=}  \ottnt{U}
    \]
    for some $\ottnt{U}$.
    Here,
    \[
    \begin{array}{llll}
       \mathscr{C}\llbracket \maybebluetext{ \ottnt{U}  \langle  \ottkw{id}  \rangle } \rrbracket 
      &=&  \mathscr{K}\llbracket \maybebluetext{ \ottnt{U} } \rrbracket   \Psi (\maybebluetext{ \ottkw{id} })   \\
      &=&  \mathscr{K}\llbracket \maybebluetext{ \ottnt{U} } \rrbracket  \ottkw{id}  \\
      &=&  \Psi (\maybebluetext{ \ottnt{U} })   \langle  \ottkw{id}  \rangle \\
      & \accentset{\mathsf{c} }{\evalto} &  \Psi (\maybebluetext{ \ottnt{U} })  &\text{by \rnp{R-Id}} \\
      &=&  \mathscr{C}\llbracket \maybebluetext{ \ottnt{U} } \rrbracket .
    \end{array}
    \]

  \item[\Case{\rnp{R-Fail}}]
    We are given
    \[
      \ottnt{M}  \ottsym{=}  \ottnt{U}  \langle   \bot^{ \ottnt{G}   \ottnt{p}   \ottnt{H} }   \rangle \hgap
      \ottnt{N}  \ottsym{=}  \ottkw{blame} \, \ottnt{p}
    \]
    for some $\ottnt{U},\ottnt{p},\ottnt{G},\ottnt{H}$.
    Here,
    \[
    \begin{array}{llll}
       \mathscr{C}\llbracket \maybebluetext{ \ottnt{U}  \langle   \bot^{ \ottnt{G}   \ottnt{p}   \ottnt{H} }   \rangle } \rrbracket 
      &=&  \mathscr{K}\llbracket \maybebluetext{ \ottnt{U} } \rrbracket   \Psi (\maybebluetext{  \bot^{ \ottnt{G}   \ottnt{p}   \ottnt{H} }  })   \\
      &=&  \mathscr{K}\llbracket \maybebluetext{ \ottnt{U} } \rrbracket   \bot^{ \ottnt{G}   \ottnt{p}   \ottnt{H} }   \\
      &=&  \Psi (\maybebluetext{ \ottnt{U} })   \langle   \bot^{ \ottnt{G}   \ottnt{p}   \ottnt{H} }   \rangle\\
      & \accentset{\mathsf{c} }{\evalto} & \ottkw{blame} \, \ottnt{p} &\text{by \rnp{R-Fail}}\\
       \mathscr{C}\llbracket \maybebluetext{ \ottkw{blame} \, \ottnt{p} } \rrbracket 
      &=&  \mathscr{K}\llbracket \maybebluetext{ \ottkw{blame} \, \ottnt{p} } \rrbracket  \ottkw{id}  \\
      &=& \ottkw{blame} \, \ottnt{p}.
    \end{array}
    \]
    Thus, $  \mathscr{C}\llbracket \maybebluetext{ \ottnt{U}  \langle   \bot^{ \ottnt{G}   \ottnt{p}   \ottnt{H} }   \rangle } \rrbracket     \mathbin{  \accentset{\mathsf{c} }{\evalto}  }     \mathscr{C}\llbracket \maybebluetext{ \ottkw{blame} \, \ottnt{p} } \rrbracket  $.

  \item[\Case{\rnp{R-Crc}}]
    We are given
    \[
      \ottnt{M}  \ottsym{=}  \ottnt{U}  \langle  \ottnt{d}  \rangle \hgap
      \ottnt{N}  \ottsym{=}  \ottnt{U}  \langle\!\langle  \ottnt{d}  \rangle\!\rangle
    \]
    for some $\ottnt{d}$.
    Note that $ \Psi (\maybebluetext{ \ottnt{d} }) $ is also a delayed coercion.
    \[
    \begin{array}{llll}
       \mathscr{C}\llbracket \maybebluetext{ \ottnt{U}  \langle  \ottnt{d}  \rangle } \rrbracket 
      &=&  \mathscr{K}\llbracket \maybebluetext{ \ottnt{U} } \rrbracket   \Psi (\maybebluetext{ \ottnt{d} })   \\
      &=&  \Psi (\maybebluetext{ \ottnt{U} })   \langle   \Psi (\maybebluetext{ \ottnt{d} })   \rangle \\
      & \accentset{\mathsf{c} }{\evalto} &  \Psi (\maybebluetext{ \ottnt{U} })   \langle\!\langle   \Psi (\maybebluetext{ \ottnt{d} })   \rangle\!\rangle &\text{by \rnp{R-Crc}}\\
       \mathscr{C}\llbracket \maybebluetext{ \ottnt{U}  \langle\!\langle  \ottnt{d}  \rangle\!\rangle } \rrbracket 
      &=&  \Psi (\maybebluetext{ \ottnt{U}  \langle\!\langle  \ottnt{d}  \rangle\!\rangle })  \\
      &=&  \Psi (\maybebluetext{ \ottnt{U} })   \langle\!\langle   \Psi (\maybebluetext{ \ottnt{d} })   \rangle\!\rangle.
    \end{array}
    \]
    Thus, $  \mathscr{C}\llbracket \maybebluetext{ \ottnt{U}  \langle  \ottnt{d}  \rangle } \rrbracket     \mathbin{  \accentset{\mathsf{c} }{\evalto}  }     \mathscr{C}\llbracket \maybebluetext{ \ottnt{U}  \langle\!\langle  \ottnt{d}  \rangle\!\rangle } \rrbracket  $.

  \item[\Case{\rnp{R-MergeC}}]
    We are given
    \[
      \ottnt{M}  \ottsym{=}  \ottnt{M_{{\mathrm{1}}}}  \langle  \ottnt{s}  \rangle  \langle  \ottnt{t}  \rangle \hgap
      \ottnt{N}  \ottsym{=}  \ottnt{M_{{\mathrm{1}}}}  \langle  \ottnt{s'}  \rangle \hgap
      \ottnt{s}  \fatsemi  \ottnt{t}  \ottsym{=}  \ottnt{s'}
    \]
    for some $\ottnt{M_{{\mathrm{1}}}},\ottnt{s},\ottnt{t},\ottnt{s'}$.
    Here, $ \Psi (\maybebluetext{ \ottnt{s} })   \fatsemi   \Psi (\maybebluetext{ \ottnt{t} })  =  \Psi (\maybebluetext{ \ottnt{s'} }) $ by Lemma~\ref{lem:cmp-trans}.
    \[
    \begin{array}{llll}
       \mathscr{C}\llbracket \maybebluetext{ \ottnt{M_{{\mathrm{1}}}}  \langle  \ottnt{s}  \rangle  \langle  \ottnt{t}  \rangle } \rrbracket 
      &=&  \mathscr{K}\llbracket \maybebluetext{ \ottnt{M_{{\mathrm{1}}}}  \langle  \ottnt{s}  \rangle } \rrbracket   \Psi (\maybebluetext{ \ottnt{t} })   \\
      &=&  \ottkw{let} \,  \kappa =  \Psi (\maybebluetext{ \ottnt{s} })   \mathbin{;\!;}   \Psi (\maybebluetext{ \ottnt{t} })  \, \ottkw{in}\,  \ottsym{(}   \mathscr{K}\llbracket \maybebluetext{ \ottnt{M_{{\mathrm{1}}}} } \rrbracket  \kappa   \ottsym{)} 
      &\text{by \rnp{Tr-Crc}}\\
      & \accentset{\mathsf{c} }{\evalto} &  \ottkw{let} \,  \kappa =  \Psi (\maybebluetext{ \ottnt{s} })   \fatsemi   \Psi (\maybebluetext{ \ottnt{t} })  \, \ottkw{in}\,  \ottsym{(}   \mathscr{K}\llbracket \maybebluetext{ \ottnt{M_{{\mathrm{1}}}} } \rrbracket  \kappa   \ottsym{)} 
      &\text{by \rnp{R-Cmp}}\\
      &=&  \ottkw{let} \,  \kappa =  \Psi (\maybebluetext{ \ottnt{s'} })  \, \ottkw{in}\,  \ottsym{(}   \mathscr{K}\llbracket \maybebluetext{ \ottnt{M_{{\mathrm{1}}}} } \rrbracket  \kappa   \ottsym{)} 
      &\text{by Lemma~\ref{lem:cmp-trans}}\\
      & \accentset{\mathsf{c} }{\evalto} & \ottsym{(}   \mathscr{K}\llbracket \maybebluetext{ \ottnt{M_{{\mathrm{1}}}} } \rrbracket  \kappa   \ottsym{)}  [  \kappa  \ottsym{:=}   \Psi (\maybebluetext{ \ottnt{s'} })   ] &\text{by \rnp{R-Let}}\\
      &=&  \mathscr{K}\llbracket \maybebluetext{ \ottnt{M_{{\mathrm{1}}}} } \rrbracket   \Psi (\maybebluetext{ \ottnt{s'} })  
      &\text{by Lemma~\ref{lem:subst-kap}} \\
       \mathscr{C}\llbracket \maybebluetext{ \ottnt{M_{{\mathrm{1}}}}  \langle  \ottnt{s'}  \rangle } \rrbracket 
      &=&  \mathscr{K}\llbracket \maybebluetext{ \ottnt{M_{{\mathrm{1}}}} } \rrbracket   \Psi (\maybebluetext{ \ottnt{s'} })  
    \end{array}
    \]
    Thus, $  \mathscr{C}\llbracket \maybebluetext{ \ottnt{M_{{\mathrm{1}}}}  \langle  \ottnt{s}  \rangle  \langle  \ottnt{t}  \rangle } \rrbracket     \mathbin{  \accentset{\mathsf{c} }{\evalto}  ^+}     \mathscr{C}\llbracket \maybebluetext{ \ottnt{M_{{\mathrm{1}}}}  \langle  \ottnt{s'}  \rangle } \rrbracket  $.

  \item[\Case{\rnp{R-MergeV}}]
    We are given
    \[
      \ottnt{M}  \ottsym{=}  \ottnt{U}  \langle\!\langle  \ottnt{d}  \rangle\!\rangle  \langle  \ottnt{t}  \rangle \hgap
      \ottnt{N}  \ottsym{=}  \ottnt{U}  \langle  \ottnt{s'}  \rangle \hgap
      \ottnt{d}  \fatsemi  \ottnt{t}  \ottsym{=}  \ottnt{s'}
    \]
    for some $\ottnt{U},\ottnt{d},\ottnt{t},\ottnt{s'}$.
    Here, $ \Psi (\maybebluetext{ \ottnt{d} })   \fatsemi   \Psi (\maybebluetext{ \ottnt{t} })   \ottsym{=}   \Psi (\maybebluetext{ \ottnt{s'} }) $ by Lemma~\ref{lem:cmp-trans}.
    \[
    \begin{array}{llll}
       \mathscr{C}\llbracket \maybebluetext{ \ottnt{U}  \langle\!\langle  \ottnt{d}  \rangle\!\rangle  \langle  \ottnt{t}  \rangle } \rrbracket 
      &=&  \mathscr{K}\llbracket \maybebluetext{ \ottnt{U}  \langle\!\langle  \ottnt{d}  \rangle\!\rangle } \rrbracket   \Psi (\maybebluetext{ \ottnt{t} })   \\
      &=&  \Psi (\maybebluetext{ \ottnt{U}  \langle\!\langle  \ottnt{d}  \rangle\!\rangle })   \langle   \Psi (\maybebluetext{ \ottnt{t} })   \rangle
      &\text{as $\ottnt{U}  \langle\!\langle  \ottnt{d}  \rangle\!\rangle$ is a value}\\
      &=&  \Psi (\maybebluetext{ \ottnt{U} })   \langle\!\langle   \Psi (\maybebluetext{ \ottnt{d} })   \rangle\!\rangle  \langle   \Psi (\maybebluetext{ \ottnt{t} })   \rangle \\
      & \accentset{\mathsf{c} }{\evalto} &  \Psi (\maybebluetext{ \ottnt{U} })   \langle   \Psi (\maybebluetext{ \ottnt{d} })   \mathbin{;\!;}   \Psi (\maybebluetext{ \ottnt{t} })   \rangle &\text{by \rnp{R-MergeV}}\\
      & \accentset{\mathsf{c} }{\evalto} &  \Psi (\maybebluetext{ \ottnt{U} })   \langle   \Psi (\maybebluetext{ \ottnt{d} })   \fatsemi   \Psi (\maybebluetext{ \ottnt{t} })   \rangle &\text{by \rnp{R-Cmp}}\\
      &=&  \Psi (\maybebluetext{ \ottnt{U} })   \langle   \Psi (\maybebluetext{ \ottnt{s'} })   \rangle
      &\text{by Lemma~\ref{lem:cmp-trans}}. \\
       \mathscr{C}\llbracket \maybebluetext{ \ottnt{U}  \langle  \ottnt{s'}  \rangle } \rrbracket 
      &=&  \mathscr{K}\llbracket \maybebluetext{ \ottnt{U} } \rrbracket   \Psi (\maybebluetext{ \ottnt{s'} })   \\
      &=&  \Psi (\maybebluetext{ \ottnt{U} })   \langle   \Psi (\maybebluetext{ \ottnt{s'} })   \rangle.
    \end{array}
    \]
    Thus, $  \mathscr{C}\llbracket \maybebluetext{ \ottnt{U}  \langle\!\langle  \ottnt{d}  \rangle\!\rangle  \langle  \ottnt{t}  \rangle } \rrbracket     \mathbin{  \accentset{\mathsf{c} }{\evalto}  ^+}     \mathscr{C}\llbracket \maybebluetext{ \ottnt{U}  \langle  \ottnt{s'}  \rangle } \rrbracket  $.\qedhere
  \end{description}
\end{proof}

\begin{lemma}\label{lem:trans-id-CV}
  $  \mathscr{K}\llbracket \maybebluetext{ \ottnt{M} } \rrbracket  \ottkw{id}     \mathbin{  \accentset{\mathsf{c} }{\evalto}_{\mathsf{S_1} }  ^*}     \mathscr{C}\llbracket \maybebluetext{ \ottnt{M} } \rrbracket  $
\end{lemma}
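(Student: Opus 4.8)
The plan is to prove Lemma~\ref{lem:trans-id-CV} by a short case analysis on $M$, following the three defining clauses of $\mathscr{C}\llbracket\cdot\rrbracket$ in Figure~\ref{fig:transSS1}. Throughout, $M$ is understood to be a well-typed (hence closed) $\lambda S$-term, say $\Gamma \vdash_{\mathsf{S}} M : A$, as is standard for statements about $\evalto_{\mathsf{S_1}}$ and as is needed to discharge the typing side conditions of the auxiliary lemmas below (and to rule out the variable case). The first case is immediate: if $M$ is neither a value nor a coercion application, then $\mathscr{C}\llbracket M\rrbracket = \mathscr{K}\llbracket M\rrbracket\,\mathsf{id}$ already holds by rule \rnp{TrC-Else}, so the claim holds in zero steps.

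For the case $M = N\langle s\rangle$, rule \rnp{Tr-Crc} gives $\mathscr{K}\llbracket M\rrbracket\,\mathsf{id} = \mathsf{let}\ \kappa = \Psi(s)\mathbin{;\!;}\mathsf{id}\ \mathsf{in}\ (\mathscr{K}\llbracket N\rrbracket\,\kappa)$, while \rnp{TrC-Crc} gives $\mathscr{C}\llbracket M\rrbracket = \mathscr{K}\llbracket N\rrbracket\,\Psi(s)$. Both $\Psi(s)$ and $\mathsf{id}$ are (uncoerced) values, so, working under the evaluation context $\mathsf{let}\ \kappa = \square\ \mathsf{in}\ (\mathscr{K}\llbracket N\rrbracket\,\kappa)$, rule \rnp{R-Cmp} gives a c-step to $\mathsf{let}\ \kappa = \Psi(s)\fatsemi\mathsf{id}\ \mathsf{in}\ (\mathscr{K}\llbracket N\rrbracket\,\kappa)$, where $\Psi(s)\fatsemi\mathsf{id} = \Psi(s)$ by Lemma~\ref{lem:merge-id} (the $\mathsf{id}$ introduced by \rnp{Tr-Crc} is the identity at the target type of $s$). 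Then \rnp{R-Let} gives a second c-step to $(\mathscr{K}\llbracket N\rrbracket\,\kappa)[\kappa := \Psi(s)]$, which equals $\mathscr{K}\llbracket N\rrbracket\,\Psi(s) = \mathscr{C}\llbracket M\rrbracket$ by Lemma~\ref{lem:subst-kap}, since the bound $\kappa$ is fresh and hence $\kappa\notin\metafun{FV}(N)$. Thus two c-steps suffice.

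For the case $M = V$ a value, rule \rnp{Tr-Val} gives $\mathscr{K}\llbracket V\rrbracket\,\mathsf{id} = \Psi(V)\langle\mathsf{id}\rangle$ and \rnp{TrC-Val} gives $\mathscr{C}\llbracket V\rrbracket = \Psi(V)$, so the obligation is exactly $\Psi(V)\langle\mathsf{id}\rangle \mathbin{\accentset{\mathsf{c}}{\evalto}_{\mathsf{S_1}}^{*}} \Psi(V)$, which is Lemma~\ref{lem:val-coe-id}; its hypothesis $\emptyset \vdash_{\mathsf{S_1}} \Psi(V) : \Psi(A)$ follows from Theorem~\ref{thm:trans-typing}(2), and in particular forces $V \ne x$.

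I do not expect a real difficulty here: the only genuine computation lives in the value case, and it is already packaged in Lemma~\ref{lem:val-coe-id}, whose nontrivial subcase $V = U\langle\!\langle d\rangle\!\rangle$ drives $\Psi(U)\langle\!\langle\Psi(d)\rangle\!\rangle\langle\mathsf{id}\rangle$ back to $\Psi(U)\langle\!\langle\Psi(d)\rangle\!\rangle$ via \rnp{R-MergeV}, \rnp{R-Cmp}, Lemma~\ref{lem:merge-id}, and \rnp{R-Crc}. The one point that needs care is purely bookkeeping: every $\mathsf{id}$ introduced by the translation (in \rnp{TrC-Else}, \rnp{Tr-Crc}, \rnp{Tr-Val}) must be read as the identity coercion at the \emph{appropriate} type, so that Lemmas~\ref{lem:merge-id} and~\ref{lem:val-coe-id} are applicable; tracking these types is precisely where the well-typedness of $M$ is used.
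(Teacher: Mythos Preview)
Your proposal is correct and follows essentially the same approach as the paper's own proof: a three-way case analysis on $M$ according to the defining clauses of $\mathscr{C}\llbracket\cdot\rrbracket$, invoking Lemma~\ref{lem:val-coe-id} for values, \rnp{R-Cmp}/Lemma~\ref{lem:merge-id}/\rnp{R-Let}/Lemma~\ref{lem:subst-kap} for coercion applications, and \rnp{TrC-Else} for the remaining case. Your additional remarks about the typing side conditions and the role of well-typedness in ruling out the variable case are accurate elaborations that the paper leaves implicit.
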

\begin{proof}
  By case analysis on the form of $\ottnt{M}$.
  \begin{description}
  \item[\Case{$\ottnt{M}  \ottsym{=}  \ottnt{V}$}]
    \[
    \begin{array}{llll}
       \mathscr{K}\llbracket \maybebluetext{ \ottnt{V} } \rrbracket  \ottkw{id} 
      &=&  \Psi (\maybebluetext{ \ottnt{V} })   \langle  \ottkw{id}  \rangle \\
      & \mathbin{  \accentset{\mathsf{c} }{\evalto}  ^*} &  \Psi (\maybebluetext{ \ottnt{V} })  &\text{by Lemma~\ref{lem:val-coe-id}} \\
      &=&  \mathscr{C}\llbracket \maybebluetext{ \ottnt{V} } \rrbracket .
    \end{array}
    \]
  \item[\Case{$\ottnt{M}  \ottsym{=}  \ottnt{N}  \langle  \ottnt{s}  \rangle$}]
    \[
    \begin{array}{llll}
       \mathscr{K}\llbracket \maybebluetext{ \ottnt{N}  \langle  \ottnt{s}  \rangle } \rrbracket  \ottkw{id} 
      &=&  \ottkw{let} \,  \kappa =  \Psi (\maybebluetext{ \ottnt{s} })   \mathbin{;\!;}  \ottkw{id} \, \ottkw{in}\,  \ottsym{(}   \mathscr{K}\llbracket \maybebluetext{ \ottnt{N} } \rrbracket  \kappa   \ottsym{)} 
      &\text{by \rnp{Tr-Crc}}\\
      & \accentset{\mathsf{c} }{\evalto} &  \ottkw{let} \,  \kappa =  \Psi (\maybebluetext{ \ottnt{s} })   \fatsemi  \ottkw{id} \, \ottkw{in}\,  \ottsym{(}   \mathscr{K}\llbracket \maybebluetext{ \ottnt{N} } \rrbracket  \kappa   \ottsym{)} 
      &\text{by \rnp{R-Cmp}}\\
      &=&  \ottkw{let} \,  \kappa =  \Psi (\maybebluetext{ \ottnt{s} })  \, \ottkw{in}\,  \ottsym{(}   \mathscr{K}\llbracket \maybebluetext{ \ottnt{N} } \rrbracket  \kappa   \ottsym{)} 
      &\text{by Lemma~\ref{lem:merge-id}}\\
      & \accentset{\mathsf{c} }{\evalto} & \ottsym{(}   \mathscr{K}\llbracket \maybebluetext{ \ottnt{N} } \rrbracket  \kappa   \ottsym{)}  [  \kappa  \ottsym{:=}   \Psi (\maybebluetext{ \ottnt{s} })   ]
      &\text{by \rnp{R-Let}}\\
      &=&  \mathscr{K}\llbracket \maybebluetext{ \ottnt{N} } \rrbracket   \Psi (\maybebluetext{ \ottnt{s} })   &\text{by Lemma~\ref{lem:subst-kap}}\\
      &=&  \mathscr{C}\llbracket \maybebluetext{ \ottnt{N}  \langle  \ottnt{s}  \rangle } \rrbracket .
    \end{array}
    \]
  \item[\Otherwise] Since $\ottnt{M}$ is neither a value nor a coercion application,
    $ \mathscr{K}\llbracket \maybebluetext{ \ottnt{M} } \rrbracket  \ottkw{id}   \ottsym{=}   \mathscr{C}\llbracket \maybebluetext{ \ottnt{M} } \rrbracket $. \qedhere
  \end{description}
\end{proof}

\iffull\lemTransEval*
\else
\begin{lemma}[Simulation]\label{lem:trans-eval}\leavevmode
  \begin{enumerate}
  \item If $ \ottnt{M}    \mathbin{  \accentset{\mathsf{e} }{\evalto}_{\mathsf{S} }  }    \ottnt{N} $, then
    $ \mathscr{C}\llbracket \maybebluetext{ \ottnt{M} } \rrbracket   \mathbin{  \accentset{\mathsf{e} }{\evalto}_{\mathsf{S_1} }     \accentset{\mathsf{c} }{\evalto}_{\mathsf{S_1} }  ^*}   \mathscr{C}\llbracket \maybebluetext{ \ottnt{N} } \rrbracket $.
  \item If $ \ottnt{M}    \mathbin{  \accentset{\mathsf{c} }{\evalto}_{\mathsf{S} }  }    \ottnt{N} $, then $  \mathscr{C}\llbracket \maybebluetext{ \ottnt{M} } \rrbracket     \mathbin{  \accentset{\mathsf{c} }{\evalto}_{\mathsf{S_1} }  ^+}     \mathscr{C}\llbracket \maybebluetext{ \ottnt{N} } \rrbracket  $.
  \end{enumerate}
\[
\xymatrix@=50pt{
    \ottnt{M} \ar@{|->}[rr]^{\mathsf{e}}_>>{\mathsf{S}} \ar[d]^{\mathscr{C} \llbracket \_  \rrbracket}& & \ottnt{N} \ar[d]^{\mathscr{C} \llbracket \_  \rrbracket} \\
     \mathscr{C}\llbracket \maybebluetext{ \ottnt{M} } \rrbracket  \ar@{|.>}[r]^-{\mathsf{e}} _>>{\mathsf{S}_1} & \ar@{|.>}[r]^-{\mathsf{c}} ^>>{*}_>>{\mathsf{S}_1} &  \mathscr{C}\llbracket \maybebluetext{ \ottnt{N} } \rrbracket 
}
\qquad
\xymatrix@=50pt{
    \ottnt{M} \ar@{|->}[r]^{\mathsf{c}}_>>{\mathsf{S}} \ar[d]^{\mathscr{C} \llbracket \_  \rrbracket} & \ottnt{N} \ar[d]^{\mathscr{C} \llbracket \_  \rrbracket} \\
     \mathscr{C}\llbracket \maybebluetext{ \ottnt{M} } \rrbracket  \ar@{|.>}[r]^{\mathsf{c}} ^>>{+}_>>{\mathsf{S}_1} &   \mathscr{C}\llbracket \maybebluetext{ \ottnt{N} } \rrbracket 
}
\]
\end{lemma}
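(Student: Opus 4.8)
The plan is to prove both items by case analysis on the evaluation rule of \lamS used to derive the source step, reducing each case to the per-redex simulation already established in Lemma~\ref{lem:trans-reduce} together with the context lemma (Lemma~\ref{lem:trans-ctx}). There are three rules: \rnp{E-CtxE} and \rnp{E-Abort} (both labelled $\mathsf{e}$, hence relevant to item~1) and \rnp{E-CtxC} (labelled $\mathsf{c}$, relevant to item~2). Throughout I will use the defining identities $\mathscr{C}\llbracket V\rrbracket = \Psi(V)$, $\mathscr{C}\llbracket M\langle s\rangle\rrbracket = \mathscr{K}\llbracket M\rrbracket\,\Psi(s)$ (rule \rnp{TrC-Crc}), and $\mathscr{C}\llbracket M\rrbracket = \mathscr{K}\llbracket M\rrbracket\,\ottkw{id}$ otherwise, to line up $\mathscr{C}\llbracket\cdot\rrbracket$ with the continuation-coercion form in which Lemma~\ref{lem:trans-reduce} is phrased.

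First I would handle the two context cases, where $M = \mathcal{E}[M_1]$, $N = \mathcal{E}[N_1]$ with $M_1$ a redex reducing to $N_1$. Since a \lamS evaluation context is either an $\mathcal{F}$ or of the form $\mathcal{F}[\square\langle s\rangle]$, I split accordingly. When $\mathcal{E} = \mathcal{F}$, Lemma~\ref{lem:trans-ctx}(1) produces a \lamSx context $\mathcal{E}'$ with $\mathscr{C}\llbracket \mathcal{F}[L]\rrbracket = \mathcal{E}'[\mathscr{C}\llbracket L\rrbracket]$ for every $L$; when $\mathcal{E} = \mathcal{F}[\square\langle s\rangle]$, Lemma~\ref{lem:trans-ctx}(2) produces $\mathcal{E}'$ with $\mathscr{C}\llbracket \mathcal{F}[L\langle s\rangle]\rrbracket = \mathcal{E}'[\mathscr{K}\llbracket L\rrbracket\,\Psi(s)] = \mathcal{E}'[\mathscr{C}\llbracket L\langle s\rangle\rrbracket]$. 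So in each case it suffices to simulate the translated redex by the translated contractum and then lift through $\mathcal{E}'$ by the multi-step form of Lemma~\ref{lem:ctx-multi} (stepwise \rnp{E-Ctx}). For the redex obligation in the $\mathsf{e}$-case (\rnp{E-CtxE}) I instantiate Lemma~\ref{lem:trans-reduce}(1) with the continuation coercion dictated by the surrounding context: with $\ottkw{id}$ when $\mathcal{E} = \mathcal{F}$ (legitimate because an $\mathsf{e}$-redex is neither a value nor a coercion application, so $\mathscr{C}\llbracket M_1\rrbracket = \mathscr{K}\llbracket M_1\rrbracket\,\ottkw{id}$), and then I absorb the trailing administrative identity via Lemma~\ref{lem:trans-id-CV}, concatenating its $\mathsf{c}$-chain with the one from Lemma~\ref{lem:trans-reduce} to obtain exactly ``one $\mathsf{e}$-step, then $\mathsf{c}$-steps''; and with $\Psi(s)$ when $\mathcal{E} = \mathcal{F}[\square\langle s\rangle]$, which already delivers the sequence from $\mathscr{C}\llbracket M_1\langle s\rangle\rrbracket$ to $\mathscr{C}\llbracket N_1\langle s\rangle\rrbracket$. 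The $\mathsf{c}$-case (\rnp{E-CtxC}) is the same but simpler: the context is already an $\mathcal{F}$, and Lemma~\ref{lem:trans-reduce}(2) directly gives a nonempty $\mathsf{c}$-chain from $\mathscr{C}\llbracket M_1\rrbracket$ to $\mathscr{C}\llbracket N_1\rrbracket$ (no identity cleanup needed, the lemma being phrased via $\mathscr{C}\llbracket\cdot\rrbracket$). Lifting through $\mathcal{E}'$ is unproblematic since every $\mathsf{c}$-step is an \rnp{E-Ctx} step; even the \rnp{R-Fail} sub-case, where $N_1 = \ottkw{blame}\,p$, is handled by re-deriving the single translated $\mathsf{c}$-step under $\mathcal{E}'$ with \rnp{E-Ctx}, so the $N\neq\ottkw{blame}\,p$ proviso of Lemma~\ref{lem:ctx-multi} is irrelevant for the $\mathsf{c}$-fragment.

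Finally, for \rnp{E-Abort} (item~1): $M = \mathcal{E}[\ottkw{blame}\,p]$ with $\mathcal{E}\neq\square$ and $N = \ottkw{blame}\,p$. Splitting $\mathcal{E}$ as before and using $\mathscr{C}\llbracket \ottkw{blame}\,p\rrbracket = \mathscr{K}\llbracket \ottkw{blame}\,p\rrbracket\,K = \ottkw{blame}\,p$ (rule \rnp{Tr-Blame}), Lemma~\ref{lem:trans-ctx} rewrites $\mathscr{C}\llbracket M\rrbracket$ as $\mathcal{E}'[\ottkw{blame}\,p]$, and \rnp{E-Abort} in \lamSx supplies the required single $\mathsf{e}$-step to $\ottkw{blame}\,p = \mathscr{C}\llbracket N\rrbracket$. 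Here one needs that $\mathcal{F}\neq\square$ forces $\mathcal{E}'\neq\square$, which follows by inspecting the proof of Lemma~\ref{lem:trans-ctx}; the residual sub-case $\mathcal{F}=\square$ (where $M = \ottkw{blame}\,p\langle s\rangle$ and $\mathscr{C}\llbracket M\rrbracket = \mathscr{C}\llbracket N\rrbracket = \ottkw{blame}\,p$) is degenerate and is excluded by the invariant that, along the reduction of a well-typed closed term, blame never sits directly under a single coercion application.

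I expect the main friction to lie precisely in this bookkeeping rather than in any new conceptual content: the genuinely delicate fact---that the translation turns an $\mathsf{e}$-redex into an $\mathsf{e}$-redex and a $\mathsf{c}$-redex into a strictly nonempty $\mathsf{c}$-chain---is already discharged by Lemmas~\ref{lem:trans-reduce}, \ref{lem:trans-ctx}, and \ref{lem:trans-id-CV}. What remains for Lemma~\ref{lem:trans-eval} is to ensure that the $\ottkw{id}$ introduced when instantiating Lemma~\ref{lem:trans-reduce}(1) is always consumed by $\mathsf{c}$-steps rather than left dangling (so the resulting sequence has exactly the shape $\accentset{\mathsf{e} }{\evalto}_{\mathsf{S_1} }\,\accentset{\mathsf{c} }{\evalto}_{\mathsf{S_1} }^{*}$), that lifting through $\mathcal{E}'$ preserves the $\mathsf{e}$/$\mathsf{c}$ labelling, and that the translated context is non-degenerate exactly when the source one is---this last point, together with the blame edge cases, being where I expect most of the effort to go.
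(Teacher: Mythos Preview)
Your approach is essentially identical to the paper's: case analysis on the evaluation rule, splitting $\mathcal{E}$ as $\mathcal{F}$ versus $\mathcal{F}[\square\langle s\rangle]$, then invoking Lemma~\ref{lem:trans-ctx} to obtain a target context $\mathcal{E}'$, Lemma~\ref{lem:trans-reduce} on the redex (with continuation $\ottkw{id}$ or $\Psi(s)$ respectively), Lemma~\ref{lem:trans-id-CV} to absorb the administrative identity in the $\mathcal{E}=\mathcal{F}$ sub-case of \rnp{E-CtxE}, and Lemma~\ref{lem:ctx-multi} to lift through $\mathcal{E}'$. You are in fact more scrupulous than the paper about the side conditions---the paper applies \rnp{E-Abort} and Lemma~\ref{lem:ctx-multi} without commenting on $\mathcal{E}'\neq\square$ or the $N\neq\ottkw{blame}\,p$ proviso---so your worry about the degenerate $\mathcal{E}=\square\langle s\rangle$ case in \rnp{E-Abort} is well spotted, though the paper's proof does not address it either.
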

\fi

\begin{proof}
  (1) By case analysis on the evaluation rule applied to $ \ottnt{M}    \mathbin{  \accentset{\mathsf{e} }{\evalto}_{\mathsf{S} }  }    \ottnt{N} $.
  \begin{description}
  \item[\Case{\rnp{E-CtxE} with $\mathcal{E}  \ottsym{=}  \mathcal{F}$}] 
    We are given
    \[
       \ottnt{M_{{\mathrm{1}}}}    \mathbin{\accentset{\mathsf{e} }{\reduces}_{\mathsf{S} } }    \ottnt{N_{{\mathrm{1}}}}  \hgap
      \ottnt{M}  \ottsym{=}  \mathcal{F}  [  \ottnt{M_{{\mathrm{1}}}}  ] \hgap
      \ottnt{N}  \ottsym{=}  \mathcal{F}  [  \ottnt{N_{{\mathrm{1}}}}  ]
    \]
    for some $\ottnt{M_{{\mathrm{1}}}},\ottnt{N_{{\mathrm{1}}}}$.
    By Lemma~\ref{lem:trans-ctx}~(1), there exists $\mathcal{E}'$ such that
    $ \mathscr{C}\llbracket \maybebluetext{ \mathcal{F}  [  \ottnt{L}  ] } \rrbracket   \ottsym{=}  \mathcal{E}'  [   \mathscr{C}\llbracket \maybebluetext{ \ottnt{L} } \rrbracket   ]$ for any $\ottnt{L}$. So,
    \begin{equation}
       \mathscr{C}\llbracket \maybebluetext{ \mathcal{F}  [  \ottnt{M_{{\mathrm{1}}}}  ] } \rrbracket   \ottsym{=}  \mathcal{E}'  [   \mathscr{C}\llbracket \maybebluetext{ \ottnt{M_{{\mathrm{1}}}} } \rrbracket   ] \hgap
       \mathscr{C}\llbracket \maybebluetext{ \mathcal{F}  [  \ottnt{N_{{\mathrm{1}}}}  ] } \rrbracket   \ottsym{=}  \mathcal{E}'  [   \mathscr{C}\llbracket \maybebluetext{ \ottnt{N_{{\mathrm{1}}}} } \rrbracket   ]. \label{eq:sim-trans-ctx-e1}
    \end{equation}
    Since $\ottnt{M_{{\mathrm{1}}}}$ is e-reducible, $\ottnt{M_{{\mathrm{1}}}}$ is neither a value nor a coercion application.
    (Note that a coercion application may be a c-redex, but not an e-redex.)
    So, we have $ \mathscr{C}\llbracket \maybebluetext{ \ottnt{M_{{\mathrm{1}}}} } \rrbracket  =  \mathscr{K}\llbracket \maybebluetext{ \ottnt{M_{{\mathrm{1}}}} } \rrbracket  \ottkw{id} $.  
    Next, by $ \ottnt{M_{{\mathrm{1}}}}    \mathbin{\accentset{\mathsf{e} }{\reduces}_{\mathsf{S} } }    \ottnt{N_{{\mathrm{1}}}} $ and Lemma~\ref{lem:trans-reduce}~(1) with $\ottnt{K}  \ottsym{=}  \ottkw{id}$,
    \[
       \mathscr{K}\llbracket \maybebluetext{ \ottnt{M_{{\mathrm{1}}}} } \rrbracket  \ottkw{id}   \mathbin{  \accentset{\mathsf{e} }{\evalto}_{\mathsf{S_1} }     \accentset{\mathsf{c} }{\evalto}_{\mathsf{S_1} }  ^*}   \mathscr{K}\llbracket \maybebluetext{ \ottnt{N_{{\mathrm{1}}}} } \rrbracket  \ottkw{id} . 
    \]
    Then, $  \mathscr{K}\llbracket \maybebluetext{ \ottnt{N_{{\mathrm{1}}}} } \rrbracket  \ottkw{id}     \mathbin{  \accentset{\mathsf{c} }{\evalto}_{\mathsf{S_1} }  ^*}     \mathscr{C}\llbracket \maybebluetext{ \ottnt{N_{{\mathrm{1}}}} } \rrbracket  $ by Lemma~\ref{lem:trans-eval}. 
    Therefore, 
    \[
       \mathscr{C}\llbracket \maybebluetext{ \ottnt{M_{{\mathrm{1}}}} } \rrbracket  =  \mathscr{K}\llbracket \maybebluetext{ \ottnt{M_{{\mathrm{1}}}} } \rrbracket  \ottkw{id}   \mathbin{  \accentset{\mathsf{e} }{\evalto}_{\mathsf{S_1} }     \accentset{\mathsf{c} }{\evalto}_{\mathsf{S_1} }  ^*}    \mathscr{K}\llbracket \maybebluetext{ \ottnt{N_{{\mathrm{1}}}} } \rrbracket  \ottkw{id}     \mathbin{  \accentset{\mathsf{c} }{\evalto}_{\mathsf{S_1} }  ^*}     \mathscr{C}\llbracket \maybebluetext{ \ottnt{N_{{\mathrm{1}}}} } \rrbracket  .
    \]
    By $ \mathscr{C}\llbracket \maybebluetext{ \ottnt{M_{{\mathrm{1}}}} } \rrbracket   \mathbin{  \accentset{\mathsf{e} }{\evalto}_{\mathsf{S_1} }     \accentset{\mathsf{c} }{\evalto}_{\mathsf{S_1} }  ^*}   \mathscr{C}\llbracket \maybebluetext{ \ottnt{N_{{\mathrm{1}}}} } \rrbracket $ and Lemma~\ref{lem:ctx-multi},
    \[
      \mathcal{E}'  [   \mathscr{C}\llbracket \maybebluetext{ \ottnt{M_{{\mathrm{1}}}} } \rrbracket   ]  \mathbin{  \accentset{\mathsf{e} }{\evalto}_{\mathsf{S_1} }     \accentset{\mathsf{c} }{\evalto}_{\mathsf{S_1} }  ^*}  \mathcal{E}'  [   \mathscr{C}\llbracket \maybebluetext{ \ottnt{N_{{\mathrm{1}}}} } \rrbracket   ].
    \]
    Thus, by eq.~\eqref{eq:sim-trans-ctx-e1},
    $ \mathscr{C}\llbracket \maybebluetext{ \mathcal{F}  [  \ottnt{M_{{\mathrm{1}}}}  ] } \rrbracket   \mathbin{  \accentset{\mathsf{e} }{\evalto}_{\mathsf{S_1} }     \accentset{\mathsf{c} }{\evalto}_{\mathsf{S_1} }  ^*}   \mathscr{C}\llbracket \maybebluetext{ \mathcal{F}  [  \ottnt{N_{{\mathrm{1}}}}  ] } \rrbracket $.

  \item[\Case{\rnp{E-CtxE} with $\mathcal{E}  \ottsym{=}  \mathcal{F}  [  \square \, \langle  \ottnt{t}  \rangle  ]$}]
    We are given
    \[
       \ottnt{M_{{\mathrm{1}}}}    \mathbin{\accentset{\mathsf{e} }{\reduces}_{\mathsf{S} } }    \ottnt{N_{{\mathrm{1}}}}  \hgap
      \ottnt{M}  \ottsym{=}  \ottsym{(}  \mathcal{F}  [  \square \, \langle  \ottnt{t}  \rangle  ]  \ottsym{)}  [  \ottnt{M_{{\mathrm{1}}}}  ] = \mathcal{F}  [  \ottnt{M_{{\mathrm{1}}}}  \langle  \ottnt{t}  \rangle  ] \hgap
      \ottnt{N}  \ottsym{=}  \ottsym{(}  \mathcal{F}  [  \square \, \langle  \ottnt{t}  \rangle  ]  \ottsym{)}  [  \ottnt{N_{{\mathrm{1}}}}  ] = \mathcal{F}  [  \ottnt{N_{{\mathrm{1}}}}  \langle  \ottnt{t}  \rangle  ]
    \]
    for some $\ottnt{M_{{\mathrm{1}}}},\ottnt{N_{{\mathrm{1}}}}$.
    By Lemma~\ref{lem:trans-ctx}~(2), there exist $\mathcal{E}'$
    such that $ \mathscr{C}\llbracket \maybebluetext{ \mathcal{F}  [  \ottnt{L}  \langle  \ottnt{t}  \rangle  ] } \rrbracket   \ottsym{=}  \mathcal{E}'  [   \mathscr{K}\llbracket \maybebluetext{ \ottnt{L} } \rrbracket   \Psi (\maybebluetext{ \ottnt{t} })    ]$ for any $\ottnt{L}$. So,
    \begin{equation}
       \mathscr{C}\llbracket \maybebluetext{ \mathcal{F}  [  \ottnt{M_{{\mathrm{1}}}}  \langle  \ottnt{t}  \rangle  ] } \rrbracket   \ottsym{=}  \mathcal{E}'  [   \mathscr{K}\llbracket \maybebluetext{ \ottnt{M_{{\mathrm{1}}}} } \rrbracket   \Psi (\maybebluetext{ \ottnt{t} })    ] \hgap
       \mathscr{C}\llbracket \maybebluetext{ \mathcal{F}  [  \ottnt{N_{{\mathrm{1}}}}  \langle  \ottnt{t}  \rangle  ] } \rrbracket   \ottsym{=}  \mathcal{E}'  [   \mathscr{K}\llbracket \maybebluetext{ \ottnt{N_{{\mathrm{1}}}} } \rrbracket   \Psi (\maybebluetext{ \ottnt{t} })    ]. \label{eq:sim-trans-ctx-e2}
    \end{equation}
    By $ \ottnt{M_{{\mathrm{1}}}}    \mathbin{\accentset{\mathsf{e} }{\reduces}_{\mathsf{S} } }    \ottnt{N_{{\mathrm{1}}}} $ and Lemma~\ref{lem:trans-reduce}~(1) with $\ottnt{K}  \ottsym{=}   \Psi (\maybebluetext{ \ottnt{t} }) $,
    \[
       \mathscr{K}\llbracket \maybebluetext{ \ottnt{M_{{\mathrm{1}}}} } \rrbracket   \Psi (\maybebluetext{ \ottnt{t} })    \mathbin{  \accentset{\mathsf{e} }{\evalto}_{\mathsf{S_1} }     \accentset{\mathsf{c} }{\evalto}_{\mathsf{S_1} }  ^*}   \mathscr{K}\llbracket \maybebluetext{ \ottnt{N_{{\mathrm{1}}}} } \rrbracket   \Psi (\maybebluetext{ \ottnt{t} })  .
    \]
    By Lemma~\ref{lem:ctx-multi},
    \[
      \mathcal{E}'  [   \mathscr{K}\llbracket \maybebluetext{ \ottnt{M_{{\mathrm{1}}}} } \rrbracket   \Psi (\maybebluetext{ \ottnt{t} })    ]  \mathbin{  \accentset{\mathsf{e} }{\evalto}_{\mathsf{S_1} }     \accentset{\mathsf{c} }{\evalto}_{\mathsf{S_1} }  ^*}  \mathcal{E}'  [   \mathscr{K}\llbracket \maybebluetext{ \ottnt{N_{{\mathrm{1}}}} } \rrbracket   \Psi (\maybebluetext{ \ottnt{t} })    ].
    \]
    Thus, by eq.~\eqref{eq:sim-trans-ctx-e2},
    $ \mathscr{C}\llbracket \maybebluetext{ \mathcal{F}  [  \ottnt{M_{{\mathrm{1}}}}  \langle  \ottnt{t}  \rangle  ] } \rrbracket   \mathbin{  \accentset{\mathsf{e} }{\evalto}_{\mathsf{S_1} }     \accentset{\mathsf{c} }{\evalto}_{\mathsf{S_1} }  ^*}   \mathscr{C}\llbracket \maybebluetext{ \mathcal{F}  [  \ottnt{N_{{\mathrm{1}}}}  \langle  \ottnt{t}  \rangle  ] } \rrbracket $.

  \item[\Case{\rnp{E-Abort} with $\mathcal{E}  \ottsym{=}  \mathcal{F}$}]
    We are given
    \[
      \ottnt{M}  \ottsym{=}  \mathcal{F}  [  \ottkw{blame} \, \ottnt{p}  ] \hgap
      \ottnt{N}  \ottsym{=}  \ottkw{blame} \, \ottnt{p}
    \]
    for some $\ottnt{p}$.
    By Lemma~\ref{lem:trans-ctx}~(1), there exists $\mathcal{E}'$ such that
    $ \mathscr{C}\llbracket \maybebluetext{ \mathcal{F}  [  \ottnt{L}  ] } \rrbracket   \ottsym{=}  \mathcal{E}'  [   \mathscr{C}\llbracket \maybebluetext{ \ottnt{L} } \rrbracket   ]$ for any $\ottnt{L}$. So,
    \[
       \mathscr{C}\llbracket \maybebluetext{ \mathcal{F}  [  \ottkw{blame} \, \ottnt{p}  ] } \rrbracket   \ottsym{=}  \mathcal{E}'  [   \mathscr{C}\llbracket \maybebluetext{ \ottkw{blame} \, \ottnt{p} } \rrbracket   ].
    \]
    By $ \mathscr{C}\llbracket \maybebluetext{ \ottkw{blame} \, \ottnt{p} } \rrbracket   \ottsym{=}   \mathscr{K}\llbracket \maybebluetext{ \ottkw{blame} \, \ottnt{p} } \rrbracket  \ottkw{id}  = \ottkw{blame} \, \ottnt{p}$,  
    \[
    \begin{array}{llll}
      \mathcal{E}'  [   \mathscr{C}\llbracket \maybebluetext{ \ottkw{blame} \, \ottnt{p} } \rrbracket   ]
      &=& \mathcal{E}'  [  \ottkw{blame} \, \ottnt{p}  ] \\
      & \accentset{\mathsf{e} }{\evalto}_{\mathsf{S_1} } & \ottkw{blame} \, \ottnt{p} &\text{by \rnp{E-Abort}}\\
      &=&  \mathscr{C}\llbracket \maybebluetext{ \ottkw{blame} \, \ottnt{p} } \rrbracket .
    \end{array}
    \]
    Thus, $  \mathscr{C}\llbracket \maybebluetext{ \mathcal{F}  [  \ottkw{blame} \, \ottnt{p}  ] } \rrbracket     \mathbin{  \accentset{\mathsf{e} }{\evalto}_{\mathsf{S_1} }  }     \mathscr{C}\llbracket \maybebluetext{ \ottkw{blame} \, \ottnt{p} } \rrbracket  $.

  \item[\Case{\rnp{E-Abort} with $\mathcal{E}  \ottsym{=}  \mathcal{F}  [  \square \, \langle  \ottnt{t}  \rangle  ]$}]
    We are given
    \[
      \ottnt{M}  \ottsym{=}  \ottsym{(}  \mathcal{F}  [  \square \, \langle  \ottnt{t}  \rangle  ]  \ottsym{)}  [  \ottkw{blame} \, \ottnt{p}  ] = \mathcal{F}  [  \ottsym{(}  \ottkw{blame} \, \ottnt{p}  \ottsym{)}  \langle  \ottnt{t}  \rangle  ] \hgap
      \ottnt{N}  \ottsym{=}  \ottkw{blame} \, \ottnt{p}.
    \]
    for some $\ottnt{p}$.
    By Lemma~\ref{lem:trans-ctx}~(2), there exist $\mathcal{E}'$
    such that $ \mathscr{C}\llbracket \maybebluetext{ \mathcal{F}  [  \ottnt{L}  \langle  \ottnt{t}  \rangle  ] } \rrbracket   \ottsym{=}  \mathcal{E}'  [   \mathscr{K}\llbracket \maybebluetext{ \ottnt{L} } \rrbracket   \Psi (\maybebluetext{ \ottnt{t} })    ]$ for any $\ottnt{L}$. So,
    \[
       \mathscr{C}\llbracket \maybebluetext{ \mathcal{F}  [  \ottsym{(}  \ottkw{blame} \, \ottnt{p}  \ottsym{)}  \langle  \ottnt{t}  \rangle  ] } \rrbracket   \ottsym{=}  \mathcal{E}'  [   \mathscr{K}\llbracket \maybebluetext{ \ottkw{blame} \, \ottnt{p} } \rrbracket   \Psi (\maybebluetext{ \ottnt{t} })    ]
    \]
    By $ \mathscr{K}\llbracket \maybebluetext{ \ottkw{blame} \, \ottnt{p} } \rrbracket   \Psi (\maybebluetext{ \ottnt{t} })    \ottsym{=}  \ottkw{blame} \, \ottnt{p}$ and $ \mathscr{C}\llbracket \maybebluetext{ \ottkw{blame} \, \ottnt{p} } \rrbracket   \ottsym{=}  \ottkw{blame} \, \ottnt{p}$,
    \[
    \begin{array}{llll}
      \mathcal{E}'  [   \mathscr{K}\llbracket \maybebluetext{ \ottkw{blame} \, \ottnt{p} } \rrbracket   \Psi (\maybebluetext{ \ottnt{t} })    ]
      &=& \mathcal{E}'  [  \ottkw{blame} \, \ottnt{p}  ] \\
      & \accentset{\mathsf{e} }{\evalto}_{\mathsf{S_1} } & \ottkw{blame} \, \ottnt{p} &\text{by \rnp{E-Abort}}\\
      &=&  \mathscr{C}\llbracket \maybebluetext{ \ottkw{blame} \, \ottnt{p} } \rrbracket .
    \end{array}
    \]
    Thus, $  \mathscr{C}\llbracket \maybebluetext{ \ottsym{(}  \mathcal{F}  [  \square \, \langle  \ottnt{t}  \rangle  ]  \ottsym{)}  [  \ottkw{blame} \, \ottnt{p}  ] } \rrbracket     \mathbin{  \accentset{\mathsf{e} }{\evalto}_{\mathsf{S_1} }  }     \mathscr{C}\llbracket \maybebluetext{ \ottkw{blame} \, \ottnt{p} } \rrbracket  $.
  \end{description}

  (2) By case analysis on the evaluation rule applied to $ \ottnt{M}    \mathbin{  \accentset{\mathsf{c} }{\evalto}_{\mathsf{S} }  }    \ottnt{N} $.
  \begin{description}
  \item[\Case{\rnp{E-CtxC}}]
    We are given
    \[
       \ottnt{M_{{\mathrm{1}}}}    \mathbin{\accentset{\mathsf{c} }{\reduces}_{\mathsf{S} } }    \ottnt{N_{{\mathrm{1}}}}  \hgap
      \ottnt{M}  \ottsym{=}  \mathcal{F}  [  \ottnt{M_{{\mathrm{1}}}}  ] \hgap
      \ottnt{N}  \ottsym{=}  \mathcal{F}  [  \ottnt{N_{{\mathrm{1}}}}  ]
    \]
    for some $\mathcal{F},\ottnt{M_{{\mathrm{1}}}},\ottnt{N_{{\mathrm{1}}}}$.
    By Lemma~\ref{lem:trans-ctx}~(1), there exists $\mathcal{E}'$ such that
    $ \mathscr{C}\llbracket \maybebluetext{ \mathcal{F}  [  \ottnt{L}  ] } \rrbracket   \ottsym{=}  \mathcal{E}'  [   \mathscr{C}\llbracket \maybebluetext{ \ottnt{L} } \rrbracket   ]$ for any $\ottnt{L}$. So,
    \begin{equation}
       \mathscr{C}\llbracket \maybebluetext{ \mathcal{F}  [  \ottnt{M_{{\mathrm{1}}}}  ] } \rrbracket   \ottsym{=}  \mathcal{E}'  [   \mathscr{C}\llbracket \maybebluetext{ \ottnt{M_{{\mathrm{1}}}} } \rrbracket   ] \hgap
       \mathscr{C}\llbracket \maybebluetext{ \mathcal{F}  [  \ottnt{N_{{\mathrm{1}}}}  ] } \rrbracket   \ottsym{=}  \mathcal{E}'  [   \mathscr{C}\llbracket \maybebluetext{ \ottnt{N_{{\mathrm{1}}}} } \rrbracket   ]. \label{eq:sim-trans-ctx-c}
    \end{equation}
    By $ \ottnt{M_{{\mathrm{1}}}}    \mathbin{\accentset{\mathsf{c} }{\reduces}_{\mathsf{S} } }    \ottnt{N_{{\mathrm{1}}}} $ and Lemma~\ref{lem:trans-reduce}~(2),
    \[
        \mathscr{C}\llbracket \maybebluetext{ \ottnt{M_{{\mathrm{1}}}} } \rrbracket     \mathbin{  \accentset{\mathsf{c} }{\evalto}_{\mathsf{S_1} }  ^+}     \mathscr{C}\llbracket \maybebluetext{ \ottnt{N_{{\mathrm{1}}}} } \rrbracket  
    \]
    By Lemma~\ref{lem:ctx-multi},
    $ \mathcal{E}'  [   \mathscr{C}\llbracket \maybebluetext{ \ottnt{M_{{\mathrm{1}}}} } \rrbracket   ]    \mathbin{  \accentset{\mathsf{c} }{\evalto}_{\mathsf{S_1} }  ^+}    \mathcal{E}'  [   \mathscr{C}\llbracket \maybebluetext{ \ottnt{N_{{\mathrm{1}}}} } \rrbracket   ] $.
    Thus, by eq.~\eqref{eq:sim-trans-ctx-c}, $  \mathscr{C}\llbracket \maybebluetext{ \mathcal{F}  [  \ottnt{M_{{\mathrm{1}}}}  ] } \rrbracket     \mathbin{  \accentset{\mathsf{c} }{\evalto}_{\mathsf{S_1} }  ^+}     \mathscr{C}\llbracket \maybebluetext{ \mathcal{F}  [  \ottnt{N_{{\mathrm{1}}}}  ] } \rrbracket  $.
    \qedhere
  \end{description}

\end{proof}

\iffull
\lemSimulation*
\else
\begin{lemma} \label{lem:trans-reduction-preserve}
  If $ \ottnt{M}    \mathbin{  \evalto_{\mathsf{S} }  }    \ottnt{N} $, then $  \mathscr{C}\llbracket \maybebluetext{ \ottnt{M} } \rrbracket     \mathbin{  \evalto_{\mathsf{S_1} }  ^+}     \mathscr{C}\llbracket \maybebluetext{ \ottnt{N} } \rrbracket  $.
\end{lemma}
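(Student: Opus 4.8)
The plan is to obtain this lemma as an essentially immediate corollary of the Simulation Lemma (Lemma~\ref{lem:trans-eval}), via a case analysis on the shape of the single evaluation step $ \ottnt{M}    \mathbin{  \evalto_{\mathsf{S} }  }    \ottnt{N} $. Recall that $ \evalto_{\mathsf{S} } $ is by definition $ \accentset{\mathsf{e} }{\evalto}_{\mathsf{S} }  \cup  \accentset{\mathsf{c} }{\evalto}_{\mathsf{S} } $, so the hypothesis splits into the e-evaluation case and the c-evaluation case, and each of the two items of Lemma~\ref{lem:trans-eval} is tailored to exactly one of them.

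First I would treat the case $ \ottnt{M}    \mathbin{  \accentset{\mathsf{e} }{\evalto}_{\mathsf{S} }  }    \ottnt{N} $. By Lemma~\ref{lem:trans-eval}(1) we get $ \mathscr{C}\llbracket \maybebluetext{ \ottnt{M} } \rrbracket   \mathbin{  \accentset{\mathsf{e} }{\evalto}_{\mathsf{S_1} }     \accentset{\mathsf{c} }{\evalto}_{\mathsf{S_1} }  ^*}   \mathscr{C}\llbracket \maybebluetext{ \ottnt{N} } \rrbracket $. Since this composite arrow is one $ \accentset{\mathsf{e} }{\evalto}_{\mathsf{S_1} } $ step followed by finitely many $ \accentset{\mathsf{c} }{\evalto}_{\mathsf{S_1} } $ steps, and both $ \accentset{\mathsf{e} }{\evalto}_{\mathsf{S_1} } $ and $ \accentset{\mathsf{c} }{\evalto}_{\mathsf{S_1} } $ are contained in $ \evalto_{\mathsf{S_1} } $, it is a nonempty sequence of $ \evalto_{\mathsf{S_1} } $ steps; hence $ \mathscr{C}\llbracket \maybebluetext{ \ottnt{M} } \rrbracket   \mathbin{  \evalto_{\mathsf{S_1} }  ^+}   \mathscr{C}\llbracket \maybebluetext{ \ottnt{N} } \rrbracket $. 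Second, in the case $ \ottnt{M}    \mathbin{  \accentset{\mathsf{c} }{\evalto}_{\mathsf{S} }  }    \ottnt{N} $, Lemma~\ref{lem:trans-eval}(2) gives $  \mathscr{C}\llbracket \maybebluetext{ \ottnt{M} } \rrbracket     \mathbin{  \accentset{\mathsf{c} }{\evalto}_{\mathsf{S_1} }  ^+}     \mathscr{C}\llbracket \maybebluetext{ \ottnt{N} } \rrbracket  $ directly, and since $ \accentset{\mathsf{c} }{\evalto}_{\mathsf{S_1} }  \subseteq  \evalto_{\mathsf{S_1} } $ this is already the desired $  \mathscr{C}\llbracket \maybebluetext{ \ottnt{M} } \rrbracket     \mathbin{  \evalto_{\mathsf{S_1} }  ^+}     \mathscr{C}\llbracket \maybebluetext{ \ottnt{N} } \rrbracket  $. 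These two subcases exhaust the possibilities, so the lemma follows.

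No step here is genuinely hard: all the real content sits in Lemma~\ref{lem:trans-eval}, which in turn leans on the context-decomposition lemma (Lemma~\ref{lem:trans-ctx}), the single-reduction simulation lemma (Lemma~\ref{lem:trans-reduce}), and the substitution lemma (Lemma~\ref{lem:subst-trans}). The only point that needs a moment of care is the bookkeeping that the composite arrow $ \accentset{\mathsf{e} }{\evalto}_{\mathsf{S_1} }  \accentset{\mathsf{c} }{\evalto}_{\mathsf{S_1} } ^{*}$ counts as at least one $ \evalto_{\mathsf{S_1} } $ step—i.e., that the mandatory e-step is never vacuous—which is immediate from the meaning of that notation. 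So if anything deserves to be called the ``main obstacle'' within this proof, it is merely lining up the e/c labels and the $+$/$*$ superscripts correctly, not any real mathematics.
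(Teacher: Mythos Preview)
Your proposal is correct and matches the paper's approach exactly: the paper's proof is the single line ``Immediate by Lemma~\ref{lem:trans-eval},'' and you have simply spelled out the two-case unpacking of that immediacy. There is nothing to add or correct.
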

\fi

\[
\xymatrix@=50pt{
    \ottnt{M} \ar@{|->}[r]_>>{\mathsf{S}} \ar[d]^{\mathscr{C} \llbracket \_  \rrbracket} & \ottnt{N} \ar[d]^{\mathscr{C} \llbracket \_  \rrbracket} \\
     \mathscr{C}\llbracket \maybebluetext{ \ottnt{M} } \rrbracket  \ar@{|.>}[r] ^>>{+}_>>{\mathsf{S}_1} &   \mathscr{C}\llbracket \maybebluetext{ \ottnt{N} } \rrbracket 
}
\]

\begin{proof}
  Immediate by Lemma~\ref{lem:trans-eval}.
\end{proof}

\iffull\thmTransSoundness*
\else
\begin{theorem}[Translation Soundness] \label{thm:trans-soundness}
  Suppose $ \Gamma    \vdash_{\mathsf{S} }    \ottnt{M}  :  \ottnt{A} $.
  \begin{enumerate}
  \item If $ \ottnt{M}    \mathbin{  \evalto_{\mathsf{S} }  ^*}    \ottnt{V} $, then $  \mathscr{C}\llbracket \maybebluetext{ \ottnt{M} } \rrbracket     \mathbin{  \evalto_{\mathsf{S_1} }  ^*}     \Psi (\maybebluetext{ \ottnt{V} })  $.
  \item If $ \ottnt{M}    \mathbin{  \evalto_{\mathsf{S} }  ^*}    \ottkw{blame} \, \ottnt{p} $, then $  \mathscr{C}\llbracket \maybebluetext{ \ottnt{M} } \rrbracket     \mathbin{  \evalto_{\mathsf{S_1} }  ^*}    \ottkw{blame} \, \ottnt{p} $.
  \item If $\ottnt{M} \,  \mathord{\Uparrow_{\mathsf{S} } } $, then $ \mathscr{C}\llbracket \maybebluetext{ \ottnt{M} } \rrbracket  \,  \mathord{\Uparrow_{\mathsf{S_1} } } $.
  \end{enumerate}
\end{theorem}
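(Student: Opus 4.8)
The plan is to obtain all three items as essentially immediate corollaries of the single-step simulation property Lemma~\ref{lem:trans-reduction-preserve}, so the only things that need to be said here are how to chain it up and how the translation behaves on the two normal forms. First I would lift Lemma~\ref{lem:trans-reduction-preserve} to multiple steps: by induction on the length of a source evaluation sequence, $ \ottnt{M}    \mathbin{  \evalto_{\mathsf{S} }  ^*}    \ottnt{N} $ implies $  \mathscr{C}\llbracket \maybebluetext{ \ottnt{M} } \rrbracket     \mathbin{  \evalto_{\mathsf{S_1} }  ^*}     \mathscr{C}\llbracket \maybebluetext{ \ottnt{N} } \rrbracket  $ (and, when there is at least one source step, even $ \mathbin{  \evalto_{\mathsf{S_1} }  ^+} $). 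The induction step just composes the target sequence produced for the head reduction with the one furnished by the induction hypothesis for the tail; along the way Theorem~\ref{thm:preservation-S} keeps each intermediate term well typed, which is what makes the type-annotated translation $ \mathscr{C}\llbracket \maybebluetext{\cdot} \rrbracket $ defined at every point (recall that $ \mathscr{C}\llbracket \maybebluetext{\cdot} \rrbracket $ is really a function of $\lambda S$ typing derivations).

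For item (1), I would apply this multi-step statement with $\ottnt{N}  \ottsym{=}  \ottnt{V}$ and then use $ \mathscr{C}\llbracket \maybebluetext{ \ottnt{V} } \rrbracket   \ottsym{=}   \Psi (\maybebluetext{ \ottnt{V} }) $, which is exactly rule \rnp{TrC-Val}; this gives $  \mathscr{C}\llbracket \maybebluetext{ \ottnt{M} } \rrbracket     \mathbin{  \evalto_{\mathsf{S_1} }  ^*}     \Psi (\maybebluetext{ \ottnt{V} })  $. For item (2), I would instead take $\ottnt{N}  \ottsym{=}  \ottkw{blame} \, \ottnt{p}$ and note that, since $\ottkw{blame} \, \ottnt{p}$ is neither a value nor a coercion application, $ \mathscr{C}\llbracket \maybebluetext{ \ottkw{blame} \, \ottnt{p} } \rrbracket   \ottsym{=}   \mathscr{K}\llbracket \maybebluetext{ \ottkw{blame} \, \ottnt{p} } \rrbracket  \ottkw{id}   \ottsym{=}  \ottkw{blame} \, \ottnt{p}$ by \rnp{TrC-Else} and \rnp{Tr-Blame}, so $  \mathscr{C}\llbracket \maybebluetext{ \ottnt{M} } \rrbracket     \mathbin{  \evalto_{\mathsf{S_1} }  ^*}    \ottkw{blame} \, \ottnt{p} $.

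For item (3), I would not go through the multi-step statement but argue directly from an infinite source sequence $\ottnt{M}  \ottsym{=}  \ottnt{M}_0  \mathbin{  \evalto_{\mathsf{S} }  }  \ottnt{M}_1  \mathbin{  \evalto_{\mathsf{S} }  }  \ottnt{M}_2  \mathbin{  \evalto_{\mathsf{S} }  } \cdots$. Applying Lemma~\ref{lem:trans-reduction-preserve} to each step (and Theorem~\ref{thm:preservation-S} to keep the translation defined) gives, for every $i$, a \emph{nonempty} target sequence $ \mathscr{C}\llbracket \maybebluetext{ \ottnt{M}_i } \rrbracket   \mathbin{  \evalto_{\mathsf{S_1} }  ^+}   \mathscr{C}\llbracket \maybebluetext{ \ottnt{M}_{i+1} } \rrbracket $. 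Concatenating these yields an infinite evaluation sequence starting from $ \mathscr{C}\llbracket \maybebluetext{ \ottnt{M} } \rrbracket $, i.e.\ $ \mathscr{C}\llbracket \maybebluetext{ \ottnt{M} } \rrbracket  \,  \mathord{\Uparrow_{\mathsf{S_1} } } $.

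The substantive difficulties have all been absorbed into Lemma~\ref{lem:trans-eval} and Lemma~\ref{lem:trans-reduction-preserve} (the \textsf{e}/\textsf{c} split, elimination of administrative coercions via $ \mathscr{C}\llbracket \maybebluetext{\cdot} \rrbracket $, and the context decomposition Lemma~\ref{lem:trans-ctx}), so this theorem is a short argument. The one point worth emphasizing is that item (3) genuinely relies on the $ \mathbin{  \evalto_{\mathsf{S_1} }  ^+} $ form: if a source step could be simulated by zero target steps, an infinite source run could collapse to a finite target run and divergence would not be preserved. Guaranteeing that each source step always costs at least one target step is precisely the content of Lemma~\ref{lem:trans-reduction-preserve}, so no further obstacle remains; in particular, and unlike the converse direction, this proof needs neither determinacy of $ \evalto_{\mathsf{S_1} } $ nor type soundness of $\lambda S$.
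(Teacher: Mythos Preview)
Your proposal is correct and follows essentially the same approach as the paper: repeated application of Lemma~\ref{lem:trans-reduction-preserve} together with the observations $\mathscr{C}\llbracket V\rrbracket = \Psi(V)$ and $\mathscr{C}\llbracket \ottkw{blame}\,p\rrbracket = \ottkw{blame}\,p$. Your write-up is simply more explicit than the paper's two-line proof, spelling out the induction on the length of the evaluation sequence, the role of preservation in keeping the type-directed translation defined, and the need for $\evalto_{\mathsf{S_1}}^{+}$ (rather than $^*$) in item~(3).
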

\fi
\begin{proof}
  By repeated use of Lemma~\ref{lem:trans-reduction-preserve}.
  We also use
  $ \mathscr{C}\llbracket \maybebluetext{ \ottnt{V} } \rrbracket   \ottsym{=}   \Psi (\maybebluetext{ \ottnt{V} }) $ and
  $ \mathscr{C}\llbracket \maybebluetext{ \ottkw{blame} \, \ottnt{p} } \rrbracket   \ottsym{=}   \mathscr{K}\llbracket \maybebluetext{ \ottkw{blame} \, \ottnt{p} } \rrbracket  \ottkw{id}  = \ottkw{blame} \, \ottnt{p}$.
\end{proof}

\iffull\thmTransSem*
\else
\begin{corollary}[Translation Preserves Semantcs] \label{cor:trans-correctness}  
  Suppose $  \emptyset     \vdash_{\mathsf{S} }    \ottnt{M}  :  \iota $.
  \begin{enumerate}
  \item $ \ottnt{M}    \mathbin{  \evalto_{\mathsf{S} }  ^*}    \ottnt{a} $ iff $  \mathscr{K}\llbracket \maybebluetext{ \ottnt{M} } \rrbracket   \ottkw{id} _{ \iota }      \mathbin{  \evalto_{\mathsf{S_1} }  ^*}    \ottnt{a} $.
  \item $ \ottnt{M}    \mathbin{  \evalto_{\mathsf{S} }  ^*}    \ottkw{blame} \, \ottnt{p} $ iff $  \mathscr{K}\llbracket \maybebluetext{ \ottnt{M} } \rrbracket   \ottkw{id} _{ \iota }      \mathbin{  \evalto_{\mathsf{S_1} }  ^*}    \ottkw{blame} \, \ottnt{p} $.
  \item $\ottnt{M} \,  \mathord{\Uparrow_{\mathsf{S} } } $ iff $ \mathscr{K}\llbracket \maybebluetext{ \ottnt{M} } \rrbracket   \ottkw{id} _{ \iota }   \,  \mathord{\Uparrow_{\mathsf{S_1} } } $.
  \end{enumerate}
\end{corollary}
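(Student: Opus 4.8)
The plan is to deduce the statement from its left-to-right direction, which is exactly Theorem~\ref{thm:trans-soundness} (Translation Soundness) instantiated with $\Gamma = \emptyset$ and $A = \iota$, combined with type safety of \lamS (Corollary~\ref{cor:safety-S}) and determinacy of $\evalto_{\mathsf{S_1}}$ (Lemma~\ref{lem:determinism-S1}). First I would dispatch the three forward implications. Since $\emptyset \vdash_{\mathsf{S}} M : \iota$, canonical forms for \lamS (Lemma~\ref{lem:canonical-S}) say that any value $M$ reaches is a constant, and the translation satisfies $\mathscr{C}\llbracket V \rrbracket = \Psi(V)$ with $\Psi(a)=a$, as well as $\mathscr{C}\llbracket \mathsf{blame}\,p \rrbracket = \mathsf{blame}\,p$; so the three clauses of Theorem~\ref{thm:trans-soundness} give directly that $M \evalto_{\mathsf{S}}^* a$ implies $\mathscr{C}\llbracket M \rrbracket \evalto_{\mathsf{S_1}}^* a$, that $M \evalto_{\mathsf{S}}^* \mathsf{blame}\,p$ implies $\mathscr{C}\llbracket M \rrbracket \evalto_{\mathsf{S_1}}^* \mathsf{blame}\,p$, and that $M \Uparrow_{\mathsf{S}}$ implies $\mathscr{C}\llbracket M \rrbracket \Uparrow_{\mathsf{S_1}}$.

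For the right-to-left implications I would argue uniformly by a dichotomy on the source behavior. By Corollary~\ref{cor:safety-S} applied to $\emptyset \vdash_{\mathsf{S}} M : \iota$, one of the following holds: $M \evalto_{\mathsf{S}}^* V$ for some $V$ with $\emptyset \vdash_{\mathsf{S}} V : \iota$ (hence $V$ is a constant $b$ by Lemma~\ref{lem:canonical-S}), or $M \evalto_{\mathsf{S}}^* \mathsf{blame}\,p$ for some $p$, or $M \Uparrow_{\mathsf{S}}$. Suppose, for clause~(1), that $\mathscr{C}\llbracket M \rrbracket \evalto_{\mathsf{S_1}}^* a$. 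If $M \Uparrow_{\mathsf{S}}$, the forward direction of clause~(3) would make $\mathscr{C}\llbracket M \rrbracket$ diverge, yet a term cannot both diverge and reach the normal form $a$ under the deterministic relation $\evalto_{\mathsf{S_1}}$ --- contradiction. If $M \evalto_{\mathsf{S}}^* \mathsf{blame}\,p$, the forward direction of clause~(2) gives $\mathscr{C}\llbracket M \rrbracket \evalto_{\mathsf{S_1}}^* \mathsf{blame}\,p$, and since $a$ and $\mathsf{blame}\,p$ are distinct normal forms of $\evalto_{\mathsf{S_1}}$, determinacy is violated --- contradiction. Hence $M \evalto_{\mathsf{S}}^* b$; the forward direction of clause~(1) gives $\mathscr{C}\llbracket M \rrbracket \evalto_{\mathsf{S_1}}^* b$, and determinacy forces $b = a$, so $M \evalto_{\mathsf{S}}^* a$. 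The right-to-left directions of clauses~(2) and~(3) go the same way: assuming the target reaches blame, or diverges, the other two source behaviors are ruled out by the forward directions and determinacy, leaving exactly the wanted one.

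At the level of this theorem there is no real obstacle: the only points needing care are the bookkeeping observations that $a$, $\mathsf{blame}\,p$, and the empty evaluation context are all normal forms of $\evalto_{\mathsf{S_1}}$, so that determinacy genuinely applies. The substance lives in Theorem~\ref{thm:trans-soundness}, which I would obtain by iterating the single-step simulation Lemma~\ref{lem:trans-reduction-preserve} ($M \evalto_{\mathsf{S}} N$ implies $\mathscr{C}\llbracket M \rrbracket \evalto_{\mathsf{S_1}}^+ \mathscr{C}\llbracket N \rrbracket$) along a (possibly infinite) \lamS-evaluation sequence and reading off the terminal value or blame via $\mathscr{C}\llbracket V \rrbracket = \Psi(V)$ and $\mathscr{C}\llbracket \mathsf{blame}\,p \rrbracket = \mathsf{blame}\,p$; here it matters that each source step costs \emph{at least one} target step ($\evalto_{\mathsf{S_1}}^+$, not $\evalto_{\mathsf{S_1}}^*$), otherwise divergence would not transfer. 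Lemma~\ref{lem:trans-reduction-preserve} in turn comes from the $\mathsf{e}/\mathsf{c}$-refined Simulation Lemma~\ref{lem:trans-eval} (an $\mathsf{e}$-step becomes one $\mathsf{e}$-step followed by $\mathsf{c}$-steps, a $\mathsf{c}$-step becomes one or more $\mathsf{c}$-steps), which rests on the context-decomposition Lemma~\ref{lem:trans-ctx} and the reduction-level simulation Lemma~\ref{lem:trans-reduce}. The genuinely hard part, already absorbed into those lemmas, is engineering $\mathscr{C}\llbracket\cdot\rrbracket$ to eliminate administrative coercions so that source redexes map onto target redexes, and proving that substitution commutes with the translation (Lemma~\ref{lem:subst-trans}) --- which is precisely why \lamS and \lamSx keep $U\langle s \rangle$ syntactically distinct from $U\langle\!\langle s \rangle\!\rangle$, and why one must not optimize away identity coercions that were already present in the source term.
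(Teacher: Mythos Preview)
Your proposal is correct and follows essentially the same route as the paper: forward direction by Translation Soundness (Theorem~\ref{thm:trans-soundness}), backward direction by the trichotomy from type safety of \lamS (Corollary~\ref{cor:safety-S}) plus determinacy of $\evalto_{\mathsf{S_1}}$ (Lemma~\ref{lem:determinism-S1}) to rule out the two wrong cases. Your write-up is in fact a bit more explicit than the paper's in spelling out the $b=a$ step in clause~(1).

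One small bookkeeping point: the statement you are asked to prove is phrased in terms of $\mathscr{K}\llbracket M\rrbracket\,\ottkw{id}_\iota$, whereas Theorem~\ref{thm:trans-soundness} and your argument speak about $\mathscr{C}\llbracket M\rrbracket$. The bridge is Lemma~\ref{lem:trans-id-CV}, which gives $\mathscr{K}\llbracket M\rrbracket\,\ottkw{id}_\iota \mathbin{\accentset{\mathsf{c}}{\evalto}_{\mathsf{S_1}}^{*}} \mathscr{C}\llbracket M\rrbracket$; together with determinacy this makes the two formulations interchangeable. The paper's own proof also silently uses this bridge, so you are in good company, but it is worth stating explicitly.
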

\fi
\begin{proof}
  The left-to-right direction follows Theorem~\ref{thm:trans-soundness}.
  (Note that $ \Psi (\maybebluetext{ \ottnt{a} })   \ottsym{=}  \ottnt{a}$.)

  We prove the right-to-left direction of (1).
  We are given $  \mathscr{K}\llbracket \maybebluetext{ \ottnt{M} } \rrbracket   \ottkw{id} _{ \iota }      \mathbin{  \evalto_{\mathsf{S_1} }  ^*}    \ottnt{a} $.
  By $  \emptyset     \vdash_{\mathsf{S} }    \ottnt{M}  :  \iota $ and Corollary~\ref{cor:safety-S},  
  either of the following holds:
  \[
     \ottnt{M}    \mathbin{  \evalto_{\mathsf{S} }  ^*}    \ottnt{a}  \hgap
     \ottnt{M}    \mathbin{  \evalto_{\mathsf{S} }  ^*}    \ottkw{blame} \, \ottnt{p} \hgap
    \ottnt{M} \,  \mathord{\Uparrow_{\mathsf{S} } } 
  \]
  \begin{itemize}
  \item If $ \ottnt{M}    \mathbin{  \evalto_{\mathsf{S} }  ^*}    \ottkw{blame} \, \ottnt{p} $,
    then by Theorem~\ref{thm:trans-soundness},
    $  \mathscr{K}\llbracket \maybebluetext{ \ottnt{M} } \rrbracket   \ottkw{id} _{ \iota }      \mathbin{  \evalto_{\mathsf{S_1} }  ^*}    \ottkw{blame} \, \ottnt{p} $.
    It contradicts $  \mathscr{K}\llbracket \maybebluetext{ \ottnt{M} } \rrbracket   \ottkw{id} _{ \iota }      \mathbin{  \evalto_{\mathsf{S_1} }  ^*}    \ottnt{a} $
    by Lemma~\ref{lem:determinism-S1}.  
  \item If $\ottnt{M} \,  \mathord{\Uparrow_{\mathsf{S} } } $, then 
    by Theorem~\ref{thm:trans-soundness}, $ \mathscr{K}\llbracket \maybebluetext{ \ottnt{M} } \rrbracket  \ottkw{id}  \,  \mathord{\Uparrow_{\mathsf{S_1} } } $.
    It contradicts $  \mathscr{K}\llbracket \maybebluetext{ \ottnt{M} } \rrbracket   \ottkw{id} _{ \iota }      \mathbin{  \evalto_{\mathsf{S_1} }  ^*}    \ottnt{a} $
    by Lemma~\ref{lem:determinism-S1}.
  \end{itemize}
  Thus, $ \ottnt{M}    \mathbin{  \evalto_{\mathsf{S} }  ^*}    \ottnt{a} $.

  The right-to-left directions of (2) and (3) are similar.
\end{proof}

\section{Detailed Benchmark Results}
\label{sec:detailedbenchmark}

Figure~\ref{fig:ratios} shows the scatter plots of the running time ratios
for each benchmark program, generated by the same experiment as Figure~\ref{fig:boxplot}.
The x-axis indicates how much of type annotations in the benchmark program are given static type.

\begin{figure}[tb]
  \centering
  \begin{tabular}{cc}
    \includegraphics[width=0.5\columnwidth]{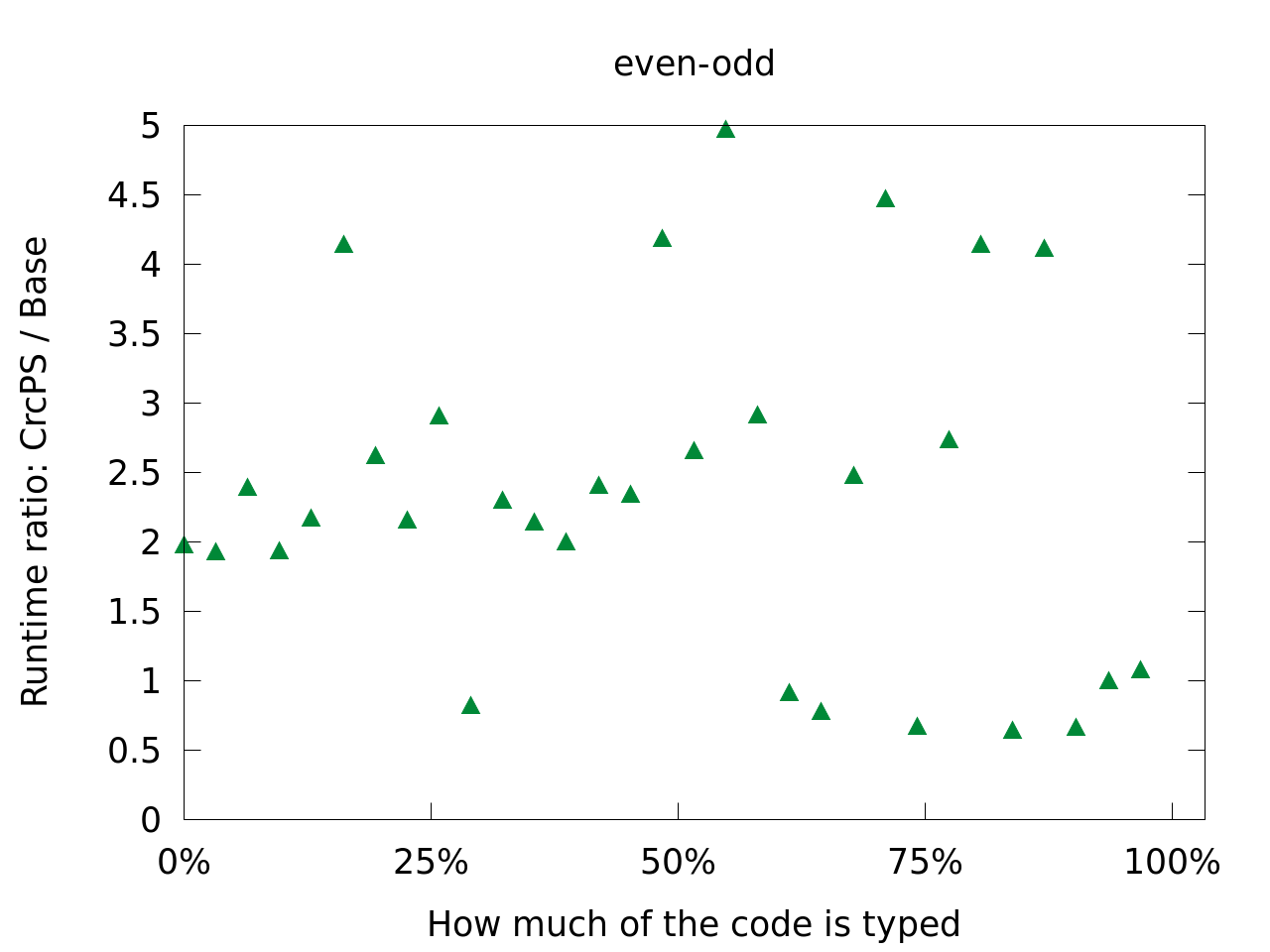} &
    \includegraphics[width=0.5\columnwidth]{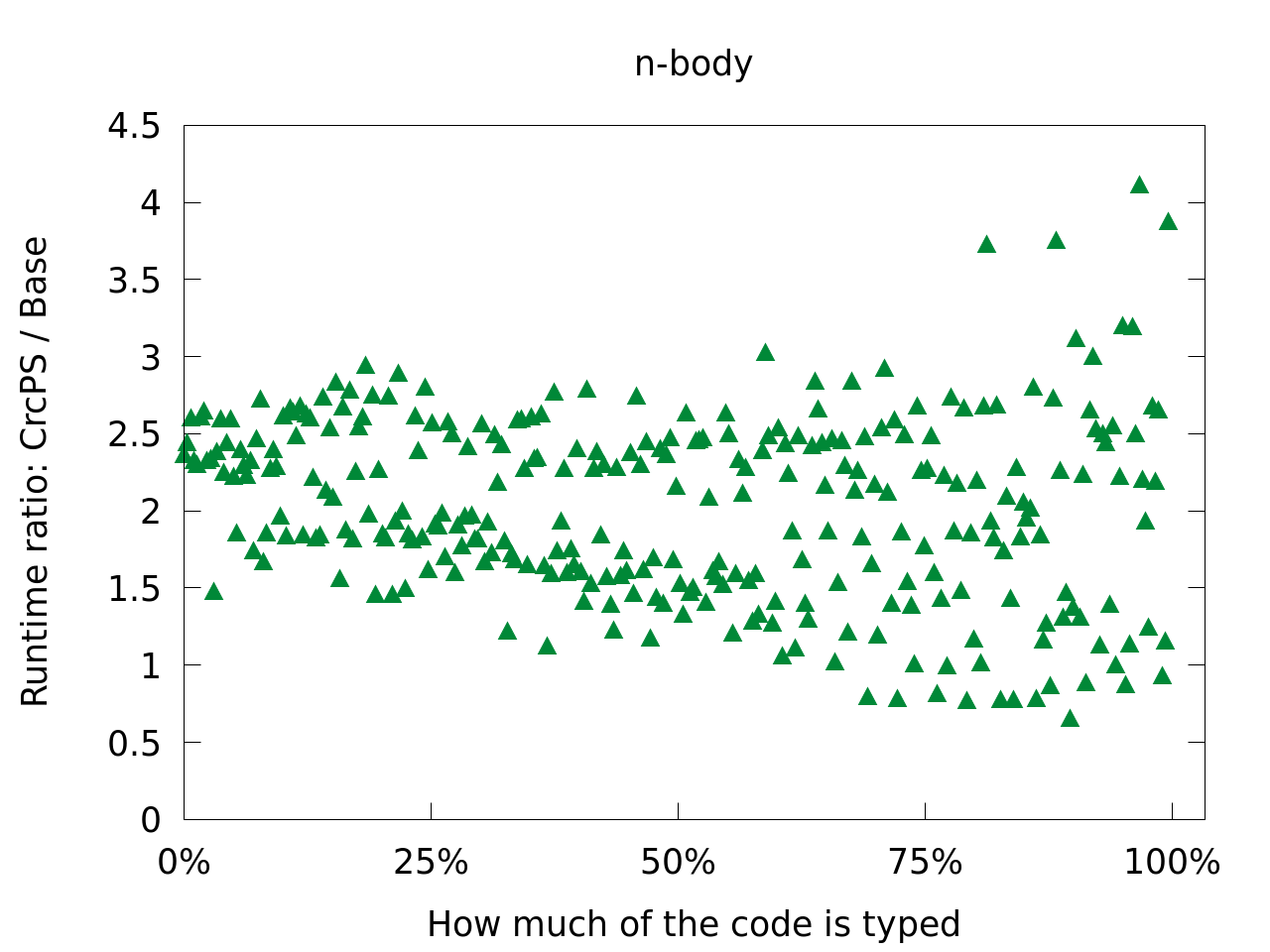} \\
    \includegraphics[width=0.5\columnwidth]{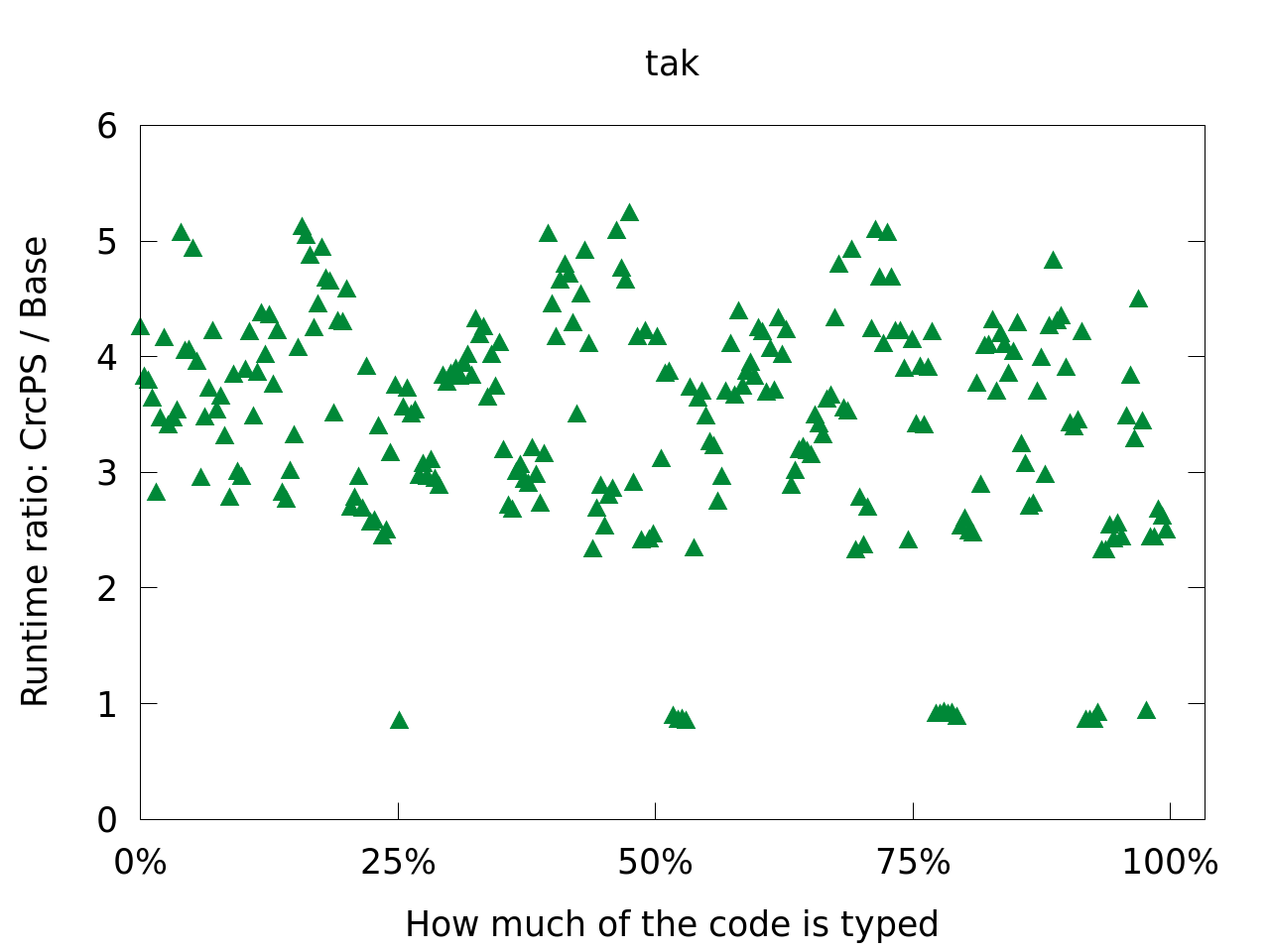} &
    \includegraphics[width=0.5\columnwidth]{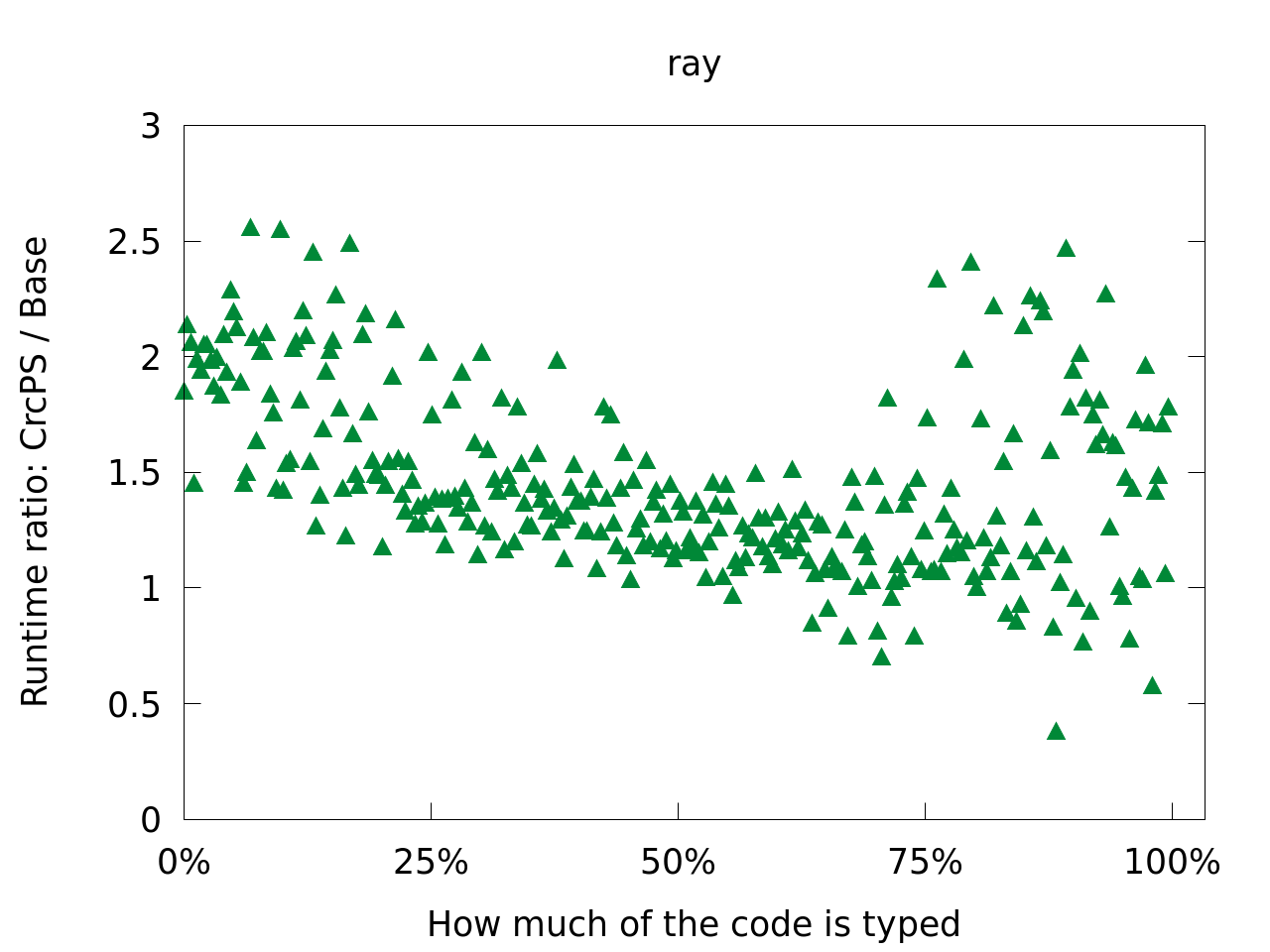} \\
    \includegraphics[width=0.5\columnwidth]{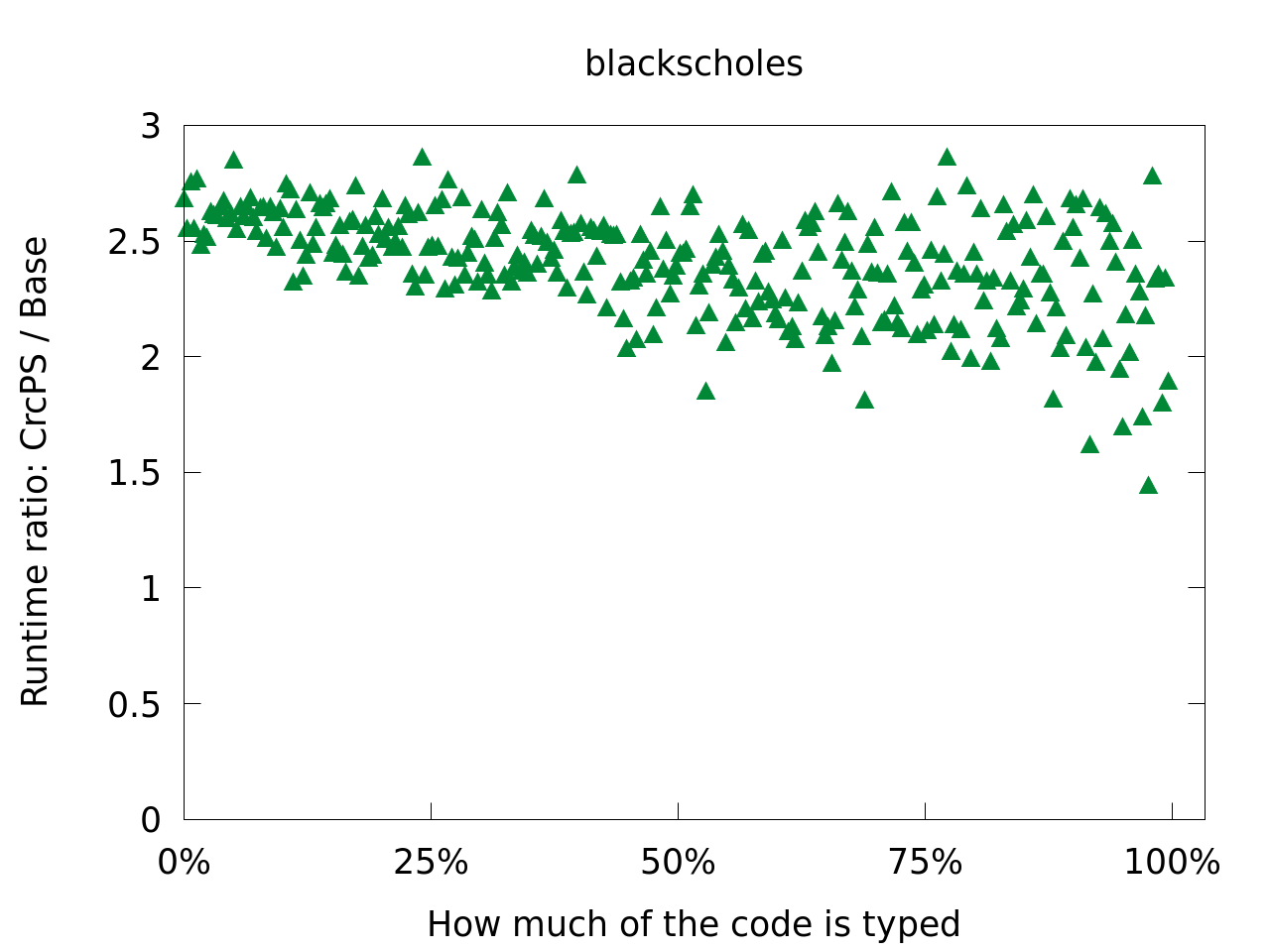} &
    \includegraphics[width=0.5\columnwidth]{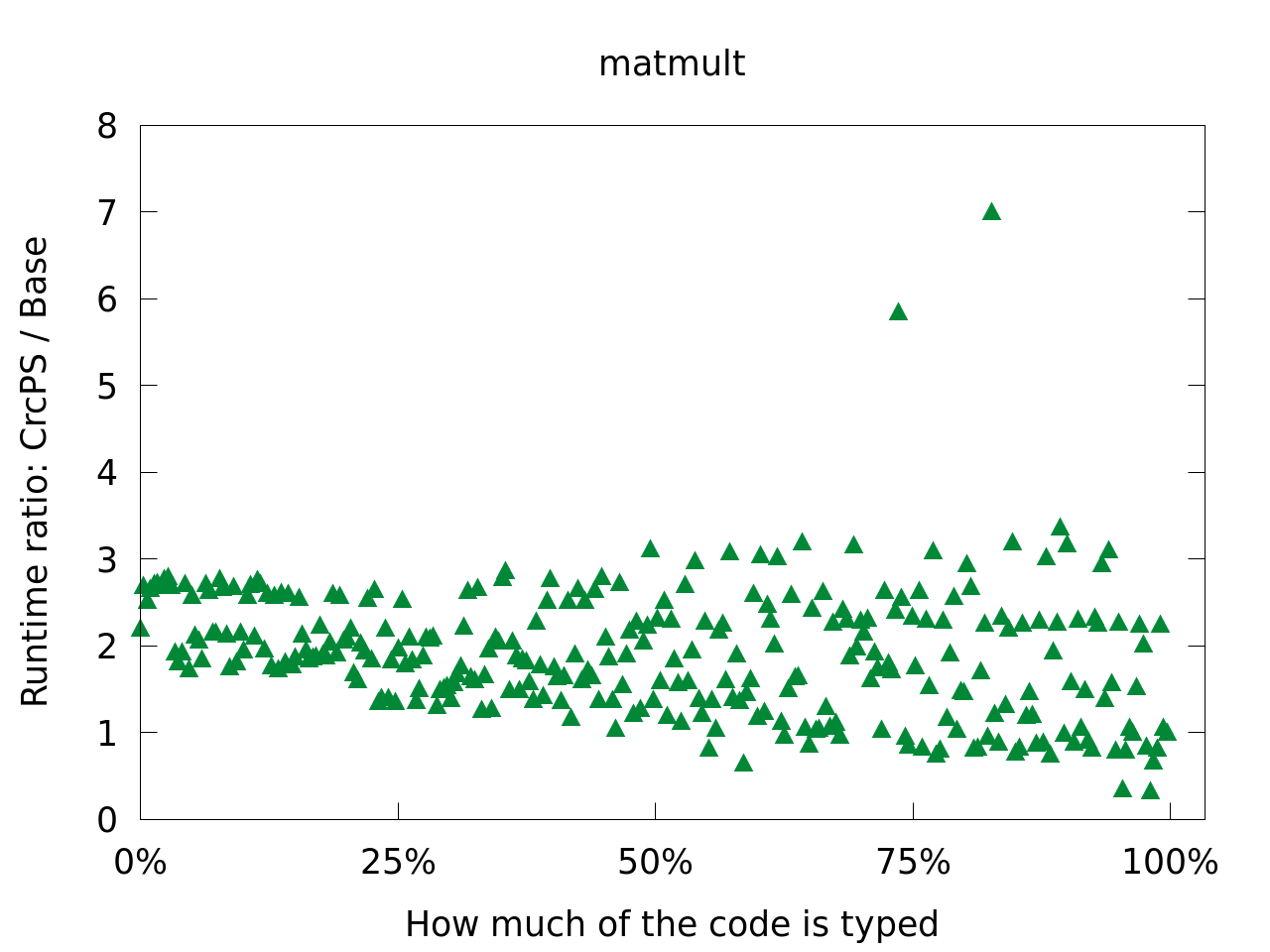} \\
    \includegraphics[width=0.5\columnwidth]{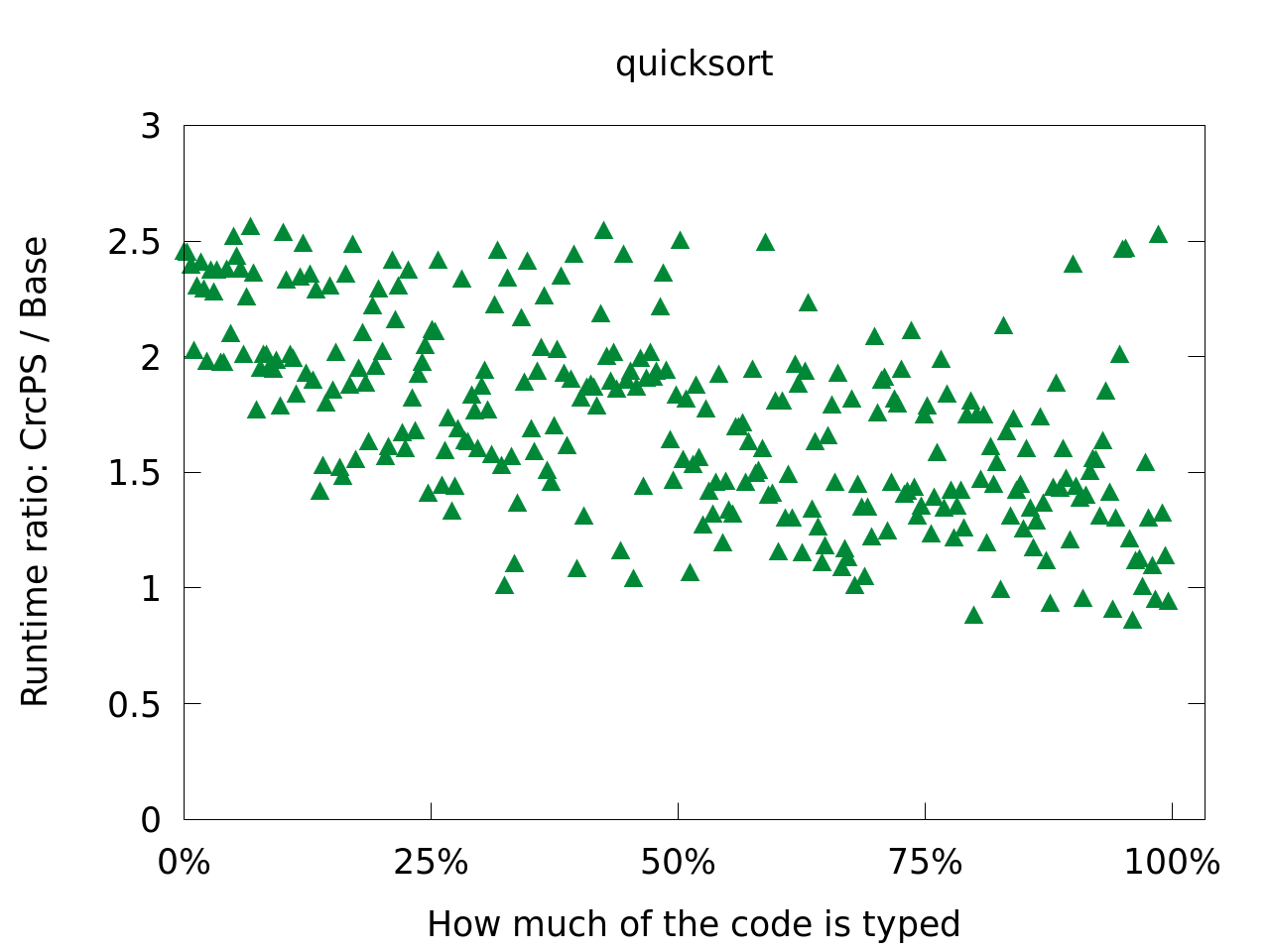} &
    \includegraphics[width=0.5\columnwidth]{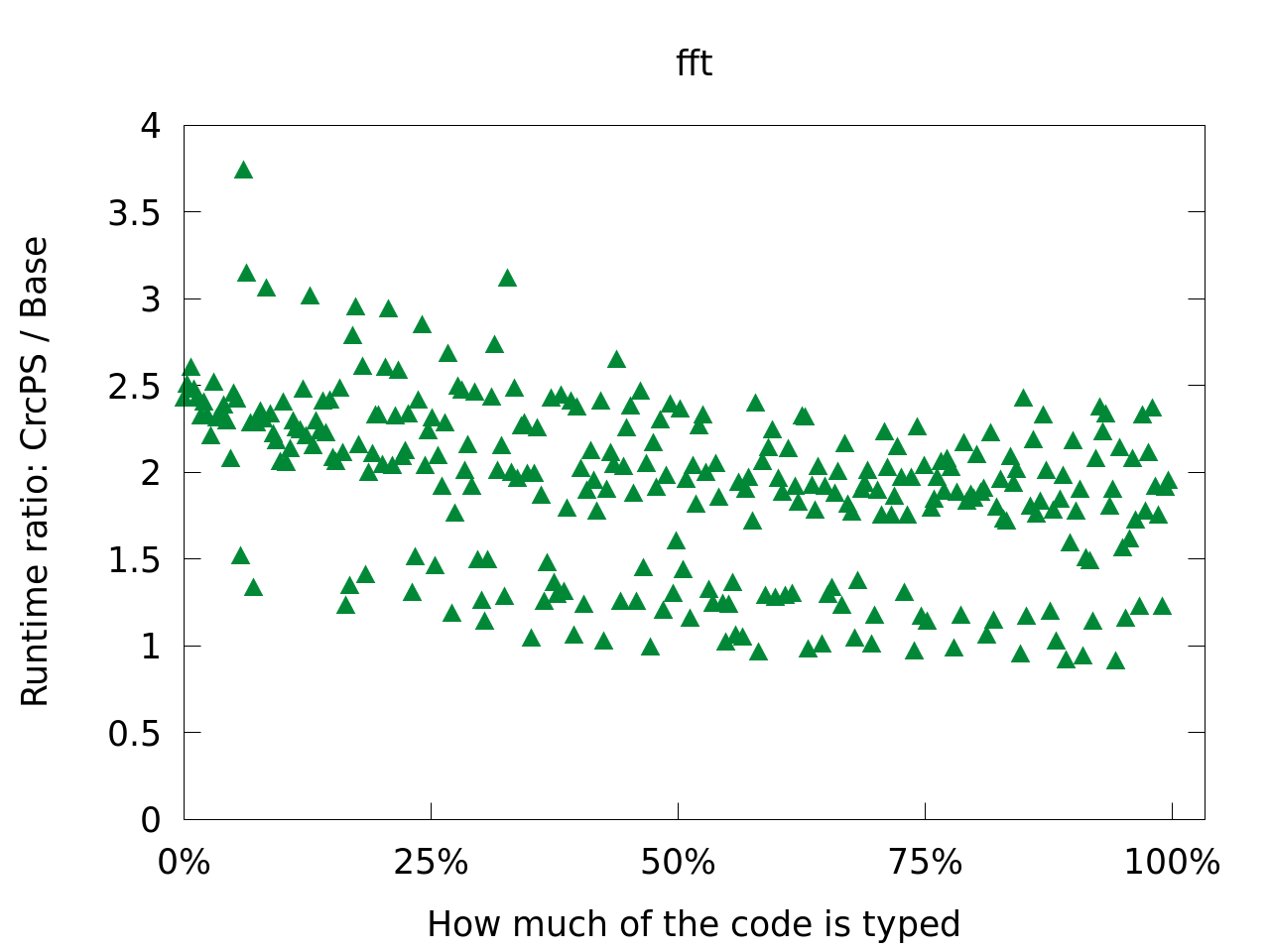}
  \end{tabular}
  \caption{Scatter plots of the running time ratios of CrcPS to Base
    across (sampled) partially typed configurations for each benchmark program.}
  \label{fig:ratios}
\end{figure}

\fi

\end{document}